\newif\ifblind\blindfalse
\newif\iflong\longtrue
\newif\ifdraft\draftfalse
\newif\ifwithappendix\withappendixtrue
\theoremstyle{plain}
\newtheorem{theorem}{Theorem}
\newtheorem{lemma}[theorem]{Lemma}
\newtheorem{corollary}[theorem]{Corollary}
\newtheorem{proposition}[theorem]{Proposition}
\newtheorem*{proposition*}{Proposition}
\newtheorem*{claim*}{Claim}
\theoremstyle{definition}
\newtheorem{definition}[theorem]{Definition}
\newtheorem{example}[theorem]{Example}
\newtheorem*{problem*}{Problem}
\theoremstyle{remark}
\newtheorem{remark}[theorem]{Remark}
\newtheorem{notation}[theorem]{Notation}
\crefname{theorem}{Theorem}{Theorems}
\crefname{lemma}{Lemma}{Lemmas}
\crefname{corollary}{Corollary}{Corollaries}
\crefname{proposition}{Proposition}{Propositions}
\crefname{definition}{Definition}{Definitions}
\crefname{remark}{Remark}{Remarks}
\crefname{notation}{Notation}{Notations}
\crefname{section}{Section}{Sections}
\begin{document}

\title{Enriched Presheaf Model of Quantum FPC}

\author{Takeshi Tsukada}

\orcid{0000-0002-2824-8708}             \affiliation{
  \institution{Chiba University}            \country{Japan}                    }
\email{tsukada@math.s.chiba-u.ac.jp}          

\author{Kazuyuki Asada}

\orcid{0000-0001-8782-2119}             \affiliation{
  \institution{Tohoku University}           \country{Japan}
}
\email{asada@riec.tohoku.ac.jp}         

\begin{abstract}
  Selinger gave a superoperator model of a first-order quantum programming language and proved that it is fully definable and hence fully abstract.
  This paper proposes an extension of the superoperator model to higher-order programs based on modules over superoperators or, equivalently, enriched presheaves over the category of superoperators.
  The enriched presheaf category can be easily proved to be a model of intuitionistic linear logic with cofree exponential, from which one can cave out a model of classical linear logic by a kind of bi-orthogonality construction.
  Although the structures of an enriched presheaf category are usually rather complex,
  a morphism in the classical model can be expressed simply as a matrix of completely positive maps.

  The model inherits many desirable properties from the superoperator model.
  A conceptually interesting property is that our model has only a state whose ``total probability'' is bounded by \( 1 \), \ie~does not have a state where true and false each occur with probability \(2/3\).
  Another convenient property inherited from the superoperator model is a \( \omega \)CPO-enrichment.
  Remarkably, our model has a sufficient structure to interpret arbitrary recursive types by the standard domain theoretic technique.
  We introduce \emph{Quantum FPC}, a quantum \( \lambda \)-calculus with recursive types, and prove that our model is a fully abstract model of Quantum FPC.
\end{abstract}

\begin{CCSXML}
  <ccs2012>
     <concept>
         <concept_id>10003752.10003790.10003801</concept_id>
         <concept_desc>Theory of computation~Linear logic</concept_desc>
         <concept_significance>500</concept_significance>
         </concept>
     <concept>
         <concept_id>10003752.10010124.10010131.10010133</concept_id>
         <concept_desc>Theory of computation~Denotational semantics</concept_desc>
         <concept_significance>500</concept_significance>
         </concept>
     <concept>
         <concept_id>10003752.10010124.10010131.10010137</concept_id>
         <concept_desc>Theory of computation~Categorical semantics</concept_desc>
         <concept_significance>500</concept_significance>
         </concept>
     <concept>
         <concept_id>10003752.10003753.10003758</concept_id>
         <concept_desc>Theory of computation~Quantum computation theory</concept_desc>
         <concept_significance>500</concept_significance>
         </concept>
   </ccs2012>
\end{CCSXML}

\ccsdesc[500]{Theory of computation~Linear logic}
\ccsdesc[500]{Theory of computation~Denotational semantics}
\ccsdesc[500]{Theory of computation~Categorical semantics}
\ccsdesc[500]{Theory of computation~Quantum computation theory}

\keywords{$\Sigma$-monoid, enriched presheaf, superoperator, domain theory, quantum programming language}

\maketitle

\section{Introduction}
Quantum computation is a paradigm that makes use of the principles of quantum mechanics to perform computations.
Quantum mechanics can only predict outcomes probabilistically, so quantum computation is \emph{stochastic}.
It is not only stochastic;
\emph{quantum entanglement} is expected to allow us to perform efficient computations that could not be achieved by stochastic computations.

A first-order programming language for quantum computation was studied by \citet{Selinger2004}.
He gave a fully definable model of the language using a category of \emph{superoperators}.
Subsequent work by \citet{Selinger2008} provided a model of a higher-order linear quantum programming language based on the category \( \CPM \) of \emph{completely positive maps}.

These models have an important qualitative difference.
Recall that quantum computation is stochastic.
An important invariant of stochastic computation is the ``total probability'', which must be of course bounded by \( 1 \).
That means, we never reach a state where true and false each occur with probability \( 2/3 \).
A superoperator is defined as a completely positive map that preserves the ``total probability'' of a state, so a superoperator preserves the ``total probability'', but a completely positive map does not in general.

The superoperator model captures the invariant ``total probability \(\le 1\)'' but it is applicable only to first-order computations.
Here is a long-standing open problem:
\begin{description}
    \item[Problem.]
        How to capture the invariant ``total probability \(\le 1 \)'' for higher-order types
        \\
        \hphantom{Problem}(without relying on intentional structures of programs).
\end{description}

\citet{Selinger2004a} addressed this problem, introducing \emph{normed cones}.
The ``total probability'' in the superoperator model is a norm known as the \emph{trace norm}, and Selinger tried to capture the ``total probability'' in higher types by using norms different from the trace norm.
This approach seems fairly natural, but unfortunately, it did not succeed.
The tensor product in his model does not appropriately handle the quantum entanglement, which is a stronger correlation than classical correlations (\ie~probabilistic correlations).

The subsequent development of the semantics of quantum programs was not oriented toward capturing the ``total probability'' in higher types, but rather the opposite.
The previously mentioned model by completely positive maps~\cite{Selinger2008}, which was published \( 4 \) years after Selinger's work~\cite{Selinger2004a} on normed corns, succeeded to model a linear higher-order quantum computation by ignoring the ``total probability'', or equivalently, by allowing the ``total probability'' to be any finite non-negative real.
\citet{Pagani2014} studied a further extension based on the completion of completely positive maps by \( \infty \)'s, in order to give a model of a quantum calculus with the linear exponential modality \( ! \) and term-level recursion.

Interestingly \citet[Proposition~43 and Section~7]{Pagani2014} observed that the denotation of a program cannot be \( \infty \) and that adding a constant with ``probability \( 2 \)'' breaks this property.
Hence ``total probability \(\le 1\)'' is an important property even though their model has \( \infty \)'s.

The primary aim of this paper is to solve the problem of the ``total probability \( \le 1 \)'' invariant, using a technique of qualitative model construction for linear logic.
A pleasant surprise is that the resulting model is quite powerful, as it provides a fully abstract model of \emph{Quantum FPC}, a quantum \( \lambda \)-calculus with arbitrary recursive types.
See \cref{table:comparison-of-models} for a brief comparison of the features of quantum calculi and their models.
\begin{table}[b]
    \caption{Models of quantum programming languages, their properties and their target languages.  Here ``classical'' means ``non-quantum''; do not confuse it with classical (linear) logic.  ``Entangled tensor'' means the tensor product \( A \otimes B \) whose values are pair of possibly \emph{entangled} values of \( A \) and \( B \).  ``Quantum higher-order'' means the right adjoint to an entangled tensor.  ``Quantum recursive type'' is a recursive type for quantum data types.  The blank of a language feature means that the feature is not dealt with the paper, not necessarily mean that the model is unable to handle the feature.}
    \label{table:comparison-of-models}
    \vspace{-2.5ex}
    \small
\begin{tabular}{r|ccccccc|c|}
        & & & & \hphantom{$\mathbf{PER}_{\mathcal{Q}}$${}^{(4)}$} & \hphantom{$\overline{\mathbf{CPMs}}^{\oplus}$${}^{(5)}$} & \hphantom{${\sim}\textbf{-QA}$${}^{(6)}$} & \hphantom{\texttt{VQPL}${}^{(7)}$\quad} & ours \\
        & $\mathbf{Q}$\rlap{${}^{(1)}$} & $\mathbf{Q'}$\rlap{${}^{(2)}$} & $\mathbf{CPM}$\rlap{${}^{(3)}$} & $\mathbf{PER}_{\mathcal{Q}}$\rlap{${}^{(4)}$} & $\overline{\mathbf{CPMs}}^{\oplus}$\rlap{${}^{(5)}$} & ${\sim}\textbf{-QA}$\rlap{${}^{(6)}$} & \texttt{VQPL}\rlap{${}^{(7)}$} & $\classicalCQ$
        \\ \hline \hline
        ``Total probability'' \( \le 1 \) & \checkmark & \checkmark & & & & \checkmark & \checkmark & \checkmark
        \\
        ``Total probability'' \( < \infty \) & \checkmark & \checkmark & \checkmark & & & \checkmark & \checkmark & \checkmark
        \\
        Full abstraction & \checkmark & & \checkmark & & \checkmark\rlap{${}^{(8)}$} & \checkmark\rlap{${}^{(8)}$} & & \checkmark
        \\ \hline
        First-order quantum & \checkmark & \checkmark & \checkmark & \checkmark & \checkmark & \checkmark & \checkmark & \checkmark
        \\
        Entangled tensor & \checkmark & & \checkmark & & \checkmark & \checkmark & \checkmark & \checkmark
        \\
        Classical higher-order & & \checkmark & \checkmark & \checkmark & \checkmark & \checkmark & \checkmark & \checkmark
        \\
        Quantum higher-order & & & \checkmark & & \checkmark & \checkmark & & \checkmark
        \\
        Linear exponential (\(!\)) & & & & \checkmark & \checkmark & \checkmark & \checkmark & \checkmark
        \\
        Term-level recursion & (\checkmark)\rlap{${}^{*1}$} & \checkmark & & \checkmark & \checkmark & \checkmark & \checkmark & \checkmark
        \\
        Algebraic data type & & & & & (\checkmark)\rlap{${}^{*2}$} & (\checkmark)\rlap{${}^{*2}$} & \checkmark & \checkmark
        \\
        Classical recursive type & & & & & & & \checkmark & \checkmark
        \\
        Quantum recursive type & & & & & & & & \checkmark
    \end{tabular}
\\[4pt]
        (1) \cite{Selinger2004},
        (2) \cite{Selinger2004a},
        (3) \cite{Selinger2008},
        (4) \cite{Hasuo2011},
    \\[2pt]
        (5) \cite{Pagani2014},
        (6) \cite{Clairambault2019a},
        (7) \cite{Jia2022},
        (8) \cite{Clairambault2020}.
    \\[2pt]
        \begin{description}
\item[*1] Only first-order recursions, namely while-loops, are supported.
            \item[*2] Only list types are discussed (although we expect that initial algebras of polynomial functors exist).
\end{description}
\end{table}

\subsection{Approach: Modules over Superoperators}
Our approach to the ``total probability \(\le 1\)'' invariant is based on \emph{modules over superoperators}.
The construction is motivated by the \emph{probabilistic coherence space} model \( \PCoh \)~\cite{Girard2004,Danos2011,Ehrhard2014,Ehrhard2018}, which is a fully abstract model for higher-order probabilistic programs.
The total probability is an invariant of this model in the sense that \( \PCoh(I,I) = [0,1] \) (\cf~\cite[Remark in Section~2.1.2]{Danos2011}).
It is natural to expect that a quantum extension of the probabilistic coherence space model has desired properties.

However, it is not straightforward to give an appropriate quantum extension of (probabilistic) coherence spaces.
Actually, a quantum extension has been mentioned by \citet{Girard2004}, named \emph{quantum coherence spaces}, but \citet[Section~5]{Selinger2004a} observed that it does not precisely capture the first-order quantum computation.
Here again, the issue is on the quantum entanglement.

Our development is based on a recent algebraic approach to models of linear logic, proposed by \citet{Tsukada2022}.
They characterise the probabilistic coherence space model as a category of \emph{modules} over \( [0,1] \).
A module \( M \) over a ring \( R \) in the standard sense is an abelian group together with an action \( x \cdot r \in M \) of \( r \in R \) to \( x \in M \).
Tsukada and Asada studied a variant of the module theory, obtained by replacing abelian groups with \emph{\(\Sigma\)-monoids}~\cite{Haghverdi2000,Hoshino2011,Tsukada2018,Tsukada2022}, an algebra with partially-defined countable sum.

We regard the superoperator model \( \CQ \) as the coefficient ``ring''.
It has a hom-set-wise sum, which indeed satisfies the axioms for \( \Sigma \)-monoids.
The composition of morphisms is regarded as the multiplication.
These structures satisfy basic laws, such as the associativity of multiplication and the distributive law.
In this sense, the category \( \CQ \) of superoperators has a ring-like structure.

The concept of a module over superoperators can then be defined as a natural extension of the standard definition of a module over a ring.
We believe the definition should be acceptable, but there is another justification from the viewpoint of \emph{enriched category theory}: A module in the standard sense is just an \emph{abelian-enriched presheaf} (see, \eg, \cite[Section~1.3]{BorceuxVolume2}), whereas a module in our sense is a \emph{\(\Sigma\)-monoid-enriched presheaf}.
Hence, writing the category of \( \Sigma \)-monoids as \( \SMon \), the category of modules over superoperators is defined as \( \pshCQ \defe [\CQ^{\op}, \SMon]_{\SMon} \), where \( [\mathcal{C},\mathcal{D}]_{\SMon} \) means the category of \( \SMon \)-enriched functors from \( \mathcal{C} \) to \( \mathcal{D} \).
This alternative perspective enables us to utilise a number of constructions and results for (enriched) presheaf categories.

Although the enriched presheaf category \( \pshCQ \) has many desired properties, such as (co)completeness and local presentability, it contains many ``wild'' objects that are hard to deal with.
This fact may make sense if one considers the (standard, \( \Set \)-enriched) presheaf category \( \widehat{\mathcal{C}} \) of a non-trivial category \( \mathcal{C} \), but it can also be understood by an analogy from the theory of modules over a ring: unlike vector spaces, a module over a ring may not have a basis, for example.
So it would be desirable to find an easy-to-handle subcategory of \( \pshCQ \), consisting of ``tame'' objects.

Here we appeal to a way to extract a model of classical linear logic given by \citet{Tsukada2022}, an intrinsic variant of the \emph{bi-orthogonality} construction (see, e.g., \cite{Hyland2003}).
For a given \emph{\( \Sigma \)-semiring} \( \mathcal{R} \) (i.e.~a one-object \( \SMon \)-enriched category), Tsukada and Asada considered a full subcategory of the enriched presheaf category \( \widehat{\mathcal{R}} \) consisting of modules \( M \) that has a \emph{basis} and is canonically isomorphic to its second dual \( \neg\neg M \).
This paper introduces a multi-object version of this construction and applies it to \( \pshCQ \), yielding a model \( \classicalCQ \) of classical linear logic.

Every module \( M \) in \( \classicalCQ \) has many desirable properties: (1) a basis allows us to represent an element in \( M \) as a vector with entries from completely positive maps, and correspondingly, to represent a morphism \( f \colon M \longrightarrow N \) in \( \classicalCQ \) as a matrix; (2) \( M \) is canonically an \( \omega \)CPO, \( \classicalCQ \) is \( \omega \)CPO-enriched, and all linear logic constructs including \( ! \) are \( \omega \)CPO-enriched functors.
The first point significantly eases the calculation, and the second point allows us to use a standard technique to solve a type-level recursive equation based on embedding-projection pairs in an $\omega$CPO-enriched category.

\subsection{Quantum FPC}
We demonstrate the usefulness of \( \classicalCQ \) as a model of programming languages, proving the full abstraction for \emph{Quantum FPC}, a quantum \( \lambda \)-calculus~\cite{Pagani2014,Selinger2009} with arbitrary recursive types.
The language Quantum FPC is obtained as a simple combination of linearity of the linear \( \lambda \)-calculus, recursive type from FPC~\cite{Plotkin1985,Fiore1994a,Fiore1994}, and the quantum feature from quantum \( \lambda \)-calculi~\cite{Selinger2009,Pagani2014}, each of which is quite well-understood.

Although Quantum FPC is a simple combination and there is no novelty in the language design, a denotational model for the simple combination is one of the major contributions of this paper.
\citet{Jia2022} studied a quantum calculus with recursive types, but their language has a strong restriction at least from the viewpoint of the quantum \( \lambda \)-calculus~\cite{Pagani2014,Selinger2009}: only first-order quantum functions are allowed and quantum functions cannot be entangled with each other.
The restriction separates the first-order quantum language from the higher-order classical (non-quantum) language, and this separation is explained as the key to extending the classical part by advanced features such as recursive types~\cite[Section~8.2]{Jia2022}.
We show that the separation is not necessary, at least for supporting recursive types, by providing the denotational model of Quantum FPC, a language without the separation.

\subsection{Contributions}
The contributions of this paper can be summarised as follows.
\begin{itemize}
    \item We introduce the category \( \pshCQ \) of modules over superoperators, which is a model of intuitionistic linear logic, and cave out a model \( \classicalCQ \) of classical linear logic.
    \item We show that \( \classicalCQ \) inherits many desirable properties of \( \CQ \) such as the ``total probability \(\le 1\)'' invariant and \( \omega \)CPO enrichment.
    The \( \omega \)CPO enrichment allows us to apply the standard domain-theoretic approach to recursive types.
    \item We define a quantum \( \lambda \)-calculus with recursive types, named Quantum FPC, and prove that \( \classicalCQ \) is an adequate and fully abstract model of Quantum FPC.
    \item We revisit the norm-based approach~\cite{Selinger2004a} to quantum programs from the viewpoint of superoperator modules and propose an alternative based on \emph{families of norms}.
\end{itemize}
Our model is similar to the model by \citet{Pagani2014} in many respects.  The biggest difference is that we do not apply the completion to the hom-set-wise sum of \( \CQ \).
We do not add \( \infty \)'s, but at the cost of having to be careful about the convergence of sums.
The absence of \( \infty \)'s plays an important role in the second and third points above.
See \cref{sec:related-work} for more details.

\paragraph{Organisation of the paper.}
\cref{sec:qfpc} introduces our target calculus, \emph{Quanutm FPC}.
\cref{sec:model} discusses the enriched presheaf category \( \pshCQ \) over the \( \SMon \)-enriched category \( \CQ \) of superoperators.
\cref{sec:matrix} introduces the concept of basis, defines the classical model \( \classicalCQ \), and studies its properties extensively using bases.
\cref{sec:recursive} solves type-level recursive equations and \cref{sec:interpretation} defines the interpretation and proves soundness and adequacy.
\cref{sec:full-abstraction} proves the full abstraction (\cref{thm:fullabstraction}).
In \cref{sec:norm} we revisit the norm-based approach to quantum programs.

 \section{Quantum FPC}
\label{sec:qfpc}
This section defines the target language of this paper, \emph{Quantum FPC}.
It is a simple combination of two well-known languages, namely FPC~\cite{Plotkin1985,Fiore1994a,Fiore1994} and the quantum \( \lambda \)-calculus~\cite{Pagani2014,Selinger2009}, and we believe there is nothing hard to understand.
The operational semantics of Quantum FPC is based on that of the quantum \( \lambda \)-calculus by \citet{Pagani2014}.
Our formalisation uses a model of first-order quantum computation, namely the category \( \CQ \) of superoperators, which we briefly review in \cref{sec:defining-completely-positive-map}.

\subsection{Preliminaries: Completely Positive Maps and Superoperators}
\label{sec:defining-completely-positive-map}
Here we quite briefly review notions related to quantum computation, some of which are not even defined here.
See, \eg, \cite{Selinger2004,Selinger2004a} for a gentle introduction and concrete definitions.

The state space of a quantum system is usually regarded as a Hilbert space.
This paper mainly focuses on finite-dimensional Hilbert spaces \( \Complex^n \), \( n \in \Nat \).
Let \( \Mat_n(\Complex) \) be the set of \( (n \times n) \)-matrices over \( \Complex \).
A matrix \( x \in \Mat_n(\Complex) \) is \emph{self-adjoint} or \emph{Hermitian} if \( x^* = x \), where \( x^* \) is the conjugate transpose of \( x \).
A self-adjoint \( x \in \Mat_n(\Complex) \) is \emph{positive} if \( v^* x v \in \Real_{\ge 0} \) for every vector \( v \in \Complex^n \).
We write \( 0 \le x \) to mean that \( x \) is a positive self-adjoint matrix.
The \emph{trace} \( (\trace x) \) of \( x \in \Mat_n(\Complex) \) is the sum of the diagonal elements of \( x \).
For positive \( x \), \( \trace x \in \Real_{\ge 0} \).

A \emph{(mixed) state} of a quantum system is represented as a positive self-adjoint \( (n \times n) \)-matrix \( x \in \Mat_n(\Complex) \) of trace \( \le 1 \).
The trace of a positive self-adjoint operator intuitively means the ``total probability'', so a state is usually defined as a positive self-adjoint matrix of trace \( 1 \).  However, in the context of quantum programming, the ``total probability'' of the result of a computation may be strictly smaller than \( 1 \) when the computation may diverge~\cite[Section~5.1]{Selinger2004}.
For this reason, the trace \( (\trace\,x) \) of a state \( x \) is in \( [0,1] \) in this paper.

A quantum operation is represented by a \emph{superoperator}, a \( \Complex \)-linear function \( \varphi \colon \Mat_n(\Complex) \longrightarrow \Mat_m(\Complex) \) that maps a mixed state to a mixed state and satisfies an additional condition.
The preservation of mixed states is equivalent to that \( \varphi \) is \emph{positive} (\ie~\( 0 \le x \) implies \( 0 \le \varphi(x) \)) and \emph{trace-non-increasing} (\ie~\( \trace x \ge \trace \varphi(x) \) for every \( x \ge 0 \)).
The additional condition strengthens the former, the positivity, and is called the \emph{complete positivity}.
We omit its definition.
A linear map satisfying the complete positivity is called a \emph{completely positive map}.
In this terminology, a superoperator is a completely positive map that is trace-non-increasing.

The \emph{operator norm} \( \opnorm{\varphi} \) (with respect to the trace norm) of a completely positive map \( \varphi \) is \( \sup \{ \trace \varphi(x) \mid x \ge 0, \trace x = 1 \} \).
Then \( \varphi \) is trace-non-increasing if and only if \( \opnorm{\varphi} \le 1 \).

The categories \( \CPM \) of completely positive maps and \( \CQ \) of superoperators are defined as follows.
Their object is a natural number \( n \in \Nat \).
A morphism \( \varphi \in \CPM(n,m) \) in \( \CPM \) is a completely positive map \( \varphi \colon \Mat_n(\Complex) \longrightarrow \Mat_m(\Complex) \), and a morphism \( \psi \in \CQ(n,m) \) in \( \CQ \) is a superoperator \( \psi \colon \Mat_n(\Complex) \longrightarrow \Mat_m(\Complex) \).
Note that \( \CPM(1,1) = \Real_{\ge 0} \) and \( \CQ(1,1) = [0,1] \).
Morphisms are composed as functions, and the identity is the identity function.
Note that \( \CQ \) is a \emph{wide} subcategory of \( \CPM \) (\ie~the sets of objects of \( \CQ \) and \( \CPM \) coincide).
We shall identify a mixed state \( x \in \Mat_n(\Complex) \) with the morphism \( r \mapsto r\,x \) in \( \CQ(1,n) \).

The algebraic property of completely positive maps is something like positive reals.
The set of completely positive maps is closed under the sum (as linear functions) and the scalar multiplication of positive reals,
and a rearrangement of a convergent series \( \sum_{i = 0}^\infty f_i \) of completely positive maps does not change the value.
Similarly the set of superoperators behaves like the unit interval \( [0,1] \).

These categories are symmetric monoidal.
The tensor product is \( n \otimes m \defe nm \) on objects, with the unit object \( 1 \).
As \( \Complex \)-vector spaces, \( \Mat_n(\Complex) \otimes \Mat_m(\Complex) \cong \Mat_{n \otimes m}(\Complex) \) and this isomorphism induces the action of the tensor product on morphisms.
By appropriately choosing the above isomorphism, the monoidal structures on \( \CPM \) and \( \CQ \) can be made strict.

Furthermore \( \CPM \) is compact closed.
The dual \( n^* \) of \( n \) is \( n \), and the compact closed structure can be chosen to be strict (in the sense of \cite{Kelly1980}).
In particular, the canonical morphism \( n \cong n^{**} \) is the identity.

A state \( x \in \CQ(1, n \otimes m) \) is \emph{entangled} if \( x \) cannot be written as \( \sum_{i = 1}^k p_i (y_i \otimes z_i) \) for \( k \in \Nat \), \( p_i \in [0,1] \) with \( \sum_{i=1}^k p_i \le 1 \), \( y_i \in \CQ(1,n) \) and \( z_i \in \CQ(1,m) \).
Entangled states are critically important in quantum computation but, at the same time, the biggest obstacle in constructing a denotational model.

\subsection{Types and Terms of Quantum FPC}
\Cref{fig:syntax-of-types-and-terms} gives the syntax of types and terms of Quantum FPC.
It is a simple combination of the linear \( \lambda \)-calculus (\( \TyUnit, A \multimap B, A \otimes B, A + B \) and \( {!}A \) types), recursive type constructs (\( X \) and \( \mu X. A \) types) and quantum constructs (\( \TyQubit \) type), so the meaning of types and terms should be self-explanatory.
Here \( U \) is a unitary operator on the Hilbert space \( \Complex^{2^n} \) for some \( n \) and regarded as a constant \( U \colon \TyQubit^n \multimap \TyQubit^n \).
We often omit the type annotations such as \( A+B \) in \( \TInl_{A+B} \).
The notion of free (type or term) variables are defined as usual.
We allow tacit renaming of bound variables.
We write \( A[B/X] \) and \( t[u/x] \) for the capture-avoiding substitution.
\begin{figure}[t]
    \begin{align*}
        & \mathit{Types} & A,B &{}::={} X \mid \TyQubit \mid \TyUnit \mid A \multimap B \mid A \otimes B \mid A + B \mid {!}A \mid \mu X. A \\
        & \mathit{Terms} & s,t &{}::={} x \mid \lambda x^A. t \mid s\,t & (\mbox{linear function}) \\
        & & & {}\:\:\mid\:\, \TUnitVal \mid (t;s) & (\mbox{unit}) \\
        & & & {}\:\:\mid\:\, t \otimes u \mid \TLet{x \otimes y}{t}{u} & (\mbox{tensor}) \\
        & & & {}\:\:\mid\:\, \TInl_{A+B} \mid \TInr_{A+B} \mid \TMatch{t}{x}{s}{x'}{s'} & (\mbox{coproduct}) \\
        & & & {}\:\:\mid\:\, \TBang{t} \mid \TDer_{!A} & (\mbox{linear exponential}) \\
        & & & {}\:\:\mid\:\, \TFold_{\mu X.A} \mid \TUnfold_{\mu X.A} & (\mbox{recursive type}) \\
        & & & {}\:\:\mid\:\, \qket{0} \mid U \mid \QMeas & (\mbox{quantum}) \\
        & \mathit{Values} & v,w &{}::={} x \mid \lambda x^A.t \mid \TUnitVal \mid v \otimes w \mid \TInl_{A+B}\,v \mid \TInr_{A+B}\,v \mid \TBang{t} \mid \TFold_{\mu X.A}\,v \hspace{-120pt} \\
        & & & {}\:\:\mid\:\, \TInl_{A+B} \mid \TInr_{A+B} \mid \TDer_{!A} \mid \TFold_{\mu X.A} \mid \TUnfold_{\mu X.A} \mid U \mid \QMeas \hspace{-120pt}
    \end{align*}
    \vspace{-20pt}
    \caption{The syntax of Quantum FPC}
    \label{fig:syntax-of-types-and-terms}
\end{figure}

The following constructs, which are not given as primitives of Quantum FPC, are definable.
\begin{itemize}
    \item \textbf{Term-level recursion}:
        Term-level recursion is definable by using type-level recursion by a well-known technique~\cite{Abadi1996,Fiore1994a} (see~\cite{Lindenhovius2021} for the linear setting).
        So
\( (\mathtt{fix}\,f^{!(A \multimap B)}. \TBang{\lambda x^{A}. t}) \colon !(A \multimap B) \) is obtained as a derived form.
The typing rule is in \citet{Pagani2014}, agreeing with \cite{Lindenhovius2021}.
    \item \textbf{Natural number types}:
        The natural number type can be defined as \( \mathbb{N} \defe \mu X. (\TyUnit + X) \), which our model actually interprets as the initial algebra of the functor \( F(X) \defe \TyUnit + X \) as expected.
        Initial algebra types for polynomial functors (composed of \( + \), \( \otimes \) and \( ! \)),
such as the list type \( (\texttt{list}\,A) \defe \mu X. (\TyUnit + A \otimes X) \), can be obtained in a similar way.
\end{itemize}

The type system is that of the standard linear \( \lambda \)-calculus.
A \emph{type environment} \( \Gamma \) is a list of type bindings of the form \( x \colon A \).
We write \( !\Gamma \) for a type environment consisting only of \( ! \)-types \( !A \).
A \emph{type judgement} is of the form \( !\Gamma, \Delta \vdash t \colon A \)
where two different bindings in the environment never share the same variable,
and type variables must not freely occur.
The typing rules are standard (see, \eg, \cite{Lindenhovius2021}):
for example,
\begin{equation*}
    \dfrac{
        !\Gamma \vdash t \colon A
    }{
        !\Gamma \vdash \TBang{t} \colon {!}A
    },
    \quad
    \dfrac{
        !\Gamma, \Delta \vdash t \colon \TyUnit
        \quad
        !\Gamma, \Delta' \vdash s \colon A
    }{
        !\Gamma, \Delta, \Delta' \vdash (t;s) \colon A
    }
    \:\:\mbox{and}\:\:
    \dfrac{
        !\Gamma, \Delta \vdash t \colon A_1 \otimes A_2
        \quad
        !\Gamma, \Delta', x_1 \colon A_1, x_2 \colon A_2 \vdash s \colon B
    }{
        !\Gamma, \Delta, \Delta' \vdash \TLet{x\otimes y}{t}{s} \colon B
    }.
\end{equation*}
Note that all term constructors for recursive type and for quantum computation are introduced as constants.
The types of constants are given by
\begin{gather*}
    \TUnitVal \colon \TyUnit
    \qquad
    \TInl_{A+B} \colon A \multimap (A+B)
    \qquad
    \TInr_{A+B} \colon B \multimap (A+B)
    \qquad
    \TDer_{!A} \colon {!}A \multimap A
    \\
    \TFold_{\mu X.A} \colon (A[\mu X.A/X]) \multimap \mu X.A
    \qquad
    \TUnfold_{\mu X.A} \colon \mu X.A \multimap (A[\mu X.A/X])
    \\
    \ket{0} \colon \TyQubit
    \qquad
    U \colon \TyQubit^n \multimap \TyQubit^n
    \qquad
    \QMeas \colon \TyQubit \multimap (\TyUnit + \TyUnit)
\end{gather*}
where \( U \) is a unitary operator on \( \Complex^{2^n} \) and \( \TyQubit^n \) is the tensor product of \( n \) \( \TyQubit \)'s.
\begin{remark}
    The syntax of terms slightly differs from \citet{Pagani2014} on the primitives for the linear exponential.
    In their calculus, the promotion is implicit (\ie~their calculus has no term constructor for the promotion) and allowed only for values of function types:
    \begin{equation*}
        \dfrac{!\Gamma \vdash v \colon (A \multimap B)}{!\Gamma \vdash v \colon !(A \multimap B)}.
    \end{equation*}
    In contrast, our calculus has a term constructor \( \TBang{} \) for the promotion, which can take an arbitrary term of any type; the computation of \( t \) in \( \TBang{t} \) is suspended until the \( \TDer \) constructor invokes it (intuitively \( \TDer\,\TBang{t} \) is reduced to \( t \)).
    Our constructor \( \TBang{t} \) corresponds to \( \lambda z^\TyUnit. (z;t) \) in \citet{Pagani2014}, where \( z \) is a variable not free in \( t \), and \( \TDer \) corresponds to the application of the unit value.
    So, this syntactic difference is just a matter of preference.
    The syntax in our style can be found in \citet{Lindenhovius2021}, for example.
\eqed
\end{remark}

\subsection{Operational Semantics}
The operational semantics is similar to that by \citet{Pagani2014}.
It is given as a rewriting system of \emph{quantum closures}.
\begin{definition}
    A \emph{quantum closure} is a tuple \( \QCloE{\vec{q}}{\varphi}{(c_i^{{!A}_i} = \TBang{s_i})_{i=1}^m}{t} \) where
    \begin{itemize}
        \item \( \vec{q} = (q_1\dots q_k) \) is a list of variables of \( \TyQubit \) type,
        \item \( \varphi \in \CQ(1, 2^k) \) is a mixed state of \( k \)-qubits,
        \item \( (c_i^{{!A}_i} = \TBang{s_i})_{i=1}^m \) is a list of assignments to a variable \( c_i \) of a closed \(!\)-term \( {} \vdash \TBang{s} \colon {!}A_i \), and
        \item \( q_1 \colon \TyQubit, \dots, q_k \colon \TyQubit, c_1 \colon {!}A_1, \dots, c_m \colon {!}A_m \vdash t \colon \TyUnit \) is a term.
    \end{itemize}
    We use the metavariable \( \Xi \) for the \( ! \)-terms component,
and \( \QCloS \) for quantum closures.
    We identify closures obtained by rearrangement of \( \vec{q} \), \ie~\( \QCloE{q_1\dots q_n}{\varphi}{\Xi}{t} = \QCloE{q_{\sigma^{-1}(1)}\dots q_{\sigma^{-1}(n)}}{\sigma \circ \varphi}{\Xi}{t} \) for every permutation \( \sigma \in \mathfrak{S}_n \).
    When \( \vec{q} \) or \( \Xi \) is the empty list, we write \( \langle\rangle \).
A \emph{quantum value closure} (ranged over by \( \QCloValS \)) is a quantum closure such that \( t \) is a value.
    So a quantum value closure must be of the form \( \QCloE{\langle\rangle}{\varphi}{(c_i^{{!A}_i} = \TBang{s_i})_{i=1}^m}{\TUnitVal} \) where \( \varphi \in \CQ(1,1) = [0,1] \), and only meaningful information is \( \varphi \) denoting the probability to reach this state.
\eqed
\end{definition}

\begin{remark}
    The definition above differs slightly from the original definition \cite{Selinger2009,Pagani2014} in the following respects.
    First the original definition represents the state of qubits as a vector \( v \in \Complex^{2^k} \) instead of a mixed state \( \varphi \in \CQ(1, 2^k) \).
    We use a mixed state because it allows us to describe a measurement process as a composition.
    Second our definition has an additional component \( (c_i^{{!A}_i} = \TBang{s_i})_{i=1}^m \) collecting all \( ! \)-expressions evaluated during the computation, whereas the original definition uses a substitution \( t[\TBang{s_1}/c_1,\dots,\TBang{s_m}/c_m] \).
    This is nothing but \emph{explicit substitution} and because of a technical convenience for our adequacy proof, in which we need to finely control duplications.
\eqed
\end{remark}

The \emph{evaluation context} is given by the following grammar:
\begin{align*}
    E&\quad::=\quad [] \mid E\,t \mid v\,E \mid (E;t) \mid E \otimes t \mid v \otimes E \mid (\TLet{x \otimes y}{E}{t})
    \\ &
    \quad\quad|\quad \TMatch{E}{x}{s}{x'}{s'}.
\end{align*}
Let \( \mathit{qCl} \) be the set of quantum closures and \( \mathcal{M}_{\mathrm{fin}}(X) \) be the set of finite multisets over \( X \).
Then the \emph{one-step reduction relation} is a subset \( (\rightsquigarrow) \subseteq (\mathit{qCl}) \times \mathcal{M}_{\mathrm{fin}}(\mathit{qCl}) \), relating a quantum closure with a formal sum of quantum closures.
It is defined by the rules in \cref{fig:quantum-one-step-reduction}, where we write \( x \rightsquigarrow y_1 + y_2 + \dots + y_k \) to mean \( (x, [y_1,\dots,y_k]) \in (\rightsquigarrow) \).
In most cases, the number of elements after the reduction is \( 1 \); the exception is the rule for the measurement \( \QMeas \), which has two possibilities (i.e., two elements).
\begin{figure}
    \begin{gather*}
        \QCloE{\vec{q}}{\varphi}{\Xi}{E[(\lambda x.t)\,v]} \rightsquigarrow \QCloE{\vec{q}}{\varphi}{\Xi}{E[t[v/x]]}
        \qquad
        \QCloE{\vec{q}}{\varphi}{\Xi}{E[\TUnitVal; t]} \rightsquigarrow \QCloE{\vec{q}}{\varphi}{\Xi}{E[t]}
        \\
        \QCloE{\vec{q}}{\varphi}{\Xi}{E[\TLet{x \otimes y}{v \otimes w}{t}]} \rightsquigarrow \QCloE{\vec{q}}{\varphi}{\Xi}{E[t[v/x, w/y]]}
        \\
        \QCloE{\vec{q}}{\varphi}{\Xi}{E[\TMatch{\TInl\,v}{x}{t}{y}{s}]} \rightsquigarrow \QCloE{\vec{q}}{\varphi}{\Xi}{E[t[v/x]]}
        \\
        \QCloE{\vec{q}}{\varphi}{\Xi}{E[\TMatch{\TInr\,v}{x}{t}{y}{s}]} \rightsquigarrow \QCloE{\vec{q}}{\varphi}{\Xi}{E[s[v/y]]}
        \\
        \QCloE{\vec{q}}{\varphi}{\Xi}{E[\TBang{t}]} \rightsquigarrow \QCloE{\vec{q}}{\varphi}{(\Xi, c=\TBang{t})}{E[c]}
        \\
        \QCloE{\vec{q}}{\varphi}{(\Xi, c=\TBang{t}, \Xi')}{E[\TDer\,c]} \rightsquigarrow \QCloE{\vec{q}}{\varphi}{(\Xi, c=\TBang{t}, \Xi')}{E[t]}
        \\
        \QCloE{\vec{q}}{\varphi}{\Xi}{E[\TUnfold\,(\TFold\,v)]} \rightsquigarrow \QCloE{\vec{q}}{\varphi}{\Xi}{E[v]}
        \\
        \QCloE{\vec{q}}{\varphi}{\Xi}{E[\,\qket{0}]} \rightsquigarrow \QCloE{\vec{q}q'}{\varphi \otimes \varphi_0}{\Xi}{E[q']}
        \\
        \QCloE{q_1 q_2 \dots q_k \vec{q}}{\varphi}{\Xi}{E[U\,(q_1 \otimes \dots \otimes q_k)]} \rightsquigarrow \QCloE{q_1 q_2 \dots q_k \vec{q}}{(\psi_U \otimes \ident) \circ \varphi}{\Xi}{E[q_1 \otimes \dots \otimes q_k]}
        \\
        \QCloE{q' \vec{q}}{\varphi}{\Xi}{E[\QMeas\,q']} \rightsquigarrow \QCloE{\vec{q}}{(\theta_0 \otimes \ident) \circ \varphi}{\Xi}{E[\TInl\,\TUnitVal]} + \QCloE{\vec{q}}{(\theta_1 \otimes \ident) \circ \varphi}{\Xi}{E[\TInr\,\TUnitVal]}
        \\[-10pt]
    \end{gather*}
    where superoperators \( \varphi_0 \in \CQ(1,2) \), \( \theta_0,\theta_1 \in \CQ(2,1) \) and \( \psi_U \in \CQ(2^n,2^n) \) are given by
    \begin{gather*}
        \varphi_0(r) \defe r \left( \begin{array}{cc} 1 & 0 \\ 0 & 0 \end{array} \right),
        \qquad
        \psi_U(X) \defe UXU^\dagger,
        \qquad
        \theta_0\left( \left( \begin{array}{cc} a & b \\ c & d \end{array} \right) \right) \defe a
        \quad\mbox{and}\quad
        \theta_1\left( \left( \begin{array}{cc} a & b \\ c & d \end{array} \right) \right) \defe d.
    \end{gather*}
    \vspace{-10pt}
    \caption{The one-step reduction relation \( (\rightsquigarrow) \).}
    \label{fig:quantum-one-step-reduction}
\end{figure}

The \emph{multi-step reduction relation} \( (\stackrel{*}{\rightsquigarrow}) \) is defined by the following rules:
\begin{gather*}
    \dfrac{
        \mathstrut
    }{
        \QCloS \stackrel{*}{\rightsquigarrow} \QCloS
    }
    \qquad
    \dfrac{
        \QCloS \stackrel{*}{\rightsquigarrow} \QCloS_0 + \sum_{i=1}^k \QCloS_{\ac{i}}
        \qquad
        \QCloS_0 \rightsquigarrow \sum_{i=1}^m \QCloS'_{\ac{i}}
    }{
        \QCloS \stackrel{*}{\rightsquigarrow} \sum_{i=1}^m \QCloS'_{\ac{i}} + \sum_{i=1}^k \QCloS_{\ac{i}}
    }.
\end{gather*}

A closed term \( {}\vdash t \colon \TyUnit \) of the unit type is called a \emph{program}.
We write \( [t] \) for \( \QCloE{\langle\rangle}{\ident}{\langle\rangle}{t} \).
If \( [t] \stackrel{*}{\rightsquigarrow} \sum_i \QCloS_i + \sum_j \QCloValS_j \) with \( \QCloValS_j = \QCloE{\langle\rangle}{\varphi_j}{\Xi_j}{\TUnitVal} \), then \( t \) terminates at least probability \( \sum_j \varphi_j \).
The \emph{termination probability of \( t \)} is defined as the supremum of these lower bounds, \ie,
\begin{equation*}
    \textstyle
    \mathrm{Pr}(t \Downarrow \TUnitVal) \defe \sup \{\, \sum_j \mathrm{Pr}(\QCloValS_j \Downarrow \TUnitVal) \mid [t] \stackrel{*}{\rightsquigarrow} \sum_i \QCloS_i + \sum_j \QCloValS_j \,\},
\end{equation*}
where \( \mathrm{Pr}(\QCloValS \Downarrow \TUnitVal) = \varphi \) for \( \QCloValS = \QCloE{\langle\rangle}{\varphi}{\Xi}{\TUnitVal} \). 
It is not difficult to see that \( 0 \le \mathrm{Pr}(t \Downarrow \TUnitVal) \le 1 \).

 \section{Superoperator-Module Model}\label{sec:model}
This section aims to give concrete definitions of our models, the category \( \pshCQ \) of modules over superoperators, and to state the full abstraction result.
This section also provides a necessary background to understand the definition of \( \pshCQ \) and \( \classicalCQ \), together with their justifications.

\subsection{Preliminaries: $\Sigma$-monoids and Enriched Categories}
\label{sec:pre-sigma-monoids-and-enriched-categories}
This subsection briefly reviews two key notions employed in this paper, namely \emph{\(\Sigma\)-monoids} and \emph{enriched categories}.
We use them as a guide to define modules over superoperators properly.

\subsubsection{$\Sigma$-monoids}
\label{sec:defining-sigma-monoids}
A \emph{\( \Sigma \)-monoid}~\cite{Haghverdi2000,Hoshino2011,Tsukada2018,Tsukada2022} is a partial algebra \( M \) having countable partial sum \( \sum_{i \in I} x_i \) for \( I \subseteq \Nat \) and \( x_i \in M \).

The motivation for considering \( \Sigma \)-monoid in this paper comes from a linear algebraic approach to linear logic.
Given a field \( k \) in the standard sense, the category of \( k \)-linear spaces and \( k \)-linear maps is a prototypical model\footnote{However, the notion of spaces that gave rise to linear logic is different, known as coherence spaces.}
of intuitionistic linear logic with the cofree linear exponential comonad.
However, the linear algebra model is not completely satisfactory, as the category of \( k \)-linear spaces is not a model of classical linear logic, \ie~the negation is not involutive on infinite-dimesnional spaces.
The subcategory of finite-dimensional \( k \)-linear spaces is a model of classical linear logic, but the cofree linear exponential comonad is not closed in this subcategory as \( !M \) is often infinite-dimensional (even for finite-dimensional \( M \)).
We want a variant of linear algebras that provides a model with involutive negation and a linear exponential comonad.

In the linear algebra models, the \emph{finite}-dimensional spaces behave well because a \( k \)-linear space is an algebra with \emph{finite} sums.
So, it is natural to expect an algebra with \emph{infinite} sums to provide a basis for linear logic with \emph{infinite}-dimensional spaces such as \( !M \).
\citet{Laird2013} studied linear algebras with totally-defined infinite sums, and \citet{Tsukada2022} discussed linear algebras with partially-defined infinite sums.
We follow the latter approach since the sum of superoperators is not always defined.\footnote{The sum \( \varphi + \psi \) of superoperators \( \varphi,\psi \) may not be a superoperator as the trace of \( \varphi+\psi \) may exceeds \( 1 \).}
\(\Sigma\)-monoid is the algebra that the latter approach is based on.

\begin{notation}
  This paper deals with algebras with partially-defined operations.
  So the value of an expression \( e \) may not be defined (as in the case of \( 1/0 \)).
  For expressions \( e \) and \( e' \), \( e = e' \) means that both \( e \) and \( e' \) are defined and their values coincide.
  The \emph{Kleene inequality} \( e \Kle e' \) means that, if \( e \) is defined, then so is \( e' \) and their values coincide.
  Note that \( e = e \) is not always true; this means that \( e \) is defined.
  We write \( \IsDefined{e} \) for \( e = e \) and \( e \Keq e' \) for \( (e \Kle e') \wedge (e' \Kle e) \).
\eqed
\end{notation}

\begin{definition}[$\Sigma$-monoid]\label{def:sigma-monoid}
  A \emph{$\Sigma$-monoid} is a set \( M \) equipped with a partially-defined sum \( \sum (x_i)_{i \in I} \) for a countable family \( (x_i)_{i \in I} \) of elements in \( M \) subject to the following axioms:
\begin{itemize}
    \item \textbf{Empty Sum}: \( \sum \emptyset \) is defined.  Its value is written as \( 0 \).
    \item \textbf{Singleton Sum}: \( \sum (x_i)_{i \in \{\star\}} = x_\star \).
    \item \textbf{Zero}: If \( (x_i)_{i \in I} \) is a countable family of elements in \( M \) and \( I' \subseteq I \) is a subset such that \( i \in I \setminus I' \) implies \( x_i = 0 \), then \( \sum (x_i)_{i \in I} \Keq \sum (x_i)_{i \in I'} \).
    \item \textbf{Commutative}: If \( (x_i)_{i \in I} \) and \( (y_j)_{j \in J} \) are countable families of elements in \( M \) and \( \sigma \colon I \longrightarrow J \) is a bijection such that \( x_i = y_{\sigma(i)} \), then \( \sum (x_i)_{i \in I} \Keq \sum (y_j)_{j \in J} \).
    \item \textbf{Associativity}: \( \sum (\sum_{j \in J_i} x_{i,j})_{i \in I} \Keq \sum (x_{i,j})_{i \in I, j \in J_i} \).
  \end{itemize}
  We write \( \sum_{i \in I} x_i \) or \( \sum_i x_i \) for \( \sum (x_i)_{i \in I} \).
  A \( \Sigma \)-monoid is equipped with the canonical pre-order defined by \( (x \le y) \defp (\exists z.\ x+z=y) \).
  A \( \Sigma \)-monoid \( M \) is \emph{\( \omega \)-complete} if \( \IsDef{(\sum_{i \in J} x_i)} \) for every finite subset \( J \subseteq I \) implies \( \IsDef{(\sum_{i \in I} x_i)} \) and \emph{cancellable} if \( x + y = x + z \) implies \( y = z \).
  The canonical pre-order of a cancellable \( \Sigma \)-monoid is a partial order.
\eqed
\end{definition}

\begin{example}
  Let \( \Real \) (resp.~\( \Real_{\ge 0} \)) be the set of real numbers (resp.~non-negative real numbers).
  \begin{itemize}
    \item \( \Real_{\ge 0} \) is a \( \Sigma \)-monoid.
    The finite sum is the standard one, and an infinite sum is defined by \( \sum_{i \in \Nat} x_i \Keq \lim_{n \to \infty} \sum_{i=0}^n x_i \) (and it is undefined if \( \lim_{n \to \infty} \sum_{i=0}^n x_i = \infty \)).
    The infinite sum is stable under a rearrangement since it absolutely converges whenever it converges.
    \item The unit interval \( [0,1] \subseteq \Real_{\ge 0} \) has a \( \Sigma \)-monoid structure inherited from \( \Real_{\ge 0} \).
    The sum \( \sum_{i} x_i \) is defined if it is defined in \( \Real_{\ge 0} \) and its value is in \( [0,1] \).
    In this \( \Sigma \)-monoid, even a finite sum may be undefined.
    This \( \Sigma \)-monoid is \( \omega \)-complete.
    \item For every \(n,m \in \Nat \), \( \CPM(n,m) \) and \( \CQ(n,m) \) are \( \Sigma \)-monoids.
    See \cref{sec:enrichment-of-q} for details.
  \end{itemize}
  We also give some non-examples.
  \begin{itemize}
    \item \( \Real \) with the standard sum (\ie~\( \sum_i x_i \) is the limit of \( \sum_{i = 0}^n x_i \)) is not a \( \Sigma \)-monoid because the sum of a conditionally convergent series \( (x_i)_{i \in \Nat} \) is changed by a rearrangement.
    \item \( \Real \) with the absolutely converge sum is not a \( \Sigma \)-monoid because the sum is not associative.
    For example, for \( x_i = (-1)^i \), the sum \( \sum_{i} (x_{2i} + x_{2i+1}) \) absolutely converges but \( \sum_i x_i \) does not. 
\qed
  \end{itemize}
\end{example}

\begin{definition}
  Let \( M \) and \( N \) be \( \Sigma \)-monoids.
  A \emph{\(\Sigma\)-monoid homomorphism} is a function \( f \colon M \longrightarrow N \) that preserves the sum in the sense that
  \(
    f( \sum_{i \in I} x_i )
    \Kle
    \sum_{i \in I} f(x_i)
  \).
  So \( f(0)=0 \) by applying this condition to the empty sum.
  The \emph{category of \(\Sigma\)-monoids} \( \SMon \) has \( \Sigma \)-monoids as objects and \( \Sigma \)-monoid homomorphisms as morphisms, composed as functions.
\eqed
\end{definition}

The category \( \SMon \) satisfies many good properties (see \cite{Tsukada2022} for basic properties of \( \SMon \)).
The category \( \SMon \) is a symmetric monoidal closed category, where the tensor product \( M \otimes N \) is defined as the representing object of bilinear maps.
For \( \Sigma \)-monoids \( M \), \( N \) and \( L \), a \emph{bilinear map} from \( M \) and \( N \) to \( L \) is a function \( f \colon M \times N \longrightarrow L \) such that \( f(\sum_i x_i, \sum_j y_j) \Kle \sum_{i,j} f(x_i, y_j) \).
Let \( \Bilin(M, N; L) \) be the set of bilinear maps from \( M \) and \( N \) to \( L \).
Then \( \Bilin(M, N; {-}) \colon \SMon \longrightarrow \Set \) is a functor in a natural way, and \( M \otimes N \) is a \( \Sigma \)-monoid such that \( \Bilin(M, N; {-}) \cong \SMon(M \otimes N, {-}) \).
The right adjoint \( M \multimap N \) is the set of \( \Sigma \)-monoid homomorphisms \( f \colon M \longrightarrow N \) whose sum is defined if \( \IsDef{(\sum_i f_i)} \defp (\forall x \in M. \IsDef{(\sum_i f_i(x))}) \) and in this case, \( (\sum_i f_i)(x) = \sum_i f_i(x) \).

\subsubsection{Enriched Categories}
\label{sec:defining-enriched-category}
For a usual category \( \cat \), the collection \( \cat(A,B) \) of all morphisms from \( A \) to \( B \) is a mere set.
The enriched category theory considers the situation where \( \cat(A,B) \) has an additional structure.
For example, in an order-enriched category \( \cat \), each hom-set \( \cat(A,B) \) is an ordered set.
See Kelly's textbook~\cite{Kelly1982} for the formal definition and basic results.

This paper employs a characterisation of the modules over a ring in terms of the enriched category theory.  Let \( \Ab \) be the category of abelian groups.   Intuitively, an \( \Ab \)-enriched category \( \cat \) is a category in which \( \cat(A,B) \) is an abelian group for each object \( A,B \in \cat \).
Then a commutative ring \( R \) induces an \( \Ab \)-enriched single-object symmetric monoidal category and an \( R \)-module is just a presheaf over \( R \) in the \( \Ab \)-enriched sense (see below for the precise definition).
This paper aims to develop a variant of the module theory by replacing \( \Ab \) with \( \SMon \).

No knowledge of deep results of the enriched category theory is needed to understand this paper; we just use some definitions of basic notions such as enriched categories and enriched functors.
Since this paper considers mainly \( \SMon \)-enriched categories, we give a definition specialised to this setting, instead of the general definition of the enriched category. 

\begin{definition}
\label{def:enriched-category}
  A category \( \mathcal{C} \) is \emph{$\SMon$-enriched} if each hom-set \( \mathcal{C}(A,B) \) is equipped with a \( \Sigma \)-monoid structure preserved by composition, \ie~\( (\sum_{i \in I} g_i) \circ (\sum_{j \in J} f_j) \Kle \sum_{(i,j) \in I \times J} (g_i \circ f_j) \) for every family \( (g_i)_{i \in I} \) of \( \mathcal{C}(B,C) \) and \( (f_j)_{j \in J} \) of \( \mathcal{C}(A,B) \).
  A functor \( F \colon \mathcal{C} \longrightarrow \mathcal{D} \) between \( \SMon \)-enriched categories is \emph{$\SMon$-enriched} if \( F (\sum_{i \in I} f_i) \Kle \sum_{i \in I} F f_i \).
A \emph{$\SMon$-enriched natural transformation} between \( \SMon \)-enriched functors is a natural transformation in the standard sense.
\eqed
\end{definition}

\begin{example}
  \( \SMon \) itself is \( \SMon \)-enriched by \( \SMon(M,N) = (M \multimap N) \) as \( \Sigma \)-monoids.
\eqed
\end{example}

\begin{remark}
  For a \( \SMon \)-enriched category \( \mathcal{C} \), its underlying (\( \Set \)-enriched) category is also written as \( \mathcal{C} \) by abuse of notation.
  Technically this convention does not cause a problem since the forgetful functor \( \SMon(I, {-}) \colon \SMon \longrightarrow \Set \) is faithful.
  Because of this convention, a functor on \( \mathcal{C} \) is not necessarily \( \SMon \)-enriched, unless explicitly stated to be so.
  An example of non-\(\SMon\)-enriched functor in this paper is the cofree linear exponential comonad.
\eqed
\end{remark}

\subsection{Modules over Superoperators}
\label{sec:enrichment-of-q}
Now we define the (\(\SMon\)-enriched) category \( \pshCQ \) of \emph{modules over superoperators}.
We provide two equivalent definitions.
The first one is elementary; it is a direct modification of the standard notion of module over a ring.
The second one is more sophisticated, using the notion of enriched presheaves, as we mentioned.
The equivalence of the two definitions is a variant of a well-known result.

We first introduce the \( \Sigma \)-monoid structure to \( \CPM(n,m) \) and \( \CQ(n,m) \).
A finite sum in \( \CPM(n,m) \) is always defined and an infinite sum \( \sum_{i \in \Nat} \varphi_i \) is defined as the limit \( \lim_{n \to \infty} \sum_{i=0}^n \varphi_i \) with respect to the standard topology on \( \Mat_n(\Complex) \cong \Complex^{n \times n} \).
The subset \( \CQ(n,m) \subseteq \CPM(n,m) \) has the \( \Sigma \)-monoid structure inherited from \( \CPM(n,m) \):
given a family \( (\varphi_i)_i \) on \( \CQ(n,m) \), the sum \( \sum_i \varphi_i \) is defined in \( \CQ(n,m) \) if \( \sum_i \varphi_i \) is defined in \( \CPM(n,m) \) and its value belongs to the subset \( \CQ(n,m) \subseteq \CPM(n,m) \).
Note that even a finite sum \( \varphi + \varphi' \) may not be defined in \( \CQ(n,m) \).

\( \CPM \) and \( \CQ \) are \( \SMon \)-enriched categories, \ie~their compositions are bilinear \( (\sum_i \varphi_i) \circ (\sum_j \psi_j) \Keq \sum_{i,j} \varphi_i \circ \psi_j \).
\( \CPM \) is a \( \SMon \)-enriched symmetric monoidal closed category in the sense that \( \otimes \) and \( \multimap \) are \( \SMon \)-enriched, \ie~\( (\sum_i \varphi_i) \otimes (\sum_j \psi_j) \Kle \sum_{i,j} (\varphi_i \otimes \psi_j) \) and \( (\sum_i \varphi_i) \multimap (\sum_j \psi_j) \Kle \sum_{i,j} (\varphi_i \multimap \psi_j) \).
\( \CQ \) is a \( \SMon \)-enriched symmetric monoidal category.
Furthermore \( \CPM(n,m) \) has an action of non-negative real number \( r \in \Real_{\ge 0} \).
For \( \varphi \in \CPM(n,m) \) and \( r \in \Real_{\ge 0} \), let \( r\varphi \) be the function defined by \( (r\varphi)(x) \defe r(\varphi(x)) \) as usual.
Similarly \( \CQ \) has the action of \( [0,1] \).
The composition preserves this action, \ie~\( (r\varphi) \circ \psi = r(\varphi \circ \psi) = \varphi \circ (r\psi) \).

The first definition of modules over superoperators is a direct modification of modules over a ring.
Let us recall the standard notion of (right-)module over a ring \( \mathcal{R} \): it is an Abelian group \( M \) with a bilinear action \( x \cdot r \in M \) for \( x \in M \) and \( r \in \mathcal{R} \) satisfying \( x \cdot 1 = x \) and \( (x \cdot r) \cdot r' = x \cdot (rr') \).
The bilinearity means that \( (\sum_{i \in I} x_i) \cdot (\sum_{j \in J} r_j) = \sum_{i \in I, j \in J} (x_i \cdot r_j) \) (\( I \) and \( J \) are finite).
\begin{definition}\label{def:elementary-module}
  A \emph{\(\CQ\)-module} \( M \) is a family \( M = (M_n)_{n \in \Nat} \) of \( \Sigma \)-monoids together with a family of bilinear actions \( ({-}) \cdot ({-}) \colon M_n \otimes \CQ(m,n) \longrightarrow M_m \) satisfying \( x \cdot \ident_n = x \) and \( (x \cdot \varphi) \cdot \psi = x \cdot (\varphi \circ \psi) \) for every \( x \in M_n \), \( \varphi \in \CQ(m,n) \) and \( \psi \in \CQ(k,m) \).
  The bilinearity of actions means that
  \begin{equation*}
    \textstyle
    \big(\sum_i x_i \big) \cdot \big(\sum_j \varphi_j \big)
    \quad\Kle\quad
    \sum_{i,j} (x_i \cdot \varphi_j).
  \end{equation*}
  A \emph{\(\CQ\)-module morphism} from \( M \) to \( N \) is a family \( f = (f_n)_{n \in \Nat} \) of \( \Sigma \)-monoid homomorphisms \( f_n \colon M_n \longrightarrow N_n \) that preserves actions: \( f_m(x \cdot \varphi) = f_n(x) \cdot \varphi \) for every \( x \in M_n \) and \( \varphi \in \CQ(m,n) \).
  Let \( \pshCQ \) be the category of \( \CQ \)-modules and \( \CQ \)-module morphisms.
\eqed
\end{definition}

Another definition of \( \CQ \)-module is in terms of the enriched category theory.
\begin{definition}\label{def:enriched-module}
  A \emph{\(\CQ\)-module} \( M \) is a \(\SMon\)-enriched presheaf over \( \CQ \), \ie~a \( \SMon \)-enriched functor \( M \colon \CQ^{\op} \longrightarrow \SMon \).
  Let \( \pshCQ \) be the category of \( \SMon \)-enriched presheaves over \( \CQ \) and \(\SMon\)-enriched natural transformations.
  (Recall that a \( \SMon \)-enriched natural transformation is a natural transformation in the standard sense, \cf~\cref{def:enriched-category}.)
\eqed
\end{definition}

The coincidence of \cref{def:elementary-module,def:enriched-module} is a variant of a well-known result in the abelian-enriched case (see, \eg, \cite[Section~1.3]{BorceuxVolume2}).
A \( \SMon \)-enriched presheaf \( M \colon \CQ^{\op} \longrightarrow \SMon \) gives a
  \( \CQ \)-module \( (M(n))_{n \in \Nat} \) with the action \( x \cdot \varphi \defe M(\varphi)(x) \).
  Conversely, given a \( \CQ \)-module in the sense of \cref{def:elementary-module}, the associated \( \SMon \)-enriched functor is \( n \mapsto M_n \) on objects and \( \varphi \mapsto ({-}) \cdot \varphi \) on morphisms.

The category \( \pshCQ \) itself is \( \SMon \)-enriched.
Given a family \( (f_i)_i \) over \( \pshCQ(M,N) \), their sum \( \sum_i f_i \) is defined if \( \sum_i (f_i)_n \) is defined for every \( n \) and then \( (\sum_i f_i)_n \) is defined as \( \sum_i (f_i)_n \).

\begin{remark}\tk{This remark is added}
  There is another justification of \( \CQ \)-modules in terms of the programming language theory.
  Usually, the interpretation \( \sem{\tau} \) of a type \( \tau \) in a categorical model is the object expressing the collection of all values of type \( \tau \).
  So, consideration of such a categorical model can be, to some extent, substituted by consideration of values.

  Whereas the interpretation of a type is a set \( \mathcal{V} \) of values in the standard setting, we model a type by a \emph{family of values} \( (\mathcal{V}_n)_{n \in \CQ} \) parametrised by \( n \in \CQ \).
  To intuitively understand the relevance of the parameterisation, consider a value \( \vdash f \colon \TyQubit \multimap \TyQubit \) of a function type \( \TyQubit \multimap \TyQubit \).
  As in many other programming languages, a value of a function type is a \emph{closure}, \ie~it secretly captures some variables in the context in which \( f \) is defined.
  This implicit capturing is harmless in most programming languages since a value at runtime is always closed.
  However, quantum programming languages have non-closed values, namely values of the qubit type, which may be entangled with other qubits.
  Due to this non-closeness, the information about what type of data a closure is hiding has a semantic significance.
  The parameter \( n \in \CQ \) in the family \( (\mathcal{V}_n)_{n \in \CQ} \) intuitively corresponds to the type of data that a closure hides (\eg~a value in \( \mathcal{V}_8 \) captures \( 3 \) qubits).

The different components \( \mathcal{V}_n \) and \( \mathcal{V}_m \) of the family are not independent.
  Given \( v(x) \in \mathcal{V}_4 \) (where \( x \) is the captured variable) and a closed term \( \vdash g \colon \TyQubit \multimap (\TyQubit \otimes \TyQubit) \), the expression \( \mathtt{let}\,x=g\,y\,\mathtt{in}\,V(x) \) determines a value \( w(y) \) in \( \mathcal{V}_{2} \).
  This is the \emph{action} of a morphism \( g \in \CQ(2,4) \) to \( v \in \mathcal{V}_4 \).
  This family of open values accompanied by action is almost the definition of the module or enriched presheaf over superoperators.

  The parameterisation is relevant to a technical issue, namely the decomposability of a value of type \( A \otimes B \) into values of type \( A \) and of type \( B \).
  Consider the following program of type \( \mathtt{qubit} \otimes (\mathtt{unit} \multimap \mathtt{bool}) \):
  \begin{equation*}
      \mathtt{let}\:(x \otimes y) = (|0\rangle+|1\rangle)(|0\rangle+|1\rangle)/2\,\mathtt{in}\:(x \otimes (\lambda \_. \mathtt{meas}\,y)).
  \end{equation*}
Its evaluation (in a slightly different syntax) results in
  \begin{equation*}
      \big[\: |xy\rangle = (|0\rangle+|1\rangle)(|0\rangle+|1\rangle)/2, \quad x \otimes (\lambda \_. \mathtt{meas}\,y) \:\big],
  \end{equation*}
which can be decomposed into values of type \( \mathtt{qubit} \) and of \( \mathtt{unit} \multimap \mathtt{bool} \) as
  \begin{equation*}
      \big[\: |x\rangle \!=\! (|0\rangle\!+\!|1\rangle)/\sqrt{2}, ~~ x \:\big]
      \otimes
      \big[\: |y\rangle \!=\! (|0\rangle\!+\!|1\rangle)/\sqrt{2}, ~~ (\lambda \_. \mathtt{meas}\,y) \:\big].
  \end{equation*}
This syntactic decomposition is not always possible.
  Consider another program of the same type,
  \begin{equation*}
      \mathtt{let}\:(x \otimes y) = (|00\rangle + |11\rangle)/\sqrt{2}\,\mathtt{in}\:(x \otimes (\lambda \_. \mathtt{meas}\,y)),
  \end{equation*}
  evaluated to a result almost identical to the expression:
  \begin{equation}
      \big[\: |xy\rangle = (|00\rangle + |11\rangle)/\sqrt{2}, \quad x \otimes (\lambda \_. \mathtt{meas}\,y) \:\big].
      \label{eq:indecomposable-value}
  \end{equation}
  This result cannot be decomposed into \emph{closed} values of \( \mathtt{qubit} \) and of \( \mathtt{unit} \multimap \mathtt{bool} \) because the two components are entangled.
  In our understanding, the failure of Selinger's norm-based model~\cite{Selinger2004a} and Girard's quantum coherence space model~\cite{Girard2004} stems from this indecomposability problem. 

  We address this issue by introducing \emph{open} values.
  A value \( v \in \mathcal{V}_n \) in the \( n \)-component of the family \( (\mathcal{V}_n)_{n \in \CQ} \) implicitly captures a variable, say \( x \), of type \( n \).
  We regard it as a value \( v(x) \) with a free variable \( x \).
  So the parameterisation allows values to be open.
  Using open values, the value in \cref{eq:indecomposable-value} can be decomposed as
  \begin{equation*}
    \big[\: x \:\big]
    \:\otimes\:
    \big[\: (\lambda \_. \mathtt{meas}\,y) \:\big]
    \qquad
    \mbox{where}
    \quad
    \ket{xy} = (|00\rangle + |11\rangle),
  \end{equation*}
  a pair of an open value \( x \) of type \( \TyQubit \), an open value \( (\lambda \_. \mathtt{meas}\,y) \) of type \( (\mathtt{unit} \multimap \mathtt{bool}) \) together with a glue \( \ket{xy} = (|00\rangle + |11\rangle) \) that entangles the two values.
  Actually, this decomposition can be found in \cref{lem:linear:lqt-function-and-tensor}.
  This is why our model precisely handles values of type \( A \otimes B \) in the presence of entanglement between two components.
\qed
\end{remark}

\subsection{Basic Definitions and Properties}

\paragraph{Yoneda Embedding}
As an enriched presheaf category, \( \pshCQ \) enjoys the Yoneda Lemma.
For each object \( n \in \CQ \), the \emph{representable functor} \( \yoneda(n) \defe \CQ({-}, n) \) is a \( \CQ \)-module.
Its \( m \)-th component is given by \( \yoneda(n)_m \defe \CQ(m,n) \).
The action of \( \varphi \in \CQ(k,m) \) to \( x \in \yoneda(n)_m = \CQ(m,n) \) is the composition \( x \cdot \varphi \defe x \circ \varphi \).
Among others, \( \yoneda(1) \) is of particular importance since it is the unit object of the monoidal structure.
The map \( \yoneda \) is extended to a \( \SMon \)-enriched functor \( \CQ \longrightarrow \pshCQ \).
Its action to a morphism \( \varphi \in \CQ(m,n) \) is 
\( \big(\yoneda(m)_k \ni x \mapsto \varphi \circ x \in \yoneda(n)_k\big)_{k \in \CQ^{\op}} \).
By the (enriched) Yoneda Lemma, \( M_n \cong \pshCQ(\yoneda(n), M) \) as \( \Sigma \)-monoids for every \( \CQ \)-module \( M \).

A closely related \( \CQ \)-module is \( \CPM({-}, n) \) for each \( n \).
Although \( \CPM({-}, n) \) is not representable, we have a canonical \( \Sigma \)-monoid isomorphism
\(
    \pshCQ(\CPM({-},n), \CPM({-},m))
    \cong
    \CPM(n,m)
\)
for each \( n \).
This is because every element \( x \in \CPM(k,n) \) is a finite sum \( x_1 + \dots + x_j \) of elements coming from \( \CQ(k,n) \).
A variant of this fact (\cref{thm:cpm-representation}) shall be heavily used.

\paragraph{Submodule}
A \emph{\(\CQ\)-submodule} of \( M \) is a monomorphism \( L \hookrightarrow M \).
Since a morphism \( \iota \colon L \longrightarrow M \) is a monomorphism if and only if \( \iota_n \) is an injection for every \( n \), we can assume w.l.o.g.~that \( L_n \subseteq M_n \) for every \( n \).
Note that a \( \CQ \)-submodule is not necessarily closed by the sum, \ie~\( x,y \in L_n \) and \( x+y=z \in M_n \) does not imply \( x + y = z \) holds in \( L_n \); \( z \) may not belong to \( L_n \), or \( x+y=z \) may not hold in \( L_n \) even if \( z \in L_n \) (in each case, \( x+y \) must be undefined in \( L_n \)).

A \( \CQ \)-submodule \( L \hookrightarrow M \) is \emph{sum-reflecting} if \( \sum_i x_i = y \) in \( M_n \) and \( x_i, y \in L_n \) implies \( \sum_i x_i = y \) in \( L_n \).
It is \emph{downward-closed} if \( x \in L_n \) and \( y \le x \) holds in \( M_n \) implies \( y \in L_n \).
It is \emph{hereditary} if it is downward-closed and sum-reflecting.

\paragraph{Actions}
Every \( \CQ \)-module \( M \) has the action of \( [0,1] \) defined by \( r\,x \defe x \cdot (r\,\ident_k) \) for \( x \in M_k \) and \( r \in [0,1] \) (where \( \ident_k \in \CQ(k,k) \)).
So \( M_k \) is not just a \( \Sigma \)-monoid but a \( [0,1] \)-module in this sense.
This action can be extended to \( \pshCQ(M,N) \) and others.
These actions of \( r \in [0,1] \), as actions of \( \CQ \), are preserved by many operations including the \( \CQ \)-action, the composition and the tensor product (below).
The action of non-zero \( r \in (0,1] \) is cancellable: \( r\,x = r\,y \) implies \( x = y \).

\paragraph{Local Presentability}
The category \( \pshCQ \) of \( \CQ \)-modules is locally presentable by its algebraic nature.
\emph{Local presentable category} is a class of well-behaved categories, which can be characterised as the class of models of a certain kind of algebras.
For the theory of locally presentable categories, see a textbook~\cite{Adamek1994}.
As a consequence, \( \pshCQ \) is (co)complete.
\begin{proposition}\label{prop:local-presentability}
    \( \pshCQ \) is locally \( \aleph_1 \)-presentable.
\qed
\end{proposition}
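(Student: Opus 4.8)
The plan is to exhibit $\pshCQ$ as the category of models of an $\aleph_1$-ary essentially algebraic theory, from which local $\aleph_1$-presentability follows by the standard characterisation theorem (e.g.\ \cite{Adamek1994}). First I would use \cref{def:elementary-module}: a $\CQ$-module is a family $(M_n)_{n \in \Nat}$ of $\Sigma$-monoids together with bilinear actions $M_n \otimes \CQ(m,n) \longrightarrow M_m$ satisfying the two equational laws $x \cdot \ident_n = x$ and $(x \cdot \varphi)\cdot\psi = x\cdot(\varphi\circ\psi)$. The underlying signature has a sort for each $n \in \Nat$, a partial countable-arity sum operation $\sum_{i\in I}(\cdot)$ for each countable index set $I$ on each sort, and a unary total operation $({-})\cdot\varphi$ for each morphism $\varphi$ of $\CQ$; all the defining conditions ($\Sigma$-monoid axioms, bilinearity, unit, associativity of the action) are equations (in the Kleene sense) between terms each of which mentions only countably many variables. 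This makes $\pshCQ$ the category of models of an essentially algebraic theory whose operations have arity $< \aleph_1$, hence locally $\aleph_1$-presentable.

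The one genuinely delicate point is that the sum is \emph{partial}, so $\pshCQ$ is not the category of algebras for a (finitary or infinitary) monad on $\Set$ in the naive sense; it is a category of \emph{partial} algebras. The cleanest route I would take is to invoke the framework of essentially algebraic theories (= finite-limit theories = locally presentable categories), where partial operations are precisely accommodated: an operation may be declared only on the subset cut out by a conjunction of equations between previously-defined terms. Concretely, for each countable $I$ one introduces the ``domain of definedness'' predicate as an equalizer-type condition and declares $\sum_{i\in I}$ on it; the $\Sigma$-monoid axioms are then stated as commuting diagrams over these partial-map domains. Since every such condition involves only $<\aleph_1$ variables and $<\aleph_1$-ary operations, the associated syntactic category has $\aleph_1$-small limits, and its category of $\Set$-models — which is exactly $\pshCQ$ — is locally $\aleph_1$-presentable. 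Alternatively, and perhaps more transparently for a reader, one can first observe that $\SMon$ itself is locally $\aleph_1$-presentable (a $\Sigma$-monoid is a partial algebra with countable-arity sum subject to countably-presented axioms), then note that $\pshCQ = [\CQ^{\op},\SMon]_{\SMon}$ is a category of $\SMon$-enriched presheaves on a \emph{small} $\SMon$-enriched category, and such enriched functor categories are locally presentable whenever the enriching base is; one must only check the cardinality bookkeeping, namely that $\CQ$ is small (indeed countable on objects, with countably-generated hom-$\Sigma$-monoids) and that $\aleph_1$ is a regular cardinal bounding everything in sight.

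The main obstacle is thus purely the interface between the partial-sum structure and the standard (total-operation) statement of the Gabriel–Ulmer / essentially-algebraic characterisation: one has to be slightly careful that ``$\IsDef{}$'' predicates and Kleene equalities really do present as finite-limit conditions of the right rank, and that the rank is $\aleph_1$ rather than some larger cardinal — this is where the countability of the index sets $I \subseteq \Nat$ in \cref{def:sigma-monoid}, and of the hom-$\Sigma$-monoids of $\CQ$, is used essentially. Everything else — (co)completeness as a corollary, the equivalence of \cref{def:elementary-module} and \cref{def:enriched-module} already recorded above — is routine once the presentation is in place.
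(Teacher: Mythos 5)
Your proposal is correct and takes essentially the same route as the paper: the paper's proof also presents \( \pshCQ \) as the category of models of an \( \Nat \)-sorted theory with countable partial sums and unary action operations \( ({-})\cdot\varphi \), phrased as a partial Horn theory in the sense of Palmgren--Vickers (equivalent to your essentially algebraic formulation), and concludes local \( \aleph_1 \)-presentability from the countable arity of the operations. Your remarks on the partiality of the sum and the cardinality bookkeeping correctly identify the only delicate points.
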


\subsection{Structures of Intuitionistic Linear Logic}
\label{sec:intuitionistic}
This subsection proves that the category \( \pshCQ \) of \( \CQ \)-modules is a model of intuitionistic linear logic. 
\begin{theorem}\label{thm:presheaf-model}
    The underlying category of \( \pshCQ \) equipped with the Day tensor is a Lafont model of intuitionistic linear logic, \ie~it is a symmetric monoidal closed category with products, coproducts and cofree cocommutative comonoids.
\qed
\end{theorem}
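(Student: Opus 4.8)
The plan is to verify the three defining ingredients of a Lafont model of intuitionistic linear logic in turn: the symmetric monoidal closed structure, products and coproducts, and cofree cocommutative comonoids. For \emph{symmetric monoidal closure}, recall from \cref{sec:pre-sigma-monoids-and-enriched-categories} that $\SMon$ is a symmetric monoidal closed category which is moreover complete and cocomplete (being locally presentable), so it is a legitimate base for enrichment; and $\CQ$ is a small $\SMon$-enriched symmetric monoidal category. Hence the (enriched) Day convolution equips $\pshCQ = [\CQ^{\op},\SMon]_{\SMon}$ with an $\SMon$-enriched symmetric monoidal closed structure: the Day tensor is the coend $M \otimes N \defe \int^{a,b \in \CQ} \CQ({-},\, a \otimes b) \odot (Ma \otimes Nb)$ (with $\odot$ the copower of a $\Sigma$-monoid by a set and the tensor on the right that of $\SMon$), the unit is the representable $\yoneda(1)$, and the internal hom $M \multimap N$ is given by the corresponding end over $\CQ$. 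Passing to underlying ordinary categories gives an ordinary symmetric monoidal closed category.

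For \emph{products and coproducts}: an $\SMon$-enriched functor category inherits, computed pointwise, every limit and colimit that exists in the base $\SMon$. Since $\SMon$ is complete and cocomplete, $\pshCQ$ therefore has all small products and coproducts, formed componentwise; in particular it has the nullary and binary ones needed to interpret $\top, \&, \mathbf{0}$ and $\oplus$.

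The remaining --- and only nontrivial --- task is the existence of \emph{cofree cocommutative comonoids}: the forgetful functor from the category of cocommutative comonoids in $(\pshCQ, \otimes, \yoneda(1))$ to $\pshCQ$ must have a right adjoint, whose induced comonad is the exponential $!$. Such cofree cocommutative comonoids exist in any locally presentable symmetric monoidal closed category whose tensor is cocontinuous in each variable (they even admit an explicit colimit formula). It therefore suffices to note that $\pshCQ$ meets these hypotheses: it is locally $\aleph_1$-presentable by \cref{prop:local-presentability}, and, being symmetric monoidal closed, each functor $({-}) \otimes N$ has a right adjoint $N \multimap ({-})$ and hence preserves all colimits. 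This yields the cofree exponential and completes the proof that $\pshCQ$ is a Lafont model. The main obstacle is thus concentrated in this last step; it is discharged by the two properties just named, while the monoidal closed structure (Day convolution) and the pointwise (co)limits are entirely formal.
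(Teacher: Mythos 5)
Your proof is correct and follows essentially the same route as the paper: Day convolution for the symmetric monoidal closed structure, pointwise (co)limits from the (co)completeness of \( \SMon \), and the existence of cofree cocommutative comonoids from local \( \aleph_1 \)-presentability (\cref{prop:local-presentability}) together with monoidal closedness, which is exactly the general result the paper invokes for \cref{thm:free-exponential}. The only cosmetic difference is that you present the Day tensor as a coend whereas the paper describes it via its universal property for bilinear maps; these are the same construction.
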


\begin{notation}\label{notation:ll-connective}
  In order to avoid confusion, we shall reserve linear logical symbols such as \( \otimes \), \( \multimap \) and \( ! \) for types of Quantum FPC and structures of its model \( \classicalCQ \).
  The category \( \classicalCQ \) is a subcategory of \( \pshCQ \) but the embedding does not preserve tensor product, coproduct nor exponential modality.
  The tensor product, coproduct and exponential modality for \( \pshCQ \) shall be written as \( \ptensor \), \( \amalg \) and \( \cofreeexp \).
\eqed
\end{notation}

\paragraph{Additives}
\( \pshCQ \) has the (co)products as \( \pshCQ \) is (co)complete.
Here is a concrete description.

Since the constructions rely on the additive structure of \( \SMon \), let us first recall the (co)product of \( \Sigma \)-monoids.
Let \( X \) and \( Y \) be \( \Sigma \)-monoids with the underlying sets \( |X| \) and \( |Y| \).
Then \( |X\times Y| \defe |X|\times|Y| \) and \( |X \amalg Y| \defe \{\,(x,y) \in |X|\times|Y| \mid x = 0 \vee y = 0 \,\} \).
The sum in \( X \times Y \) is component-wise: \( \sum_i (x_i, y_i) \cong (\sum_i x_i, \sum_i y_i) \).
The sum \( \sum_i (x_i, y_i) \) in \( X+Y \) is similar, but it is defined only if \( \sum_i x_i = 0 \) or \( \sum_i y_i = 0 \).
Note that \( (X+Y) \subseteq (X \times Y) \); actually \( X \amalg Y \) is a \( \Sigma \)-submonoid.
For \( Z = X_1 \times X_2 \) or \( X_1 \amalg X_2 \), we have both projections \( \Proj_i \colon Z \longrightarrow X_i \) and injections \( \Inj_i \colon X_i \longrightarrow Z \) given by, for example, \( \Proj_1(x_1,x_2) = x_1 \) and \( \Inj_1(x_1) = (x_1, 0) \).
The projections and injections satisfy \( \ident_Z = \sum_{i=1,2} \Inj_i \circ \Proj_i \), \( \Proj_i \circ \Inj_i = \ident_{X_i} \) and \( \Proj_i \circ \Inj_{3-i} = 0 \).

The product \( M \times N \) and coproduct \( M \amalg N \) of \( \CQ \)-modules \( M \) and \( N \) is given by the component-wise product \( (M \times N)_n \defe M_n \times N_n \) and component-wise coproduct \( (M \amalg N)_n \defe M_n \amalg N_n \).
The \( \CQ \)-action is defined by \( (x,y) \cdot \varphi \defe (x \cdot \varphi, \, y \cdot \varphi) \) for \( x \in M_n \), \( y \in N_n \) and \( \varphi \in \CQ(m,n) \).
Both the product \( M \times N \) and coproduct \( M \amalg N \) have projections and injections, similar to the case of the \( \Sigma \)-monoid (co)product.
For \( \oast \in \{ \times, \amalg \} \), we write \( \Proj_i \colon M_1 \oast M_2 \longrightarrow M_i \) and \( \Inj_i \colon M_i \longrightarrow M_1 \oast M_2 \) for the projection and injection, respectively.

\paragraph{The Day Tensor}
\label{sec:intuitionistic:day-tensor}
The presheaf category \( \pshCQ \) has the tensor product inherited from \( \CQ \), known as the \emph{Day tensor}~\cite{10.1007/BFb0060438}.
We give an elementary description based on multilinear maps.

Let \( M \), \( N \) and \( L \) be \( \CQ \)-modules.
A \emph{bilinear map} \( f \in \Bilin(M,N; L) \) from \( M \) and \( N \) to \( L \) is a family \( (f_{m,n} \colon M_m \times N_n \longrightarrow L_{m \otimes n})_{m,n \in \Nat} \) of bilinear maps in \( \SMon \) that preserves \( \CQ \)-actions.
Concretely it is a family of maps satisfying
\begin{gather*}
    \textstyle
    f_{m,n}(\sum_i x_i, \sum_j y_j) \:\:\Kle\:\: \sum_{i,j} f_{m,n}(x_i,y_j)
    \quad\mbox{and}\quad
f_{m',n'}(x \cdot \varphi,\: y \cdot \psi) \:\:=\:\: f_{m,n}(x,y) \cdot (\varphi \otimes \psi)
\end{gather*}
for every \( x, x_i \in M_m \), \( y,y_j \in N_n \), \( \varphi \in \CQ(m',m) \) and \( \psi \in \CQ(n',n) \).
The set \( \Bilin(M,N; L) \) of bilinear maps is equipped with the natural sum (\ie~\( \IsDefined{(\sum_i f_i)} \) if and only if \( \IsDefined{(\sum_i (f_i)_{n,m})} \) for every \( n \) and \( m \))
and \( \Bilin(M,N; {-}) \colon \pshCQ \to \SMon \) is a \( \SMon \)-enriched functor.
This \( \SMon \)-enriched functor is representable, \ie~there exists a \( \CQ \)-module \( R_{M,N} \) such that \( \Bilin(M,N; {-}) \cong \pshCQ(R_{M,N}, {-}) \) (natural \( \SMon \)-isomorphism).
The representing object \( R_{M,N} \) is written as \( (M \ptensor N) \) and called the \emph{Day tensor product} of \( M \) and \( N \).
By the general result by Day, \( M \ptensor N \) is defined for every pair of \( \CQ \)-modules.
It can be extended to a functor that is \( \SMon \)-enriched in the sense that \( (\sum_i f_i) \ptensor (\sum_j g_j) \Kle \sum_{i,j} (f_i \ptensor g_j) \).
The Yoneda embedding \( \yoneda \) is strong monoidal: \( \yoneda(n \otimes m) \cong \yoneda(n) \ptensor \yoneda(m) \).
So the tensor unit is \( I = \yoneda(1) \).

\paragraph{Linear Function Space}
\label{sec:intuitionistic:function-space}
As shown generally by \citet{10.1007/BFb0060438},
\( \pshCQ \) has the \( \SMon \)-enriched right-adjoint \( M \multimap ({-}) \) of the Day tensor product \( ({-}) \ptensor M \) for each \( \CQ \)-module \( M \).
Since \(\pshCQ(M, N) \cong \pshCQ(\yoneda(1) \ptensor M, N) \cong \pshCQ(\yoneda(1), M \multimap N) \), by the Yoneda Lemma,
the \( 1 \)st-component \( (M \multimap N)_1 \) is isomorphic to \( \pshCQ(M,N) \) as \( \Sigma \)-monoids.
For general \( n \in \Nat \),
the \( n \)th-component \( (M \multimap N)_n \) of \( M \multimap N \) is given by
\( (M \multimap N)_n \cong \pshCQ(\yoneda(n) \otimes M, N) \cong \Bilin(\yoneda(n), M; N) \)
for the same reason as above.

This can be extended to a \( \SMon \)-enriched functor \( ({-}) \multimap ({-}) \), which is contravariant on the first argument.
The \( \SMon \)-enrichment means \( (\sum_i f_i) \multimap (\sum_j g_j) \Kle \sum_{i,j} f_i \multimap g_j \).

\paragraph{Exponential}
The linear exponential modality is the most interesting structure of linear logic, of which the existence is non-trivial in many cases.
An advantage of the algebraic approach of \citet{Tsukada2022}
is that the existence of the cofree linear exponential comonad trivially follows from the local presentability.
In general, every locally-presentable symmetric monoidal-closed category has cofree cocommutative comonoids~\cite[Remarks~1 in Section~2.7]{Porst2008}.
The proof uses an adjoint functor theorem, so the concrete description is not clear.

\begin{theorem}\label{thm:free-exponential}
    \( \pshCQ \) has the cofree exponential comonad \(\cofreeexp\).
\qed
\end{theorem}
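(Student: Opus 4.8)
The plan is to get \( \cofreeexp \) for free from abstract nonsense: first produce a comonad via the cofree-cocommutative-comonoid construction, then observe that in any symmetric monoidal closed category such a comonad is automatically a linear exponential comonad. The two ingredients are already available: by \cref{thm:presheaf-model} the underlying category of \( \pshCQ \) with the Day tensor \( \ptensor \) is symmetric monoidal closed, and by \cref{prop:local-presentability} it is locally \( \aleph_1 \)-presentable. These are exactly the hypotheses of the theorem of \citet{Porst2008} that a locally presentable symmetric monoidal closed category has cofree cocommutative comonoids; that is, the forgetful functor \( U \colon \mathrm{CComon}(\pshCQ) \longrightarrow \pshCQ \) from the category of cocommutative \( \ptensor \)-comonoids has a right adjoint \( R \). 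I then set \( \cofreeexp \defe U \circ R \); this is a comonad on the underlying category of \( \pshCQ \) (which need not be \( \SMon \)-enriched), whose value \( \cofreeexp M \) is the carrier of the cofree cocommutative comonoid on \( M \), with counit and comultiplication induced in the usual way.

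Next I would verify that \( \cofreeexp \) is a linear exponential comonad, following the well-known argument going back to Lafont. The category \( \mathrm{CComon}(\pshCQ) \) is cartesian: the binary product of comonoids is carried by \( \ptensor \), and the terminal object is \( I = \yoneda(1) \) with its unique comonoid structure. Since \( R \), being a right adjoint, preserves finite products, and \( U \) takes the product and terminal object of \( \mathrm{CComon}(\pshCQ) \) to \( \ptensor \) and \( I \), one obtains the Seely isomorphisms
\[
  \cofreeexp(M \times N) \;\cong\; \cofreeexp M \ptensor \cofreeexp N ,
  \qquad
  \cofreeexp(\mathbf{1}) \;\cong\; I ,
\]
natural in \( M \) and \( N \), where \( \mathbf{1} \) is the terminal object of \( \pshCQ \). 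A routine check that these are coherent with the monoidal structure and with the comonad structure then shows, together with the products \( \times \) of \cref{thm:presheaf-model}, that \( \pshCQ \) is a Seely (hence Lafont) model with exponential \( \cofreeexp \).

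I expect no real obstacle once \cref{thm:presheaf-model,prop:local-presentability} are granted: the genuine content is packed into \citet{Porst2008}, whose proof applies an adjoint functor theorem and so needs \( \mathrm{CComon}(\pshCQ) \) to be accessible/locally presentable --- this is where local presentability of \( \pshCQ \) together with cocontinuity of \( \ptensor \) in each variable (automatic from closedness) is used. The only caveat worth flagging is that this route is non-constructive: it yields no explicit formula for \( \cofreeexp M \). That is precisely why \cref{sec:matrix} restricts attention to the full subcategory \( \classicalCQ \), where bases give a concrete, computable description of \( \cofreeexp \).
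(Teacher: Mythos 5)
Your proposal is correct and follows exactly the paper's route: local presentability (\cref{prop:local-presentability}) plus symmetric monoidal closedness of the Day tensor, fed into Porst's result that every locally presentable symmetric monoidal closed category has cofree cocommutative comonoids, with the standard Lafont observation that this yields a linear exponential comonad. The caveat you flag about non-constructiveness is precisely the remark the paper itself makes before the theorem statement.
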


 \section{Matrix Representation and Classical Model}\label{sec:matrix}
This section introduces a matrix calculus in order to ease the computation in \( \pshCQ \).
We introduce a notion of the basis of a \( \CQ \)-module and prove that a morphism between \( \CQ \)-modules with bases can naturally be expressed as a matrix.
Unlike vector spaces, and like modules over rings, a \( \CQ \)-module does not necessarily have a basis.
However many \( \CQ \)-modules of interest actually have bases.

We then define a model \( \classicalCQ \) of classical linear logic, which shall be used to interpret Quantum FPC.
The classical model \( \classicalCQ \) is a full subcategory of \( \pshCQ \) consisting of \( \CQ \)-modules \( M \) with bases such that the canonical morphisms \( M \longrightarrow ((M \multimap I) \multimap I) \) is an isomorphism.
Even though Quantum FPC does not have the duality of classical linear logic nor control operators,
the law \( M \cong \neg\neg M \) of classical linear logic is useful since negated \( \CQ \)-modules \( \neg N \)
have nice properties (\cf~\cref{sec:omega-cpo-enrichment}).

\subsection{Basis, Vector and Matrix}
Recall that our model \( \pshCQ \) is an analogy for the category of (right-)modules over a ring.
Unlike vector spaces, there are several candidates for the notion of a basis
of modules over a ring, possibly in a wider sense than that of free modules.
\citet{Tsukada2022} found that free \(\SMon\)-modules do not have closure properties for obtaining a model of linear logic, and also that the notion of a \emph{dual basis}, which is used to characterise \emph{finitely generated projective modules}, is useful to explain classical models of linear logic.
We follow this approach.

Let us first explain the notion of a dual basis in the standard module theory.
Let \( R \) be a ring.
A \emph{dual basis} of a right \( R \)-module \( M \) is a finite list \( (e_i, f_i)_{i = 1}^k \) of pairs \( e_i \in M \) and \( f_i \colon M \longrightarrow R \) such that
\begin{equation*}
    x
    \quad=\quad
    e_1 \cdot f_1(x) + \dots + e_k \cdot f_k(x)
\end{equation*}
for every \( x \in M \).
Given that a ring can be regarded as a single-object abelian-enriched category, what we need is a multiple-object version of this notion.
\begin{definition}[Representable Basis]\label{def:yoneda-basis}
  Let \( M \) be a \( \CQ \)-module.
  A \emph{countable dual representable basis} (or simply \emph{representable basis}) is a countable family of tuples \( (n_i, e_i, f_i)_{i \in I} \) where \( n_i \) is an object of \( \CQ \), \( e_i \in M_{n_i} \) is an element of \( M \) at \( n_i \), and \( f_i \colon M \longrightarrow \yoneda(n_i) \) is a morphism in \( \pshCQ \) such that
  \begin{equation*}
    \textstyle
    x
    \quad=\quad
    \sum_{i \in I} e_i \cdot (f_{i})_{k}(x)
  \end{equation*}
  for every \( k \in \Nat \) and \( x \in M_k \).
Note that \( (f_{i})_{k}(x) \in \CQ(k,n_i) \) and thus \( e_i \cdot (f_{i})_{k}(x) \in M_k \).
We write \( \basedCQ \) for the full subcategory of \( \pshCQ \) consisting of \( \CQ \)-modules with representable bases.
\eqed
\end{definition}

In the above definition, a coefficient \( (f_{i})_{k}(x) \) is an element \( (f_{i})_{k}(x) \in \CQ(k, n_i) = \yoneda(n_i)_k \) of the representable \( \CQ \)-module \( \yoneda(n_i) \).
It is theoretically possible to consider the case of having
coefficients in general \( \CQ \)-modules,
and it is actually useful to have a slightly wider class of coefficients available than the representable ones, as we will see later. The key to obtaining a general definition is that the element 
\( e_i \in M_{n_i} \)
 can be identified with the morphism \( 
\hat{e_i} \colon \yoneda(n_i)
 \longrightarrow M \).
\begin{definition}[General Basis]\label{def:general-basis}
    Let \( M \) be a \( \CQ \)-module.
    A \emph{countable dual basis} (or simply \emph{basis})  is a countable family of tuples \( (\BObj{b}, \ket{b}, \bra{b})_{b \in B} \) where \( \BObj{b} \) is a \( \CQ \)-module, \( \ket{b} \colon \BObj{b} \longrightarrow M \) and \( \bra{b} \colon M \longrightarrow \BObj{b} \) such that \( x = \sum_{b \in B} (\ket{b} \circ \bra{b})_k(x) \) for every \( k \in \Nat \) and \( x \in M_k \), or equivalently,
    \begin{equation*}
        \textstyle
        \ident_M \quad=\quad \sum_{b \in B} \ket{b} \bra{b}.
    \end{equation*}
    For \( \mathcal{O} \subseteq \mathrm{Obj}(\pshCQ) \), a basis \( (\BObj{b}, \ket{b}, \bra{b})_{b \in B} \) is an \emph{\( \mathcal{O} \)-basis} if \( \BObj{b} \in \mathcal{O} \) for every \( b \in B \).
\eqed
\end{definition}

Let \( M \), \( N \) and \( L \) be \( \CQ \)-modules and \( (\BObj{a}, \ket{a}, \bra{a})_{a \in \Base{M}} \), \( (\BObj{b}, \ket{b}, \bra{b})_{b \in \Base{N}} \) and \( (\BObj{c}, \ket{c}, \bra{c})_{c \in \Base{L}} \) be bases of \( M \), \( N \) and \( L \), respectively.
Then a morphism \( f \colon M \longrightarrow N \) in \( \pshCQ \) 
gives rise to
a matrix \( f = (f_{a,b})_{a \in \Base{M}, b \in \Base{N}} \) given by
\begin{equation*}
    f_{a,b}
    \quad\defe\quad
    \bra{b}f\ket{a}
    \qquad\in\quad \pshCQ(\BObj{a}, \BObj{b}).
\end{equation*}
The composition \( g \circ f \) of morphisms \( f \colon M \longrightarrow N \) and \( g \colon N \longrightarrow L \) 
then gives
the matrix composition \( (g \circ f)_{a,c} = \sum_{b \in \Base{N}} g_{b,c} f_{a,b} \) since
\(
    \bra{c} g f \ket{a}
    =
    \bra{c} g \big(\sum_{b \in \Base{N}} \ket{b}\bra{b}\big) f \ket{a}
    \Kle
    \sum_{b \in \Base{N}} \bra{c} g \ket{b} \bra{b} f \ket{a}
\).
As we shall see later, the actions of functors \( \otimes \), \( \multimap \) and \( ! \) can also be represented as matrix manipulations if one appropriately chooses the bases for \( M \otimes N \), \( M \multimap N \) and \( {!}M \).

Conversely, a matrix \( (f_{a,b} \in \pshCQ(\BObj{a}, \BObj{b}))_{a \in \Base{M}, b \in \Base{N}} \) defines a \( \CQ \)-module homomorphism \( \sum_{a \in \Base{M}, b \in \Base{N}} \ket{b} f_{a,b} \bra{a} \colon M \longrightarrow N \) provided that \( \sum_{a \in \Base{M}, b \in \Base{N}} \ket{b} f_{a,b} \bra{a} \) is defined.
Obviously the matrix representation \( (\bra{b}f\ket{a})_{a,b} \) of \( f \colon M \longrightarrow N \) defines \( f \) as expected.

It is important to note that the matrix representation of a \( \CQ \)-module homomorphism differs in some respects from the matrix representation of a linear map between vector spaces.
\begin{itemize}
    \item A matrix \( (f_{a,b} \in \pshCQ(\BObj{a}, \BObj{b}))_{a \in \Base{M}, b \in \Base{N}} \) does not necessarily define a morphism.
    \item Different matrices \( (f_{a,b} \in \pshCQ(\BObj{a}, \BObj{b}))_{a \in \Base{M}, b \in \Base{N}} \) and \( (g_{a,b} \in \pshCQ(\BObj{a}, \BObj{b}))_{a \in \Base{M}, b \in \Base{N}} \) may define the same morphism.
\end{itemize}
Because of the former, we must always be careful about the convergence of \( \sum_{a \in \Base{M}, b \in \Base{N}} \ket{b} f_{a,b} \bra{a} \).

\begin{remark}
    It is also possible to make a one-to-one correspondence between matrices and morphisms by taking the \emph{canonical matrix representation}: a matrix \( (f_{a,b})_{a,b} \) is \emph{canonical} if \( f_{a,b} = \sum_{a',b'}\braket{b}{b'}f_{a',b'}\braket{a'}{a} \) for every \( a \) and \( b \).
    The matrix \( (\bra{b}f\ket{a})_{a,b} \) induced from a morphism \( f \colon M \longrightarrow N \) is always canonical.
    For every matrix \( (f_{b,a} \in \pshCQ(\BObj{a},\BObj{b}))_{a,b} \), the matrix \( (\sum_{a',b'}\braket{b}{b'}f_{a',b'}\braket{a'}{a})_{a,b} \) is canonical, provided that it is defined.
\eqed
\end{remark}

\subsection{Pseudo-Representable Modules}
\label{sec:pseudo-representable}
We have seen that a morphism \( f \colon M \longrightarrow N \) between \( \CQ \)-modules \(M\) and \( N \) with \( \mathcal{O} \)-bases can be represented as a matrix whose elements are morphisms between \( \CQ \)-modules in \( \mathcal{O} \).
For this matrix representation to be useful, morphisms between \( \CQ \)-modules in \( \mathcal{O} \) must be tractable.

An obvious candidate for \( \mathcal{O} \) is the set of representable \( \CQ \)-modules \( \{\, \yoneda(n) \mid n \in \CQ \,\} \).
By the Yoneda Lemma, \( \pshCQ(\yoneda(n), \yoneda(m)) \cong \CQ(n,m) \) so the morphisms between representable \( \CQ \)-modules are just superoperators, which are easy to reason about.
However the class of representable \( \CQ \)-modules has an inconvenience: it is not closed under the linear function space.

This section introduces a more general class of \emph{pseudo-representable \( \CQ \)-module} and studies its properties including the closure properties under \( \otimes \) and \( \multimap \).

\begin{definition}\label{def:pseudo-representable}
    A \( \CQ \)-module \( \LQT \) is \emph{pseudo-representable} if it is a hereditary submodule \( \LQT \hookrightarrow \CPM({-}, \ell) \) that
satisfies the following conditions:
    \begin{itemize}
\item \textbf{Bounded}: There exists \( B > 0 \) such that \( \opnorm{x} \le B \) for every \( n \) and \( x \in \LQT_n \subseteq \CPM(n,\ell) \).
\item \textbf{Pseudo-universal element}: \( r\,\ident_\ell \in \LQT_\ell \) for some \( r > 0 \).
\end{itemize}
The object \( \ell \) is called the \emph{underlying object} of \( \LQT \) and written as \( \#\LQT \).
\qed
\end{definition}

Every morphism \( f \colon \LQT \longrightarrow \LQT' \) in \( \widehat{\CQ} \) between pseudo-representable \( \CQ \)-modules can be represented by a completely positive map \( \varphi \in \CPM(\#\LQT, \#\LQT') \), although not all completely positive maps represent \( \CQ \)-module morphisms.
\begin{theorem}\label{thm:cpm-representation}
    Let \( \LQT \) and \( \LQT' \) be pseudo-representable \( \CQ \)-modules.
    Then
\begin{equation*}
\widehat{\CQ}(\LQT, \LQT')
        \quad\cong\quad
        \{ \varphi \in \CPM(\#\LQT,\#\LQT') \mid \forall n. \forall x \in \mathcal{L}_n. \varphi \circ x \in \mathcal{L}'_n \}
        \qquad\mbox{(as \( [0,1] \)-modules).}
    \end{equation*}
Here \( \varphi \in \CPM(\#\LQT,\#\LQT') \) satisfying the above condition corresponds to a morphism \( f_\varphi = (f_{\varphi,n})_n \colon \LQT \longrightarrow \LQT' \) given by \( f_{\varphi,n}(x) \defe \varphi \circ x \) for every \( n \) and \( x \in \LQT_n \).
\end{theorem}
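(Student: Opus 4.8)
The plan is to establish a chain of natural isomorphisms of $[0,1]$-modules
\[
\widehat{\CQ}(\LQT,\LQT')
\;\cong\;
\{\varphi \in \CPM(\#\LQT,\#\LQT') \mid \forall n.\,\forall x\in\LQT_n.\ \varphi\circ x \in \LQT'_n\},
\]
sending a morphism $f$ to an element of $\CPM(\#\LQT,\#\LQT')$ that acts by post-composition, and conversely sending such a $\varphi$ to the family $f_\varphi = (x\mapsto \varphi\circ x)_n$. First I would check that the map $\varphi \mapsto f_\varphi$ is well defined: since $\LQT,\LQT' \hookrightarrow \CPM({-},\cdot)$ and composition in $\CPM$ is $\Sigma$-monoid bilinear and preserves $\CQ$-actions and the $[0,1]$-action, each $f_{\varphi,n}$ is a $\Sigma$-monoid homomorphism commuting with the $\CQ$-action, so $f_\varphi$ is a genuine morphism in $\widehat{\CQ}$ precisely when $\varphi\circ x \in \LQT'_n$ for all $n$ and all $x \in \LQT_n$ — which is exactly the side condition. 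This direction is routine, as is checking that the correspondence is $[0,1]$-linear (post-composition by $r\varphi$ gives $r f_\varphi$).

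The substance is the other direction: given $f\colon \LQT\to\LQT'$, produce $\varphi\in\CPM(\#\LQT,\#\LQT')$ with $f_n(x)=\varphi\circ x$ for all $n,x$. The idea is to use the pseudo-universal element: there is $r>0$ with $r\,\ident_\ell \in \LQT_\ell$, where $\ell = \#\LQT$. Apply $f_\ell$ to it, obtaining $f_\ell(r\,\ident_\ell) \in \LQT'_\ell \subseteq \CPM(\ell,\#\LQT')$; then set $\varphi \defe \tfrac1r\, f_\ell(r\,\ident_\ell) \in \CPM(\ell,\#\LQT')$, using that $\CPM(\ell,\#\LQT')$ is closed under scalar multiplication by positive reals and that the $[0,1]$-action of $r$ is cancellable (so $\varphi$ is the unique $[0,1]$-preimage). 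It remains to verify $f_n(x)=\varphi\circ x$ for every $n$ and $x\in\LQT_n\subseteq\CPM(n,\ell)$. The key observation is that $x$, viewed as a morphism in $\CPM$, factors as $x = (\ident_\ell)\circ x$, and more usefully as $(r\,\ident_\ell)\cdot(\tfrac1r x)$ in the sense that $x\cdot \psi = x\circ\psi$ describes the $\CQ$-module structure of $\CPM({-},\ell)$ — but $\tfrac1r x$ is a scaling of an element of $\CPM(n,\ell)$, not a superoperator in general. Here the \textbf{Bounded} condition enters: since $\opnorm{x}\le B$, the map $\tfrac{1}{B}x$ (or $\tfrac{1}{Br}x$) is trace-non-increasing, hence a superoperator, i.e.\ a morphism $\tfrac{1}{Br}x \in \CQ(n,\ell)$; and the $\CQ$-action of $\LQT$ (restricted from $\CPM({-},\ell)$) gives $(r\,\ident_\ell)\cdot(\tfrac{1}{Br}x) = \tfrac1B x \in \LQT_n$ by \emph{downward-closedness} of the hereditary submodule. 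Now naturality of $f$ with respect to the $\CQ$-action yields $f_n(\tfrac1B x) = f_n\big((r\,\ident_\ell)\cdot \tfrac{1}{Br}x\big) = f_\ell(r\,\ident_\ell)\cdot \tfrac{1}{Br}x = (r\varphi)\circ(\tfrac{1}{Br}x) = \tfrac1B(\varphi\circ x)$, and cancelling the positive scalar $\tfrac1B$ (the $[0,1]$-action by a nonzero scalar is cancellable) gives $f_n(x)=\varphi\circ x$. Finally uniqueness of $\varphi$ and injectivity of the correspondence follow because $f$ determines $\varphi$ via its value on $r\,\ident_\ell$.

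I expect the main obstacle to be the bookkeeping in the previous paragraph: an arbitrary $x\in\LQT_n$ is only a bounded completely positive map, not a superoperator, so one cannot directly feed it to the $\CQ$-action; the \textbf{Bounded} and hereditary (downward-closed, sum-reflecting) hypotheses are precisely what let us rescale $x$ into the range of the $\CQ$-action, carry out the naturality argument there, and rescale back. A secondary technical point to get right is that $\LQT$ being a \emph{sum-reflecting} submodule of $\CPM({-},\ell)$ (together with boundedness) is what guarantees $\varphi$ is completely positive — equivalently, that the $\Sigma$-monoid and $[0,1]$-module structure on $\LQT_\ell$ is compatible with that on $\CPM(\ell,\#\LQT')$ — so that the formula $\varphi = \tfrac1r f_\ell(r\,\ident_\ell)$ genuinely lands in $\CPM(\ell,\#\LQT')$ and the final identification is as $[0,1]$-modules, not merely as sets.
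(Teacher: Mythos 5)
Your proposal is correct and follows essentially the same route as the paper: define \( \varphi = \tfrac1r f_\ell(r\,\ident_\ell) \) from the pseudo-universal element, rescale an arbitrary \( x \in \LQT_n \) into \( \CQ(n,\ell) \) using boundedness, apply naturality of \( f \) with respect to the \( \CQ \)-action, and cancel the nonzero real scalar. The only slip is the scaling constant: \( \tfrac{1}{Br}x \) need not lie in \( \CQ(n,\ell) \) when \( r<1 \) (its operator norm is only bounded by \( 1/r \)); use \( \tfrac1B x \), or any \( r' \) with \( r'\opnorm{x}\le 1 \) as the paper does, and the computation goes through unchanged.
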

\begin{proof}[Proof sketch]
    Clearly \( f_{\varphi} \) is a morphism from \( \LQT \) to \( \LQT' \).
    Let \( g \in \pshCQ(\LQT,\LQT') \) and assume \( r\,\ident_\ell \in \LQT_\ell \) for some
\( r > 0 \), where \( \ell = \#\LQT \).
We can assume \(r \le 1\), for if \(r>1\), 
we have \(1\,\ident_\ell = (1/r)(r\,\ident_\ell) \in \LQT_\ell\).
    Let \( \psi_0 \defe g_\ell(r\,\ident_\ell) \in \LQT'_\ell \subseteq \CPM(\ell,\#\LQT') \) and \( \psi \defe (1/r) \psi_0 \in \CPM(\ell,\#\LQT') \).
    Then \( g = f_{\psi} \).
\end{proof}
In the sequel, we shall identify a \( \CQ \)-module morphism \( f \colon \LQT \longrightarrow \LQT' \) between pseudo-representable modules with a completely positive map \( \varphi \in \CPM(\#\LQT, \#\LQT') \) representing \( f \).

The tensor product and linear function space of pseudo-representable \( \CQ \)-modules are again pseudo-representable.
The following lemma gives a concrete description. 
\begin{lemma}\label{lem:linear:lqt-function-and-tensor}
    Let \( \LQT \) and \( \LQT' \) be pseudo-representable \( \CQ \)-modules.
Let \( (\LQT \rightarrowtriangle \LQT') \hookrightarrow \CPM({-}, \#\LQT \multimap \#\LQT') \) and \( (\LQT \boxtimes \LQT') \hookrightarrow \CPM({-}, \#\LQT \otimes \#\LQT') \) be the hereditary \(\CQ\)-submodules given by
    \begin{align*}
        (\LQT \rightarrowtriangle \LQT')_n &:= \{\, \varphi \in \CPM(n, \#\LQT \multimap \#\LQT') \mid
        \forall m. \forall x \in \mathcal{L}_m. \mathbf{ev} \circ (\varphi \otimes x) \in \mathcal{L}'_{n \otimes m} \,\} \\
        (\LQT \otensor \LQT')_n &:= \{\, (x \otimes x') \circ \varphi \in \CPM(n, \#\LQT \otimes \#\LQT') \mid
        x \in \mathcal{L}_m, x' \in \mathcal{L}'_{m'}, \varphi \in \CQ(n, m \otimes m') \,\}.
    \end{align*}
    Then
\( \LQT \rightarrowtriangle \LQT' \) and \( \LQT \boxtimes \LQT' \) are pseudo-representable.
Furthermore
        \( (\LQT \rightarrowtriangle \LQT') \cong (\LQT \multimap \LQT') \) and \( (\LQT \otensor \LQT') \cong \LQT \ptensor \LQT' \).
\eqed
\end{lemma}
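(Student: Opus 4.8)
The plan is to prove the two assertions in turn: first that $\LQT \rightarrowtriangle \LQT'$ and $\LQT \boxtimes \LQT'$ are pseudo-representable in the sense of \cref{def:pseudo-representable}, and then that they realise the Day tensor and its right adjoint. Throughout, the workhorse is \cref{thm:cpm-representation}: a morphism between pseudo-representable modules is the same thing as a completely positive map between their underlying objects satisfying a pointwise side condition, so once pseudo-representability is established, every structural map can be exhibited as a concrete CP map and every equation between such maps can be checked in $\CPM$, which is moreover compact closed with strict duals (so $\#\LQT \multimap \#\LQT' = \#\LQT \otimes \#\LQT'$ as objects).

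For pseudo-representability I would check the three clauses of \cref{def:pseudo-representable}. Hereditariness: for $\LQT \rightarrowtriangle \LQT'$ the defining predicate ``$\mathbf{ev} \circ (\varphi \otimes x) \in \LQT'_{n \otimes m}$ for all $m$ and $x \in \LQT_m$'' is stable under passing to smaller $\varphi$ and under reflecting sums, because $\varphi \mapsto \mathbf{ev} \circ (\varphi \otimes x)$ is a monotone $\Sigma$-monoid homomorphism and $\LQT'$ is itself hereditary; for $\LQT \boxtimes \LQT'$ one checks that an element dominated by, or a convergent sum of, elements of the form $(x \otimes x') \circ \varphi$ can again be put in that form, using that $\LQT$ and $\LQT'$ are hereditary and that composition and tensoring of CP maps are $\Sigma$-monoid homomorphisms. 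Boundedness: the bound for $\LQT \boxtimes \LQT'$ is essentially the product of the bounds of $\LQT$ and $\LQT'$ (since $\opnorm{x \otimes x'} = \opnorm{x}\,\opnorm{x'}$ and superoperators have $\opnorm{\cdot} \le 1$), while the bound for $\LQT \rightarrowtriangle \LQT'$ is recovered from the bound of $\LQT'$ by testing $\varphi$ against the pseudo-universal element $r\,\ident_{\#\LQT} \in \LQT_{\#\LQT}$ and then uncurrying. Pseudo-universal element: for $\LQT \boxtimes \LQT'$ take a suitable scalar multiple of $u \otimes u'$ transported along the structural iso, where $u,u'$ are the pseudo-universal elements of $\LQT,\LQT'$; for $\LQT \rightarrowtriangle \LQT'$ take a small scalar multiple of $\ident_{\#\LQT \multimap \#\LQT'}$, whose membership reduces, via compact-closedness of $\CPM$ and downward-closure of $\LQT'$, to estimates already made for $\LQT$ and $\LQT'$.

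For the isomorphisms I would first treat $\LQT \boxtimes \LQT' \cong \LQT \ptensor \LQT'$ by showing that $\LQT \boxtimes \LQT'$ represents bilinear maps. There is an evident bilinear map $\LQT \times \LQT' \to \LQT \boxtimes \LQT'$ sending $(x,x') \in \LQT_m \times \LQT'_{m'}$ to $x \otimes x' \in (\LQT \boxtimes \LQT')_{m \otimes m'}$ (take $\varphi = \ident$), and one verifies bilinearity and $\CQ$-equivariance directly. For universality, given $g \in \Bilin(\LQT, \LQT'; M)$ the formula $\bar g_n\big((x \otimes x') \circ \varphi\big) \defe g_{m,m'}(x,x') \cdot \varphi$ is forced by $\CQ$-equivariance, and one checks it is independent of the chosen decomposition, extends coherently to a $\Sigma$-monoid homomorphism on each (hereditarily closed) component, preserves the $\CQ$-action, and is the unique such morphism; naturality in $M$ is then immediate, so $\LQT \boxtimes \LQT' \cong \LQT \ptensor \LQT'$ by the defining property of the Day tensor. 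For $\LQT \rightarrowtriangle \LQT' \cong \LQT \multimap \LQT'$, I would then set up the adjunction $({-}) \ptensor \LQT \dashv \LQT \rightarrowtriangle ({-})$ directly: the counit $(\LQT \rightarrowtriangle \LQT') \boxtimes \LQT \longrightarrow \LQT'$ is the morphism represented by the CP map $\mathbf{ev}$, which is a well-defined morphism by \cref{thm:cpm-representation} precisely because elements of $((\LQT \rightarrowtriangle \LQT') \boxtimes \LQT)_n$ are (hereditary combinations of) maps $(\varphi \otimes x) \circ \theta$ with $\mathbf{ev} \circ (\varphi \otimes x) \in \LQT'$ by the very definition of $\rightarrowtriangle$; the unit $M \longrightarrow \LQT \rightarrowtriangle (M \boxtimes \LQT)$ is represented by the compact-closed ``name''/coevaluation map of $\CPM$. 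The triangle identities then hold automatically, since both sides are morphisms between pseudo-representable modules, hence determined by their underlying CP maps, for which the identities are exactly the snake equations of $\CPM$; combined with $\boxtimes \cong \ptensor$ this yields $\LQT \rightarrowtriangle \LQT' \cong \LQT \multimap \LQT'$ by Yoneda.

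The main obstacle I anticipate is the well-definedness and $\Sigma$-monoid-homomorphism property of $\bar g$ once extended to the full hereditary closure of the generating set $\{(x \otimes x') \circ \varphi\}$: two decompositions of the same CP map must give the same value, sums and downward-dominated elements must be handled coherently, and one must check that the relevant sums actually converge in $M$, with no $\infty$ available as a fallback. Closely related is the bookkeeping needed to confirm that the sets written in the lemma are genuinely hereditary $\CQ$-submodules rather than merely generating families — in particular that $\{(x \otimes x') \circ \varphi\}$ is already downward-closed, which requires a decomposition argument for CP maps. By contrast, the boundedness constants and the verification that $f_\varphi$ is always a morphism are routine once the pseudo-universal elements are used as test vectors.
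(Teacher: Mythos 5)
Your architecture is essentially the paper's: pseudo-representability is checked clause by clause (with the same constants for \( \otensor \)), \( \LQT \otensor \LQT' \) is shown to represent \( \Bilin(\LQT,\LQT';{-}) \) via the forced formula \( \bar g\big((x \otimes x') \circ \varphi\big) = g(x,x') \cdot \varphi \), and the function space is identified by currying against representables. Your unit/counit route for \( \rightarrowtriangle \cong \multimap \) differs only cosmetically from the paper's, which computes \( (\LQT \multimap \LQT')_n \cong \Bilin(\yoneda(n),\LQT;\LQT') \) and exhibits explicit mutually inverse maps \( \alpha \mapsto (1/r)\Lambda(\alpha(\ident_n, r\,\ident_{\#\LQT})) \) and \( g \mapsto \big((h,x) \mapsto \mathbf{ev} \circ ((g \circ h) \otimes x)\big) \); the snake-equation check you invoke is the same computation in disguise, and restricting to \( M = \yoneda(n) \) plus Yoneda is indeed enough.

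The genuine gap is the one you flag and then leave unresolved: independence of \( \bar g_n\big((x \otimes x') \circ \varphi\big) \defe g_{m,m'}(x,x') \cdot \varphi \) from the chosen decomposition, together with preservation of sums. ``One checks'' will not do, because this is precisely where both the boundedness and the pseudo-universal-element clauses of \cref{def:pseudo-representable} are consumed, and a naive attempt fails since the relation \( (x \otimes x') \circ h = \sum_i (x_i \otimes x'_i) \circ h_i \) lives in \( \CPM \), not \( \CQ \), so it cannot be fed directly to the \( \CQ \)-equivariance of \( g \). The paper's resolution: divide by the bounds \( B, B' \) so that the same identity holds between genuine superoperators in \( \CQ(m, \#\LQT \otimes \#\LQT') \); apply the \( \Sigma \)-monoid homomorphism \( g_{\ell,\ell'}(r\,\ident_{\#\LQT}, r'\,\ident_{\#\LQT'}) \cdot ({-}) \) to both sides; use bilinearity and equivariance to rewrite each term as \( \tfrac{rr'}{BB'}\, g(x_i,x'_i) \cdot h_i \); and finally cancel the nonzero scalar \( rr'/BB' \) using that the action of \( r \in (0,1] \) is cancellable. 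Without this scaling-and-cancellation trick the well-definedness does not go through. A second, smaller omission: for the bound on \( \LQT \rightarrowtriangle \LQT' \), ``uncurrying'' hides the estimate \( \opnorm{\mathbf{ev} \circ (\varphi \otimes \ident)} \ge c\, \opnorm{\varphi} \) for some \( c > 0 \), which the paper obtains from compactness of the unit sphere of \( \CPM(1, \#\LQT \multimap \#\LQT') \) and injectivity of \( \Lambda^{-1} \) (equivalence of norms in finite dimension); this is where finite-dimensionality genuinely enters and should be stated.
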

\begin{remark}
    A pseudo-representable \( \CQ \)-module as a subobject \( \LQT \subseteq \CPM({-}, \ell) \) can be seen as a \emph{parametrised predicate} in the sense of Hasegawa~\cite{Hasegawa1999a}.
    The above definitions of \( \LQT \rightarrowtriangle \LQT' \) and \( \LQT \otensor \LQT' \) coincide with the definition of \emph{linear logical predicate}.
\eqed
\end{remark}

We shall hereafter assume that \( (\LQT \multimap \LQT') = (\LQT \rightarrowtriangle \LQT') \) and \( (\LQT \ptensor \LQT') = (\LQT \otensor \LQT') \).

\label{sec:orthogonal-basis}
\begin{definition}[Orthogonal Pseudo-Representable basis]
    A \emph{(countable) pseudo-representable basis} of a \( \CQ \)-module \( M \) is a basis with coefficients taken from pseudo-representable \( \CQ \)-modules.
It is \emph{orthogonal} if \( \braket{a}{a'} = 0 \) for every \( a \neq a' \).
\eqed
\end{definition}

\begin{proposition}\label{prop:pseudo-basis-implies-basis}
    A \(\CQ\)-module \( M \) has a pseudo-representable basis if and only if \( M \in \basedCQ \).
\eqed
\end{proposition}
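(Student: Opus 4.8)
The plan is to prove the two implications separately; essentially all of the content is concentrated in one auxiliary claim. For the direction ``$M\in\basedCQ$ implies $M$ has a pseudo-representable basis'', I would first check that every representable module $\yoneda(n)$ is itself pseudo-representable, viewed as the evident hereditary submodule $\yoneda(n)\hookrightarrow\CPM({-},n)$: it is downward-closed because a completely positive map below a superoperator is again trace-non-increasing, it is sum-reflecting because the $\Sigma$-monoid structure of each $\CQ(m,n)$ is by definition inherited from $\CPM(m,n)$, it is bounded by $1$, and $\ident_n$ is a pseudo-universal element. Then, given a representable basis $(n_i,e_i,f_i)_{i\in I}$ of $M$ as in \cref{def:yoneda-basis}, I would transpose each $e_i\in M_{n_i}$ along the Yoneda Lemma to a morphism $\ket{i}\colon\yoneda(n_i)\to M$ (so that $\ket{i}_k(\varphi)=e_i\cdot\varphi$) and put $\bra{i}\defe f_i$; the defining identity $x=\sum_i e_i\cdot(f_i)_k(x)$ then reads $\ident_M=\sum_i\ket{i}\bra{i}$, exhibiting a basis in the sense of \cref{def:general-basis} whose coefficient modules $\yoneda(n_i)$ are all pseudo-representable, i.e.\ a pseudo-representable basis.

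The converse rests on the auxiliary claim: \emph{every pseudo-representable $\CQ$-module $\LQT$ has a finite representable basis.} To prove it, put $\ell\defe\#\LQT$; as in the proof of \cref{thm:cpm-representation} we may assume the pseudo-universal coefficient satisfies $0<r\le1$ (using downward-closedness), and we may assume the bound satisfies $B\ge1$. Let $e\colon\yoneda(\ell)\to\LQT$ be the Yoneda transpose of $r\,\ident_\ell\in\LQT_\ell$, so $e_k(\varphi)=r\varphi$, and let $f\colon\LQT\to\yoneda(\ell)$ be the morphism corresponding under \cref{thm:cpm-representation} to $(1/B)\,\ident_\ell$; this is legitimate because $\opnorm{(1/B)x}\le1$ whenever $\opnorm{x}\le B$, so $(1/B)x$ is a superoperator for every $x\in\LQT_n$. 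A short computation then gives $e\circ f=p\,\ident_\LQT$ with $p\defe r/B\in(0,1]$. Setting $m\defe\lceil1/p\rceil$ and $\lambda\defe1/(pm)\in(0,1]$, I claim that the family consisting of $m$ copies of the triple $(\yoneda(\ell),\lambda\,e,f)$ is a representable basis of $\LQT$: each copy contributes $(\lambda e)\circ f=\lambda p\,\ident_\LQT$ and $m\lambda p=1$, so the finite sum equals $\ident_\LQT$, with the finiteness of the sum together with sum-reflection of $\LQT$ taking care of the convergence and of the identity holding inside $\LQT$.

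With the claim available, I would complete the converse by composing bases. Given a pseudo-representable basis $(\BObj{b},\ket{b},\bra{b})_{b\in B}$ of $M$, fix for each $b$ a representable basis $(\yoneda(n_{b,j}),\ket{j}_b,\bra{j}_b)_{j\in J_b}$ of $\BObj{b}$ as supplied by the claim; then $(\yoneda(n_{b,j}),\,\ket{b}\circ\ket{j}_b,\,\bra{j}_b\circ\bra{b})_{b\in B,\,j\in J_b}$ should be a representable basis of $M$, and its index set is a countable union of finite sets, hence countable. To verify the defining identity I would insert $\ident_{\BObj{b}}=\sum_j\ket{j}_b\bra{j}_b$ into each summand of $\ident_M=\sum_b\ket{b}\bra{b}$ and push the inner sum out through post-composition with $\ket{b}$ and pre-composition with $\bra{b}$; since composition in $\pshCQ$ is bilinear, $(\sum_ig_i)\circ(\sum_jf_j)\Kle\sum_{i,j}g_i\circ f_j$, and the left-hand sides here are defined, this gives $\ket{b}\bra{b}=\sum_j\ket{b}\ket{j}_b\bra{j}_b\bra{b}$ for each $b$, after which the associativity axiom for $\Sigma$-monoids yields $\ident_M=\sum_{b,j}(\ket{b}\circ\ket{j}_b)\circ(\bra{j}_b\circ\bra{b})$, so $M\in\basedCQ$.

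The main obstacle is the auxiliary claim: one has to see precisely how boundedness together with the pseudo-universal element produces the ``resolution of the identity'' $e\circ f=p\,\ident_\LQT$ through a single representable, and then upgrade $p\,\ident_\LQT$ to $\ident_\LQT$ by rescaling. The remaining arguments are careful but routine manipulations of partially-defined $\Sigma$-monoid sums, the one thing to watch being that bilinearity of composition is available only as a one-sided Kleene inequality, so every such rearrangement must be carried out starting from an expression whose total is already known to be defined.
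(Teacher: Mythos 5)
Your proof is correct and follows essentially the same route as the paper: the representables $\yoneda(n)$ are pseudo-representable so one direction is immediate, the key auxiliary claim that a pseudo-representable module admits a finite representable basis is established by the same ``$N$ copies of $(r\,\ident_\ell,\,(1/B)({-}))$'' resolution of the identity, and the converse is finished by composing bases exactly as in the paper's argument. The only (immaterial) difference is that you handle a non-integral $B/r$ by taking a ceiling and rescaling the kets, whereas the paper enlarges $B$ so that $B/r$ becomes a natural number.
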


\subsection{Analysis of Linear Logic Connectives}

We give a concrete description of the linear logic connectives on \( \basedCQ \) using bases.
\( \basedCQ \) is a symmetric monoidal closed category with (co)products, of which the structures are inherited from \( \pshCQ \).
\begin{lemma}\label{lem:based-additive-multiplicative}
  Let \( M,N \in \basedCQ \) and \( (\BObj{a}, \ket{a}, \bra{a})_{a \in \Base{M}} \) and \( (\BObj{b}, \ket{b}, \bra{b})_{b \in \Base{N}} \) be pseudo-representable bases.
  Then \( M \times N \), \( M \amalg N \), \( M \ptensor N \) and \( M \multimap N \) have the following pseudo-representable bases.
    \begin{gather*}
        \Base{M \times N} \defe \{\, (a \times \bullet) \mid a \in \Base{M} \,\} \cup \{\, (\bullet \times b) \mid b \in \Base{N} \,\}
        \\
        \Base{M \amalg N} \defe \{\, \BInl(a) \mid a \in \Base{M} \,\} \cup \{\, \BInr(b) \mid b \in \Base{N} \,\}
        \\
        \Base{M \ptensor N} \defe \{\, a \ptensor b \mid a \in \Base{M}, b \in \Base{N} \,\}
        \\
        \Base{M \multimap N} \defe \{\, a \multimap b \mid a \in \Base{M}, b \in \Base{N} \,\}
        \\
        \BObj{(a \times \bullet)} \defe \BObj{a}
        \qquad
        \ket{a \times \bullet} \defe \Inj_1 \circ \ket{a}
        \qquad
        \bra{a \times \bullet} \defe \bra{a} \circ \Proj_1
        \\
        \BObj{(\bullet \times b)} \defe \BObj{b}
        \qquad
        \ket{\bullet \times b} \defe \Inj_2 \circ \ket{b}
        \qquad
        \bra{\bullet \times b} \defe \bra{b} \circ \Proj_2
        \\
        \BObj{(\BInl(a))} \defe \BObj{a}
        \qquad
        \ket{\BInl(a)} \defe \Inj_1 \circ \ket{a}
        \qquad
        \bra{\BInl(a)} \defe \bra{a} \circ \Proj_1
        \\
        \BObj{(\BInr(b))} \defe \BObj{b}
        \qquad
        \ket{\BInr(b)} \defe \Inj_2 \circ \ket{b}
        \qquad
        \bra{\BInr(b)} \defe \bra{b} \circ \Proj_2
        \\
        \BObj{(a \ptensor b)} \defe (\BObj{a} \ptensor \BObj{b})
        \qquad
        \ket{a \ptensor b} \defe (\ket{a} \ptensor \ket{b})
        \qquad
        \bra{a \ptensor b} \defe (\bra{a} \ptensor \bra{b})
        \\
        \BObj{(a \multimap b)} \defe (\BObj{a} \multimap \BObj{b})
        \qquad
        \ket{a \multimap b} \defe (\bra{a} \multimap \ket{b})
        \qquad
        \bra{a \multimap b} \defe (\ket{a} \multimap \bra{b}).
    \end{gather*}
    The above bases are orthogonal if so are \( \Base{M} \) and \( \Base{N} \).
\eqed
\end{lemma}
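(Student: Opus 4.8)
The plan is to handle all four connectives uniformly, verifying for each declared family the three conditions of a pseudo-representable basis --- countability of the index set, pseudo-representability of every declared object $\BObj{\cdot}$, and the resolution $\ident = \sum \ket{\cdot}\bra{\cdot}$ of the identity --- and then the orthogonality claim. Countability and the object condition are immediate: the new index set is either a disjoint union of two countable sets (for $M\times N$ and $M\amalg N$) or the product $\Base{M}\times\Base{N}$ (for $M\ptensor N$ and $M\multimap N$); and each declared object is either literally one of the $\BObj{a}$ or $\BObj{b}$ (for the (co)products, pseudo-representable by hypothesis) or of the form $\BObj{a}\ptensor\BObj{b}$ or $\BObj{a}\multimap\BObj{b}$, pseudo-representable by \cref{lem:linear:lqt-function-and-tensor}.

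The substantive point is the resolution of the identity, and in every case the same pattern applies: functoriality of the relevant (bi)functor collapses the composite $\ket{\cdot}\bra{\cdot}$ of a new basis element into a single application of that functor to the old $\ket{a}\bra{a}$'s and $\ket{b}\bra{b}$'s, after which $\SMon$-enrichment plus a Kleene inequality carry the (already established) definedness of an identity morphism over to the desired sum. For $M\ptensor N$, bifunctoriality gives $(\ket{a}\ptensor\ket{b})\circ(\bra{a}\ptensor\bra{b}) = (\ket{a}\bra{a})\ptensor(\ket{b}\bra{b})$, while $\SMon$-enrichment of $\ptensor$ and the basis identities give $\ident_{M\ptensor N} = \ident_M\ptensor\ident_N = (\sum_a\ket{a}\bra{a})\ptensor(\sum_b\ket{b}\bra{b}) \Kle \sum_{a,b}(\ket{a}\bra{a})\ptensor(\ket{b}\bra{b})$; as the left-hand side is defined, so is the right-hand side and they agree, which is precisely $\sum_{a,b}\ket{a\ptensor b}\bra{a\ptensor b}$. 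The case $M\multimap N$ runs identically with the mixed-variance bifunctor $\multimap$, using $(\bra{a}\multimap\ket{b})\circ(\ket{a}\multimap\bra{b}) = (\ket{a}\bra{a})\multimap(\ket{b}\bra{b})$ --- the order reversal in the contravariant slot being exactly why $\ket{a\multimap b}=\bra{a}\multimap\ket{b}$ and $\bra{a\multimap b}=\ket{a}\multimap\bra{b}$ --- and $\ident_{M\multimap N}=\ident_M\multimap\ident_N=(\sum_a\ket{a}\bra{a})\multimap(\sum_b\ket{b}\bra{b}) \Kle \sum_{a,b}(\ket{a}\bra{a})\multimap(\ket{b}\bra{b})$. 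For $M\times N$ and $M\amalg N$ the analogous role is played by $\ident_{M\oast N}=\sum_i\Inj_i\circ\Proj_i$, $\Proj_i\circ\Inj_i=\ident$ and $\Proj_i\circ\Inj_{3-i}=0$: one checks $\sum_a(\Inj_1\ket{a})(\bra{a}\Proj_1)=\Inj_1\circ\Proj_1$ (from $\SMon$-enrichment of composition and definedness of $\Inj_1\circ\ident_M\circ\Proj_1$), similarly with index $2$, and recovers $\ident_{M\oast N}$ by adding the two families and regrouping with associativity of the $\Sigma$-monoid sum.

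Orthogonality is once more functoriality plus the absorbing behaviour of $0$: $\bra{a\times\bullet}\circ\ket{a'\times\bullet}=\bra{a}\circ\Proj_1\circ\Inj_1\circ\ket{a'}=\braket{a}{a'}$, which is $0$ for $a\neq a'$, while cross terms vanish since $\Proj_1\circ\Inj_2=0$ (and symmetrically for the injections and for $\amalg$); $\braket{a\ptensor b}{a'\ptensor b'}=\braket{a}{a'}\ptensor\braket{b}{b'}$, and $\braket{a\multimap b}{a'\multimap b'}$ collapses, modulo a harmless index swap in the contravariant slot, to a $\multimap$ of two old inner products --- in each case at least one factor is $0$ when $(a,b)\neq(a',b')$, and $0\ptensor(-)=0=(-)\ptensor 0$ (resp.\ $0\multimap(-)=0=(-)\multimap 0$). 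I expect the only genuine work to be the bookkeeping about definedness of infinite sums; as indicated, this is always handled by deriving an equality of infinite sums from the definedness of an identity morphism via a Kleene inequality supplied by $\SMon$-enrichment, rather than by any direct convergence argument.
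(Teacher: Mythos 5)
Your proposal is correct and follows essentially the same route as the paper's proof: the paper writes out only the tensor case, deriving $\ident_{M\ptensor N}=\ident_M\ptensor\ident_N\Kle\sum_{a,b}(\ket{a}\bra{a})\ptensor(\ket{b}\bra{b})$ from $\SMon$-enrichment and concluding definedness of the right-hand sum from definedness of the identity, exactly as you do, with orthogonality handled by the same bifunctoriality computation. Your treatment of the remaining connectives (the $\Inj_i\circ\Proj_i$ decomposition for $\times$ and $\amalg$, the mixed-variance composition rule for $\multimap$ including the index swap in the contravariant slot, and the appeal to \cref{lem:linear:lqt-function-and-tensor} for pseudo-representability of the coefficient objects) fills in precisely the "direct calculation" the paper leaves implicit.
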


The matrix representation of the action of the functors can be obtained by the above lemma.
For example, the matrix representation of \( (f \ptensor f') \) for \( f \colon M \longrightarrow N \) and \( f' \colon M' \longrightarrow N' \) is given as follows: for \( a \in \Base{M} \), \( a' \in \Base{M'} \), \( b \in \Base{N} \) and \( b' \in \Base{N'} \),
\begin{equation*}
    (f \ptensor f')_{a \ptensor a', b \ptensor b'}
    \ =\ 
    \bra{b \ptensor b'} (f \ptensor f') \ket{a \ptensor a'}
    \ =\ 
    \bra{b}f\ket{a} \ptensor \bra{b'}f'\ket{a'}
    \ =\ 
    f_{a,b} \ptensor f'_{a',b'}.
\end{equation*}

The cofree linear exponential comonad \( \cofreeexp M \) is exceptionally hard to describe.
We have shown its existence by the theory of locally presentable category (\cref{thm:free-exponential}), but in fact we do not even know whether \( \cofreeexp M \) has a basis for \( M \in \basedCQ \).
The structure of \(\cofreeexp M\) will therefore be analysed under the assumption that \( M \cong \neg N \)\footnote{The really necessary assumption is that the \( 1 \)-component \( M_1 \) of \( M \) is convex in a weak sense.}
for some \( N \).

Let us recall the basic definitions.
A \emph{comonoid} in \( \pshCQ \) is a \( \CQ \)-module \( C \) together with \( \CQ \)-module morphisms \( u \colon C \longrightarrow \yoneda(1) \) and \( \delta \colon C \longrightarrow C \ptensor C \), called the \emph{counit} and \emph{comultiplication}, respectively, that satisfies
\(
    (u \ptensor \ident) \circ \delta = \ident = (\ident \ptensor u) \circ \delta
\)
and
\(
    (\delta \ptensor \ident) \circ \delta = (\ident \ptensor \delta) \circ \delta
\) (hereafter we often omit the associator).
A comonoid in \( \pshCQ \) is \emph{cocommutative} if \( \mathbf{symm} \circ \delta = \delta \), where \( \mathbf{symm} \colon C \ptensor C \longrightarrow C \ptensor C \) is the symmetry.
A \emph{comonoid morphism} between comonoids \( (C, u_C, \delta_C) \) and \( (D, u_D, \delta_D) \) is a \( \CQ \)-module morphism \( \alpha \colon C \longrightarrow D \) that preserves \( u \) and \( \delta \) in a certain sense.
We write \( \Comon(\pshCQ) \) for the category of cocommutative comonoids in \( \pshCQ \).
The \emph{cofree comonoid over \( M \)}, written \( \cofreeexp M \), is a comonoid \( \cofreeexp M \) with a morphism \( \Der \colon \cofreeexp M \longrightarrow M \) in \(\pshCQ\) (that is not necessarily a comonoid morphism) that satisfies the following universal property: For every comonoid \( C \) with a morphism \( 
h
 \colon C \longrightarrow M \) in \(\pshCQ\), there exists a unique comonoid morphism \(
h^{\cofreeexp}
 \colon C \longrightarrow \cofreeexp M \) such that \( 
h
 = \Der \circ 
h^{\cofreeexp}
 \).
\cref{thm:free-exponential} says that, for every \( \CQ \)-module \( M \), the cofree comonoid \( \cofreeexp M \) over \( M \) indeed exists.

We aim to realise \( \cofreeexp M \) as (a submodule of) the module of formal power series.
Here the \emph{module of formal power series} means \( \prod_{n \in \Nat} M^{\ptensor n} \), where
\( M^{\ptensor n} = \overbrace{M \ptensor \dots \ptensor M}^n \) for \( n \ge 1 \) and \( M^{\ptensor 0} = \yoneda(1) \).

\begin{theorem}\label{thm:exponential-based-setting-formal-power-series}
    Let \( M, M' \in \basedCQ \) with pseudo-representable bases \( \Base{M} \) and \( \Base{M'} \).
    Assume that \( M \cong \neg N \) and \( M' \cong \neg N' \) for some \( \CQ \)-modules \( N \) and \( N' \).
    We assume arbitrary linear orders on \( \Base{M} \) and \( \Base{M'} \) (in order to canonically choose a representative of a finite multiset).
Then:
    \begin{enumerate}
     \item \( \cofreeexp M \in \basedCQ \) and it is a 
submodule of the module of formal power series: \( \cofreeexp M \hookrightarrow \prod_n M^{\ptensor n} \).
\end{enumerate}
    Let \( \Proj^{(n)} \) be the morphism \( \cofreeexp M \hookrightarrow \big(\prod_n M^{\ptensor n}) \stackrel{\textrm{projection}}{\longrightarrow} M^{\ptensor n} \).
    Assume the orthogonality of \( \Base{M} \) and \( \Base{M'} \).  Then:
    \begin{enumerate}\setcounter{enumi}{1}
        \item The following diagram commutes for every \( n,m \ge 0 \):
\begin{equation*}
            \begin{aligned}
                \xymatrix@C=18pt@R=18pt{
                    \cofreeexp M \ar@{^{(}->}[d]_{\Proj^{(n+m)}} \ar[rr]^{\delta} & & \cofreeexp M \ptensor \cofreeexp M \ar@{^{(}->}[d]_{\Proj^{(n)} \ptensor \Proj^{(m)}} \\
                  M^{\ptensor (n+m)} \ar[rr]_{\cong} & & M^{\ptensor n} \ptensor M^{\ptensor m}
                }
\end{aligned}
          \end{equation*}
\item \( \cofreeexp M \) has an orthogonal pseudo-representable basis that satisfies the following properties:
        \begin{gather*}
            \Base{{!}M} = (\mbox{sorted finite sequences over \( \Base{M} \)}),
            \\
\text{
\( \BObj{(a_1 \dots a_k)} \) is a pseudo-representable module over \( (\#a_1 \otimes \dots \otimes \#a_k) \):
}
\\
\BObj{(a_1 \dots a_k)}
            \quad\hookrightarrow\quad
            \BObj{(a_1)} \ptensor \dots \ptensor \BObj{(a_k)}
            \quad\hookrightarrow\quad
            \CPM({-}, \#a_1 \otimes \dots \otimes \#a_k),
\\
            \textstyle
            \Proj^{(n)} \circ \ket{a_1 a_2 \dots a_k} = \sum_{\sigma \in \mathfrak{S}_k} \sigma \circ (\ket{a_1} \ptensor \dots \ptensor \ket{a_k}) \ (\mbox{if } n=k), \qquad 0 \ (\mbox{if } n\neq k),
            \\
            \bra{a_1 \dots a_k} = \dfrac{1}{\#\mathsf{fix}(a_1 \dots a_k)} (\bra{a_1} \ptensor \dots \ptensor \bra{a_k}) \circ \Proj^{(k)}.
        \end{gather*}
        Here \( \mathfrak{S}_k \) is the set of permutations and \( \mathsf{fix}(a_1 \dots a_k) \defe \{\, \sigma \in \mathfrak{S}_k \mid \forall i.\ a_i = a_{\sigma(i)} \,\} \).
\item 
Let \( g \in \basedCQ(M,M') \).
        Then \( \cofreeexp g \) is represented as a matrix \( h = (h_{\vec{a},\vec{a}'})_{\vec{a},\vec{a}'} \) given by \( h_{a_1 \dots a_k, a'_1 \dots a'_{k'}} = g_{a_1,a'_1} \otimes \dots \otimes g_{a_k,a'_k} \) if \( k = k' \) and \( 0 \) if \( k \neq k' \).
        (This matrix representation is not canonical.)
        \item \( x^{\cofreeexp} \in \basedCQ(\yoneda(1), \cofreeexp{M}) \) for \( x \in \basedCQ(\yoneda(1), M) \) is given by \( \bra{a_1 \dots a_k}x^{\cofreeexp} = (\bra{a_1} x) \otimes \dots \otimes (\bra{a_k} x) \).
        (Here, \(x^{\cofreeexp}\) is what is given by the universal property of the cofree exponential.)
\eqed
    \end{enumerate}
\end{theorem}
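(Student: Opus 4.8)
The plan is to build a concrete comonoid inside the formal power series module, prove it satisfies the universal property of the cofree comonoid, and then identify it with the abstractly‑existing \(\cofreeexp M\) of \cref{thm:free-exponential}; all of items (1)--(5) then reduce to computations with this concrete presentation. Concretely, I would let \(C\) be the \(\CQ\)-submodule of \(\prod_n M^{\ptensor n}\) whose \(k\)-component consists of those sequences \((\xi_n)_n\) with \(\xi_n \in (M^{\ptensor n})_k\) invariant under every permutation of the tensor factors \emph{and} satisfying a convergence/boundedness condition phrased through the pseudo‑representable basis \(\Base{M}\) of \(M\) (a ``generating‑function at total probability \(1\)'' test, using that \(M_1\) is weakly convex because \(M \cong \neg N\)). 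I would equip \(C\) with the counit \(u \defe \Proj^{(0)}\) (valued in \(M^{\ptensor 0} = \yoneda(1)\)), the dereliction \(\Der \defe \Proj^{(1)}\), and the comultiplication \(\delta \colon C \longrightarrow C \ptensor C\) defined so that the square of item (2) commutes for all \(n,m\), i.e.\ \(\delta\) merely re‑splits the symmetric tensor factors; checking that \(\delta\) really lands in \(C \ptensor C\) and that \((u,\delta)\) is a cocommutative comonoid is then a homogeneous‑degree computation. This gives item (2) by construction, and the embedding \(\cofreeexp M \hookrightarrow \prod_n M^{\ptensor n}\) of item (1) once \(C\) is identified with \(\cofreeexp M\).

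Next I would verify the universal property. Given any comonoid \((D, u_D, \delta_D)\) in \(\pshCQ\) and \(h \colon D \longrightarrow M\), set \(\Proj^{(n)} \circ h^{\cofreeexp} \defe h^{\ptensor n} \circ \delta_D^{(n)}\), where \(\delta_D^{(n)}\) is the \(n\)-fold comultiplication; cocommutativity of \(D\) makes each component symmetric, \(\Der \circ h^{\cofreeexp} = h\) and \(u \circ h^{\cofreeexp} = u_D\) are immediate, and uniqueness is clear since \((\Proj^{(n)})_n\) is jointly monic on \(C\) by construction. Hence \(C\) is the cofree comonoid, so \(C \cong \cofreeexp M\), finishing items (1) and (2). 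Item (5) drops out: taking \(D = \yoneda(1)\) (the terminal comonoid, whose iterated comultiplication is the canonical iso) and \(h = x\) gives \(\Proj^{(n)} \circ x^{\cofreeexp} = x^{\ptensor n}\), from which \(\bra{a_1 \dots a_k} x^{\cofreeexp}\) is read off via the basis description.

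For item (3) I would compute the basis from \cref{lem:based-additive-multiplicative}: \(M^{\ptensor n}\) has the orthogonal pseudo‑representable basis \(\{\, a_1 \ptensor \dots \ptensor a_n \,\}\), and passing to the symmetric submodule \(C\) in degree \(n\) the relevant index set is the set of \(\mathfrak{S}_n\)-orbits, i.e.\ sorted sequences over \(\Base{M}\). Taking \(\ket{a_1 \dots a_k}\) to be the symmetrisation \(\sum_{\sigma \in \mathfrak{S}_k} \sigma \circ (\ket{a_1} \ptensor \dots \ptensor \ket{a_k})\) and \(\bra{a_1\dots a_k}\) the matching projection, orthogonality of \(\Base{M}\) kills the cross terms in \(\braket{\vec a}{\vec a'}\); for \(\vec a = \vec a'\) the surviving terms are indexed by \(\mathsf{fix}(a_1 \dots a_k)\), which forces the normalisation \(1/\#\mathsf{fix}(a_1 \dots a_k)\) — here one uses that \(\BObj{(a_1\dots a_k)}\) is the \(\mathsf{fix}\)-symmetric submodule of \(\BObj{a_1} \ptensor \dots \ptensor \BObj{a_k}\), so the stabiliser permutations act trivially on it, making \(\bra{\vec a}\ket{\vec a} = \ident_{\BObj{\vec a}}\). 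Completeness \(\sum_{\vec a} \ket{\vec a}\bra{\vec a} = \ident_{\cofreeexp M}\) follows from completeness of the bases of the \(M^{\ptensor n}\) together with the orbit decomposition, and orthogonality of the new basis is inherited from that of \(\Base{M}\).

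Item (4) is then a direct calculation: \(\cofreeexp g \defe (g \circ \Der)^{\cofreeexp}\) satisfies \(\Proj^{(n)} \circ \cofreeexp g = g^{\ptensor n} \circ \Proj^{(n)}\), so it is \(g^{\ptensor k}\) on degree \(k\) and \(0\) between distinct degrees; sandwiching between \(\ket{a_1 \dots a_k}\) and \(\bra{a'_1 \dots a'_{k'}}\) and using the basis formulas, the permutation sum in \(\ket{\vec a}\) cancels the factor \(1/\#\mathsf{fix}(\vec a')\) in \(\bra{\vec a'}\) against the orbit counting, leaving \(g_{a_1,a'_1} \otimes \dots \otimes g_{a_k,a'_k}\) when \(k = k'\) and \(0\) otherwise (non‑canonical, since this entry ignores the symmetry identifications among sorted sequences). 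The main obstacle throughout is the universal‑property step together with the very definition of \(C\): making the convergence condition precise enough that \(C\) is genuinely a \(\CQ\)-module and that \(h^{\cofreeexp}\) lands in it for an arbitrary, possibly wild, comonoid \(D\) — this is exactly where the hypothesis \(M \cong \neg N\) (weak convexity of \(M_1\), bounding the ``total probability at \(1\)'' of the homogeneous components by a geometric series) is indispensable; note also that \(\cofreeexp\) is not \(\SMon\)-enriched, so no shortcut via enriched functor calculus is available.
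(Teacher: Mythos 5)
Your overall strategy---realise \( \cofreeexp M \) concretely inside \( \prod_n M^{\ptensor n} \) and read items (2)--(5) off that presentation---matches the paper's goal, but the way you propose to \emph{establish} the concrete presentation is not the paper's, and it is where the real gap sits. The paper does not define a candidate \( C \) by a convergence condition and then verify the universal property of the cofree comonoid for it from scratch. It starts from the abstractly existing \( \cofreeexp M \) (\cref{thm:free-exponential}), factors each \( \Der^{\ptensor n} \circ \delta^{(n)} \colon \cofreeexp M \longrightarrow M^{\ptensor n} \) through the equaliser of the permutations and then through a covering/hereditary orthogonal factorisation system, and takes the candidate to be the resulting \emph{image} in \( \prod_n M^{\ptensor n} \). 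The universal property is then obtained by comparison: the image is given a comonoid structure over \( M \), the universal property of \( \cofreeexp M \) produces a comonoid morphism from the image back to \( \cofreeexp M \), and the two composites are shown to be identities.

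The gaps in your route are concentrated exactly in the steps you defer. First, you never specify the convergence condition defining \( C \), yet everything hinges on calibrating it so that simultaneously (a) \( C \) has a pseudo-representable basis and (b) \( h^{\cofreeexp} \) lands in \( C \) for an \emph{arbitrary} comonoid \( D \) in \( \pshCQ \); condition (b) cannot be checked degree by degree, since it quantifies over all comonoids, and the paper's image definition is precisely the device that makes it hold by construction. Second, ``checking that \( \delta \) really lands in \( C \ptensor C \)'' is not a homogeneous-degree computation: one must first know that \( C \ptensor C \) embeds as a (hereditary) submodule of \( \prod_{n,m} (M^{\ptensor n} \ptensor M^{\ptensor m}) \), and the Day tensor does not preserve monomorphisms in general---the paper needs the finite completion of \( M \), the strong flatness of modules with bases, and a diagonal fill-in argument to get this, and without something of that kind your comultiplication is not well defined. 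Third, the symmetrisation \( \sum_{\sigma \in \mathfrak{S}_k} \tfrac{1}{k!}\,\sigma \) and hence the splitting of the symmetric powers and the normalisation \( 1/\#\mathsf{fix}(a_1 \dots a_k) \) are only defined because \( M \cong \neg N \) yields weak convexity; you cite the hypothesis but the convergence of these finite sums, on which item (3) rests, still has to be proved. As written, the proposal identifies the right target object but does not close the universal-property step.
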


\subsection{Classical Structures}\label{sec:classic}

Let \( \classicalCQ \hookrightarrow \basedCQ \) be the full subcategory of \( \basedCQ \) consisting of \( M \in \basedCQ \) such that the canonical morphism \( M \longrightarrow \neg\neg M \) is an isomorphism.
Another characterisation of \( \classicalCQ \) is as the Eilenberg-Moore category of the continuation monad \( \neg\neg \) on \( \basedCQ \).
It has the following structures.

\begin{theorem}\label{thm:classical-structure}
    \( \classicalCQ \) is a model of classical linear logic with the tensor product \( (M \otimes N) \defe \neg \neg (M \ptensor N) \)and the linear exponential comonad \( {!}M \defe \neg\neg\cofreeexp M \).
    Its product is \( M \times N \) and coproduct is \( (M + N) \defe \neg\neg(M \amalg N) \).
    The linear function space is \( M \multimap N \). 
\qed
\end{theorem}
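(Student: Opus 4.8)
The plan is to carry out the multi-object version of the biorthogonality argument of \citet{Tsukada2022}, organised around the idempotent monad \( \neg\neg \) on \( \basedCQ \), where \( \neg M \defe (M \multimap I) \) and \( I = \yoneda(1) \). First I would record the preliminary closure facts. Every representable \( \yoneda(n) \) is pseudo-representable, hence lies in \( \basedCQ \) by \cref{prop:pseudo-basis-implies-basis}; combined with \cref{lem:based-additive-multiplicative}, this makes \( \basedCQ \) a full subcategory of \( \pshCQ \) closed under \( \ptensor \), \( \multimap \), \( \times \) and \( \amalg \), so by \cref{thm:presheaf-model} it is itself a symmetric monoidal closed category with finite products and coproducts (its terminal object being the zero module, which has the empty basis), and \( \neg \), \( \neg\neg \) both preserve \( \basedCQ \). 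The triangle identities show that \( \neg P \) is \( \neg\neg \)-closed for every \( P \), so \( \neg\neg \) is idempotent and \( \classicalCQ \) is exactly the reflective subcategory of \( \neg\neg \)-closed objects, with reflector \( L \defe \neg\neg \); moreover \( I \in \classicalCQ \) because \( \neg I \cong I \) by the Yoneda lemma.

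Next, the \( \ast \)-autonomous and closed structure. The load-bearing point is the exponential-ideal property: if \( M \in \classicalCQ \), say \( M \cong \neg D \) with \( D = \neg M \in \basedCQ \), and \( N \in \basedCQ \), then \( N \multimap M \cong N \multimap (D \multimap I) \cong (N \ptensor D) \multimap I = \neg(N \ptensor D) \in \classicalCQ \). A standard reflection computation then shows that \( \classicalCQ \) is symmetric monoidal closed with internal hom \( M \multimap N \) inherited from \( \basedCQ \) and tensor \( M \otimes N \defe \neg\neg(M \ptensor N) = L(M \ptensor N) \), and that \( L \) is strong symmetric monoidal; the closedness is immediate, since for \( P \in \classicalCQ \) we have \( \classicalCQ(M \otimes N, P) \cong \basedCQ(M \ptensor N, P) \cong \basedCQ(N, M \multimap P) \cong \classicalCQ(N, M \multimap P) \) using \( L \dashv {\hookrightarrow} \) and fullness. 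That the canonical map \( M \longrightarrow (M \multimap I) \multimap I \) is an isomorphism for \( M \in \classicalCQ \) is the definition of \( \classicalCQ \), so \( I \) is dualising and \( \classicalCQ \) is \( \ast \)-autonomous. For the additives, \( \classicalCQ \) is reflective in \( \basedCQ \), hence closed under all limits existing in \( \basedCQ \), so the \( \basedCQ \)-product \( M \times N \) is the product in \( \classicalCQ \); dually, since the reflector preserves colimits, the coproduct in \( \classicalCQ \) is \( L(M \amalg N) = \neg\neg(M \amalg N) \).

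The exponential is the main obstacle. For \( M \in \classicalCQ \) we have \( M \cong \neg N \), so \cref{thm:exponential-based-setting-formal-power-series}(1) gives \( \cofreeexp M \in \basedCQ \), whence \( {!}M \defe \neg\neg\cofreeexp M = L\cofreeexp M \in \classicalCQ \). Since \( L \) is strong symmetric monoidal it carries the cocommutative \( \ptensor \)-comonoid \( \cofreeexp M \) of \( \basedCQ \) to a cocommutative \( \otimes \)-comonoid \( {!}M \) of \( \classicalCQ \), equipped with \( \Der \colon {!}M \longrightarrow M \) coming from \( L\Der_{\basedCQ} \) and \( LM \cong M \). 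The heart of the proof is to show that \( {!}M \), so equipped, is the cofree cocommutative comonoid on \( M \) in \( (\classicalCQ, \otimes) \). I would reduce this to the claim that, for \( C \in \classicalCQ \), a cocommutative \( \otimes \)-comonoid structure on \( C \) is the same datum as a cocommutative \( \ptensor \)-comonoid structure on \( C \) (and likewise for comonoid morphisms): the forward transport is \( L \); the converse amounts to lifting \( \delta \colon C \longrightarrow \neg\neg(C \ptensor C) \) along the reflection unit \( C \ptensor C \longrightarrow \neg\neg(C \ptensor C) \), i.e.\ to controlling the failure of \( \ptensor \) to preserve \( \classicalCQ \), which I would establish by showing this unit is monic on the objects that actually occur, using the concrete presentations of morphisms between pseudo-representable modules (\cref{thm:cpm-representation}, \cref{lem:linear:lqt-function-and-tensor}) and the explicit realisation of \( \cofreeexp \) inside formal power series (\cref{thm:exponential-based-setting-formal-power-series}(3)). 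Granting this correspondence, the forgetful functor \( \Comon(\classicalCQ) \longrightarrow \classicalCQ \) has right adjoint \( M \mapsto L\cofreeexp M \), so \( {!} = L\cofreeexp \) is a comonad; and because cocommutative comonoids in any symmetric monoidal category have finite products given by the tensor, and a right adjoint preserves them, one obtains the Seely isomorphisms \( {!}(M \times N) \cong {!}M \otimes {!}N \) (using \( \cofreeexp M \ptensor \cofreeexp N \cong \cofreeexp(M \times N) \) in the Lafont category \( \basedCQ \) together with \( L(LA \ptensor LB) \cong L(A \ptensor B) \)) and \( {!}\top \cong I \) (from \( \cofreeexp\,\mathbf{0} \cong I \)), yielding a linear exponential comonad. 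I expect essentially all the difficulty to sit in the comonoid correspondence step; once that is in hand, everything else is formal from the reflection computation and the standard theory of Lafont \( \ast \)-autonomous categories.
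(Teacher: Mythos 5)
Your overall architecture --- the idempotent monad \( \neg\neg \) on \( \basedCQ \), the reflection computations for \( \otimes \), \( \multimap \), \( \times \) and \( + \), and the exponential-ideal argument \( N \multimap M \cong \neg(N \ptensor \neg M) \) --- matches the paper's proof (\cref{lem:based-idempotency}, \cref{lem:additive-on-classical}, \cref{lem:monoidal-structure-on-classical}, \cref{lem:single-negation-and-biorthogonal}). But there are two problems. The minor one: idempotency of \( \neg\neg \) does \emph{not} follow from the triangle identities. Those only make \( \eta_{\neg P} \) a split mono (compare double dualisation of vector spaces, where \( V^{*} \to V^{***} \) is split mono but not iso). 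The paper has to prove \( \mu \) is monic, first for \( \yoneda(n) \) via the orthogonality identity \( \mathcal{F}^{\bot\bot\bot} = \mathcal{F}^{\bot} \) on pseudo-representable modules, and then for general \( M \in \basedCQ \) by decomposing along a basis (\cref{lem:diagonal-monotone}); this is where the restriction to \( \basedCQ \) earns its keep.

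The major gap is the exponential. You set yourself the task of showing that \( \neg\neg\cofreeexp M \) is the \emph{cofree} cocommutative comonoid in \( (\classicalCQ, \otimes) \), and reduce this to an equivalence between cocommutative \( \otimes \)-comonoid structures and cocommutative \( \ptensor \)-comonoid structures on an object of \( \classicalCQ \). The converse direction of that equivalence requires lifting \( \delta \colon C \longrightarrow \neg\neg(C \ptensor C) \) through the unit \( C \ptensor C \longrightarrow \neg\neg(C \ptensor C) \); knowing the unit is monic is not enough --- you would need to show that \( \delta \) actually factors through the subobject, i.e.\ that the image of \( \delta \) avoids the new elements created by double negation, and you give no argument for this. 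The theorem, however, does not require cofreeness in \( \classicalCQ \): a linear exponential comonad is what is induced by a monoidal adjunction, and the paper obtains one formally as the composite \( \Comon(\basedCQ) \to \basedCQ \stackrel{\neg\neg}{\to} \classicalCQ \) (left adjoint, strong monoidal) against \( M \mapsto \cofreeexp M \) (right adjoint, landing in \( \Comon(\basedCQ) \) by \cref{thm:exponential-based-setting-formal-power-series}(1)). The comonoids stay in \( \basedCQ \) with their \( \ptensor \)-structure throughout, so the transfer problem you identify as the heart of your proof never arises. You should either switch to this linear--non-linear formulation or supply a genuine proof of the comonoid correspondence; as written, the central step is missing.
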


 \section{Recursive Types}
\label{sec:recursive}
This section describes the interpretation of recursive types in the model \( \classicalCQ \).
A standard approach to recursive type is based on the CPO enrichment and the \( \omega \)-colimit in the wide subcategory of embedding-projection pairs, and this section follows this standard approach.
We first study the CPO-enrichment of \( \classicalCQ \) and then the wide subcategory of \( \classicalCQ \) consisting of embedding-projection pairs.
Interestingly an \( \omega \)-chain of embedding-projection pairs induces an increasing sequence of bases, and the colimit has a basis consisting of their union.

\subsection{$\omega$CPO Enrichment and Other Properties Inherited from $\CQ$}
\label{sec:omega-cpo-enrichment}
Recall that a \( \CQ \)-module \( M \) has a natural pre-order defined by, for every \( x,y \in M_n \),
\(
(x \le y)
    \defp
    \exists z \in M_n.\ x+z = y
\).
This pre-order is defined in terms of sum, and is preserved by any operation that preserves sums.
So the actions of \( \times \), \( + \), \( \multimap \) to morphisms, as well as composition \( ({-} \circ {-}) \colon \classicalCQ(M,N) \otimes \classicalCQ(N,L) \longrightarrow \classicalCQ(M,L) \) of morphisms, are all monotone.
But this pre-order is not always useful.
The pre-order on a \( \CQ \)-module is not even a partial order in general.

The situation is drastically changed when we focus on \( M \in \classicalCQ \).
Since \( M \cong ((M \multimap I) \multimap I) \) for every \( M \in \classicalCQ \), the module \( M \) inherits many good properties of \( I = \yoneda(1) \).
\begin{lemma}\label{lem:omega-cpo-enrichment}
    For every \( M \in \pshCQ \), the \( \Sigma \)-monoid \( \pshCQ(M, I) \) is cancellable and \( \omega \)-complete.
    The poset \( (\pshCQ(M,I), \le) \) is an \( \omega \)CPO and the infinite sum is the least upper bound of finite partial sums:
    \begin{equation*}
        \textstyle
        \sum_{i \in I} f_i \quad\Keq\quad (\mbox{least upper bound of } \{\, \sum_{i \in I'} f_i \mid I' \subseteq_{\mathrm{fin}} I \,\}).
    \end{equation*}
    In particular \( \classicalCQ(M,N) \) satisfies the above properties for every \( M,N \in \classicalCQ \).
\end{lemma}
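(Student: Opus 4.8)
The plan is to isolate the four properties in play --- cancellability, $\omega$-completeness, being an $\omega$CPO in the canonical order, and ``$\Sigma$-sum $=$ least upper bound of the finite partial sums'' --- verify them for the base $\Sigma$-monoids $I_n=\CQ(n,1)$, and then transfer them up, realising $\pshCQ(M,I)$ as a $\Sigma$-monoid assembled from the $I_n$. First I would dispatch the ``in particular'' clause: for $M,N\in\classicalCQ$ we have $\classicalCQ(M,N)=\pshCQ(M,N)$ (full subcategory), and $N\cong\neg\neg N=(\neg N)\multimap I$, so the $\SMon$-enriched tensor--hom adjunction gives a $\Sigma$-monoid isomorphism $\classicalCQ(M,N)\cong\pshCQ(M\ptensor\neg N,\,I)$ with $M\ptensor\neg N\in\pshCQ$. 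Since all four properties are expressed purely through the sum (and the canonical order is defined from the sum), they are invariant under $\Sigma$-monoid isomorphism, so it suffices to treat $\pshCQ(M,I)$ for arbitrary $M\in\pshCQ$.

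For the base case, trace duality identifies a morphism in $I_n=\CQ(n,1)$ with a positive linear functional $x\mapsto\trace(\rho x)$, and the superoperator condition $\opnorm{\varphi}\le 1$ becomes $\lambda_{\max}(\rho)\le 1$; thus $I_n$ is the set of quantum effects $\{\rho\in\Mat_n(\Complex)_{\mathrm{sa}}\mid 0\le\rho\le\ident\}$, a closed bounded convex subset of $\Mat_n(\Complex)_{\mathrm{sa}}\cong\Real^{n^2}$ whose canonical order is the restriction of the L\"owner order and whose $\Sigma$-sum is the limit --- equivalently, in that order, the least upper bound --- of the finite partial sums, provided that limit stays $\le\ident$. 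The four properties are then the usual ``monotone convergence in finite dimension'': cancellability is cancellativity of matrix addition; a family all of whose finite partial sums lie below $\ident$ is increasing and norm-bounded, hence converges to its supremum in $I_n$ (and an $\omega$-chain likewise); and the limit of an increasing sequence is its least upper bound. I would also record the \emph{refinement property} of $I_n$: if $\sum_j(a_j+b_j)$ and $\sum_j a_j$ are defined in $I_n$, then so is $\sum_j b_j$, with $\sum_j a_j+\sum_j b_j=\sum_j(a_j+b_j)$ --- immediate from positivity and monotone convergence --- since this is what makes the transfer go through.

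For the transfer, arbitrary products of $\Sigma$-monoids enjoying the four properties enjoy them again, computed componentwise (each factor is cancellable, so difference witnesses are unique and the canonical order of the product is the product order); in particular $P\defe\prod_{n\in\Nat}\prod_{x\in M_n}I_n$ does. Now $\pshCQ(M,I)$ embeds into $P$ by $(f_n)_n\mapsto\bigl((n,x)\mapsto f_n(x)\bigr)$, and this embedding is $\Sigma$-monoid structure preserving in a strong sense: a family in $\pshCQ(M,I)$ has a defined sum there exactly when its image has one in $P$, with equal values. The forward direction is immediate, and the backward direction rests on the only non-formal input, namely that a componentwise-defined sum of $\CQ$-module morphisms is again a $\CQ$-module morphism --- naturality of the pointwise sum follows from each action $I(\varphi)$ being a $\Sigma$-monoid homomorphism --- which is part of the $\SMon$-enrichment of $\pshCQ$ (\cref{sec:enrichment-of-q}). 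Likewise the canonical order of $\pshCQ(M,I)$ is the restriction of that of $P$: for $f\le g$ in $P$ with $f,g\in\pshCQ(M,I)$, the unique pointwise difference $h$ is again a morphism, by the ``difference of morphisms is a morphism'' computation (which is exactly where the refinement property of $I_n$ enters). Consequently $\pshCQ(M,I)$ inherits cancellability and $\omega$-completeness directly; and given an $\omega$-chain $f^{(0)}\le f^{(1)}\le\cdots$ in $\pshCQ(M,I)$, its successive differences $d^{(k)}$ are morphisms, their finite partial sums are defined (refinement in $P$) and are morphisms, so $\sum_k d^{(k)}$ is defined in $\pshCQ(M,I)$ by $\omega$-completeness, whence the supremum of the chain in $P$ --- namely $f^{(0)}+\sum_k d^{(k)}$ --- lies back in $\pshCQ(M,I)$. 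Thus $\pshCQ(M,I)$ is closed under $\omega$-suprema of $P$, hence is an $\omega$CPO, and since $\Sigma$-sums coincide with suprema of finite partial sums in $P$, the same identity holds in $\pshCQ(M,I)$.

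I expect the genuine obstacle to be the partiality bookkeeping in the transfer step rather than the base case, which is routine finite-dimensional functional analysis. The points needing care are: that a pointwise-defined sum of natural transformations is natural; that the unique difference witnessing an inequality in $P$ is again a $\CQ$-module morphism (so the order really restricts); and, above all, matching the abstract $\omega$-completeness of $\pshCQ(M,I)$ with the concrete fact that its infinite sums land back inside $\pshCQ(M,I)$ and agree with the expected suprema computed in $P$.
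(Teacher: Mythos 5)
Your proof is correct and follows essentially the same route as the paper's: verify the properties for the base $\Sigma$-monoids $\CQ(n,1)$ (the paper reduces these to $[0,1]$-valued functions, you to effect algebras under the L\"owner order --- the same picture), transfer them componentwise to $\pshCQ(M,I)$ viewed inside a product, and obtain the ``in particular'' clause from $N\cong\neg\neg N$ and the tensor--hom adjunction. You are in fact more explicit than the paper's one-line transfer about the two delicate points --- that componentwise sums and differences of morphisms are again morphisms, via your refinement property --- which the paper leaves implicit.
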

\begin{proof}
    These properties trivially hold for the \( \Sigma \)-monoid \( [0,1] \).
    Then the same properties hold for \( \CQ(n,1) \) because \( \CQ(n,1) \) consists of functions to \( [0,1] \) and its sum is point-wise.
    So the same properties hold for \( \pshCQ(M,I) \) because \( \pshCQ(M,I) \) consists of natural transformations, which are families of functions to \( \CQ(n,1) \) (\( n \in \Nat \)), and the sum is component-wise.
    The last claim follows from \( \classicalCQ(M,N) \cong \classicalCQ(M,\neg\neg N) \cong \pshCQ(M,\neg\neg N) \cong \pshCQ(M \ptensor \neg N, I) \) as \( \Sigma \)-monoids.
\end{proof}

Hence \( \classicalCQ \) is an \( \omega \)CPO-enriched category and \( \SMon \)-enriched functors on \( \classicalCQ \) such as \( \times, +, \otimes, \multimap \) are \( \omega \)CPO-enriched.
The exponential \(!\) is \( \omega \)CPO-enriched despite that it is not \( \SMon \)-enriched.
\begin{lemma}\label{lem:classical-exponential-cpo-enriched}
    \( {!} \colon \classicalCQ \longrightarrow \classicalCQ \) is an \( \omega \)CPO-enriched functor.
\eqed
\end{lemma}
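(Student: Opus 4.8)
Being $\omega$CPO-enriched means, concretely, that the action ${!}({-})\colon\classicalCQ(M,N)\longrightarrow\classicalCQ({!}M,{!}N)$ is locally monotone and preserves least upper bounds of $\omega$-chains of morphisms, for all $M,N\in\classicalCQ$. Both hom-$\Sigma$-monoids are cancellable, $\omega$-complete $\omega$CPO's in which infinite sums coincide with least upper bounds of finite partial sums, by \cref{lem:omega-cpo-enrichment}. Since ${!}$ is not $\SMon$-enriched, even monotonicity is not formal here; the plan is to extract both properties from the matrix description of the cofree comonoid in \cref{thm:exponential-based-setting-formal-power-series}, using that ${!}g=\neg\neg\cofreeexp g$ and that $\neg\neg$ \emph{is} $\omega$CPO-enriched on $\classicalCQ$, being assembled from the $\SMon$-enriched (hence $\omega$CPO-enriched) functor $\multimap$.

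A preliminary observation I would record is that composition in $\classicalCQ$ is separately $\omega$-continuous: given an $\omega$-chain $f_0\le f_1\le\cdots$ with supremum $f$, cancellability provides unique ``differences'' $d_i$ with $f_i=\sum_{j\le i}d_j$, $\omega$-completeness makes $\sum_j d_j$ defined and it equals $f$, and one-sided bilinearity of composition turns $h\circ f=h\circ\sum_j d_j=\sum_j(h\circ d_j)=\sup_i(h\circ f_i)$, symmetrically on the other side. In particular each matrix-entry map $g\mapsto\bra{b}g\ket{a}$ is monotone and $\omega$-continuous, and so is post-composition with any fixed morphism. Then I would fix the data of \cref{thm:exponential-based-setting-formal-power-series}: every $M\in\classicalCQ$ satisfies $M\cong\neg N$ with $N=\neg M$ and, by \cref{prop:pseudo-basis-implies-basis}, carries a pseudo-representable basis, which can be taken orthogonal. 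Part~(4) then says the matrix of $\cofreeexp g$ for $g\in\classicalCQ(M,N)$ has entries
\[
    (\cofreeexp g)_{a_1\dots a_k,\,a'_1\dots a'_{k'}}\;=\;g_{a_1,a'_1}\otimes\dots\otimes g_{a_k,a'_k}
\]
for $k=k'$ and $0$ otherwise. Each entry is obtained from $g$ by extracting finitely many matrix entries $g_{a_i,a'_i}$ and combining them with $\otimes$ on $\CPM$, which is $\SMon$-enriched and, tensoring with a fixed completely positive map being a continuous linear operation, separately $\omega$-continuous. Composites of monotone $\omega$-continuous maps being again such, every assignment $g\mapsto(\cofreeexp g)_{\vec a,\vec a'}$ is monotone and $\omega$-continuous.

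The remaining step is to pass back from matrices to morphisms. Through the isomorphism $\classicalCQ({!}M,{!}N)\cong\pshCQ(\cofreeexp M,{!}N)$ — whose codomain is again a cancellable $\omega$-complete $\omega$CPO of the form $\pshCQ(-,I)$, by \cref{lem:omega-cpo-enrichment} — the action ${!}({-})$ becomes $g\mapsto\eta_{\cofreeexp N}\circ\cofreeexp g$, so it suffices to treat $g\mapsto\cofreeexp g$ followed by post-composition with the fixed reflection unit $\eta_{\cofreeexp N}$. For a chain $g_0\le g_1\le\cdots$ with supremum $g$, the honest morphism $\cofreeexp g$ (which exists by \cref{thm:free-exponential}) is recovered from its matrix as the convergent double sum $\sum_{\vec a,\vec a'}\ket{\vec a'}(\cofreeexp g)_{\vec a,\vec a'}\bra{\vec a}$; since infinite sums in this $\omega$CPO are least upper bounds of finite partial sums and each entry-map commutes with $\omega$-sups, the double sum commutes with the $i$-indexed supremum, giving $\cofreeexp g=\sup_i\cofreeexp g_i$ and, a fortiori, monotonicity of $\cofreeexp$; post-composing with $\eta_{\cofreeexp N}$ then yields ${!}g=\sup_i{!}g_i$ and monotonicity of ${!}$.

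The step I expect to be the real obstacle is this last bookkeeping: because the model has no $\infty$, the reconstruction sum is not automatically defined and does not obviously commute with the entrywise suprema, so one must work inside the well-behaved hom-object $\pshCQ(\cofreeexp M,{!}N)$ and lean on \cref{lem:omega-cpo-enrichment}, which both identifies infinite sums with least upper bounds (licensing the exchange) and supplies cancellability (turning an $\omega$-chain into a summable family of differences). A related subtlety to nail down is the precise relation between the bases of $\cofreeexp M$, $\cofreeexp N$ from \cref{thm:exponential-based-setting-formal-power-series} and those of ${!}M=\neg\neg\cofreeexp M$, ${!}N$, i.e.\ the compatibility of the matrix calculus with the $\neg\neg$-wrapping of \cref{thm:classical-structure}; and, as noted, monotonicity of ${!}$ genuinely has to be obtained from the matrix computation rather than quoted, since ${!}$ is not $\SMon$-enriched.
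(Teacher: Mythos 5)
Your plan is sound and starts from the same ingredients the paper uses — the non‑canonical matrix of \( \cofreeexp g \) from \cref{thm:exponential-based-setting-formal-power-series}(4), the cancellability and \( \omega \)-completeness of the relevant hom-\(\Sigma\)-monoids from \cref{lem:omega-cpo-enrichment}, and the decomposition of an \( \omega \)-chain into a summable family of differences — and your monotonicity argument (entrywise, from \(\SMon\)-enrichment of \( \ptensor \), \( \neg \) and composition) is exactly the paper's. Where you diverge is at the continuity step, and this is precisely the point you flag as "the real obstacle": you propose to prove entrywise \( \omega \)-continuity of \( g \mapsto (\cofreeexp g)_{\vec a,\vec a'} \) and then exchange the matrix-reconstruction sum \( \sum_{\vec a,\vec a'}\ket{\vec a'}({-})\bra{\vec a} \) with the \( i \)-indexed supremum. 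The paper avoids that exchange altogether: writing \( f_n=\sum_{i<n}g_i \) and \( f=\sum_{i\in\Nat}g_i \) with the differences \( g_i \), it expands each entry \( \neg\neg(\bra{b_1}f\ket{a_1}\ptensor\dots\ptensor\bra{b_k}f\ket{a_k}) \) multilinearly into a sum over functions \( \varpi\colon\{1,\dots,k\}\to\Nat \), partitions these by \( \max\varpi \) to define explicit difference matrices \( h_n \), and checks entrywise that \( {!}f_{n+1}={!}f_n+h_n \) and \( {!}f=\sum_n h_n \); \cref{lem:omega-cpo-enrichment} then identifies \( \sum_n h_n \) with \( \bigvee_N\sum_{n<N}h_n=\bigvee_N{!}f_N \). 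This telescoping keeps every step inside \(\Sigma\)-monoid algebra, where only the one-directional principle "entrywise \( \le \) of matrices implies \( \le \) of morphisms" and the bilinearity axioms are needed, and it is exactly the device that discharges the obstacle you name. Your route can be made to work, but you would still owe the lemma that, in a cancellable \( \omega \)-complete hom-object of the form \( \pshCQ({-},I) \), a countable sum of entrywise suprema of increasing chains equals the supremum of the sums (including the definedness of all intermediate sums, which in a model without \( \infty \) must be argued via downward-closedness and boundedness by the defined sum for \( f \)); as written, that exchange is asserted rather than proved, so you should either supply it or switch to the paper's telescoping.
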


\subsection{Embedding-Projection Pairs}
An \emph{embedding-projection pair} in \( \classicalCQ \) is a pair \( (e,p) \) of morphisms \( e \colon M \longrightarrow N \) and \( p \colon N \longrightarrow M \) such that \( p \circ e = \ident_M \) and \( e \circ p \le \ident_N \).
Embedding-projection pairs \( (e,p) \) and \( (e',p') \) compose as \( (e' \circ e, p \circ p') \), provided that \( e' \) and \( e \) are composable.
Let \( \classicalCQep \) be the category whose objects are objects in \( \classicalCQ \) and morphisms from \( M \) to \( N \) are embedding-projection pairs \( (e,p) \) such that \( e \colon M \longrightarrow N \).
Since the embedding \( e \) uniquely determines a projection (if it exists), \( \classicalCQep \) can be seen as a wide subcategory of \( \classicalCQ \).

What is notable is that an embedding-projection pair in \( \classicalCQ \) induces a biproduct-like situation.
Let \( e \colon M \longrightarrow N \) be an embedding with the projection \( p \).
Then \( \iota \defe e \circ p \colon N \longrightarrow N \) is an idempotent.
By the assumption, \( \iota \le \ident_N \), that means, \( \iota + \iota' = \ident_N \) for some \( \iota' \), which is unique by cancellability (\cref{lem:omega-cpo-enrichment}).
Then
\(
    \iota + 0
    =
    \iota 
    =
    (\iota + \iota') \circ \iota
    =
    (\iota \circ \iota) + (\iota' \circ \iota)
    =
    \iota + (\iota' \circ \iota)
\),
so \( \iota' \circ \iota = 0 \) by cancellability.
Similarly \( \iota \circ \iota' = 0 \).
So
\(
    \iota + \iota'
    =
    \ident
    =
    (\iota + \iota') \circ (\iota + \iota')
    =
    (\iota \circ \iota) + (\iota \circ \iota') + (\iota' \circ \iota) + (\iota' \circ \iota')
    =
    \iota + (\iota' \circ \iota')
\)
and thus \( \iota' = \iota' \circ \iota' \).
Since \( \classicalCQ \) has a splitting object of an idempotent,
there exist \( e' \colon M' \longrightarrow N \) and \( p' \colon N \longrightarrow M' \) for some \( M' \in \classicalCQ \) such that \( p' \circ e' = \ident_{M'} \) and \( e' \circ p' = \iota' \).
Then, \(p \circ e' = (p \circ e) \circ (p \circ e') \circ (p' \circ e') =\allowbreak
p \circ \iota \circ \iota' \circ e' =\allowbreak 0\), and similarly, \(p' \circ e = 0\).
So we have a situation like a biproduct:\footnote{These pairs do not define a biproduct since \( (e \circ f) + (e' \circ f') \) is not always defined for \( f \colon X \longrightarrow M \) and \( f' \colon X \longrightarrow M' \).}
\begin{equation*}
    \xymatrix{
        M \ar@/^4pt/[r]^e & N \ar@/^4pt/[l]^p \ar@/^4pt/[r]^{p'} & M' \ar@/^4pt/[l]^{e'}
    },
    \qquad
    \begin{aligned}
        (e \circ p) + (e' \circ p') = \ident_N,
        \qquad
        p \circ e' = 0,
        \qquad
        p' \circ e = 0,
        \qquad\quad
        \\
        p \circ e = \ident_M,
        \quad\mbox{and}\quad
        p' \circ e' = \ident_{M'}.
    \end{aligned}
\end{equation*}

\begin{lemma}
    \( \classicalCQep \) has all \( \omega \)-colimits.\footnote{By the limit-colimit coincidence, it suffices to prove that \( \classicalCQ \) has \( \omega^{\op} \)-limits,
but this claim is not trivial.
        Construction of an \( \omega^\op \)-limit in \( \classicalCQ \), as well as construction of an \( \omega \)-colimit in \( \classicalCQep \), requires to provide a basis of the (co)limit, which is the key to the proof.
}
    The \( \omega \)-colimits are preserved by the embedding \( \classicalCQep \hookrightarrow \classicalCQ \).
\end{lemma}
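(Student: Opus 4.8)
The plan is to reduce, \emph{via} the limit-colimit coincidence for \( \omega \)CPO-enriched categories, to the statement that \( \classicalCQ \) has \( \omega^{\op} \)-limits of towers of projections, and then to reduce \emph{that} to a single fact about bases. Fix an \( \omega \)-chain \( (M_n)_{n\in\Nat} \) in \( \classicalCQep \) with embeddings \( e_n\colon M_n\to M_{n+1} \) and projections \( p_n\colon M_{n+1}\to M_n \). Since \( \pshCQ \) is complete (\cref{prop:local-presentability}), the limit \( M_\infty\defe\lim_n M_n \) of the tower of projections exists in \( \pshCQ \), computed pointwise in \( \SMon \); write \( \pi_n\colon M_\infty\to M_n \) for the cone and \( \sigma_n\colon M_n\to M_\infty \) for the embeddings obtained, as usual, as the mediating morphisms out of the interleaved cone (projections below level \( n \), iterated embeddings above). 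One checks \( \pi_n\circ\sigma_n=\ident_{M_n} \), that the \( \sigma_n\circ\pi_n \) form an increasing chain of idempotents, and \( \sup_n(\sigma_n\circ\pi_n)=\ident_{M_\infty} \) (the last by the uniqueness clause of the limit universal property, both sides having the same \( \pi_m \)-components); the suprema used here are legitimate because each \( M_n \) is reflexive, so that \( \pshCQ(M_\infty,M_n)\cong\pshCQ(M_\infty\ptensor\neg M_n,I) \) is \( \omega \)-complete by \cref{lem:omega-cpo-enrichment}, and none of this presupposes \( M_\infty\in\classicalCQ \). Finally, because \( \classicalCQ \) is the Eilenberg-Moore category of \( \neg\neg \) on \( \basedCQ \), the forgetful functor \( \classicalCQ\to\basedCQ \) creates limits, and since \( \basedCQ\hookrightarrow\pshCQ \) is full, it suffices to show that \( M_\infty \) has a basis: then \( M_\infty \) is the \( \omega^{\op} \)-limit in \( \basedCQ \) and lifts to one in \( \classicalCQ \).

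The basis of \( M_\infty \) is constructed as a union of bases of ``complement'' modules, and this is the heart of the argument. Iterating the biproduct-like decomposition recalled above, for each \( n \) the idempotent \( e_n\circ p_n\le\ident_{M_{n+1}} \) has a complementary idempotent splitting through some \( M'_n\in\classicalCQ \), with \( e'_n\colon M'_n\to M_{n+1} \), \( p'_n\colon M_{n+1}\to M'_n \) satisfying \( p'_ne'_n=\ident \), \( p_ne'_n=0 \), \( p'_ne_n=0 \) and \( (e_np_n)+(e'_np'_n)=\ident_{M_{n+1}} \); iterating, each \( M_n \) behaves like a biproduct of \( M_0,M'_0,\dots,M'_{n-1} \). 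Fix pseudo-representable bases of \( M_0 \) and of every \( M'_k \), transport them onto \( M_\infty \) along \( \pi_0 \) and along \( p'_k\circ\pi_{k+1} \) respectively, and let \( B_\infty \) be the disjoint union of the resulting index sets, carrying the induced tuples \( (\BObj{b},\ket{b},\bra{b})_{b\in B_\infty} \). Writing \( B_n\subseteq B_\infty \) for the part coming from \( M_0,M'_0,\dots,M'_{n-1} \), transporting the finite identity of \( M_n \) along \( \sigma_n \) and \( \pi_n \) gives \( \sum_{b\in B_n}\ket{b}\bra{b}=\sigma_n\circ\pi_n \); the finite partial sums of \( \sum_{b\in B_\infty}\ket{b}\bra{b} \) are therefore cofinal with the increasing chain \( (\sigma_n\circ\pi_n)_n \), whose supremum is \( \ident_{M_\infty} \) by the first paragraph, so \( \sum_{b\in B_\infty}\ket{b}\bra{b}=\ident_{M_\infty} \). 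Hence \( (\BObj{b},\ket{b},\bra{b})_{b\in B_\infty} \) is a basis, \( M_\infty\in\basedCQ \), and by the reduction above \( (M_\infty,(\pi_n)) \) is the \( \omega^{\op} \)-limit in \( \classicalCQ \).

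It remains to read off the colimit. As \( \classicalCQ \) is \( \omega \)CPO-enriched with monotone composition and \( \omega \)-complete, cancellable hom-posets (\cref{lem:omega-cpo-enrichment}), the limit \( (M_\infty,(\pi_n)) \) is simultaneously the colimit of the tower of embeddings with legs \( \sigma_n\colon M_n\to M_\infty \): for any cocone \( (g_n\colon M_n\to L)_n \) in \( \classicalCQ \), the morphisms \( g_n\circ\pi_n \) increase (since \( g_n\circ\pi_n=g_{n+1}\circ e_n\circ\pi_n\le g_{n+1}\circ\pi_{n+1} \)) and \( \sup_n(g_n\circ\pi_n) \) is the unique mediating morphism. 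This exhibits \( (M_\infty,(\sigma_n)) \) as the \( \omega \)-colimit in \( \classicalCQep \), and, because the universal property was tested against arbitrary cocones in \( \classicalCQ \), the colimit is preserved by \( \classicalCQep\hookrightarrow\classicalCQ \). The main obstacle is the construction in the second paragraph --- exhibiting a basis of \( M_\infty \) as the union of bases of the complements \( M'_k \), and verifying that \( \sum_{b\in B_\infty}\ket{b}\bra{b} \) genuinely converges to \( \ident_{M_\infty} \) --- the surrounding steps being the standard \( \omega \)CPO-enriched package together with the fact that Eilenberg-Moore forgetful functors create limits.
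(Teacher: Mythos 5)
Your overall route---pass to the $\omega^{\op}$-limit $M_\infty$ of the tower of projections, equip it with the basis obtained as the union of bases of the complements $M'_k$, and then read off the colimit by the standard $\omega$CPO-enriched package---is exactly the alternative the footnote of the lemma alludes to, and your candidate basis is the same one the paper uses. The gap is in how you certify the two central identities, $\sup_n(\sigma_n\circ\pi_n)=\ident_{M_\infty}$ and $\sum_{b\in B_\infty}\ket{b}\bra{b}=\ident_{M_\infty}$. Both live in the $\Sigma$-monoid $\pshCQ(M_\infty,M_\infty)$: the first already requires $\ident_{M_\infty}$ to be an \emph{upper bound} of the chain in the canonical order, i.e.\ the existence of complements $h_n$ with $\sigma_n\pi_n+h_n=\ident_{M_\infty}$, and the second requires the countable sum to be defined and to coincide with the least upper bound of its finite partial sums. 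Neither is available where you need it: \cref{lem:omega-cpo-enrichment} gives cancellability, $\omega$-completeness and the sum-equals-lub law only for hom-$\Sigma$-monoids of the form $\pshCQ(M,I)$, hence for $\pshCQ(M_\infty,M_n)$ (as you note) and for $\classicalCQ(M,N)$---but the hom-set in which your suprema and your infinite sum are taken is $\pshCQ(M_\infty,M_\infty)$, and $M_\infty\in\classicalCQ$ is the conclusion. Testing against the components $\pi_m$ shows only that \emph{if} the sup (resp.\ the sum) exists then it is the identity; it does not produce the complements $h_n$ nor the convergence of $\sum_{b\in B_\infty}\ket{b}\bra{b}$, and producing exactly these convergences is the difficulty the lemma is about. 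As written, the argument is circular at its key step.

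The paper avoids this by never taking a supremum in an unknown hom-set: it builds the (co)limit object directly as the hereditary submodule $M'\hookrightarrow\prod_iM'_i$ cut out by an explicit convergence condition against the test families $\mathcal{F}_m$, so that the basis identity and the stagewise recombination $(x'_0,x'_1,\dots)\mapsto x'_0+\dots+x'_k$ become directly checkable (the latter via $\neg\neg M_k\cong M_k$); the only appeal to directed completeness is for the mediating morphism into a cocone vertex $L$, where $L\cong\neg\neg L$ is available and your own argument is also sound. Your proof is repairable without changing the basis: verify $\sum_{b\in B_\infty}\ket{b}\bra{b}=\ident_{M_\infty}$ componentwise, using that $(M_\infty)_n$ is the limit $\Sigma$-monoid of the $(M_k)_n$ (so a sum is defined there iff it is defined at every finite stage) and that $\pi_k\circ\ket{b}=0$ whenever $b$ comes from $M'_i$ with $i>k$ (because $p_{i-1}\circ e'_i=0$); at each stage $k$ the sum then reduces, by associativity, to the finitely many summands indexed by $i\le k$, each inner sum converging because it is a basis decomposition inside $M'_i$ and the outer finite sum converging because it telescopes to $\pi_k$. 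Only after this is established do you know $M_\infty\in\basedCQ$, hence (by reflectivity) $M_\infty\in\classicalCQ$, and only then do the $\sup$ computations of your first and third paragraphs take place in genuinely $\omega$-complete hom-posets.
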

\begin{proof}
    Assume an \( \omega \)-chain \( (M_i)_{i \in \omega} \) in \( \classicalCQep \).
    By the above argument, we have
    \begin{equation*}
        \xymatrix{
            M_0 \ar@/^4pt/[r]^{e_0} & M_1 \ar@/^4pt/[l]^{p_0} \ar@/^4pt/[d]^{p'_1} \ar@/^4pt/[r]^{e_1} & M_2 \ar@/^4pt/[l]^{p_1} \ar@/^4pt/[d]^{p'_2} \ar@/^4pt/[r]^{e_2} & \cdots \ar@/^4pt/[l]^{p_2} \\
            M'_0 \ar@{=}[u] & M'_1 \ar@/^4pt/[u]^{e'_1} & M'_2 \ar@/^4pt/[u]^{e'_2}
        },
        \qquad
        \begin{aligned}[t]
            &
            (e_i \circ p_i) + (e'_{i+1} \circ p'_{i+1}) = \ident_{M_{i+1}},
            \\ &
            p_i \circ e'_{i+1} = 0,
            \qquad
            p'_{i+1} \circ e_i = 0,
            \qquad\quad
            \\ &
            p_i \circ e_i = \ident_{M_{i}},
            \quad\mbox{and}\quad
            p'_i \circ e'_i = \ident_{M'_i}.
        \end{aligned}
    \end{equation*}
    Each \( x \in (M_k)_n \) is canonically written as \( x = x'_0 + x'_1 + \dots + x'_k \) with \( x'_i \in (M'_i)_n \) (\( i = 0,\dots,k \)), where we omit the embeddings for simplicity.
    Hence \( M_k \) has a basis given as the disjoint union of bases for \( M'_0, M'_1,\dots,M'_k \).
    Let \( \mathcal{F}_m \subseteq \prod_{k \in \Nat} \classicalCQ(\yoneda(m) \ptensor M'_k, \yoneda(1)) \) be the subset consisting of \( (f_k)_k \) such that \( \IsDef{(\sum_{i=0}^k (f_i)_n(\ident \ptensor x'_i))} \) in \( \CQ(m \otimes n,1) \) for every \( k \) and \( x = x'_0 + \dots + x'_k \in (M_k)_n \), and \( M' \) be the hereditary submodule of \( \prod_i M'_i \) consisting of elements whose every \(n\)-th component \( x' = (x'_0,x'_1,\dots)\in (\prod_i M'_i)_n \) satisfies \( \IsDef{(\sum_{i=0}^\infty (f_i)_n(\ident \ptensor x'_i))} \) in \( \CQ(m \otimes n,1) \) for every \( (f_k)_k \in \mathcal{F}_m \).
    Then \( M' \) has a basis given by the disjoint union of bases for \( M'_0, M'_1, \dots \).
    We have an embedding \( (M_k)_n \ni (x'_0 + \dots + x'_k) \mapsto (x'_0, \dots, x'_k, 0, 0, \dots) \in (M')_n \) and a projection \( (M')_n \ni (x'_0,x'_1,\dots) \mapsto x'_0 + \dots + x'_k \in (M_k)_n \) (here the convergence of the sum follows from \( \neg\neg M_k \cong M_k \)). 

    This is a colimiting cocone both in \( \classicalCQep \) and \( \classicalCQ \).
    Given a cocone \( (L, (f_i \colon M_i \longrightarrow L)_i) \) in \( \classicalCQ \), we have \( (M')_n \ni (x'_i)_{i \in \omega} \mapsto \sum_i (f_i)_n((e'_i)_n(x'_i)) \in L_n \).
    The well-definedness of the sum comes from the directed completeness of \( L \), which follows from \( L \cong \neg\neg L \).
    Its uniqueness is easy.
    When the cocone comes from \( \classicalCQep \), \ie~\( f_i \colon M_i \longrightarrow L \) is an embedding with projection \( \hat{p}_i \colon L \longrightarrow M_i \) for every \( i \), the projection associated to the map \( M' \longrightarrow L \) is given by \( L_n \ni y \mapsto \big((p'_i)_n\big((\hat{p}_i)_n(y)\big)\big)_{i \in \omega} \in (M')_n \).
\end{proof}

\subsection{Interpretation of Recursive Types}
\label{sec:canonical-basis}
The results in the previous subsections are sufficient to give an interpretation of recursive types in \( \classicalCQ \) following the standard approach (see, \eg, \cite{Fiore1994,Lindenhovius2021}).
We interpret a type \( A = A(X_1,\dots,X_k) \) with free type variables \( X_1,\dots,X_k \) as a functor \( \sem{A} \colon \classicalCQep \times \dots \times \classicalCQep \longrightarrow \classicalCQep \).
For example, \( \sem{A \multimap B}(\overrightarrow{M}) \defe (\sem{A}(\overrightarrow{M}) \multimap \sem{B}(\overrightarrow{M})) \) on objects and \( \sem{A \multimap B}(\overrightarrow{(e, p)}) \defe (\sem{A}^p(\overrightarrow{(e,p)}) \multimap \sem{B}^e(\overrightarrow{(e,p)}),\: \sem{A}^e(\overrightarrow{(e,p)}) \multimap \sem{B}^p(\overrightarrow{(e,p)})) \), where \( (\sem{A}^e(\overrightarrow{(e,p)}), \sem{A}^p(\overrightarrow{(e,p)})) \defe \sem{A}(\overrightarrow{(e,p)}) \) are the embedding/projection components of \( \sem{A}(\overrightarrow{(e,p)}) \).
The base type \(\TyQubit\) is interpreted by \(\sem{\TyQubit}(\overrightarrow{M}) \defe \yoneda(2)\) and \( \sem{\TyQubit}(\overrightarrow{(e,p)}) \defe (\ident_{\yoneda(2)}, \ident_{\yoneda(2)}) \).
The interpretation of the recursive type \( \sem{\mu X.A(X,\overrightarrow{Y})} \) is defined as the \( \omega \)-colimit of the diagram \( 0 \longrightarrow \sem{A}(0, \overrightarrow{Y}) \longrightarrow \sem{A}(\sem{A}(0, \overrightarrow{Y}), \overrightarrow{Y}) \longrightarrow \cdots \).
Since the functor \( \sem{A} \) preserves \( \omega \)-colimits, \( \sem{\mu X.A}(\overrightarrow{M}) \) is isomorphic to \( \sem{A}(\sem{\mu X.A}(\overrightarrow{M}), \overrightarrow{M}) \).

We systematically assign bases to the interpretations of types.
Given a type \( A = A(X_1,\dots,X_k) \) with \( k \)-free type variables and \( M_1,\dots,M_k \in \classicalCQ \) with bases \( \mathcal{B}_1,\dots,\mathcal{B}_k \), the \emph{canonical basis} of \( A \), written as \( \Base{A}[\mathcal{B}_1,\dots,\mathcal{B}_k] \), is defined by the following rules:
\begin{gather*}
    \dfrac{
        a \in \mathcal{B}_i
    }{
        a \in \Base{X_i}[\vec{\mathcal{B}}]
    }
    \qquad
    \dfrac{
        \mathstrut
    }{
        \ast_2 \in \Base{\TyQubit}[\vec{\mathcal{B}}]
    }
    \qquad
    \dfrac{
        \mathstrut
    }{
        \ast_1 \in \Base{\TyUnit}[\vec{\mathcal{B}}]
    }
    \qquad
    \dfrac{
        a \in \Base{A}[\vec{\mathcal{B}}]
        \qquad
        b \in \Base{B}[\vec{\mathcal{B}}]
    }{
        (a \multimap b) \in \Base{A \multimap B}[\vec{\mathcal{B}}]
    }
    \\
    \dfrac{
        a \in \Base{A}[\vec{\mathcal{B}}]
        \qquad
        b \in \Base{B}[\vec{\mathcal{B}}]
    }{
        (a \otimes b) \in \Base{A \otimes B}[\vec{\mathcal{B}}]
    }
    \qquad
    \dfrac{
        a \in \Base{A}[\vec{\mathcal{B}}]
    }{
        \TInl(a) \in \Base{A+B}[\vec{\mathcal{B}}]
    }
    \qquad
    \dfrac{
        b \in \Base{B}[\vec{\mathcal{B}}]
    }{
        \TInr(b) \in \Base{A+B}[\vec{\mathcal{B}}]
    }
    \\
    \dfrac{
        a_1,\dots,a_k \in \Base{A}[\vec{\mathcal{B}}]
        \qquad
        \mathit{Sorted}(a_1\dots a_k)
    }{
        (a_1 \dots a_k) \in \Base{{!}A}[\vec{\mathcal{B}}]
    }
    \qquad
    \dfrac{
        a \in \Base{A[\mu X.A/X]}[\vec{\mathcal{B}}]
    }{
        \TFold(a) \in \Base{\mu X.A}[\vec{\mathcal{B}}]
    }
\end{gather*}
The meaning of the bases of most types should be clear from \cref{lem:based-additive-multiplicative} and \cref{thm:exponential-based-setting-formal-power-series} (but \( (a_1 \dots a_k) \in \Base{!A}[\vec{\mathcal{B}}] \) means \( \neg\neg(a_1 \dots a_k) \) in the notation of \cref{thm:exponential-based-setting-formal-power-series}).\tk{todo: give a proper definition}
For \( \ast_2 \in \Base{\TyQubit} \), we define \((\BObj{\ast_2}, \bra{\ast_2}, \ket{\ast_2}) \defe (\yoneda(2), \ident_{\yoneda(2)}, \ident_{\yoneda(2)}) \).
For \( a \in \Base{\mu X.A}(\vec{\mathcal{B}}) \), the \( \CQ \)-module \( \BObj{a} \) and morphisms \( \ket{a} \colon \BObj{a} \longrightarrow \sem{\mu X.A} \) and \( \bra{a} \colon \sem{\mu X.A} \longrightarrow \BObj{a} \) are
\begin{equation*}
    \BObj{(\TFold(a))} \defe \BObj{a}
    \qquad
    \ket{\TFold(a)} \defe \mathit{fold} \circ \ket{a}
    \quad\mbox{and}\quad
    \bra{\TFold(a)} \defe \bra{a} \circ \mathit{fold}^{-1}.
\end{equation*}

\begin{theorem}\label{thm:basis-of-types}
    Let \( A = A(X_1,\dots,X_k) \) be a type with free variables.
    Assume \( \CQ \)-modules \( M_1,\dots,M_k \in \classicalCQ \) with orthogonal pseudo-representable bases \( \mathcal{B}_1,\dots,\mathcal{B}_k \), respectively.
    Then \( \Base{A}[\vec{\mathcal{B}}] \) is an orthogonal pseudo-representable basis for \( \sem{A}(M_1,\dots,M_k) \).
\eqed
\end{theorem}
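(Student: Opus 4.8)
The plan is a structural induction on the type $A$, showing at every step that the family of triples prescribed by the rules defining $\Base{A}[\vec{\mathcal{B}}]$ is exactly (or is transported along a canonical isomorphism from) one of the bases already constructed in \cref{lem:based-additive-multiplicative} and \cref{thm:exponential-based-setting-formal-power-series}, and that orthogonality and pseudo-representability are preserved. One point recurs throughout: the connectives of $\classicalCQ$ are the $\neg\neg$-reflections of those of $\pshCQ$, whereas \cref{lem:based-additive-multiplicative} and \cref{thm:exponential-based-setting-formal-power-series} produce bases for the $\pshCQ$-connectives; since the modules involved are pseudo-representable (by \cref{lem:linear:lqt-function-and-tensor} for $\ptensor,\multimap$, by \cref{lem:based-additive-multiplicative} for $\amalg$, by \cref{thm:exponential-based-setting-formal-power-series} for $\cofreeexp$) and pseudo-representable modules lie in $\classicalCQ$, the canonical map $M \to \neg\neg M$ is an isomorphism on them and the bases transport verbatim.

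For the base cases, $A = X_i$ is immediate since $\Base{X_i}[\vec{\mathcal{B}}] = \mathcal{B}_i$ and $\sem{X_i}(\vec{M}) = M_i$. For $A = \TyUnit$ and $A = \TyQubit$ one checks directly that $\yoneda(n) = \CQ({-},n) \hookrightarrow \CPM({-},n)$ is a hereditary submodule, is bounded by $1$, and has $\ident_n$ as a pseudo-universal element, so $\yoneda(1)$ and $\yoneda(2)$ are pseudo-representable; and the singleton family $\{(\yoneda(n),\ident,\ident)\}$ is an orthogonal basis, which is precisely what the rules for $\ast_1$ and $\ast_2$ prescribe (with $\BObj{\ast_2} = \yoneda(2)$ and $\ket{\ast_2} = \bra{\ast_2} = \ident$).

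For $A\multimap B$, $A\otimes B$ and $A+B$, the induction hypothesis supplies orthogonal pseudo-representable bases $\mathcal{B}_{\sem{A}}$, $\mathcal{B}_{\sem{B}}$ of $\sem{A}(\vec{M})$, $\sem{B}(\vec{M})$; feeding these into \cref{lem:based-additive-multiplicative} (using \cref{lem:linear:lqt-function-and-tensor} to know the resulting modules are pseudo-representable, hence in $\classicalCQ$ so the $\neg\neg$ is harmless) yields bases $\{a\multimap b\}$, $\{a\ptensor b\}$, $\{\BInl(a)\}\cup\{\BInr(b)\}$ whose objects $\BObj{\cdot}$ and transfer maps $\ket{\cdot},\bra{\cdot}$ are exactly those of the corresponding rules, and the last clause of \cref{lem:based-additive-multiplicative} gives orthogonality (for $+$ one moreover computes $\braket{\BInl(a)}{\BInr(b)} = \bra{a}\Proj_1\Inj_2\ket{b} = 0$). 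For $A = {!}B$, the induction hypothesis makes $\sem{B}(\vec{M})$ pseudo-representable, hence $\sem{B}(\vec{M})\cong\neg(\neg\sem{B}(\vec{M}))$, so \cref{thm:exponential-based-setting-formal-power-series} applies with $N = \neg\sem{B}(\vec{M})$; its item (3) produces an orthogonal pseudo-representable basis of $\cofreeexp\sem{B}(\vec{M})$ indexed by sorted finite sequences over $\mathcal{B}_{\sem{B}}$ with exactly the $\BObj{a_1\dots a_k},\ket{a_1\dots a_k},\bra{a_1\dots a_k}$ of the ${!}B$-rule, and this same data is a basis of ${!}\sem{B}(\vec{M}) = \neg\neg\cofreeexp\sem{B}(\vec{M})$ since $\cofreeexp\sem{B}(\vec{M})$ is again pseudo-representable.

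The substantive case, and the anticipated obstacle, is $A = \mu X.B$. Here $\sem{\mu X.B}(\vec{M})$ is the $\omega$-colimit in $\classicalCQep$ of $0 \longrightarrow \Phi(0) \longrightarrow \Phi^2(0) \longrightarrow \cdots$ with $\Phi = \sem{B}({-},\vec{M})$, and the $\omega$-colimit lemma for $\classicalCQep$ already exhibits a basis of such a colimit as the disjoint union of the ``fresh'' bases contributed at each stage, the biproduct-like identity $(e\circ p)+(e'\circ p') = \ident$ forcing new basis vectors to be orthogonal to old ones. I would (i) prove a substitution lemma for canonical bases, $\Base{B}[\Base{\mu X.B}[\vec{\mathcal{B}}],\vec{\mathcal{B}}] = \Base{B[\mu X.B/X]}[\vec{\mathcal{B}}]$, by a routine structural induction; (ii) show, using the structural induction hypothesis on $B$ and functoriality of $\sem{B}$ on $\classicalCQep$, that applying $\Phi$ to the stage-$n$ embedding realizes the canonical basis of $\Phi^n(0)$ as a subfamily of that of $\Phi^{n+1}(0)$ --- the delicate part being to verify, connective by connective through the embedding/projection components $\sem{B}^e,\sem{B}^p$, that $\sem{B}$ on an embedding sends basis vectors to basis vectors and preserves orthogonality; (iii) invoke $\omega$-cocontinuity of $\sem{B}$ so that the colimit's basis is the union of this increasing chain of bases; and (iv) observe that $\Base{\mu X.B}[\vec{\mathcal{B}}]$ is, by definition, the least family closed under the $\TFold$-rule, i.e.\ the least fixpoint of the operator induced by that rule, so by the substitution lemma it coincides with the union computed in (iii). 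Pseudo-representability of $\sem{\mu X.B}(\vec{M})$ and orthogonality of its basis come along from the colimit construction. Essentially all the difficulty is concentrated in the bookkeeping of step (ii).
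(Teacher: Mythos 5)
Your proposal is correct and follows essentially the same route as the paper: the paper also proceeds by structural induction on \( A \), strengthening the induction hypothesis with exactly your step (ii) (an injection \( \kappa \) between canonical bases compatible with \( \sem{A} \) applied to embedding-projection pairs, i.e.\ \( \BObj{b}=\BObj{\kappa(b)} \), \( e\ket{b}=\ket{\kappa(b)} \), \( \bra{b}p=\bra{\kappa(b)} \)), and handles \( \mu X.B \) by taking the colimit of the resulting increasing chain of bases along the \( \omega \)-chain in \( \classicalCQep \). The only cosmetic divergence is in the final identification of that colimit with \( \Base{\mu X.B}[\vec{\mathcal{B}}] \): you appeal to the least-fixpoint characterisation via a substitution lemma, whereas the paper constructs explicit maps \( \rho_\ell \colon G^\ell(\emptyset) \longrightarrow \Base{\mu X.B}[\vec{\mathcal{B}}] \) and proves the induced map is a bijection using orthogonality together with the standing assumption \( \braket{a}{a} \neq 0 \) (surjectivity by contradiction); both work, and all the real labour sits in your step (ii) in either presentation.
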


 \section{Interpretation of Quantum FPC and Adequacy}
\label{sec:interpretation}

This section defines the interpretation of terms of quantum FPC and proves the soundness and adequacy of the interpretation.
We often write \( A \) for the interpretation \( \sem{A} \) of type \(A\).

\subsection{Interpretation}
The interpretation can be straightforwardly given since Quantum FPC is just a linear \( \lambda \)-calculus with additional constants, and \( \classicalCQ \) has been shown to be a model of linear logic.
Recall that type variables never freely occur in every type judgement \( !\Gamma, \Delta \vdash t \colon A \).
The interpretation of recursive types has already been discussed in \cref{sec:recursive}, and the interpretation of quantum constants are
\begin{equation*}
    \sem{\ket{0}} \defe \yoneda(\varphi_0)
    \qquad
    \sem{U} \defe \yoneda(\psi_U)
    \quad\mbox{and}\quad
    \sem{\QMeas} \defe (\mathit{inl} \circ \yoneda(\theta_0)) + (\mathit{inr} \circ \yoneda(\theta_1))
\end{equation*}
where superoperators 
\( \varphi_0 \in \CQ(1,2) \), 
\( \psi_U \in \CQ(2^n,2^n) \)
 and 
\( \theta_0,\theta_1 \in \CQ(2,1) \)
 are those defined in \cref{fig:quantum-one-step-reduction}.
For a closure \( \QCloS = \QCloE{\vec{q}}{\varphi}{(c_i = \TBang{t_i})_{i = 1}^k}{t} \), its interpretation \( \sem{\QCloS} \in \classicalCQ(\TyUnit,\TyUnit) \) is defined as \( \sem{t} \circ (\yoneda(\varphi) \otimes \sem{\TBang{t_1}} \otimes \dots \otimes \sem{\TBang{t_k}}) \).

In contrast to the interpretations of \( \ket{0} \) and \( U \), which are well-defined in \( \pshCQ \) and \( \basedCQ \), the interpretation of \( \QMeas \) is possible only in \( \classicalCQ \) because of the definedness issue of the sum in the interpretation.
In the coproduct \( M \coprod N \) in \( \CQ \) and \( \basedCQ \), the sum of elements in different components is always undefined.
The coproduct \( M + N \) of \( M \) and \( N \) in the classical model \( \classicalCQ \) is \( \neg\neg(M \coprod N) = \neg(M \coprod N) \multimap \yoneda(1) \) and it inherits the ``convex closedness'' from \(\yoneda(1)\), which plays an essential role in the definedness of the sum.
\begin{lemma}
    \( (\mathit{inl} \circ \yoneda(\theta_0)) + (\mathit{inr} \circ \yoneda(\theta_1)) \) is defined in \( \classicalCQ(\TyQubit,\TyUnit+\TyUnit) \).
\eqed
\end{lemma}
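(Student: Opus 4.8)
The plan is to rewrite the hom-$\Sigma$-monoid containing the sum as a component-wise $\Sigma$-monoid of families of superoperators, make the two summands explicit as concrete superoperators, and then verify definedness of their pointwise sum by the usual trace estimate on a state written in block form. By the chain of $\Sigma$-monoid isomorphisms at the end of the proof of \cref{lem:omega-cpo-enrichment}, $\classicalCQ(M,N) \cong \pshCQ(M \ptensor \neg N, I)$ for $M,N \in \classicalCQ$, under which the sum is computed component-wise, as families of maps valued in the $\Sigma$-monoids $\CQ(k,1)$. Take $M = \sem{\TyQubit} = \yoneda(2)$ and $N = \TyUnit + \TyUnit = \neg\neg(\yoneda(1)\amalg\yoneda(1))$. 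Since triple negation equals negation, $\neg N \cong \neg(\yoneda(1)\amalg\yoneda(1))$, and using that $\ptensor$ preserves coproducts and $\yoneda(n)\ptensor\yoneda(1)\cong\yoneda(n)$ one gets $(\neg(\yoneda(1)\amalg\yoneda(1)))_n \cong \pshCQ(\yoneda(n)\amalg\yoneda(n), \yoneda(1)) \cong \CQ(n,1)^2 = (I\times I)_n$, i.e. $\neg N \cong I\times I$. Through the representability of $\Bilin(\yoneda(2),-;I)$, an element of $\classicalCQ(\yoneda(2),\TyUnit+\TyUnit)$ is then a bilinear family $\beta$, which by $\CQ$-naturality (and bilinearity of the $\CQ$-action) is determined by the maps $\beta_{2,m}(\ident_2,-)\colon \CQ(m,1)^2 \to \CQ(2m,1)$ for $m \in \Nat$; the sum of two such families is defined as soon as, for every $m$ and every $(\varphi,\psi) \in \CQ(m,1)^2$, the pointwise sum in $\CQ(2m,1)$ is defined. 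So it suffices to prove that $\beta_{2,m}(\ident_2,(\varphi,\psi)) + \beta'_{2,m}(\ident_2,(\varphi,\psi))$ is a superoperator, where $\beta$ and $\beta'$ are the images of $\mathit{inl}\circ\yoneda(\theta_0)$ and $\mathit{inr}\circ\yoneda(\theta_1)$ respectively.

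Second, I would identify these two values explicitly. The coproduct injections of $\classicalCQ$ factor as $\mathit{inl} = \eta\circ\Inj_1$ and $\mathit{inr} = \eta\circ\Inj_2$, with $\eta\colon \yoneda(1)\amalg\yoneda(1)\to\neg\neg(\yoneda(1)\amalg\yoneda(1))$ the unit of the continuation monad, i.e. the canonical evaluation map. Tracing $\eta$ through the isomorphisms above, and identifying an element $(\varphi,\psi)\in(I\times I)_m \cong (\neg(\yoneda(1)\amalg\yoneda(1)))_m$ with the copairing $[\varphi,\psi]\colon \yoneda(1)\amalg\yoneda(1)\to I$, the value $\beta_{2,m}(\ident_2,(\varphi,\psi))$ is the evaluation of $[\varphi,\psi]$ at $\Inj_1(\theta_0) = (\theta_0,0)\in(\yoneda(1)\amalg\yoneda(1))_2$, which is $\theta_0\otimes\varphi\in\CQ(2\otimes m,1)$; symmetrically $\beta'_{2,m}(\ident_2,(\varphi,\psi)) = \theta_1\otimes\psi$. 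Thus the claim reduces to: for all $m$ and all $\varphi,\psi\in\CQ(m,1)$, the map $\theta_0\otimes\varphi + \theta_1\otimes\psi$ is a superoperator.

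Third, the estimate. The map $\theta_0\otimes\varphi + \theta_1\otimes\psi\colon\Mat_{2m}(\Complex)\to\Complex$ is completely positive, being a sum of completely positive maps, so only trace-non-increase is at issue. Write a positive $\rho\in\Mat_{2m}(\Complex)\cong\Mat_2(\Complex)\otimes\Mat_m(\Complex)$ in $2\times2$ block form with $\Mat_m(\Complex)$-entries; its diagonal blocks $\rho_{00},\rho_{11}$ are positive and $\trace\rho = \trace\rho_{00} + \trace\rho_{11}$. Since $\theta_0$ and $\theta_1$ extract the $(0,0)$- and $(1,1)$-entries, $(\theta_0\otimes\varphi + \theta_1\otimes\psi)(\rho) = \varphi(\rho_{00}) + \psi(\rho_{11}) \le \trace\rho_{00} + \trace\rho_{11} = \trace\rho$, using $\opnorm{\varphi}\le1$ and $\opnorm{\psi}\le1$. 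Hence $\opnorm{\theta_0\otimes\varphi + \theta_1\otimes\psi}\le1$, the map lies in $\CQ(2m,1)$, and the sum is defined; for $\varphi = \psi$ equal to the trace map the result is the trace map on $\Mat_{2m}(\Complex)$, so the ``total probability'' is in fact preserved, not merely bounded.

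The main obstacle I anticipate is the bookkeeping in the second step: carefully chasing the unit $\eta$ of the continuation monad and the summand inclusions $\Inj_i$ through $\classicalCQ(\yoneda(2),\TyUnit+\TyUnit) \cong \pshCQ(\yoneda(2)\ptensor\neg(\yoneda(1)\amalg\yoneda(1)),I) \cong \Bilin(\yoneda(2),I\times I;I)$ to justify the identification $\beta_{2,m}(\ident_2,(\varphi,\psi)) = \theta_0\otimes\varphi$ and its counterpart $\theta_1\otimes\psi$. Once the summands are in this explicit tensor form the remaining estimate is routine, and it is precisely the convexity --- in effect, trace-non-increase --- that $\TyUnit+\TyUnit$ inherits from $I = \yoneda(1)$ via the double negation.
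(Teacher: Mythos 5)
Your proof is correct and follows essentially the same route as the paper's: both reduce definedness of the sum to testing against elements of \( \neg(\TyUnit+\TyUnit) \) at every level --- your explicit identification \( \neg(\TyUnit+\TyUnit) \cong I \times I \) is just the paper's restriction of a test morphism \( f \) along the two coproduct injections, followed by Yoneda --- and both conclude with the same operator-norm estimate, phrased in the paper via \( \theta_0+\theta_1 = \trace \) and in your write-up via the block decomposition \( \trace\rho = \trace\rho_{00}+\trace\rho_{11} \). No gap.
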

\begin{proof}
    By a general convergence criterion for sums of morphisms in \( \classicalCQ \), it suffices to show
    \begin{equation*}
        f \circ ((\mathit{inl} \circ \yoneda(\theta_0)) \ptensor \yoneda(n)) + f \circ ((\mathit{inr} \circ \yoneda(\theta_1)) \ptensor \yoneda(n))
    \end{equation*}
    converges for every \( n \in \CQ \) and \( f \in \pshCQ((\TyUnit+\TyUnit) \ptensor \yoneda(n), \yoneda(1)) \).
    Letting \( g_0 \defe f \circ (\mathit{inl} \ptensor \yoneda(n)) \) and \( g_1 \defe f \circ (\mathit{inr} \ptensor \yoneda(n)) \), it suffices to check the convergence of
    \begin{equation*}
        g_0 \circ (\yoneda(\theta_0) \ptensor \yoneda(n)) + g_1 \circ (\yoneda(\theta_1) \ptensor \yoneda(n))
        \quad\Keq\quad
        g_0 \circ \yoneda(\theta_0 \otimes \ident_n) + g_1 \circ \yoneda(\theta_1 \otimes \ident_n)
    \end{equation*}
    for \( g_0, g_1 \in \pshCQ(\yoneda(n), \yoneda(1)) \).
    Let \( \gamma_0,\gamma_1 \in \CQ(n,1) \) such that \( g_i = \yoneda(\gamma_i) \) (for \( i = 0,1 \)).
    Then, the problem is to ensure that the completely positive map
    \begin{equation*}
        \gamma_0 \circ (\theta_0 \otimes \ident_n) + \gamma_1 \circ (\theta_1 \otimes \ident_n)
    \end{equation*}
    belongs to \( \CQ(2 \otimes n, 1) \), \ie~its operator norm \( \le 1 \).
    Assume \( x \in \CQ(1,2n) \), regarded as a positive self-adjoint \( (2n \times 2n) \)-matrix.
    Since \( \theta_0 + \theta_1 \in \CQ(2,1) \) is the trace \( \left( \begin{array}{cc} a & b \\ c & d \end{array} \right) \mapsto a + d \), we have \( 1 \ge \trace(x) = \trace((\theta_0 \otimes \ident_n)(x)) + \trace((\theta_1 \otimes \ident_n)(x)) \).
    As \( \gamma_0 \) and \( \gamma_1 \) are trace-non-increasing, \( 1 \ge \trace((\gamma_0 \circ (\theta_0 \otimes \ident_n))(x)) + \trace((\gamma_1 \circ (\theta_1 \otimes \ident_n))(x)) = \trace(((\gamma_0 \circ (\theta_0 \otimes \ident_n)) + (\gamma_1 \circ (\theta_1 \otimes \ident_n)))(x))\).
    This means that \( (\gamma_0 \circ (\theta_0 \otimes \ident_n)) + (\gamma_1 \circ (\theta_1 \otimes \ident_n)) \) has the operator norm \( \le 1 \).
\end{proof}

\begin{remark}
    One can compare the canonical matrix representation of the interpretation of a term in \( \classicalCQ \) with the interpretation in the Pagani-Selinger-Varilon model~\cite{Pagani2014}, which is a matrix by definition.
    Although we have not yet checked the details, we conjecture that the two interpretations coincide on terms without recursive types.
    Furthermore, we expect that the Pagani-Selinger-Valiron model~\cite{Pagani2014} can interpret recursive types as well, providing a fully abstract model of Quantum FPC.
\qed
\end{remark}

\subsection{Soundness}
The soundness of the interpretation easily follows from the fact that \( \classicalCQ \) is a model of linear logic and that \( \mathit{fold} \) and \( \mathit{unfold} \) are the inverses of each other.
\begin{theorem}[Soundness]\label{thm:soundness-finite-step}
    If \( \QCloS \stackrel{*}{\rightsquigarrow} \sum_i \QCloS_i \),
    then \( \sem{\QCloS} = \sum_i \sem{\QCloS_i} \).
\end{theorem}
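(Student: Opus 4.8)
The plan is to first prove a single-step version of the statement, namely that \( \QCloS \rightsquigarrow \sum_i \QCloS_i \) implies \( \sem{\QCloS} = \sum_i \sem{\QCloS_i} \) in the \( \Sigma \)-monoid \( \classicalCQ(\TyUnit,\TyUnit) \), and then to obtain the multi-step statement by induction on the derivation of \( \stackrel{*}{\rightsquigarrow} \). The induction is routine: the base case \( \QCloS \stackrel{*}{\rightsquigarrow} \QCloS \) is trivial, and in the inductive step, from \( \sem{\QCloS} = \sem{\QCloS_0} + \sum_{i=1}^{k} \sem{\QCloS_i} \) (induction hypothesis) and \( \sem{\QCloS_0} = \sum_{j=1}^{m} \sem{\QCloS'_j} \) (single-step version), the commutativity and associativity axioms of the \( \Sigma \)-monoid \( \classicalCQ(\TyUnit,\TyUnit) \) yield \( \sem{\QCloS} = \sum_{j=1}^{m} \sem{\QCloS'_j} + \sum_{i=1}^{k} \sem{\QCloS_i} \). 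No convergence issue arises: all sums here are finite, and the right-hand side is forced to be defined by the equation itself. (Throughout we use that the rewriting system preserves types, so both sides are genuine morphisms \( \TyUnit \longrightarrow \TyUnit \).)

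For the single-step version I would do a case analysis on the rules of \cref{fig:quantum-one-step-reduction}, supported by two auxiliary facts. First, a \emph{context lemma}: since \( \sem{-} \) is defined by structural recursion and the hole of an evaluation context \( E \) is linear, \( \sem{E[u]} \) is a composite of the interpretation of the hole-filler \( u \) with interpretations of the remaining pieces of \( E \); hence replacing \( u \) by a term with the same type judgement and the same interpretation — or by a sum of such — does not change \( \sem{E[u]} \), and this propagates to \( \sem{\QCloS} = \sem{t} \circ (\yoneda(\varphi) \ptensor \sem{\Xi}) \). Second, the usual \emph{substitution lemma} of the linear \( \lambda \)-calculus, relating \( \sem{t[v/x]} \) and \( \sem{t[\TBang{s}/c]} \) (for a \( ! \)-variable \( c \)) to composition with \( \sem{v} \), resp.\ with \( \sem{\TBang{s}} \) via the comonoid structure of \( {!} \). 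Granting these, each reduction rule collapses to an equational law valid in the linear logic model \( \classicalCQ \) (\cref{thm:classical-structure}): the \( \beta \)-law for \( \lambda \)/application, the unit law for \( (\TUnitVal;t) \), the \( \ptensor \)-elimination law for \( \TLet{x\otimes y}{v\otimes w}{t} \), the coproduct \( \beta \)-law for pattern matching, the counit (dereliction) law of \( {!} \) for \( \TDer\,c \), and the mutual inverseness of \( \mathit{fold} \) and \( \mathit{unfold} \) for \( \TUnfold(\TFold\,v) \). The rule \( E[\TBang{t}] \rightsquigarrow (\Xi, c=\TBang{t}),\, E[c] \) merely relocates a subterm into the explicit-substitution component, so it follows from the \( ! \)-substitution lemma and associativity of composition in the formula defining \( \sem{\QCloS} \).

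The quantum rules need a little extra bookkeeping. For \( \qket{0} \) and \( U \) I would use \( \sem{\ket{0}} = \yoneda(\varphi_0) \), \( \sem{U} = \yoneda(\psi_U) \), functoriality of \( \yoneda \), and strong monoidality of \( \yoneda \) (so \( \yoneda(\psi \otimes \ident) = \yoneda(\psi) \ptensor \ident \) and \( \yoneda(\varphi \otimes \varphi_0) = \yoneda(\varphi) \ptensor \yoneda(\varphi_0) \)), together with the observation that allocating a fresh qubit \( q' \) initialised to \( \varphi_0 \) amounts exactly to composing with \( \sem{\ket{0}} \). The interesting case is measurement, whose right-hand side is a genuine two-term sum. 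Here I would push the equation \( \sem{\QMeas} = (\mathit{inl}\circ\yoneda(\theta_0)) + (\mathit{inr}\circ\yoneda(\theta_1)) \) through \( \sem{E[\bullet]} \) and through the qubit context using bilinearity of composition and of \( \ptensor \); the two resulting summands are precisely \( \sem{E[\TInl\,\TUnitVal]} \circ (\yoneda((\theta_0\otimes\ident)\circ\varphi) \ptensor \sem{\Xi}) \) and \( \sem{E[\TInr\,\TUnitVal]} \circ (\yoneda((\theta_1\otimes\ident)\circ\varphi) \ptensor \sem{\Xi}) \), again by functoriality and strong monoidality of \( \yoneda \) (with the \( \yoneda(\theta_i) \) collapsing the measured qubit factor to the monoidal unit). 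The delicate point is that this intermediate sum is genuinely \emph{defined} in \( \classicalCQ \); this is supplied by the lemma just established — definedness of \( (\mathit{inl}\circ\yoneda(\theta_0)) + (\mathit{inr}\circ\yoneda(\theta_1)) \) in \( \classicalCQ(\TyQubit,\TyUnit+\TyUnit) \) — together with bilinearity of the model's operations. I expect this definedness, rather than any individual equational law, to be the main technical obstacle, since the measurement rule is the only place where the classical (bi-orthogonality) structure of \( \classicalCQ \), as opposed to the mere intuitionistic-linear-logic structure of \( \pshCQ \), is genuinely needed.
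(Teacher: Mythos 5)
Your proposal is correct and follows essentially the same route as the paper's (much terser) proof: reduce to the single-step case, handle the qubit-state and $!$-assignment components of a closure as let-bindings so that the standard equational laws of the linear logic model $\classicalCQ$ apply, and treat measurement separately via bilinearity/distributivity of composition together with the previously established definedness of $(\mathit{inl}\circ\yoneda(\theta_0))+(\mathit{inr}\circ\yoneda(\theta_1))$. Your identification of the measurement case — and specifically the definedness of the two-term sum in $\classicalCQ$ rather than any individual equation — as the only genuinely non-standard point matches the paper's emphasis exactly.
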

\begin{proof}
    It suffices to prove the claim for the single-step reduction \( \QCloE{\vec{p}}{\varphi}{\Xi}{t} \rightsquigarrow \sum_i \QCloE{\vec{p}_i}{\varphi_i}{\Xi_i}{t_i} \).
    The single-step case can be shown by the standard technique since \( \classicalCQ \) is a model of linear logic.
    Since \( \vec{p}=\varphi \) and \( (c_i = \TBang{t_i})_{i = 1}^k \) parts in a closure \( \QCloE{\vec{p}}{\varphi}{(c_i = \TBang{t_i})_{i = 1}^k}{t} \) can be seen as the let-binding \( \mathtt{let}\,\vec{p} = \varphi, c_1 = \TBang{t_1},\dots, c_k = \TBang{t_k}\,\mathtt{in}\,t \), this feature is in the scope of the standard technique.
    This let-binding feature also shows the preservation of the interpretation by the rules for \( \ket{0} \) and \(U\).
    The rule for the measurement \( \QMeas \) relies on the distributive law of composition.
\end{proof}

\begin{corollary}\label{cor:sundness-program}
    \( \mathrm{Pr}(t \Downarrow \TUnitVal) \le \sem{t} \) for every program \( {}\vdash t \colon \TyUnit \). 
\eqed
\end{corollary}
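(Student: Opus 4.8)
The plan is to derive the inequality from the finite-step soundness theorem (\cref{thm:soundness-finite-step}) together with the order structure on $\classicalCQ(I,I)$. First I would record two easy facts. Since $\sem{\TyUnit} = \yoneda(1) = I$, the denotation $\sem{t}$ lives in $\classicalCQ(I,I)$, and as $\classicalCQ$ is a full subcategory of $\pshCQ$, the enriched Yoneda Lemma gives $\classicalCQ(I,I) = \pshCQ(I,I) \cong \CQ(1,1) = [0,1]$; hence the claimed inequality is an inequality of real numbers, and the supremum defining $\mathrm{Pr}(t\Downarrow\TUnitVal)$ as well as the infinite sums below are computed in the $\omega$CPO $[0,1]$ (\cref{lem:omega-cpo-enrichment}). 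Second, $\sem{[t]} = \sem{t}$: by definition $[t] = \QCloE{\langle\rangle}{\ident}{\langle\rangle}{t}$ is interpreted as $\sem{t}\circ(\yoneda(\ident_1)\otimes{-})$ with an empty list of $!$-components, i.e.\ as $\sem{t}\circ\ident_I$.

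Next I would establish the small lemma $\sem{\QCloValS} = \varphi$ for a value closure $\QCloValS = \QCloE{\langle\rangle}{\varphi}{(c_i=\TBang{s_i})_{i=1}^m}{\TUnitVal}$. By definition $\sem{\QCloValS} = \sem{\TUnitVal}\circ(\yoneda(\varphi)\otimes\sem{\TBang{s_1}}\otimes\dots\otimes\sem{\TBang{s_m}})$, where $\sem{\TUnitVal}$ is the interpretation of $\TUnitVal$ in the context $c_1\colon{!}A_1,\dots,c_m\colon{!}A_m$, hence an iterated weakening that discards every ${!}A_i$ via its comonoid counit. The composite of this weakening with $\sem{\TBang{s_i}}$ equals $\ident_I$: weakening is the degree-$0$ projection $\Proj^{(0)} = \bra{\langle\rangle}$, and \cref{thm:exponential-based-setting-formal-power-series}(5) gives $\bra{\langle\rangle}\,\sem{s_i}^{\cofreeexp} = \ident_I$ (the empty tensor), which carries over to the classical promotion $\sem{\TBang{s_i}}$. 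What remains is $\yoneda(\varphi) = \varphi \in [0,1]$.

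Then I would fix an arbitrary reduction $[t]\stackrel{*}{\rightsquigarrow}\sum_i\QCloS_i + \sum_j\QCloValS_j$ with $\QCloValS_j = \QCloE{\langle\rangle}{\varphi_j}{\Xi_j}{\TUnitVal}$, as in the definition of $\mathrm{Pr}(t\Downarrow\TUnitVal)$. Applying \cref{thm:soundness-finite-step} to the full multiset sum and using the two observations above yields $\sem{t} = \sem{[t]} = \sum_i\sem{\QCloS_i} + \sum_j\varphi_j$; in particular this sum is defined in the $\Sigma$-monoid $[0,1]$, so by commutativity and associativity of the sum and the definition of the canonical order we get $\sum_j\varphi_j \le \sem{t}$. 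Since $\mathrm{Pr}(\QCloValS_j\Downarrow\TUnitVal) = \varphi_j$, the quantity $\mathrm{Pr}(t\Downarrow\TUnitVal)$ is precisely the supremum over all such reductions of the numbers $\sum_j\varphi_j$, each of which is $\le\sem{t}$; as $\sem{t}$ is thus an upper bound and $[0,1]$ is a complete lattice, the supremum is $\le\sem{t}$.

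The argument is essentially routine once \cref{thm:soundness-finite-step} is available; the only step needing a little care is the lemma $\sem{\QCloValS} = \varphi$, i.e.\ that the accumulated bindings $(c_i = \TBang{s_i})$ contribute trivially to the denotation of a value closure. I would state and prove this as a separate lemma, unwinding the interpretation of $\TUnitVal$ as an iterated weakening and invoking the description of promotion in \cref{thm:exponential-based-setting-formal-power-series}; everything else is bookkeeping with the $\Sigma$-monoid axioms and the $\omega$CPO structure of $[0,1]$.
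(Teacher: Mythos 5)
Your proof is correct and is exactly the argument the paper intends: the corollary is stated with no written proof precisely because it is the routine consequence of \cref{thm:soundness-finite-step} that you spell out (apply soundness to each finite reduction, note that a value closure denotes its probability $\varphi_j$ because weakening composed with a promotion is the identity on $I$, drop the remaining summands via the canonical order, and take the supremum in $[0,1]$). The only step needing any care is indeed the lemma $\sem{\QCloValS}=\varphi$, and your justification of it is sound.
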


\subsection{Adequacy}\label{sec:adequacy}
The adequacy proof is based on finite approximations.
In the case of the \( \lambda Y \)-calculus, it is often useful to view the fixed-point operator \( Y \) as the limit of its finite approximation \( Y_n \) (where \( Y_n\,f = f\,(f\,(\dots f(\mathit{diverge}))\,\dots) \), the \( n \)-fold application of \( f \)). Since the finite approximation \( Y_n \) can be expressed by a term of the simply-typed \( \lambda \)-calculus, the term obtained by substituting \( Y \) for \( Y_n \) is a simply-typed \( \lambda \)-term, so its adequacy is a consequence of the soundness. Then the adequacy for terms with fixed-points follows from the continuity of the interpretation.
Our strategy is similar, we shall approximate the linear exponential comonad instead of (type-level) recursion.

We consider an extension of the calculus, in which the exponential construct \( \TBang{t} \) is annotated by \( \alpha \in \Nat \cup \{ \infty \} \). The annotation represents the number of times this term is usable.
Hence
\begin{align*}
    \QCloE{\vec{q}}{\varphi}{(\Xi, c=\TBang{t}_{k+1},\Xi')}{E[\TDer\,c]}
    &\rightsquigarrow
    \QCloE{\vec{q}}{\varphi}{(\Xi, c=\TBang{t}_{k}, \Xi')}{E[t]}
    \\
    \QCloE{\vec{q}}{\varphi}{(\Xi, c=\TBang{t}_{0}, \Xi')}{E[\TDer\,c]}
    &\rightsquigarrow
    \QCloE{\vec{q}}{0}{(\Xi, c=\TBang{t}_{0}, \Xi')}{E[t]}
\end{align*}
(where \( \infty + 1 = \infty \)).
By the former rule, \( (\TDer\,c) \) is replaced with \( t \) and the annotation decreases by \( 1 \).
The latter case tries to use \( c \) beyond the limit, resulting in a meaningless term whose interpretation is \( 0 \).
Even in the latter case, we do not stop the reduction itself in order to ease the comparison with the original operational semantics.

\newcommand{\BProj}{\mathsf{P}}
The interpretation of \( \TBang{t}_\alpha \colon {!}A \) is given as follows:
\begin{equation*}
    \textstyle
    \sem{\TBang{t}_\alpha} \defe \BProj^{(<\alpha+1)} \sem{\TBang{t}}
    \qquad
    \mbox{where}
    \quad
    \BProj^{(<\alpha')} \defe \sum_{b \in \Base{!A}, |b| < \alpha'} \ket{b}\bra{b}
    \:\:\colon\:\:
    {!}A \longrightarrow {!}A,
\end{equation*}
where \( |b| \) for \( b = (a_1\dots a_k) \in \Base{!A} \) is the length \( k \) of \( b \).
It is the standard interpretation followed by the projection to the specified length.
Note that \( \sem{\TBang{t}_\infty} = \sem{\TBang{t}} = \lim_{k \to \infty} \sem{\TBang{t}_k} \).
\begin{proposition}
    The soundness holds for the extended calculus:
    \( \QCloS \stackrel{*}{\rightsquigarrow} \sum_i \QCloS_i \) implies \( \sem{\QCloS} = \sum_i \sem{\QCloS_i} \).
\end{proposition}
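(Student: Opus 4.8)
The plan is to reduce to the single-step case exactly as in the proof of \cref{thm:soundness-finite-step}: by induction on the derivation of \( \QCloS \stackrel{*}{\rightsquigarrow} \sum_i \QCloS_i \) it suffices to treat \( \QCloS \rightsquigarrow \sum_i \QCloS_i \) for each reduction rule of the extended calculus. For every rule except the two new rules for \( \TDer\,c \) on an annotated box, the annotations are inert in the subterms the rule touches, so the argument is verbatim that of \cref{thm:soundness-finite-step}; the only extra case is the box-creation rule \( \QCloE{\vec q}{\varphi}{\Xi}{E[\TBang{t}_\alpha]} \rightsquigarrow \QCloE{\vec q}{\varphi}{(\Xi, c = \TBang{t}_\alpha)}{E[c]} \), which holds by the definition of the closure interpretation (the \( c \)-slot receives \( \sem{\TBang{t}_\alpha} \) and \( E[c] \) merely refers to it), degenerating to the original rule when \( \alpha = \infty \) since \( \sem{\TBang{t}_\infty} = \sem{\TBang{t}} \).

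The real content lies in the two new rules. Take the decrement rule, with \( \QCloE{\vec q}{\varphi}{(\Xi, c = \TBang{t}_{k+1}, \Xi')}{E[\TDer\,c]} \) reducing to \( \QCloE{\vec q}{\varphi}{(\Xi, c = \TBang{t}_{k}, \Xi')}{E[t]} \). Write \( x \defe \sem{t} \) and recall that \( \sem{\TBang{t}} \) is the \( \neg\neg \)-image of \( x^{\cofreeexp} \), so \( \sem{\TBang{t}_\beta} = \BProj^{(<\beta+1)} \circ \sem{\TBang{t}} \) is the truncation of \( x^{\cofreeexp} \) to power-series degrees \( \le \beta \). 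Exactly as in the soundness proof for the unannotated rule \( \TDer\,\TBang{t} \rightsquigarrow t \), the interpretations of the two closures, viewed as functions of the \( c \)-slot with all other data fixed, factor through an iterated comultiplication that distributes the \( c \)-box among its \( N \) occurrences in \( E[\TDer\,c] \) (respectively its \( N-1 \) occurrences in \( E[t] \)), the hole occurrence additionally passing through \( \Der \) on the left. So the step reduces to the single identity
\begin{equation*}
    (\Der \ptensor \ident) \circ \delta^{(N-1)} \circ \BProj^{(<k+2)} \circ \sem{\TBang{t}}
    \quad=\quad
    \bigl( x \ptensor (\delta^{(N-2)} \circ \BProj^{(<k+1)} \circ \sem{\TBang{t}}) \bigr) \circ {\cong}
\end{equation*}
of morphisms \( \yoneda(1) \longrightarrow \sem{A} \ptensor ({!}\sem{A})^{\ptensor (N-1)} \), where \( \delta^{(j)} \) is the \( j \)-fold comultiplication, \( \cong \) the canonical \( \yoneda(1) \cong \yoneda(1) \ptensor \yoneda(1) \), and for \( N = 1 \) the right-hand side reads simply \( x \). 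The limit rule, in which the residual closure's state is the zero morphism and hence its interpretation is \( 0 \), requires instead \( (\Der \ptensor \ident) \circ \delta^{(N-1)} \circ \BProj^{(<1)} \circ \sem{\TBang{t}} = 0 \), which is immediate since \( \BProj^{(<1)} \) keeps only the degree-\( 0 \) component and \( \Der \) annihilates it.

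To establish the identity I would unfold the presentation of \( \cofreeexp M \) from \cref{thm:exponential-based-setting-formal-power-series}. Using the compatibility of \( \delta \) with the projections \( \Proj^{(n)} \) (part~2) together with the degree-component formulas of \( x^{\cofreeexp} \) (parts~3 and~5), one expands \( \delta^{(N-1)} \circ \BProj^{(<k+2)} \circ \sem{\TBang{t}} \) as a sum over degree vectors \( (j_1,\dots,j_N) \) with \( j_1 + \dots + j_N \le k+1 \); then \( (\Der \ptensor \ident) \) annihilates every summand with \( j_1 \ne 1 \) and, on those with \( j_1 = 1 \), replaces the first factor by \( x = \Der \circ (\text{degree-}1\text{ part of } x^{\cofreeexp}) \), leaving the sum over \( (j_2,\dots,j_N) \) with \( j_2 + \dots + j_N \le k \), which is \( \delta^{(N-2)} \circ \BProj^{(<k+1)} \circ \sem{\TBang{t}} \). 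I expect the main obstacle to be tracking the symmetrization coefficients \( 1/\#\mathsf{fix}(\cdots) \) from the basis in \cref{thm:exponential-based-setting-formal-power-series}(3) through this reindexing; a secondary, routine nuisance is that the whole computation lives after the \( \neg\neg \)-wrapper, which is harmless because \( \BProj^{(<\cdot)} \), \( \delta \) and \( \Der \) on \( {!}M = \neg\neg\cofreeexp M \) are the \( \neg\neg \)-images of the corresponding maps on \( \cofreeexp M \). Everything outside this identity is a transcription of the proof of \cref{thm:soundness-finite-step}.
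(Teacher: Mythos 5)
Your proof is correct and follows essentially the same route as the paper: the paper's entire argument consists of the two identities \( (\Der_A \otimes \ident_{!A}) \circ \delta \circ \BProj^{(<\alpha+1)} = (\Der_A \otimes \BProj^{(<\alpha)}) \circ \delta \) and \( (\Der \otimes \ident_{!A}) \circ \delta \circ \BProj^{(0)} = 0 \), which are the binary-comultiplication form of your key identity (equivalent to your \( N \)-ary version by coassociativity, and slightly cleaner since they are local to the box and avoid counting the occurrences of \( c \)). Your degree-by-degree verification via \cref{thm:exponential-based-setting-formal-power-series} supplies more detail than the paper, which simply asserts the identities.
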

\begin{proof}
    The preservation of the semantics by the additional duplication-controlled reduction rules follows from     \(
        (\Der_A \otimes \ident_{!A}) \circ \delta \circ \BProj^{(<\alpha+1)} = (\Der_A \otimes \BProj^{(<\alpha)}) \circ \delta
    \) and \( (\Der \otimes \ident_{!A}) \circ \delta \circ \BProj^{(0)} = 0 \).
\end{proof}

A term \( t \) is \emph{finitary} if all annotations it involves are finite.
The \emph{size} of a finitary term \( t \) is defined almost as usual but
\( \mathit{size}(\TBang{t}_k) \defe 1 + k \times (\mathit{size}(t) + 1) \).
The size of \( \QCloE{\vec{q}}{\varphi}{(c_i = \TBang{t_i}_{\alpha_i})_{i = 1}^k}{t} \) is \( \mathit{size}(t) + \sum_{i = 1}^k (\mathit{size}(\TBang{t_i}_{\alpha_i}) - 1) \).
A closure \( \QCloE{\vec{q}}{\varphi}{\Xi}{t} \) is \emph{zero} if \( \varphi = 0 \).
The reduction of a finitary term terminates in the following sense.
\begin{lemma}
    Let \( t \) be a finitary term.
    Then \( [t] \stackrel{*}{\rightsquigarrow} \sum_i \QCloS_i \) such that each \( \QCloS_i \) is either value or zero.
\end{lemma}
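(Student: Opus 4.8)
The plan is to exhibit one terminating reduction sequence explicitly, which suffices since the statement is existential. First I would fix the \emph{leftmost-innermost} reduction strategy, counting a subterm $\TBang{s}$ as a redex (namely the one thunked by the $\TBang$-rule). Because $\TInl$, $\TInr$, $\TFold$, $\TUnfold$, $U$, $\QMeas$ and $\TDer$ are constants of function type and a $\lambda$-abstraction is a value, any $\TBang{s}$ occurring as an argument, as a component of a $\otimes$-value, or inside the subject of a $\mathtt{let}$ or $\mathtt{match}$ sits in an evaluation position, hence is a redex strictly inside the enclosing $\beta$-, $\mathtt{let}$- or $\mathtt{match}$-redex; so the strategy thunks it first. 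Combined with linearity of Quantum FPC --- a bound variable occurring at least twice necessarily has a $!$-type, and the only available values of a $!$-type are of the form $\TBang{\cdot}$ or a variable --- this means that every substitution $[v/x]$ the strategy performs either replaces a once-or-zero-occurring variable by a single copy of a value, or replaces a multiply-occurring ($!$-typed) variable $x$ by another variable. So the strategy never duplicates anything but variables, and it plainly keeps every reachable closure finitary (all annotations stay in $\Nat$), so $\mathit{size}$ is always defined on it.

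Next I would set up a well-founded measure on finitary closures. Besides $\mathit{size}$ from the statement, let $\eta(\QCloS)$ count the occurrences of $\TBang$-constructors and of the constants $\qket{0}$ and $\QMeas$ in the term component of $\QCloS$. Set $\mu(\QCloS) \defe \bot$ when $\QCloS$ is a value closure or a zero closure, and $\mu(\QCloS) \defe (\mathit{size}(\QCloS), \eta(\QCloS))$ otherwise, where $\{\bot\} \cup (\Nat \times \Nat)$ carries the order with $\bot$ below everything and $\Nat \times \Nat$ lexicographic. This order is well-founded.

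The heart of the argument is: if $\QCloS$ is neither a value nor a zero closure and the strategy reduces $\QCloS \rightsquigarrow \QCloS_1 + \dots + \QCloS_r$, then $\mu(\QCloS_i) < \mu(\QCloS)$ for every $i$; this is a case analysis on the contracted rule. The $\beta$-, sequencing-, $\mathtt{let}$-, $\mathtt{match}$-, $\TUnfold$- and $U$-rules strictly decrease $\mathit{size}$: by the first paragraph the substitutions involved never increase $\mathit{size}$, while the contracted redex discards at least an application node (and $\mathtt{match}$ additionally discards a branch). The $\TBang$-thunking rule leaves $\mathit{size}$ exactly unchanged --- moving $\TBang{t}_k$ (of size $1 + k(\mathit{size}(t)+1)$) into $\Xi$ adds the $\Xi$-summand $\mathit{size}(\TBang{t}_k) - 1 = k(\mathit{size}(t)+1)$ while the variable it leaves behind has size $1$ --- and strictly decreases $\eta$. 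The $\qket{0}$- and $\QMeas$-rules replace an atom by an atom, so do not increase $\mathit{size}$, and they strictly decrease $\eta$. The $\TDer$-rule on a positive budget $k+1$ is exactly where the weight of $\TBang{t}_k$ pays off: the term side changes by $\mathit{size}(t) - \mathit{size}(\TDer\,c)$ and the $\Xi$-side by $-(\mathit{size}(t)+1)$, a net strict decrease of $\mathit{size}$. Finally the $\TDer$-rule on budget $0$ sets $\varphi$ to $0$, so the result is a zero closure with measure $\bot$. In every case $\mu$ strictly decreases.

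To conclude: starting from $[t]$, which is finitary, repeatedly choose a closure in the current (finite) multiset that is neither a value nor a zero closure and replace it by the summands of its strategy-reduction. By the previous paragraph each replacement strictly decreases the multiset in the Dershowitz--Manna multiset ordering over $\{\bot\} \cup (\Nat \times \Nat)$, which is well-founded, and every summand is again finitary so $\mu$ stays defined; hence the process halts, and it can halt only when every closure in the multiset is a value or a zero closure. The resulting sequence of replacements is precisely a derivation of $[t] \stackrel{*}{\rightsquigarrow} \sum_i \QCloS_i$ with each $\QCloS_i$ a value or zero closure. The main obstacle I anticipate is the duplication issue handled in the first paragraph: a naive $\mathit{size}$-induction fails because a $\beta$-redex applied to an explicit $\TBang{s}$-value could duplicate it, so one must commit to a thunk-first strategy; beyond that, the only genuine content is the two $\mathit{size}$ computations (for $\TBang$-thunking and for $\TDer$ on a positive budget) that the definition of $\mathit{size}(\TBang{t}_k)$ was designed to make work, together with the routine progress fact that any closure whose term is not $\TUnitVal$ has a leftmost-innermost redex.
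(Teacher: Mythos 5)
Your proof is correct and follows the same basic strategy as the paper's: terminate by a well-founded measure built on the specially weighted \(\mathit{size}\) function, where the weight \(\mathit{size}(\TBang{t}_k) = 1 + k(\mathit{size}(t)+1)\) is exactly what makes the \(\TDer\)-step on a positive budget decrease. The paper's own proof is a two-line assertion that \emph{every} one-step reduction either strictly decreases \(\mathit{size}\) or produces a zero closure, so that \(\mathit{size}(t)\) steps suffice; your write-up supplies two things that this assertion glosses over. First, the duplication issue: since \(\TBang{s}\) is a value of \(!\)-type, an unrestricted \(\beta\)-step can substitute it for a multiply-occurring \(!\)-typed variable and increase \(\mathit{size}\), so the universally quantified version of the paper's claim is not literally true; your thunk-first leftmost-innermost strategy (legitimate because the statement is existential) repairs this. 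Second, the rules that merely preserve \(\mathit{size}\) (thunking, \(\qket{0}\), \(\QMeas\), depending on the atom-size conventions hidden in ``defined almost as usual''): your lexicographic second component \(\eta\) handles these uniformly rather than relying on ad hoc conventions such as giving variables size \(0\). The cost is that you lose the paper's explicit bound of \(\mathit{size}(t)\) reduction steps, but the lemma does not need a bound, only termination, which your Dershowitz--Manna argument delivers. The remaining unproved ingredient in both versions is the progress fact that a non-value, non-zero closure has a redex in evaluation position; you correctly flag it as routine.
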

\begin{proof}
    The following property holds for the single-step reduction: if \( \QCloS \rightsquigarrow \sum_i \QCloS_i \), then \( \mathit{size}(\QCloS) > \mathit{size}(\QCloS_i) \) or \( \QCloS_i \) is zero.
    So reducing the closure \( [t] \) at most \( \mathit{size}(t) \) steps yields the desired situation.
\end{proof}

\begin{corollary}
\label{cor:finiteAdequacy}
    \( \mathrm{Pr}(t \Downarrow \TUnitVal) = \sem{t} \) for a finitary program \( {}\vdash t \colon \TyUnit \).
\qed
\end{corollary}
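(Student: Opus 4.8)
The plan is to combine the termination lemma for finitary terms (the Lemma above), the soundness of the extended calculus (the Proposition above), and \cref{cor:sundness-program}. Since \cref{cor:sundness-program} already gives $\mathrm{Pr}(t \Downarrow \TUnitVal) \le \sem{t}$, it remains to establish the reverse inequality. First I would apply the termination lemma to the finitary program \( {}\vdash t \colon \TyUnit \), obtaining a reduction \( [t] \stackrel{*}{\rightsquigarrow} \sum_i \QCloS_i \) to a \emph{finite} formal sum in which every \( \QCloS_i \) is a value closure or a zero closure. Applying soundness of the extended calculus to this reduction yields \( \sem{[t]} = \sum_i \sem{\QCloS_i} \) in \( \classicalCQ(\TyUnit,\TyUnit) = \CQ(1,1) = [0,1] \); the sum is finite, so no convergence issue arises. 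Since \( [t] = \QCloE{\langle\rangle}{\ident}{\langle\rangle}{t} \) we also have \( \sem{[t]} = \sem{t} \circ \yoneda(\ident) = \sem{t} \).

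Next I would compute the summands. If \( \QCloS_i \) is a zero closure \( \QCloE{\vec{q}}{0}{\Xi}{t'} \), its interpretation contains the factor \( \yoneda(0) = 0 \) (the Yoneda embedding preserves empty sums, being \( \SMon \)-enriched), and tensoring and composing with \( 0 \) gives \( \sem{\QCloS_i} = 0 \). If \( \QCloS_i \) is a value closure \( \QCloValS_j = \QCloE{\langle\rangle}{\varphi_j}{\Xi_j}{\TUnitVal} \), then a routine computation shows \( \sem{\QCloValS_j} = \varphi_j \): the body \( \TUnitVal \) is interpreted by the weakening that discards the \( {!} \)-components of \( \Xi_j \), and, by the comonoid laws for the linear exponential, \( w_A \circ \sem{\TBang{s}} = \ident \) for the counit \( w_A \colon {!}A \longrightarrow \yoneda(1) \) into the tensor unit and every \( {}\vdash \TBang{s}\colon {!}A \), so only the scalar \( \varphi_j \) survives. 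Hence \( \sem{t} = \sum_j \varphi_j = \sum_j \mathrm{Pr}(\QCloValS_j \Downarrow \TUnitVal) \), where \( j \) ranges over the value-closure summands.

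Finally, \( \sum_j \mathrm{Pr}(\QCloValS_j \Downarrow \TUnitVal) \) is, by definition, one of the quantities whose supremum is \( \mathrm{Pr}(t \Downarrow \TUnitVal) \), so \( \sem{t} \le \mathrm{Pr}(t \Downarrow \TUnitVal) \); together with \cref{cor:sundness-program} this gives the desired equality, and shows in passing that the supremum defining \( \mathrm{Pr}(t \Downarrow \TUnitVal) \) is attained by this particular finitary reduction. I do not anticipate a genuine obstacle: the substantive content --- that finitary reductions terminate and that the interpretation is invariant along reductions --- is already packaged in the Lemma and Proposition above, and the remaining work is the elementary bookkeeping that identifies \( \sem{\QCloValS} \) with its probability and the interpretation of a zero closure with \( 0 \).
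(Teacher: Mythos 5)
Your proposal is correct and is precisely the argument the paper intends: the corollary is stated with no written proof because it follows immediately from the termination lemma for finitary terms and the soundness proposition for the extended calculus, exactly as you combine them (zero closures contribute $0$, value closures contribute their scalars $\varphi_j$, and the exhibited reduction witnesses $\sem{t}\le\mathrm{Pr}(t\Downarrow\TUnitVal)$ while soundness gives the converse). The bookkeeping steps you spell out — $\sem{[t]}=\sem{t}$, $\sem{\QCloValS_j}=\varphi_j$ via the counit laws — are the right ones and raise no issues.
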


\begin{theorem}[Adequacy]
    \( \mathrm{Pr}(t \Downarrow \TUnitVal) = \sem{t} \) for a program \( {}\vdash t \colon \TyUnit \).
\end{theorem}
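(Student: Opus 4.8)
The plan is to prove the two inequalities separately.  The inequality \( \mathrm{Pr}(t \Downarrow \TUnitVal) \le \sem{t} \) is already available as \cref{cor:sundness-program}, so the work is in proving \( \sem{t} \le \mathrm{Pr}(t \Downarrow \TUnitVal) \).  The idea is to present \( t \) as a supremum of \emph{finitary} approximants, to which \cref{cor:finiteAdequacy} applies, and to transfer this supremum through the \( \omega \)CPO structure of \( \classicalCQ \).

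\textbf{Finitary approximants.}  First I would set, for each \( n \in \Nat \), \( t^{(n)} \) to be the term obtained from \( t \) by recursively replacing every subterm \( \TBang{u} \) by \( \TBang{u^{(n)}}_{n} \).  All annotations of \( t^{(n)} \) are finite, so \( t^{(n)} \) is finitary, and if \( {}\vdash t \colon \TyUnit \) is a program then so is \( t^{(n)} \); hence \( \sem{t^{(n)}} = \mathrm{Pr}(t^{(n)} \Downarrow \TUnitVal) \) by \cref{cor:finiteAdequacy}.  Note moreover that erasing all annotations from \( t^{(n)} \) returns \( t \).

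\textbf{Continuity and assembly.}  Next I would prove \( \sem{t} = \bigvee_{n} \sem{t^{(n)}} \) by induction on the structure of \( t \) (over open terms), together with the monotonicity \( \sem{t^{(n)}} \le \sem{t^{(n+1)}} \).  For variables and constants \( t^{(n)} = t \) and there is nothing to prove; the clauses for \( \multimap, \otimes, + \) and for \( \mathit{fold}, \mathit{unfold} \) follow because the corresponding semantic combinators and composition are \( \omega \)CPO-enriched, hence monotone and continuous (\cref{lem:omega-cpo-enrichment}), so they preserve increasing sequences and their suprema.  The decisive clause is \( \TBang{u} \): here \( \sem{(\TBang{u})^{(n)}} = \BProj^{(<n+1)} \circ \sem{\TBang{u^{(n)}}} \), and \( \BProj^{(<n+1)} = \sum_{b \in \Base{{!}A},\, |b| < n+1} \ket{b}\bra{b} \) is a partial sum of the total sum \( \sum_{b \in \Base{{!}A}} \ket{b}\bra{b} = \ident_{{!}A} \); since every \( b \in \Base{{!}A} \) has finite length, the index sets exhaust \( \Base{{!}A} \), so by \cref{lem:omega-cpo-enrichment} the family \( (\BProj^{(<n+1)})_{n} \) is increasing with \( \bigvee_{n} \BProj^{(<n+1)} = \ident_{{!}A} \).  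Since \( {!} \) is \( \omega \)CPO-enriched (\cref{lem:classical-exponential-cpo-enriched}) the operation \( v \mapsto \sem{\TBang{v}} \) is continuous, so the induction hypothesis gives \( \sem{\TBang{u^{(n)}}} \nearrow \sem{\TBang{u}} \), and continuity of composition in both arguments yields \( \BProj^{(<n+1)} \circ \sem{\TBang{u^{(n)}}} \nearrow \ident_{{!}A} \circ \sem{\TBang{u}} = \sem{\TBang{u}} \).  Then I would establish the operational comparison \( \mathrm{Pr}(t^{(n)} \Downarrow \TUnitVal) \le \mathrm{Pr}(t \Downarrow \TUnitVal) \): any reduction of \( [t^{(n)}] \) is simulated, step by step, by the reduction of \( [t] \) obtained by erasing annotations and never using the limit-exceeding clause, because the two dereference rules of the extended calculus both erase to the single dereference rule, every other rule erases to itself, and the limit-exceeding clause only multiplies the scalar \( \varphi \) by \( 0 \); hence given \( [t^{(n)}] \stackrel{*}{\rightsquigarrow} \sum_i \QCloS_i + \sum_j \QCloValS_j \) with \( \QCloValS_j = \QCloE{\langle\rangle}{\varphi_j}{\Xi_j}{\TUnitVal} \), the simulating reduction of \( [t] \) produces value closures with scalars \( \varphi'_j \ge \varphi_j \), so \( \sum_j \varphi_j \le \sum_j \varphi'_j \le \mathrm{Pr}(t \Downarrow \TUnitVal) \), and taking the supremum over all reductions of \( [t^{(n)}] \) gives the claim.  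Combining the three steps,
\begin{equation*}
    \sem{t} \;=\; \bigvee_{n} \sem{t^{(n)}} \;=\; \bigvee_{n} \mathrm{Pr}(t^{(n)} \Downarrow \TUnitVal) \;\le\; \mathrm{Pr}(t \Downarrow \TUnitVal) \;\le\; \sem{t},
\end{equation*}
so equality holds throughout.

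\textbf{Main obstacle.}  The hardest part is expected to be the operational comparison: one must make the erasure simulation fully precise in the presence of the explicit-substitution component \( \Xi \) and the probabilistic branching of \( \QMeas \), in particular checking that the bookkeeping of which \( \TBang{} \)-bindings remain usable is respected under simulation and that scalars only ever decrease along it.  The continuity step, by contrast, is essentially a bookkeeping exercise once \cref{lem:omega-cpo-enrichment} and \cref{lem:classical-exponential-cpo-enriched} are in hand, the only genuinely new ingredient being the identity \( \bigvee_{n} \BProj^{(<n+1)} = \ident_{{!}A} \).
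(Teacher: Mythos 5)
Your proposal is correct and follows essentially the same route as the paper: approximate the exponential by finitely-annotated terms, use the $\omega$CPO-continuity of all semantic constructs (in particular $\bigvee_n \BProj^{(<n+1)} = \ident$) to get $\sem{t}$ as the supremum of the approximants, apply the finitary adequacy corollary, and compare operationally with the unannotated term. The only cosmetic difference is that you use a single chain with a uniform annotation bound $n$ where the paper takes the directed family over all tuples of finite annotations; these have the same supremum, and your write-up merely fills in details (the structural induction for continuity and the erasure simulation) that the paper leaves implicit.
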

\begin{proof}
    Let \( t \) be a program containing \( k \) occurrences of \( \TBang{-} \) constructs, each of which is annotated by \( \alpha_i \in \Nat\cup\{\infty\} \) (\(i =1, \dots, k\)).
    Let \( A \defe \{\, (\beta_1,\dots,\beta_k) \in \Nat^k \mid \forall i.\ 
\beta_i \le \alpha_i
 \,\} \).
    For \( \vec{\beta} \in A \), we write \( t_{\vec{\beta}} \) as the term obtained by replacing the annotation \( \alpha_i \) with \( \beta_i \).
    Then \( \sem{t} = \bigvee^{\uparrow} \sem{t_{\vec{\beta}}} \) by the continuity of all constructs.
    The term \( t_{\vec{\beta}} \) is finitary, and hence \( \sem{t_{\vec{\beta}}} = \mathrm{Pr}(t_{\vec{\beta}} \Downarrow \TUnitVal) \) by \cref{cor:finiteAdequacy}.
    It is easy to see that \( \mathrm{Pr}(t_{\vec{\beta}} \Downarrow \TUnitVal) \le \mathrm{Pr}(t \Downarrow \TUnitVal) \).
    Hence \( \sem{t} \le \mathrm{Pr}(t \Downarrow \TUnitVal) \).
    The other direction is just \cref{cor:sundness-program}.
\end{proof}

\begin{remark}\label{rem:recursive-types-in-linear-setting}
    \citet{Lindenhovius2021} discusses a \textbf{CPO}-LNL model and its adequacy for a linear \( \lambda \)-calculus with recursive types.
    Unfortunately their general result is not applicable to our setting by two reasons.
    First its adequacy proof requires that, for every \( f_1, g_1 \colon I \longrightarrow A \) and \( f_2, g_2 \colon I \longrightarrow B \), if \( f_1 \neq 0 \neq g_1 \), \( f_2 \neq 0 \neq g_2 \) and \( f_1 \otimes f_2 \le g_1 \otimes g_2 \), then \( f_1 \le g_1 \) and \( f_2 \le g_2 \) (see \cite[Definition~7.1 and Remark~7.2]{Lindenhovius2021}), but \( \classicalCQ \) does not satisfy this property (consider, \( A = B = I \), \( f_1 = f_2 = (1/2)\ident \), \( g_1 = \ident \) and \( g_2 = (1/4)\ident \)).
    Second the reduction of Quantum FPC is branching.
    In fact, our attempt to give an adequacy proof based on logical relation has not been succeeded because of branching.
\qed
\end{remark}
 \section{Full Abstraction}
\label{sec:full-abstraction}
This section proves the full abstraction (\cref{thm:fullabstraction}) of \( \classicalCQ \) with respect to Quantum FPC.
The proof strategy is essentially the same as the proof of the full abstraction result for probabilistic PCF~\cite{Ehrhard2014,Ehrhard2018}, of which the technique is later applied to the quantum \( \lambda \)-calculus by \citet{Clairambault2020}, although our presentation may look slightly different.

The proof is based on the full definability argument.
A model is \emph{fully definable} if every morphism \( f \in \classicalCQ(\sem{A},\sem{B}) \) is the denotation of a term \( x \colon A \vdash t_f \colon B \).
A fully definable, well-pointed and adequate model is trivially fully abstract.
Unfortunately \( \classicalCQ \) is not fully definable, so we introduce a weaker variant sufficient for full abstraction and prove it.
\begin{definition}[Power series, definable-as-a-coefficient]
    Let \( A \) and \( B \) be closed types.
    A \emph{power series} in \( \classicalCQ(A,B) \) is a formal power series \( F(\xi) = \sum_{\alpha \in \Nat^k} F_\alpha \xi^\alpha \) with coefficient \( F_\alpha \) taken from \( \classicalCQ(A,B) \).
    The indeterminant \( \xi = (\xi_1,\dots,\xi_k) \) ranges over \( \Delta_k \defe \{\, (r_1,\dots,r_k) \in [0,1]^k \mid r_1 + \dots + r_k \le 1 \,\} \) and \( \xi^\alpha \) means \( \xi_1^{\alpha_1} \dots \xi_k^{\alpha_k} \).
The power series \( F \) is \emph{definable} if there exists a family \( (t_{\vec{r}})_{\vec{r}\in\Delta_k} \) of terms of type \( x \colon A \vdash t_{\vec{r}} \colon B \) such that \( F(\vec{r}) = \sem{t_{\vec{r}}} \) for every \( \vec{r} \in \Delta_k \).
    A morphism \( f \in \classicalCQ(A,B) \) is \emph{definable as a coefficient} if there exists a definable power series \( F(\xi) = \sum_{\alpha \in \Nat^k} F_\alpha \xi^\alpha \) in \( \classicalCQ(A,B) \) (with \(k \ge 0\)) such that \( f = F_{\alpha} \) for some \( \alpha \).
\qed  
\end{definition}
We extend operations on morphisms such as \( \otimes \) and \( ({-})^{!} \) to those on formal power series.
For example, given formal power series \( F \) and \( G \), the formal power series \( F \otimes G \) and \( F^! \) are unique ones satisfying \( (F \otimes G)(\vec{r}) = (F(\vec{r})) \otimes (G(\vec{r})) \) and \( (F^!)(\vec{r}) = (F(\vec{r}))^! \) for every \( \vec{r} \in \Delta_k \).
The existence of such a power series follows from a formal calculation in which the formal \( \sum \) is treated as the \(\Sigma\)-monoid sum.

The next lemma demonstrates the relevance of this notion to ``define'' a morphism to \( !A \).
For \( (r_1,\dots,r_k) \in \Delta_k \) and terms \( {} \vdash u_i \colon A \) (\( i = 1,\dots,k \)), we write \( r_1 \cdot u_1 + \dots + r_k \cdot u_k \) for the term that executes \( u_i \) with probability \( r_i \) (and immediately diverges with probability \( 1 - \sum_{i} r_i \)).
This term is definable by appropriately using fresh qubits and measurements.

\begin{lemma}\label{lem:power-of-power-series}
    Let \( A \) be a closed type, \( a_1,\dots,a_k \in \Base{A} \) and \( f_i \in \classicalCQ(\TyUnit, \BObj{a_i}) \) for \(i = 1,\dots,k\).
    Assume that \( (a_1 \dots a_k) \in \Base{!A} \) and that \( \ket{a_i} f_i \colon \TyUnit \longrightarrow A \) is definable as a coefficient for every \( i = 1,\dots,k \).
    Then \( \ket{a_1 \dots a_k} (f_1 \otimes \dots \otimes f_k) \colon \TyUnit \longrightarrow {!A} \) is definable as a coefficient.
\end{lemma}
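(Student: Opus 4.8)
The plan is to realise $\ket{a_1 \dots a_k}(f_1 \otimes \dots \otimes f_k)$ as a coefficient of the promotion of a single power series obtained by probabilistically mixing the witnesses of the $\ket{a_i}f_i$. Concretely, for each $i$ I would fix a definable power series $F^{(i)}(\xi^{(i)}) = \sum_\beta F^{(i)}_\beta (\xi^{(i)})^\beta$ in $\classicalCQ(\TyUnit,A)$, a defining family $(t^{(i)}_{\vec r})_{\vec r}$ with $\sem{t^{(i)}_{\vec r}} = F^{(i)}(\vec r)$, and a multi-index $\gamma_i$ with $\ket{a_i}f_i = F^{(i)}_{\gamma_i}$; by renaming indeterminates I may assume the blocks $\xi^{(1)},\dots,\xi^{(k)}$ are pairwise disjoint. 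I then add one more fresh block $\zeta = (\zeta_1,\dots,\zeta_k)$, set the rational constant $\lambda \defe 1/\#\mathsf{fix}(a_1\dots a_k) \in (0,1]$, and consider the family of closed terms
\[
    u_{\vec\zeta,\vec\xi}
    \quad\defe\quad
    \TBang{\big(\,(\lambda\zeta_1)\cdot t^{(1)}_{\xi^{(1)}} + \zeta_2\cdot t^{(2)}_{\xi^{(2)}} + \dots + \zeta_k\cdot t^{(k)}_{\xi^{(k)}}\,\big)}
    \quad\colon\quad {!}A
\]
indexed by points of the ambient simplex (scaling $\zeta_1$ down by $\lambda$ keeps the tuple inside it), using the definable probabilistic-choice term recalled above the lemma. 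Since $\sem{r_1\cdot s_1 + \dots + r_k\cdot s_k} = \sum_i r_i\sem{s_i}$ and $\sem{\TBang{s}}$ is the promotion $\sem{s}^!$, this family witnesses that the formal power series $H \defe G^!$ is definable, where $G \defe (\lambda\zeta_1) F^{(1)} + \zeta_2 F^{(2)} + \dots + \zeta_k F^{(k)}$ lives in $\classicalCQ(\TyUnit,A)$ and $(-)^!$ is extended to power series as noted before the lemma.

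It then remains to single out one coefficient of $H$. By item (5) of \cref{thm:exponential-based-setting-formal-power-series}, extended to power series (and transported to the classical $!$ through the $\neg\neg$-translation of bases of $!A$), for every $b = (b_1\dots b_m)\in\Base{!A}$ one has $\bra{b_1\dots b_m}H = (\bra{b_1}G)\otimes\dots\otimes(\bra{b_m}G)$ as power series. Expanding $\bra{b_j}G = \bra{b_j}\big((\lambda\zeta_1)F^{(1)}\big) + \sum_{i\ge 2}\zeta_i\,\bra{b_j}F^{(i)}$ and multiplying out the $m$-fold product, the monomial $\zeta_1\zeta_2\cdots\zeta_k\,(\xi^{(1)})^{\gamma_1}\cdots(\xi^{(k)})^{\gamma_k}$ can occur only when $m = k$: matching its $\zeta$-part forces the indices chosen at the positions $1,\dots,k$ to form a permutation $\pi\in\mathfrak{S}_k$, and disjointness of the $\xi$-blocks then forces the $\xi$-part at position $j$ to be $(\xi^{(\pi(j))})^{\gamma_{\pi(j)}}$, whence that factor contributes $\bra{b_j}F^{(\pi(j))}_{\gamma_{\pi(j)}} = \braket{b_j}{a_{\pi(j)}}\,f_{\pi(j)}$, together with one scalar $\lambda$ coming from the unique position with $\pi(j)=1$. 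So the coefficient $C$ of this monomial satisfies $\bra{b_1\dots b_k}C = \lambda\sum_{\pi\in\mathfrak{S}_k}\bigotimes_{j=1}^k \braket{b_j}{a_{\pi(j)}}\,f_{\pi(j)}$. By orthogonality of $\Base{A}$ only the $\pi$ with $b_j = a_{\pi(j)}$ for all $j$ survive, and since basis vectors of $!A$ are sorted sequences this forces $(b_1\dots b_k) = (a_1\dots a_k)$ and $\pi\in\mathsf{fix}(a_1\dots a_k)$; hence $\bra{b}C = 0$ for $b\neq(a_1\dots a_k)$ and $\bra{a_1\dots a_k}C = \lambda\sum_{\pi\in\mathsf{fix}(a_1\dots a_k)} f_{\pi(1)}\otimes\dots\otimes f_{\pi(k)}$ (using $a_{\pi(j)}=a_j$ and, if necessary, replacing $f_j$ by $\braket{a_j}{a_j}f_j$, which leaves $\ket{a_j}f_j$ unchanged). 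Using $\ident_{!A} = \sum_b\ket{b}\bra{b}$ together with the fact that $\ket{a_1\dots a_k}$ is insensitive to reordering its argument by a permutation in $\mathsf{fix}(a_1\dots a_k)$ (item (4) of \cref{thm:exponential-based-setting-formal-power-series}), I get $C = \ket{a_1\dots a_k}(\bra{a_1\dots a_k}C) = \lambda\,\#\mathsf{fix}(a_1\dots a_k)\cdot\ket{a_1\dots a_k}(f_1\otimes\dots\otimes f_k) = \ket{a_1\dots a_k}(f_1\otimes\dots\otimes f_k)$ by the choice of $\lambda$, so $\ket{a_1\dots a_k}(f_1\otimes\dots\otimes f_k)$ is definable as a coefficient.

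I expect the main obstacle to be the combinatorial bookkeeping of the second paragraph: correctly matching monomials across the $k$-fold product $(\bra{b_1}G)\otimes\dots\otimes(\bra{b_k}G)$, tracking the symmetry factor $\#\mathsf{fix}(a_1\dots a_k)$ that repeated basis vectors introduce, and reconciling it with the explicit descriptions of $\ket{a_1\dots a_k}$ and $\bra{a_1\dots a_k}$ in \cref{thm:exponential-based-setting-formal-power-series}. A secondary, routine point is verifying that the probabilistic-choice term and $\TBang{-}$ genuinely realise $G^!$ over the whole ambient simplex, so that $(u_{\vec\zeta,\vec\xi})$ does witness $H$ as a definable power series.
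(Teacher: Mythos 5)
Your construction is essentially the paper's: you promote the $\zeta$-weighted probabilistic mixture $\TBang{\zeta_1\cdot t^{(1)}+\dots+\zeta_k\cdot t^{(k)}}$ of the witnessing terms and read off $\ket{a_1\dots a_k}(f_1\otimes\dots\otimes f_k)$ as the coefficient of the multilinear monomial $\zeta_1\cdots\zeta_k\,\xi^{(1)\,\gamma_1}\cdots\xi^{(k)\,\gamma_k}$. Where you diverge is in how that coefficient is identified. The paper post-composes with the embedding $!A\hookrightarrow\prod_n A^{\otimes n}$, computes the coefficient of $\sum_n(\sum_i\zeta_iF_i)^{\otimes n}$ to be $\sum_{\sigma\in\mathfrak{S}_k}(\ket{a_{\sigma(1)}}f_{\sigma(1)}\otimes\dots\otimes\ket{a_{\sigma(k)}}f_{\sigma(k)})$, recognises this as $\Proj^{(k)}\ket{a_1\dots a_k}(f_1\otimes\dots\otimes f_k)$ via item~(3) of \cref{thm:exponential-based-setting-formal-power-series}, and concludes by monicity of the embedding (all other $\Proj^{(n)}$ vanish on both sides). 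You instead pair against the dual basis of $!A$ via item~(5) and reassemble through $\ident_{!A}=\sum_b\ket{b}\bra{b}$. Both routes are legitimate, and your combinatorial bookkeeping (only $m=k$ survives, the $\zeta$-part forces a permutation, block-disjointness of the $\xi$'s pins the $\xi$-part, orthogonality kills $\pi\notin\mathsf{fix}$) is correct.

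The one genuine problem is the normalisation $\lambda=1/\#\mathsf{fix}(a_1\dots a_k)$: it should not be there, and with it your term family actually defines $\tfrac{1}{\#\mathsf{fix}(a_1\dots a_k)}\ket{a_1\dots a_k}(f_1\otimes\dots\otimes f_k)$ rather than the target. The root cause is that item~(5) of \cref{thm:exponential-based-setting-formal-power-series}, taken literally, is inconsistent with item~(3): from item~(3) one gets $\bra{a_1\dots a_k}x^{\cofreeexp}=\tfrac{1}{\#\mathsf{fix}}(\bra{a_1}x)\otimes\dots\otimes(\bra{a_k}x)$, since $\Proj^{(k)}x^{\cofreeexp}=x^{\otimes k}$ (check with $M=I$, $\Base{M}=\{\ast\}$: $\braket{\ast^n}{\ast^n}=1$ forces $\bra{\ast^n}p^{\,\cofreeexp}=p^n/n!$, not $p^n$). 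With the correctly normalised bra, the $\#\mathsf{fix}$-fold repetition in your sum $\sum_{\pi\in\mathsf{fix}}$ is exactly cancelled by the $\tfrac{1}{\#\mathsf{fix}}$ already built into $\bra{a_1\dots a_k}$, so no external rescaling of $\zeta_1$ is needed; inserting $\lambda$ then undercounts by $\#\mathsf{fix}$. The fix is simply to set $\lambda=1$ (your symbolic derivation of $\bra{a_1\dots a_k}C$ then acquires the missing $\tfrac{1}{\#\mathsf{fix}}$ and everything closes up), or to avoid the bra-normalisation issue altogether by arguing through $\Proj^{(k)}$ and monicity as the paper does. For the downstream full-abstraction argument a positive rational multiple would in fact be harmless, since ``generates'' is closed under positive scalings, but the lemma as stated asserts exact definability.
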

\begin{proof}
    For each \( i \), let \( F_i(\xi_i) = \sum_{\alpha_i} F_{i,\alpha_i} \xi_i^{\alpha_i} \) be a definable power series such that \( \ket{a_i} f_i = F_{i,\alpha_i^{\star}} \) for some \( \alpha_i^{\star} \).
    Let \( \zeta = (\zeta_1,\dots,\zeta_k) \) be fresh indeterminates and \( F(\zeta,\xi_1,\dots,\xi_k) \defe (\zeta_1 F_1 + \dots + \zeta_k F_k)^! \).
    The image of \( F \) by the (extended) post-composition of the embedding \( !A \hookrightarrow \prod_n A^{\otimes n} \) is
    \begin{equation*}
        \textstyle\sum\nolimits_{n = 0}^{\infty} (\zeta_1 F_1 + \dots + \zeta_k F_k)^{\otimes n}
        \quad=\quad
        \textstyle\sum\nolimits_{n = 0}^{\infty} \sum\nolimits_{\varpi \colon \{ 1,\dots,n \} \to \{ 1, \dots, k \}} (\zeta_{\varpi(1)} F_{\varpi(1)} \otimes \dots \otimes \zeta_{\varpi(n)} F_{\varpi(n)}).
    \end{equation*}
    So its \( \zeta_1 \dots \zeta_k \) coefficient, which is still a power series, is
    \(
        \sum\nolimits_{\sigma \in \mathfrak{S}_k}
        (F_{\sigma(1)} \otimes \dots \otimes F_{\sigma(k)})
    \)
    and its \( \zeta_1 \dots \zeta_k \xi_{1}^{\alpha_1^{\star}} \mspace{-2mu}{\dots}\, \xi_{k}^{\alpha_k^{\star}} \) coefficient is
    \(
        \sum\nolimits_{\sigma \in \mathfrak{S}_k}
        (\ket{a_{\sigma(1)}}f_{\sigma(1)} \otimes \dots \otimes \ket{a_{\sigma(k)}} f_{\sigma(k)})
        =
        \Proj^{(k)}\ket{a_1 \dots a_k} (f_1 \otimes \dots \otimes f_k)
    \).
Given \( u^{(i)}_{\vec{r}_i} \) that defines \( F_i(\vec{r}_i) \), the term \( \TBang{p_1 \cdot u^{(1)}_{\vec{r}_1} + \dots + p_k \cdot u^{(k)}_{\vec{r}_k}} \) defines \( F(p_1,\dots,p_k,\vec{r}_1,\dots,\vec{r}_k) \).
    So \( F \) is a definable power series and \( \ket{a_1\dots a_k}(f_1 \otimes \dots \otimes f_k) \) is its \( \zeta_1 \dots \zeta_k \xi_{1}^{\alpha_1^{\star}} \mspace{-2mu}{\dots}\, \xi_{k}^{\alpha_k^{\star}} \) coefficient.
\end{proof}

Not all but sufficiently many elements are indeed definable as coefficients.
We prove that elements definable as coefficients generate the whole set of morphisms in the following sense.
A subset \( X \subseteq \CPM(n,m) \) \emph{generates \( \CPM(n,m) \)} if the least subset \( \overline{X} \subseteq \CPM(n,m) \) that contains \( X \) and is closed under positive weighted sum (\ie~\( p_1 x + p_2 y \) for \( p_1,p_2 \ge 0 \) and \( x,y \in \overline{X} \)) and subtraction (\ie~if \( x + y = z \) and \( x,z \in \overline{X} \), then \( y \in \overline{X} \)) is a dense subset of \( \CPM(n,m) \).
A subset \( X \subseteq \classicalCQ(A,B) \) \emph{generates \( \classicalCQ(A,B) \)} if \( \{ \varphi \mid (\ket{b}\varphi\bra{a}) \in X \} \) generates \( \CPM(\#\BObj{a}, \#\BObj{b}) \) for every \( a \in \Base{A} \) and \( b \in \Base{B} \).
Let \( \DaC{A, \bra{a}} \defe \{\, \varphi \in \CPM(\#\BObj{a}, 1) \mid (\varphi \cdot \bra{a}) \mbox{ is definable as coefficient} \,\} \) and \( \DaC{A, \ket{a}} \defe \{\, \varphi \in \CPM(1, \#\BObj{a}) \mid (\ket{a} \cdot \varphi) \mbox{ is definable as coefficient} \,\} \).
Below, we often use \cref{thm:cpm-representation}.

\begin{theorem}
    Let \( A \) be a type and \( a \in \Base{A} \).
    Then \( \DaC{A, \bra{a}} \) generates \( \CPM(\#\BObj{a},1) \) and \( \DaC{A, \ket{a}} \) generates \( \CPM(1, \#\BObj{a}) \).
\end{theorem}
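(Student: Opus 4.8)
The plan is to prove both claims simultaneously by induction on the derivation of $a \in \Base{A}$ in the rules defining the canonical basis. This induction is well-founded even for recursive types: the rule for $\TFold$ reduces a basis element of $\mu X.A$ to a strictly smaller basis derivation of $A[\mu X.A/X]$, and since type judgements contain no free type variables, the rule introducing basis elements of type variables never applies. Throughout I would freely invoke \cref{thm:cpm-representation} to treat morphisms between pseudo-representable modules as completely positive maps, and the compact closure of $\CPM$ to identify $\CPM(1,n) \cong \CPM(n,1)$ — so that the $\ket{a}$- and $\bra{a}$-statements are formally dual — and $\#\BObj{b}\multimap\#\BObj{c} \cong \#\BObj{b}\otimes\#\BObj{c}$.

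For the base cases, $\BObj{\ast_1} = \yoneda(1)$ and $\BObj{\ast_2} = \yoneda(2)$. For $\TyUnit$, the term $x\colon\TyUnit\vdash r\cdot\TUnitVal$ (diverging with probability $1-r$; definable with a fresh qubit, a one-qubit unitary and a measurement) realises every $r\in[0,1]$, and $[0,1]$ generates $\CPM(1,1)=\Real_{\ge 0}$. For $\TyQubit$, the terms $r\cdot(U\,\ket{0})$ realise every sub-normalised pure state $r\lvert\psi\rangle\langle\psi\rvert\in\CPM(1,2)$ as $U$ ranges over one-qubit unitaries, and dually the terms measuring $U\,x$ and returning $\TUnitVal$ on the left branch (diverging on the right) realise every sub-normalised pure effect in $\CPM(2,1)$; by the spectral theorem these generate $\CPM(1,2)$ and $\CPM(2,1)$. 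In all base cases the witnessing morphisms are outright definable, hence \emph{a fortiori} definable as coefficients (with $k=0$).

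The steps for $\otimes$, $+$ (with $\TInl$/$\TInr$), $\multimap$ and $\TFold$ are structural. Using the concrete bases of \cref{lem:based-additive-multiplicative} and the $\mathit{fold}$/$\mathit{unfold}$ isomorphisms, each $\ket{\cdot}$ and $\bra{\cdot}$ of a compound basis element factors through term-definable morphisms — $\TInl$, $\TInr$, the coproduct projections via pattern-matching-with-divergence, pairing and $\TLet$, $\lambda$-abstraction and application, $\TFold$/$\TUnfold$ — applied to the $\ket{\cdot}$'s and $\bra{\cdot}$'s of the components; since ``definable as a coefficient'' is closed under pre/post-composition with definable morphisms and under the extension of these operations to power series, this reduces the compound case to the components, using the appropriate one of the two induction hypotheses for each component (for $\multimap$, the contravariance forces the $\bra{\cdot}$-hypothesis on the domain). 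A ``rank-one'' morphism built from a product of component morphisms is then definable as a coefficient when the factors are, and by bilinearity — together with the subtraction closure built into ``generates'' and finite-dimensionality of the cones — products of generating sets generate $\CPM(1,\#\BObj a\otimes\#\BObj b)$. The case $a=(a_1\dots a_k)\in\Base{!A}$ is the crucial one and is handled by \cref{lem:power-of-power-series}: by induction each set of $f_i$ with $\ket{a_i}f_i$ definable as a coefficient is generating, \cref{lem:power-of-power-series} then makes $\ket{a_1\dots a_k}(f_1\otimes\dots\otimes f_k)$ definable as a coefficient, and since products of generating sets generate $\CPM(1,\#\BObj{a_1}\otimes\dots\otimes\#\BObj{a_k})=\CPM(1,\#\BObj{(a_1\dots a_k)})$ we conclude (the $\bra{\cdot}$-statement follows dually, using that $\bra{a_1\dots a_k}$ is, up to the scalar $1/\#\mathsf{fix}(a_1\dots a_k)$, a tensor of $\bra{a_i}$'s composed with $\Proj^{(k)}$).

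I expect the main obstacle to be the $!$ case, but it has already been isolated into \cref{lem:power-of-power-series} with its $\zeta$-coefficient extraction trick; the remaining cost is the pervasive bookkeeping needed to carry everything in terms of \emph{power series over the simplices $\Delta_k$} rather than plain definability — at every structural step one must check that the relevant term-level operation lifts to a legitimate operation on power series (existence of the combined series, convergence of the induced sums in $\classicalCQ$), and in the $!$ step one must interleave the fresh probabilistic indeterminates $\zeta$ with the indeterminates $\xi$ of the components. A secondary subtlety is the precise form of the ``product of generating sets generates'' sublemma, which relies on both the subtraction closure in the definition of ``generates'' and finite-dimensionality of $\CPM(1,n)$.
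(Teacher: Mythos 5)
Your proposal is correct and follows essentially the same route as the paper: induction on the derivation of the basis element $a$ (so that the $\TFold$ case reduces to a strict sub-derivation and is immediate from the induction hypothesis), tensoring definable power series for the $\otimes$ case, and invoking \cref{lem:power-of-power-series} for the $!$ case. The extra details you supply (explicit base cases for $\TyUnit$ and $\TyQubit$, the closure of ``definable as a coefficient'' under composition with definable morphisms, and the ``product of generating sets generates'' sublemma) are exactly the steps the paper's proof sketch leaves implicit.
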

\begin{proof}
By induction on \( a \in \Base{A} \).
    Here we focus on the latter claim and show interesting cases.

    (Case \( A = A_1 \otimes A_2 \)):
    Then \( a = a_1 \otimes a_2 \) with \( a_1 \in \Base{A_1} \) and \( a_2 \in \Base{A_2} \).
Since \( \DaC{A_1, \ket{a_1}} \) and \( \DaC{A_2, \ket{a_2}} \) generate \( \CPM(1,\#\BObj{(a_1)}) \) and \( \CPM(1,\BObj{(a_2)}) \) by the induction hypothesis, \( \DaC{A_1, \ket{a_1}} \otimes \DaC{A_2, \ket{a_2}} = \big\{\varphi_1 \otimes \varphi_2 \mid \varphi_i \in \DaC{A_i, \ket{a_i}}\big\}\) generates \( \CPM(1, \#\BObj{(a_1)} \otimes \#\BObj{(a_2)}) \).
    So it suffices to prove that \( \DaC{A_1, \ket{a_1}} \otimes \DaC{A_2, \ket{a_2}} \subseteq \DaC{A, \ket{a_1 \otimes a_2}} \).
    Assume \( \varphi_1 \in \DaC{A_1, \ket{a_1}} \) and \( \varphi_2 \in \DaC{A_2, \ket{a_2}} \).
    We have definable power series \( S_1(\xi_1) = \sum_{\alpha_1 \in \Nat^{k_1}} S_{1,\alpha_1} \xi_1^{\alpha_1} \) with \( \exists \alpha_1'. S_{1,\alpha_1'} = \ket{a_1}\varphi_1 \) and \( S_2(\xi_2) = \sum_{\alpha_2 \in \Nat^{k_2}} S_{2,\alpha_2} \xi_2^{\alpha_2} \) with \( \exists \alpha_2'. S_{2,\alpha_2'} = \ket{a_2}\varphi_2 \).
    Then \( (S_1 \otimes S_1)(\xi_1\xi_2) \defe \sum_{\alpha_1 \in \Nat^{k_1}, \alpha_2 \in \Nat^{k_2}} (S_{1,\alpha_1} \otimes S_{2,\alpha_2}) \xi_1^{\alpha_1} \xi_2^{\alpha_2} \) is definable: \( (u^{(1)}_{\vec{r}_1} \otimes u^{(2)}_{\vec{r}_2})_{\vec{r}_1\vec{r}_2} \) is a witness when \( (u^{(1)}_{\vec{r}_1})_{\vec{r}_1} \) and \( (u^{(2)}_{\vec{r}_2})_{\vec{r}_2} \) define \( S_1 \) and \( S_2 \).
    The power series \( S \) has \( \ket{a_1 \otimes a_2}(\varphi_1 \otimes \varphi_2) \) as the coefficient of \( \xi_1^{\alpha_1'}\xi_2^{\alpha_2'} \).

    (Case \( A = !B \)):
    This case can be proved easily by \cref{lem:power-of-power-series}.

    (Case \( A = \mu X. B \)):
    Then \( a = \TFold(b) \) for some \( b \in \Base{B[\mu X.B/X]} \).
    Perhaps surprisingly, this case immediately follows from the induction hypothesis since the proof is by induction on an element \( a \) of the canonical basis, not induction on type \( A \).
\end{proof}

\begin{lemma}
    Let \( A \) be a type
and \( f,g \in \classicalCQ(\TyUnit,A) \).
    If \( f \neq g \), then \( \sem{u} \circ f \neq \sem{u} \circ g \) for some
\( x \colon A \vdash u \colon \TyUnit \).
\end{lemma}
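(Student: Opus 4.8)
The plan is to combine the matrix representation (\cref{thm:cpm-representation,thm:basis-of-types}) with the generation theorem just proved. Fix an orthogonal pseudo-representable basis \( \Base{A} \) of \( \sem{A} \). Since \( \ident_{\sem{A}} = \sum_{a \in \Base{A}} \ket{a}\bra{a} \), we have \( f = \sum_{a} \ket{a}(\bra{a}\circ f) \) and \( g = \sum_{a} \ket{a}(\bra{a}\circ g) \), so \( f \neq g \) forces \( \bra{a}\circ f \neq \bra{a}\circ g \) for some \( a \in \Base{A} \). Writing \( \ell \defe \#\BObj{a} \), the Yoneda Lemma identifies \( \bra{a}\circ f \) and \( \bra{a}\circ g \) with distinct elements \( P_f, P_g \) of \( (\BObj{a})_1 \subseteq \CPM(1,\ell) \), i.e.\ distinct positive matrices in \( \Mat_{\ell}(\Complex) \); and for \( \varphi \in \CPM(\ell,1) \), written \( \varphi(X) = \trace(M_\varphi X) \) with \( 0 \le M_\varphi \), the composite \( \varphi \circ \bra{a} \circ f \in \CPM(1,1) = \Real_{\ge 0} \) is the scalar \( \trace(M_\varphi P_f) \).

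Next I would locate a separating \( \varphi \) inside \( \DaC{A,\bra{a}} \). Let \( N \defe \{\, \varphi \in \CPM(\ell,1) \mid \varphi\circ\bra{a}\circ f = \varphi\circ\bra{a}\circ g \,\} = \{\, \varphi \mid \trace(M_\varphi(P_f - P_g)) = 0 \,\} \). Since \( P_f - P_g \) is a nonzero self-adjoint matrix and the self-adjoint matrices are the real span of the positive ones, the real-linear functional \( M \mapsto \trace(M(P_f - P_g)) \) (which takes the value \( \trace((P_f-P_g)^2)>0 \) at \( M=P_f-P_g \)) is nonzero on some positive \( M \); hence \( N \subsetneq \CPM(\ell,1) \). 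Moreover \( N \) is topologically closed (the correspondence \( \varphi \leftrightarrow M_\varphi \) is a homeomorphism onto the cone of positive matrices), closed under positive weighted sums, and closed under subtraction: if \( \varphi_1 + \varphi_2 = \varphi_3 \) with \( \varphi_1,\varphi_3 \in N \), comparing the nonnegative reals \( \varphi_i\circ\bra{a}\circ f \) and \( \varphi_i\circ\bra{a}\circ g \) gives \( \varphi_2 \in N \). The generation theorem above says \( \DaC{A,\bra{a}} \) generates \( \CPM(\ell,1) \), i.e.\ the least subset of \( \CPM(\ell,1) \) containing \( \DaC{A,\bra{a}} \) and closed under positive weighted sums, subtraction and topological limits is all of \( \CPM(\ell,1) \). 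If \( \DaC{A,\bra{a}} \subseteq N \) this least subset would lie in \( N \), forcing \( N = \CPM(\ell,1) \), a contradiction. So there is \( \varphi \in \DaC{A,\bra{a}} \) with \( \varphi\circ\bra{a}\circ f \neq \varphi\circ\bra{a}\circ g \).

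Finally I would turn definability-as-a-coefficient into an actual term. By definition of \( \DaC{A,\bra{a}} \) there is a definable power series \( F(\xi) = \sum_{\alpha \in \Nat^k} F_\alpha \xi^\alpha \) in \( \classicalCQ(\sem{A},\TyUnit) \) with \( F_{\alpha_0} = \varphi\circ\bra{a} \) for some \( \alpha_0 \), witnessed by terms \( x \colon A \vdash t_{\vec{r}} \colon \TyUnit \) with \( \sem{t_{\vec{r}}} = F(\vec{r}) \) for all \( \vec{r} \in \Delta_k \). Post-composition with \( f \) is a \( \Sigma \)-monoid homomorphism, so (since \( \sem{t_{\vec r}}\circ f \in [0,1] \) is defined) \( \sem{t_{\vec{r}}}\circ f = \sum_\alpha (F_\alpha \circ f)\,\vec{r}^{\,\alpha} \), a power series in \( \vec{r} \) with coefficients in \( [0,1] \) converging for every \( \vec{r}\in\Delta_k \); similarly for \( g \), and the two series have different \( \alpha_0 \)-coefficients, namely \( \varphi\circ\bra{a}\circ f \) and \( \varphi\circ\bra{a}\circ g \). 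A power series with nonnegative coefficients that converges on all of \( \Delta_k \) extends to an analytic function on a polydisc around each interior point of \( \Delta_k \), so two such series agreeing as functions on \( \Delta_k \) have identical coefficients; hence \( \sem{t_{\vec{r}}}\circ f \neq \sem{t_{\vec{r}}}\circ g \) for some \( \vec{r} \) in the interior of \( \Delta_k \), and \( u \defe t_{\vec{r}} \) works.

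The step I expect to be the main obstacle is the last one: "definable as a coefficient" constrains a whole power series rather than a single morphism, and extracting a separating term relies on the non-degeneracy of nonnegative-coefficient power series on \( \Delta_k \), which must be justified by convergence on the interior of \( \Delta_k \) and the identity theorem there. The rest is bookkeeping with the matrix representation together with the observation that the separating set \( N \) is closed under exactly the operations appearing in the definition of "generates".
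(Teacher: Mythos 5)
Your proof is correct and follows essentially the same route as the paper's: decompose along a basis element \( a \) with \( \bra{a}f \neq \bra{a}g \), use the generation theorem to find a separating functional in \( \DaC{A,\bra{a}} \), and then separate the two resulting nonnegative-coefficient power series at some point of \( \Delta_k \) via analyticity. You merely spell out two steps the paper leaves implicit — that the set of non-separating functionals is a proper, topologically closed subset stable under the generating operations, and the identity-theorem argument on the interior of \( \Delta_k \) — both of which are sound.
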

\begin{proof}
    Assume \( f,g \in \classicalCQ(\TyUnit,A) \) and \( f \neq g \).
    There exists \( a \in \Base{A} \) such that \( \bra{a}f \neq \bra{a}g \) in \( \classicalCQ(\TyUnit,\BObj{a}) \).
    Then \( \varphi\bra{a}f \neq \varphi\bra{a}g \) for some \( \varphi \in \CPM(\#\BObj{a},1) \) (as \(\classicalCQ\) is (co-)well-pointed).
    Since \( \DaC{A, \bra{a}} \) generates \( \CPM(\#\BObj{a},1) \), there exists \( \varphi_0 \in \DaC{A,\bra{a}} \) such that \( \varphi_0\bra{a}f \neq \varphi_0\bra{a}g \).
Let \( S(\xi) = \sum_{\alpha \in \Nat^k} S_\alpha \xi^\alpha \) be a definable power series on \( \classicalCQ(A,\TyUnit) \) such that \( S_{\alpha_0} = \varphi_0\bra{a} \) for some \( \alpha_0 \in \Nat^k \).
    Then \( (Sf)(\xi) \defe \sum_\alpha (S_\alpha f) \xi^\alpha \) is a
formal power series with non-negative real coefficients, and its value \( (Sf)(\vec{r}) \) on \( \vec{r} \in \Delta_k \) is the composite \( (S(\vec{r}) \circ f) \in \classicalCQ(\TyUnit,\TyUnit) \) of \( f \in \classicalCQ(\TyUnit,A) \) and \( S(\vec{r}) \in \classicalCQ(A,\TyUnit) \).
    In particular \( (Sf)(\vec{r}) \) absolutely converges on every \( \vec{r} \in [0,1]^k \), and thus it defines an analytic function on \( \Delta_k \).
    This argument applies to \( (Sg)(\xi) \defe \sum_\alpha (S_\alpha g) \xi^\alpha \).
    These series \( (Sf)(\xi) \) and \( (Sg)(\xi) \) have different coefficients of \( \xi^{\alpha_0} \), hence define different functions.
    So there exists \( \vec{r} \in \Delta_k \) such that \( (Sf)(\vec{r}) \neq (Sg)(\vec{r}) \).
    By the definability of \( S(\xi) \), there exists a term \( y \colon A \vdash u \colon \TyUnit \) such that \( \sem{u} = S(\vec{r}) \).
    Then \( \sem{u} \circ f = S(\vec{r}) \circ f = (Sf)(\vec{r}) \neq (Sg)(\vec{r}) = \sem{u} \circ g \) as required.
\end{proof}

\begin{theorem}\label{thm:fullabstraction}
    \( \classicalCQ \) is a fully abstract model of Quantum FPC.
\qed
\end{theorem}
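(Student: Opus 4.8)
The plan is to obtain full abstraction by combining compositionality of the interpretation, the adequacy theorem proved above, and the separation lemma that precedes the statement (the engine of which is the definability-as-a-coefficient machinery).

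\textbf{Soundness of denotational equality.} First I would record that the interpretation is compositional: for any program context \( C[-] \) of type \( \TyUnit \) and any term \( t \) of the appropriate type, \( \sem{C[t]} \) is built from \( \sem{t} \) by post- and pre-composing with structure morphisms of \( \classicalCQ \) (together with the fold/unfold isomorphisms and the substitution lemma underlying \cref{thm:soundness-finite-step}). Hence \( \sem{s} = \sem{t} \) implies \( \sem{C[s]} = \sem{C[t]} \) for every such \( C[-] \), and then the adequacy theorem (\(\mathrm{Pr}(\cdot \Downarrow \TUnitVal) = \sem{\cdot}\)) gives \( \mathrm{Pr}(C[s] \Downarrow \TUnitVal) = \mathrm{Pr}(C[t] \Downarrow \TUnitVal) \). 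Thus denotational equality implies observational equivalence; this is the routine direction.

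\textbf{Completeness.} Conversely, suppose \( !\Gamma, \Delta \vdash s, t \colon B \) with \( \sem{s} \neq \sem{t} \). Abstracting every free variable — the linear ones in \( \Delta \) and, since \( !A \) is an ordinary type of Quantum FPC, also the \(!\)-typed ones in \( \Gamma \) — yields closed terms \( {} \vdash \bar s, \bar t \colon C \) for a suitable iterated function type \( C \); by the currying/adjunction isomorphisms of the model, \( \sem{\bar s} \neq \sem{\bar t} \) as elements of \( \classicalCQ(\TyUnit, C) \) under \( \TyUnit \cong I \). Applying the separation lemma stated just before the theorem, there is a term \( x \colon C \vdash u \colon \TyUnit \) with \( \sem{u} \circ \sem{\bar s} \neq \sem{u} \circ \sem{\bar t} \). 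By compositionality \( \sem{u[\bar s/x]} \neq \sem{u[\bar t/x]} \), and since both are programs of type \( \TyUnit \), adequacy gives \( \mathrm{Pr}(u[\bar s/x] \Downarrow \TUnitVal) \neq \mathrm{Pr}(u[\bar t/x] \Downarrow \TUnitVal) \). The program context \( C[-] \defe u[(\lambda \vec x.\,[-])/x] \) then satisfies \( \mathrm{Pr}(C[s] \Downarrow \TUnitVal) \neq \mathrm{Pr}(C[t] \Downarrow \TUnitVal) \), witnessing that \( s \) and \( t \) are not observationally equivalent.

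\textbf{Main obstacle.} At the level of the final statement the genuinely hard work is already discharged: the construction of separating observations is packaged into the ``generates'' theorem and the separation lemma, whose proofs rest on the power-series technique (formal power series with coefficients in \( \classicalCQ(A,B) \), \cref{lem:power-of-power-series}, the inductive description of \( \DaC{A,\bra{a}} \) and \( \DaC{A,\ket{a}} \), and \cref{thm:cpm-representation}), plus co-well-pointedness of \( \classicalCQ \). So the residual points to check for the theorem itself are (i) the substitution/compositionality lemma for \( \sem{-} \), (ii) the reduction from open terms with a mixed \( !\Gamma,\Delta \) context to closed terms, verifying that the \( \lambda \)-abstraction is type-correct and that the induced map on hom-\(\Sigma\)-monoids reflects inequality, and (iii) that the separation lemma's conclusion about \( \classicalCQ(\TyUnit, C) \) really yields a legitimate program context after re-instantiation. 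Of these, (ii) is the most delicate bookkeeping, but none of it is deep.
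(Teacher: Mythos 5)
Your assembly of the proof — compositionality plus adequacy for the easy direction, and closure of the term followed by the separation lemma (itself powered by the definable-as-a-coefficient machinery) plus adequacy for completeness — is exactly the argument the paper intends; the theorem is stated with its proof left implicit precisely because it is this routine combination of the preceding results. The bookkeeping points you flag (currying open terms over a mixed \( !\Gamma,\Delta \) context and turning the separating term \( u \) into a program context) are likewise the only residual details, and they are handled in the standard way.
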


 \section{Cone Norm, Revisited}\label{sec:norm}
This section revisits a norm-based approach to higher-order quantum computation in \citet{Selinger2004a}, which looks natural but fails to provide a correct model.
We analyse this attempt from the viewpoint of superoperator modules and propose a correction.

\subsection{Background: Normed Cones}
\label{sec:defining-cone-norm}
A subset \( C \subseteq V \) of a real vector space \( V \) is a \emph{convex cone} if \( C \) is closed under the sum and the multiplication of non-negative reals.
A \emph{normed cone} is a convex cone \( C \subseteq V \) equipped with \emph{norm}, which is a function \( \opnorm{{-}} \colon C \longrightarrow \Real_{\ge 0} \) that satisfies (1) \( \opnorm{x + y} \le \opnorm{x} + \opnorm{y} \), (2) \( \opnorm{rx} = r \opnorm{x} \), (3)~\( \opnorm{x} = 0 \Rightarrow x = 0 \) and (4) \( \opnorm{x} \le \opnorm{x + y} \)
for every \( x,y \in C \) and \( r \in \Real_{\ge 0} \).
An example of a convex cone is the subset \( P_n \subseteq \Mat_n(\Complex) \) of positive self-adjoint matrices (where \( \Mat_n(\Complex) \) is regarded as a \( (2 n^2) \)-dimensional real vector space) and the trace \( \trace \colon P_n \longrightarrow \Real_{\ge 0} \) is a norm on \( P_n \).
Another example of a normed cone is \( \CPM(n,m) \) equipped with the operator norm \( \opnorm{\varphi} \defe \sup \{\, \trace(\varphi(x)) \mid x \in P_n, (\trace\,x) \le 1 \,\} \) (with respect to the trace norm).

Normed cone was studied in \citet{Selinger2004a} in an attempt to extend the superoperator model \( \CQ \) to a higher-order setting.
Recall that the subset \( \CQ(n,m) \subseteq \CPM(n,m) \) is defined in terms of the trace norms: \( (\varphi \in \CQ(n,m)) \Leftrightarrow (\forall x \in P_n. (\trace\,x) \ge (\trace(\varphi(x)))) \).
However the trace norm is not the only interesting norm on \( P_n \).
For example, we have \( \CPM(n,m) \cong \CPM(1, nm) \cong P_{nm} \) and hence \( P_{nm} \) has the norm corresponding to the operator norm on \( \CPM(n,m) \) (with respect to the trace norm).
The norm on \( P_{nm} \) via the above isomorphism actually differs from the trace norm, so it is natural to consider norms on \( P_{nm} \) other than the trace norm in a higher-order setting.

The category \( \CQ' \)~\cite{Selinger2004a} of \emph{normed matrix spaces} is defined as follows.\footnote{The original category \( \CQ' \) given by \citet[Section~4.1]{Selinger2004a} is actually a full subcategory of the category defined here.  In the original definition, an idempotent \( \varphi \colon P_n \longrightarrow P_n \) is limited to those induced by biproducts \( (\oplus_{i=1}^{k} P_{n_i}) \leftrightarrows P_{n} \) where \( n = \sum_{i} n_i \).  The difference is, however, not significant for the discussion of this section.}
An object of \( \CQ' \) is \( (n, \varphi, \opnorm{-}) \) where \( \varphi \colon P_n \longrightarrow P_n \) is an idempotent and \( \opnorm{-} \) is a norm on \( P_n \) such that \( \varphi \) is norm-non-increasing.
A morphism \( \psi \colon (n,\varphi,\opnorm{-}) \longrightarrow (n',\varphi',\opnorm{-}') \) is a completely positive map \( \psi \in \CPM(n,n') \) that respects the idempotent \( \psi = \varphi' \circ \psi \circ \varphi \) and is norm-non-increasing \( \opnorm{x} \ge \opnorm{\psi(x)}' \) for every \( x \in P_n \).
A pair \( (n, \varphi, \opnorm{-}) \) and \( (n', \varphi', \opnorm{-}') \) induces a norm on \( \CPM(n,n') \), namely the operator norm \( \opnorm{\psi} \defe \sup \{\, \opnorm{\psi(x)}' \mid x \in P_n, \opnorm{x} \le 1 \,\} \), and hence a norm \( \opnorm{-}'' \) on \( P_{nm} \) via the isomorphism \( \CPM(n,n') \cong \CPM(1,nn') \cong P_{nn'} \).
Let \( (n,\varphi, \opnorm{-}) \multimap (n',\varphi',\opnorm{-}') \defe (nn', \psi, \opnorm{-}'') \) where \( \psi \) is the idempotent corresponding to \( \varphi \multimap \varphi' \).
This extends to a functor with the left adjoint \( (-) \otimes (n,\varphi,\opnorm{-}) \).
Concretely \( (n,\varphi,\opnorm{-}) \otimes (n',\varphi',\opnorm{-}') \defe (nn', \varphi \otimes \varphi', \opnorm{-}'') \) where
\begin{equation}
    \textstyle
    \opnorm{z}''
    \quad\defe\quad
    \inf \{\, \sum_i \opnorm{x_i} \opnorm{y_i}' \mid z \le \sum_i x_i \otimes y_i, x_i \in P_n, y_i \in P_m \,\}.
    \label{eq:non-entangled-tensor-norm}
\end{equation}
\citet{Selinger2004a} proved that \( \CQ' \) is a \( * \)-autonomous category.

However \( \CQ' \) has a fatal problem~\citet[Section~4.3]{Selinger2004a}: \( (2, \ident, \trace) \otimes (2, \ident, \trace) \neq (4,\ident, \trace) \).
Actually an entangled state \( z \in P_{2 \times 2} \) with \( (\trace\,z) \le 1 \) has the norm \( > 1 \) with respect to the norm on \( (2, \ident,\trace) \otimes (2, \ident,\trace) \).
Since \( P_4 \) with the trace norm \( \le 1 \) precisely describes the definable values of type \( \TyQubit \otimes \TyQubit \) in a first-order quantum programming language~\cite{Selinger2004}, the tensor product in \( \CQ' \) cannot be a model of even a first-order quantum programming language.
The tensor product in \( \CQ' \) fails to deal with pairs of entangled values, because \cref{eq:non-entangled-tensor-norm} defines the norm of \( z \) in terms of the best entanglement-free approximation (see~\cite[Section~4.2.2]{Selinger2004a}).

\subsection{Families of Norms and Pseudo-Representable Modules}
\label{sec:family-norms}
We propose an alternative to \( \CQ' \) inspired by our module model \( \classicalCQ \).
Let \( \fdCQ \) be the full subcategory of \( \classicalCQ \) consisting of \emph{finite dimensional} \( \CQ \)-modules, \ie~\( M \in \classicalCQ \) with a finite basis \( (\LQT_b, \ket{b}, \bra{b})_{b \in B} \) (\( B \) is a finite set).
We give a norm-based characterisation of \( \fdCQ \).

Because every \( M \in \fdCQ \) is a splitting object of an idempotent \( f \colon \LQT \longrightarrow \LQT \) on a pseudo-representable \( \CQ \)-module \( \LQT \in \classicalCQ \), it suffices to find a norm-based description of a pseudo-representable \( \CQ \)-module \( \LQT \) with \( \LQT \cong \neg\neg \LQT \).
For \( x \in \CPM(n,\#\LQT) \), we define the norm \( \qnorm{x}{\LQT}^{(n)} \) by \(
    \qnorm{x}{\LQT}^{(n)}
    \defe
    \inf \left\{\, r \in \Real_{\ge 0} ~\middle|~ x \in r \cdot \LQT_n \,\right\}
\)
where \( r \cdot X \defe \{ r x \mid x \in X \} \) for \( X \subseteq \CPM(m,\ell) \).

\begin{lemma}\label{lem:linear:property-of-induced-norm}\label{lem:negated-lqt-cone-norm}
The family \( (\qnorm{-}{\LQT}^{(n)})_n \) given above satisfies the following conditions.
    \begin{enumerate}
        \item \( \qnorm{{-}}{\LQT}^{(n)} \) is a norm on \( \CPM(n,\#\LQT) \) for each \( n \).
        \item There exists \( B \) such that \( \opnorm{x} \le B \qnorm{x}{\LQT}^{(n)} \) for every \( n \) and \( x \in \CPM(n,\#\LQT) \).
        \item \( \CQ \)-action is norm-non-increasing: \( \forall x \in \CPM(n,\#\LQT). \forall \varphi \in \CQ(m,n).\: \qnorm{x \circ \varphi}{\LQT}^{(m)} \le \qnorm{x}{\LQT}^{(n)} \).
        \item For every \( x \in \CPM(n,\ell) \),\footnote{This condition corresponds to \( \LQT \cong \neg\neg\LQT \).  More precisely, it is obtained from \( \qnorm{-}{\LQT}^{(n)} = \qnorm{-}{\neg\neg\LQT}^{(n)} \) by rewriting the right-hand-side using the definition.}
\begin{align*}
            \qnorm{x}{\LQT}^{(n)} &= \sup \left\{\, \varphi \circ (\ident_m \otimes x) \circ \psi ~~\middle|~~
            \begin{aligned}
                m,k \in \Nat,\:
                \varphi \in \CPM(m \otimes \ell, 1),\:
                \psi \in \CPM(1, m \otimes k) \\
                \forall y \in \LQT_k. \varphi \circ (\ident_m \otimes y) \circ \psi \le 1
            \end{aligned}
            \right\}.
            \qed
\end{align*}
\end{enumerate}
\end{lemma}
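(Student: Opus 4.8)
The plan is to obtain parts (2), (3) and all of part (1) except the triangle inequality from the elementary defining properties of a pseudo-representable module, and to derive both part (4) and the remaining triangle inequality from the biorthogonal-closure hypothesis \( \LQT \cong \neg\neg\LQT \).

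First I would record the basic behaviour of the Minkowski functional \( \qnorm{x}{\LQT}^{(n)} = \inf\{\, r \ge 0 \mid x/r \in \LQT_n \,\} \). Since the \( \CQ \)-action on the submodule \( \LQT \hookrightarrow \CPM({-},\ell) \) is precomposition and \( \LQT \) is closed under it, the \( [0,1] \)-action \( r\,y = y \circ (r\,\ident) \) keeps \( \LQT_n \) inside itself; hence \( x \in r\LQT_n \) and \( r \le r' \) imply \( x \in r'\LQT_n \), so \( \qnorm{x}{\LQT}^{(n)} < r \) forces \( x/r \in \LQT_n \). Using the pseudo-universal element \( r_0\,\ident_\ell \in \LQT_\ell \): if \( \opnorm{x} \le 1 \) then \( x \in \CQ(n,\ell) \) and \( (r_0\,\ident_\ell)\cdot x = r_0 x \in \LQT_n \), so after scaling \( \qnorm{x}{\LQT}^{(n)} \le \opnorm{x}/r_0 < \infty \); thus the functional is finite-valued. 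From these facts: letting \( r \downarrow \qnorm{x}{\LQT}^{(n)} \) in \( \opnorm{x/r} \le B \) (the boundedness constant of \( \LQT \)) gives \( \opnorm{x} \le B\,\qnorm{x}{\LQT}^{(n)} \), which is part (2); the same limit with positivity of \( \opnorm{-} \) forces \( x = 0 \) whenever \( \qnorm{x}{\LQT}^{(n)} = 0 \); absolute homogeneity \( \qnorm{rx}{\LQT}^{(n)} = r\,\qnorm{x}{\LQT}^{(n)} \) is immediate from the definition; \( \qnorm{x}{\LQT}^{(n)} \le \qnorm{x+y}{\LQT}^{(n)} \) follows from downward-closedness, since \( (x+y)/r \in \LQT_n \) and \( x/r \le (x+y)/r \) give \( x/r \in \LQT_n \); and for part (3), if \( x/r \in \LQT_n \) and \( \varphi \in \CQ(m,n) \) then \( (x\circ\varphi)/r = (x/r)\cdot\varphi \in \LQT_m \). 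This leaves only the triangle inequality in part (1), which I postpone.

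The core is part (4), which I would prove as an equality between \( \qnorm{x}{\LQT}^{(n)} \) and the right-hand supremum \( S(x) \), taken over ``tests'' \( (m,k,\varphi,\psi) \) bounded by \( 1 \) on \( \LQT_n \). For \( S(x) \le \qnorm{x}{\LQT}^{(n)} \): the value \( \varphi \circ (\ident_m \otimes x) \circ \psi \in \CPM(1,1) = \Real_{\ge 0} \) is homogeneous in \( x \) (tensor and composition commute with the scalar action), so for any \( r > \qnorm{x}{\LQT}^{(n)} \) we have \( x/r \in \LQT_n \), whence \( \varphi \circ (\ident_m \otimes x) \circ \psi = r\,\bigl(\varphi\circ(\ident_m\otimes(x/r))\circ\psi\bigr) \le r \); taking the infimum over \( r \) gives the bound, using nothing beyond pseudo-representability. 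For the reverse inequality I use \( \LQT = \neg\neg\LQT \). Unfolding \( \neg({-}) = ({-})\multimap I \) via \cref{lem:linear:lqt-function-and-tensor} twice (with \( \#I = 1 \) and \( \#\LQT \multimap 1 = \#\LQT \) by strict compact-closedness of \( \CPM \)) yields the concrete description
\[
  \qnorm{x}{\neg\neg\LQT}^{(n)} \;=\; \sup\bigl\{\, \mathbf{ev}\circ(x \otimes g)\circ w \;\big|\; k \in \Nat,\ g \in (\neg\LQT)_k,\ w \in \CQ(1, n\otimes k) \,\bigr\}.
\]
A compact-closed ``bending'' argument then shows that each summand \( \mathbf{ev}\circ(x\otimes g)\circ w \) equals \( \varphi'\circ(\ident_{m}\otimes x)\circ\psi' \) for \( m := k \), \( \varphi' := \mathbf{ev}\circ(g\otimes\ident_\ell) \) and \( \psi' := \mathbf{symm}\circ w \), and that this test \( (\varphi',\psi') \) is bounded by \( 1 \) on \( \LQT_n \) precisely because \( g \in (\neg\LQT)_k \). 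Hence \( \qnorm{x}{\neg\neg\LQT}^{(n)} \le S(x) \), and since \( \LQT \cong \neg\neg\LQT \) canonically inside \( \CPM({-},\ell) \) we conclude \( \qnorm{x}{\LQT}^{(n)} = S(x) \). Finally the triangle inequality falls out of part (4): by that identity the closed unit ball \( \{\, z \mid \qnorm{z}{\LQT}^{(n)} \le 1 \,\} \) equals \( \{\, z \mid S(z) \le 1 \,\} \), an intersection of the half-spaces \( \{\, z \mid \varphi\circ(\ident_m\otimes z)\circ\psi \le 1 \,\} \) and therefore convex; convexity of a closed, absolutely homogeneous unit ball gives \( \qnorm{x+y}{\LQT}^{(n)} \le \qnorm{x}{\LQT}^{(n)} + \qnorm{y}{\LQT}^{(n)} \) by the standard argument.

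I expect the main obstacle to be the identification of \( (\neg\neg\LQT)_n \) used in the reverse inequality for (4): turning the abstract double negation into the explicit family of tests requires carefully traversing the \( \multimap \)-description of \cref{lem:linear:lqt-function-and-tensor} and the compact-closed bending isomorphisms of \( \CPM \), while correctly tracking the ancilla object \( m \) that records the test's entanglement with \( x \) and the role of the evaluation map \( \mathbf{ev} \).
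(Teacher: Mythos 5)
Your proposal is correct and follows essentially the same route as the paper: finiteness and definiteness from the pseudo-universal element and the bound \( B \), monotonicity from hereditarity, and part (4) by expanding \( \neg\neg\LQT \) via \cref{lem:linear:lqt-function-and-tensor} and the compact-closed structure of \( \CPM \). The only cosmetic difference is that you obtain the triangle inequality from the half-space description in part (4), whereas the paper invokes convexity of \( \LQT_n \) directly from \( \LQT \cong \neg\neg\LQT \) — the same fact packaged slightly differently.
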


Conversely a family \( (\annorm{-}^{(n)})_n \) of norms on \( \CPM(n,\ell) \) that satisfies the conditions in \cref{lem:linear:property-of-induced-norm} determines a pseudo-representable \( \LQT \in \classicalCQ \) given by
\(
    \LQT_n \defe \{ x \in \CPM(n,\ell) \mid \annorm{x}^{(n)} \le 1 \}
\).
This observation motivates the following definition of the category \( \CQ^{\P} \), an alternative to \( \CQ' \).
Its object is a natural number \( \ell \) together with an idempotent \( \varphi \colon P_{\ell} \longrightarrow P_{\ell} \) and a family \( (\annorm{-}^{(n)})_n \) of norms on \( \CPM(n,\ell) \) that satisfies the conditions in \cref{lem:linear:property-of-induced-norm} such that \( \varphi \) is norm-non-increasing.
Its morphism from \( (\ell_1, \varphi_1, (\annorm{-}_1^{(n)})_n) \) to \( (\ell_2, \varphi_2, (\annorm{-}_2^{(n)})_n) \) is a completely positive map \( \psi \in \CPM(\ell_1,\ell_2) \) such that \( \psi = \varphi_2 \circ \psi \circ \varphi_1 \) and \( \annorm{\psi \circ x}_2^{(n)} \le \annorm{x}_1^{(n)} \) for every \( x \in \CPM(n,\ell) \).
\begin{theorem}\label{thm:norm-category-equivalent}
    \( \CQ^{\P} \) is equivalent to \( \fdCQ \).
\qed
\end{theorem}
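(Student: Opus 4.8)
The plan is to build the equivalence directly, using the dictionary between families of norms and pseudo-representable \( \CQ \)-modules that already underlies \cref{lem:negated-lqt-cone-norm} and the paragraph after it. Concretely, I will exhibit a functor \( \Phi \colon \CQ^{\P} \longrightarrow \fdCQ \) that realises \( \fdCQ \) as the idempotent completion, taken inside \( \classicalCQ \), of the full subcategory \( \mathcal{P} \subseteq \classicalCQ \) of pseudo-representable \( \CQ \)-modules \( \LQT \) with \( \LQT \cong \neg\neg\LQT \); the latter subcategory is in turn isomorphic, by \cref{lem:negated-lqt-cone-norm} and its converse, to the full subcategory of \( \CQ^{\P} \) on objects with \( \varphi = \ident \). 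On an object \( (\ell, \varphi, (\annorm{-}^{(n)})_n) \) of \( \CQ^{\P} \), condition~(4) of \cref{lem:negated-lqt-cone-norm} says the norm family gives a pseudo-representable \( \LQT \in \classicalCQ \) with \( \#\LQT = \ell \) and \( \LQT_n = \{\, x \in \CPM(n,\ell) \mid \annorm{x}^{(n)} \le 1 \,\} \); since \( \varphi \) is norm-non-increasing, \( \varphi \circ x \in \LQT_n \) whenever \( x \in \LQT_n \), so by \cref{thm:cpm-representation} \( \varphi \) represents a \( \CQ \)-module morphism \( f_\varphi \colon \LQT \longrightarrow \LQT \), and faithfulness of the representation together with \( \varphi \circ \varphi = \varphi \) makes \( f_\varphi \) idempotent. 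As \( \classicalCQ \) splits idempotents, set \( \Phi(\ell,\varphi,\ldots) \) to be a chosen splitting \( M \) of \( f_\varphi \): the retract \( e \colon M \longrightarrow \LQT \), \( p \colon \LQT \longrightarrow M \) with \( p\circ e = \ident_M \) gives \( M \) the single-element pseudo-representable basis \( (\LQT, p, e) \) (because \( \ident_M = p\circ e = \ket\bullet\bra\bullet \) with \( \BObj\bullet = \LQT \) pseudo-representable), so \( M \in \fdCQ \).

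On morphisms, a \( \psi \colon (\ell_1,\varphi_1,\ldots) \longrightarrow (\ell_2,\varphi_2,\ldots) \) — a completely positive map with \( \psi = \varphi_2\circ\psi\circ\varphi_1 \) and norm-non-increasing — is, again by \cref{thm:cpm-representation}, a \( \CQ \)-module morphism \( \LQT_1 \longrightarrow \LQT_2 \) that intertwines \( f_{\varphi_1} \) and \( f_{\varphi_2} \), hence descends to a unique \( \Phi\psi \colon M_1 \longrightarrow M_2 \) between the splittings; functoriality is the usual Karoubi bookkeeping. Faithfulness and fullness then both come straight out of \cref{thm:cpm-representation}: a morphism \( g \colon M_1 \longrightarrow M_2 \) in \( \classicalCQ \) yields \( e_2 \circ g \circ p_1 \colon \LQT_1 \longrightarrow \LQT_2 \), i.e.\ a completely positive \( \psi \in \CPM(\ell_1,\ell_2) \) preserving the pseudo-representable submodules (so norm-non-increasing), and \( e_1\circ p_1 = f_{\varphi_1} \), \( e_2\circ p_2 = f_{\varphi_2} \) force \( \psi = \varphi_2\circ\psi\circ\varphi_1 \); the assignments \( g \mapsto \psi \) and \( \psi \mapsto \Phi\psi \) are mutually inverse bijections of hom-sets.

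For essential surjectivity, one direction is the single-element-basis observation above (every object of \( \CQ^{\P} \) lands in \( \fdCQ \), and by \cref{lem:negated-lqt-cone-norm} its image together with the idempotent recovers the original data up to isomorphism). The other direction requires that every \( M \in \fdCQ \) is isomorphic to the splitting of a norm-non-increasing idempotent on some \( \LQT \in \mathcal P \). I would argue as follows. First, passing through \( M \cong \neg\neg M = \neg(\neg M) \) and applying \cref{lem:based-additive-multiplicative,lem:linear:lqt-function-and-tensor}, one may assume the finite pseudo-representable basis \( (\LQT_b, \ket b, \bra b)_{b \in B} \) of \( M \) has every coefficient module \( \LQT_b = (\neg\LQT'_c) \) in \( \classicalCQ \). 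Set \( \ell := \bigoplus_b \#\LQT_b \) and define \( e \colon M \longrightarrow \CPM({-},\ell) \) block-diagonally, \( e_n(y) := \bigoplus_b \bra b(y) \); this is a well-defined mono \( \CQ \)-module morphism (it preserves sums and actions because each \( \bra b \) does, and \( e_n(y) = e_n(y') \Rightarrow y = \sum_b \ket b\bra b(y) = y' \)), and \( \opnorm{e_n(y)} \le \sum_b \opnorm{\bra b(y)} \le \sum_b B_b \) for every \( y \), where \( B_b \) bounds \( \LQT_b \). Let \( \LQT \hookrightarrow \CPM({-},\ell) \) be the hereditary submodule \( \LQT_n := \{\, z \in \CPM(n,\ell) \mid z \le e_m(y)\circ\varphi \text{ for some } m, y\in M_m, \varphi\in\CQ(n,m) \,\} \); it is downward-closed and \( \CQ \)-action-closed by construction and bounded by \( \sum_b B_b \). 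A pseudo-universal element \( r\,\ident_\ell \in \LQT_\ell \) is obtained from an element of \( M_\ell \) of the shape \( \sum_b \ket b(r_b \pi_b) \) (with \( \pi_b \in \CQ(\ell,\#\LQT_b) \) the block projections, \( r_b \) small enough that this sum is defined and that \( e_\ell(\cdot) - r\,\ident_\ell \) is completely positive), so \( \LQT \in \mathcal P \) once one also checks condition~(4). Then \( p \colon \LQT \longrightarrow M \) extending \( e^{-1} \) on the image of \( e \) gives the idempotent \( \varphi := e\circ p \), which is norm-non-increasing, and whose splitting is \( M \).

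The main obstacle will be exactly this last construction — producing \( \LQT \) together with a well-defined module projection \( p \colon \LQT \longrightarrow M \) satisfying \( p\circ e = \ident_M \), and simultaneously verifying \textbf{Bounded}, \textbf{Pseudo-universal element}, and condition~(4). The naive hereditary hull is bounded and downward-closed, but extending the pairing \( e(M)\times\neg M \to I \) to all of \( \LQT\times\neg M \) (equivalently, showing the mono \( e \colon M \hookrightarrow \LQT \) is split) is delicate: \( e(M) \) itself is typically not downward-closed in \( \CPM({-},\ell) \), and the correct \( \LQT \) is a carefully chosen hereditary, action-closed hull of \( e(M) \) on which the projection is total — the phenomenon already visible in the fact that \( I + I \) is a retract of \( \yoneda(2) \) via \( \QMeas \) and the boolean encoding, but \emph{not} (it seems) of \( I\times I \), so the nontrivial idempotent \( \varphi \) is genuinely needed. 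Once \( \LQT \) and \( p \) are pinned down and the three pseudo-representability conditions are verified, the remaining coherence — that \( \Phi \) and the object assignment \( M \mapsto (\ell, e\circ p, (\qnorm{-}{\LQT}^{(n)})_n) \) assemble into an adjoint equivalence — is routine.
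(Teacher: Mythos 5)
Your overall architecture coincides with the paper's: both sides are presented as Karoubi envelopes (of the full subcategory of pseudo-representable \( \LQT \) with \( \LQT\cong\neg\neg\LQT \) on one side, and of the objects of \( \CQ^{\P} \) with identity idempotent on the other), the object dictionary is \cref{lem:negated-lqt-cone-norm} together with its converse, and the morphism dictionary is \cref{thm:cpm-representation}. The one substantive issue is the step you yourself flag as ``the main obstacle'': that every \( M\in\fdCQ \) is the splitting of an idempotent on some pseudo-representable module. As written, this step has a genuine gap. Your candidate \( \LQT_n := \{\, z\in\CPM(n,\ell) \mid z \le e_m(y)\circ\varphi \text{ for some } y,\varphi \,\} \), the hereditary action-closed hull of the image of the block-diagonal embedding \( e \), is bounded and hereditary, but it carries no evident projection: an element \( z \) dominated by \( e_m(y)\circ\varphi \) need not itself lie in the image of \( e \), so ``extending \( e^{-1} \) on the image'' does not produce a total \( \Sigma \)-monoid homomorphism \( p\colon\LQT\longrightarrow M \), and without \( p \) there is no idempotent \( e\circ p \) to split. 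You correctly diagnose the difficulty but do not close it.

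The paper closes it by defining \( \LQT \) so that totality of the projection is built into the definition. With \( \ell_i\defe\#\BObj{b_i} \), \( \ell\defe\sum_i\ell_i \), block projections \( p_i\in\CPM(\ell,\ell_i) \) and block embeddings \( e_i\in\CPM(\ell_i,\ell) \), it sets
\[
  \LQT_n \;\defe\; \{\, x\in\CPM(n,\ell) \mid \forall i.\ p_i\circ x\in\BObj{b_i},\ \ M\models \IsDef{(\textstyle\sum_i \ket{b_i}\cdot(p_i\circ x))} \,\},
\]
so that \( \ket{\ast}\defe\sum_i\ket{b_i}\circ p_i\colon\LQT\longrightarrow M \) is total by fiat, \( \bra{\ast}\defe\sum_i e_i\circ\bra{b_i}\colon M\longrightarrow\LQT \) satisfies \( \ket{\ast}\bra{\ast}=\ident_M \), and the remaining work is confined to verifying \textbf{Bounded} and \textbf{Pseudo-universal element}; the latter is where convexity of \( M \) (from \( M\cong\neg\neg M \)) is genuinely used, to see that \( \sum_i(1/k)\,r'\,e_i p_i\in\LQT_\ell \) and hence \( (r'/k)\,\ident_\ell\in\LQT_\ell \). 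Replacing your hull by this convergence-defined submodule repairs the argument; the rest of your proposal then matches the paper's proof.
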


Since the category \( \fdCQ \cong \CQ^{\P} \) is closed under \( \multimap \) and \( \otimes \) in \( \classicalCQ \),
the quantum linear \( \lambda \)-calculus in \citet{Selinger2008} can be interpreted in \( \fdCQ \) and \( \CQ^{\P} \), which are actually fully abstract.
The functor \( \CQ^{\P} \longrightarrow K(\CPM) \) (where \( K(\CPM) \) is the Karoubi envelope of \( \CPM \))
that forgets the norm preserves \( \multimap \) and \( \otimes \), so the norm in \( \CQ^{\P} \) captures the total probability invariant of the \( \CPM \) interpretation given by \citet{Selinger2008}.

\subsection{Physically-Closed Module}
Our models \( \fdCQ \) and \( \CQ^{\P} \) provide a new perspective to Selinger's \( \CQ' \)~\cite{Selinger2004a}.
Given a norm \( \annorm{-} \) on \( \CPM(1,\ell) \), let \( \LQT_{\annorm{-}} \) be the minimum pseudo-representable \( \CQ \)-module with \( \#\LQT_{\annorm{-}} = \ell \) such that \( \{ y \in \CPM(1,\ell) \mid \annorm{y} \le 1 \} \subseteq  (\LQT_{\annorm{-}})_1 \) (ordered by the component-wise set-inclusion).
This gives a strong monoidal fully faithful functor \( \CQ' \longrightarrow \fdCQ \).

\begin{theorem}\label{thm:selingers-q-prime}
    \( \CQ' \) is isomorphic to a full subcategory of \( \fdCQ \).
    The embedding is strong monoidal.
\eqed
\end{theorem}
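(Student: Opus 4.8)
The plan is to produce a functor \( G \colon \CQ' \longrightarrow \fdCQ \), to show it is fully faithful and injective on objects, and to show it is strong monoidal; then \( \CQ' \) is isomorphic to the full subcategory of \( \fdCQ \) spanned by the image of \( G \).  On objects, \( G \) sends \( (\ell,\varphi,\annorm{-}) \) to the splitting in \( \classicalCQ \) of the idempotent \( \hat\varphi \colon \LQT_{\annorm{-}} \longrightarrow \LQT_{\annorm{-}} \), where \( \hat\varphi \) is the completely positive map \( \varphi \) viewed, via \cref{thm:cpm-representation}, as a \( \CQ \)-module endomorphism of \( \LQT_{\annorm{-}} \) (it is an endomorphism because \( \varphi \) is norm-non-increasing, hence preserves \( \LQT_{\annorm{-}} \); it is idempotent because \( \varphi \) is).  The splitting exists because \( \classicalCQ \) splits idempotents, and it lands in \( \fdCQ \) since it is a retract of the finite-dimensional \( \LQT_{\annorm{-}} \).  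On morphisms, \( G \) sends a completely positive map \( \psi \colon (\ell,\varphi,\annorm{-}_X) \longrightarrow (\ell',\varphi',\annorm{-}_Y) \) of \( \CQ' \) (so \( \psi = \varphi' \circ \psi \circ \varphi \) and \( \annorm{\psi(x)}_Y \le \annorm{x}_X \)) to the map between splittings induced by \( \psi \); this is a \( \CQ \)-module morphism by \cref{thm:cpm-representation} because norm-non-increasingness is precisely the condition ``\( \psi \circ x \in (\LQT_{\annorm{-}_Y})_n \) for all \( n \) and \( x \in (\LQT_{\annorm{-}_X})_n \)''.

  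The first task is to make the assignment \( \annorm{-} \mapsto \LQT_{\annorm{-}} \) precise and well behaved.  I would need: (a) that the minimal pseudo-representable module with level-\(1\) unit ball containing \( \{\,y \in \CPM(1,\ell) \mid \annorm y \le 1\,\} \) is well defined and lies in \( \classicalCQ \) — this uses that all norms on the finite-dimensional cone \( P_\ell \) are equivalent, so the unit ball contains \( r\,\CQ(1,\ell) \) for some \( r>0 \), and a pseudo-universal element has to be adjoined with care (presumably \( \LQT_{\annorm{-}} \) is a suitable \( \neg\neg \)-closure of the submodule generated by the unit ball of \( \annorm{-} \), and verifying that this is pseudo-representable with the correct level-\(1\) data is itself a lemma); (b) that the level-\(1\) norm of \( \LQT_{\annorm{-}} \) is exactly \( \annorm{-} \) — the generating operations (hereditariness, \( \CQ \)-action, countable sums) never enlarge the level-\(1\) unit ball, since an action back down from a higher level composes to a scalar \( \le 1 \); hence (c) a completely positive map is a \( \CQ \)-module morphism \( \LQT_{\annorm{-}_X} \longrightarrow \LQT_{\annorm{-}_Y} \) iff it is norm-non-increasing, i.e.\ \( \fdCQ(\LQT_{\annorm{-}_X},\LQT_{\annorm{-}_Y}) \cong \{\,\psi \in \CPM(\ell,\ell') \mid \annorm{\psi(-)}_Y \le \annorm{-}_X\,\} \) by \cref{thm:cpm-representation}.

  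Given (a)--(c), functoriality of \( G \) is immediate.  Full faithfulness follows from the standard description of hom-sets of split idempotents, \( \fdCQ(G X, G Y) \cong \{\, h \in \fdCQ(\LQT_{\annorm{-}_X},\LQT_{\annorm{-}_Y}) \mid h = \hat{\varphi'} \circ h \circ \hat\varphi \,\} \), combined with (c): the right-hand side is exactly \( \{\,\psi \mid \psi = \varphi'\psi\varphi,\ \annorm{\psi(-)}_Y \le \annorm{-}_X\,\} = \CQ'(X,Y) \).  Injectivity on objects holds because distinct triples \( (\ell,\varphi,\annorm{-}) \) determine distinct subobjects of \( \CPM({-},\ell) \) together with distinct splitting data, and as a fully faithful functor \( G \) reflects isomorphisms, so the full subcategory on its image is isomorphic to \( \CQ' \).

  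Strong monoidality carries the real content.  For the unit: the unit ball of the trace norm on \( P_1 \) is \( [0,1] = \CQ(1,1) = \yoneda(1)_1 \), and \( \yoneda(1) \) is already the minimal pseudo-representable module over \( 1 \) with that level-\(1\) data, so \( G \) of Selinger's tensor unit is \( \yoneda(1) = I \).  For the tensor: Selinger's \( X \otimes Y \) is \( (\ell\ell',\varphi\otimes\varphi',\annorm{-}'') \) with \( \annorm{-}'' \) the non-entangled tensor norm of \cref{eq:non-entangled-tensor-norm}.  The key identity to establish is \( \LQT_{\annorm{-}''} \cong \LQT_{\annorm{-}_X} \otensor \LQT_{\annorm{-}_Y} \), compatibly with the idempotents \( \widehat{\varphi\otimes\varphi'} \) and \( \hat\varphi \otensor \hat{\varphi'} \); for this I would use the explicit formula for \( \otensor \) from \cref{lem:linear:lqt-function-and-tensor} to show that both sides are the minimal pseudo-representable module whose level-\(1\) unit ball is the hereditary-convex closure of \( \{\, y \otimes y' \mid \annorm y_X \le 1,\ \annorm{y'}_Y \le 1 \,\} \), and that this closure is precisely the unit ball computed by \cref{eq:non-entangled-tensor-norm}.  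Granting the identity, since \( \otensor \) is a functor it preserves split idempotents, so the splitting of \( \hat\varphi \otensor \hat{\varphi'} \) is the \( \ptensor \) of the two splittings; applying \( \neg\neg \) (the identity on \( \classicalCQ \)) yields \( G(X\otimes Y) \cong G X \otimes G Y \), and the coherence isomorphisms transport because every isomorphism in sight is realised by a canonical completely positive map (the associators and unitors on both sides being induced from those of \( \CPM \)).  The step I expect to be hardest is exactly this one together with (a): pinning down ``minimal pseudo-representable module'' air-tightly and then matching its tensor with Selinger's non-entangled tensor norm; finite-dimensionality of the underlying cones is used throughout.
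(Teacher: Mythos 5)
Your overall route is the paper's: reduce to the idempotent-free case (you split the idempotents directly in \( \classicalCQ \); the paper phrases this via Karoubi envelopes of the subcategories \( \CQ'' \) and \( \odcCQ \), which is the same thing), realise each norm as a pseudo-representable module \( \LQT_{\annorm{-}} \), identify hom-sets through \cref{thm:cpm-representation}, and locate the real content in the tensor comparison. But two steps you lean on are not actually established, and in both cases the missing ingredient is the same. First, your step (b): you assert that the closure operations ``never enlarge the level-\(1\) unit ball'' because actions back down compose to scalars \( \le 1 \). That argument only covers the hereditary-convex-\(\CQ\)-action closure; it says nothing about the \( \neg\neg \)-closure, which you yourself note must be applied to land in \( \classicalCQ \) (the paper defines \( \LQT_{\annorm{-}} \) outright as \( \{y \mid \annorm{y}\le 1\}^{\bot\bot} \)). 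To show that biorthogonal closure does not add new level-\(1\) elements you must \emph{separate} any \( x \) with \( \annorm{x}>1 \) from the unit ball by an element of the single dual, and this is exactly where the paper invokes the Hahn--Banach theorem for continuous normed cones \cite[Theorem~2.14]{Selinger2004a} together with the fact that a positive linear functional on \( P_\ell \) extends to a completely positive map. Without that, (b) — and hence your identification of \( \CQ'(X,Y) \) with the hom-set of splittings, and injectivity on objects — is unsupported.

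Second, the strong monoidality, which you correctly flag as the hardest step, is left as a plan rather than a proof, and the plan omits the one genuinely delicate point. Showing \( \neg\neg(\LQT_{\annorm{-}_1} \otensor \LQT_{\annorm{-}_2}) = \LQT_{\annorm{-}''} \) (the paper's \cref{lem:entanglement-free-tensor}) is \emph{not} just a matching of level-\(1\) unit balls followed by functoriality: by \cref{lem:linear:lqt-function-and-tensor} the higher components \( (\LQT_{\annorm{-}_1} \otensor \LQT_{\annorm{-}_2})_n \) contain elements \( (y_1 \otimes y_2)\circ\varphi \) with \( y_i \) living at arbitrary levels \( m_i \), and one must prove that every functional \( f \in \CPM(k\otimes\ell_1\otimes\ell_2,1) \) that is bounded on the level-\(1\) unit ball of \( \annorm{-}'' \) is automatically bounded against all such higher-level product states \( \ident_k \otimes y_1 \otimes y_2 \). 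The paper does this by a partial-evaluation argument (fix \( y_2' \) at level \(1\), deduce membership of the curried functional in \( \neg\LQT_{\annorm{-}_1} \), then iterate), and then again uses Hahn--Banach at level \(1\) to get the reverse inclusion \( \LQT_{\annorm{-}''} \subseteq \neg\neg(\LQT_{\annorm{-}_1}\otensor\LQT_{\annorm{-}_2}) \), plus topological closedness of the biorthogonal to pass from \( \annorm{x}<1 \) to \( \annorm{x}\le 1 \). These are the concrete arguments your proposal would need to supply; the finite-dimensionality and norm-equivalence facts you cite are used, but they are not by themselves enough.
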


Let us compare the interpretations of linear types in \( \fdCQ \) with \( \CQ' \) via the above embedding.
Since \( (\LQT_{\annorm{-}})_n \) is the convex closure of \( \{\, x \circ \varphi \mid \varphi \in \CQ(n,1), x \in \CQ(1,n), \annorm{x} \le 1 \,\} \)
and \( (x \circ \varphi) \) is equivalent to \( x \otimes \varphi \colon 1 \otimes n \longrightarrow \ell \otimes 1 \) via the canonical isomorphisms, a value in \( (\LQT_{\annorm{-}})_n \) has only classical correlation to the environment.
In this sense, \( \LQT_{\annorm{-}} \) is \emph{physically-closed}.
This is in contrast to typical objects in \( \fdCQ^{\P} \) such as \( \sem{\TyQubit} = (\ident \colon \yoneda(2) \longrightarrow \yoneda(2)) \), which has elements with non-classical correlation with the environment, \eg~\( \ident_2 \in \CQ(2,2) \).
This physical closedness of modules from \( \CQ' \) explains why \( \CQ' \) is not suitable to interpret the quantum \( \lambda \)-calculus.

 \section{Related Work}
\label{sec:related-work}

Recently \citet{Clairambault2019a} gave an intensional game model for quantum programs that addresses the ``total probability'' by using the intensional nature of the model.
Using this game model, \citet{Clairambault2020} proved the full abstraction of the game model and the Pagani-Selinger-Valiron model~\cite{Pagani2014}.
A key of the proof is the convergence of a certain class of \( \CPM \)-coefficient formal power series, which intuitively follows from the bound ``total probability'' \( \le 1 \).

\citet{Malherbe2010} studied the (standard, \(\Set\)-enriched) presheaf category over the superoperator model \( \CQ \) and constructed a model of a quantum \( \lambda \)-calculus.
Malherbe's presheaf model demonstrates, to some extent, the usefulness of presheaves in modelling quantum programs, but the model has some drawbacks such as the absence of recursion.

\citet{Hasuo2011,Hasuo2011a} developed a categorical model of higher-order quantum programs based on the Geometry of Interaction interpretation.
A main drawback of their model is that their tensor product does not allow entanglement~\cite[Remark~IV.1]{Hasuo2011}, behaving like the tensor product in Selinger's \( \CQ' \)~\cite{Selinger2004a}.

\citet{Cho2016} studied a model of quantum computation in operator algebras.
Operator algebras are closely related to a foundation of quantum physics, and we would like to understand the relationship between his model and our models.
\cref{sec:norm} is a first step towards this direction.

The Pagani-Selinger-Valiron model~\cite{Pagani2014} stems from research of constructions of models of linear logic.
The category of \emph{weighted relation} by \citet{Laird2013} is a general construction and is one of the ancestors of the Pagani-Selinger-Valiron model.
\citet{Tsukada2017,Tsukada2018} discuss the Pagani-Selinger-Valiron model in relation to the \emph{Taylor expansion}~\cite{Ehrhard2008} in differential lambda calculus.
The module model of this paper should have some relationship to the Taylor expansion, but the relationship is not yet known.
\citet{Tsukada2017,Tsukada2018} are also inspired by their previous papers~\cite{Tsukada2015,Tsukada2016} concerning \emph{game semantics}~\cite{Abramsky2000,Hyland2000}, so their work should have some relationship to \citet{Clairambault2020} that proves the full abstraction of the Pagani-Selinger-Valiron model by relating this model with the game model~\cite{Clairambault2019a}. 

The difference between the Pagani-Selinger-Valiron model and ours can be summarised as follows: Our model \( \pshCQ \) is the enriched presheaf over \( \CQ \), whereas the Pagani-Selinger-Valiron model is the hom-set-wise completion\footnote{The completion here means a free embedding of a given algebra \( M \) with partially defined sum into \( \overline{M} \), an algebra with totally-defined countable sum.  \citet{Pagani2014} used \emph{\(D\)-completion}.}
of \( \CQ \) followed by the Cauchy-completion.\footnote{More precisely the Cauchy-completion in the enriched sense.  The hom-set-wise completion yields \( \overline{\CPM} \), which is complete-\( \Sigma \)-monoid enriched.  The Cauchy-completion in the complete-\( \Sigma \)-monoid enriched sense is the completion by adding countable biproducts and splitting objects of idempotents.}
The enriched Cauchy-completion is known to give a full subcategory of the enriched presheaf category, so the crucial difference is the hom-set-wise completion.
The hom-set-wise completion introduces \( \infty \)'s, of which existence makes some arguments of this paper unapplicable.
One point is the analysis of the embedding-projection pairs, in which the cancelability \( x + z = y + z \Rightarrow x = y \) plays an important role.  Note that an algebra with totally-defined countable sum cannot be cancellable since \( 0 + 1 + 1 + \dots = 1 + 1 + \cdots \) (infinite sums of \( 1 \)) implies \( 0 = 1 \).
The second point is the full abstraction proof, in which the convergence (\ie~not being \( \infty \)) of a real-coefficient power series plays an essential role.
Nevertheless we believe that the Pagani-Selinger-Valiron model provides a fully abstract model of Quantum FPC.

\citet{Staton2015} studied a first-order quantum computation from the viewpoint of algebraic theory, giving an axiomatisation of the equational theory of (a variant of) the superoperator model.

As for recursive types in quantum programming languages, \citet[Section~3]{Rennela2020} discussed an approach to model recursive types by using the CPO-enrichment of a category of \( W^* \)-algebras.
Subsequently \citet{Pechoux2020} provided a model of a quantum calculus with inductive data types and \citet{Jia2022} a model of a quantum calculus with quantum inductive types and classical recursive types.
But even in the latter, the quantum language is separated from the classical part and this separation played a crucial role in the model construction.

\citet{AndresMartinez2023} studied another algebra with partial infinite sums to develop a model of a quantum programming language with loops.
Their algebra has weaker axioms than \( \Sigma \)-monoid, satisfying only a weaker form of the associativity law.
An algebra with partial infinite sum and weaker associativity law can also be found in \citet{Manes1986}.
Their algebra is further required to be \( \omega \)-complete (in the sense of \cref{def:sigma-monoid}) in order to apply techniques from the domain theory.
They studied a category enriched by their algebra; the connection between their and our approaches deserves further investigation. 

Denotational models of recursive types in the (non-quantum) linear \( \lambda \)-calculus was studied recently by \citet{Lindenhovius2021}.
They introduced a class of models and proves the adequacy of the models.
Unfortunately their result is not applicable to our setting (see \cref{rem:recursive-types-in-linear-setting}).

\tk{todo: write something about models of circuit description language.  What is interesting is that some of them used presheaves.  But our model considers the ``sum'' of circuits, so our approach may not be satisfactory to the community.}
 \section{Conclusion and Future Work}
\label{sec:conc}
We introduced the category \( \pshCQ \) of modules over superoperator and shows its relevance to quantum programs.
We caved out a model \( \classicalCQ \) of classical linear logic from \( \pshCQ \) and proved that \( \classicalCQ \) is an adequate and fully abstract model of Quantum FPC, a quantum \( \lambda \)-calculus with recursive types.

We expect that the construction of this paper is applicable to other \( \SMon \)-enriched symmetric monoidal category.
Given a \( \SMon \)-enriched monoidal category \( \mathcal{C} \) that describes the first-order effectful computation, the enriched presheaf category \( \widehat{\mathcal{C}} \) should be a model of higher-order computation with the same effect as \( \mathcal{C} \) under a mild assumption; the details are left for future work.
\asd{infinite dimensional model?} 
\subsection*{Acknowledgement}
The authors would also like to thank the anonymous reviewers for their helpful and constructive comments.
This work was supported by JSPS KAKENHI Grant Numbers JP20H05703
and ROIS NII Open Collaborative Research 2023-23FA05.

\bibliography{library}

\ifwithappendix
\clearpage
\appendix
\section{Supplementary Materials for Section~3}
\tk{\cref{sec:model}}

\begin{notation}
    Let \( M \) be a \( \Sigma \)-monoid and assume that \( \sum_i x_i = y \) holds in \( M \).
    To make explicit the ambient \( \Sigma \)-monoid \( M \), we sometimes write as \( M \models \sum_i x_i = y \).
    Similar notation is used for the canonical pre-order: \( M \models x \le y \) means that \( x,y \in M \) and \( x \le y \) holds in \( M \).
\eqed
\end{notation}

\subsection{On the Sum of Completely Positive Maps}
Let us first recall the definition of the (finite and infinite) sum of a family in \( \CPM(n,m) \).
A finite sum in \( \CPM(n,m) \) is always defined and an infinite sum \( \sum_{i \in \Nat} \varphi_i \) is defined as the limit \( \lim_{n \to \infty} \sum_{i=0}^n \varphi_i \) with respect to the standard topology on \( \Mat_n(\Complex) \cong \Complex^{n \times n} \).
\begin{lemma}\label{lem:cpm-sigma-monoid}
    \( \CPM(n,m) \) with the above sum is a \( \Sigma \)-monoid.
\end{lemma}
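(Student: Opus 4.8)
The plan is to reduce all five $\Sigma$-monoid axioms to a single analytic fact about the set $\CPM(n,m)$, viewed as a subset of the finite-dimensional real vector space $V$ of $\Real$-linear maps $\Mat_n(\Complex) \to \Mat_m(\Complex)$: that $\CPM(n,m)$ is a \emph{topologically closed, pointed convex cone}. Closedness holds because complete positivity is the conjunction, over all $k$ and all $x \ge 0$, of the closed conditions ``$(\ident_k \otimes \varphi)(x)$ is positive''; pointedness, \ie~$\CPM(n,m) \cap (-\CPM(n,m)) = \{0\}$, holds because a $\varphi$ with $\varphi$ and $-\varphi$ both positive annihilates every positive matrix, hence $\varphi = 0$ since positive matrices span $\Mat_n(\Complex)$. (Both facts also drop out of the Choi isomorphism $\CPM(n,m) \cong P_{nm}$ onto the cone of positive semidefinite matrices, a linear homeomorphism.) Writing $\varphi \le \psi$ for $\psi - \varphi \in \CPM(n,m)$, addition is cancellable because $V$ is a vector space, and $\le$ is the canonical pre-order, which is a partial order by pointedness. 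From ``closed $+$ pointed'' one gets by the standard argument that every order interval $[0,c] = \CPM(n,m) \cap (c - \CPM(n,m))$ is compact (a norm-diverging sequence in it would produce a unit vector lying in both $\CPM(n,m)$ and $-\CPM(n,m)$), and hence that every norm-bounded \emph{increasing} sequence in $\CPM(n,m)$ converges to its least upper bound (a convergent subsequence exists by compactness, its limit dominates every term because each up-set $c + \CPM(n,m)$ is closed, and two subsequential limits are mutually dominated hence equal).

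With this in hand, the core lemma I would prove is that, for a countable family $(\varphi_i)_{i \in I}$ in $\CPM(n,m)$, the following are equivalent: (a) for some enumeration of $I$ the partial sums converge in $V$; (b) for every enumeration they converge; (c) the set $\{\, \sum_{i \in J} \varphi_i \mid J \subseteq_{\mathrm{fin}} I \,\}$ is bounded in $V$; and in that case each limit in (a),(b) equals $\bigvee \{\, \sum_{i \in J} \varphi_i \mid J \subseteq_{\mathrm{fin}} I \,\}$. Indeed (a)$\Rightarrow$(c) because the limit $S$ along one enumeration is an upper bound of all finite partial sums, which therefore lie in the compact set $[0,S]$; (c)$\Rightarrow$(b) because partial sums along any enumeration are increasing (each $\varphi_i \ge 0$), equal some $\sum_{i \in J}\varphi_i$, hence are norm-bounded, hence converge; (b)$\Rightarrow$(a) is trivial; and the common value is read off as the supremum over finite $J$. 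This makes ``$\sum_i \varphi_i \defe \lim_N \sum_{i \le N}\varphi_i$'' independent of the enumeration, hence well defined on arbitrary countable families.

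It then remains to check the axioms. Empty Sum ($\sum \emptyset = 0 \in \CPM(n,m)$), Singleton Sum, and Zero are immediate once one notes that in each case the two families have the same set of finite partial sums, so the core lemma equates their sums; Commutativity is precisely the enumeration-independence from the core lemma. Associativity, $\sum_i (\sum_{j \in J_i} x_{i,j}) \Keq \sum_{(i,j)} x_{i,j}$, is the only step with real content, and I would prove both Kleene directions via criterion (c). If the left side is defined, with $y_i \defe \sum_{j\in J_i} x_{i,j}$ and $Y \defe \sum_i y_i$, then for finite $K \subseteq \coprod_i J_i$ with first-coordinate projection $F$ and $K_i \defe \{j \mid (i,j)\in K\}$ we have $\sum_{(i,j)\in K} x_{i,j} = \sum_{i\in F}\sum_{j\in K_i} x_{i,j} \le \sum_{i \in F} y_i \le Y$, so the right side is defined and $\le Y$; the inequality $Y \le \sum_{(i,j)} x_{i,j}$ follows by writing each $y_i$ as the limit of its finite partial sums and using continuity of the finitely many additions in $\sum_{i\in F} y_i$; and ``right defined $\Rightarrow$ left defined'' is the symmetric boundedness estimate. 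The hard part is the compactness step for order intervals together with the convergence of bounded increasing sequences --- essentially the only place finite-dimensionality and pointedness are used; once the core lemma is available, the rest is bookkeeping.
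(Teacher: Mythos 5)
Your proof is correct and follows essentially the same route as the paper's: both hinge on the fact that the topological limit of a bounded increasing sequence of completely positive maps is its least upper bound in the L\"owner-induced order, and both obtain rearrangement-invariance by comparing finite partial sums against that bound. You additionally derive from the closed-pointed-cone structure the two order-theoretic facts the paper simply cites from Selinger (compactness of order intervals, convergence of bounded increasing sequences to their suprema), and you spell out the associativity check the paper dismisses as easy; all of this is sound.
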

\begin{proof}[Proof]
    The key is the fact that the topological limit \( \lim_{n \to \infty} \varphi_i \) of an increasing chain \( \varphi_0 \le \varphi_1 \le \cdots \) is the least upper bound with respect to the order \( \le \)~\cite[Lemma~2.9]{Selinger2004a}; note that the order \( \le \) associated to the finite sum in \( \CPM(n,m) \) is the \emph{L\"owner order} on positive self-adhoint matrices.

    The sum \( \sum_{i \in \Nat} \varphi_i \) is preserved by a rearrangement \( \sigma \colon \Nat \longrightarrow \Nat \) because of the positivity of \( \varphi_i \).
    We show that a rearrangement \( \sigma \colon \Nat \longrightarrow \Nat \) does not affect the limit.
    Assume \( \varphi = \lim_{n \to \infty} \sum_{i=0}^n \varphi_i \).
    For each \( m \), let \( n = \max\{\sigma(j)\mid j\le m\} \).
    Then \( \sum_{j=0}^m \varphi_{\sigma(j)} \le \sum_{i=0}^{n} \varphi_i \le \varphi \) and hence \( \varphi \) is an upper bound of \( \sum_{j=0}^m \varphi_{\sigma(j)} \).
    By \cite[Remark in Section~4.1]{Selinger2004a}, an increasing sequence with an upper bound has the least upper bound.
    So \( \lim_{m \to \infty} \sum_{j=0}^m \varphi_{\sigma(j)} \le \varphi \).
    The same argument shows the reverse direction.
    
    The associativity and other axioms are easy to prove.
\end{proof}

\subsection{On Partial Actions}
Every \( \CQ \)-module has a \emph{partial action} of \( \Real_{\ge 0} \).
Given \( x \in M_n \) and \( r \ge 0 \), its action \( r \cdot x \) is a partially defined expression.
It is defined as \( (r/N) \cdot x + \dots + (r/N) \cdot x \) (\( N \) components) for some \( N \in \Nat \) with \( (r/N) \in [0,1] \), whose value is independent of the choice of \( N \in \Nat \).
Similarly \( M \) has a partial action of \( \CPM \): given \( x \in M_n \) and \( \varphi \in \CPM(m,n) \), the partially defined expression \( (x \cdot \varphi) \) is defined as \( (x \cdot (\varphi/N)) + \dots + (x \cdot (\varphi/N)) \) (\( N \) components).

\subsection{On Monomorphisms in $\pshCQ$}
A \( \CQ \)-module morphism \( f \colon M \longrightarrow N \) is a \emph{monomorphism} if \( f \circ g = f \circ h \) implies \( g = h \) for every \( \CQ \)-module \( L \) and \( g,h \colon L \longrightarrow M \).

\begin{lemma}\label{lem:monomorphism-characterisation}
    Let \( f \colon M \longrightarrow N \) be a morphism in \( \pshCQ \).
    The following conditions are equivalent.
    \begin{itemize}
        \item \( f \) is a monomorphism.
        \item \( f_n \in \SMon(M_n, N_n) \) is injective for each \( n \).
    \end{itemize}
\end{lemma}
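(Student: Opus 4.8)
The plan is to prove the two implications separately; the implication ``each $f_n$ injective $\Rightarrow$ $f$ monomorphism'' is routine, and the converse is where the presheaf structure is used. For the easy direction, suppose $g, h \colon L \longrightarrow M$ are $\CQ$-module morphisms with $f \circ g = f \circ h$. Taking $n$-th components gives $f_n \circ g_n = f_n \circ h_n$ as functions $L_n \longrightarrow N_n$; since $f_n$ is injective, $g_n = h_n$ for every $n$, hence $g = h$. Thus $f$ is a monomorphism.

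For the converse I would test $f$ against the representables $\yoneda(n)$. Recall from the (enriched) Yoneda Lemma the bijection $M_n \cong \pshCQ(\yoneda(n), M)$, which sends $x \in M_n$ to the morphism $\hat{x} \colon \yoneda(n) \longrightarrow M$ with $k$-th component $\yoneda(n)_k = \CQ(k,n) \ni \varphi \mapsto x \cdot \varphi \in M_k$, and which is natural in $M$. Naturality in $M$ says exactly that for $x \in M_n$ one has $f \circ \hat{x} = \widehat{f_n(x)}$; this can be checked directly, since for $\varphi \in \CQ(k,n)$ we have $(f \circ \hat{x})_k(\varphi) = f_k(x \cdot \varphi) = f_n(x) \cdot \varphi = \widehat{f_n(x)}_k(\varphi)$, using that $f$ preserves the $\CQ$-action. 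Now suppose $x, y \in M_n$ satisfy $f_n(x) = f_n(y)$. Then $f \circ \hat{x} = \widehat{f_n(x)} = \widehat{f_n(y)} = f \circ \hat{y}$, so $\hat{x} = \hat{y}$ because $f$ is a monomorphism, and applying the inverse of the Yoneda bijection yields $x = y$. Hence $f_n$ is injective.

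There is no serious obstacle here; the only point that deserves a line of care is the compatibility of the Yoneda isomorphism with post-composition by $f$ (its naturality in $M$), which is what transports the purely diagrammatic hypothesis on $f$ to an injectivity statement about each $f_n$. One should also recall, as the paper has already noted, that a $\SMon$-enriched natural transformation is just an ordinary one, so that the morphisms $\hat{x}, \hat{y} \colon \yoneda(n) \longrightarrow M$ and the representables $\yoneda(n)$ are legitimate objects and test maps in $\pshCQ$. (Alternatively, since $\SMon$ is complete and limits in $\pshCQ$ are computed componentwise, a morphism in $\pshCQ$ is a monomorphism iff each of its components is a monomorphism in $\SMon$, and monomorphisms in $\SMon$ are injective because the forgetful functor $\SMon \longrightarrow \Set$ is a right adjoint; but the representable argument above is more self-contained.)
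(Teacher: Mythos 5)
Your proof is correct and follows essentially the same route as the paper's: the easy direction is the componentwise argument, and the converse tests $f$ against the representables $\yoneda(n)$ via the Yoneda Lemma, using the identity $(f\circ\hat{x})_k(\varphi)=f_n(x)\cdot\varphi$ exactly as in the paper. No gaps.
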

\begin{proof}
    Suppose that \( f \) is a monomorphism.
    We show that \( f_n \colon M_n \longrightarrow N_n \) is an injection.
    Let \( x,y \in M_n \) and assume that \( f_n(x) = f_n(y) = z \in N_n \).
    By the Yoneda Lemma, \( x \) and \( y \) can be identified with morphisms \( \hat{x}, \hat{y} \colon \yoneda(n) \longrightarrow M \) such that \( x = \hat{x}_n(\ident_n) \) and \( y = \hat{y}_n(\ident_n) \).
    Then, for \( \varphi \in \CQ(m,n) \),
    \begin{align*}
        (f \circ \hat{x})_m(\varphi)
        &=
        ((f \circ \hat{x})_n(\ident_n)) \cdot \varphi
        \\
        &=
        (f_n(\hat{x}_n(\ident_n))) \cdot \varphi
        \\
        &=
        f_n(x) \cdot \varphi
        \\
        &=
        f_n(y) \cdot \varphi
        \\
        &=
        ((f \circ \hat{y})_n(\ident_n)) \cdot \varphi
        \\
        &=
        (f \circ \hat{y})_m(\varphi).
    \end{align*}
    Since \( \varphi \) is arbitrary, \( f \circ \hat{x} = f \circ \hat{y} \) and thus \( \hat{x} = \hat{y} \).
    Therefore \( x = \hat{x}_n(\ident_n) = \hat{y}_n(\ident_n) = y \).

    Suppose that \( f_n \) is injective for each \( n \).
    Let \( g,h \) be \( \CQ \)-module morphisms \( L \longrightarrow M \) and assume \( f \circ g = f \circ \varphi \).
    Then \( (f \circ g)_n = f_n \circ g_n \) and hence \( f_n \circ g_n = f_n \circ h_n \) for each \( n \).
    Since \( \Sigma \)-monoid homomorphisms are functions on underlying sets and the composition is the functional composition, if \( f_n \) is injective, then \( g_n = h_n \).
\end{proof}

By identifying a monomorphism \( f \colon M \hookrightarrow N \) and a family \( (\{ f_n(x) \in N_n \mid x \in M_n \} \subseteq N_n)_n \) of subsets, a monomorphism is called a \emph{\(\CQ\)-submodule} or \emph{submodule}.

We introduce some subclass of submodules.
Recall the canonical pre-order of a \( \Sigma \)-monoid, given by \( x \le y \defp \exists z. x+z=y \).
\begin{definition}
    Let \( \iota \colon M \hookrightarrow N \) be a \( \CQ \)-submodule.
    \begin{itemize}
        \item It is \emph{sum-reflecting} if, for every \( n \in \Nat \), \( x \in M_n \) and family \( (x_i)_i \) over \( M_n \), if \( \iota(x) = \sum_i \iota(x_i) \) in \( N_n \), then \( x = \sum_i x_i \) holds in \( M_n \).
        \begin{align*}
            &
            \forall n \in \Nat. \forall x \in M_n. \forall (x_i)_i \subseteq M_n.
            \\
            &\qquad
            N_n \models \iota(x) = \sum_i \iota(x_i)
            \quad\Longrightarrow\quad
            M_n \models x = \sum_i x_i.
        \end{align*}
        \item It is \emph{downward-closed} if \( M_n \subseteq N_n \) is a downward-closed subset for each \( n \), \ie,
        \begin{align*}
            &
            \forall n \in \Nat. \forall x \in M_n. \forall y \in N_n.
            \\
            &\qquad
            N_n \models y \le \iota(x)
            \quad\Longrightarrow\quad
            \exists y_0 \in M_n. \: y = \iota(y_0)
        \end{align*}
        \item It is \emph{hereditary} if it is sum-reflecting and downward-closed.
\eqed
    \end{itemize}
\end{definition}

\begin{remark}
    Given \( M \hookrightarrow N \), we shall often regard an element \( x \in M_n \) as an element of \( N_n \).
    But this convention is applied only if the omission of the embedding would not be confusing, and we sometimes prefer explicitly writing the embedding.
    A hereditary submodule is easy to handle, so the embedding of a hereditary submodule is mostly omitted.
    A confusing case is a non-downward-closed submodule, including the case of a regular submodule (\ie~an equaliser).
\qed
\end{remark}

\begin{lemma}\label{lem:some-monomorphisms-closed-under-composition}
    The class of sum-reflecting monomorphisms (resp.~hereditary monomorphisms) are closed under the composition.
\end{lemma}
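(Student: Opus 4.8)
The plan is to unwind the three definitions (sum-reflecting, downward-closed, hereditary submodule) and verify the closure conditions by hand; the argument is short but the hereditary case has one genuinely load-bearing step. I would first record the trivial observation that monomorphisms are closed under composition in any category (and here this is also immediate from \cref{lem:monomorphism-characterisation}, since a composite of injections is an injection), so that in each case it only remains to check the extra property for the composite $\kappa \circ \iota$ of two submodules $\iota \colon L \hookrightarrow M$ and $\kappa \colon M \hookrightarrow N$.

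For the sum-reflecting case the argument is just two applications of the hypothesis in sequence. Suppose $x, x_i \in L_n$ with $(\kappa \circ \iota)(x) = \sum_i (\kappa \circ \iota)(x_i)$ in $N_n$. Rewrite this as $\kappa(\iota(x)) = \sum_i \kappa(\iota(x_i))$; since $\iota(x), \iota(x_i) \in M_n$, sum-reflectingness of $\kappa$ gives $\iota(x) = \sum_i \iota(x_i)$ in $M_n$, and then sum-reflectingness of $\iota$ gives $x = \sum_i x_i$ in $L_n$. That is the whole proof of the first clause.

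For the hereditary case, let $\iota$ and $\kappa$ both be hereditary; in particular both are sum-reflecting, so by the previous paragraph $\kappa \circ \iota$ is sum-reflecting, and it remains to show $\kappa \circ \iota$ is downward-closed. Suppose $y \le (\kappa \circ \iota)(x) = \kappa(\iota(x))$ in $N_n$, witnessed by $z \in N_n$ with $y + z = \kappa(\iota(x))$. By commutativity both $y \le \kappa(\iota(x))$ and $z \le \kappa(\iota(x))$ hold in $N_n$, so downward-closedness of $\kappa$ yields $y_0, z_0 \in M_n$ with $y = \kappa(y_0)$ and $z = \kappa(z_0)$. Then $\kappa(y_0) + \kappa(z_0) = y + z = \kappa(\iota(x))$ is a defined sum in $N_n$, so sum-reflectingness of $\kappa$ gives $y_0 + z_0 = \iota(x)$ in $M_n$, hence $y_0 \le \iota(x)$ in $M_n$; applying downward-closedness of $\iota$ produces $y_1 \in L_n$ with $y_0 = \iota(y_1)$, and therefore $y = \kappa(\iota(y_1)) = (\kappa \circ \iota)(y_1)$, as required.

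The only point worth flagging is why I do not claim downward-closedness alone is closed under composition: pulling $y$ back to $y_0 \in M_n$ is not enough, because to conclude $y_0 \le \iota(x)$ \emph{in $M_n$} one must also pull back the witness $z$ to some $z_0 \in M_n$ and then \emph{recognise} $y_0 + z_0 = \iota(x)$ as a true identity in $M_n$, and that last step uses sum-reflectingness of the outer submodule $\kappa$. This is precisely why the lemma is stated for hereditary monomorphisms, which carry both properties. Beyond this bookkeeping there is no real obstacle; I expect the proof to be routine.
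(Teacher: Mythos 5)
Your proof is correct and follows essentially the same route as the paper's: two successive applications of sum-reflection for the first clause, and for the hereditary clause pulling back both the element and its witness through the outer submodule, using its sum-reflection to recognise the decomposition inside the middle module before applying downward-closedness of the inner one. The closing remark about why downward-closedness alone does not compose is a correct observation not made explicit in the paper, but the argument itself matches.
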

\begin{proof}
    Let \( f \colon M \hookrightarrow N \) and \( g \colon N \hookrightarrow L \) be monomorphisms.
    
    Assume that \( f \) and \( g \) are sum-reflecting.
    Assume \( L \models g f(x) = \sum_i g f(x_i) \).
    By the sum-reflection of \( g \), we have \( N \models f(x) = \sum_i f(x_i) \).
    By the sum-reflection of \( f \), we have \( M \models x = \sum_i x_i \).

    Assume that \( f \) and \( g \) are hereditary.
    We have seen that \( g \circ f \) is sum-reflecting.
    Assume \( L \models y \le g f(x) \), \ie~\( L \models y + z = g f(x) \).
    By the downward-closedness of \( g \), we have \( y = g(y_0) \) and \( z = g(z_0) \) for some \( y_0 \) and \( z_0 \) in \( N \).
    Hence \( L \models g(y_0) + g(z_0) = g f(x) \).
    Since \( g \) is sum-reflecting, \( N \models y_0 + z_0 = f(x) \).
    By the downward-closedness of \( f \), we have \( y_0 = f(y_1) \) for some \( y_1 \) in \( M \).
\end{proof}

\tk{todo: metavariable renaming}
\begin{lemma}\label{lem:diagonal-monotone}
    Let \( I \) be a countable set and
    \( N, N_i \) and \( N', N'_i \) be \( \CQ \)-modules (\( i \in I \)).
    Assume morphisms \( f, f_i, e_i, \varphi_i, e'_i, \varphi'_i \) (\(i\in I\)) such that
    \begin{equation*}
        \xymatrix{
            N \ar[d]^f \ar[r]^{\varphi_i} & N_i \ar[d]^{f_i}
            \\
            N' \ar[r]^{\varphi'_i} & N'_i
        }
    \end{equation*}
    and
    \begin{equation*}
        \xymatrix{
            N \ar[d]^f & N_i \ar[d]^{f_i} \ar[l]^{e_i}
            \\
            N' & N'_i \ar[l]^{e'_i}
        }
    \end{equation*}
    commute for every \( i \in I \).
    Furthermore, assume that
    \begin{equation*}
        \ident_N = \sum_{i \in I} e_i \circ \varphi_i
    \end{equation*}
    and
    \begin{equation*}
        \ident_{N'} = \sum_{i \in I} e'_i \circ \varphi'_i
    \end{equation*}
    \begin{enumerate}
        \item If \( f_i \) is a monomorphism for every \( i \), then \( f \) is a monomorphism.
        \item If \( f_i \) is a sum-reflecting monomorphism for every \( i \), then \( f \) is a sum-reflecting monomorphism.
        \item If \( f_i \) is a hereditary monomorphism for every \( i \), then \( f \) is a hereditary monomorphism.
    \end{enumerate}
\end{lemma}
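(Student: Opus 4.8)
The plan is to work componentwise and exploit the decomposition identities $\ident_N = \sum_i e_i\circ\varphi_i$ and $\ident_{N'}=\sum_i e'_i\circ\varphi'_i$ to reduce each of the three properties of $f$ to the corresponding property of the $f_i$. Fix an object $k\in\Nat$; by \cref{lem:monomorphism-characterisation} it suffices to argue about the $\Sigma$-monoid homomorphisms $f_k\colon N_k\to N'_k$ and $(f_i)_k\colon (N_i)_k\to(N'_i)_k$. Throughout I will use the two commuting squares in their componentwise form, $(f_i)_k\circ(\varphi_i)_k = (\varphi'_i)_k\circ f_k$ and $f_k\circ(e_i)_k = (e'_i)_k\circ(f_i)_k$, together with the componentwise identities $\ident_{N_k} = \sum_i (e_i)_k\circ(\varphi_i)_k$ and $\ident_{N'_k} = \sum_i (e'_i)_k\circ(\varphi'_i)_k$, which hold because the sum of morphisms in $\pshCQ$ is computed componentwise.

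First I would prove (1). Suppose $x,y\in N_k$ with $f_k(x)=f_k(y)$. Applying $(\varphi'_i)_k$ and using the first square, $(f_i)_k((\varphi_i)_k(x)) = (\varphi'_i)_k(f_k(x)) = (\varphi'_i)_k(f_k(y)) = (f_i)_k((\varphi_i)_k(y))$; since $(f_i)_k$ is injective, $(\varphi_i)_k(x)=(\varphi_i)_k(y)$ for every $i$. Now apply $(e_i)_k$ and sum over $i$: using the decomposition identity, $x = \sum_i (e_i)_k((\varphi_i)_k(x)) = \sum_i (e_i)_k((\varphi_i)_k(y)) = y$ (the sums are defined because $\ident_{N_k}$ witnesses them). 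Hence $f_k$ is injective and $f$ is a monomorphism.

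Next, (2). Assume $f_k$ is injective (by (1)) and that $N'_k\models f_k(x) = \sum_j f_k(x_j)$ for some $x,x_j\in N_k$. Apply the $\Sigma$-monoid homomorphism $(\varphi'_i)_k$: using the first square and the homomorphism property, $(f_i)_k((\varphi_i)_k(x)) \Kle \sum_j (f_i)_k((\varphi_i)_k(x_j))$, and in fact equality holds as the left side is defined; since $(f_i)_k$ reflects sums, $(N_i)_k\models (\varphi_i)_k(x) = \sum_j (\varphi_i)_k(x_j)$ for every $i$. Now apply $(e_i)_k$ (a $\Sigma$-monoid homomorphism) and then sum over $i\in I$; by the associativity axiom for $\Sigma$-monoids together with $\ident_{N_k}=\sum_i(e_i)_k\circ(\varphi_i)_k$, the left side rearranges to $x$ and the right side to $\sum_j x_j$, giving $N_k\models x=\sum_j x_j$. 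For (3) it remains, by \cref{lem:some-monomorphisms-closed-under-composition}-style reasoning applied directly, to check downward-closedness: suppose $N'_k\models y\le f_k(x)$, i.e.\ $y + z = f_k(x)$ in $N'_k$. Apply $(\varphi'_i)_k$ to get $(\varphi'_i)_k(y) + (\varphi'_i)_k(z) = (f_i)_k((\varphi_i)_k(x))$ in $(N'_i)_k$, so $(\varphi'_i)_k(y)\le (f_i)_k((\varphi_i)_k(x))$; since $(f_i)_k$ is downward-closed, $(\varphi'_i)_k(y) = (f_i)_k(w_i)$ for some $w_i\in (N_i)_k$. The candidate preimage of $y$ is $y_0 \defe \sum_i (e_i)_k(w_i)$, provided this sum is defined; one checks $f_k(y_0) = \sum_i (e'_i)_k((f_i)_k(w_i)) = \sum_i (e'_i)_k((\varphi'_i)_k(y)) = y$ using the second square and $\ident_{N'_k}=\sum_i (e'_i)_k\circ(\varphi'_i)_k$, and the definedness of this last sum forces the definedness of $\sum_i(e_i)_k(w_i)$ after composing with the monomorphism $f_k$ (using sum-reflection from part (2)).

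The main obstacle I anticipate is the bookkeeping around \emph{definedness} of the infinite sums: in a $\Sigma$-monoid a sum need not be defined, so at each "apply and re-sum" step I must verify that the relevant family is summable before invoking commutativity/associativity. The identities $\ident_N=\sum_i e_i\varphi_i$ and $\ident_{N'}=\sum_i e'_i\varphi'_i$ are exactly what supplies these witnesses, and the already-established injectivity/sum-reflection of $f$ is what lets me transport definedness back along $f_k$ in part (3); so the argument is a careful but routine chase once one is disciplined about which $\Sigma$-monoid each equation lives in, a discipline the paper's $M\models$ notation is designed to support.
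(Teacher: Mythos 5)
Parts (1) and (2) of your proof are correct and essentially the paper's own argument: the paper phrases (1) with generalised elements \( g,h \colon X \longrightarrow N \) rather than componentwise, but via \cref{lem:monomorphism-characterisation} the two presentations are interchangeable, and your handling of definedness in (2) (push the sum forward along \( (\varphi'_i)_k \), reflect it along \( (f_i)_k \), then reassemble with \( \ident_{N} = \sum_i e_i \circ \varphi_i \) and associativity) is exactly what the paper does.

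There is, however, a genuine gap in (3), at the definedness of \( y_0 = \sum_i (e_i)_k(w_i) \). You propose to deduce it from the definedness of \( \sum_i f_k((e_i)_k(w_i)) = y \) via the sum-reflection of \( f \) established in (2). But sum-reflection only applies when the \emph{value} of the sum is already known to lie in the submodule, i.e.\ when you have some \( x_0 \in N_k \) with \( f_k(x_0) = y \) --- and that is precisely the preimage you are in the middle of constructing, so the step is circular. (There is no general reason a sum in \( N_k \) should converge just because its image under \( f_k \) does.) The paper's fix is to carry the complement along: from \( y + z = f_k(x) \) one also obtains \( z_i \) with \( (f_i)_k(z_i) = (\varphi'_i)_k(z) \), whence \( \sum_i \bigl( f_k((e_i)_k(w_i)) + f_k((e_i)_k(z_i)) \bigr) = y + z = f_k(x) \), whose value now has the \emph{known} preimage \( x \). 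Sum-reflection of \( f \) then gives \( \sum_i \bigl( (e_i)_k(w_i) + (e_i)_k(z_i) \bigr) = x \) in \( N_k \); in particular the partial sum \( \sum_i (e_i)_k(w_i) \) is defined, and your computation \( f_k(y_0) = y \) closes the argument.
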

\begin{proof}
    (1)
    Let \( g, h \colon X \longrightarrow N \) and assume \( f \circ g = f \circ h \).
    Then
    \begin{equation*}
        \varphi'_i \circ f \circ g 
        =
        \varphi'_i \circ f \circ h
    \end{equation*}
    for every \( i \).
    By an assumed commuting diagram,
    \begin{equation*}
        f_i \circ \varphi_i \circ g 
        =
        f_i \circ \varphi_i \circ h
    \end{equation*}
    for every \( i \).
    Since \( f_i \) is monic,
    \begin{equation*}
        \varphi_i \circ g 
        =
        \varphi_i \circ h
    \end{equation*}
    for every \( i \), and thus
    \begin{equation*}
        e_i \circ \varphi_i \circ g 
        =
        e_i \circ \varphi_i \circ h.
    \end{equation*}
    Since \( \ident_N = \sum_i e_i \circ \varphi_i \),
    \begin{equation*}
        g
        =
        (\sum_i e_i \circ \varphi_i) \circ g
        \Kle
        \sum_i (e_i \circ \varphi_i \circ g)
    \end{equation*}
    and
    \begin{equation*}
        h
        =
        \sum_i (e_i \circ \varphi_i \circ h).
    \end{equation*}
    Hence \( g = h \).

    (2)
    Assume \( N' \models \sum_j f(x_j) = f(x) \) for some \( x_j,x \in N_n \).
    Then \( N_i' \models \sum_j \varphi'_i(f(x_j)) = \varphi'_i(f(x)) \) for every \( i \in I \).
    So \( N_i' \models \sum_j f_i(\varphi_i(x_j)) = f_i(\varphi_i(x)) \).
    Since \( f_i \) is sum-reflecting, \( N_i \models \sum_j \varphi_i(x_j) = \varphi_i(x) \).
    Since \( i \in I \) is arbitrary, \( N \models \sum_{i,j} (e_i \circ \varphi_i)(x_j) \Keq \sum_i (e_i \circ \varphi_i)(x) \).
    Both sides are defined since so is \( x = \sum_i (e_i \circ \varphi_i)(x) \).
    Therefore \( N \models \sum_j x_j = \sum_{i,j} (e_i \circ \varphi_i)(x_j) = \sum_i (e_i \circ \varphi_i)(x) = x \).

    (3)
    Assume that \( N' \models y \le f(x) \) for some \( x \in N_n \) and \( y \in N'_n \).
    Then \( N' \models y + z = f(x) \) for some \( z \in N'_n \).
    Then \( N'_i \models \varphi'_i(y) + \varphi'_i(z) = \varphi'_i(f(x)) = f_i(\varphi_i(x)) \) for every \( i \in I \).
    Since \( f_i \) is downward-closed, there exist \( y_i \) and \( z_i \) such that \( f_i(y_i) = \varphi'_i(y) \) and \( f_i(z_i) = \varphi'_i(z) \).
    Since \( y = \sum_i e_i'(\varphi_i'(y)) \) by the assumption, \( N' \models y = \sum_i e_i'(f_i(y_i)) = \sum_i f(e_i(y_i)) \).
    Similarly \( N' \models z = \sum_i f(e_i(z_i)) \).
    Therefore \( N' \models \sum_i (f(e_i(y_i)) + f(e_i(z_i))) = y + z = f(x) \).
    Since \( f \) is sum-reflecting (by the proof of case (2)), we have \( N \models \sum_i (e_i(y_i) + e_i(z_i)) = x \).
    In particular \( N \models \IsDef{(\sum_i e_i(y_i))} \).
    Let \( y' \defe \sum_i e_i(y_i) \).
    Then \( y = f(y') \) as desired.
\end{proof}

\begin{lemma}\label{lem:equaliser-sum-reflecting}
    An equaliser is a sum-reflecting submodule.
\end{lemma}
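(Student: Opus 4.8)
The plan is to describe the equaliser explicitly and then read off sum-reflection from the description. Let $e \colon E \hookrightarrow M$ be the equaliser of a parallel pair $f,g \colon M \longrightarrow N$ in $\pshCQ$; being an equaliser, $e$ is a monomorphism, hence by \cref{lem:monomorphism-characterisation} a submodule, so I may treat each $E_n$ as a subset of $M_n$. Since $\pshCQ$ is an enriched presheaf category I expect $E$ to be computed pointwise, and the first step is to confirm this by constructing the equaliser by hand. I would set $E'_n \defe \{\, x \in M_n \mid f_n(x) = g_n(x) \,\}$ with the $\CQ$-action inherited from $M$ — it restricts because $f,g$ are module morphisms: $f_m(x\cdot\varphi) = f_n(x)\cdot\varphi = g_n(x)\cdot\varphi = g_m(x\cdot\varphi)$ — and equip each $E'_n$ with the $\Sigma$-monoid structure in which $\sum_i x_i = y$ holds exactly when it already holds in $M_n$ and $y \in E'_n$.

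Next I would check that $(E',\iota)$, with $\iota$ the pointwise inclusion, is the equaliser of $f$ and $g$. The $\Sigma$-monoid axioms for each $E'_n$ are routine: each reduces to the corresponding axiom in $M_n$ restricted to families whose relevant sums land in $E'_n$ (using $0 \in E'_n$). The inclusion $\iota$ is a $\CQ$-module morphism equalising $f$ and $g$ by construction. For the universal property, given $h \colon L \longrightarrow M$ with $f\circ h = g\circ h$, the underlying-set factorisation $\bar h_n(z) \defe h_n(z)$ lands in $E'_n$; it preserves the $\CQ$-action because $h$ does, and it preserves sums because if $\sum_i z_i = w$ in $L_n$ then $\sum_i h_n(z_i) = h_n(w)$ in $M_n$ with $h_n(w) \in E'_n$, which is precisely the condition for $\sum_i \bar h_n(z_i) = \bar h_n(w)$ to hold in $E'_n$; uniqueness is immediate. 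Hence $E \cong E'$.

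Sum-reflection then follows immediately: identifying $E$ with $E'$ so that $e_n$ is the subset inclusion $E_n \subseteq M_n$, suppose $x, x_i \in E_n$ and $\sum_i e_n(x_i) = e_n(x)$ in $M_n$, i.e.\ $\sum_i x_i = x$ in $M_n$. Since the value $x$ lies in $E_n$, the definition of the sum on $E_n$ gives $\sum_i x_i = x$ in $E_n$, as required.

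The delicate point — really the crux rather than an obstacle — is to pin down that the $\Sigma$-monoid structure carried by the equaliser is the \emph{maximal} sub-structure on $\{\,x \mid f_n(x)=g_n(x)\,\}$ (a sum being defined iff it is defined in $M_n$ with value in the subset), not some structure with fewer defined sums; this is exactly what the universal-property verification above secures, and it is what makes sum-reflection work. Everything else is bookkeeping. Note that this $E'$ need not be downward-closed in $M$, so an equaliser is in general not hereditary, consistent with the remark preceding the lemma.
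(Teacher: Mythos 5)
Your proposal is correct and follows exactly the route the paper takes: the paper's proof simply asserts that the equaliser is the sum-reflecting submodule $L_k = \{\,x \in M_k \mid f(x)=g(x)\,\}$, and your argument supplies the verification (construction of the pointwise equaliser with the maximal restricted sum, the universal property, and the immediate read-off of sum-reflection) that the paper leaves as ``not difficult to see''. The only detail worth noting is that the value of any $M_n$-convergent sum of elements of $E'_n$ automatically lies in $E'_n$ (since $f_n(\sum_i x_i) = \sum_i f_n(x_i) = \sum_i g_n(x_i) = g_n(\sum_i x_i)$), which is what makes the associativity axiom and your ``maximality'' observation go through cleanly.
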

\begin{proof}
    Let \( f,g \colon M \longrightarrow N \).
    It is not difficult to see that the equaliser of \( f \) and \( g \) is the sum-reflecting submodule \( L \) of \( M \) consisting of
    \begin{equation*}
        L_k
        \quad\defe\quad
        \{\, x \in M_k \mid f(x) = g(x) \,\}.
    \end{equation*}
\end{proof}

\subsection{Basic Properties of $\pshCQ$}
\begin{lemma}\label{lem:real-action-invertible}
    Let \( M \) be a \( \CQ \)-module, \( n \in \Nat \) and \( x, y \in M_n \).
    If \( r\,x = r\,y \) for some \( r > 0 \), then \( x = y \).
\end{lemma}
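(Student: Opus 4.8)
The plan is to reduce the claim to elementary manipulations with the $\CQ$-action and the bilinearity axiom, splitting on whether $r \le 1$ or $r > 1$. Suppose first $r \in (0,1]$, so that $r\,x = x \cdot (r\,\ident_n)$ and $r\,y = y \cdot (r\,\ident_n)$ by definition. Put $N \defe \lceil 1/r \rceil \ge 1$ and $s \defe 1/(rN)$; then $s \in (0,1]$, so $s\,\ident_n \in \CQ(n,n)$, and $rs = 1/N$. Acting by $s\,\ident_n$ on both sides of $r\,x = r\,y$ and using the module law $(z\cdot\varphi)\cdot\psi = z\cdot(\varphi\circ\psi)$ together with $(r\,\ident_n)\circ(s\,\ident_n) = (1/N)\,\ident_n$, we get $x\cdot\big((1/N)\ident_n\big) = y\cdot\big((1/N)\ident_n\big)$. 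Since $\sum_{i=1}^{N}(1/N)\ident_n = \ident_n$ in $\CQ(n,n)$, bilinearity of the action gives $x = x\cdot\ident_n = x\cdot\big(\sum_{i=1}^N (1/N)\ident_n\big) \Kle \sum_{i=1}^{N} x\cdot\big((1/N)\ident_n\big)$; as the left-hand side is defined, so is the sum, and it equals $x$. The identical computation gives $\sum_{i=1}^{N} y\cdot\big((1/N)\ident_n\big) = y$, and the two sums are termwise equal, hence equal; therefore $x = y$.

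For $r > 1$ the scalar $r\,\ident_n$ is not a superoperator, so $r\,x$ is given by the partial $\Real_{\ge 0}$-action: with $N \defe \lceil r \rceil$ we have $r/N \in (0,1]$ and, by definition, $r\,x = \sum_{i=1}^{N} x\cdot\big((r/N)\ident_n\big)$, which is defined because $r\,x$ is (and similarly for $r\,y$). Since $1/r < 1$ we may act by $(1/r)\ident_n \in \CQ(n,n)$ on both sides of $r\,x = r\,y$; bilinearity and the module law yield $(r\,x)\cdot\big((1/r)\ident_n\big) \Kle \sum_{i=1}^{N} x\cdot\big((r/N)\ident_n\circ(1/r)\ident_n\big) = \sum_{i=1}^{N} x\cdot\big((1/N)\ident_n\big) = x$, the last equality by the previous paragraph. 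Since $(r\,x)\cdot\big((1/r)\ident_n\big)$ is defined, it equals $x$; likewise it equals $y$, and so $x = y$.

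No step here is genuinely deep; the points needing care are keeping the directions of the Kleene (in)equalities straight — in particular, bilinearity of the action only provides $z\cdot\big(\sum_j\varphi_j\big) \Kle \sum_j z\cdot\varphi_j$, which is precisely the direction used above to certify that a sum whose value we can pin down ``from above'' is actually defined — and, in the case $r > 1$, remembering that one must work with the partial action of $\Real_{\ge 0}$ rather than with a fictitious morphism $r\,\ident_n$ in $\CQ$.
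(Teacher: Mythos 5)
Your proof is correct and uses essentially the same mechanism as the paper's: write \( \ident_n = \sum_{i=1}^{N} (1/N)\,\ident_n \) in \( \CQ(n,n) \) and use the Kleene direction of bilinearity to expand \( x = x \cdot \ident_n \) into an \( N \)-fold sum whose terms are pinned down by \( r\,x \). The only difference is bookkeeping: the paper reduces to the case \( 1/r \in \Nat \) by a preliminary rescaling, whereas you split on \( r \le 1 \) versus \( r > 1 \) and handle the partial \( \Real_{\ge 0} \)-action explicitly in the latter case, which is a slightly more careful treatment of the same idea.
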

\begin{proof}
    Assume that \( r^{-1} \) is a natural number.
    If it is not the case, choose \( r' \in (0,1] \) such that \( (r'r)^{-1} \) is a natural number.
    Of course, \( r\,x = r\,y \) implies \( (r'r)\,x = (r'r)\,y \).

    Let \( N = (1/r) \).
    Then \( \sum_{i=1}^N r = 1 \) holds in \( [0,1] \).
    Hence
    \begin{align*}
        x
        &=
        1\,x
        \\
        &=
        (\sum_{i=1}^N r) x
        \\
        &\Kle
        \sum_{i=1}^N r\,x
    \end{align*}
    and similarly
    \begin{equation*}
        y \quad=\quad \sum_{i=1}^N r\,y.
    \end{equation*}
    Hence \( r\,x = r\,y \) implies \( x = y \).
\end{proof}

\begin{lemma}\label{lem:tensor-peak}
    Let \( M \) and \( N \) be \( \CQ \)-modules.
    For every \( k \) and \( x \in (M \ptensor N)_k \), there exist \( y \in M_n \), \( z \in N_m \) and \( \varphi \in \CQ(k, n \otimes m) \) such that \( x \le (y \ptensor z) \cdot \varphi \).
\end{lemma}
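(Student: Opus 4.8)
The plan is to realise the collection of ``elementary tensors'' as a sub-$\CQ$-module that already enjoys the universal property of the Day tensor, which then forces it to coincide with $M \ptensor N$.

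Call $(y \ptensor z)\cdot\varphi$ --- ranging over $n,m \in \CQ$, $y \in M_n$, $z \in N_m$, $\varphi \in \CQ(k, n\otimes m)$ --- an \emph{elementary tensor} at $k$, where $y \ptensor z$ is the value $\mathsf{t}_{n,m}(y,z) \in (M \ptensor N)_{n \otimes m}$ of the universal bilinear map $\mathsf{t}$. Let $L_k \subseteq (M \ptensor N)_k$ be the set of $x$ with $x \le (y \ptensor z)\cdot\varphi$ for some elementary tensor at $k$, equipped with the $\Sigma$-monoid structure inherited from $(M \ptensor N)_k$ (a sum is defined in $L_k$ precisely when it is defined in $(M \ptensor N)_k$ with value in $L_k$) and with the restricted $\CQ$-action. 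First I would check that $L$ is a $\CQ$-module and $\iota \colon L \hookrightarrow M \ptensor N$ a monomorphism. Closure of $L_k$ under the action follows from $\big((y \ptensor z)\cdot\varphi\big)\cdot\psi = (y \ptensor z)\cdot(\varphi\circ\psi)$ together with monotonicity of the action; the point for the $\Sigma$-monoid axioms, associativity in particular, is that $L$ is \emph{downward closed}, so that every partial sum of a family summing inside $L$ to some $v$ is itself $\le v$ and hence again lies in $L$.

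Next I would use the defining universal property $\pshCQ(M \ptensor N, -) \cong \Bilin(M,N;-)$. Since $y \ptensor z = (y \ptensor z)\cdot\ident_{n \otimes m}$ is an elementary tensor, $\mathsf{t}$ corestricts to a family $\mathsf{t}^{L}_{n,m} \colon M_n \times N_m \to L_{n \otimes m}$, and this corestriction is again bilinear: its defining conditions reduce, when computed in $M \ptensor N$, to the bilinearity law $(\sum_i y_i)\ptensor(\sum_j z_j) \Kle \sum_{i,j}(y_i \ptensor z_j)$ of $\mathsf{t}$ and to the two action laws, whose relevant sides are equal elementary tensors and hence lie in $L$. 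By the universal property $\mathsf{t}^{L}$ factors as $M \times N \xrightarrow{\ \mathsf{t}\ } M \ptensor N \xrightarrow{\ \pi\ } L$ for a unique $\pi$, so $\iota \circ \pi \colon M \ptensor N \to M \ptensor N$ satisfies $(\iota \circ \pi)\circ\mathsf{t} = \mathsf{t}$, whence $\iota \circ \pi = \ident$ by the uniqueness clause. Thus $\iota$ is a monomorphism and a split epimorphism, hence an isomorphism, so $L = M \ptensor N$; as $\iota$ is the inclusion, $L_k = (M \ptensor N)_k$ for every $k$, which is exactly the claim.

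The step I expect to be the main obstacle is the bookkeeping around the $\Sigma$-monoid structure of $L$: one must take the full downward closure of the elementary tensors rather than the elementary tensors themselves --- these do not form a $\Sigma$-monoid, since neither an infinite sum of elementary tensors nor its partial sums need be elementary --- and then verify that the inherited structure satisfies the $\Sigma$-monoid axioms, which is precisely where downward-closedness enters. The remainder is a routine application of the universal property of Day convolution.
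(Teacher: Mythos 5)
Your proof is correct and takes essentially the same route as the paper: both define \( L \) as the hereditary (downward-closed, sum-reflecting) submodule of \( M \ptensor N \) spanned by the elements dominated by elementary tensors \( (y \ptensor z)\cdot\varphi \), observe that the universal bilinear map lands in \( L \), and conclude \( L = M \ptensor N \) from the universal property of the Day tensor. The only cosmetic difference is that you obtain the isomorphism via the split-epi-plus-mono argument for the inclusion \( \iota \), whereas the paper directly exhibits the two directions of the bijection \( \Bilin(M,N;-)\cong\pshCQ(L,-) \) and identifies \( \iota \) with the resulting canonical isomorphism.
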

\begin{proof}
    Let \( L \) be the hereditary submodule of \( M \ptensor N \) consisting of elements that satisfies the condition of this lemma.
    Note that \( L \) contains all elements of the form \( y \ptensor z \), \( y \in M_n \) and \( z \in N_m \).
    Then \( L \) satisfies the universal property of the tensor product.
    Given a bilinear map \( f \in \Bilin(M,N; K) \), we have a morphism \( L \hookrightarrow M \ptensor N \stackrel{\hat{f}}{\longrightarrow} K \), where \( \hat{f} \) is the morphism given by the universal property of \( M \ptensor N \).
    Conversely, given a morphism \( g \colon L \longrightarrow K \), we have a bilinear map \( f \) defined by \( f_{n,m}(x,y) \defe g_{n \otimes m}(x \ptensor y) \).
    Hence \( L \) and \( M \ptensor N \) are canonically isomorphic.
    Since the embedding \( L \hookrightarrow M \ptensor N \) is identity on the elements of the form \( y \ptensor z \), this embedding is one direction of the canonical embedding, which is an isomorphism.
    So the embedding is bijection on each component.
    This means that every element of \( M \ptensor N \) satisfies the condition of this lemma.
\end{proof}

\subsection{Finite Completion}
\begin{definition}[Finite completeness]
    A \( \CQ \)-module \( M \) is \emph{finitely complete} if the finite sum \( \sum_{i=1}^k x_i \) is defined in \( M_n \) for every \( n \) and \( x_1,\dots,x_k \in M_n \).
\eqed
\end{definition}
The forgetful functor from the full subcategory of finitely complete \( \CQ \)-modules has the left adjoint, which we call the \emph{finite completion}.
We write \( \FComp{M} \) for the finite completion of \( M \).

\begin{lemma}
    Every \( \CQ \)-module \( M \) has the finite completion \( \FComp{M} \).
\end{lemma}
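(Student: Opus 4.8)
The plan is to obtain, for every $\CQ$-module $M$, its reflection into the full subcategory $\mathcal D \hookrightarrow \pshCQ$ of finitely complete $\CQ$-modules; taking all these reflections together is precisely the left adjoint $\FComp$ sought. A useful preliminary observation is that $M$ is finitely complete if and only if, for every $n$ and all $x,y\in M_n$, the binary sum $\sum(z_i)_{i\in\{1,2\}}$ with $z_1 \defe x$, $z_2 \defe y$ is defined: the general finite case follows by induction on the number of summands, using the Singleton-Sum and Associativity axioms of \cref{def:sigma-monoid} (from $\sum(\sum(x_1,\dots,x_k),\sum(x_{k+1})) \Keq \sum(x_1,\dots,x_{k+1})$ and the inductive hypothesis, the left side is a binary sum, hence defined, hence so is the right side). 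Thus $\mathcal D$ is genuinely a full subcategory cut out by a definedness condition, not by extra structure, since the total binary operation it carries is uniquely determined.

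The existence of the reflector I would then derive from the theory of locally presentable categories rather than from a naive formula. First, $\mathcal D$ is closed in $\pshCQ$ under limits: limits are computed pointwise in $\SMon$, products of finitely complete $\Sigma$-monoids are finitely complete (sums are componentwise), and an equaliser, being a sum-reflecting submodule (\cref{lem:equaliser-sum-reflecting}), is finitely complete because the binary sum of two of its elements, formed in the ambient module, again satisfies the equalised equations and therefore lies in the submodule with the same value by sum-reflection. Second, $\pshCQ$ is locally $\aleph_1$-presentable (\cref{prop:local-presentability}); $\aleph_1$-filtered colimits in it are computed pointwise on underlying sets (the representables $\yoneda(n)$ are $\aleph_1$-presentable by the enriched Yoneda lemma, and $\aleph_1$-filtered colimits in $\SMon$ are on underlying sets since $\Sigma$-monoids have only countable-arity operations), so a finite family of elements of such a colimit originates from a common stage, where its sum exists, and finite completeness is preserved. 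Finally $\mathcal D$ is itself accessible, being the category of models of the essentially algebraic theory of $\pshCQ$ augmented, at each of the countably many degrees $n$, by a total binary operation with the commutative-monoid axioms and the compatibility law $x + y = \sum(z_i)_{i\in\{1,2\}}$ (equivalently, $\mathcal D$ is a small-orthogonality class). By the standard reflectivity criterion (the inclusion satisfies the solution-set condition because $\mathcal D$ is accessible and preserves limits, so the adjoint functor theorem applies), $\mathcal D$ is reflective in $\pshCQ$, and one lets $\FComp$ be the reflector.

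As a concrete consequence, $\FComp\yoneda(n) \cong \CPM({-},n)$: for finitely complete $N$ one has $\pshCQ(\CPM({-},n),N) \cong N_n \cong \pshCQ(\yoneda(n),N)$, since a natural transformation out of $\CPM({-},n)$ is determined by the image of $\ident_n$ (every element of $\CPM(k,n)$ is a finite sum of superoperators, cf.\ \cref{thm:cpm-representation}), and conversely any $\xi\in N_n$ induces one by $\varphi = \sum_i\varphi_i \mapsto \sum_i N(\varphi_i)(\xi)$, this being independent of the decomposition by the $\SMon$-enrichment of $N$ and defined because $N$ is finitely complete. The main obstacle is exactly the reflectivity step: one cannot construct $\FComp M$ for non-representable $M$ by a naive colimit-of-$\CPM({-},n)$'s formula, because colimits in $\pshCQ$ built using the coproduct $\amalg$ are themselves not finitely complete, so the existence genuinely has to be packaged through the adjoint functor theorem, and the real content is checking the accessibility / solution-set input — that ``finitely complete'' is an essentially algebraic (equivalently, a small-orthogonality) condition. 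The (co)limit-closure verifications and the representable computation above are then routine.
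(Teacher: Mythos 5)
Your proof is correct, but it takes a genuinely different route from the paper's. The paper constructs \( \FComp{M} \) explicitly and componentwise: it takes the known finite completion \( \FComp{M_n} \) of each \( \Sigma \)-monoid \( M_n \), extends the \( \CQ \)-action \( ({-})\cdot h \) along the universal property of \( \FComp{M_n} \), and then verifies directly (using that every element of \( \FComp{M_n} \) is a finite sum \( \sum_i \iota(x_i) \) of elements of \( M_n \)) that morphisms into finitely complete modules extend uniquely. You instead prove bare reflectivity by abstract means: reduce finite completeness to definedness of binary sums, observe that this cuts out an accessible, limit-closed, \( \aleph_1 \)-filtered-colimit-closed full subcategory of the locally \( \aleph_1 \)-presentable category \( \pshCQ \) (\cref{prop:local-presentability}), and invoke the reflection theorem. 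Your closure verifications are all sound — in particular the equaliser step via \cref{lem:equaliser-sum-reflecting} and the common-stage argument for filtered colimits — and your computation of \( \FComp{\yoneda(n)} \cong \CPM({-},n) \) reproduces the paper's \cref{lem:finite-completion-of-yoneda}. What you lose relative to the paper is any element-level description of \( \FComp{M} \): the paper's construction gives for free that \( (\FComp{M})_n = \FComp{M_n} \) and that every element is a finite sum of elements of \( M \), facts that are load-bearing later (e.g.\ in \cref{lem:completion-hereditary} and the flatness and exponential arguments). What you gain is that the existence argument is essentially formal and your framing makes transparent \emph{why} a naive colimit formula fails; but note that your closing remark that the existence ``genuinely has to be packaged through the adjoint functor theorem'' is not quite right — the paper's elementary pointwise construction shows it does not.
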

\begin{proof}
    Note that every \( \Sigma \)-monoid has the finite completion.
    We write \( \FComp{X} \) for the finite completions of a \( \Sigma \)-monoid \( X \).
    Given a \( \Sigma \)-monoid morphism \( f \in \SMon(X, Y) \) to a finitely complete \( \Sigma \)-monoid \( Y \), the canonical map \( \FComp{X} \longrightarrow Y \) in \( \SMon \) is written as \( f^{\dagger} \).
    We also note that an element \( x \in \FComp{X} \) in the finite completion can be written as a finite sum \( x = \sum_{i = 1}^n \iota(x_i) \) of \( x_i \in X \) (where \( \iota \in \SMon(X, \FComp{X}) \) is the embedding).

    Let \( M \) be a \( \CQ \)-module.
    Its finite completion is given by \( (\FComp{M})_n \defe \FComp{M_n} \).
    Let \( \iota_n \in \SMon(M_n, \FComp{M_n}) \) be the embedding.
    Given a \( \CQ \)-morphism \( h \in \CQ(m,n) \), the action \( ({-}) \cdot h \in \SMon(\FComp{M_n}, \FComp{M_m}) \) is the canonical map \( (\iota_m(({-}) \cdot h))^{\dagger} \).
    It is easy to see that this satisfies the requirements for action.

    The \( \CQ \)-module \( \FComp{M} \) is obviously finitely complete.
    It is not difficult to see that the embedding \( \iota = (\iota_n)_n \) is a \( \CQ \)-module morphism, \ie~it preserves the action of \( \CQ \)-morphisms.
    
    Let \( N \) be a finitely complete \( \CQ \)-module and assume \( \alpha \colon M \longrightarrow N \).
    Then \( \alpha_n \in \SMon(M_n, N_n) \) is a \( \Sigma \)-monoid morphism to a finitely complete \( \Sigma \)-monoid, and hence it has a unique extension \( \alpha_n^{\dagger} \in \SMon(\FComp{M_n}, N_n) \).
    We show that the family \( \alpha^{\dagger} = (\alpha_n^{\dagger})_n \) is a \( \CQ \)-module morphism.
    Let \( h \in \CQ(m,n) \) and \( x \in (\FComp{M})_n = \FComp{M_n} \).
    Then \( x = \sum_{i = 1}^n \iota_n(x_i) \) for some \( x_i \in M_n \).
    Let us write \( \bullet \) for the action in \( \FComp{M} \).
    Then
    \begin{align*}
        x \bullet h
        &=
        (\sum_{i = 1}^n \iota_n(x_i)) \bullet h
        \\
        &=
        (\iota_m(({-})\cdot h))^{\dagger}(\sum_{i = 1}^n \iota_n(x_i))
        \\
        &\Kle
        \sum_{i = 1}^n (\iota_m(({-})\cdot h))^{\dagger}(\iota_n(x_i))
        \\
        &=
        \sum_{i = 1}^n \iota_m(x_i\cdot h)
    \end{align*}
    and thus
    \begin{align*}
        \alpha_m^{\dagger}(x \bullet h)
        &=
        \alpha_m^{\dagger}(\sum_{i = 1}^n \iota_m(x_i\cdot h))
        \\
        &\Kle
        \sum_{i = 1}^n \alpha_m^{\dagger}(\iota_m(x_i \cdot h))
        \\
        &=
        \sum_{i = 1}^n \alpha_m(x_i \cdot h)
        \\
        &=
        \sum_{i = 1}^n \alpha_n(x_i) \cdot h.
    \end{align*}
    Since
    \begin{align*}
        \alpha_n^{\dagger}(x)
        &=
        \alpha_n^{\dagger}(\sum_{i = 1}^n \iota_n(x_i))
        \\
        &\Kle
        \sum_{i = 1}^n \alpha_n^\dagger(\iota_n(x_i))
        \\
        &=
        \sum_{i = 1}^n \alpha_n(x_i),
    \end{align*}
    we have
    \begin{align*}
        \alpha_n^{\dagger}(x) \cdot h
        &=
        (\sum_{i = 1}^n \alpha_n(x_i)) \cdot h
        \\
        &\Kle
        \sum_{i = 1}^n \alpha_n(x_i) \cdot h.
    \end{align*}
    So \( \alpha^\dagger \) preserves the \( \CQ \)-action.

    Obviously \( \alpha^{\dagger} \circ \iota = \alpha \) as expected.
    The uniqueness of \( \alpha^{\dagger} \) can be proved component-wise using the same property for \( \Sigma \)-monoids.
\end{proof}
Since the finite completion \( \FComp{M} \) exists for every \( \CQ \)-module \( M \), by the universality of the finite completion, \( \FComp{({-})} \) can be extended to a functor.

\begin{lemma}\label{lem:finite-completion-of-yoneda}
    \( \FComp{\yoneda(n)} \cong \CPM({-}, n) \).
\end{lemma}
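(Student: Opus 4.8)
The plan is to exhibit the $\CQ$-module $\CPM({-},n)$, equipped with the inclusion $\yoneda(n) \hookrightarrow \CPM({-},n)$, as the finite completion of $\yoneda(n)$. This inclusion is a $\CQ$-module morphism since componentwise it is the inclusion $\CQ(m,n) \hookrightarrow \CPM(m,n)$ and the $\CQ$-actions on both sides are precomposition. Moreover $\CPM({-},n)$ is a finitely complete $\CQ$-module, because each $\CPM(m,n)$ is finitely complete and composition is bilinear, so the universal property of $\FComp{(-)}$ yields a comparison morphism $\Phi \colon \FComp{\yoneda(n)} \longrightarrow \CPM({-},n)$. By the construction of the finite completion, $(\FComp{\yoneda(n)})_m = \FComp{\CQ(m,n)}$ and $\Phi_m = (\iota_m)^{\dagger}$, the $\Sigma$-monoid extension of the inclusion $\iota_m \colon \CQ(m,n) \hookrightarrow \CPM(m,n)$. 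It thus suffices to prove that $\CPM(m,n)$ together with $\iota_m$ is the finite completion of the $\Sigma$-monoid $\CQ(m,n)$, for then each $\Phi_m$ is an isomorphism by uniqueness of objects carrying a given universal property. (The special case $n = m = 1$ is the familiar statement $\FComp{[0,1]} \cong \Real_{\ge 0}$.)

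The workhorse is the observation already noted in \cref{sec:model}: every $\varphi \in \CPM(m,n)$ is a finite sum $\varphi = \sum_{j=1}^{N}(\varphi/N)$ with $\varphi/N \in \CQ(m,n)$, for any $N \in \Nat$ with $N \ge \opnorm{\varphi}$ (so that $\opnorm{\varphi/N}\le 1$). Given a finitely complete $\Sigma$-monoid $Z$ and a morphism $g \colon \CQ(m,n) \longrightarrow Z$, I would set $\bar{g}(\varphi) \defe \sum_{j=1}^{N} g(\varphi/N)$, a finite and hence defined sum in $Z$. Independence of the choice of $N$ follows by passing through $NN'$: in $\CQ(m,n)$ one has $\varphi/N = \sum_{l=1}^{N'}\varphi/(NN')$ — a finite sum whose value lies in $\CQ(m,n)$, hence a valid $\CQ(m,n)$-sum under the inherited $\Sigma$-monoid structure — so $g(\varphi/N) = \sum_{l=1}^{N'} g(\varphi/(NN'))$, and both $\sum_{j=1}^{N} g(\varphi/N)$ and $\sum_{j=1}^{N'} g(\varphi/N')$ collapse to the $NN'$-fold sum of $g(\varphi/(NN'))$ by commutativity and associativity. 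Taking $N=1$ shows $\bar g$ extends $g$, and any morphism $h$ extending $g$ must satisfy $h(\varphi) = h\big(\sum_{j=1}^N \varphi/N\big) = \sum_{j=1}^N h(\varphi/N) = \sum_{j=1}^N g(\varphi/N) = \bar g(\varphi)$, giving uniqueness.

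The one place I expect genuine care is showing $\bar g$ is a $\Sigma$-monoid morphism for \emph{infinite} families: if $\sum_{i\in I}\varphi_i = \varphi$ is defined in $\CPM(m,n)$, then $\sum_i \bar g(\varphi_i) = \bar g(\varphi)$. The trick is that each $\varphi_i$ is a summand of $\varphi$, hence $\varphi_i \le \varphi$ in the canonical (= L\"owner) order, so $\opnorm{\varphi_i} \le \opnorm{\varphi}$ by monotonicity of the operator norm and a single $N \ge \opnorm{\varphi}$ serves for $\varphi$ and all $\varphi_i$ at once. Dividing the convergent series by $N$ gives $\sum_i(\varphi_i/N) = \varphi/N$ in $\CPM(m,n)$ with value in $\CQ(m,n)$, so this sum is defined in $\CQ(m,n)$; since $g$ preserves it, $g(\varphi/N) = \sum_i g(\varphi_i/N)$, and then
\[
  \bar g(\varphi) = \sum_{j=1}^{N} g(\varphi/N) = \sum_{j=1}^{N}\sum_{i} g(\varphi_i/N) = \sum_{i}\sum_{j=1}^{N} g(\varphi_i/N) = \sum_{i}\bar g(\varphi_i)
\]
by associativity and commutativity (and if $\sum_i\varphi_i$ is undefined the Kleene inequality is vacuous). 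This proves $\CPM(m,n) \cong \FComp{\CQ(m,n)}$, so every $\Phi_m$ is an isomorphism. Finally, since $\Phi$ is a $\CQ$-module morphism whose components are all isomorphisms, its componentwise inverses form a $\CQ$-module morphism inverse to $\Phi$ — the action-preservation of the inverse is automatic, and in any case both actions restrict to precomposition on the image of $\yoneda(n)$ — and therefore $\FComp{\yoneda(n)} \cong \CPM({-},n)$.
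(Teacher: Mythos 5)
Your proof is correct, and the key computation is the same as the paper's: every \( \varphi \in \CPM(m,n) \) is the finite sum \( \sum_{j=1}^{N}(\varphi/N) \) of superoperators, and the extension of a morphism is forced to be \( \varphi \mapsto \sum_{j=1}^{N} f(\varphi/N) \). The organisation differs, though. The paper verifies the universal property directly at the \( \CQ \)-module level and settles uniqueness by the Yoneda-flavoured identity \( g(\varphi/N) = g(\ident_n)\cdot(\varphi/N) = f(\ident_n)\cdot(\varphi/N) \), so it leans on the \( \CQ \)-action; you instead reduce componentwise to the statement that \( \CPM(m,n) \) is the finite completion of the \emph{\( \Sigma \)-monoid} \( \CQ(m,n) \), using the paper's construction \( (\FComp{M})_m = \FComp{M_m} \), and your uniqueness argument needs only \( h|_{\CQ(m,n)} = g \). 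Your route buys two things the paper leaves implicit: a proof that the definition of the extension is independent of \( N \) (via refinement to \( NN' \)), and a verification that the extension preserves \emph{countable} sums — where the essential point, correctly identified, is that a single \( N \ge \opnorm{\varphi} \) works uniformly for all summands \( \varphi_i \le \varphi \), so the convergent series can be pushed into \( \CQ(m,n) \) where \( g \) already preserves it. One small caution: at the very end, the fact that the componentwise inverse of \( \Phi \) is again a \( \Sigma \)-monoid homomorphism is \emph{not} automatic from bijectivity alone for partial algebras; it is supplied, as you note earlier, by the uniqueness clause of the universal property, so make sure that justification (rather than mere bijectivity) is the one you invoke.
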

\begin{proof}
    We prove that the inclusion \( \iota \colon \yoneda(n) \longrightarrow \CPM({-}, n) \) is the finite completion.
    Assume a finitely complete \( \CQ \)-module \( M \) and a morphism \( f \colon \yoneda(n) \longrightarrow M \).

    Given \( \varphi \in \CPM({-},n)_m = \CPM(m,n) \), let \( \FComp{f}(\varphi) \defe \sum_{1 \le i \le N} f(\varphi/N) \) where \( N \) is a sufficiently large natural number such that \( \varphi/N \in \CQ(m,n) \).
    The sum is defined since \( M \) is finitely complete, and it is independent of the choice of \( N \).
    Obviously \( \FComp{f} \circ \iota = f \).

    Assume \( g \circ \iota = f \).
    For \( \varphi \in \CPM(m,n) \),
    \begin{equation*}
        g(\varphi)
        = g(\sum_{1 \le i \le N} \varphi/N)
        \Kle \sum_{1 \le i \le N} g(\varphi/N)
        = \sum_{1 \le i \le N} g(\ident_n) \cdot (\varphi/N)
        = \sum_{1 \le i \le N} f(\ident_n) \cdot (\varphi/N)
    \end{equation*}
    for sufficiently large \( N \).
    So \( g \) is completely determined by \( f(\ident_n) \) and thus \( g = \FComp{f} \).
\end{proof}

\begin{lemma}\label{lem:completion-hereditary}
    The finite completion \( M \longrightarrow \FComp{M} \) is a hereditary \( \CQ \)-submodule.
\end{lemma}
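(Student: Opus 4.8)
The plan is to push everything down to the level of $\Sigma$-monoids and then read off the three required properties from the explicit description of the finite completion of a $\Sigma$-monoid. First I would record that $\FComp{(-)}$ is computed component-wise, $(\FComp M)_n = \FComp{M_n}$, with embedding $\iota = (\iota_n)_n$ where $\iota_n \colon M_n \hookrightarrow \FComp{M_n}$ is the unit of the $\Sigma$-monoid completion. By \cref{lem:monomorphism-characterisation}, $\iota$ is a monomorphism iff each $\iota_n$ is injective; and the conditions ``sum-reflecting'' and ``downward-closed'' are, by definition, the conjunctions over $n \in \Nat$ of the corresponding conditions for the sub-$\Sigma$-monoid $\iota_n$. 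So it suffices to show that for every $\Sigma$-monoid $X$ the unit $\iota \colon X \longrightarrow \FComp X$ is injective, sum-reflecting (whenever $\sum_{p \in P}\iota(x_p) = \iota(y)$ in $\FComp X$ for a countable family $(x_p)_{p\in P}$ in $X$, already $\sum_{p\in P} x_p = y$ in $X$), and downward-closed.

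The core of the argument is a single fact about $\FComp X$. Recall that $\FComp X$ is presented as the set of finite multisets over $X$ modulo the least $\Sigma$-monoid congruence identifying $\{x_1,\dots,x_k\}$ with $\{y\}$ whenever $x_1 + \dots + x_k = y$ is defined in $X$, that $\iota(x) = [\{x\}]$, and (as already observed in the construction) that every element of $\FComp X$ is a finite sum $\sum_{i=1}^{k}\iota(x_i)$ of images. The fact I would isolate is that this congruence is conservative over countable sums of images: for a countable family $(x_p)_{p\in P}$ in $X$ and $y \in X$,
\[
  \textstyle\sum_{p \in P}\iota(x_p) = \iota(y) \text{ in } \FComp X \quad\Longleftrightarrow\quad \textstyle\sum_{p \in P} x_p = y \text{ in } X .
\]
The direction ``$\Leftarrow$'' is immediate because $\iota$ is a $\Sigma$-monoid homomorphism. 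The direction ``$\Rightarrow$'' is exactly the sum-reflection of $\iota$ and is the substantive point: it is proved from the explicit construction by analysing the rewriting on multisets generated by the congruence and showing it is confluent, the decisive input being the partition form of the Associativity axiom of \cref{def:sigma-monoid}, which forces any two ways of collapsing a multiset built from a family with a defined sum to agree.

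Granting this fact, the three properties follow formally. Sum-reflection is literally its ``$\Rightarrow$'' direction. Injectivity then follows by applying sum-reflection to the one-element family $(x)$: if $\iota(x) = \iota(y)$ then $x = y$ by the Singleton Sum axiom. For downward-closedness, take $x \in X$ and $w \in \FComp X$ with $w + w' = \iota(x)$ for some $w' \in \FComp X$; writing $w = \sum_{i=1}^{k}\iota(x_i)$ and $w' = \sum_{j=1}^{l}\iota(x'_j)$, the fact gives $x_1 + \dots + x_k + x'_1 + \dots + x'_l = x$ in $X$, and the partition form of Associativity (partition into the $x_i$-block and the $x'_j$-block) then shows $y_0 \defe x_1 + \dots + x_k$ is defined in $X$, whence $w = \iota(y_0) \in \iota(X)$.

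The main obstacle is thus the ``$\Rightarrow$'' half of the displayed fact, i.e.\ that taking all finite sums does not manufacture new countable sums among the images of $X$. If the treatment of $\Sigma$-monoid completions in the paper (or in the cited literature) already supplies such a characterisation of the countable sums of $\FComp X$, this proof is complete; otherwise that characterisation must be established first by the confluence analysis sketched above, after which the reduction and the derivations of the three properties are routine.
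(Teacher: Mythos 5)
Your reduction to the $\Sigma$-monoid level, and your derivations of injectivity and downward-closedness from sum-reflection, are fine (the paper's own proof derives downward-closedness in essentially the same way, from the fact that every element of $\FComp{M}_n$ is a finite sum of images together with sum-reflection and partition associativity). But the proof has a genuine gap at exactly the point you flag yourself: the ``$\Rightarrow$'' direction of your displayed fact \emph{is} the content of the lemma, and you do not prove it. ``Analyse the rewriting on multisets generated by the congruence and show it is confluent'' is not an argument here, and it is doubtful it can be made into one in the form you describe: $\FComp{X}$ is not presented by a finitary rewriting system on multisets. The congruence you would have to analyse is the least \emph{$\Sigma$-monoid} congruence on a partial algebra whose countable-sum structure is itself only determined implicitly by the universal property (which sums become defined in the free finitely complete object is part of what must be computed, not an input to the computation). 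So the ``decisive input'' you cite — partition associativity — is not enough to run a confluence argument; you would first need an explicit description of which countable sums converge in $\FComp{X}$, and that is precisely what conservativity would give you.

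The paper avoids this circularity by a semantic separation argument rather than a syntactic one: for each $n$ it builds a \emph{countably complete} module $M^{(n)}_\infty$ (essentially by adjoining $\infty$ to $M_n$, so that all sums become defined) together with a morphism $\alpha^{(n)} \colon M \to M^{(n)}_\infty$ whose $n$-th component is injective and sum-reflecting (evaluation at $\ident_n$ recovers the element, and a sum equal to a non-$\infty$ value in $(M_n)_\infty$ must already converge in $M_n$). Since $M^{(n)}_\infty$ is finitely complete, $\alpha^{(n)}$ factors through $\iota \colon M \to \FComp{M}$ by the universal property, and injectivity and sum-reflection of $\iota_n$ follow by composing. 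If you want to complete your proof, this is the missing ingredient: replace the confluence sketch by the construction of such a separating morphism into a complete algebra (at the $\Sigma$-monoid level, $X \to X_\infty$ works), and use the universal property of the completion to transport its sum-reflection back to the unit.
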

\begin{proof}
    We first prove the following claim.
    \begin{claim*}
        For each \( n \in \Nat \), there exist a \( \CQ \)-module \( M^{(n)}_\infty \) and a \( \CQ \)-module morphsim \( \alpha^{(n)} \colon M \longrightarrow M^{(n)}_\infty \) that satisfies the following conditions.
        \begin{itemize}
            \item \( M^{(n)}_\infty \) is countably complete (\ie~all countable sums are defined).
            \item The \( n \)-th component \( \alpha^{(n)}_n \) of \( \alpha^{(n)} \) is a sum-reflecting monomorphism.
        \end{itemize}
    \end{claim*}
    \begin{proof}
        For a \( \Sigma \)-monoid \( X \), let \( X_{\infty} \) be the \( \Sigma \)-monoid that additionally has the infinity \( \infty \) and all undefined sum in \( X \) converges to \( \infty \).
        That means, if \( \IsUndef{(\sum_i x_i)} \) in \( X \), then \( \sum_i x_i = \infty \) in \( X_\infty \).
        For a family \( (y_i)_i \) on \( X_{\infty} \) containing \( \infty \), their sum is again \( \infty \).
        Then \( X_{\infty} \) is a countably complete \( \Sigma \)-monoid.
    
        Given a \( \CQ \)-module, let \( M^{(n)}_{\infty} \) be the \( \CQ \)-module defined as follows.
        Its \( m \)-th component is \( (M^{(n)}_{\infty})_m \defe \SMon(\CQ(n,m), (M_n)_{\infty}) \).
        Given \( h \in \CQ(k,m) \) and \( f \in (M^{(n)}_\infty)_m = \SMon(\CQ(n,m), (M_n)_{\infty}) \), the action is \( (f \cdot h)(h') \defe f(h \circ h') \) for every \( h' \in \CQ(n,k) \).
Then \( M^{(n)}_\infty \) is a countably complete \( \CQ \)-module.
    
        There exists a \(\CQ\)-module morphism \( \alpha^{(n)} \colon M \longrightarrow M^{(n)}_\infty \) defined by, 
        for \( x \in M_m \) and \( h \in \CQ(n,m) \),
        \begin{equation*}
            \alpha^{(n)}_m(x)(h) \defe x \cdot h.
        \end{equation*}
        We check that this mapping preserves the action.
        Assume \( x \in M_m \) and \( h \in \CQ(k,m) \).
        Then \( \alpha^{(n)}_k(x \cdot h), \alpha^{(n)}_m(x) \cdot h \in (M^{(n)}_{\infty})_k = \SMon(\CQ(n,k), (M_n)_{\infty}) \).
        For every \( h' \in \CQ(n,k) \),
        \begin{align*}
            \alpha^{(n)}_k(x \cdot h)(h')
            &=
            (x \cdot h) \cdot h'
            \\
            &=
            x \cdot (h \circ h')
            \\
            &=
            \alpha^{(n)}_m(x)(h \circ h')
            \\
            &=
            (\alpha^{(n)}_m(x) \cdot h)(h').
        \end{align*}
        It is easy to see the preservation of sums.
        So \( \alpha^{(n)} \) is a \( \CQ \)-module morphism.

        The former has already been proved.
        We prove the latter.
        Note that \( \alpha^{(n)}_n(x)(\ident_n) = x \cdot \ident_n = x \) for every \( x \in M_n \).

        \( \alpha^{(n)}_n \) is an injection.
        Actually, if \( x,y \in M_n \) and \( x \neq y \), then \( \alpha^{(n)}_n(x)(\ident_n) = x \cdot \ident_n = x \neq y = \alpha^{(n)}_n(y) \).
        Since an injection is a monomorphism in \( \SMon \), \( \alpha^{(n)}_n \) is a monomorphism.

        \( \alpha^{(n)}_n \) is sum-reflecting.
        Assume that \( \alpha^{(n)}(x) = \sum_i \alpha^{(n)}(x_i) \).
        By the definition of the hom-sets of \( \SMon \), we have \( \alpha^{(n)}(x)(\ident_n) = \sum_i \alpha^{(n)}(x_i)(\ident_n) \) in \( (M_n)_{\infty} \).
        This means \( x = \sum_i x_i \) in \( (M_n)_\infty \).
        Since \( x \neq \infty \), the same equation holds in \( M_n \).
\qedhere[\textbf{Claim}]
    \end{proof}

    Let \( \iota \colon M \longrightarrow \FComp{M} \) be the finite completion.
    By the universality of \( \FComp{M} \) and finite completeness of \( M^{(n)}_\infty \), there exists \( \beta^{(n)} \colon \FComp{M} \longrightarrow M^{(n)}_\infty \) such that \( \beta^{(n)} \circ \iota = \alpha^{(n)} \).

    We prove that \( \iota \) is a monomorphism.
    The \( n \)-th component \( \beta^{(n)}_n \circ \iota_n \) of \( \beta^{(n)} \circ \iota \) must be injective since so is \( \alpha^{(n)}_n \).
    Because \( n \) is arbitrary, \( \iota \) consists of injections.
    By \cref{lem:monomorphism-characterisation}, \( \iota \) is a monomorphism.

    We prove that \( \iota \) is sum-reflecting.
    Assume that \( \iota(x) = \sum_i \iota(x_i) \) in \( \FComp{M}_n \) for some \( x, x_i \in M_n \).
    Then \( \alpha^{(n)}(x) = \beta^{(n)}\iota(x) = \beta^{(n)}(\sum_i \iota(x_i)) \Kle \sum_i \beta^{(n)}\iota(x_i) = \sum_i \alpha^{(n)}(x_i) \).
    Since \( \alpha^{(n)} \) is sum-reflecting, \( x = \sum_i x_i \) holds in \( M_n \).

    We prove that \( \iota \) is downward-closed.
    Assume \( y \le \iota(x) \) in \( \FComp{M}_n \) for some \( x \in M_n \) and \( y \in \FComp{M}_n \).
    By definition, \( y + z \le \iota(x) \) for some \( z \in \FComp{M}_n \).
    Recall that \( \FComp{M}_n \) is generated by \( M_n \).
    Hence \( y = \sum_i \iota(y_i) \) and \( z = \sum_j \iota(z_j) \) for some families \( (y_i)_i, (z_j)_j \) on \( M_n \).
    We have \( \sum_i \iota(y_i) + \sum_j \iota(z_j) = \iota(x) \) in \( \FComp{M}_n \).
    Since \( \iota \) is sum-reflecting, \( \sum_i y_i + \sum_j z_j = x \) holds in \( M_n \).
    So \( \sum_i y_i \) is defined in \( M_n \), and obviously \( \iota(\sum_i y_i) = y \).

\end{proof}

\begin{lemma}
    The finite completion is a faithful functor.
\end{lemma}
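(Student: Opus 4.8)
The plan is to reduce faithfulness to the fact, already established in \cref{lem:completion-hereditary}, that the unit $\iota_M \colon M \longrightarrow \FComp{M}$ of the finite-completion adjunction is a (hereditary, hence in particular mono) $\CQ$-submodule. Concretely, suppose $\alpha, \beta \colon M \longrightarrow N$ are $\CQ$-module morphisms with $\FComp{\alpha} = \FComp{\beta}$. Naturality of $\iota$ gives the commuting squares $\iota_N \circ \alpha = \FComp{\alpha} \circ \iota_M$ and $\iota_N \circ \beta = \FComp{\beta} \circ \iota_M$, so from $\FComp{\alpha} = \FComp{\beta}$ we obtain $\iota_N \circ \alpha = \iota_N \circ \beta$.

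Since $\iota_N$ is a monomorphism in $\pshCQ$ (by \cref{lem:completion-hereditary}, a hereditary submodule is in particular a submodule, i.e.\ a monomorphism, cf.\ \cref{lem:monomorphism-characterisation}), cancelling $\iota_N$ on the left yields $\alpha = \beta$. Hence $\FComp{(-)}$ is faithful.

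There is essentially no obstacle here: the only ingredient beyond formal nonsense is the monicity of the unit, which is precisely what \cref{lem:completion-hereditary} supplies. If one prefers a self-contained argument avoiding the reference to \cref{lem:completion-hereditary}, one can instead invoke the universal property directly: a morphism out of $\FComp{M}$ into a finitely complete module is uniquely determined by its restriction along $\iota_M$, and since $\FComp{\alpha}$ (resp.\ $\FComp{\beta}$) is by construction the unique extension of $\iota_N \circ \alpha$ (resp.\ $\iota_N \circ \beta$), equality of the extensions forces equality of $\iota_N \circ \alpha$ and $\iota_N \circ \beta$, after which monicity of $\iota_N$ finishes the argument as before.
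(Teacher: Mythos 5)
Your proof is correct and takes essentially the same route as the paper, which simply cites the monicity of the unit $M \longrightarrow \FComp{M}$ from \cref{lem:completion-hereditary}; you have just spelled out the standard argument that a left adjoint with componentwise-monic unit is faithful.
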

\begin{proof}
    Because the unit \( M \longrightarrow \FComp{M} \) of the adjunction is a monomorphism (\cref{lem:completion-hereditary}).
\end{proof}

\begin{lemma}\label{lem:tensor-preserves-finite-completeness}
    For finitely complete \( \CQ \)-modules \( M \) and \( M' \), its tensor product \( M \otimes M' \) is also finitely complete.
\end{lemma}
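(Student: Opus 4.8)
The plan is to reduce to the case of the ``finitely complete representables'' \( \CPM({-},n) \cong \FComp{\yoneda(n)} \) of \cref{lem:finite-completion-of-yoneda}, which are trivially finitely complete because every hom-set \( \CPM(k,n) \) is closed under finite sums. Since a finitely complete module is canonically its own finite completion, the unit \( M \to \FComp{M} \) is an isomorphism; so, replacing \( M \) by \( \FComp{A} \) and \( M' \) by \( \FComp{B} \), it is enough to prove that \( \FComp{A} \ptensor \FComp{B} \) is finitely complete for \emph{arbitrary} \( \CQ \)-modules \( A \) and \( B \).

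For this, I would fix \( k \) and a finite family \( x_1, \dots, x_j \in (\FComp{A} \ptensor \FComp{B})_k \). By \cref{lem:tensor-peak} each \( x_i \) lies below an element \( (y_i \ptensor z_i) \cdot \varphi_i \) with \( y_i \in \FComp{A}_{p_i} \), \( z_i \in \FComp{B}_{q_i} \) and \( \varphi_i \in \CQ(k, p_i \otimes q_i) \); since a family dominated by a summable family is summable and partial sums of summable families are summable (standard properties of \( \Sigma \)-monoids), it suffices to show \( \sum_{i=1}^j (y_i \ptensor z_i) \cdot \varphi_i \) is defined. Writing \( y_i = \sum_a \iota_A(\tilde y_{i,a}) \) and \( z_i = \sum_b \iota_B(\tilde z_{i,b}) \) with \( \tilde y_{i,a} \in A_{p_i} \), \( \tilde z_{i,b} \in B_{q_i} \) and \( \iota_A, \iota_B \) the completion units (possible because \( \FComp{A}_{p_i} = \FComp{A_{p_i}} \)), bilinearity of \( \ptensor \) and of the \( \CQ \)-action rewrites \( (y_i \ptensor z_i) \cdot \varphi_i \) as a finite sum \( \sum_{a,b} (\iota_A \ptensor \iota_B)\bigl((\tilde y_{i,a} \ptensor \tilde z_{i,b}) \cdot \varphi_i\bigr) \), so after re-indexing the goal becomes: for \( w_1, \dots, w_m \in (A \ptensor B)_k \), the sum \( \sum_i (\iota_A \ptensor \iota_B)(w_i) \) is defined in \( (\FComp{A} \ptensor \FComp{B})_k \). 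If the morphism \( \iota_A \ptensor \iota_B \) factors through \( \iota_{A \ptensor B} \colon A \ptensor B \to \FComp{(A \ptensor B)} \), this is immediate: apply the factoring morphism to the (defined) sum \( \sum_i \iota_{A \ptensor B}(w_i) \) and use the Kleene inequality for morphisms. Conversely such a factorisation exists precisely when \( \FComp{A} \ptensor \FComp{B} \) is finitely complete, so the whole statement is equivalent to \( \FComp{A} \ptensor \FComp{B} \cong \FComp{(A \ptensor B)} \).

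The remaining work is thus to prove \( \FComp{A} \ptensor \FComp{B} \cong \FComp{(A \ptensor B)} \) directly. The representable instance \( \CPM({-},p) \ptensor \CPM({-},q) \cong \CPM({-}, p \otimes q) \) is the tractable half: using that \( \yoneda \) is strong monoidal and that \( \FComp{({-})} \) is a left adjoint, both a bilinear map out of \( (\CPM({-},p),\CPM({-},q)) \) and a morphism out of \( \CPM({-},p \otimes q) \) are determined by an element of the target's \( (p \otimes q) \)-component satisfying a ``finite-completability'' side condition, and the elementary estimate that every \( \chi \in \CPM(k, p \otimes q) \) satisfies \( \chi \le B\,(\varphi \otimes \psi) \) for some \( B > 0 \), \( \varphi \in \CPM(k,p) \), \( \psi \in \CPM(1,q) \) identifies the two side conditions. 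The hard part — and what I expect to be the main obstacle — is descending from representables to arbitrary \( A, B \): the Day tensor preserves \( \pshCQ \)-colimits, but coproducts (and general colimits) of finitely complete modules need \emph{not} be finitely complete, so no purely formal reflective-subcategory or colimit-of-representables argument is available. Instead one has to work with the coend presentation of the Day convolution, exploit that its building blocks \( \CPM({-}, p \otimes q) \) are finitely complete, and use that a morphism which is componentwise surjective out of a finitely complete module has a finitely complete target (lift, sum in the source, push forward). Getting this last step to assemble correctly — matching it up with the required sum \( \sum_i (\iota_A \ptensor \iota_B)(w_i) \) — is where the genuine difficulty lies.
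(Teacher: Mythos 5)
Your reduction runs in a circle at the decisive point, and the escape route you sketch is left unproven. After dominating each \( x_i \) by a simple tensor (which matches the paper's first step), you reduce to showing that \( \sum_i (\iota_A \ptensor \iota_B)(w_i) \) is defined, observe that this would follow if \( \iota_A \ptensor \iota_B \) factored through \( \iota_{A \ptensor B} \), and then note yourself that such a factorisation exists exactly when the target is finitely complete --- which is the statement being proved. Your way out is to establish \( \FComp{A} \ptensor \FComp{B} \cong \FComp{(A \ptensor B)} \) independently, but you explicitly leave the descent from representables to arbitrary \( A, B \) as ``the main obstacle'' without resolving it; and in the paper that isomorphism (\cref{lem:tensor-preserves-completion}) is proven \emph{after} and \emph{using} the present lemma, precisely because finite completeness of \( \FComp{M} \ptensor \FComp{N} \) is what makes the comparison maps exist. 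So, as written, the proposal establishes nothing beyond its first step.

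The missing idea is more elementary and avoids completions entirely: force all the dominating simple tensors to live over a \emph{single} pair of objects, and then sum each factor separately. Given \( x_i \le (y_i \ptensor y'_i) \cdot h_i \) with \( y_i \in M_{m_i} \) and \( y'_i \in M'_{m'_i} \), use the retractions \( m_i \lhd m \) and \( m'_i \lhd m' \) in \( \CQ \) (embedding a matrix into the top-left corner of a larger one) to rewrite \( x_i \le (z_i \ptensor z'_i) \cdot g_i \) with \( z_i \in M_m \), \( z'_i \in M'_{m'} \) and \( g_i \in \CQ(k, m \otimes m') \) for fixed \( m, m' \). Finite completeness of \( M \) and \( M' \) gives \( z = \sum_i z_i \), \( z' = \sum_i z'_i \) and \( nz \); convexity of \( \CQ(k, m \otimes m') \) gives \( g = \sum_i \tfrac{1}{n} g_i \). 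Then \( (nz \ptensor z') \cdot g \) is a single defined element with \( (nz \ptensor z') \cdot g \Kle \sum_{i,i',i''} (z_i \ptensor z'_{i'}) \cdot g_{i''} \), so the sub-sum \( \sum_i (z_i \ptensor z'_i) \cdot g_i \) is defined, and hence so is \( \sum_i x_i \) by domination. No property of \( \FComp{({-})} \) is needed.
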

\begin{proof}
    Let \( (x_i)_{i = 1}^n \) be a finite family over \( (M \otimes N)_k \).
    By \cref{lem:tensor-peak}, \( x_i \le (y_i \otimes y'_i) \cdot h_i \) for each \( i \) where \( y_i \in M_{m_i} \), \( y'_i \in M'_{m'_i} \) and \( h_i \in \CQ(k, m_i \otimes m'_i) \).
    Let \( m = \max_i m_i \) and \( m' = \max_i m'_i \).

    Note that, for \( \ell, \ell' \in CQ \) with \( \ell \le \ell' \), there is a retraction \( \ell \lhd \ell' \) in \( \CQ \), \ie~a pair \( \mathsf{p} \in \CQ(\ell',\ell) \) and \( \mathsf{e} \in \CQ(\ell,\ell') \) of morphisms such that \( \mathsf{p} \circ \mathsf{e} = \ident_\ell \).
    Indeed \( \mathsf{e} \) maps an \( (\ell \times \ell) \)-matrix \( X \) to the \( (\ell' \times \ell') \)-matrix whose top-left corner is \( X \) (and other entries are \( 0 \)) and \( \mathsf{p} \) extracts the top-left \( (\ell \times \ell) \)-submatrix.

    Let \( (\mathsf{p}_i, \mathsf{e}_i) \) be an retraction \( m_i \lhd m \) and \( (\mathsf{p}'_i, \mathsf{e}'_i) \) an retraction \( m'_i \lhd m' \).
    Then \( (\mathsf{p}_i \otimes \mathsf{p}'_i, \mathsf{e}_i \otimes \mathsf{e}'_i) \) is a retraction \( (m_i \otimes m'_i) \lhd (m \otimes m') \).
    So we have
    \begin{align*}
        (y_i \otimes y'_i) \cdot h_i
        &=
        (y_i \otimes y'_i) \cdot ((\mathsf{p}_i \otimes \mathsf{p}'_i) \circ (\mathsf{e}_i \otimes \mathsf{e}'_i) \circ h_i)
        \\
        &=
        ((y_i \cdot \mathsf{p}_i), (y'_i \cdot \mathsf{p}'_i)) \cdot ((\mathsf{e}_i \otimes \mathsf{e}'_i) \circ h_i).
    \end{align*}
    Let
    \begin{align*}
        z_i &\defe y_i \cdot \mathsf{p}_i \\
        z'_i &\defe y'_i \cdot \mathsf{p}'_i \\
        g_i &\defe (\mathsf{e}_i \otimes \mathsf{e}'_i) \circ h_i.
    \end{align*}
    Then \( x_i \le (z_i \otimes z'_i) \cdot g_i \), where \( z_i \in M_m \), \( z'_i \in M'_m \) and \( g_i \in \CQ(k, m \otimes m') \).

    By the finite completeness of \( M \) and \( M' \), both \( z = (\sum_{i=1}^n z_i) \) and \( z' = \sum_{i = 1}^n z'_i \) are defined.
    Again by the finite completeness, \( nz = \sum_{i = 1}^n z \) is also defined.
    Since \( \CQ(k, m \otimes m') \) is convex, \( g \defe \sum_{i = 1}^n \frac{1}{n} g_i \) is defined.
    Then
    \begin{align*}
        (nz \otimes z') \cdot g
        &=
        (z \otimes z') \cdot ng
        \\
        &\Kle
        \sum_{i,i'=1}^n (nz_i \otimes z'_{i'}) \cdot (\sum_{i''=1}^n g_i)
        \\
        &\Kle
        \sum_{i,i',i''=1}^n (z_i \otimes z'_{i'}) \cdot g_{i''}.
    \end{align*}
    As its partial sum, \( \sum_{i=1}^n (z_i \otimes z'_i) \cdot g_i \) is defined.
    Since \( x_i \le (z_i \otimes z'_i)\cdot g_i \), the sum \( \sum_i x_i \) is also defined.
\end{proof}

\begin{lemma}\label{lem:implication-preserves-completeness}
    If \( N \) is finitely complete, then so is \( M \multimap N \).
\end{lemma}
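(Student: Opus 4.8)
The plan is to reduce the claim to a statement purely about hom-$\Sigma$-monoids: \emph{if $N$ is finitely complete, then $\pshCQ(L,N)$ is a finitely complete $\Sigma$-monoid for every $\CQ$-module $L$}. Granting this, one applies it with $L = \yoneda(n) \ptensor M$ and uses the isomorphism $(M \multimap N)_n \cong \pshCQ(\yoneda(n) \ptensor M, N)$ of $\Sigma$-monoids established in the treatment of the linear function space in \cref{sec:intuitionistic}: since this holds for every $n \in \Nat$, every component of $M \multimap N$ is a finitely complete $\Sigma$-monoid, which is exactly the assertion that $M \multimap N$ is finitely complete.

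For the reduction, recall that a morphism $\alpha \in \pshCQ(L,N)$ is a family $(\alpha_p \colon L_p \to N_p)_{p \in \Nat}$ of $\Sigma$-monoid homomorphisms preserving $\CQ$-actions, and that the sum in $\pshCQ(L,N)$ is computed componentwise (the description of sums in $\pshCQ$) and then pointwise (the description of sums in the internal hom of $\SMon$): $\sum_i \alpha_i$ is defined iff $\sum_i \alpha_{i,p}(x)$ is defined in $N_p$ for every $p$ and every $x \in L_p$, with $(\sum_i \alpha_i)_p(x) = \sum_i \alpha_{i,p}(x)$. Fix finitely many $\alpha_1, \dots, \alpha_k \in \pshCQ(L,N)$ and set $\beta_p(x) \defe \sum_{i=1}^k \alpha_{i,p}(x)$; this is defined because $N_p$ is finitely complete, so $\beta = (\beta_p)_p$ is a family of total functions, and it remains only to check that $\beta$ is a $\CQ$-module morphism — then $\beta = \sum_{i=1}^k \alpha_i$, so this finite sum exists.

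First I would check $\CQ$-action preservation: for $x \in L_p$ and $\varphi \in \CQ(p',p)$ we have $\alpha_{i,p'}(x \cdot \varphi) = \alpha_{i,p}(x) \cdot \varphi$ for each $i$, and $({-}) \cdot \varphi$ is additive on $N_p$, so summing over the $k$ indices gives $\beta_{p'}(x \cdot \varphi) = \beta_p(x) \cdot \varphi$. Next, that each $\beta_p$ is a $\Sigma$-monoid homomorphism: given a countable family $(x_j)_j$ in $L_p$ with $\sum_j x_j$ defined, preservation of sums by each $\alpha_{i,p}$ gives $\alpha_{i,p}(\sum_j x_j) = \sum_j \alpha_{i,p}(x_j)$, hence $\beta_p(\sum_j x_j) = \sum_{i=1}^k \sum_j \alpha_{i,p}(x_j)$; this is defined (a finite sum in the finitely complete $N_p$), and by the commutativity and associativity axioms for $\Sigma$-monoids it rearranges to $\sum_j \sum_{i=1}^k \alpha_{i,p}(x_j) = \sum_j \beta_p(x_j)$. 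Thus $\beta_p(\sum_j x_j) = \sum_j \beta_p(x_j)$, so in particular the required Kleene inequality holds (and $\beta_p(0)=0$). This completes the reduction, hence the lemma.

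I do not anticipate a real difficulty; the only delicate point is the rearrangement in the last paragraph, where one must be explicit that a finite outer sum and a countable inner sum commute \emph{because} they all land in the finitely complete $N_p$, so every intermediate sum is defined and the $\Sigma$-monoid axioms apply. If one prefers an entirely concrete argument, the same reasoning runs directly on the bilinear-map presentation $(M \multimap N)_n \cong \Bilin(\yoneda(n), M; N)$: define the sum of $f_1,\dots,f_k$ componentwise by $g_{p,m}(x,y) \defe \sum_{i=1}^k (f_i)_{p,m}(x,y)$, which is defined since $N$ is finitely complete, and verify bilinearity of $g$ by the identical rearrangement together with additivity of the $\CQ$-action.
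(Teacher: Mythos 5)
Your proof is correct and follows the same route as the paper, whose entire proof is the one-line observation that the sum in \( M \multimap N \) is computed pointwise; you simply spell out the details (that the pointwise finite sum lands in the finitely complete \( N_p \), and that the resulting family is again a \(\CQ\)-module morphism, using the \(\Sigma\)-monoid rearrangement axioms and additivity of the action). The delicate point you flag — that the finite/countable sum interchange is licensed because all intermediate sums are defined in \( N_p \) — is handled correctly.
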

\begin{proof}
    Since the sum in \( M \multimap N \) is defined by the point-wise sum.
\end{proof}

Since \( \FComp{M} \otimes \FComp{N} \) is finitely complete by \cref{lem:tensor-preserves-finite-completeness}, there exists a canonical map \( \FComp{M \otimes N} \longrightarrow \FComp{M} \otimes \FComp{N} \).
It is actually an isomorphism.
\begin{lemma}\label{lem:tensor-preserves-completion}
    Let \( \iota_M \colon M \longrightarrow \FComp{M} \) and \( \iota_N \colon N \longrightarrow \FComp{N} \) be finite completions.
    Then \( (\iota_M \otimes \iota_N)^{\dagger} \colon \FComp{M \otimes N} \longrightarrow \FComp{M} \otimes \FComp{N} \) is an isomorphism.
\end{lemma}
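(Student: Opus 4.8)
The plan is to identify $\FComp M \otimes \FComp N$, together with the morphism $\iota_M \otimes \iota_N \colon M \otimes N \longrightarrow \FComp M \otimes \FComp N$, as a finite completion of $M \otimes N$; the claimed isomorphism is then immediate from the uniqueness of finite completions, and the identification of this isomorphism with $(\iota_M \otimes \iota_N)^{\dagger}$ is built into the definition of $(-)^{\dagger}$.

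Write $\mathcal{D}$ for the full subcategory of $\pshCQ$ on the finitely complete $\CQ$-modules. It is reflective, the reflector being $\FComp{({-})}$ with unit $\iota$, and by \cref{lem:implication-preserves-completeness} it is an \emph{exponential ideal}: $A \multimap D \in \mathcal{D}$ whenever $D \in \mathcal{D}$ and $A \in \pshCQ$ is arbitrary. Moreover, by \cref{lem:tensor-preserves-finite-completeness}, $\FComp M \otimes \FComp N$ is itself an object of $\mathcal{D}$. So what remains is the universal property: for every $D \in \mathcal{D}$, precomposition with $\iota_M \otimes \iota_N$ should be a bijection $\pshCQ(\FComp M \otimes \FComp N,\, D) \longrightarrow \pshCQ(M \otimes N,\, D)$.

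I would verify this by transposing through the monoidal-closed adjunctions of $\pshCQ$ twice. Given $h \colon \FComp M \otimes \FComp N \to D$, transpose with respect to the second factor to obtain $h^{\flat} \colon \FComp M \to (\FComp N \multimap D)$; since $\FComp N \multimap D \in \mathcal{D}$, the morphism $h^{\flat}$ is the unique extension of $h^{\flat} \circ \iota_M$ along $\iota_M$, so restriction along $\iota_M$ is a bijection at this stage. Un-transposing $h^{\flat} \circ \iota_M$ yields $\bar h = h \circ (\iota_M \otimes \ident_{\FComp N}) \colon M \otimes \FComp N \to D$. Now transpose $\bar h$ with respect to the first factor: $\bar h^{\sharp} \colon \FComp N \to (M \multimap D)$; since $M \multimap D \in \mathcal{D}$, restriction along $\iota_N$ is again a bijection, and un-transposing $\bar h^{\sharp} \circ \iota_N$ yields $\bar h \circ (\ident_M \otimes \iota_N) = h \circ (\iota_M \otimes \iota_N)$. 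Composing these bijections and chasing naturality of the transposition shows that the resulting bijection $\pshCQ(\FComp M \otimes \FComp N, D) \cong \pshCQ(M \otimes N, D)$ is exactly $h \mapsto h \circ (\iota_M \otimes \iota_N)$. Hence $(\FComp M \otimes \FComp N, \iota_M \otimes \iota_N)$ is a finite completion of $M \otimes N$, and $(\iota_M \otimes \iota_N)^{\dagger}$ is an isomorphism. Conceptually this is just the instance of Day's reflection theorem for the exponential ideal $\mathcal{D} \hookrightarrow \pshCQ$, which makes $\FComp{({-})}$ strong symmetric monoidal.

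The only point requiring care is the bookkeeping in the last step: one must track the associators and the naturality squares of the transposition to see that the composite bijection really is precomposition with $\iota_M \otimes \iota_N$, and not a coherence-twisted variant. This is routine but easy to get slightly wrong. A more hands-on alternative — showing directly that the image of $\iota_M \otimes \iota_N$ generates $\FComp M \otimes \FComp N$ under finite sums and the $\CQ$-action (using \cref{lem:tensor-peak} together with the fact that elements of $\FComp M$ and $\FComp N$ are finite sums of elements of $M$ and $N$) — does give existence of the factorisation, but it does not obviously give uniqueness, since a finite completion need not be cancellable; the adjunction argument above is preferable precisely because uniqueness falls out for free from the uniqueness of extensions along the reflection unit.
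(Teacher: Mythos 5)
Your proof is correct, and at its technical core it uses the same device as the paper's own proof: transpose through the monoidal closure, use \cref{lem:implication-preserves-completeness} to see that \( \FComp{N} \multimap D \) (resp.\ \( M \multimap D \)) is finitely complete, extend along the completion unit, and transpose back. The difference is in how the conclusion is extracted. The paper uses the double transposition only to manufacture an explicit candidate inverse \( \alpha \colon \FComp{M} \otimes \FComp{N} \longrightarrow \FComp{M \otimes N} \), then must verify two composites separately: \( \alpha \circ (\iota_M \otimes \iota_N)^{\dagger} = \ident \) follows from the universal property of \( \FComp{M \otimes N} \), but the reverse composite requires a generation argument on \( \FComp{M} \otimes \FComp{N} \) that the paper only sketches. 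You instead show directly that precomposition with \( \iota_M \otimes \iota_N \) is a bijection \( \pshCQ(\FComp{M} \otimes \FComp{N}, D) \cong \pshCQ(M \otimes N, D) \) for every finitely complete \( D \) (using \cref{lem:tensor-preserves-finite-completeness} to place \( \FComp{M} \otimes \FComp{N} \) in the reflective subcategory), so that \( (\FComp{M} \otimes \FComp{N}, \iota_M \otimes \iota_N) \) is itself a finite completion of \( M \otimes N \) and the isomorphism is automatic from uniqueness of reflections. This is the cleaner organisation — it is exactly Day's reflection argument for an exponential ideal — and it subsumes both composite verifications at once; your closing remark that the bare generation argument yields existence but not obviously uniqueness of factorisations is well taken, and is precisely why the universal-property packaging is preferable. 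The only bookkeeping you rightly flag is checking that the composite of the two transposition bijections is precomposition with \( \iota_M \otimes \iota_N \) itself, which follows from bifunctoriality of \( \otimes \) and naturality of the currying isomorphisms.
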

\begin{proof}
    We give the inverse.
    Let \( \iota_{M \otimes N} \colon M \otimes N \longrightarrow \FComp{M \otimes N} \) be the finite completion.
    It induces \( M \longrightarrow (N \multimap \FComp{M \otimes N}) \) and \( \FComp{M} \longrightarrow (N \multimap \FComp{M \otimes N}) \) by the finite completeness of \( N \multimap \FComp{M \otimes N} \) (\cref{lem:implication-preserves-completeness}).
    So we have \( \FComp{M} \otimes N \longrightarrow \FComp{M \otimes N} \).
    Similarly we have \( \alpha \colon \FComp{M} \otimes \FComp{N} \longrightarrow \FComp{M \otimes N} \).
    Note that \( \alpha(x \otimes y) = \iota(x \otimes y) \) on \( (x \otimes y) \in (M \otimes N)_n \).

    By computing the image of generators \( x \otimes y \), \( x \in M_m \) and \( y \in N_n \), of \( M \otimes N \), we can check that the following diagram commutes:
    \begin{equation*}
        \xymatrix@C=100pt{
            M \otimes N \ar[r]^{\iota} \ar[ddr]_{\iota} & \FComp{M \otimes N} \ar[d]^{(\iota_M \otimes \iota_N)^\dagger}
            \\
            & \FComp{M} \otimes \FComp{N} \ar[d]^{\alpha}
            \\
            & \FComp{M \otimes N}
        }.
    \end{equation*}
    By the universal property of \( \FComp{M \otimes N} \), we have \( \ident = \alpha \circ (\iota_M \otimes \iota_N)^{\dagger} \).

    The other equation \( (\iota_M \otimes \iota_N)^{\dagger} \) can be obtained similarly, since \( \FComp{M} \otimes \FComp{N} \) is generated by \( \iota_M(x) \otimes \iota_N(y) \), \( x \in M_m \) and \( y \in N_n \).
    \tk{todo}
\end{proof}

\tk{todo: define generators.  Note that if the generator is closed under the \( \CQ \)-action, then each \( M_n \) is generated by a sum.}

\subsection{Orthogonal Factorisation System}
A morphism \( f \colon M \longrightarrow N \) is a \emph{covering morphism} if, for every \( n \in \Nat \) and \( y \in N_n \), there exists \( x \in M_n \) such that \( N \models y \le f(x) \).
\begin{lemma}\label{lem:orthogonal-factorisation-system}
    \( (\{\mbox{covering morphisms}\}, \{\mbox{hereditary monos}\}) \) is an orthogonal factorisation system.
\end{lemma}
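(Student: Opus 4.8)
The plan is to verify the three defining conditions of an orthogonal factorisation system for the pair $(\mathcal{E},\mathcal{M})$, where $\mathcal{E}$ is the class of covering morphisms and $\mathcal{M}$ the class of hereditary monos: (i) each class contains all isomorphisms and is closed under composition; (ii) every morphism of $\pshCQ$ factors as a covering morphism followed by a hereditary mono; (iii) for every $e\in\mathcal{E}$ and $m\in\mathcal{M}$, every commuting square with $e$ on the left and $m$ on the right admits a unique diagonal filler. Condition (i) is immediate: an invertible covering morphism is trivially covering, and covering morphisms compose because $\CQ$-module morphisms are monotone for the canonical preorder (they preserve sums), so a lift along $g\circ f$ is obtained by lifting first along $g$ and then along $f$; for hereditary monos, containing all isomorphisms is clear and closure under composition is \cref{lem:some-monomorphisms-closed-under-composition}. (In fact (ii) and (iii) already force (i).)

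For the factorisation, given $f\colon M\longrightarrow N$ I would take the submodule $L\hookrightarrow N$ with components
\[
    L_n \defe \{\, y\in N_n \mid y\le f_n(x)\ \text{in}\ N_n\ \text{for some}\ x\in M_n \,\},
\]
equipped with the $\Sigma$-monoid structure and $\CQ$-action inherited from $N$ (so a sum is declared defined in $L_n$ exactly when it is defined in $N_n$ and its value lies in $L_n$, and $L_n$ is closed under the $\CQ$-action because the action is monotone and commutes with $f$). Then $L\hookrightarrow N$ is downward-closed by transitivity of $\le$ and sum-reflecting because it carries the inherited sum, hence hereditary; and the corestriction $\tilde f\colon M\longrightarrow L$ of $f$ is a $\CQ$-module morphism that is covering, since for $y\in L_n$ with $y+w=f_n(x)$ in $N_n$ one has $w\le f_n(x)$, so $w\in L_n$, and the equation $y+w=\tilde f_n(x)$ already holds in $L_n$. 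So $f=\bigl(L\hookrightarrow N\bigr)\circ\tilde f$ is the desired factorisation; one can also note that $L$ is the least hereditary submodule of $N$ through which $f$ factors.

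For the lifting property, consider a square with $e\colon A\to C$ covering, $m\colon B\hookrightarrow D$ hereditary, bottom edge $v\colon C\to D$ and top edge $u\colon A\to B$, so that $m_n\circ u_n=v_n\circ e_n$ for every $n$. Viewing $B_n$ as a subset of $D_n$, I would define the filler $d\colon C\to B$ by $d_n(c)\defe v_n(c)$: this lands in $B_n$ because, choosing $a\in A_n$ with $c\le e_n(a)$, we get $v_n(c)\le v_n(e_n(a))=u_n(a)\in B_n$ and $m$ is downward-closed, and the value does not depend on the choice of $a$. That $d=(d_n)_n$ preserves sums and the $\CQ$-action reduces to the corresponding properties of $v$ together with downward-closedness and sum-reflection of $m$; the triangles $d\circ e=u$ and $m\circ d=v$ hold by construction, and $d$ is the unique such morphism because $m$ is monic.

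The main obstacle I anticipate is not conceptual but bookkeeping: throughout one must keep careful track of in which module each partial sum is required to be defined, and it is precisely the two defining properties of a hereditary mono, namely downward-closedness and sum-reflection, that allow sums to be transported between a submodule and the ambient module. Making the factorisation and the diagonal-filler arguments airtight will consist mostly of such Kleene-equality checks.
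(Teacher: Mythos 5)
Your proposal is correct and uses the same two essential constructions as the paper: the factorisation of \( f \colon M \longrightarrow N \) through the hereditary submodule \( L_n = \{\, y \in N_n \mid \exists x.\ y \le f_n(x) \,\} \) is exactly the one the paper builds, and the diagonal filler is obtained component-wise (the paper phrases it as the corresponding filler in \( \SMon \), you phrase it as corestricting the bottom edge), with downward-closedness supplying membership and sum-reflection supplying preservation of sums, just as in the paper. The one substantive difference is which clauses of the definition get verified: the paper spends most of its proof on the two converse characterisations — that a morphism with the right lifting property against all covering morphisms is a hereditary mono, and that one with the left lifting property against all hereditary monos is covering — each established by testing against small presented modules built from \( \yoneda(n) \) and its (co)products. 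Under the definition you check (isomorphisms, closure under composition, factorisation, unique diagonal fillers) these converses follow by the standard retract argument, so your proof is complete as it stands; if the lemma is instead read as asserting that the two classes are each other's orthogonality complements, you would need to either invoke that retract argument explicitly or reproduce the paper's test-object computations.
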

\begin{proof}
    We first check the lifting property.
    Assume a commutative square
    \begin{equation*}
        \xymatrix{
            M \ar[d]^{c} \ar[r]^f & L \ar[d]^{m} \\
            N \ar[r]^g & K
        }
    \end{equation*}
    such that \( m \) is a hereditary monomorphism and \( c \) is a covering morphism.
    Then, for every \( n \in \CQ \), it induces a commutative square
    \begin{equation*}
        \xymatrix{
            M_n \ar[d]^{c_n} \ar[r]^{f_n} & L_n \ar[d]^{m_n} \\
            N_n \ar[r]^{g_n} & K_n
        }
    \end{equation*}
    where \( m_n \) is a hereditary monomorphism and \( c_n \) is a covering morphism in \( \SMon \).
    So we have a unique diagonal fill-in \( k_n \colon N_n \longrightarrow L_n \).\tk{to do: provide a detail in a paper}
    The family \( (k_n)_n \) preserves the \( \CQ \)-action since so does the composite \( (m_n \circ k_n)_n = (g_n)_n \) and \( m_n \) is an injection.
    So \( (k_n)_n \) is a \( \CQ \)-module homomorphism.

    Assume that \( m \colon L \longrightarrow K \) is a morphism satisfying the right lifting property for every covering morphism.
    \begin{itemize}
        \item
            To prove \( m \) is monic, let \( a,b \in L_n \) and assume \( m_n(a) = m_n(b) \).
            Let \( M = \yoneda(n) \coprod \yoneda(n) \) and \( N = \yoneda(n) \).
            Then \( M \) is generated by two elements \( x,y \in M_n \) and \( N \) is generated by a single element \( z \in N_n \).
            Let \( c \) be the map defined by \( c = [\ident_N, \ident_N] \) or, equivalently, \( x \mapsto z \) and \( y \mapsto z \).
            Obviously \( c \) is a covering morphism.
            Let \( f \) and \( g \) be \( CQ \)-module homomorphisms given by \( x \stackrel{f}{\mapsto} a \), \( y \stackrel{f}{\mapsto} b \) and \( z \stackrel{g}{\mapsto} m_n(a) \).
            Then the square commutes, so the diagonal fill-in \( k \colon N \longrightarrow L \) exists.
            So
            \begin{equation*}
                a = f_n(x) = k_n(c_n(x)) = k_n(c_n(y)) = f_n(y) = b.
            \end{equation*}
            Since \( n \) and \( a,b \in L_n \) are arbitrary, \( m_n \) is an injection for every \( n \).
            So \( m \) is a mono.
        \item 
            To prove \( m \) is downward-closed, let \( a,b \in K_n \) and assume \( a \le b \) (\ie~\( a + a' = b \)) and \( b \in L_n \) (here we regard \( L_n \subseteq K_n \) via the injection \( m_n \)).
            Let \( M = \yoneda(n) \) and \( N = M \times M \).
            So \( M \) is the \( \CQ \)-module generated by an element \( x \in M_n \) and \( N \) is the \( \CQ \)-module generated by elements \( y,z,u \in N_n \) and the relation \( y + z = u \).
            Consider a morphism \( c \) given by \( x \mapsto u \) or, equivalently, \( c = \langle \ident_M, \ident_M \rangle \).
            This is a covering morphism.
            Let \( f \colon M \longrightarrow L \) be the morphism given by \( x \mapsto a \) and \( g \colon N \longrightarrow K \) be the morphism given by \( y \mapsto a \), \( z \mapsto a' \) and \( u \mapsto b \).
            Then the square commutes, so there exists a diagonal fill-in \( k \colon N \longrightarrow L \).
            Since \( a = m_n(k_n(y)) \) and \( m_n \) is regarded as an inclusion, we conclude that \( a \) belongs to \( L_n \).
        \item 
            To prove \( m \) is sum-reflecting, let \( (a_i)_i \) be a family of elements in \( L_n \), \( b \in L_n \) and assume \( m_n(b) = \sum_i m_n(a_i) \) in \( K_n \).
            Let \( M = \yoneda(n) + \coprod_{i} \yoneda(n) \) and \( N = \prod_{i} \yoneda(n) \).
            So \( M \) is the \( \CQ \)-module generated by \( x_i,y \in M_n \) (and the sum of generators is undefined), and \( N \) is the \( \CQ \)-module presented by the generators \( x'_i, y' \in N_n \) and the relation \( y' = \sum_i x'_i \).
            Consider the morphism \( c \colon M \longrightarrow N \) defined by \( x_i \mapsto x'_i \) and \( y \mapsto y' \) (note that \( c \) is well-defined since the domain is the coproduct).
            The image of \( c \) contains the ``maximum element'' \( y' \), so \( c \) is a covering morphism.
            Let \( f \colon M \longrightarrow L \) be the mapping given by \( x_i \mapsto a_i \) and \( y \mapsto b \), and \( g \colon N \longrightarrow K \) be the mapping given by \( x'_i \mapsto m_n(a_i) \) and \( y' \mapsto m_n(b) \).
            Then the square commutes, so there exists a diagonal fill-in \( k \colon N \longrightarrow L \).
            Then
            \begin{equation*}
                b
                =
                f(y)
                =
                k(c(y)) 
                =
                k(y')
                =
                k(\sum_i x'_i)
                =
                \sum_i k(x'_i)
                =
                \sum_i k(c(x_i))
                =
                \sum_i f(x_i)
                =
                \sum_i a_i.
            \end{equation*}
    \end{itemize}

    Assume that \( c \colon M \longrightarrow N \) is a morphism satisfying the left lifting property for every hereditary monomorphism.
    Let \( L \hookrightarrow N \) be the sum-reflecting submodule of \( N \) consisting of elements covered by the image of \( c \), \ie,
    \begin{equation*}
        L_n \defe \{ x \in N_n \mid \exists y \in M_n. x \le f_n(y) \}.
    \end{equation*}
    It is easy to see that the restriction of the operations in \( N \) to \( L \) forms a \( \CQ \)-module.
    By definition, \( m \colon L \hookrightarrow N \) is a hereditary monomorphism.
    Consider the following square
    \begin{equation*}
        \xymatrix{
            M \ar[d]^{c} \ar[r]^c & L \ar[d]^{m} \\
            N \ar[r]^{\ident} & N
        }.
    \end{equation*}
    This square commutes, so there exists a diagonal fill-in \( k \colon N \longrightarrow L \).
    Regarding \( L_n \) as a subset of \( N_n \), the morphism \( k_n \colon N_n \longrightarrow L_n \) is the identity.
    So \( L = N \).
\end{proof}

\subsection{Proof of \cref{prop:local-presentability}}
\begin{claim*}[of \cref{prop:local-presentability}]
    \( \pshCQ \) is locally \( \aleph_1 \)-presentable.
\end{claim*}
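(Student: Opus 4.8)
The plan is to recognise $\pshCQ$ as the category of models of an essentially algebraic theory (equivalently, of a small limit sketch) all of whose operations have arity strictly below $\aleph_1$, and then to appeal to the Gabriel--Ulmer characterisation of locally presentable categories (see, e.g., \cite{Adamek1994}). Using the elementary description of \cref{def:elementary-module}: we take $\Nat$ as the (countable) set of sorts; the operations are, for each sort $n$ and each countable family shape $I \subseteq \Nat$, the partial sum $\sum_{i \in I}(-)$ of arity $|I| \le \aleph_0$ on $M_n$, together with, for each superoperator $\varphi \in \CQ(m,n)$, a total unary operation $({-})\cdot\varphi \colon M_n \to M_m$; the axioms are those of \cref{def:sigma-monoid} --- with the definedness of each sum handled, in the standard way, by adjoining a subsort ``domain of the $I$-ary sum'' carved out as an equaliser --- together with bilinearity of the actions, $x \cdot \ident_n = x$, and $(x\cdot\varphi)\cdot\psi = x\cdot(\varphi\circ\psi)$. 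Every operation has arity $<\aleph_1$ (the only infinitary one, the sum, has arity $\aleph_0$), and the signature is a genuine set because $\CQ$ is small; hence the theory is $\aleph_1$-ary and its category of models, which is exactly $\pshCQ$, is locally $\aleph_1$-presentable. This also immediately reconfirms (co)completeness.

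Alternatively, and more in the spirit of \cref{def:enriched-module}, one may argue enriched-categorically. The category $\SMon$ is itself locally $\aleph_1$-presentable (by the same reasoning, applied to the theory of $\Sigma$-monoids alone), and it is moreover a locally $\aleph_1$-presentable symmetric monoidal closed category, its unit being $\aleph_1$-presentable and $\otimes$ preserving $\aleph_1$-presentable objects. Since $\CQ$ is a small $\SMon$-enriched category, the general theory of enriched presheaf categories over a locally presentable base then yields that $\pshCQ = [\CQ^{\op},\SMon]_{\SMon}$ is again locally $\aleph_1$-presentable (cf.\ \cite{Kelly1982}), with the representables $\{\yoneda(n) \mid n \in \Nat\}$ forming a strong generator of $\aleph_1$-presentable objects.

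The point needing care is purely the bookkeeping that pins down the cardinal: one must check that $\aleph_1$, and not $\aleph_0$, is forced --- the infinite sum is a genuinely countably-infinitary operation (already on $\CQ(1,1)=[0,1]$), so $\pshCQ$ is locally $\aleph_1$-presentable but \emph{not} locally finitely presentable. The uncountable cardinality of the hom-sets $\CQ(n,m)$ is not an obstruction: local $\aleph_1$-presentability constrains the arities of the operations, not their number. The only mildly delicate modelling issue, namely expressing the partial-definedness of sums within an essentially algebraic theory, is handled exactly as in the known presentation of $\Sigma$-monoids as the models of a limit sketch with countable cones.
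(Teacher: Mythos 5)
Your proof is correct and follows essentially the same route as the paper's: the paper also presents $\pshCQ$ as the category of models of an $\Nat$-sorted, countably-ary algebraic theory with partial operations (phrased there as a partial Horn theory in the sense of Palmgren--Vickers rather than an essentially algebraic theory/limit sketch, but these formalisms are interchangeable for this purpose) and concludes local $\aleph_1$-presentability from the arity bound. Your closing remarks on why $\aleph_1$ rather than $\aleph_0$ is forced, and why the size of the hom-sets is irrelevant, are accurate.
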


We prove that \( \pshCQ \) is locally \( \aleph_1 \)-presentable.

Recall that a \( \Sigma \)-monoid can be axiomatised~\cite{Tsukada2022} as a single-sorted partial Horn theory~\cite{Palmgren2007}.
A \( \CQ \)-module \( M \in \pshCQ \) consists of a family \( (M_n)_{n \in \Nat} \) of \( \Sigma \)-monoids, which is naturally an \( \Nat \)-sorted algebra.
Its operations are the sums for each sort \( n \), and unary operation \( ({-}) \cdot f \) of sort \( n \to m \) for each \( f \in \CQ(m,n) \).
The axioms are \( \Sigma \)-monoid axioms for each sort \( n \) together with \( (x \cdot f) \cdot g = x \cdot (f \circ g) \) and \( (\sum_i x_i) \cdot (\sum_j f_j) \Kle \sum_{i,j} x_i \cdot f_j \).
It is not difficult to see that \( \widehat{\CQ} \) is equivalent to the category of models of this \( \Nat \)-sorted algebras.

\subsection{Proof of \cref{thm:free-exponential}}
\begin{claim*}[of \cref{thm:free-exponential}]
    \( \pshCQ \) has the cofree exponential comonad.
\end{claim*}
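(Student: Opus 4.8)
The plan is to deduce the existence of the cofree exponential entirely from the abstract structure already in place, exactly as anticipated in the discussion preceding the statement. First I would collect the two inputs: by the Day convolution construction recalled in \cref{sec:intuitionistic:day-tensor} (equivalently, by \cref{thm:presheaf-model}) the category \( \pshCQ \) equipped with the Day tensor \( \ptensor \) is a symmetric monoidal closed category, and by \cref{prop:local-presentability} it is locally \( \aleph_1 \)-presentable, hence in particular complete and cocomplete, and so has finite products \( \times \). These are precisely the hypotheses under which cofree cocommutative comonoids are known to exist.

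Next I would invoke \cite[Remarks~1 in Section~2.7]{Porst2008}: in any locally presentable symmetric monoidal closed category the forgetful functor \( U \colon \Comon(\pshCQ) \longrightarrow \pshCQ \) from the category of cocommutative comonoids admits a right adjoint \( R \). Spelled out, this goes through an adjoint functor theorem: \( \Comon(\pshCQ) \) is again locally presentable, being the category of models of the partial Horn / essentially-algebraic presentation underlying \cref{prop:local-presentability} augmented by the comonoid operations and their equational axioms, and \( U \) preserves all colimits because \( \ptensor \) is cocontinuous in each variable (being a left adjoint); hence \( U \) is a cocontinuous functor between locally presentable categories and so has a right adjoint. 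Writing \( \Der_M \colon U R M \longrightarrow M \) for the component at \( M \) of the counit of \( U \dashv R \) and setting \( \cofreeexp M \defe U R M \) with the comonad structure transported from the adjunction, the universal property displayed just before \cref{thm:exponential-based-setting-formal-power-series} — for every comonoid \( C \) and every \( h \colon C \longrightarrow M \) in \( \pshCQ \) there is a unique comonoid morphism \( h^{\cofreeexp} \colon C \longrightarrow \cofreeexp M \) with \( h = \Der \circ h^{\cofreeexp} \) — is nothing but the defining bijection of the adjunction \( U \dashv R \) with counit \( \Der \).

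Finally I would observe that \( \cofreeexp \) is genuinely an \emph{exponential} comonad, i.e.\ a linear exponential (Seely/Lafont) comonad, and not merely the comonad of a cofree comonoid: this is Lafont's theorem. Concretely, \( \Comon(\pshCQ) \) is cartesian with categorical product given by \( \ptensor \) and terminal object the comonoid \( I = \yoneda(1) \); transporting this cartesian structure, together with the finite products of \( \pshCQ \), along \( U \dashv R \) equips \( \cofreeexp \) with a symmetric monoidal comonad structure, a compatible commutative comonoid structure on each \( \cofreeexp M \), and the Seely isomorphisms \( \cofreeexp(M \times N) \cong \cofreeexp M \ptensor \cofreeexp N \) together with \( \cofreeexp \) of the terminal object being isomorphic to \( I \). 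For this last, purely formal, step I would cite the standard theory of Lafont categories rather than reprove it (indeed it may be regarded as part of \cref{thm:presheaf-model}).

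The only place that calls for genuine care — and hence the main obstacle — is the existence half: checking that Porst's hypotheses apply verbatim and, under the hood, that \( \Comon(\pshCQ) \) is accessible and cocomplete so that the adjoint functor theorem bites. Since \cref{prop:local-presentability} and the Day-convolution closedness are already established, this obstacle is discharged essentially by citation, and everything downstream of the produced adjunction is a routine unwinding of categorical generalities, so I would present it tersely.
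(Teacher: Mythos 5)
Your proof is correct and follows essentially the same route as the paper: the paper likewise deduces the result immediately from local presentability (\cref{prop:local-presentability}), symmetric monoidal closedness under the Day tensor, and the citation of \cite[Remarks~1 in Section~2.7]{Porst2008} for cofree cocommutative comonoids in locally presentable monoidal closed categories. The extra detail you supply on unwinding Porst's adjoint-functor-theorem argument and the Lafont-style upgrade to a linear exponential comonad is consistent with, and merely elaborates on, the paper's terse proof.
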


As stated immediately before the statement, this theorem is a consequence of the following facts.
\begin{itemize}
    \item \( \pshCQ \) is locally presentable (\cref{prop:local-presentability}).
    \item \( \pshCQ \) is a symmetric monoidal-closed category (\cref{sec:intuitionistic:day-tensor,sec:intuitionistic:function-space}).
    \item Every locally-presentable (symmetric) monoidal-closed category has the cofree cocommutative comonoids~\cite[Remarks~1 in Section~2.7]{Porst2008}.
\end{itemize}

 \section{Supplementary Materials for Section~4}
\tk{\cref{sec:matrix}}

\subsection{Basic fact}
\begin{lemma}\label{lem:basis-transitive}
    Assume that \( M \) has an \( \mathcal{O} \)-basis and every \( N \in \mathcal{O} \) has an \( \mathcal{O}' \)-basis.
    Then \( M \) has an \( \mathcal{O}' \)-basis.
\end{lemma}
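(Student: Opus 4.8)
The plan is to \emph{compose} the two layers of bases. By \cref{def:general-basis} I would fix an $\mathcal{O}$-basis $(\BObj{a},\ket{a},\bra{a})_{a\in A}$ of $M$, so that $A$ is countable, $\BObj{a}\in\mathcal{O}$, $\ket{a}\colon\BObj{a}\to M$, $\bra{a}\colon M\to\BObj{a}$, and $\ident_M=\sum_{a\in A}\ket{a}\bra{a}$. Since each $\BObj{a}$ lies in $\mathcal{O}$, it has an $\mathcal{O}'$-basis, which I would write $(P_{a,c},e_{a,c},f_{a,c})_{c\in C_a}$ with $C_a$ countable, $P_{a,c}\in\mathcal{O}'$, $e_{a,c}\colon P_{a,c}\to\BObj{a}$, $f_{a,c}\colon\BObj{a}\to P_{a,c}$, and $\ident_{\BObj{a}}=\sum_{c\in C_a}e_{a,c}f_{a,c}$. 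The candidate $\mathcal{O}'$-basis of $M$ is then indexed by $D\defe\{(a,c)\mid a\in A,\ c\in C_a\}$, with $\BObj{(a,c)}\defe P_{a,c}\in\mathcal{O}'$, $\ket{(a,c)}\defe\ket{a}\circ e_{a,c}\colon P_{a,c}\to M$, and $\bra{(a,c)}\defe f_{a,c}\circ\bra{a}\colon M\to P_{a,c}$. As a countable union of countable sets, $D$ is countable, so this family has the right shape; it only remains to verify the resolution of the identity.

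The computation I would carry out is a two-step regrouping. First, fixing $a$, bilinearity of composition (which holds since $\pshCQ$ is $\SMon$-enriched, \cref{def:enriched-category}) gives
\[
  \ket{a}\bra{a}\;=\;\ket{a}\circ\ident_{\BObj{a}}\circ\bra{a}\;=\;\ket{a}\circ\Bigl(\textstyle\sum_{c\in C_a}e_{a,c}f_{a,c}\Bigr)\circ\bra{a}\;\Kle\;\textstyle\sum_{c\in C_a}\ket{a}e_{a,c}f_{a,c}\bra{a}\;=\;\textstyle\sum_{c\in C_a}\ket{(a,c)}\bra{(a,c)},
\]
and since the left-hand side is defined, the Kleene inequality forces $\sum_{c\in C_a}\ket{(a,c)}\bra{(a,c)}$ to be defined and equal to $\ket{a}\bra{a}$. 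Substituting term by term into $\ident_M=\sum_{a\in A}\ket{a}\bra{a}$ (the Commutative axiom of \cref{def:sigma-monoid} with the identity reindexing) shows $\sum_{a\in A}\bigl(\sum_{c\in C_a}\ket{(a,c)}\bra{(a,c)}\bigr)$ is defined, and the Associativity axiom of \cref{def:sigma-monoid}, applied in the $\Sigma$-monoid $\pshCQ(M,M)$, then yields $\sum_{a\in A}\bigl(\sum_{c\in C_a}\ket{(a,c)}\bra{(a,c)}\bigr)\Keq\sum_{(a,c)\in D}\ket{(a,c)}\bra{(a,c)}$. Hence $\sum_{(a,c)\in D}\ket{(a,c)}\bra{(a,c)}=\ident_M$, so $(\BObj{(a,c)},\ket{(a,c)},\bra{(a,c)})_{(a,c)\in D}$ is an $\mathcal{O}'$-basis of $M$.

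The underlying diagram chase is essentially trivial, so the only real care needed — and the one place a naive argument could go wrong — is the handling of partiality: $\Sigma$-monoid sums are only partially defined and cannot be regrouped at will. The argument is arranged so that every regrouping passes from a \emph{coarser} sum already known to be defined (namely $\ident_{\BObj{a}}$, and then $\ident_M$) to a \emph{finer} one, and each such step is licensed by a Kleene (in)equality — bilinearity of composition in the $\SMon$-enriched category $\pshCQ$, together with the Commutative and Associativity axioms of $\Sigma$-monoids — that propagates definedness in precisely that direction. This is the general form of the informal ``basis of a basis'' reasoning already used implicitly when bases are built by iterating constructions, as in \cref{lem:based-additive-multiplicative} and \cref{thm:exponential-based-setting-formal-power-series}.
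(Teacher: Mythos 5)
Your proposal is correct and matches the paper's own proof: both compose the $\mathcal{O}$-basis of $M$ with $\mathcal{O}'$-bases of each $\BObj{a}$, setting $\ket{(a,c)}=\ket{a}e_{a,c}$ and $\bra{(a,c)}=f_{a,c}\bra{a}$, and derive the resolution of the identity by substituting $\ident_{\BObj{a}}=\sum_c e_{a,c}f_{a,c}$ into $\ident_M=\sum_a\ket{a}\bra{a}$ and applying the Kleene inequality for composition. Your explicit handling of definedness and countability is slightly more careful than the paper's one-line computation, but the argument is the same.
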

\begin{proof}
    Assume a \( \mathcal{O} \)-basis \( (\BObj{b}, \ket{b}, \bra{b})_{b \in B} \) of \( M \).
    Choose a \( \mathcal{O}' \)-basis \( (\BObj{a}, \ket{a}, \bra{a})_{a \in A_b} \) for each \( \BObj{b} \).
    Then, for every \( b \in B \) and \( a \in B_a \),
    \begin{align*}
        \ket{b} \circ \ket{a} &\colon \BObj{a} \longrightarrow \BObj{b} \longrightarrow M \\
        \bra{a} \circ \bra{b} &\colon M \longrightarrow \BObj{b} \longrightarrow \BObj{a}.
    \end{align*}
    Since
    \begin{equation*}
        \ident_M
        \quad=\quad
        \sum_{b \in B} \ket{b} \bra{b}
        \quad=\quad
        \sum_{b \in B} \ket{b} (\sum_{a \in A_b} \ket{a}\bra{a}) \bra{b}
        \quad\Kle\quad
        \sum_{b \in B, a \in A_b} \ket{b}\ket{a}\bra{a}\bra{b}.
    \end{equation*}
    Hence \( (\BObj{b,a}, \ket{b,a}, \bra{b,a})_{b \in B, a \in A_b} \) given by \( \BObj{b,a} \defe \BObj{a} \), \( \ket{b,a} \defe \ket{b}\ket{a} \) and \( \bra{b,a} \defe \bra{a} \bra{b} \) is an \( \mathcal{O}' \)-basis for \( M \).
\end{proof}

\subsection{Proof of \cref{thm:cpm-representation}}
\begin{claim*}[of \cref{thm:cpm-representation}]
    Let \( \LQT \) and \( \LQT' \) be pseudo-representable \( \CQ \)-modules.
    Then
\begin{equation*}
        \CQ(\LQT, \LQT')
        \quad\cong\quad
        \{ \varphi \in \CPM(\#\LQT,\#\LQT') \mid \forall n. \forall x \in \mathcal{L}_n. \varphi \circ x \in \mathcal{L}'_n \}
        \qquad\mbox{(as \( [0,1] \)-modules).}
    \end{equation*}
Here \( \varphi \in \CPM(\#\LQT,\#\LQT') \) satisfying the above condition corresponds to a morphism \( f_\varphi = (f_{\varphi,n})_n \colon \LQT \longrightarrow \LQT' \) given by \( f_{\varphi,n}(x) \defe \varphi \circ x \) for every \( n \) and \( x \in \LQT_n \).
\end{claim*}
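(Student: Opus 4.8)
The plan is to exhibit mutually inverse maps between the two sides. In one direction, send $\varphi \in \CPM(\#\LQT,\#\LQT')$ satisfying the stated side-condition to $f_\varphi = (f_{\varphi,n})_n$ with $f_{\varphi,n}(x) \defe \varphi \circ x$. First I would check that $f_\varphi$ really is a $\CQ$-module morphism: each $f_{\varphi,n}$ lands in $\LQT'_n$ by the side-condition; it is a $\Sigma$-monoid homomorphism because composition is bilinear, so $f_{\varphi,n}(\sum_i x_i) = \varphi \circ \sum_i x_i \Kle \sum_i (\varphi \circ x_i)$; and it preserves the $\CQ$-action because the action on a pseudo-representable module $\LQT \hookrightarrow \CPM({-},\#\LQT)$ is precomposition, whence $f_{\varphi,m}(x \cdot \psi) = \varphi \circ (x \circ \psi) = (\varphi \circ x) \circ \psi = f_{\varphi,n}(x) \cdot \psi$.

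In the other direction, given $g \in \widehat{\CQ}(\LQT,\LQT')$ I would recover a completely positive map from the pseudo-universal element: fix $r > 0$ with $r\,\ident_\ell \in \LQT_\ell$, where $\ell = \#\LQT$, and set $\psi_0 \defe g_\ell(r\,\ident_\ell) \in \LQT'_\ell \subseteq \CPM(\ell,\#\LQT')$ and $\psi \defe (1/r)\psi_0$. The crucial step is to verify $g = f_\psi$, \ie~$g_n(x) = \psi \circ x$ for every $n$ and every $x \in \LQT_n$. Boundedness enters here: since $\opnorm{x} \le B$, for $N$ large enough $(1/(Nr))x$ has operator norm $\le 1$, hence is a superoperator, so $(1/(Nr))x \in \CQ(n,\ell)$; moreover $(1/N)x \le x$ in $\CPM(n,\ell)$, so $(1/N)x \in \LQT_n$ by downward-closedness, and since $(r\,\ident_\ell) \circ ((1/(Nr))x) = (1/N)x$ and $\LQT$ is sum-reflecting, one obtains $x = \sum_{i=1}^N (r\,\ident_\ell) \cdot ((1/(Nr))x)$ holding already in $\LQT_n$. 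Applying $g_n$ and using that $g_n$ preserves finite sums and $\CQ$-actions gives $g_n(x) = \sum_{i=1}^N \psi_0 \circ ((1/(Nr))x) = (1/r)(\psi_0 \circ x) = \psi \circ x$. The map $\psi$ meets the side-condition since $\psi \circ x = g_n(x) \in \LQT'_n$.

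It remains to do the routine bookkeeping. For injectivity of $\varphi \mapsto f_\varphi$: if $f_\varphi = f_{\varphi'}$ then evaluating at $x = r\,\ident_\ell$ gives $r\varphi = r\varphi'$, hence $\varphi = \varphi'$; together with the surjectivity just established this yields a bijection. That the bijection is a $[0,1]$-module map (in the Kleene sense) follows because both sides carry sums and the $[0,1]$-action inherited, ultimately, from $\CPM(\ell,\#\LQT')$ and computed componentwise, while $\varphi$ is recovered from $f_\varphi$ (up to the fixed positive scalar $r$) by evaluation at the single element $r\,\ident_\ell$; hence $f_{\sum_i \varphi_i} \Keq \sum_i f_{\varphi_i}$ and $f_{r\varphi} = r f_\varphi$.

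I expect the main obstacle to be the surjectivity step: one must see that the pseudo-universal element $r\,\ident_\ell$, together with the $\CQ$-action and finite sums, generates all of $\LQT_n$, and that boundedness is exactly what makes the rescaled maps $(1/(Nr))x$ genuine superoperators. The remaining care is in the use of hereditariness of $\LQT$ (both downward-closedness and sum-reflection), needed so that the decomposition of $x$ holds in $\LQT_n$ rather than merely in $\CPM({-},\ell)$, and in handling the Kleene equalities when pushing the finite sum through $g_n$.
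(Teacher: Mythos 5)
Your proof is correct and follows essentially the same route as the paper's: both recover \( \psi \) from the image of the pseudo-universal element \( r\,\ident_\ell \) and use boundedness to rescale \( x \) into a genuine superoperator so that preservation of the \( \CQ \)-action applies. The only cosmetic difference is that the paper cancels the nonzero scalar \( rr' \) directly (via cancellability of nonzero real actions), whereas you unfold that scalar as an \( N \)-fold sum and push \( g_n \) through it using hereditariness — interchangeable, since the partial action of reals is defined by exactly such sums.
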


Let \( \ell \) and \( \ell' \) be \( \#\LQT \) and \( \#\LQT' \), respectively.

Clearly \( f_\varphi \) is a morphism from \( \LQT \) to \( \LQT' \).

It suffices to show that every morphism \( g \in \pshCQ(\LQT,\LQT') \) is represented by a completely positive map \( \psi \in \CPM(\ell,\ell') \).
Since \( \LQT \) is pseudo-representable, \( r\,\ident_\ell \in \LQT_\ell \) for some \( r \in (0,1] \).
Let \( \psi_0 \defe g_\ell(r\,\ident_\ell) \in \LQT'_\ell \subseteq \CPM(\ell,\ell') \) and \( \psi \defe (1/r) \psi_0 \in \CPM(\ell,\ell') \).
Let \( x \in \LQT_n \subseteq \CPM(n,\ell) \).
We show that \( g_n(x) = \psi \circ x \).
Let \( r' \in (0,1] \) such that \( r'\,x \in \CQ(n,\ell) \).
Then
\begin{equation*}
    rr'\,g_n(x) = g_n(rr'\,x) = g_\ell(r\,\ident_\ell) \cdot (r'\,x) = (r\,\psi_0) \cdot (r'\,x).
\end{equation*}
Since the \( \CQ \)-action to \( \LQT \) is the composition in \( \CPM \), we have \( rr'\,g_n(x) = rr'\,(\psi \circ x) \) and thus \( g_n(x) = \psi \circ x \) since \( rr' \neq 0 \) (\cf~\cref{lem:real-action-invertible}).

\subsection{Proof of \cref{lem:linear:lqt-function-and-tensor}}
\begin{claim*}[of \cref{lem:linear:lqt-function-and-tensor}]
    Let \( \LQT \) and \( \LQT' \) be pseudo-representable \( \CQ \)-modules.
Let \( (\LQT \rightarrowtriangle \LQT') \hookrightarrow \CPM({-}, \#\LQT \multimap \#\LQT') \) and \( (\LQT \otensor \LQT') \hookrightarrow \CPM({-}, \#\LQT \otimes \#\LQT') \) be the hereditary \(\CQ\)-submodules given by
    \begin{align*}
        (\LQT \rightarrowtriangle \LQT')_n &:= \{\, \varphi \in \CPM(n, \#\LQT \multimap \#\LQT') \mid
        \forall m. \forall x \in \mathcal{L}_m. \mathbf{ev} \circ (\varphi \otimes x) \in \mathcal{L}'_{n \otimes m} \,\} \\
        (\LQT \otensor \LQT')_n &:= \{\, (x \otimes x') \circ \varphi \in \CPM(n, \#\LQT \otimes \#\LQT') \mid
        x \in \mathcal{L}_m, x' \in \mathcal{L}'_{m'}, \varphi \in \CQ(n, m \otensor m') \,\}.
    \end{align*}
    Then
\( \LQT \rightarrowtriangle \LQT' \) and \( \LQT \otensor \LQT' \) are pseudo-representable.
Furthermore
        \( (\LQT \rightarrowtriangle \LQT') \cong (\LQT \multimap \LQT') \) and \( (\LQT \otensor \LQT') \cong \LQT \ptensor \LQT' \).
\end{claim*}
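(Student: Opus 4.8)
My plan is to treat the two constructions in the order \( \otensor \) then \( \rightarrowtriangle \), and to deduce \( \LQT \rightarrowtriangle \LQT' \cong \LQT \multimap \LQT' \) from \( \LQT \otensor \LQT' \cong \LQT \ptensor \LQT' \) together with \cref{thm:cpm-representation}. Two preparatory facts carry most of the routine weight. First, a \emph{ball lemma}: a pseudo-representable \( \LQT \) contains a pseudo-universal element \( r\,\ident_{\#\LQT} \) and is closed under the \( \CQ \)-action, so \( r\,\psi = (r\,\ident_{\#\LQT}) \circ \psi \in \LQT_n \) for every \( \psi \in \CQ(n,\#\LQT) \); with boundedness this pins \( \LQT_n \) between the operator-norm balls of radii \( r \) and \( B \). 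Second, \( \yoneda(n) \hookrightarrow \CPM({-},n) \) is itself pseudo-representable with \( \#\yoneda(n) = n \) (downward-closedness because the operator norm is monotone and \( \CQ(k,n) \) is exactly the unit ball of \( \CPM(k,n) \)), and \( \CPM({-}, \#\LQT \otimes \#\LQT') \cong \CPM({-},\#\LQT) \ptensor \CPM({-},\#\LQT') \) by \cref{lem:tensor-preserves-completion} and strong monoidality of \( \yoneda \). With these in hand I would first check that the two displayed formulas really do define hereditary submodules: closure under the \( \CQ \)-action is a one-line rewriting in each case, and downward-closedness of \( \LQT \rightarrowtriangle \LQT' \) is immediate since \( \psi \le \varphi \) forces \( \mathbf{ev} \circ (\psi \otimes x) \le \mathbf{ev} \circ (\varphi \otimes x) \in \LQT'_{n \otimes m} \) and \( \LQT' \) is downward-closed. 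Downward-closedness of \( \LQT \otensor \LQT' \) is the first genuinely nontrivial point; I would prove it by a Kraus-operator computation, using that \( z \le (x \otimes x') \circ \varphi \) forces the Kraus operators of \( z \) to factor as a contraction applied to those of \( (x \otimes x') \circ \varphi \), so that the contraction is absorbed into a superoperator \( \chi \in \CQ(n, m \otimes m') \) (which lies in \( \CQ \) precisely because \( \chi \le \varphi \)), leaving \( y, y' \) produced from \( x, x' \) by downward closure in \( \LQT, \LQT' \).

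\textbf{Pseudo-representability.} For \( \LQT \otensor \LQT' \) boundedness follows from \( \opnorm{(x \otimes x') \circ \varphi} \le \opnorm{x}\,\opnorm{x'} \le B_{\LQT} B_{\LQT'} \) (sub- and multiplicativity of the operator norm), and \( (r_{\LQT} r_{\LQT'})\,\ident_{\#\LQT \otimes \#\LQT'} = ((r_{\LQT}\ident_{\#\LQT}) \otimes (r_{\LQT'}\ident_{\#\LQT'})) \circ \ident \) is a pseudo-universal element. For \( \LQT \rightarrowtriangle \LQT' \), plugging \( x = r_{\LQT}\ident_{\#\LQT} \) into the defining condition bounds the uncurrying \( \mathbf{ev} \circ (\varphi \otimes \ident_{\#\LQT}) \) of any \( \varphi \in (\LQT \rightarrowtriangle \LQT')_n \) in operator norm by \( B_{\LQT'}/r_{\LQT} \), and a standard comparison of norms on completely-positive-map spaces — whose constant depends only on the fixed object \( \#\LQT \), not on \( n \) — then bounds \( \opnorm{\varphi} \) uniformly in \( n \); conversely the ball lemma for \( \LQT' \) shows a sufficiently small multiple of \( \ident_{\#\LQT \multimap \#\LQT'} \) satisfies the defining condition, giving a pseudo-universal element.

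\textbf{The tensor isomorphism (main obstacle).} The bilinear map \( (x,x') \mapsto x \otimes x' \) from \( \LQT \times \LQT' \) to \( \LQT \otensor \LQT' \) induces, by the universal property of \( \ptensor \), a comparison \( c \colon \LQT \ptensor \LQT' \to \LQT \otensor \LQT' \) with \( c((x \ptensor x') \cdot \varphi) = (x \otimes x') \circ \varphi \), and the explicit description of \( (\LQT \otensor \LQT')_n \) makes \( c \) componentwise surjective, hence epi. The remaining — and hardest — step is that \( c \) is a monomorphism: equivalently, that the relations holding among pure tensors inside \( \CPM({-},\#\LQT \otimes \#\LQT') \) are exactly those forced by bilinearity, \( \CQ \)-action compatibility and downward closure (the Day coend relations match the kernel of \( (\varphi,x,x') \mapsto (x \otimes x') \circ \varphi \)). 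I expect to attack this by a normal-form argument: after enlarging the intermediate objects by retractions as in the proof of \cref{lem:tensor-preserves-finite-completeness} and using \cref{lem:tensor-peak} to reduce to single pure tensors, I would invoke \cref{thm:cpm-representation} to reconstruct the data \( (x,x',\varphi) \) from \( (x \otimes x') \circ \varphi \) up to the allowed relations; combined with surjectivity this gives \( c \) an isomorphism, so \( \LQT \otensor \LQT' \cong \LQT \ptensor \LQT' \) and, since the isomorphism respects the embeddings into \( \CPM({-},\#\LQT\otimes\#\LQT') \), it also confirms once more that \( \LQT \otensor \LQT' \) is downward-closed.

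\textbf{The linear-function-space isomorphism.} By the Yoneda lemma it suffices to give a natural componentwise isomorphism. Using \( (\LQT \multimap \LQT')_n \cong \pshCQ(\yoneda(n) \ptensor \LQT, \LQT') \), the tensor isomorphism above applied to \( \yoneda(n) \) and \( \LQT \), and \cref{thm:cpm-representation}, this hom-set becomes \( \{\, \psi \in \CPM(n \otimes \#\LQT, \#\LQT') \mid \psi \circ z \in \LQT' \text{ for all } z \in \yoneda(n) \otensor \LQT \,\} \). Since every element of \( \yoneda(n) \otensor \LQT \) is a \( \CQ \)-action on some \( \ident_n \otimes x \) (up to downward closure) and \( \psi \circ ({-}) \) preserves sums, order and the \( \CQ \)-action, the condition reduces to \( \psi \circ (\ident_n \otimes x) \in \LQT'_{n \otimes m} \) for all \( x \in \LQT_m \); writing \( \psi = \mathbf{ev} \circ (\varphi \otimes \ident_{\#\LQT}) \) for the curried \( \varphi \in \CPM(n, \#\LQT \multimap \#\LQT') \) and using \( (\varphi \otimes \ident_{\#\LQT}) \circ (\ident_n \otimes x) = \varphi \otimes x \) turns this into exactly \( \mathbf{ev} \circ (\varphi \otimes x) \in \LQT'_{n \otimes m} \), the defining condition of \( (\LQT \rightarrowtriangle \LQT')_n \). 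Naturality in \( n \) and compatibility with the \( \CQ \)-action are routine, completing \( \LQT \rightarrowtriangle \LQT' \cong \LQT \multimap \LQT' \). In short, everything outside the downward-closedness of \( \LQT \otensor \LQT' \) and the injectivity of the comparison \( c \) is bookkeeping with the ball lemma and norm comparisons; those two are where I expect the real work to lie.
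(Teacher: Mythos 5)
Your overall architecture is reasonable and much of it tracks the paper: the pseudo-representability of \( \LQT \otensor \LQT' \) is exactly the paper's one-liner, and your boundedness argument for \( \LQT \rightarrowtriangle \LQT' \) (uncurrying is a linear isomorphism of finite-dimensional spaces, so a norm-equivalence constant depending only on \( \#\LQT, \#\LQT' \) transfers the bound \( B_{\LQT'}/r_{\LQT} \) back to \( \opnorm{\varphi} \)) is the same idea the paper implements via compactness of the unit sphere of \( \CPM(1, \ell \multimap \ell') \), plus a reduction from general \( n \) to \( n = 1 \) by choosing \( y \in \CQ(1,n) \) attaining \( \opnorm{g} = \opnorm{g \circ y} \), which you gloss over. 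Your route to \( \LQT \rightarrowtriangle \LQT' \cong \LQT \multimap \LQT' \) through \( \pshCQ(\yoneda(n) \ptensor \LQT, \LQT') \) and \cref{thm:cpm-representation} is a legitimate variant of the paper's direct construction of mutually inverse maps on \( \Bilin(\yoneda(n), \LQT; \LQT') \), though it makes the \( \multimap \) case depend on the tensor case, which the paper's does not.

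The genuine gap is the step you yourself flag as the main obstacle, and the plan you sketch for it would not close it. First, ``\( c \) is epi and mono, hence iso'' is not valid in \( \pshCQ \): a componentwise bijective morphism of \( \CQ \)-modules need not be an isomorphism, because the inverse function need not preserve the \emph{partial} sums — you must additionally show that every sum relation \( \sum_i (x_i \otimes x'_i) \circ \varphi_i = (x \otimes x') \circ \varphi \) holding in \( \CPM({-}, \ell \otimes \ell') \) already holds in \( \LQT \ptensor \LQT' \), i.e.\ that \( c \) is sum-reflecting. Second, \cref{thm:cpm-representation} classifies \emph{morphisms} between pseudo-representable modules as completely positive maps; it gives no means of reconstructing the factorisation data \( (x, x', \varphi) \) from the composite \( (x \otimes x') \circ \varphi \), so ``the Day coend relations match the kernel'' remains exactly the unproved claim. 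The paper avoids both issues by verifying the universal property directly: given \( \beta \in \Bilin(\LQT, \LQT'; M) \), it defines \( \alpha_m((x \ptensor x') \circ h) \defe \beta(x,x') \cdot h \) and proves well-definedness (which subsumes injectivity and sum-reflection) by a rescaling trick — divide a relation by the bounds \( B, B' \) so that it becomes a relation in \( \CQ(m, \ell \otimes \ell') \), apply \( \beta_{\ell,\ell'}(r\,\ident_\ell, r'\,\ident_{\ell'}) \cdot ({-}) \) to both sides, and cancel the resulting scalar \( rr'/BB' > 0 \) using \cref{lem:real-action-invertible}. Some such use of the pseudo-universal elements and the uniform bound is essential here; your normal-form plan as stated does not supply it. (Your Kraus-operator argument for downward-closedness of \( \otensor \) is also dubious as stated — domination of Choi matrices does not obviously yield a minorant that again factors as \( (y \otimes y') \circ \chi \) — but the paper does not address this point either, so it is not where your proposal diverges from what is actually needed.)
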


The proof is long.
We split the whole arguments into lemmas.

\begin{lemma}\label{lem:appx:lqt-tensor1}
    Let \( \LQT \) and \( \LQT' \) be pseudo-representable \( \CQ \)-modules.
    Then \( \LQT \otensor \LQT' \) is pseudo-representable.
\end{lemma}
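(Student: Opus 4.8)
The plan is to check, for the family defined by the displayed formula, the three conditions in \cref{def:pseudo-representable}, with underlying object $\#\LQT\otimes\#\LQT'$; write $\ell=\#\LQT$, $\ell'=\#\LQT'$, let $B,B'$ be bounds for $\LQT,\LQT'$ and $r,r'$ their pseudo-universal scalars.

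First dispatch the bookkeeping and the two easy clauses. The formula-family is closed under the $\CQ$-action because $((x\otimes x')\circ\varphi)\circ\theta=(x\otimes x')\circ(\varphi\circ\theta)$ and $\CQ$ is closed under composition. It is \emph{bounded} by $BB'$: for $z=(x\otimes x')\circ\varphi$ with $x\in\LQT_m$, $x'\in\LQT'_{m'}$, $\varphi\in\CQ(n,m\otimes m')$, submultiplicativity of the operator norm and $\opnorm{\varphi}\le 1$ give $\opnorm{z}\le\opnorm{x}\,\opnorm{x'}\le BB'$. It has a \emph{pseudo-universal element}: $(r\,\ident_\ell\otimes r'\,\ident_{\ell'})\circ\ident_{\ell\otimes\ell'}=rr'\,\ident_{\ell\otimes\ell'}$ lies in $(\LQT\otensor\LQT')_{\ell\otimes\ell'}$ with $rr'>0$. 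It remains only to see that the formula defines a \emph{hereditary} $\CQ$-submodule of $\CPM(-,\ell\otimes\ell')$. The key point is downward-closedness: once each $(\LQT\otensor\LQT')_n$ is known to be downward-closed in $\CPM(n,\ell\otimes\ell')$, the $\Sigma$-monoid structure inherited from $\CPM(-,\ell\otimes\ell')$ is automatically a well-defined $\Sigma$-monoid (associativity is exactly where downward-closedness is used) and the submodule is automatically sum-reflecting. So the whole content of the lemma is the claim: \emph{if $w\le(x\otimes x')\circ\varphi$ in $\CPM(n,\ell\otimes\ell')$ with $x\in\LQT_m$, $x'\in\LQT'_{m'}$, $\varphi\in\CQ(n,m\otimes m')$, then $w=(y\otimes y')\circ\psi$ for some $y\in\LQT_{m_0}$, $y'\in\LQT'_{m_0'}$, $\psi\in\CQ(n,m_0\otimes m_0')$.}

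For this I would argue with Kraus/Choi data. First dilate $\varphi$: a Stinespring dilation factors $\varphi\in\CQ(n,m\otimes m')$ as a single-Kraus superoperator $\varphi_0\in\CQ(n,(md)\otimes m')$ followed by the partial trace over the ancilla; since $\trace_d$ is itself a superoperator, $x\circ\trace_d=x\cdot\trace_d$ lies in $\LQT_{md}$ (as $\LQT$ is a $\CQ$-module), and $(x\otimes x')\circ\varphi=((x\circ\trace_d)\otimes x')\circ\varphi_0$, so we may assume $\varphi$ has a single Kraus operator $A$. Then the Kraus operators of $(x\otimes x')\circ\varphi$ are $(B_\beta\otimes C_\gamma)A$ for Kraus systems $\{B_\beta\}$, $\{C_\gamma\}$ of $x,x'$, and $w\le(x\otimes x')\circ\varphi$ says precisely that the Kraus operators of $w$ arise from these by a contraction on the Kraus-index space (equivalently $C_w\le C_{(x\otimes x')\circ\varphi}$ as positive matrices). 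The remaining step is to repackage this dominated channel into the form $(y\otimes y')\circ\psi$ with factors drawn from $\LQT$ and $\LQT'$: one carries the index correlating $x$'s Kraus data to $x'$'s Kraus data in an auxiliary register of $\psi$ (enlarging $m_0,m_0'$), and uses that direct sums of $\LQT$-elements with uniformly bounded operator norm again land in $\LQT$, together with the hereditariness of $\LQT,\LQT'$ and the operator-norm balls guaranteed by the pseudo-universal element (the $\CQ$-action shows each $\LQT_m$ contains every completely positive map of operator norm $\le r$). I expect this repackaging to be the main obstacle: the dominating contraction genuinely entangles the Kraus labels of the two factors and so does not split as a tensor on the nose, so one must first ``un-entangle'' it at the cost of enlarging the middle objects — morally the same phenomenon handled by the peak argument of \cref{lem:tensor-peak} for the Day tensor. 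With all three clauses verified, \cref{def:pseudo-representable} gives that $\LQT\otensor\LQT'$ is pseudo-representable with underlying object $\ell\otimes\ell'$; the identification $\LQT\otensor\LQT'\cong\LQT\ptensor\LQT'$ is then a separate step, via the universal property of the Day tensor against the bilinear map $(x,x')\mapsto x\otimes x'$.
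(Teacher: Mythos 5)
The part of your argument that you actually complete---the bound \( BB' \) and the pseudo-universal element \( rr'\,\ident_{\ell\otimes\ell'} = (r\,\ident_\ell\otimes r'\,\ident_{\ell'})\circ\ident_{\ell\otimes\ell'} \)---is, essentially verbatim, the paper's \emph{entire} proof of this lemma. Where you diverge is in declaring that ``the whole content of the lemma'' is downward-closedness of \( \{(x\otimes x')\circ\varphi\} \) in \( \CPM(n,\ell\otimes\ell') \). You are right that \cref{def:pseudo-representable} formally asks for a hereditary submodule, and right that for this formula downward-closedness is the one non-automatic clause (sum-reflection is automatic for the induced partial-sum structure). But the paper never proves it: the statement of \cref{lem:linear:lqt-function-and-tensor} simply \emph{introduces} \( \LQT\otensor\LQT' \) as ``the hereditary submodule given by'' the displayed formula, and neither this lemma's proof nor that of \cref{lem:appx:lqt-tensor2} returns to the point. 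So you have taken on a burden the paper's own proof does not discharge.

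Judged as a proof of that extra claim, your sketch is incomplete at exactly the step you flag. The Stinespring reduction to a single-Kraus \( \varphi \) is fine (precomposing \( x \) with the partial trace stays in \( \LQT \) by the \( \CQ \)-action), but the ``repackaging'' of the contraction on the joint Kraus-index space---which genuinely entangles the labels of \( x \) and \( x' \)---into a product \( (y\otimes y')\circ\psi \) with \( y\in\LQT_{m_0} \), \( y'\in\LQT'_{m_0'} \) is the whole difficulty, and you leave it as a plan. Moreover, the ingredient you propose to use there, that ``direct sums of \( \LQT \)-elements with uniformly bounded operator norm again land in \( \LQT \)'', is not among the axioms of a pseudo-representable module (all you have is \( r\,\CQ({-},\ell)\subseteq\LQT\subseteq B\,\CQ({-},\ell) \), closure under the \( \CQ \)-action, downward-closedness and sum-reflection), so it would itself need justification. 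That the repackaging cannot keep \( x,x' \) fixed is already visible in a small example: for \( x \) the dephasing channel on a qubit and \( x'=\ident_1 \), the map \( \frac{1}{2}\ident_2 \) satisfies \( \frac{1}{2}\ident_2\le x\otimes x' \) yet is not of the form \( (x\otimes x')\circ\varphi' \) for any \( \varphi' \), since the latter only outputs diagonal matrices; any successful argument must therefore replace \( x,x' \) using the slack provided by \( r\,\ident_\ell\in\LQT_\ell \). In short: on the two conditions the paper checks, you match it exactly; on the hereditariness condition, your proposal correctly identifies a point the paper glosses over but does not actually close it.
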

\begin{proof}
    Let \( \ell \) and \( \ell' \) be the underlying objects of \( \LQT \) and \( \LQT' \).
    If \( B \) and \( B' \) are upper bounds for \( \LQT \) and \( \LQT' \), respectively, then \( B B' \) is an upper bound for \( \LQT \otensor \LQT' \).
    If \( r\,\ident_\ell \in \LQT_\ell \) and \( r'\,\ident_{\ell'} \in \LQT_{\ell'} \) for some \( r,r' > 0 \), then \( rr'\,(\ident_{\ell} \otimes \ident_{\ell'}) \in (\LQT \otensor \LQT')_{\ell \otimes \ell'} \).
\end{proof}

\begin{lemma}\label{lem:appx:lqt-tensor2}
    \( \LQT \otensor \LQT' \) is the
    tensor product \( \LQT \ptensor \LQT' \) in \( \pshCQ \).
\end{lemma}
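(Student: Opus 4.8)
The plan is to check directly that $\LQT\otensor\LQT'$, equipped with the evident bilinear map, represents the functor $\Bilin(\LQT,\LQT';{-})\colon\pshCQ\longrightarrow\SMon$; since the Day tensor $\LQT\ptensor\LQT'$ is \emph{defined} as a representing object of this same functor, this forces $\LQT\otensor\LQT'\cong\LQT\ptensor\LQT'$ by Yoneda, with the isomorphism carrying the universal bilinear map to the universal bilinear map. By \cref{lem:appx:lqt-tensor1} the set $\LQT\otensor\LQT'$ is a hereditary submodule of $\CPM({-},\#\LQT\otimes\#\LQT')$, hence a genuine $\CQ$-module; write $\ell=\#\LQT$ and $\ell'=\#\LQT'$. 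First I would produce $\eta\in\Bilin(\LQT,\LQT';\LQT\otensor\LQT')$ by $\eta_{m,m'}(x,x')\defe x\otimes x'=(x\otimes x')\circ\ident_{m\otimes m'}$ (tensor of morphisms in $\CPM$), which lands in $(\LQT\otensor\LQT')_{m\otimes m'}$ by definition. Bilinearity $\eta(\sum_i x_i,\sum_j x'_j)\Kle\sum_{i,j}\eta(x_i,x'_j)$ follows from the $\SMon$-enrichment of $\otimes$ on $\CPM$ together with sum-reflection of the hereditary submodule $\LQT\otensor\LQT'\hookrightarrow\CPM({-},\ell\otimes\ell')$, and $\CQ$-equivariance $\eta(x\cdot\varphi,x'\cdot\psi)=\eta(x,x')\cdot(\varphi\otimes\psi)$ is just the interchange law $(x\circ\varphi)\otimes(x'\circ\psi)=(x\otimes x')\circ(\varphi\otimes\psi)$ in $\CPM$.

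Next, for $K\in\pshCQ$ and $f\in\Bilin(\LQT,\LQT';K)$, the mediating morphism is forced: since every $z\in(\LQT\otensor\LQT')_n$ has the form $z=(x\otimes x')\circ\varphi=\eta_{m,m'}(x,x')\cdot\varphi$, any $\CQ$-module morphism $g$ with $g\circ\eta=f$ must satisfy $g_n(z)=f_{m,m'}(x,x')\cdot\varphi$, which already gives uniqueness. For existence I would \emph{define} $\bar f_n(z)\defe f_{m,m'}(x,x')\cdot\varphi$; the crux of the whole argument is that this is independent of the chosen decomposition $z=(x\otimes x')\circ\varphi$. Here I would exploit the pseudo-universal elements: fix $r,r'>0$ with $r\,\ident_\ell\in\LQT_\ell$ and $r'\,\ident_{\ell'}\in\LQT'_{\ell'}$. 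For any $x\in\LQT_m$ one has $x=(r\,\ident_\ell)\cdot(\tfrac1r x)$, where $\cdot$ is the partial $\CPM$-action (writing $\tfrac1r x=\sum_{i=1}^N\tfrac1{rN}x$ with $\tfrac1{rN}x\in\CQ(m,\ell)$ for large $N$, and using downward-closedness and sum-reflection of $\LQT$ to see this sum is defined in $\LQT_m$ with value $x$). A short computation shows bilinear maps are equivariant for the partial $\CPM$-action as well (write $\Psi/N\in\CQ$, apply preservation of defined sums in each argument, then $\CQ$-equivariance, then reassemble), whence
\[ f_{m,m'}(x,x')=f_{\ell,\ell'}(r\,\ident_\ell,\,r'\,\ident_{\ell'})\cdot\bigl((\tfrac1r x)\otimes(\tfrac1{r'}x')\bigr), \]
and consequently $f_{m,m'}(x,x')\cdot\varphi=w\cdot\tilde z$, where $w\defe f_{\ell,\ell'}(r\,\ident_\ell,r'\,\ident_{\ell'})\in K_{\ell\otimes\ell'}$ is fixed and $\tilde z\defe\tfrac1{rr'}z\in\CPM(n,\ell\otimes\ell')$ is determined by $z$ alone, since $z=(rr'\,\ident_{\ell\otimes\ell'})\circ\tilde z=rr'\,\tilde z$. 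As $w$ depends on the choice of $r,r'$ only through the scalar $rr'$ (by bilinearity of $f$ and downward-closedness), the product $w\cdot\tilde z$ is independent of all choices; this yields $\bar f_n(z)=w\cdot(\tfrac1{rr'}z)$ unambiguously and exhibits $\bar f_n$ as ``scale by $\tfrac1{rr'}$, then act on the fixed element $w$'', from which being a $\Sigma$-monoid homomorphism and $\CQ$-equivariance of $\bar f=(\bar f_n)_n$ are routine. Thus $\bar f\in\pshCQ(\LQT\otensor\LQT',K)$ and $\bar f\circ\eta=f$.

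Finally, the assignment $g\mapsto g\circ\eta$ is then a bijection $\pshCQ(\LQT\otensor\LQT',K)\cong\Bilin(\LQT,\LQT';K)$, evidently natural in $K$, so $\LQT\otensor\LQT'$ represents $\Bilin(\LQT,\LQT';{-})$ and is therefore canonically isomorphic to the Day tensor $\LQT\ptensor\LQT'$, with $\eta$ corresponding to the universal bilinear map. I expect essentially all the work to be concentrated in the well-definedness step: extending equivariance of bilinear maps from the $\CQ$-action to the partial $\CPM$-action, and the reduction of an arbitrary decomposition $(x\otimes x')\circ\varphi$ to the canonical one based at the pseudo-universal elements (the care is in checking that each auxiliary sum is defined in $\LQT$, resp.\ $\LQT'$, using downward-closedness and sum-reflection, not merely in $\CPM$). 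Everything else — bilinearity of $\eta$, uniqueness of $\bar f$, the homomorphism and equivariance properties of $\bar f$, and naturality of the final bijection — is bookkeeping.
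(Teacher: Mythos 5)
Your proposal is correct and follows essentially the same route as the paper: both exhibit \( \LQT \otensor \LQT' \) as a representing object of \( \Bilin(\LQT,\LQT';{-}) \), define the mediating morphism on decompositions \( (x \otimes x')\circ\varphi \mapsto f(x,x')\cdot\varphi \), and establish well-definedness (and sum-preservation) by pulling everything back to the pseudo-universal elements \( r\,\ident_\ell, r'\,\ident_{\ell'} \) and cancelling the resulting nonzero scalar \( rr' \) (the paper additionally normalises by the bounds \( B,B' \) to stay in \( \CQ \), where you instead invoke the partial \( \CPM \)-action — a purely presentational difference).
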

\begin{proof}
    Let \( \ell \) and \( \ell' \) be the underlying objects of \( \LQT \) and \( \LQT' \), respectively.
    We show that \( \LQT \otensor \LQT' \) is a representing object of \( \Bilin(\LQT, \LQT'; {-}) \).

    Assume a morphism \( \alpha \colon \LQT \otensor \LQT' \longrightarrow M \) in \( \widehat{\CQ} \).
    For \( x \in \LQT_n \) and \( x' \in \LQT'_{n'} \), we define \( \beta_{n,n'}(x,x') \defe \alpha_{n \otimes n'}(x \ptensor x') \in M_{n \otimes n'} \).
    Then \( \beta = (\beta_{n,n'})_{n,n'} \in \Bilin(\LQT, \LQT'; M) \).
    It is easy to see that this mapping \( \alpha \mapsto \beta \) is natural in \( M \).

    Conversely, given \( (\beta_{n,m})_{n,m} \in \Bilin(\LQT, \LQT'; M) \),
    we define a \( \CQ \)-module morphism \( \alpha = (\alpha_m)_{m} \colon \LQT \otensor \LQT' \longrightarrow M \) in \( \widehat{\CQ} \) by \( \alpha_m((x \ptensor x') \circ h) \defe \beta_{n,n'}(x,x') \cdot h \) for \( x \in \LQT_n \), \( x' \in \LQT_{n'} \) and \( h \in \CQ(m, n \otimes n') \).
    We prove that \( \alpha \) is actually a \( \CQ \)-module morphism.
Suppose \( (x \ptensor x') \circ h = \sum_i (x_i \ptensor x'_i) \circ h_i \) holds in \( (\LQT \otensor \LQT')_m \), where \( x \in \LQT_n \), \( x' \in \LQT'_{n'} \), \( x_i \in \LQT_{n_i} \) and \( x'_i \in \LQT'_{n'_i} \).
Let \( B \) and \( B' \) be upper bounds of \( \LQT \) and \( \LQT' \), respectively.
    Then \( (x/B \ptensor x'/B') \circ h = \sum_i (x_i/B \ptensor x'_i/B') \circ h_i \) holds in \( \CQ(m, \ell \otimes \ell') \).
    If \( r\,\ident_\ell \in \LQT_\ell \) and \( r'\,\ident_{\ell'} \in \LQT'_{\ell'} \),
    \begin{align*}
        &\frac{rr'}{BB'} \alpha_m((x \ptensor x') \circ h)
        \\
        \quad=\quad&
        \frac{rr'}{BB'} \beta_{n,n'}(x, x') \cdot h
        \\
        \quad=\quad&
        \beta_{n,n'}\big((r\,\ident_{\ell}) \circ (x/B), (r'\,\ident_{\ell'}) \circ (x'/B')) \cdot h
        \\
        \quad=\quad&
        \beta_{\ell,\ell'}(r\,\ident_\ell, r'\,\ident_{\ell'}) \cdot ((x/B \otimes x'/B') \circ h))
        \\
        \quad=\quad&
        \beta_{\ell,\ell'}(r\,\ident_\ell, r'\,\ident_{\ell'}) \cdot (\sum_i (x_i/B \otimes x'_i/B') \circ h_i)
        \\
        \quad\Kle\quad&
        \sum_i \beta_{\ell,\ell'}(r\,\ident_\ell, r'\,\ident_{\ell'}) \cdot ((x_i/B \otimes x'_i/B') \circ h_i)
        \\
        \quad=\quad&
        \sum_i \beta_{n_i,n'_i}((r/B)\,x_i, (r'/B')\,x'_i) \cdot h_i
        \\
        \quad=\quad&
        \sum_i \alpha_m\big(((r/B) x_i \ptensor (r'/B')x'_i) \circ h_i\big)
        \\
        \quad=\quad&
        \frac{rr'}{BB'} \sum_i \alpha_m\big((x_i \ptensor x'_i) \circ h_i).
    \end{align*}
    Since \( rr'/BB' > 0 \), we have \( \alpha_m((x \ptensor x') \circ h) = \sum_i \alpha_m((x_i \ptensor x'_i) \circ h_i) \).
    Hence \( \alpha_m \) is well-defined and a \( \Sigma \)-monoid homomorphism for each \( m \).
    It is easy to see that \( \alpha = (\alpha_m)_m \) respects the \( \CQ \)-action.
    Hence \( \alpha \) is a \( \CQ \)-module homomorphism.

    So we have a bijection \( \Bilin(\LQT, \LQT'; M) \cong \widehat{\CQ}(\LQT \otensor \LQT', M) \) natural in \( M \).
    It is actually a \( \Sigma \)-monoid homomorphism.
\end{proof}

The analysis of the linear function space is even harder.
Let us start from a topological observation needed to the proof.

Let \( \SelfAdjoint(\Mat_n(\Complex)) \) and \( \SelfAdjoint_+(\Mat_{n}(\Complex)) \) be the sets of self-adjoint matrices and of self-adjoint and positive matrices in \( \Mat_{n}(\Complex) \), the set of all \( (n \times n) \)-matrices.
\begin{lemma}\label{lem:self-ajdoint-closedness}
    \( \SelfAdjoint_+(\Mat_n(\Complex)) \) is a closed subset of \( \Mat_n(\Complex) \) (in the topological sense).
\end{lemma}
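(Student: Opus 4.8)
The plan is to exhibit $\SelfAdjoint_+(\Mat_n(\Complex))$ as an intersection of closed subsets of $\Mat_n(\Complex) \cong \Real^{2n^2}$ and then invoke the elementary fact that an arbitrary intersection of closed sets is closed. First I would record that the conjugate-transpose map $x \mapsto x^*$ is $\Real$-linear, hence continuous, on $\Mat_n(\Complex)$; therefore $x \mapsto x^* - x$ is continuous, and the set $\SelfAdjoint(\Mat_n(\Complex)) = (x \mapsto x^*-x)^{-1}(\{0\})$ of self-adjoint matrices is closed, being the preimage of a point.

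Next, for each fixed vector $v \in \Complex^n$ the evaluation $g_v \colon \Mat_n(\Complex) \longrightarrow \Complex$, $g_v(x) \defe v^* x v$, is $\Complex$-linear in $x$ and hence continuous, so that
\[
    C_v \quad\defe\quad \{\, x \in \Mat_n(\Complex) \mid \mathrm{Re}(v^* x v) \ge 0 \,\} \quad=\quad g_v^{-1}\big(\{\, z \in \Complex \mid \mathrm{Re}(z) \ge 0 \,\}\big)
\]
is closed. By the definition of positivity recalled in \cref{sec:defining-completely-positive-map}, a self-adjoint $x$ lies in $\SelfAdjoint_+(\Mat_n(\Complex))$ exactly when $v^* x v \in \Real_{\ge 0}$ for every $v \in \Complex^n$; and for self-adjoint $x$ the scalar $v^* x v$ is real, so this is equivalent to $\mathrm{Re}(v^* x v) \ge 0$ for all $v$. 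Hence
\[
    \SelfAdjoint_+(\Mat_n(\Complex)) \quad=\quad \SelfAdjoint(\Mat_n(\Complex)) \;\cap\; \bigcap_{v \in \Complex^n} C_v ,
\]
an intersection of closed sets, and therefore closed.

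There is no real obstacle here; the only points that need a moment's care are cosmetic. One is that $C_v$ is phrased with $\mathrm{Re}(\cdot)$ so that it is a genuine closed subset of all of $\Mat_n(\Complex)$ before one restricts attention to self-adjoint matrices (on which $v^* x v$ is automatically real). The other is that the intersection $\bigcap_{v} C_v$ ranges over the uncountable set $\Complex^n$; this is harmless since closedness is preserved under arbitrary intersections, but if a countable intersection is preferred one may replace $\Complex^n$ by any countable dense subset, using continuity of $v \mapsto v^* x v$ for fixed $x$.
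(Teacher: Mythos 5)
Your proof is correct and follows essentially the same route as the paper: both write \( \SelfAdjoint_+(\Mat_n(\Complex)) \) as an intersection of closed sets, namely the self-adjoint matrices (cut out by continuous linear conditions) and, for each \( v \in \Complex^n \), the set where \( v^* x v \ge 0 \). Your use of \( \mathrm{Re}(v^* x v) \ge 0 \) is a small refinement that makes the positivity sets honestly closed in all of \( \Mat_n(\Complex) \) before intersecting with the self-adjoint locus, a point the paper's proof glosses over.
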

\begin{proof}
    Let us regard \( \Mat_n(\Complex) \) as a \( 2n^2 \)-dimensional \( \Real \)-vector space.
    Since it is finite dimensional, all norms on \( \Mat_n(\Complex) \) induce the same topology.
    
    Let \( A = (s_{i,j} + t_{i,j} \sqrt{-1})_{1 \le i,j \le n} \) be a matrix.
    It is self-adjoint if
    \begin{itemize}
        \item \( t_{i,i} = 0 \) for every \( i \),
        \item \( s_{i,j} = s_{j,i} \) for every \( i \neq j \), and
        \item \( t_{i,j} = - t_{j,i} \) for every \( i \neq j \).
    \end{itemize}
    Each of the above conditions defines a closed subset, as its intersection,
    \( \SelfAdjoint(\Mat_n(\Complex)) \) is a closed subset of \( \Mat_n(\Complex) \).
    The positivity means that \( v^* A v \ge 0 \) for every \( v \in \Complex^n \).
    Since \( \{ A \in \Mat_n(\Complex) \mid v^* A v \ge 0 \} \) is a closed subset for each \( v \in \Complex^n \), as the intersection of closed subsets, \( \SelfAdjoint_+(\Mat_n(\Complex)) \) is closed.
\end{proof}

\begin{lemma}\label{lem:cpm-trace-norm-ball-compact}
    \( \{ f \in \CPM(1, \ell \multimap \ell') \mid \opnorm{f} = 1 \} \) is compact.
\end{lemma}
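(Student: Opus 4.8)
The plan is to reduce the statement to a Heine--Borel argument in a finite-dimensional space of matrices. First I would unfold the object $\ell \multimap \ell'$: since $\CPM$ is compact closed with $n^* = n$, the internal hom $\ell \multimap \ell'$ is the object $\ell^* \otimes \ell' = \ell\ell'$, so $\CPM(1, \ell\multimap\ell') = \CPM(1,\ell\ell')$; write $m \defe \ell\ell'$. Next I would invoke the standard identification $\CPM(1,m) \cong \SelfAdjoint_+(\Mat_m(\Complex))$: a $\Complex$-linear map $\varphi\colon \Mat_1(\Complex)\to\Mat_m(\Complex)$ is uniquely determined by the matrix $\varphi(1)$, and complete positivity of $\varphi$ is equivalent to $\varphi(1) \ge 0$. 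The bijection $\varphi \mapsto \varphi(1)$ is $\Real$-linear, hence a homeomorphism onto its image $\SelfAdjoint_+(\Mat_m(\Complex)) \subseteq \Mat_m(\Complex)$ for the finite-dimensional topology carried by the hom-sets throughout the paper.

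Then I would rewrite the operator norm under this identification. By definition $\opnorm{\varphi} = \sup\{\trace\varphi(x) \mid x \ge 0,\ \trace x = 1\}$, and the only positive self-adjoint matrix in $\Mat_1(\Complex) = \Complex$ of trace $1$ is $1$ itself, so $\opnorm{\varphi} = \trace(\varphi(1))$. Consequently the set in question is homeomorphic to $S \defe \{\, A \in \SelfAdjoint_+(\Mat_m(\Complex)) \mid \trace A = 1 \,\}$.

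It then remains to check that $S$ is compact, which I would do by Heine--Borel in the finite-dimensional real vector space $\Mat_m(\Complex) \cong \Real^{2m^2}$. The set $S$ is closed, being the intersection of the closed set $\SelfAdjoint_+(\Mat_m(\Complex))$ (\cref{lem:self-ajdoint-closedness}) with $\trace^{-1}(\{1\})$, the preimage of a point under the continuous linear map $\trace$; and it is bounded, since any $A \ge 0$ with $\trace A = 1$ has all its eigenvalues in $[0,1]$, so its operator norm is at most $1$, and all norms on a finite-dimensional space are equivalent. Transporting compactness back along the homeomorphism of the previous paragraph then completes the argument.

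The only point requiring care --- more a matter of bookkeeping than a genuine obstacle --- is confirming that the identification $\CPM(1,\ell\multimap\ell') \cong \SelfAdjoint_+(\Mat_m(\Complex))$ is a homeomorphism for the topology implicitly in use (the one inherited from the ambient finite-dimensional space of linear maps, with respect to which the sums in $\CPM(n,m)$ are defined), and that the operator-norm formula genuinely collapses to $\trace(\varphi(1))$ because $\Mat_1(\Complex)$ admits an essentially unique normalized state. Everything else is routine finite-dimensional point-set topology.
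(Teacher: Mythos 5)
Your proof is correct and follows essentially the same route as the paper's: identify $\CPM(1,\ell\multimap\ell')$ with the positive self-adjoint matrices in $\Mat_{\ell\ell'}(\Complex)$, observe that the operator norm becomes the trace, and conclude by Heine--Borel using the closedness of the positive cone (\cref{lem:self-ajdoint-closedness}), continuity of the trace, and equivalence of norms in finite dimension. The only cosmetic difference is that you establish boundedness via the eigenvalue bound, whereas the paper invokes the trace norm and norm equivalence directly; both are routine.
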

\begin{proof}
    We use the following facts.
    \begin{itemize}
        \item The operator norm of \( f \in \CPM(1, n) \) is the trace of the corresponding element \( f \in \SelfAdjoint_+(\Mat_n(\Complex)) \).
        \item The trace \( \trace x \) on \( \SelfAdjoint_+(\Mat_n(\Complex)) \) is extended to a norm on \( \Mat_n(\Complex) \).
          This norm is called the \emph{trace norm}, defined as follows.
          For every \( b \in \SelfAdjoint_+(\Mat_n(\Complex)) \), there exists a unique \( c \in \SelfAdjoint_+(\Mat_n(\Complex)) \) such that \( cc = b \).
          This \( c \) is written as \( \sqrt{b} \).
          For \( a \in \Mat_n(\Complex) \), the trace norm of \( a \) is defined as \( \trace \sqrt(a^* a) \).
        \item On a finite dimensional vector space, any norms are equivalent to each other.  That means, if \( \opnorm{-}_1 \) and \( \opnorm{-}_2 \) are norms on the space, then for some \( 0 < C < C' \), we have \( \forall x. C \opnorm{x}_1 \le \opnorm{x}_2 \le C' \opnorm{x}_1 \).
    \end{itemize}

    Therefore \( \{ f \in \CPM(1, \ell \multimap \ell') \mid \opnorm{f} = 1 \} \) is a bounded subset with respect to the standard Euclidean norm (regarded as \( (\ell\ell')^2 \)-dimensional \( \Complex \)-vector space, for example).
    By the continuity of the trace norm, its unit sphere is closed.
    Since \( \CPM(1, \ell \multimap \ell') \) is closed by \cref{lem:self-ajdoint-closedness}, the set in the statement is a bounded closed subset of \( \Complex^{(\ell\ell')^2} \).
    Hence it is compact.
\end{proof}

\begin{lemma}\label{lem:appx:lqt-function1}
    Let \( \mathcal{L} \) and \( \mathcal{L}' \) be pseudo-representable \( \CQ \)-modules.
    Then \( \mathcal{L} \rightarrowtriangle \mathcal{L}' \) is pseudo-representable.
\end{lemma}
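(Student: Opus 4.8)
The plan is to verify the three requirements of \cref{def:pseudo-representable} for $\LQT \rightarrowtriangle \LQT'$ one by one. Throughout I would write $\ell \defe \#\LQT$, $\ell' \defe \#\LQT'$ and $\ell_0 \defe \ell \multimap \ell'$, so that $\LQT \rightarrowtriangle \LQT'$ is presented as a family of subsets $(\LQT\rightarrowtriangle\LQT')_n \subseteq \CPM(n,\ell_0)$. First I would check that this family is a hereditary $\CQ$-submodule: the defining condition ``$\mathbf{ev}\circ(\varphi\otimes x)\in\LQT'_{n\otimes m}$ for all $m$ and all $x\in\LQT_m$'' is preserved by precomposition with a $\CQ$-morphism $\psi\in\CQ(k,n)$, since $\mathbf{ev}\circ((\varphi\circ\psi)\otimes x) = (\mathbf{ev}\circ(\varphi\otimes x))\circ(\psi\otimes\ident_m)$ and $\LQT'$ is a $\CQ$-submodule, and it is preserved by passing to a smaller $\varphi'\le\varphi$, since then $\mathbf{ev}\circ(\varphi'\otimes x)\le\mathbf{ev}\circ(\varphi\otimes x)\in\LQT'_{n\otimes m}$ and $\LQT'$ is downward-closed. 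Equipping each component with the sum inherited from $\CPM(n,\ell_0)$ (defined iff the ambient sum is and its value lies in the subset), the inclusion is then a hereditary monomorphism, with sum-reflection automatic.

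Next I would prove the \textbf{Bounded} clause, which is the crux. Fix $r>0$ with $r\,\ident_\ell\in\LQT_\ell$ (pseudo-universal element of $\LQT$) and a bound $B'$ for $\LQT'$. For $\varphi\in(\LQT\rightarrowtriangle\LQT')_n$, instantiating the defining condition at $m=\ell$ and $x=r\,\ident_\ell$ gives $\mathbf{ev}\circ(\varphi\otimes(r\,\ident_\ell)) = r\cdot(\mathbf{ev}\circ(\varphi\otimes\ident_\ell))\in\LQT'_{n\otimes\ell}$, hence $\opnorm{\mathbf{ev}\circ(\varphi\otimes\ident_\ell)}\le B'/r$. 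But $\mathbf{ev}\circ(\varphi\otimes\ident_\ell)$ is exactly the uncurrying $\varphi^\flat\in\CPM(n\otimes\ell,\ell')$ of $\varphi$ under the monoidal closed adjunction of $\CPM$, and — the key observation — its inverse $\varphi^\flat\mapsto\varphi$ is obtained by composing and tensoring $\varphi^\flat$ with fixed structural morphisms of the compact closed category $\CPM$ (a unit $\eta_\ell$, symmetries and identities) that do \emph{not} depend on $n$. Since the operator norm is stable under tensoring with identities and submultiplicative under both $\otimes$ and $\circ$, this yields $\opnorm{\varphi}\le K\,\opnorm{\varphi^\flat}\le K B'/r$ with $K$ depending only on $\ell$; so $B \defe K B'/r$ is the required bound, uniform in $n$.

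Finally I would prove the \textbf{Pseudo-universal element} clause. Pick $r'>0$ with $r'\,\ident_{\ell'}\in\LQT'_{\ell'}$ and a bound $B$ for $\LQT$, and set $s \defe r'/(\opnorm{\mathbf{ev}}\cdot B)$; the claim is $s\,\ident_{\ell_0}\in(\LQT\rightarrowtriangle\LQT')_{\ell_0}$. For any $m$ and $x\in\LQT_m$ we have $\mathbf{ev}\circ((s\,\ident_{\ell_0})\otimes x) = s\,g_x$ where $g_x \defe \mathbf{ev}\circ(\ident_{\ell_0}\otimes x)\in\CPM(\ell_0\otimes m,\ell')$, and $\opnorm{g_x}\le\opnorm{\mathbf{ev}}\cdot\opnorm{x}\le\opnorm{\mathbf{ev}}\cdot B$, so $\opnorm{s\,g_x}\le r'$. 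Hence $s\,g_x/r'\in\CQ(\ell_0\otimes m,\ell')$, and since $\LQT'$ is closed under the $\CQ$-action, $s\,g_x = (r'\,\ident_{\ell'})\cdot(s\,g_x/r')\in\LQT'_{\ell_0\otimes m}$, which is what the membership condition demands.

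The hard part will be precisely the uniformity in $n$ in the \textbf{Bounded} clause: the condition defining $\LQT\rightarrowtriangle\LQT'$ only directly controls the \emph{uncurried} map $\varphi^\flat$, whose domain $n\otimes\ell$ grows with $n$, and the bound must be transported back to $\varphi$ with a constant independent of $n$ — this is exactly what compact closedness of $\CPM$ buys us, (un)currying being composition with fixed structural maps of finite operator norm. Everything else is routine bookkeeping with the operator norm and the $\CQ$-action on pseudo-representable modules.
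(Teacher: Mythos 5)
Your proof is correct, and the interesting part — the \textbf{Bounded} clause — takes a genuinely different route from the paper. The paper first reduces to the component $n=1$ (using that $\CQ(1,n)$ is compact and $\opnorm{g}=\sup_{y\in\CQ(1,n)}\opnorm{g\circ y}$), and then obtains the reverse inequality $\opnorm{\Lambda^{-1}(f)}\ge m\,\opnorm{f}$ by a compactness argument: the continuous map $f\mapsto\opnorm{\Lambda^{-1}(f)}$ attains a minimum $m$ on the compact unit sphere of $\CPM(1,\ell\multimap\ell')$, and $m\neq 0$ because $\Lambda^{-1}$ is injective. You instead observe that currying is post/pre-composition and tensoring with fixed structural morphisms of the compact closed category $\CPM$, so that $\opnorm{\varphi}\le K\opnorm{\varphi^\flat}$ with $K$ (essentially $\opnorm{\eta_\ell}$) manifestly uniform in $n$. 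The two arguments establish the same inequality and both ultimately rest on finite-dimensionality, but yours is constructive, produces an explicit constant, and dispenses with the preliminary reduction to $n=1$; the paper's is lighter on norm bookkeeping but non-constructive. For the \textbf{pseudo-universal element}, your argument follows the paper's idea but is in fact more careful: the paper's sketch only arranges $\opnorm{\mathbf{ev}\circ((r\,\ident)\otimes x)}\le 1$, i.e.\ membership in $\CQ(\cdot,\ell')$, and leaves implicit the final step into $\LQT'$; you supply it explicitly by scaling with the pseudo-universal element $r'\,\ident_{\ell'}$ of $\LQT'$ and invoking closure of $\LQT'$ under the $\CQ$-action. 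The hereditary-submodule check you include is asserted rather than proved in the paper, and your verification of it is fine.
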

\begin{proof}
    Let \( \ell \) and \( \ell' \) be the underlying objects of \( \mathcal{L} \) and \( \mathcal{L}' \).

    We show that \( r\,\ident_{\ell \multimap \ell'} \in (\LQT \rightarrowtriangle \LQT')_{\ell \multimap \ell'} \) for some \( r > 0 \).
    Let \( B \) be an upper bound of \( \{ \opnorm{x} \mid n \in \Nat, x \in \LQT_n \} \).
    Then, for every \( m \) and \( x \in \LQT_m \),
    \begin{equation*}
        \opnorm{\mathbf{ev} \circ (\ident \otimes x)} \le \opnorm{\mathbf{ev}}\,\opnorm{\ident}\,\opnorm{g} \le \opnorm{\mathbf{ev}}\,\opnorm{\ident}\,B.
    \end{equation*}
    So letting \( 1/r = \opnorm{\mathbf{ev}}\,\opnorm{\ident}\,B \), we have \( r\,\ident \in (\LQT \rightarrowtriangle \LQT')_{\ell \multimap \ell'} \).

    We prove that there exists \( C \) such that \( f \in (\LQT \rightarrowtriangle \LQT')_n \) implies \( \opnorm{f} \le C \).
    Since \( \{ f \in \CPM(1,  \ell \multimap \ell') \mid \opnorm{f} = 1 \} \) is compact (\cref{lem:cpm-trace-norm-ball-compact}) and the mapping \( f \mapsto \opnorm{\mathbf{ev} \circ (f \otensor \ident_{\ell'})} \) is continuous,
    the image of the map of the unit sphere
    \begin{equation*}
        \{ \opnorm{\mathbf{ev} \circ (x \otensor \ident_{\ell'})} \mid f \in \CPM(1, \ell \multimap \ell'), \opnorm{f} = 1 \}
    \end{equation*}
    has the minimum \( m \in [0,\infty) \).
    Since \( \mathbf{ev} \circ (f \otensor \ident_{\ell'}) = \Lambda^{-1}(f) \), where \( \Lambda \colon \CPM(1, \ell \multimap \ell') \stackrel{\cong}{\longrightarrow} \CPM(\ell, \ell') \) is the canonical bijection, we have
    \begin{align*}
        \opnorm{\mathbf{ev} \circ (f \otensor \ident_{\ell'})} = 0
        \quad
        &\Longrightarrow\quad
        \mathbf{ev} \circ (f \otensor \ident_{\ell'}) = 0
        \\ &\Longrightarrow\quad
        \Lambda^{-1}(f) = 0
        \\ &\Longrightarrow\quad
        f = 0.
    \end{align*}
    Hence \( m \neq 0 \) and \( \opnorm{\mathbf{ev} \circ (f \otensor \ident_{\ell'})} \ge m \opnorm{f} \) for every \( f \in \CPM(1, \ell \multimap \ell') \).
    Assume that \( g \in (\LQT \rightarrowtriangle \LQT')_n \).
    There exists \( y \in \CQ(1, n) \) such that \( \opnorm{g} = \opnorm{g \circ y} \) (since \( \CQ(1,n) \) is compact and \( \opnorm{g} = \sup_{y \in \CQ(1,n)} \opnorm{g \circ y} \)).
    Then \( g \circ y \in (\LQT \rightarrowtriangle \LQT')_1 \) and thus \( \mathbf{ev} \circ ((g \circ y) \otimes (r'\,\ident_{\ell'})) \in \LQT'_{\ell'} \) for some \( r' > 0 \).
    For an upper bound \( B' \) for \( \LQT' \), we have
    \begin{equation*}
        B'
        \ge \opnorm{\mathbf{ev} \circ ((g \circ y) \otimes (r'\,\ident_{\ell'}))}
        \ge r'm \opnorm{g \circ  y}
        = r'm \opnorm{g}.
    \end{equation*}
    So \( \opnorm{g} \le B'/r'm \) for every \( g \in (\LQT \rightarrowtriangle \LQT')_n \).
\end{proof}

\begin{lemma}\label{lem:appx:lqt-function2}
    \( \LQT \rightarrowtriangle \LQT' \) is the linear implication \( \LQT \multimap \LQT' \) in \( \widehat{\CQ} \).
\end{lemma}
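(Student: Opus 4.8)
The plan is to produce a natural isomorphism of $\Sigma$-monoids
$\pshCQ(M, \LQT\rightarrowtriangle\LQT') \cong \Bilin(M,\LQT;\LQT')$.
Since $M\ptensor\LQT$ represents $\Bilin(M,\LQT;-)$, this exhibits $\LQT\rightarrowtriangle\LQT'$ as a representing object of $\pshCQ((-)\ptensor\LQT,\LQT')$, which is precisely the defining property of $\LQT\multimap\LQT'$; by uniqueness of representing objects, $\LQT\rightarrowtriangle\LQT'\cong\LQT\multimap\LQT'$. Throughout I will use that $\CPM$ is compact closed, so that currying $\Lambda\colon\CPM(m\otimes\ell,\ell')\xrightarrow{\cong}\CPM(m,\ell\multimap\ell')$ is a $\Sigma$-monoid isomorphism natural in $m$, satisfying $\mathbf{ev}\circ(\Lambda(f)\otimes g)=f\circ(\ident_m\otimes g)$ and $\mathbf{ev}\circ(g\otimes\ident_\ell)=\Lambda^{-1}(g)$; and that the $\CQ$-action on a hereditary submodule of some $\CPM(-,n)$ is just composition in $\CPM$. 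By pseudo-representability I fix a pseudo-universal element $r\,\ident_\ell\in\LQT_\ell$ with $0<r\le 1$ (rescaling as in the proof of \cref{thm:cpm-representation}) and an upper bound $B$ with $\opnorm{x}\le B$ for all $n$ and $x\in\LQT_n$, writing $\ell=\#\LQT$ and $\ell'=\#\LQT'$.

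The two maps are as follows. $\Phi$ sends a $\CQ$-morphism $\alpha\colon M\to\LQT\rightarrowtriangle\LQT'$ to the family $\Phi(\alpha)_{m,k}(x,y)\defe\mathbf{ev}\circ(\alpha_m(x)\otimes y)$; this is bilinear and $\CQ$-action preserving because $\alpha$ is a $\CQ$-morphism and $\otimes$, $\mathbf{ev}$ are $\SMon$-enriched, and it lands in $\LQT'_{m\otimes k}$ by the very definition of $\LQT\rightarrowtriangle\LQT'$. Conversely $\Psi$ sends $\beta\in\Bilin(M,\LQT;\LQT')$ to $\Psi(\beta)_m(x)\defe(1/r)\,\Lambda\big(\beta_{m,\ell}(x,r\,\ident_\ell)\big)$. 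The central computation, which I would carry out first, is that $\mathbf{ev}\circ\big(\Psi(\beta)_m(x)\otimes y\big)=\beta_{m,k}(x,y)$ for every $y\in\LQT_k$: expanding with $\mathbf{ev}\circ(\Lambda(f)\otimes g)=f\circ(\ident_m\otimes g)$ and writing $y=B\cdot(y/B)$ with $y/B\in\CQ(k,\ell)$, this reduces to the $\CQ$-action-preservation identity $\beta_{m,\ell}(x,r\,\ident_\ell)\cdot(\ident_m\otimes(y/B))=\beta_{m,k}(x,(r/B)\,y)$ together with bilinearity in the scalar. This simultaneously shows $\Psi(\beta)_m(x)\in(\LQT\rightarrowtriangle\LQT')_m$ and gives $\Phi\circ\Psi=\id$. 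That $\Psi(\beta)$ is a $\CQ$-module morphism uses naturality of currying (for the $\CQ$-action) and, for sums, the sum-reflecting part of heredity: the Kleene inequality $\beta_{m,\ell}(\sum_ix_i,r\,\ident_\ell)\Kle\sum_i\beta_{m,\ell}(x_i,r\,\ident_\ell)$, applied with $\sum_ix_i$ defined, yields an equality in $\CPM(m\otimes\ell,\ell')$, which $\Lambda$ and the scalar $1/r$ transport to $\CPM(m,\ell\multimap\ell')$, and which then descends into $\LQT\rightarrowtriangle\LQT'$ by sum-reflection. The remaining identity $\Psi\circ\Phi=\id$ follows from $\mathbf{ev}\circ(g\otimes\ident_\ell)=\Lambda^{-1}(g)$ and $\Lambda\Lambda^{-1}=\id$. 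Naturality in $M$ ($\Phi(\alpha\circ h)_{m,k}(x,y)=\Phi(\alpha)_{m,k}(h_m(x),y)$) and the compatibility with the $\Sigma$-monoid structure (both $\Phi$ and $\Psi$ preserve sums componentwise, and they are mutually inverse bijections) are then routine, so $\Phi$ is the desired $\Sigma$-monoid isomorphism.

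I expect the genuine work to be the definedness bookkeeping in the construction of $\Psi$, not anything conceptual: one must verify that $\Psi(\beta)_m(x)$ really lies in the hereditary submodule (not merely in $\CPM(-,\ell\multimap\ell')$), that $\Psi(\beta)_m(\sum_ix_i)=\sum_i\Psi(\beta)_m(x_i)$ is the sum \emph{computed in} $\LQT\rightarrowtriangle\LQT'$, and that the manipulations involving the scalar $1/r$ and the rescaling $y=B\cdot(y/B)$ are legitimate — here \cref{lem:real-action-invertible} and the boundedness clause of pseudo-representability do the work. Once these checks are in place, combining the displayed natural isomorphism with the representability of $\Bilin(M,\LQT;-)$ by $M\ptensor\LQT$ gives $\pshCQ(M\ptensor\LQT,\LQT')\cong\pshCQ(M,\LQT\rightarrowtriangle\LQT')$ naturally in $M$, so $\LQT\rightarrowtriangle\LQT'$ is the linear implication $\LQT\multimap\LQT'$ in $\widehat{\CQ}$.
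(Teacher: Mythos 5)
Your proposal is correct and follows essentially the same route as the paper's proof: the paper identifies $(\LQT\multimap\LQT')_n$ with $\Bilin(\yoneda(n),\LQT;\LQT')$ via the Yoneda lemma and then constructs exactly your two mutually inverse maps — $\alpha\mapsto(1/r)\Lambda(\alpha_{n,\ell}(\ident_n,r\,\ident_\ell))$ one way and $g\mapsto\mathbf{ev}\circ((g\circ{-})\otimes{-})$ the other — using the same rescaling trick with the pseudo-universal element $r\,\ident_\ell$ and the bound $B$. Your phrasing in terms of a natural isomorphism $\pshCQ(M,\LQT\rightarrowtriangle\LQT')\cong\Bilin(M,\LQT;\LQT')$ for general $M$ plus uniqueness of representing objects is just the Yoneda-globalised form of the same argument.
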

\begin{proof}
    Let \( \ell \) and \( \ell' \) be the underlying objects of \( \LQT \) and \( \LQT' \), respectively.
    We have \( \Sigma \)-monoid isomorphisms
    \begin{align*}
        (\LQT \multimap \LQT')_n
        &\cong
        \widehat{\CQ}(\yoneda(n), \LQT \multimap \LQT')
        \\
        &\cong
        \widehat{\CQ}(\yoneda(n) \otimes \LQT, \LQT')
        \\
        &\cong
        \Bilin(\yoneda(n), \LQT; \LQT').
    \end{align*}
    The action of \( h \in \CQ(m,n) \) to \( \alpha \in \Bilin(\yoneda(n), \LQT; \LQT') \) is \( (\alpha \cdot h)({-},{-}) \defe \alpha(h \circ {-}, {-}) \).

    Let \( B \) be an upper bound of the norm for \( \LQT \), \( r > 0 \) be a real number such that \( r\,\ident_\ell \in \LQT_{\ell} \) and \( \Lambda \colon \CPM(n \otimes \ell, \ell') \longrightarrow \CPM(n, \ell \multimap \ell') \) be the canonical map.

    Given \( \alpha \in \Bilin(\yoneda(n), \LQT; \LQT') \), we have
    \begin{equation*}
        \alpha_{n,\ell}(\ident_n, r\,\ident_{\ell}) \in \LQT'_{n \otimes \ell} \subseteq \CPM(n \otimes \ell, \ell')
    \end{equation*}
    and \( (1/r) \Lambda(\alpha_{n,\ell}(\ident_n, r\,\ident_{\ell})) \in \CPM(n,\ell \multimap \ell') \).
    That \( (1/r) \Lambda(\alpha_{n,\ell}(\ident_n, r\,\ident_{\ell})) \in (\LQT \rightarrowtriangle \LQT')_n \) follows from, for every \( x \in \LQT_{m} \),
    \begin{align*}
        &\mathrm{ev} \circ ((1/r) \Lambda(\alpha_{n,\ell}(\ident_n, r\,\ident_{\ell})) \otimes x)
        \\
        ={}&
        (1/r)\,\alpha_{n,\ell}(\ident_n, r\,\ident_{\ell}) \circ (\ident_n \otimes x)
        \\
        ={}&
        (B/r)\,\alpha_{n,\ell}(\ident_n, r\,\ident_{\ell}) \circ (\ident_n \otimes (x/B))
        \\
        ={}&
        (B/r)\,\alpha_{n,m}(\ident_n, (r/B)\,x)
        \\
        ={}&
        \alpha_{n,m}(\ident_n, x)
    \end{align*}
    and \( \alpha_{n,m}(\ident_n, x) \in \LQT'_{n \otimes m} \).
    The mapping \( \alpha \mapsto (1/r) \Lambda(\alpha_{n,\ell}(\ident_n, r\,\ident_{\ell})) \) preserves the \( \CQ \)-action and the sum.
    So it is a \( \CQ \)-module homomorphism \( (\LQT \multimap \LQT') \longrightarrow (\LQT \rightarrowtriangle \LQT') \).

    Conversely, given \( g \in (\LQT \rightarrowtriangle \LQT')_n \), we define \( \Psi_n(g) \in \Bilin(\yoneda(n), \LQT; \LQT') \) by
    \begin{equation*}
        (\Psi_n(g))_{m,m'}(h, x) \defe \mathrm{ev} \circ ((g \circ h) \otimes x)
    \end{equation*}
    for \( h \in \CQ(m, n) \) and \( x \in \LQT_{m'} \).
    It is easy to see that \( \Psi_n(g) \in \Bilin(\yoneda(n), \LQT; \LQT') \).
    The family \( \Psi = (\Psi_n)_{n} \) preserves the action, \ie~\( \Psi_n(g) \cdot k = \mathrm{ev} \circ ((g \circ k \circ ({-}) \otimes ({-})) = \Psi_{n'}(g \circ k) \) for every \( k \in \CQ(n',n) \).
    Hence \( \Psi \) is a \( \CQ \)-module morphism \( (\LQT \rightarrowtriangle \LQT') \longrightarrow (\LQT \multimap \LQT') \).
    
    The above constructed \( \CQ \)-modules morphisms are the inverses of each other, the desired isomorphism holds.
\end{proof}

\begin{proof}[Proof of \cref{lem:linear:lqt-function-and-tensor}]
    A consequence of \cref{lem:appx:lqt-tensor1,lem:appx:lqt-tensor2,lem:appx:lqt-function1,lem:appx:lqt-function2}.
\end{proof}

\subsection{Proof of \cref{prop:pseudo-basis-implies-basis}}
\begin{claim*}[of \cref{prop:pseudo-basis-implies-basis}]
    A \(\CQ\)-module \( M \) has a pseudo-representable basis if and only if \( M \in \basedCQ \).
\end{claim*}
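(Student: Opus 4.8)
The plan is to prove both implications at once by reducing them, via \cref{lem:basis-transitive}, to two facts about the class $\mathcal{O}_{\mathrm{pr}}$ of pseudo-representable $\CQ$-modules and the class $\mathcal{O}_{\mathrm{rep}} = \{\, \yoneda(n) \mid n \in \CQ \,\}$ of representable ones. As noted after \cref{def:general-basis}, a morphism $\yoneda(n) \to M$ is the same thing as an element of $M_n$ (Yoneda), so a \emph{representable basis} of $M$ in the sense of \cref{def:yoneda-basis} is precisely an $\mathcal{O}_{\mathrm{rep}}$-basis in the sense of \cref{def:general-basis}, i.e.\ $M \in \basedCQ$ iff $M$ has an $\mathcal{O}_{\mathrm{rep}}$-basis; and a \emph{pseudo-representable basis} is by definition an $\mathcal{O}_{\mathrm{pr}}$-basis. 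The two facts are: (i) $\mathcal{O}_{\mathrm{rep}} \subseteq \mathcal{O}_{\mathrm{pr}}$, and (ii) every $\LQT \in \mathcal{O}_{\mathrm{pr}}$ admits a finite $\mathcal{O}_{\mathrm{rep}}$-basis. Granting these: if $M \in \basedCQ$, its $\mathcal{O}_{\mathrm{rep}}$-basis is an $\mathcal{O}_{\mathrm{pr}}$-basis by (i); conversely, if $M$ has an $\mathcal{O}_{\mathrm{pr}}$-basis, then since every member of $\mathcal{O}_{\mathrm{pr}}$ has an $\mathcal{O}_{\mathrm{rep}}$-basis by (ii), \cref{lem:basis-transitive} produces an $\mathcal{O}_{\mathrm{rep}}$-basis of $M$ (still countable, being indexed by a countable union of the finite index sets from (ii)), so $M \in \basedCQ$.

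For (i), fix $n$ and view $\yoneda(n) = \CQ({-},n)$ inside $\CPM({-},n)$ componentwise; this is the embedding that exhibits the $\Sigma$-monoid structure of $\CQ(k,n)$ as inherited from $\CPM(k,n)$, so it is sum-reflecting by the very definition of that inherited sum. It is downward-closed because if $\psi \le \varphi$ in $\CPM(k,n)$ with $\varphi \in \CQ(k,n)$, then $\varphi - \psi$ is completely positive, hence $\trace(\psi(x)) \le \trace(\varphi(x))$ for every $x \ge 0$, so $\opnorm{\psi} \le \opnorm{\varphi} \le 1$ and $\psi \in \CQ(k,n)$. It is bounded by $B = 1$ since superoperators have operator norm $\le 1$, and it has the pseudo-universal element $\ident_n \in \CQ(n,n) = \yoneda(n)_n$ (with $r = 1$). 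Hence $\yoneda(n) \in \mathcal{O}_{\mathrm{pr}}$, and the $\Leftarrow$ direction is immediate.

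The substance is (ii). Let $\LQT \hookrightarrow \CPM({-},\ell)$ be pseudo-representable with bound $B$ and pseudo-universal element $r\,\ident_\ell \in \LQT_\ell$; as in the proof of \cref{thm:cpm-representation} we may assume $r \le 1$. Fix an integer $N \ge B/r$, and for $i = 1, \dots, N$ put $n_i := \ell$, $e_i := r\,\ident_\ell \in \LQT_\ell$, and $f_i \colon \LQT \to \yoneda(\ell)$, $(f_i)_k(x) := \tfrac{1}{Nr}\,x$. This $f_i$ is well defined: $\tfrac{1}{Nr}x$ is completely positive with operator norm $\le \tfrac{B}{Nr} \le 1$, so it lies in $\CQ(k,\ell) = \yoneda(\ell)_k$; it preserves countable partial sums because if $\sum_j x_j$ is defined in $\LQT_k$ it converges in $\CPM(k,\ell)$ to an element of $\LQT_k$ of operator norm $\le B$, whence $\sum_j \tfrac{1}{Nr}x_j$ converges in $\CPM(k,\ell)$ to $\tfrac{1}{Nr}\sum_j x_j$, of operator norm $\le 1$, which is therefore its value in $\CQ(k,\ell)$; and it is natural since scalar multiplication commutes with composition. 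Moreover, for $x \in \LQT_k$ we have $e_i \cdot (f_i)_k(x) = (r\,\ident_\ell) \circ (\tfrac{1}{Nr}x) = \tfrac{1}{N}x$, and $\tfrac{1}{N}x \le x$ in $\CPM(k,\ell)$, so $\tfrac{1}{N}x \in \LQT_k$ by downward-closedness; since $\sum_{i=1}^N \tfrac{1}{N}x = x$ in $\CPM(k,\ell)$ with all terms in $\LQT_k$, sum-reflection yields $\sum_{i=1}^N e_i \cdot (f_i)_k(x) = x$ in $\LQT_k$. Thus $(n_i, e_i, f_i)_{i=1}^N$ is an $\mathcal{O}_{\mathrm{rep}}$-basis of $\LQT$, proving (ii).

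The only delicate part — and where I expect to spend the most care — is (ii): the scaling factor $\tfrac{1}{Nr}$ must be chosen so that the scaled elements simultaneously (a) land in $\CQ$, which is exactly what the \emph{bound} $B$ buys, and (b) can be summed back up to $\ident_\LQT$ over $N$ copies, which is exactly what the \emph{pseudo-universal element} $r\,\ident_\ell$ buys; and one must verify that $f_i$ genuinely preserves countable sums and that the basis identity holds \emph{in $\LQT$} and not merely in $\CPM({-},\ell)$ — both of which rely on $\LQT \hookrightarrow \CPM({-},\ell)$ being \emph{hereditary}, i.e.\ downward-closed and sum-reflecting. Everything else is bookkeeping around \cref{lem:basis-transitive}.
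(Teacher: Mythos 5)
Your proof is correct and follows essentially the same route as the paper: the paper's Lemma~\ref{lem:lqt-basis} uses exactly your scaling construction (with $N = B/r$ copies of $r\,\ident_\ell$ and coordinate maps $x \mapsto x/B$) to give a pseudo-representable module a finite representable basis, and then composes bases via Lemma~\ref{lem:basis-transitive}. Your explicit verification that $\yoneda(n)$ is pseudo-representable (for the easy direction) and that the basis identity holds in $\LQT$ itself via hereditariness are details the paper leaves implicit, but they match its intent.
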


We use the following lemma.
\begin{lemma}\label{lem:lqt-basis}
    A pseudo-representable \( \CQ \)-module has a representable basis.
\end{lemma}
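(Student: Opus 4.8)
The plan is to write down an explicit finite representable basis, using the two ingredients that pseudo-representability (\cref{def:pseudo-representable}) supplies for \( \LQT \hookrightarrow \CPM({-},\ell) \) with \( \ell = \#\LQT \): a uniform bound \( B > 0 \) with \( \opnorm{x} \le B \) for every \( n \) and \( x \in \LQT_n \), and a pseudo-universal element \( r\,\ident_\ell \in \LQT_\ell \) with \( r > 0 \). First I would set \( N \defe \lceil B/r \rceil \) (so \( N \ge 1 \) and \( Nr \ge B \)) and define a morphism \( f \colon \LQT \longrightarrow \yoneda(\ell) \) in \( \pshCQ \) by \( f_n(x) \defe (1/Nr)\,x \) for \( x \in \LQT_n \subseteq \CPM(n,\ell) \). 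This is well-typed because \( (1/Nr)x \) is completely positive and \( \opnorm{(1/Nr)x} = (1/Nr)\opnorm{x} \le B/Nr \le 1 \), hence trace-non-increasing, so it is a superoperator, i.e.\ an element of \( \CQ(n,\ell) = \yoneda(\ell)_n \). That \( f \) is a \( \CQ \)-module morphism is routine: naturality in \( n \) holds because the \( \CQ \)-action on the submodule \( \LQT \) is composition in \( \CPM \) and scalar multiplication commutes with composition; and each \( f_n \) preserves sums because a sum in \( \LQT_n \) is already a sum in \( \CPM(n,\ell) \), multiplication by a non-negative scalar is continuous on \( \CPM(n,\ell) \), and the resulting value, having operator norm \( \le 1 \), lies in \( \CQ(n,\ell) \) and is thus a legitimate value of the sum there.

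Next I would check that the family consisting of \( N \) copies of the triple \( (\ell,\, r\,\ident_\ell,\, f) \) is a representable basis of \( \LQT \) in the sense of \cref{def:yoneda-basis}. For every \( n \) and \( x \in \LQT_n \) one computes \( (r\,\ident_\ell)\cdot f_n(x) = (r\,\ident_\ell)\circ\big((1/Nr)x\big) = (1/N)\,x \), so the identity to be verified is \( x = \sum_{i=1}^{N}(1/N)x \). Each term \( (1/N)x \) belongs to \( \LQT_n \) by downward-closedness, since \( (1/N)x \le x \) in \( \CPM(n,\ell) \) with witness the completely positive map \( ((N-1)/N)x \); the equality \( \sum_{i=1}^{N}(1/N)x = x \) holds in \( \CPM(n,\ell) \); and since \( \LQT \hookrightarrow \CPM({-},\ell) \) is sum-reflecting, it holds in \( \LQT_n \) as well. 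This is exactly what \cref{def:yoneda-basis} asks (a countable, here finite, family of triples), so \cref{lem:lqt-basis} is proved.

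The argument is short and essentially computation-free; the one point that genuinely needs the hypotheses rather than formal nonsense is the transfer of the finite identity \( x = \sum_{i=1}^{N}(1/N)x \) from \( \CPM(n,\ell) \) back into \( \LQT_n \), which is precisely what sum-reflectingness provides (with downward-closedness used beforehand to place the summands in \( \LQT_n \)). The only design choice is the factor \( 1/Nr \): the factor \( 1/r \) alone would undo \( r\,\ident_\ell \), but since \( r\,\ident_\ell \) need not be norm-maximal in \( \LQT_\ell \) one uses \( N \) copies both to shrink \( \LQT \) into \( \CQ \) (so that \( f_n(x) \) is a genuine superoperator) and to spread the reconstruction of \( x \) over the \( N \) basis vectors. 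I do not expect any real obstacle beyond keeping track of the partial \( \Real_{\ge 0} \)-action on \( \CQ \)-modules carefully when applying the scalar \( 1/Nr \).
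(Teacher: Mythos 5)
Your proof is correct and is essentially the paper's own argument: both take the basis to be \(N\) copies of the triple \((\ell,\; r\,\ident_\ell,\; x \mapsto (1/B)\,x)\) with \(N \approx B/r\) and verify \(x = \sum_{i=1}^{N}(1/N)\,x\) (your use of \(\lceil B/r\rceil\) versus the paper's enlargement of \(B\) to make \(B/r\) an integer is the same adjustment). Your explicit appeal to hereditariness to transfer the identity back into \(\LQT_n\) is a justification the paper leaves implicit, but it matches the intended reasoning.
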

\begin{proof}
    Let \( \LQT \hookrightarrow \CPM({-}, \ell) \) be a pseudo-representable module, \( B \) be an upper bound of the norm for \( \LQT \) and \( r > 0 \) be a real number such that \( r\,\ident_{\ell} \in \LQT_\ell \).
    We can assume without loss of generality that \( N \defe B/r \) is a natural number.
    If it is not the case, take any \( B' > B \) such that \( B'/r \) is a natural number.

    Then \( \LQT \) has a basis \( (n_i, e_i, \varphi_i)_{i = 1,...,N} \) where \( n_i = \ell \), \( e_i = r\,\ident_\ell \) and \( \varphi_i(x) = (1/B)\,x \) for every \( i \).
    Actually \( x = \sum_{i = 1}^{N} (r\,\ident_\ell) \cdot ((1/B)\,x) \) for every \( n \) and \( x \in \LQT_n \).
\end{proof}

Assume that \( M \) has a pseudo-representable basis \( (\LQT_i, e_i, f_i)_{i \in I} \).
By \cref{lem:lqt-basis}, \( \LQT_i \) has a representable basis \( (n_{i,j}, e_{i,j}, f_{i,j})_{j \in J_i} \).
Then
\begin{align*}
    & \ident_M
    \\
    &=
    \sum_i e_i \circ f_i
    \\
    &=
    \sum_i e_i \circ \ident_{\LQT_i} \circ f_i
    \\
    &=
    \sum_i e_i \circ (\sum_{j \in J_i} e_{i,j} \circ f_{i,j}) \circ f_i
    \\
    &\Kle
    \sum_{i \in I, j \in J_i} e_i \circ e_{i,j} \circ f_{i,j} \circ f_i
\end{align*}
Because
\begin{align*}
    e_i \circ e_{i,j} &\colon \yoneda(n_{i,j}) \longrightarrow M \\
    f_{i,j} \circ f_i &\colon M \longrightarrow \yoneda(n_{i,j}),
\end{align*}
\( (n_{i,j}, e_i \circ e_{i,j}, f_{i,j} \circ f_i)_{i \in I,j\in J_i} \) is a representable basis.

\subsection{Proof of \cref{lem:based-additive-multiplicative}}
\begin{claim*}[of \cref{lem:based-additive-multiplicative}]
    Let \( M,N \in \basedCQ \) and \( (\BObj{a}, \ket{a}, \bra{a})_{a \in \Base{M}} \) and \( (\BObj{b}, \ket{b}, \bra{b})_{b \in \Base{N}} \) be pseudo-representable bases.
    Then \( M \times N \), \( M \amalg N \), \( M \ptensor N \) and \( M \multimap N \) have the following bases.
    \begin{gather*}
        \Base{M \times N} \defe \{\, (a \times \bullet) \mid a \in \Base{M} \,\} \cup \{\, (\bullet \times b) \mid b \in \Base{N} \,\}
        \\
        \Base{M \amalg N} \defe \{\, \TInl(a) \mid a \in \Base{M} \,\} \cup \{\, \TInr(b) \mid b \in \Base{N} \,\}
        \\
        \Base{M \ptensor N} \defe \{\, a \ptensor b \mid a \in \Base{M}, b \in \Base{N} \,\}
        \\
        \Base{M \multimap N} \defe \{\, a \multimap b \mid a \in \Base{M}, b \in \Base{N} \,\}
        \\
        \BObj{(a \times \bullet)} \defe \BObj{a}
        \qquad
        \ket{a \times \bullet} \defe \Inj_1 \circ \ket{a}
        \qquad
        \bra{a \times \bullet} \defe \bra{a} \circ \Proj_1
        \\
        \BObj{(\bullet \times b)} \defe \BObj{b}
        \qquad
        \ket{\bullet \times b} \defe \Inj_2 \circ \ket{b}
        \qquad
        \bra{\bullet \times b} \defe \bra{b} \circ \Proj_2
        \\
        \BObj{(\TInl(a))} \defe \BObj{a}
        \qquad
        \ket{\TInl(a)} \defe \Inj_1 \circ \ket{a}
        \qquad
        \bra{\TInl(a)} \defe \bra{a} \circ \Proj_1
        \\
        \BObj{(\TInr(b))} \defe \BObj{b}
        \qquad
        \ket{\TInr(b)} \defe \Inj_2 \circ \ket{b}
        \qquad
        \bra{\TInr(b)} \defe \bra{b} \circ \Proj_2
        \\
        \BObj{(a \ptensor b)} \defe (\BObj{a} \ptensor \BObj{b})
        \qquad
        \ket{a \ptensor b} \defe (\ket{a} \ptensor \ket{b})
        \qquad
        \bra{a \ptensor b} \defe (\bra{a} \ptensor \bra{b})
        \\
        \BObj{(a \multimap b)} \defe (\BObj{a} \multimap \BObj{b})
        \qquad
        \ket{a \multimap b} \defe (\bra{a} \multimap \ket{b})
        \qquad
        \bra{a \multimap b} \defe (\ket{a} \multimap \bra{b}).
    \end{gather*}
    The above bases are orthogonal if so are \( \Base{M} \) and \( \Base{N} \).
\end{claim*}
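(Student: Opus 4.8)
The plan is to treat the four connectives $\times$, $\amalg$, $\ptensor$, $\multimap$ uniformly, checking for each of them that (a) the displayed coefficient objects $\BObj{\cdot}$ are pseudo-representable $\CQ$-modules, (b) the displayed morphisms resolve the identity, $\ident = \sum \ket{\cdot}\bra{\cdot}$, and (c) the resulting basis is orthogonal whenever $\Base{M}$ and $\Base{N}$ are.

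For (a): in the cases $\times$ and $\amalg$ each displayed $\BObj{\cdot}$ is literally some $\BObj{a}$ with $a\in\Base{M}$ or some $\BObj{b}$ with $b\in\Base{N}$, hence pseudo-representable by hypothesis, and the displayed $\ket{\cdot}$, $\bra{\cdot}$ are composites of $\ket{a}$, $\bra{a}$ (or $\ket{b}$, $\bra{b}$) with the (co)product injections and projections. In the cases $\ptensor$ and $\multimap$ I will invoke \cref{lem:linear:lqt-function-and-tensor}, which states precisely that $\BObj{a}\ptensor\BObj{b}$ and $\BObj{a}\multimap\BObj{b}$ are again pseudo-representable and, via its explicit submodule descriptions of these modules, makes $\ket{a}\ptensor\ket{b}$, $\bra{a}\ptensor\bra{b}$, $\bra{a}\multimap\ket{b}$ and $\ket{a}\multimap\bra{b}$ into bona fide $\CQ$-module morphisms.

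For (b): the uniform idea is to push the $\Sigma$-monoid sum through the relevant (bi)functor. In the case $\times$ I use $\ket{a\times\bullet}\bra{a\times\bullet} = \Inj_1\ket{a}\bra{a}\Proj_1$ (and symmetrically for $\bullet\times b$), the bilinearity of composition from \cref{def:enriched-category}, and the biproduct-like identities $\ident_{M\times N} = \Inj_1\Proj_1 + \Inj_2\Proj_2$, $\Proj_i\Inj_i = \ident$, $\Proj_i\Inj_{3-i} = 0$, to obtain $\sum_{a}\Inj_1\ket{a}\bra{a}\Proj_1 + \sum_{b}\Inj_2\ket{b}\bra{b}\Proj_2 = \Inj_1\ident_M\Proj_1 + \Inj_2\ident_N\Proj_2 = \ident_{M\times N}$; the same computation applies verbatim to $\amalg$, whose injections and projections satisfy the same equations. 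In the case $\ptensor$ I use functoriality $(\ket{a}\ptensor\ket{b})(\bra{a}\ptensor\bra{b}) = (\ket{a}\bra{a})\ptensor(\ket{b}\bra{b})$ together with the $\SMon$-enrichment $(\sum_i f_i)\ptensor(\sum_j g_j)\Kle\sum_{i,j}(f_i\ptensor g_j)$, applied to $\ident_M\ptensor\ident_N = \ident_{M\ptensor N}$; in the case $\multimap$ the analogous functoriality and the $\SMon$-enrichment $(\sum_i f_i)\multimap(\sum_j g_j)\Kle\sum_{i,j}(f_i\multimap g_j)$, applied to $\ident_M\multimap\ident_N = \ident_{M\multimap N}$, taking care that the first argument of $\multimap$ is contravariant, so that composition in that slot reverses and one still gets $\ket{a\multimap b}\bra{a\multimap b} = (\ket{a}\bra{a})\multimap(\ket{b}\bra{b})$ as needed.

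For (c): orthogonality reduces to a one-line matrix-entry computation in each case. For the (co)product, $\braket{a\times\bullet}{a'\times\bullet} = \bra{a}\Proj_1\Inj_1\ket{a'} = \braket{a}{a'}$ and $\braket{a\times\bullet}{\bullet\times b} = \bra{a}\Proj_1\Inj_2\ket{b} = 0$ since $\Proj_1\Inj_2 = 0$ (and likewise for the remaining cross terms and for $\amalg$); for the tensor and the function space, $\braket{a\ptensor b}{a'\ptensor b'} = \braket{a}{a'}\ptensor\braket{b}{b'}$ and $\braket{a\multimap b}{a'\multimap b'} = \braket{a'}{a}\multimap\braket{b}{b'}$, which vanish off the diagonal because $\braket{a}{a'} = 0$ for $a\neq a'$ by hypothesis and $0$ annihilates both $\ptensor$ and $\multimap$. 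The only genuine difficulty I anticipate is bookkeeping: controlling the Kleene-definedness of the expanded, doubly-indexed infinite sums — the point being that in each case the relevant preservation-of-sums inequality takes the form $\ident \Kle (\text{expanded sum})$, so definedness of the identity forces definedness of the sum — and correctly tracking the contravariance of $\multimap$ in its first slot.
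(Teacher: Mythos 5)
Your proposal is correct and follows essentially the same route as the paper's proof: a direct verification of \( \ident = \sum \ket{\cdot}\bra{\cdot} \) by pushing the sums through the \( \SMon \)-enriched (bi)functors and using the functoriality/biproduct identities, with definedness of the doubly-indexed sum extracted from the Kleene inequality, and orthogonality reduced to the off-diagonal matrix-entry computation (the paper writes out only the tensor case, but identically to yours). Your additional care with the contravariance of \( \multimap \) and with pseudo-representability of the coefficient objects via \cref{lem:linear:lqt-function-and-tensor} is exactly the right bookkeeping.
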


By a direct calculation of the required equation: \( \ident = \sum_{b \in \Base{M}} \ket{a} \bra{a} \).
We prove the case of the tensor product as an example.

Since \( \sum_{a \in \Base{M}} \ket{a}\bra{a} = \ident_M \) and \( \sum_{b \in \Base{N}} \ket{b}\bra{b} = \ident_N \) and \( \ptensor \) is \( \SMon \)-enriched,
\begin{align*}
    \ident_{M \ptensor N}
    &=
    \ident_M \ptensor \ident_N
    \\
    &=
    \left( \sum_{a \in \Base{M}} \ket{a}\bra{a} \right) \ptensor \left( \sum_{b \in \Base{N}} \ket{b}\bra{b} \right)
    \\
    &\Kle
    \sum_{a \in \Base{M}, b \in \Base{N}} (\ket{a}\bra{a})\ptensor(\ket{b}\bra{b})
    \\
    &=
    \sum_{a \in \Base{M}, b \in \Base{N}} (\ket{a} \ptensor \ket{b}) (\bra{a} \ptensor\bra{b})
    \\
    &=
    \sum_{c \in \Base{M \ptensor N}} \ket{c} \bra{c}
\end{align*}
Assume that \( \Base{M} \) and \( \Base{N} \) are orthogonal.
Assume \( c,c' \in \Base{M \ptensor N} \) and \( c \neq c' \).
By definition of \( \Base{M \ptensor N} \), we have \( c = a \ptensor b \) and \( c' = a' \ptensor b' \).
Since \( c \neq c' \), we have \( a \neq a' \) or \( b \neq b' \).
If \( a \neq a' \), then \( \braket{a}{a'} = 0 \) by the orthogonality of \( \Base{M} \).
So \( \braket{c}{c'} = (\bra{a} \ptensor \bra{b})(\ket{a'} \ptensor \ket{b'}) = \braket{a}{a'} \ptensor \braket{b}{b'} = 0 \).

 \section{The Structure of Cofree Exponential}
\tk{todo: need rework}
\subsection{Useful Lemmas}
\begin{lemma}\label{lem:cpm-identity-cover}
    For each \( n \in \Nat \), there exists \( m \), \( x_1, \dots, x_m \in \CPM(1,n) \) and \( h_1,\dots,h_m \in \CPM(n,1) \) such that
    \begin{equation*}
        \ident_n
        \quad\le\quad
        \sum_i x_i \circ h_i.
    \end{equation*}
\end{lemma}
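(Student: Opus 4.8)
The plan is to exhibit a single map of the prescribed shape that already dominates $\ident_n$, using the Choi-matrix realisation of the order on $\CPM$. First I would set $m \defe 1$, take $h_1 \defe \trace \colon \Mat_n(\Complex) \longrightarrow \Complex$, which lies in $\CPM(n,1)$ since $\trace \otimes \ident_k$ is the partial trace and hence completely positive, and take $x_1 \in \CPM(1,n)$ to be the map $r \mapsto r\,(n\,\ident_n)$, which is completely positive because $n\,\ident_n$ is a positive matrix. Then $x_1 \circ h_1$ is the map $X \mapsto n\,(\trace X)\,\ident_n$, so it suffices to check that $\ident_n \le x_1 \circ h_1$ in $\CPM(n,n)$, equivalently that $\Psi \defe \bigl(X \mapsto n\,(\trace X)\,\ident_n - X\bigr)$ is completely positive.

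For this I would pass through the isomorphism $\CPM(n,n) \cong P_{n^2}$ coming from the compact closed structure of $\CPM$ (a map sent to its Choi matrix); recall that under this isomorphism the canonical order on $\CPM(n,n)$ is the L\"owner order on $P_{n^2}$ (this is the identification used in the proof of \cref{lem:cpm-sigma-monoid}), and that complete positivity of a map is equivalent to positivity of its Choi matrix. A routine computation then gives that the Choi matrix of $\ident_n$ is the rank-one matrix $\Omega \defe |v\rangle\langle v|$ with $|v\rangle \defe \sum_{k=1}^{n} e_k \otimes e_k \in \Complex^n \otimes \Complex^n$, that the Choi matrix of $X \mapsto (\trace X)\,\ident_n$ is $\ident_n \otimes \ident_n = \ident_{n^2}$, and hence that the Choi matrix of $\Psi$ is $n\,\ident_{n^2} - \Omega$.

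The only remaining point is the inequality $\Omega \le n\,\ident_{n^2}$, which follows from the elementary estimate $|v\rangle\langle v| \le \langle v \vert v\rangle\,\ident_{n^2}$ (Cauchy--Schwarz) together with $\langle v \vert v\rangle = n$. Thus $n\,\ident_{n^2} - \Omega \ge 0$, so $\Psi \in \CPM(n,n)$ and $\ident_n + \Psi = x_1 \circ h_1$, giving $\ident_n \le \sum_{i=1}^{m} x_i \circ h_i$. I do not expect a genuine obstacle here: the substantive facts (the Choi correspondence and the fact that the order on $\CPM$ is the L\"owner order) are already in use in the appendix, and the two small inputs are that $\trace$ is completely positive and that $|v\rangle\langle v| \le \Vert v\Vert^2\,\ident$. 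If one prefers to keep all scalars equal to $1$, the same argument works with $m \defe n$, $x_i(r) \defe r\,\ident_n$ and $h_i \defe \trace$ for $i = 1,\dots,n$, whose sum is again the map $X \mapsto n\,(\trace X)\,\ident_n$.
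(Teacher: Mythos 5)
Your proof is correct, and it takes a genuinely different route from the paper. The paper proves the lemma only for \( n = 2 \) (declaring that case ``essential''), by an explicit entrywise decomposition: it introduces four completely positive functionals \( f_1,\dots,f_4 \in \CPM(2,1) \) reading off combinations of the entries of \( A \), writes \( A \) as a signed combination of terms \( (\text{positive matrix}) \cdot f_i(A) \), and then discards the negative terms to obtain the domination \( \ident_2 \le \sum_i x_i \circ f_i \) with \( m = 4 \). Your argument instead uses a single term: \( h_1 = \trace \) and \( x_1 = (r \mapsto r\,n\,\ident_n) \), reducing the claim to complete positivity of \( X \mapsto n(\trace X)\ident_n - X \), which you verify via the Choi correspondence and the elementary bound \( |v\rangle\langle v| \le \langle v|v\rangle\,\ident_{n^2} \). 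All steps check out: the canonical \( \Sigma \)-monoid order on \( \CPM(n,n) \) is indeed witnessed by existence of a completely positive difference (finite sums in \( \CPM(n,n) \) are total), and that is equivalent to positivity of the difference of Choi matrices. What your approach buys is uniformity in \( n \) — the paper's proof as written leaves the reduction from general \( n \) to \( n = 2 \) implicit — and a much shorter verification; it also gives the sharp constant \( n \), whereas the entrywise decomposition gives a cruder bound. What the paper's version buys is that the covering functionals \( f_i \) are genuinely ``tomographic'' (each reads off a scalar entry), which is closer in spirit to how the lemma is deployed later, but nothing in the application requires that, since the application only rescales the \( x_i \) and \( h_i \) into \( \CQ \) and decomposes \( x_i \) into simple tensors, both of which your single \( x_1 = n\,\ident_n \) admits.
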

\begin{proof}
    In this proof, we use \( i \) to mean \( \sqrt{-1} \).

    The case of \( n = 2 \) is essential.
    We define completely positive maps \( f_1,f_2,f_3,f_4 \in \CPM(2,1) \).
    For
    \begin{equation*}
        A =
        \left(\begin{array}{cc}
            a & b \\ c & d
        \end{array}\right),
    \end{equation*}
    we define the maps by
    \begin{align*}
        f_1 &\colon A \mapsto a \\
        f_2 &\colon A \mapsto d \\
        f_3 &\colon A \mapsto a+b+c+d \\
        f_4 &\colon A \mapsto a - ib + ic+d.
    \end{align*}

    Then
    \begin{align*}
        A &=
        \left(\begin{array}{cc}
            2 & 0 \\ 0 & 1
        \end{array}\right)
        \cdot f_1(A)
        \quad-\quad
        \left(\begin{array}{cc}
            2 & 1+i \\ 1-i & 2
        \end{array}\right)
        \cdot \frac{f_1(A)}{2}
        \\
        &+
        \left(\begin{array}{cc}
            1 & 0 \\ 0 & 2
        \end{array}\right)
        \cdot f_2(A)
        \quad-\quad
        \left(\begin{array}{cc}
            2 & 1+i \\ 1-i & 2
        \end{array}\right)
        \cdot \frac{f_2(A)}{2}
        \\
        &+
        \left(\begin{array}{cc}
            1 & 1 \\ 1 & 1
        \end{array}\right)
        \cdot \frac{f_3(A)}{2}
        \quad-\quad
        \left(\begin{array}{cc}
            1 & 0 \\ 0 & 1
        \end{array}\right)
        \cdot \frac{f_3(A)}{2}
        \\
        &+
        \left(\begin{array}{cc}
            1 & i \\ -i & 1
        \end{array}\right)
        \cdot \frac{f_4(A)}{2}
        \quad-\quad
        \left(\begin{array}{cc}
            1 & 0 \\ 0 & 1
        \end{array}\right)
        \cdot \frac{f_4(A)}{2}.
    \end{align*}
    Note that all matrices are positive and all functions are completely positive.
    Hence
    \begin{align*}
        A &\le
        \left(\begin{array}{cc}
            2 & 0 \\ 0 & 1
        \end{array}\right)
        \cdot f_1(A)
        \quad+\quad
        \left(\begin{array}{cc}
            1 & 0 \\ 0 & 2
        \end{array}\right)
        \cdot f_2(A)
        \\
        &+
        \left(\begin{array}{cc}
            1 & 1 \\ 1 & 1
        \end{array}\right)
        \cdot \frac{f_3(A)}{2}
        \quad+\quad
        \left(\begin{array}{cc}
            1 & i \\ -i & 1
        \end{array}\right)
        \cdot \frac{f_4(A)}{2}.
    \end{align*}
    Therefore
    \begin{align*}
        \ident_2 &\le
        \left(\begin{array}{cc}
            2 & 0 \\ 0 & 1
        \end{array}\right)
        \cdot f_1
        \quad+\quad
        \left(\begin{array}{cc}
            1 & 0 \\ 0 & 2
        \end{array}\right)
        \cdot f_2
        \\
        &+
        \left(\begin{array}{cc}
            1 & 1 \\ 1 & 1
        \end{array}\right)
        \cdot \frac{f_3}{2}
        \quad+\quad
        \left(\begin{array}{cc}
            1 & i \\ -i & 1
        \end{array}\right)
        \cdot \frac{f_4}{2}.
    \end{align*}
\end{proof}

\subsection{Flatness of Modules with Bases}
One of the difficulties in the module theory is that the tensor product \( ({-}) \otimes M \) does not necessarily preserve  monomorphisms.
A module \( M \) such that \( ({-)} \otimes M \) preserves monomorphisms are called a \emph{flat module} (in the standard module theory).

\begin{definition}[Flat, strongly flat]
    A \( \CQ \)-module \( M \) is \emph{flat} if \( ({-}) \otimes M \) preserves monomorphisms, \ie~\( \iota \otimes M \colon N \otimes M \longrightarrow L \otimes M \) is a monomorphism for every monomorphism \( \iota \colon N \hookrightarrow L \).
    A flat \( \CQ \)-module \( M \) is \emph{strongly flat} if \( ({-}) \otimes M \) preserves submodules, sum-reflecting submodules and hereditary submodules, \ie~\( \iota \otimes M \colon N \otimes M \longrightarrow L \otimes M \) is a submodule (resp.~sum-reflecting submodule, hereditary submodule) whenever so is \( \iota \colon N \hookrightarrow L \).
\end{definition}

If \( M \) is a dualisable object, then it is strongly flat.
In a symmetric monoidal category, an object \( X \) is \emph{dualisable} if there exist an object \( X^* \) and morphisms \( \eta \colon I \longrightarrow X \otimes X^* \) and \( \epsilon \colon X^* \otimes X \longrightarrow I \), where \( I \) is the tensor unit, such that both
\begin{equation*}
    X
    \:\cong\:
    I \otimes X
    \:\stackrel{\eta \otimes X}{\longrightarrow}\:
    X \otimes X^* \otimes X
    \:\stackrel{X \otimes \epsilon}{\longrightarrow}\:
    X \otimes I
    \:\cong\:
    X
\end{equation*}
and
\begin{equation*}
    X^*
    \:\cong\:
    X^* \otimes I
    \:\stackrel{X^* \otimes \eta}{\longrightarrow}\:
    X^* \otimes X \otimes X^*
    \:\stackrel{\epsilon \otimes X^*}{\longrightarrow}\:
    I \otimes X^*
    \:\cong\:
    X^*
\end{equation*}
are identities.
Then
\begin{equation*}
    \mathrm{Hom}(A \otimes X, B)
    \cong
    \mathrm{Hom}(A, X^* \otimes B)
\end{equation*}
natural in \( A \) and \( B \).

A dualisable object is flat.
Assume \( M \) is dualisable and \( \iota \colon X \hookrightarrow Y \).
Let \( \alpha,\beta \colon A \longrightarrow X \otimes M \) such that \( (\iota \otimes M) \circ \alpha = (\iota \otimes M) \circ \beta \).
By the transpose, \( \iota \circ (\alpha \otimes M^*) = \iota \circ (\beta \otimes M^*) \).
Since \( \iota \) is monic, \( \alpha \otimes M^* = \beta \otimes M^* \).
By the transpose followed by the composition of the counit, we have \( \alpha = \beta \).

Unfortunately many \( \CQ \)-modules of interest are not dualisable.
We introduce a weaker version of this notion.
\begin{definition}[Weakly dualisable]
    A \( \CQ \)-module \( M \) is \emph{weakly dualisable} if there exist a \( \CQ \)-module \( M^* \) and morphisms \( \eta \colon \yoneda(1) \longrightarrow X \otimes X^* \) and \( \epsilon \colon X^* \otimes X \longrightarrow \yoneda(1) \) such that both
    \begin{equation*}
        X
        \:\cong\:
        \yoneda(1) \otimes X
        \:\stackrel{\eta \otimes X}{\longrightarrow}\:
        X \otimes X^* \otimes X
        \:\stackrel{X \otimes \epsilon}{\longrightarrow}\:
        X \otimes \yoneda(1)
        \:\cong\:
        X
    \end{equation*}
    and
    \begin{equation*}
        X^*
        \:\cong\:
        X^* \otimes \yoneda(1)
        \:\stackrel{X^* \otimes \eta}{\longrightarrow}\:
        X^* \otimes X \otimes X^*
        \:\stackrel{\epsilon \otimes X^*}{\longrightarrow}\:
        \yoneda(1) \otimes X^*
        \:\cong\:
        X^*
    \end{equation*}
    are \( r \cdot ({-}) \) for some \( r > 0 \).
    (If \( r = 1 \), then \( M \) is dualisable.)
\end{definition}

\begin{lemma}
    \( \yoneda(n) \) is weakly dualisable.
\end{lemma}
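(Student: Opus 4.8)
The plan is to obtain the weak duality data for $\yoneda(n)$ by borrowing the genuine compact-closed structure of $\CPM$, rescaling it so that it fits inside the wide subcategory $\CQ$, and then transporting it along the strong monoidal Yoneda embedding. Concretely, I would first recall from \cref{sec:defining-completely-positive-map} that $\CPM$ is compact closed with $n^{*}=n$ and strict structure, so it carries a unit $\eta_{n}\in\CPM(1,n\otimes n)$ and a counit $\epsilon_{n}\in\CPM(n\otimes n,1)$ satisfying the triangle identities $(\ident_{n}\otimes\epsilon_{n})\circ(\eta_{n}\otimes\ident_{n})=\ident_{n}$ and $(\epsilon_{n}\otimes\ident_{n})\circ(\ident_{n}\otimes\eta_{n})=\ident_{n}$ on the nose, the associators and unitors being identities by strictness. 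Both $\eta_{n}$ and $\epsilon_{n}$ have finite operator norm, so I would pick positive reals $a\ge\max(1,\opnorm{\eta_{n}})$ and $b\ge\max(1,\opnorm{\epsilon_{n}})$ and set $\hat\eta_{n}\defe\tfrac{1}{a}\eta_{n}$ and $\hat\epsilon_{n}\defe\tfrac{1}{b}\epsilon_{n}$; these are completely positive with operator norm $\le 1$, hence trace-non-increasing, hence superoperators, \ie~morphisms of $\CQ$. Because $\otimes$ and $\circ$ on $\CPM$ are $\Real_{\ge 0}$-linear, rescaling the triangle identities gives $(\ident_{n}\otimes\hat\epsilon_{n})\circ(\hat\eta_{n}\otimes\ident_{n})=\tfrac{1}{ab}\,\ident_{n}$ and $(\hat\epsilon_{n}\otimes\ident_{n})\circ(\ident_{n}\otimes\hat\eta_{n})=\tfrac{1}{ab}\,\ident_{n}$; every morphism occurring here is a tensor or a composite of superoperators and $\CQ$ is closed under both, so these are equalities already in $\CQ$ (the inclusion $\CQ\hookrightarrow\CPM$ being faithful). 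In short, $n$ is a weak dual of itself inside $\CQ$, with cup $\hat\eta_{n}$, cap $\hat\epsilon_{n}$ and scalar $r\defe\tfrac{1}{ab}\in(0,1]$.

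Next I would transport this along $\yoneda$. As recalled in \cref{sec:intuitionistic}, $\yoneda\colon\CQ\longrightarrow\pshCQ$ is a strong monoidal, $\SMon$-enriched functor with $\yoneda(1)=I$ and coherence isomorphisms $\yoneda(n)\ptensor\yoneda(n)\cong\yoneda(n\otimes n)$. I would put $\yoneda(n)^{*}\defe\yoneda(n)$ and take $\eta\colon\yoneda(1)\to\yoneda(n)\ptensor\yoneda(n)$ and $\epsilon\colon\yoneda(n)\ptensor\yoneda(n)\to\yoneda(1)$ to be $\yoneda(\hat\eta_{n})$ and $\yoneda(\hat\epsilon_{n})$ pre- and post-composed with the coherence isomorphisms. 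Applying the functor $\yoneda$ to the two equalities above, and using that it preserves the action of $r\in[0,1]$ (being $\SMon$-enriched), the two triangle composites in the definition of weak dualisability become $\yoneda(\tfrac{1}{ab}\,\ident_{n})=\tfrac{1}{ab}\,\ident_{\yoneda(n)}$, so $\yoneda(n)$ is weakly dualisable with $r=\tfrac{1}{ab}$. A direct matrix computation shows $\opnorm{\eta_{n}}=\opnorm{\epsilon_{n}}=n$ for $n\ge 1$, so one may in fact take $r=1/n^{2}$; for $n=0$ the borrowed data already lies in $\CQ$ without rescaling and $\yoneda(0)$ is even dualisable.

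The step I expect to be the main obstacle is the coherence bookkeeping in the second paragraph: one must check that $\yoneda$ really sends a weak dual pair to a weak dual pair, \ie~that $(\yoneda(n)\ptensor\epsilon)\circ(\eta\ptensor\yoneda(n))$ equals $\yoneda\big((\ident_{n}\otimes\hat\epsilon_{n})\circ(\hat\eta_{n}\otimes\ident_{n})\big)$ once all the associators and unitors are inserted, and similarly for the other triangle. This is the standard ``strong monoidal functors preserve duals'' calculation, with the scalar $r$ carried along because $\yoneda$ is enriched, and nothing quantum-specific enters there. All the genuinely quantum content is confined to the first paragraph --- namely that $\CPM$ is compact closed whereas $\CQ$ is only symmetric monoidal --- which is precisely why a rescaling, and with it the weakening from ``dualisable'' to ``weakly dualisable'', is both necessary and sufficient.
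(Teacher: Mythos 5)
Your proposal is correct and follows essentially the same route as the paper: take the compact-closed unit and counit of \( \CPM \), rescale by the operator norms so they land in \( \CQ \), and transport along the strong monoidal Yoneda embedding, obtaining \( r = 1/(\opnorm{\eta}\opnorm{\epsilon}) \). The coherence bookkeeping you flag is unproblematic here since the monoidal structures on \( \CPM \) and \( \CQ \) are chosen strict and the paper phrases the transport via the Yoneda-lemma isomorphisms \( \CQ(1,n\otimes n)\cong\pshCQ(\yoneda(1),\yoneda(n)\ptensor\yoneda(n)) \), which is the same calculation.
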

\begin{proof}
    Recall that \( \CPM \) is a compact closed category, and hence every object \( n \in \CPM \) is dualisable.
    Hence there exists \( \eta \in \CPM(1, n \otimes n) \) and \( \epsilon \in \CPM(n \otimes n, 1) \) that satisfies the required laws in \( \CPM \) with \( r = 1 \).
    Let \( \eta' \defe \eta/\opnorm{\eta} \) and \( \epsilon' \defe \epsilon/\opnorm{\epsilon} \).
    Then \( \eta' \in \CQ(1, n \otimes n) \) and \( \epsilon' \in \CQ(n \otimes n, 1) \) witnesses that \( n \) in \( \CQ \) is weakly dualisable with \( n^* = n \) and \( r = \frac{1}{\opnorm{\eta}\opnorm{\epsilon}} \).
    Because the Yoneda embedding preserves the Day tensor product, by the Yoneda Lemma,
    \begin{equation*}
        \CQ(1, n \otimes n)
        \:\cong\:
        \pshCQ(\yoneda(1), \yoneda(n \otimes n))
        \:\cong\:
        \pshCQ(\yoneda(1), \yoneda(n) \otimes \yoneda(n))
    \end{equation*}
    and
    \begin{equation*}
        \CQ(n \otimes n, 1)
        \:\cong\:
        \pshCQ(\yoneda(n \otimes n), \yoneda(1))
        \:\cong\:
        \pshCQ(\yoneda(n) \otimes \yoneda(n), \yoneda(1)).
    \end{equation*}
    So the images of \( \eta' \) and \( \epsilon' \) by these isomorphisms satisfy the requirement.
\end{proof}

\begin{lemma}\label{lem:cpm-dualisable}
    \( \CPM({-}, n) \) is weakly dualisable.
\end{lemma}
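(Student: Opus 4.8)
The plan is to transport the weak-duality data of $\yoneda(n)$ along the finite completion, exploiting that $\CPM({-},n)$ is nothing but $\FComp{\yoneda(n)}$ by \cref{lem:finite-completion-of-yoneda}. I would take the dual to be $\CPM({-},n)$ itself (so, as in the compact closed category $\CPM$, $n^{*}=n$) and build the unit and counit from those of the compact closed structure on $\CPM$: recall $\CPM$ has $\eta_{\CPM}\in\CPM(1,n\otimes n)$ and $\epsilon_{\CPM}\in\CPM(n\otimes n,1)$ satisfying the triangle identities on the nose. Since $\CPM({-},n)$ is finitely complete, \cref{lem:tensor-preserves-completion} together with strong monoidality of $\yoneda$ yields $\CPM({-},n)\otimes\CPM({-},n)\cong\FComp{\yoneda(n)}\otimes\FComp{\yoneda(n)}\cong\FComp{\yoneda(n)\otimes\yoneda(n)}\cong\FComp{\yoneda(n\otimes n)}\cong\CPM({-},n\otimes n)$. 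Through this isomorphism, and through the canonical isomorphism $\pshCQ(\CPM({-},1),\CPM({-},m))\cong\CPM(1,m)$ (and its opposite-variance analogue), $\eta_{\CPM}$ and $\epsilon_{\CPM}$ induce morphisms between $\CPM({-},1)$ and $\CPM({-},n\otimes n)$; inserting the finite-completion unit $\yoneda(1)\hookrightarrow\CPM({-},1)$ where the monoidal unit is required gives the candidate $\eta$ and $\epsilon$.

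First I would isolate the two ingredients. One is the isomorphism $\CPM({-},n)\otimes\CPM({-},n)\cong\CPM({-},n\otimes n)$ just described, together with a computation locating $\eta_{\CPM}$ and $\epsilon_{\CPM}$ inside its first component. The other is \cref{lem:cpm-identity-cover}, which produces a finite resolution of the identity $\ident_n\le\sum_i x_i\circ h_i$ with $x_i\in\CPM(1,n)$ and $h_i\in\CPM(n,1)$; reading the $x_i$ as elements of $\CPM({-},n)$ and the $h_i$ as morphisms $\CPM({-},n)\longrightarrow\CPM({-},1)$, this exhibits $\CPM({-},n)$ as, up to the canonical preorder, a finite ``retract'' of a power of $\CPM({-},1)$ — precisely the finitary data needed to push the compact-closed triangle identities of $\CPM$ through the completion. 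I would then verify the two triangle equalities by chasing generators: it suffices to check them on the image of the completion unit (equivalently on elements built from $\ident_n$), since $\CPM({-},n)$ is generated under finite sums and the $\CQ$-action by such elements and all maps involved are $\Sigma$-monoid homomorphisms; faithfulness of $\FComp{({-})}$ makes this reduction legitimate. The scalar $r$ emerges as the product of the normalisation constants attached to the completion unit, exactly as the constant $r=1/(\opnorm{\eta}\opnorm{\epsilon})$ arose in the proof that $\yoneda(n)$ is weakly dualisable.

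The main obstacle I expect is the bookkeeping around the monoidal \emph{unit}. The finite completion does not preserve it — $\FComp{\yoneda(1)}=\CPM({-},1)\neq\yoneda(1)$ — so the compact-closed data of $\CPM$ only transports directly to a pairing against $\CPM({-},1)$, and the completion map $\yoneda(1)\hookrightarrow\CPM({-},1)$ must be threaded through both $\eta$ and $\epsilon$. Checking that the two insertions are mutually compatible, so that the round-trip composites remain scalar multiples of the identity with a \emph{nonzero} scalar, is the delicate point; it is the analogue of the normalisation step in the $\yoneda(n)$ case, now carried out after applying $\FComp{({-})}$ and reducing equations of morphisms out of $\CPM({-},n)$ to equations on the image of the completion unit. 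Everything else is a routine consequence of \cref{lem:finite-completion-of-yoneda}, \cref{lem:tensor-preserves-completion}, and the compact closedness of $\CPM$.
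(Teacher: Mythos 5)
Your overall strategy---transporting the compact-closed data of \( \CPM \) along \( \FComp{\yoneda({-})} \) via \cref{lem:finite-completion-of-yoneda} and \cref{lem:tensor-preserves-completion}---is clearly what the paper's one-line proof has in mind, and your \( \eta \) is unproblematic: \( \eta_{\CPM} \in \CPM(1,n\otimes n) = \CPM({-},n\otimes n)_1 \cong \pshCQ(\yoneda(1), \CPM({-},n\otimes n)) \) by the Yoneda Lemma, and composing with \( \CPM({-},n\otimes n)\cong\CPM({-},n)\otimes\CPM({-},n) \) gives a morphism out of \( \yoneda(1) \) as the definition requires.

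The gap is in \( \epsilon \), precisely at the point you call ``delicate'', and it cannot be repaired by checking on generators. Postcomposition with \( \epsilon_{\CPM} \) yields a morphism \( \CPM({-},n\otimes n)\longrightarrow\CPM({-},1) \), but the definition of weak dualisability demands the codomain \( \yoneda(1) \), and the completion inclusion \( \yoneda(1)\hookrightarrow\CPM({-},1) \) points the wrong way; it has no retraction, because \( \pshCQ(\CPM({-},m),\yoneda(1)) = 0 \) for every \( m \ge 1 \). Indeed, any such morphism \( f \) is determined by \( \epsilon_0 \defe f_m(\ident_m) \in \CQ(m,1) \) via \( f_k(\varphi) = \sum_{i=1}^{N} \epsilon_0 \circ (\varphi/N) \), and for this sum to be defined in \( \CQ(k,1) \) the value \( \epsilon_0\circ\varphi \) must be trace-non-increasing for \emph{every} \( \varphi\in\CPM(k,m) \); since \( \CPM(k,m) \) is unbounded this forces \( \epsilon_0 = 0 \). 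So the duality data you construct genuinely lives over the unit \( \CPM({-},1) \) of the finitely-complete subcategory (where \( \FComp{\yoneda({-})} \) is strong monoidal and the claim is immediate from compact closedness of \( \CPM \)) and cannot be pushed back to \( \yoneda(1) \). To salvage the argument you must either prove dualisability relative to the unit \( \CPM({-},1) \) and then re-verify that the transposition argument of \cref{lem:weak-compact-closed-flat} survives the change of unit (the step \( \yoneda(1)\otimes X \cong X \) is used there), or establish the strong flatness of \( \CPM({-},n) \)---the only downstream use, in \cref{lem:based-completion-strongly-flat}---by some other route.
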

\begin{proof}
    Obivously they are dualisable.
\end{proof}

\begin{lemma}\label{lem:weak-compact-closed-flat}
    Let \( M \) be a weakly dualisable \( \CQ \)-module.
    Then \( M \) is strongly flat.
\end{lemma}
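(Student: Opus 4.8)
The plan is to extract from the weak-dualisability data a pair of \emph{transpose} operations and to reduce the three closure properties to the (essentially trivial) fact that \( \pshCQ(Y,-) \) preserves monomorphisms, sum-reflecting submodules and hereditary submodules for every \( \CQ \)-module \( Y \). Fix the witnesses \( M^* \), \( \eta\colon\yoneda(1)\to M\otimes M^* \), \( \epsilon\colon M^*\otimes M\to\yoneda(1) \) and \( r>0 \). First I would note that we may assume \( r\le 1 \): rescaling \( \epsilon \) by a suitable \( c\in(0,1] \) replaces \( r \) by \( cr \), and if \( r>1 \) we may even make \( r=1 \), recovering genuine dualisability. For \( \CQ \)-modules \( X, N \) define
\[
  T_{X,N}\colon\pshCQ(X,N\otimes M)\longrightarrow\pshCQ(X\otimes M^*,N),\qquad S_{X,N}\colon\pshCQ(X\otimes M^*,N)\longrightarrow\pshCQ(X,N\otimes M)
\]
by the usual formulas, e.g.\ \( T(\gamma)=(1_N\otimes\epsilon)\circ(1_N\otimes\mathbf{symm})\circ(\gamma\otimes 1_{M^*}) \) and dually for \( S \) via \( \eta \). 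These are \( \Sigma \)-monoid homomorphisms (since \( \otimes \) is \( \SMon \)-enriched and composition is bilinear), natural in \( X \) by pre-composition and in \( N \) by post-composition; in particular \( T_{X,L}((\iota\otimes M)\circ\gamma)=\iota\circ T_{X,N}(\gamma) \) and \( S_{X,L}(\iota\circ\kappa)=(\iota\otimes M)\circ S_{X,N}(\kappa) \) for \( \iota\colon N\to L \), and a direct diagram chase using the first zigzag identity gives \( S_{X,N}\circ T_{X,N}=r\cdot(-) \) (only this one zigzag is needed). I would also record the easy lemma that \( \pshCQ(Y,\iota) \) is a monomorphism (resp.\ sum-reflecting, resp.\ downward-closed) submodule whenever \( \iota \) is, since sums and the canonical order in \( \pshCQ(Y,-) \) are computed componentwise and pointwise, together with the fact (in the spirit of \cref{lem:real-action-invertible}) that in any component of a \( \CQ \)-module the action of a positive real is cancellable and reflects sums, i.e.\ \( \sum_i r\,x_i=r\,y \) implies \( \sum_i x_i=y \).

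\emph{Flatness.} Given a monomorphism \( \iota\colon N\hookrightarrow L \) and \( \alpha,\beta\colon X\to N\otimes M \) with \( (\iota\otimes M)\circ\alpha=(\iota\otimes M)\circ\beta \), apply \( T \) and use naturality to get \( \iota\circ T\alpha=\iota\circ T\beta \), hence \( T\alpha=T\beta \) as \( \iota \) is monic, hence \( r\alpha=r\beta \) by applying \( S \), hence \( \alpha=\beta \) by cancellability of the scalar action in \( X\multimap(N\otimes M) \). Taking \( X=\yoneda(m) \) and invoking \cref{lem:monomorphism-characterisation}, \( \iota\otimes M \) is a monomorphism; in particular \( (-)\otimes M \) preserves submodules.

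\emph{Sum-reflecting and hereditary.} For \( \iota \) sum-reflecting, identify (via Yoneda) elements of \( (N\otimes M)_m \) with morphisms out of \( \yoneda(m) \). From \( \sum_i g_i=h \) with \( g_i=(\iota\otimes M)\circ\tilde g_i \), \( h=(\iota\otimes M)\circ\tilde h \), apply \( T \) and naturality to get \( \sum_i\iota\circ T(\tilde g_i)=\iota\circ T(\tilde h) \) in \( \pshCQ(\yoneda(m)\otimes M^*,L) \), whence \( \sum_i T(\tilde g_i)=T(\tilde h) \) by sum-reflection of \( \pshCQ(\yoneda(m)\otimes M^*,\iota) \); applying \( S \) gives \( \sum_i r\,\tilde g_i=r\,\tilde h \), so \( \sum_i\tilde g_i=\tilde h \) because the scalar action reflects sums. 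For the downward-closed (hence hereditary) case, start from \( g+g'=(\iota\otimes M)(\tilde h) \) in \( (L\otimes M)_m \); apply \( T \) and lift via downward-closedness of \( \pshCQ(\yoneda(m)\otimes M^*,\iota) \) to get \( T(g)=\iota\circ\kappa \), \( T(g')=\iota\circ\kappa' \) with \( \kappa+\kappa'=T(\tilde h) \) (using that \( \iota \) is monic). Applying \( S \) yields \( rg=(\iota\otimes M)(\mu) \), \( rg'=(\iota\otimes M)(\mu') \) and \( \mu+\mu'=r\tilde h \) in \( (N\otimes M)_m \). Now \( \mu\le r\tilde h \); after rescaling so that a positive multiple of \( r \) equals \( 1/N \) with \( N\in\Nat \), multiplying \( \mu+\mu'=r\tilde h \) by \( N \) shows \( N\mu \) is defined, and then \( \nu\defe N\mu\in(N\otimes M)_m \) satisfies \( r\nu=\mu \); therefore \( r\cdot(\iota\otimes M)(\nu)=(\iota\otimes M)(\mu)=rg \), so \( (\iota\otimes M)(\nu)=g \) by cancellability, i.e.\ \( g \) lies in the image of \( (\iota\otimes M)_m \).

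The main obstacle is exactly this last point. The zigzag scalar \( r \) is genuinely smaller than \( 1 \) in the cases of interest (for instance \( r=1/n^2 \) when \( M=\yoneda(n) \)), so one cannot simply reduce to the dualisable case, and ``\( rg \) lies in the image of \( \iota\otimes M \)'' does not by itself yield ``\( g \) lies in the image''. The resolution sketched above hinges on the transpose producing \emph{matching} witnesses \( \mu,\mu' \) with \( \mu+\mu'=r\tilde h \) \emph{inside} \( N\otimes M \), which is what lets one divide by \( r \) on the domain side; turning this into a clean argument requires a modest amount of routine \( \Sigma \)-monoid bookkeeping (sub-sums of a defined sum are defined, bilinearity of the scalar action, associativity of countable sums) together with the cancellation/divisibility facts modelled on \cref{lem:real-action-invertible}. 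Everything else — the construction and naturality of \( S,T \), the zigzag computation \( S\circ T=r\cdot(-) \), and the componentwise preservation properties of \( \pshCQ(Y,-) \) — is straightforward.
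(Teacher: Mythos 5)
Your proposal is correct and follows essentially the same route as the paper's proof: transpose along the weak-duality data, use the corresponding property of \( \iota \) (component-wise, via the hom-functor), transpose back to pick up the scalar \( r \), and then cancel/divide by \( r \) using the facts behind \cref{lem:real-action-invertible}. The paper only writes out the sum-reflection case and dismisses the rest as "similar"; your explicit treatment of the downward-closedness step — producing matching witnesses \( \mu+\mu'=r\tilde h \) inside \( N\otimes M \) and then dividing by \( r \) via the \( N\mu \) trick — fills in precisely the detail the paper leaves implicit, and is sound.
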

\begin{proof}
    Similar to the case of dualisable object, which is mentioned before.

    Consider, for example, the case of sum-reflection.
    Assume a sum-reflecting submodule \( \iota \colon X \longrightarrow Y \) and
    \begin{equation*}
        (M \otimes \iota) \circ y = \sum_i (M \otimes \iota) \circ x_i.
    \end{equation*}
    Transposing the both sides, we have
    \begin{equation*}
        \iota \circ (\epsilon \otimes X) \circ (M^* \otimes y)
        =
        \sum_i \iota \circ (\epsilon \otimes X) \circ (M^* \otimes x_i).
    \end{equation*}
    By the sum-reflection of \( \iota \),
    \begin{equation*}
        (\epsilon \otimes X) \circ (M^* \otimes y)
        =
        \sum_i (\epsilon \otimes X) \circ (M^* \otimes x_i).
    \end{equation*}
    Then by transposing the both side in the other direction,
    \begin{equation*}
        (M \otimes \epsilon \otimes X) \circ (\eta \otimes y)
        =
        \sum_i (M \otimes \epsilon \otimes X) \circ (\eta \otimes x_i).
    \end{equation*}
    By the weak duality,
    \begin{equation*}
        r\, y
        =
        \sum_i r\, x_i.
    \end{equation*}
    Since \( r > 0 \), by \cref{lem:real-action-invertible},
    \begin{equation*}
        y
        =
        \sum_i x_i.
    \end{equation*}
\end{proof}

\begin{lemma}\label{lem:based-strongly-flat}
    If \( M \in \basedCQ \), then \( M \) is strongly flat.
\end{lemma}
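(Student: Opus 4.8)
The plan is to deduce strong flatness of an arbitrary $M \in \basedCQ$ from strong flatness of the representable modules $\yoneda(n)$ — which we already have, since each $\yoneda(n)$ is weakly dualisable and hence strongly flat by \cref{lem:weak-compact-closed-flat} — by propagating the relevant closure property along a basis decomposition with the help of \cref{lem:diagonal-monotone}.

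First I would record that, since $M$ lies in $\basedCQ$, it carries a representable basis $(n_i, e_i, f_i)_{i \in I}$; identifying each $e_i \in M_{n_i}$ with the morphism $\hat e_i \colon \yoneda(n_i) \longrightarrow M$ as in \cref{def:general-basis,def:yoneda-basis}, this amounts to the identity $\ident_M = \sum_{i \in I} \hat e_i \circ f_i$ in $\pshCQ(M,M)$. Fix any monomorphism $\iota \colon X \hookrightarrow Y$ in $\pshCQ$. Tensoring this identity decomposition with $\ident_X$ (resp. $\ident_Y$) and using that $\ptensor$ is $\SMon$-enriched gives $\ident_{X \ptensor M} = \sum_{i} (\ident_X \ptensor \hat e_i) \circ (\ident_X \ptensor f_i)$ and the analogous identity over $Y$; here the sums are genuinely defined, not merely Kleene-bounded, because the left-hand sides $\ident_X \ptensor \ident_M$ and $\ident_Y \ptensor \ident_M$ are defined and a $\SMon$-enriched operation can only lose definedness in the direction of a Kleene inequality.

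Then I would instantiate \cref{lem:diagonal-monotone} with $N := X \ptensor M$, $N' := Y \ptensor M$, $N_i := X \ptensor \yoneda(n_i)$, $N'_i := Y \ptensor \yoneda(n_i)$, $f := \iota \ptensor \ident_M$, $f_i := \iota \ptensor \ident_{\yoneda(n_i)}$, $\varphi_i := \ident_X \ptensor f_i$, $\varphi'_i := \ident_Y \ptensor f_i$, $e_i := \ident_X \ptensor \hat e_i$ and $e'_i := \ident_Y \ptensor \hat e_i$. Bifunctoriality of $\ptensor$ (the interchange law) makes both families of squares commute — each composite in question reduces to $\iota \ptensor f_i$ or to $\iota \ptensor \hat e_i$ — and the two required sum identities are exactly the ones established in the previous paragraph. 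Each $f_i = \iota \ptensor \ident_{\yoneda(n_i)}$ is a monomorphism (resp. a sum-reflecting, resp. a hereditary monomorphism) whenever $\iota$ is, precisely because $\yoneda(n_i)$ is strongly flat. The three cases of \cref{lem:diagonal-monotone} then upgrade this to the statement that $\iota \ptensor \ident_M$ is a monomorphism (resp. sum-reflecting, resp. hereditary) whenever $\iota$ is; as a submodule is by definition just a monomorphism, this is exactly the assertion that $M$ is strongly flat.

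I do not expect a genuine obstacle here: the content is the diagram chase, and the only delicate point is the bookkeeping around definedness of the infinite sums (handled by the observation that $\SMon$-enriched operations carry defined sums to defined sums), together with the routine verification of the commuting squares via the interchange law for $\ptensor$.
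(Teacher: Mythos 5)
Your proposal is correct and follows essentially the same route as the paper's own proof: decompose $\ident_M$ via the representable basis, observe that each $\iota \ptensor \ident_{\yoneda(n_i)}$ inherits the relevant property from the (weak dualisability and hence) strong flatness of $\yoneda(n_i)$ via \cref{lem:weak-compact-closed-flat}, and transfer it to $\iota \ptensor \ident_M$ by the three cases of \cref{lem:diagonal-monotone}. Your extra care about the definedness of the tensored sums and the commutativity of the squares is sound and only makes explicit what the paper leaves implicit.
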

\begin{proof}
    Suppose that \( \smod \) has a basis \( (n_i, e_i, \varphi_i)_{i \in I} \) and let \( \iota \colon \smodb \hookrightarrow \smodb' \) be a monomorphism.
    Then
    \begin{equation*}
        \xymatrix@C=120pt{
            N \otimes M \ar[d]^{\iota \otimes M} \ar[r]^{N \otimes \varphi_i} & N_i \otimes \yoneda(n_i) \ar[d]^{\iota \otimes \yoneda(n_i)}
            \\
            N' \otimes M \ar[r]^{N' \otimes \varphi_i} & N'_i \otimes \yoneda(n_i)
        }
    \end{equation*}
    and
    \begin{equation*}
        \xymatrix@C=120pt{
            N \otimes M \ar[d]^{\iota \otimes M} & N_i \otimes \yoneda(n_i) \ar[l]^{N \otimes e_i} \ar[d]^{\iota \otimes \yoneda(n_i)}
            \\
            N' \otimes M & N'_i \otimes \yoneda(n_i) \ar[l]^{N' \otimes e_i}
        }
    \end{equation*}
    commute.
    The morphism \( \iota \otimes \yoneda(n_i) \) is monic by \cref{lem:weak-compact-closed-flat}, so \( \iota \otimes M \) is monic by \cref{lem:diagonal-monotone}(1).
    
    If \( \iota \) is a sum-reflecting submodule (resp.~a hereditary submodule), then \( \iota \ptensor \yoneda(n_i) \) is a sum-reflecting submodule (resp.~a hereditary submodule) by \cref{lem:weak-compact-closed-flat}, hence so is \( \iota \ptensor M \) by \cref{lem:diagonal-monotone}(2) (resp.~\cref{lem:diagonal-monotone}(3)).
\end{proof}

\begin{lemma}\label{lem:based-completion-strongly-flat}
    For \( M \in \basedCQ \), its finite completion \( \FComp{M} \) is strongly flat.
\end{lemma}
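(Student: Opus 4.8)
The plan is to lift the argument of \cref{lem:based-strongly-flat} along the finite completion, exploiting that the coefficients of a suitable ``basis'' of \( \FComp{M} \) are the modules \( \CPM({-},n) \), which are weakly dualisable.

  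Since \( M \in \basedCQ \), I would first fix a representable basis \( (n_i, e_i, \varphi_i)_{i \in I} \) of \( M \), so that \( \ident_M = \sum_{i \in I} e_i \circ \varphi_i \) with \( e_i \colon \yoneda(n_i) \longrightarrow M \) and \( \varphi_i \colon M \longrightarrow \yoneda(n_i) \).  Applying the finite completion functor and identifying \( \FComp{\yoneda(n_i)} \cong \CPM({-},n_i) \) by \cref{lem:finite-completion-of-yoneda}, I obtain morphisms \( \hat{e}_i \colon \CPM({-},n_i) \longrightarrow \FComp{M} \) and \( \hat{\varphi}_i \colon \FComp{M} \longrightarrow \CPM({-},n_i) \) as the images of \( e_i \) and \( \varphi_i \).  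The key step is to establish
  \[
    \ident_{\FComp{M}} \quad=\quad \sum_{i \in I} \hat{e}_i \circ \hat{\varphi}_i .
  \]
  By naturality of the unit \( \iota_M \colon M \longrightarrow \FComp{M} \) of the adjunction, \( \hat{e}_i \circ \hat{\varphi}_i \circ \iota_M = \iota_M \circ e_i \circ \varphi_i \) for each \( i \).  Every element of \( (\FComp{M})_k = \FComp{M_k} \) is a finite sum of elements of the form \( \iota_M(x) \) with \( x \in M_k \), so it suffices to evaluate both sides of the displayed equation at such an element; and since \( \iota_M \) is a \( \CQ \)-module morphism, hence componentwise a \( \Sigma \)-monoid homomorphism, we get \( \sum_i (\hat{e}_i \circ \hat{\varphi}_i)_k(\iota_M(x)) = \sum_i \iota_M\big((e_i \circ \varphi_i)_k(x)\big) = \iota_M(x) \), the last equality by the Kleene inequality \( \iota_M(\sum_i a_i) \Kle \sum_i \iota_M(a_i) \) applied to the defined sum \( \sum_i (e_i \circ \varphi_i)_k(x) = x \).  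A routine use of the \( \Sigma \)-monoid axioms then extends this to all of \( (\FComp{M})_k \); in particular the countable sum \( \sum_i \hat{e}_i \circ \hat{\varphi}_i \) converges in the merely finitely complete \( \FComp{M} \).

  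Granting this decomposition, the rest follows the pattern of \cref{lem:based-strongly-flat}.  Given a monomorphism (resp.\ sum-reflecting, resp.\ hereditary submodule) \( \iota \colon N \hookrightarrow N' \), tensoring the decomposition with \( N \) and with \( N' \) (using the \( \SMon \)-enrichment of \( \ptensor \)) yields \( \ident_{N \ptensor \FComp{M}} = \sum_i (N \ptensor \hat{e}_i) \circ (N \ptensor \hat{\varphi}_i) \) and the analogue for \( N' \), together with the two families of commuting squares relating \( \iota \ptensor \FComp{M} \) to \( \iota \ptensor \CPM({-},n_i) \) along the maps \( ({-}) \ptensor \hat{\varphi}_i \) and \( ({-}) \ptensor \hat{e}_i \).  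Each \( \CPM({-},n_i) \) is weakly dualisable by \cref{lem:cpm-dualisable}, hence strongly flat by \cref{lem:weak-compact-closed-flat}, so each \( \iota \ptensor \CPM({-},n_i) \) is a monomorphism (resp.\ sum-reflecting, resp.\ hereditary submodule); applying \cref{lem:diagonal-monotone}(1) (resp.\ (2), resp.\ (3)) then transfers the property to \( \iota \ptensor \FComp{M} \).  This gives that \( \FComp{M} \) is strongly flat.

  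The step I expect to be the main obstacle is the verification that \( (n_i, \hat{e}_i, \hat{\varphi}_i)_{i \in I} \) is a basis of \( \FComp{M} \): the finite completion functor does not preserve countable sums, so one must argue carefully, using that \( \iota_M \) is a \( \CQ \)-module morphism (componentwise a \( \Sigma \)-monoid homomorphism) and that its image generates \( \FComp{M} \) componentwise, that \( \sum_i \hat{e}_i \circ \hat{\varphi}_i \) is indeed defined and equal to \( \ident_{\FComp{M}} \).  Once this is in place, everything else is a straightforward transcription of \cref{lem:based-strongly-flat}.
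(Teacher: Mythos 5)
Your proposal is correct and follows essentially the same route as the paper: push a representable basis of \( M \) through the finite completion, identify the coefficient objects \( \FComp{\yoneda(n_i)} \) with the weakly dualisable modules \( \CPM({-},n_i) \) via \cref{lem:finite-completion-of-yoneda} and \cref{lem:cpm-dualisable}, and rerun the argument of \cref{lem:based-strongly-flat} via \cref{lem:diagonal-monotone}. The only difference is that you spell out the verification of \( \ident_{\FComp{M}} = \sum_i \hat{e}_i \circ \hat{\varphi}_i \) (evaluating on generators \( \iota_M(x) \) and using the Kleene inequality), which the paper dispatches with a one-line appeal to functoriality; your more careful treatment of that step is sound.
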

\begin{proof}
    Assume a representable basis \( (n_b, \ket{b}, \bra{b})_{b \in B} \) of \( M \).
    Then \( \sum_b \ket{b} \bra{b} = \ident_M \) by definition of basis.
    By the functoriality of the finite completion, \( \sum_b \FComp{\ket{b}} \circ \FComp{\bra{b}} = \ident_{\FComp{M}} \).
    So \( \FComp{M} \) has a \( \{ \FComp{\yoneda(n)} \mid n \in \Nat \} \)-basis \( (\FComp{\yoneda(n_b)}, \FComp{\ket{b}}, \FComp{\bra{b}})_{b \in B} \).
    Since \( \FComp{\yoneda(n)} \cong \CPM({-}, n) \) (\cref{lem:finite-completion-of-yoneda}) and \( \CPM({-},n) \) is (weakly) dualisable (\cref{lem:cpm-dualisable}), the same argument as \cref{lem:based-strongly-flat} proves the claim.
\end{proof}

\subsection{Exponential}
We fix a \( \CQ \)-module \( M \) and assume that \( M \cong \neg M_0 \) for some \( M_0 \).

Let \( M^{\odot n} \hookrightarrow M^{\ptensor n} \) be the equaliser of \( n! \) permutations \( \sigma \colon M^{\ptensor n} \longrightarrow M^{\ptensor n} \).
By the comonoid structure of \( \cofreeexp M \), we have the canonical map \( d^{(n)} \colon \cofreeexp M \longrightarrow (\cofreeexp M)^{\ptensor n} \longrightarrow M^{\ptensor n} \).
By the cocommutativity of the comonoid structure of \( \cofreeexp M \), this map factors as \( \cofreeexp M \stackrel{\hat{d}^{(n)}}{\longrightarrow} M^{\odot n} \hookrightarrow M^{\ptensor n} \).
Let \( \cofreeexp M \stackrel{\bar{d}^{(n)}}{\longrightarrow} \S^{(n)} M \hookrightarrow M^{\odot n}\) be the covering-hereditary factorisation of \( \hat{d}^{(n)} \).
\begin{equation*}
    \xymatrix{
        \cofreeexp M \ar[rrr]^{\delta^{(n)}} \ar[d]_{\bar{d}^{(n)}} \ar[dr]^{\hat{d}^{(n)}} \ar[drrr]^{d^{(n)}} & & & (\cofreeexp M)^{\ptensor n} \ar[d]^{\mathit{der}^{\ptensor n}} \\
        \S^{(n)} M \ar@{^{(}->}[r] & M^{\odot n} \ar@{^{(}->}[rr] & & M^{\ptensor n}
    }
\end{equation*}
Let \( d \defe \sum_n \Inj^{(n)} d^{(n)} \colon \cofreeexp M \longrightarrow \prod_n M^{\ptensor n} \), \( \hat{d} \defe \sum_n \Inj^{(n)} \hat{d}^{(n)} \colon \cofreeexp M \longrightarrow \prod_n M^{\odot n} \) and \( \cofreeexp M \stackrel{\bar{d}}{\longrightarrow} \S M \hookrightarrow \prod_n M^{\odot n} \) be the covering-hereditary factorisation of \( \hat{d} \).

We prove that \( \cofreeexp M \cong \S M \).

Since
\begin{equation*}
    \xymatrix{
        {\cofreeexp}M \ar[r]^{\bar{d}} \ar[dr]_{\bar{d}^{(n)}} & \S M \ar@{^{(}->}[r] & \prod_n M^{\odot n} \ar[d]^{\Proj^{(n)}} \\
        & \S^{(n)} M \ar@{^{(}->}[r] & M^{\odot n}
    }
\end{equation*}
commutes, \( \bar{d} \) is a covering map and \( \S^{(n)} M \longrightarrow M^{\ptensor n} \) is a hereditary submodule, there exists a (unique) morphism \( \S M \longrightarrow \S^{(n)} M \), which we also write as \( \Proj^{(n)} \), such that the triangle and square in
\begin{equation*}
    \xymatrix{
        {\cofreeexp}M \ar[r]^{\bar{d}} \ar[dr]_{\bar{d}^{(n)}} & \S M \ar@{^{(}->}[r] \ar[d]^{\Proj^{(n)}} & \prod_n M^{\odot n} \ar[d]^{\Proj^{(n)}} \\
        & \S^{(n)} M \ar@{^{(}->}[r] & M^{\odot n}
    }
\end{equation*}
commute.
Both \( \bar{d} \) and \( \bar{d}^{(n)} \) belong to the left class (\ie~the class of covering maps) and \( \bar{d}^{(n)} = \Proj^{(n)} \circ \bar{d} \), by the cancellation property of the left class of an orthogonal factorisation system, \( \Proj^{(n)} \) belongs to the left class (\ie~\( \Proj^{(n)} \) is covering).

\begin{lemma}
    There exists a unique \( f \colon \S^{(n)} M \longrightarrow \S M \) such that
    \begin{equation*}
        \xymatrix{
            & \S M \ar@{^{(}->}[r] & \prod_n M^{\odot n} \\
            & \S^{(n)} M \ar[u]_f \ar@{^{(}->}[r] & M^{\odot n} \ar[u]^{\Inj^{(n)}}
        }
    \end{equation*}
    commutes.
\end{lemma}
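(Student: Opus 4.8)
The plan is to realise \( f \) as the diagonal fill-in of a lifting problem for the orthogonal factorisation system of \cref{lem:orthogonal-factorisation-system} (covering morphisms on the left, hereditary monomorphisms on the right). Uniqueness of \( f \) is immediate, since any \( f \) fitting into the stated square is postcomposed with the monomorphism \( \S M \hookrightarrow \prod_n M^{\odot n} \); so all the content lies in producing one such \( f \). Write \( \iota \colon \S M \hookrightarrow \prod_n M^{\odot n} \) and \( \iota_n \colon \S^{(n)} M \hookrightarrow M^{\odot n} \) for the two hereditary inclusions coming from the covering-hereditary factorisations of \( \hat{d} \) and \( \hat{d}^{(n)} \); recall that \( \bar{d} \) and \( \bar{d}^{(n)} \) are covering, that \( \hat{d} = \iota \circ \bar{d} \) and \( \hat{d}^{(n)} = \iota_n \circ \bar{d}^{(n)} \), and that \( \Proj^{(n)} \circ \hat{d} = \hat{d}^{(n)} \) by definition of \( \hat{d} \) (where \( \Proj^{(n)} \) here denotes the product projection).

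First I would factor the ``degree-\( n \) truncation'' \( \Inj^{(n)} \circ \hat{d}^{(n)} \colon \cofreeexp M \longrightarrow \prod_n M^{\odot n} \) through \( \S M \). Put \( e \defe \Inj^{(n)} \circ \Proj^{(n)} \colon \prod_n M^{\odot n} \longrightarrow \prod_n M^{\odot n} \). Since \( \sum_n \Inj^{(n)}\Proj^{(n)} = \ident \) in the \( \Sigma \)-monoid of endomorphisms of \( \prod_n M^{\odot n} \), we get \( e \le \ident \), hence \( e \circ \iota \le \iota \) in \( \pshCQ(\S M, \prod_n M^{\odot n}) \). Because \( \S M \) is downward-closed in \( \prod_n M^{\odot n} \), the morphism \( e \circ \iota \) has image inside \( \S M \); because \( \S M \) is also sum-reflecting, its corestriction is a \( \CQ \)-module morphism \( q \colon \S M \longrightarrow \S M \) with \( \iota \circ q = e \circ \iota \). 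Setting \( \alpha \defe q \circ \bar{d} \colon \cofreeexp M \longrightarrow \S M \) and using \( \hat{d} = \iota \circ \bar{d} \) and \( \Proj^{(n)} \circ \hat{d} = \hat{d}^{(n)} \),
\begin{equation*}
    \iota \circ \alpha \;=\; e \circ \iota \circ \bar{d} \;=\; e \circ \hat{d} \;=\; \Inj^{(n)} \circ \Proj^{(n)} \circ \hat{d} \;=\; \Inj^{(n)} \circ \hat{d}^{(n)}.
\end{equation*}

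Then I would feed the commuting square
\begin{equation*}
    \xymatrix{
        \cofreeexp M \ar[d]_{\bar{d}^{(n)}} \ar[r]^{\alpha} & \S M \ar[d]^{\iota} \\
        \S^{(n)} M \ar[r]_{\Inj^{(n)} \circ \iota_n} & \prod_n M^{\odot n}
    }
\end{equation*}
into \cref{lem:orthogonal-factorisation-system}: it commutes since \( \iota \circ \alpha = \Inj^{(n)} \circ \hat{d}^{(n)} = \Inj^{(n)} \circ \iota_n \circ \bar{d}^{(n)} \), the morphism \( \bar{d}^{(n)} \) is covering, and \( \iota \) is a hereditary monomorphism, so there is a unique \( f \colon \S^{(n)} M \longrightarrow \S M \) with \( f \circ \bar{d}^{(n)} = \alpha \) and \( \iota \circ f = \Inj^{(n)} \circ \iota_n \). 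The second identity is precisely the square asserted in the statement.

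The only step that is not purely formal is the construction of \( q \) (equivalently, of \( \alpha \)): one must check that a morphism whose image lies in a hereditary submodule corestricts to a morphism into that submodule --- downward-closedness supplies the underlying-set corestriction and sum-reflection makes it a \( \Sigma \)-monoid homomorphism, the \( \CQ \)-action being inherited. This is exactly why it matters that \( \S M \) arose as the \emph{hereditary} half of a factorisation rather than an arbitrary submodule; everything else is bookkeeping with the defining equations of \( \hat{d}, \bar{d} \) and their degree-\( n \) analogues and with the axioms of the factorisation system.
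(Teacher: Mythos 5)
Your proof is correct, and it ultimately rests on the same two facts as the paper's argument --- that \( \Inj^{(n)}\Proj^{(n)} \le \ident \) on \( \prod_n M^{\odot n} \) and that \( \S M \hookrightarrow \prod_n M^{\odot n} \) is hereditary --- but it packages them differently. The paper argues element-wise: since both composites in the square are hereditary submodules of \( \prod_n M^{\odot n} \), it suffices to show that each \( x \in (\S^{(n)}M)_k \) lands in \( (\S M)_k \), and this is done by invoking the covering property of \( \Proj^{(n)} \colon \S M \longrightarrow \S^{(n)} M \) (obtained just before the lemma from the cancellation property of the left class) to dominate \( x \) by \( \Proj^{(n)}(y_0) \) for some \( y_0 \in \S M \), then applying downward-closure to \( \Inj^{(n)}(x) \le \Inj^{(n)}\Proj^{(n)}(\iota(y_0)) \le \iota(y_0) \). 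You instead set up a genuine lifting problem for the orthogonal factorisation system, with the covering map \( \bar{d}^{(n)} \) on the left and the hereditary mono \( \iota \colon \S M \hookrightarrow \prod_n M^{\odot n} \) on the right; the price is that you must first manufacture the top edge \( \alpha = q \circ \bar{d} \) by corestricting \( \Inj^{(n)}\Proj^{(n)} \circ \iota \) to \( \S M \), which you justify correctly (downward-closure supplies the underlying function, sum-reflection and monicity of \( \iota \) make it a \( \CQ \)-module morphism), and you correctly note that the naive choice of top edge \( \bar{d} \) would not make the square commute. Your route has the small advantage of not needing the auxiliary observation that \( \Proj^{(n)} \colon \S M \longrightarrow \S^{(n)}M \) is covering; the paper's route is shorter given that this observation has already been made. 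Both are valid, and your uniqueness argument (postcomposition with the mono \( \iota \)) matches the paper's implicit one.
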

\begin{proof}
    Since \( \S M \stackrel{\iota}{\hookrightarrow} \prod_{n} M^{\odot n} \) and \( \S^{(n)}M \stackrel{\iota}{\hookrightarrow} M^{\odot n} \stackrel{\Inj^{(n)}}{\hookrightarrow} \prod_n M^{\odot n} \) are hereditary submodules, it suffices to show that \( x \in (\S^{(n)}M)_k \) implies \( x \in (\S M)_k \).
    By explicitly writing the embeddings, the goal is to prove that, for every \( x \in (\S^{(n)}M)_k \), there exists \( y \in (\S M)_k \) such that \( (\Inj^{(n)}\iota')(x) = \iota(y) \).

    Assume \( x \in (\S^{(n)}M)_k \).
    Since \( \Proj^{(n)} \) is covering, there exists \( y_0 \in (\S M)_k \) such that \( \S^{(n)} M \models x \le \Proj^{(n)}(y_0) \).
    Then \( M^{\odot n} \models \iota'(x) \le (\iota' \Proj^{(n)})(y_0) = \Proj^{(n)}\iota(y_0) \).
    Hence \( \prod_n M^{\odot n} \models \Inj^{(n)}\iota'(x) \le \Inj^{(n)}\Proj^{(n)}\iota(y_0) \le \iota(y_0) \) since \( \Inj^{(n)}\Proj^{(n)} \le \sum_n \Inj^{(n)}\Proj^{(n)} = \ident \).
    As \( \iota \) is a hereditary submodule, there exists \( y_1 \) such that \( \Inj^{(n)}\iota'(x) = \iota(y_1) \) as required.
\end{proof}
We write \( \Inj^{(n)} \) for the above \( f \).
We have \( \Proj^{(n)} \circ \Inj^{(n)} = \ident_{\S^{(n)}M} \).

Before comparing \( \S M \) with \( \cofreeexp M \), we prove that \( \S M \) has a basis.
\begin{lemma}\label{lem:weak-convexity-on-tensors}
    Assume \( p_i \in [0,1] \) and \( \sum_{i=1}^n p_i \le 1 \).
    For \( f_i \colon N \longrightarrow M^{\ptensor n} \) (\( i = 1,\dots,n \)), we have \( \CQ(N, M^{\ptensor n}) \models \IsDef{(\sum_i p_i^{n+1} f_i)} \).
\end{lemma}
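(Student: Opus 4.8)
The plan is to reduce the statement to a claim about finite sums inside a single $\Sigma$-monoid $(M^{\ptensor n})_k$ and then to exploit the weak convexity of the components of $M$ that comes from $M\cong\neg M_0$. First, a sum of morphisms in $\pshCQ$ is defined exactly when it is defined pointwise in every component, so it suffices to show: for every $k\in\Nat$ and every finite family $a_1,\dots,a_n\in(M^{\ptensor n})_k$ (one takes $a_i=(f_i)_k(x)$ for $x\in N_k$), the sum $\sum_{i=1}^n p_i^{n+1}a_i$ is defined in $(M^{\ptensor n})_k$. By \cref{lem:tensor-peak} each $a_i$ is dominated: $a_i\le(b_{i,1}\ptensor\cdots\ptensor b_{i,n})\cdot\varphi_i$ with $b_{i,j}\in M_{m_{i,j}}$ and $\varphi_i\in\CQ\bigl(k,\textstyle\bigotimes_j m_{i,j}\bigr)$. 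Choosing a common $\ell$ with $m_{i,j}\le\ell$ and using the section--retraction pairs $m\lhd\ell$ available in $\CQ$, one may replace each $b_{i,j}$ by an element of the single module $M_\ell$ (absorbing the sections into the $\varphi_i$), so that $a_i\le(b_{i,1}\ptensor\cdots\ptensor b_{i,n})\cdot\varphi_i$ with $b_{i,j}\in M_\ell$ and $\varphi_i\in\CQ(k,\ell^{\otimes n})$. Distributing the $n{+}1$ copies of the scalar $p_i$, one onto each of the $n$ tensor factors and one onto $\varphi_i$, yields $p_i^{n+1}a_i\le c_i:=\bigl((p_i b_{i,1})\ptensor\cdots\ptensor(p_i b_{i,n})\bigr)\cdot(p_i\varphi_i)$.

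This is where the hypothesis is used. Applying \cref{lem:omega-cpo-enrichment} to $\yoneda(\ell)\ptensor M_0$ shows that $M_\ell\cong\pshCQ(\yoneda(\ell)\ptensor M_0,I)$ is a cancellable, $\omega$-complete $\Sigma$-monoid whose sums are suprema of finite partial sums, and in particular is weakly convex: since $\sum_i p_i\le1$, the sum $G_j:=\sum_{i=1}^n p_i b_{i,j}$ is defined in $M_\ell$ for each $j$. Likewise $\Psi:=\sum_{i=1}^n p_i\varphi_i$ is defined in $\CQ(k,\ell^{\otimes n})$ because its operator norm is at most $\sum_i p_i\le1$. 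Consider now the single element $\theta:=(G_1\ptensor\cdots\ptensor G_n)\cdot\Psi\in(M^{\ptensor n})_k$. Because $\ptensor$ and the $\CQ$-action are $\SMon$-enriched, hence bilinear, expanding the $G_j$ and $\Psi$ and using the Kleene inequalities shows that $\theta$ equals a \emph{defined} sum, indexed by all tuples $(i_1,\dots,i_n,i_0)\in\{1,\dots,n\}^{n+1}$, of the terms $\bigl((p_{i_1}b_{i_1,1})\ptensor\cdots\ptensor(p_{i_n}b_{i_n,n})\bigr)\cdot(p_{i_0}\varphi_{i_0})$, among which the ``diagonal'' tuples $i_1=\dots=i_n=i_0=i$ contribute precisely $\sum_{i=1}^n c_i$.

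It remains to pass from ``$\theta$, the full sum over all $n^{n+1}$ tuples, is defined'' to ``$\sum_{i=1}^n c_i$ is defined'', and thence, via $p_i^{n+1}a_i\le c_i$, to ``$\sum_{i=1}^n p_i^{n+1}a_i$ is defined'' --- and this is the main obstacle, since in an arbitrary $\Sigma$-monoid neither a sub-sum nor a termwise-dominated subfamily of a defined finite sum need be defined. One must therefore invoke good $\Sigma$-monoid properties of $(M^{\ptensor n})_k$ itself. The plan is to note that $M\cong\neg M_0\in\classicalCQ\subseteq\basedCQ$ and that $\basedCQ$ is closed under $\ptensor$ (\cref{lem:based-additive-multiplicative}), so $M^{\ptensor n}\in\basedCQ$ and hence has a pseudo-representable basis; through this basis --- together with strong flatness (\cref{lem:based-strongly-flat}) and the fact that the unit $M^{\ptensor n}\to\neg\neg(M^{\ptensor n})$ is a sum-reflecting hereditary embedding --- one reduces definedness questions in $(M^{\ptensor n})_k$ to the negated, and therefore $\omega$-complete, cancellable, ``positive'' module $\neg(\cdots)$ and ultimately to the $\Sigma$-monoids $\CQ(-,1)$, in which sub-sums and dominated subfamilies of defined sums are automatically defined. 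Making this transfer precise, and checking the boundedness of the images of the $c_i$ that it requires, is the technical heart of the argument.
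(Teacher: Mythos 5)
Your proposal reproduces the paper's own argument almost exactly up to the point where you declare it stuck: reduce to components, dominate each \( (f_i)_k(x) \) by a simple tensor via \cref{lem:tensor-peak}, uniformise the dimensions by retractions, use convexity of \( M_\ell \) (which you justify correctly via \( M_\ell\cong\pshCQ(\yoneda(\ell)\ptensor M_0,I) \) and \cref{lem:omega-cpo-enrichment}; the paper merely asserts this convexity) and of \( \CQ(k,\ell^{\otimes n}) \) to form the single element \( \theta=(G_1\ptensor\cdots\ptensor G_n)\cdot\Psi \), and expand \( \theta \) by bilinearity into a defined sum over all \( (n+1) \)-tuples, among which the diagonal terms are the \( c_i \). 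The paper then concludes in two lines; you instead declare these two lines to be ``the main obstacle'' and sketch, without carrying out, a detour through pseudo-representable bases, strong flatness and \( \neg\neg \).

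That obstacle does not exist, and this is where your proof has its genuine gap: you stop exactly where the argument should close, for a mistaken reason. In any \( \Sigma \)-monoid in the sense of \cref{def:sigma-monoid}, both facts you worry about follow from the Associativity axiom---which is stated as a Kleene \emph{equality}, hence holds in both directions---combined with Commutativity. First, if \( \sum_{i\in I}x_i \) is defined and \( I'\subseteq I \), apply Associativity to the partition \( I=I'\uplus(I\setminus I') \): definedness of the flat sum forces definedness of the nested sum, and the nested expression is defined only if its inner summand \( \sum_{i\in I'}x_i \) is. This extracts the diagonal sub-sum \( \sum_i c_i \) from the defined expansion of \( \theta \). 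Second, if \( a_i\le c_i \), say \( a_i+d_i=c_i \), and \( \sum_i c_i \) is defined, then the nested sum \( \sum_i(a_i+d_i) \) is defined, hence by Associativity so is the flat sum over all pairs, and re-grouping by the second coordinate (using Commutativity and Associativity again) shows \( \sum_i a_i \) is defined. Taking \( a_i=p_i^{n+1}(f_i)_k(x) \) finishes the proof. So the heavy machinery you propose is unnecessary, and since you never actually execute it, the proof as written is incomplete at its final step---but the missing step is immediate from the axioms, not the technical heart of anything.
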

\begin{proof}
    Recall that \( M_k \) (as a \( [0,1] \)-module) is convex for every \( k \) since \( M \cong \neg M_0 \).
    \tk{todo: prove this fact}

    It suffices to show that \( \IsDef{(\sum_i p_i^{n+1} f_i(x))} \) in \( M^{\ptensor n}_k \) for every \( k \) and \( x \in N_k \).
    By \cref{lem:tensor-peak}, \( f_i(x) \le (y_{i,1} \ptensor \dots \ptensor y_{i,n}) \circ \varphi_i \) for some \( y_{i,j} \in M_{k_{i,j}} \) and \( \varphi_i \in \CPM(k, \prod_j k_{i,j}) \).
    We can assume without loss of generality that \( k_{i,j} = k_{i',j} \) because there exists a retraction from \( \ell \) to \( \ell' \) in \( \CPM \) if \( \ell \le \ell' \).
    Let \( k_j \defe k_{i,j} \) and \( k' \defe \prod_j k_j \).
    Since \( M_{n_j} \) is convex, \( \sum_i p_i y_{i,j} \) is defined for every \( j \).
    So \( ((\sum_i p_1 y_{i,1}) \ptensor \dots \ptensor (\sum_i p_n y_{i,n})) \cdot (\sum_i p_i \varphi_i) \) is defined in \( M^{\ptensor n} \).
    Hence its partial sum \( \sum_i p_i^{n+1} (y_{i,1} \ptensor \dots \ptensor y_{i,n}) \cdot \varphi_i \) is also defined.
    By the assumption \( f_i(x) \le (y_{i,1} \ptensor \dots \ptensor y_{i,n}) \cdot \varphi_i \), we have \( \IsDef{(\sum_i f_i(x))} \).
\end{proof}

\begin{lemma}\label{lem:basis-for-symmetric-tensor-power}
    \( M^{\odot n} \in \basedCQ \) for every \( n \).
\end{lemma}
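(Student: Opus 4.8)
The plan is to realise $M^{\odot n}$ as a retract of $M^{\ptensor n}$ up to a positive scalar and then transport a basis along that retraction. Recall that throughout this part $M \cong \neg M_0$ and $M \in \basedCQ$, so iterating \cref{lem:based-additive-multiplicative} gives $M^{\ptensor n} \in \basedCQ$ for every $n$, say with a pseudo-representable basis $(\BObj b, \ket b, \bra b)_{b \in \Base{M^{\ptensor n}}}$. By definition $\iota \colon M^{\odot n} \hookrightarrow M^{\ptensor n}$ is the equaliser of the $n!$ permutation morphisms $\sigma \colon M^{\ptensor n} \to M^{\ptensor n}$; in particular $\sigma \circ \iota = \iota$ for every $\sigma \in \mathfrak{S}_n$.

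First I would build a symmetrising operator. Applying \cref{lem:weak-convexity-on-tensors} to the family $(\sigma)_{\sigma \in \mathfrak{S}_n}$ of endomorphisms of $M^{\ptensor n}$ with the uniform weights $p_\sigma = 1/n!$ (so $\sum_\sigma p_\sigma = 1$) shows that
\[
    s \;\defe\; \sum_{\sigma \in \mathfrak{S}_n} (n!)^{-(n+1)}\, \sigma \;\colon\; M^{\ptensor n} \longrightarrow M^{\ptensor n}
\]
is a well-defined morphism; this is the single place where $M \cong \neg M_0$ enters, through the convexity of each $M_k$. Since composition is bilinear on the hom-$\Sigma$-monoids and $\Sigma$-monoid sums are commutative, for every $\tau \in \mathfrak{S}_n$ we get $\tau \circ s \Kle \sum_\sigma (n!)^{-(n+1)}(\tau\circ\sigma) = s$ (reindex by $\sigma' = \tau\sigma$), and since $\tau\circ s$ is defined this is an equality $\tau \circ s = s$. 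Hence $s$ equalises all the permutations, so $s = \iota \circ \bar s$ for a unique $\bar s \colon M^{\ptensor n} \to M^{\odot n}$. Using $\sigma\circ\iota = \iota$ one computes $s \circ \iota = \sum_\sigma (n!)^{-(n+1)}(\sigma\circ\iota) = (n!)^{-(n+1)}\cdot n!\cdot\iota = (n!)^{-n}\,\iota$, i.e.\ $\iota\circ(\bar s\circ\iota) = \iota\circ\bigl((n!)^{-n}\,\ident_{M^{\odot n}}\bigr)$; cancelling the monomorphism $\iota$ yields $\bar s \circ \iota = (n!)^{-n}\,\ident_{M^{\odot n}}$.

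Now I would read off the basis. Since $(n!)^{-n}$ has positive integer reciprocal $N \defe (n!)^n$, we have $\ident_{M^{\odot n}} = \sum_{j=1}^{N} (\bar s \circ \iota)$, and each copy satisfies $\bar s\circ\iota = \bar s\circ\bigl(\sum_b \ket b\bra b\bigr)\circ\iota \Kle \sum_b (\bar s\circ\ket b)\circ(\bra b\circ\iota)$, the left side being defined, whence the right side is defined and equal. Therefore
\[
    \ident_{M^{\odot n}} \;=\; \sum_{1 \le j \le N,\; b \in \Base{M^{\ptensor n}}} (\bar s\circ\ket b)\circ(\bra b\circ\iota),
\]
so $(\BObj b,\, \bar s\circ\ket b,\, \bra b\circ\iota)_{(j,b)}$ is a basis of $M^{\odot n}$ with pseudo-representable coefficient objects $\BObj b$; hence $M^{\odot n} \in \basedCQ$. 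The main obstacle is exactly the convergence of the symmetriser: the naive average $\tfrac1{n!}\sum_\sigma\sigma$ need not be a morphism, which is why the $(n{+}1)$-st powers of the weights provided by \cref{lem:weak-convexity-on-tensors} are essential; a secondary subtlety is that one cannot divide a basis identity by a scalar, so the retraction-up-to-$(n!)^{-n}$ must be unfolded into $N = (n!)^n$ literal copies as above. (When $\Base M$ is orthogonal this basis can be further compressed to one indexed by sorted length-$n$ words over $\Base M$, matching \cref{thm:exponential-based-setting-formal-power-series}, but that refinement is not needed here.)
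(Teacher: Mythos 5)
Your proof is correct and follows essentially the same route as the paper: both build the symmetriser $\sum_{\sigma}(1/n!)^{n+1}\sigma$ via \cref{lem:weak-convexity-on-tensors}, factor it through the equaliser to obtain $\bar s\circ\iota=(1/n!)^n\ident_{M^{\odot n}}$, and unfold the scalar into $(n!)^n$ copies to produce a $\{M^{\ptensor n}\}$-basis. The only cosmetic difference is that you inline the composition with a basis of $M^{\ptensor n}$ by hand, whereas the paper delegates that final step to \cref{lem:basis-transitive}.
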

\begin{proof}
    Let \( g \defe \sum_{\sigma \in \mathfrak{S}_n} (1/n!)^{n+1} \sigma \colon M^{\ptensor n} \longrightarrow M^{\ptensor n} \), which is defined by \cref{lem:weak-convexity-on-tensors}.
    Since \( \sigma \circ g = g \) for every \( \sigma \in \mathfrak{S}_n \), the morphism \( g \) can be factored as \( g = \iota \circ \bar{g} \), where \( \iota \colon M^{\odot n} \longrightarrow M^{\ptensor n} \) is the equaliser.
    We have \( \iota \circ \bar{g} \circ \iota = g \circ \iota = (1/n!)^n \iota \) from \( \sigma \circ \iota = \iota \).
    Hence \( \bar{g} \circ \iota = (1/n!)^n \ident_{M^{\odot n}} \) by the universality of \( \iota \).

    Therefore \( M^{\odot n} \) has a \( \{M^{\ptensor n}\} \)-basis \( (\BObj{b}, \ket{b}, \bra{b})_{b = 1}^{(n!)^n} \) where \( \BObj{b} \defe M^{\ptensor} \), \( \ket{b} \defe \bar{g} \) and \( \bra{b} \defe \iota \).
    Since \( M^{\ptensor n} \) has a representable basis, so is \( M^{\odot n} \) by \cref{lem:basis-transitive}.
\end{proof}

\begin{lemma}\label{lem:basis-for-exponential-power-series-components}
    \( \S^{(n)} M \in \basedCQ \) for every \( n \).
\end{lemma}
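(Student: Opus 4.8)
The plan is to reduce the statement to \cref{lem:basis-for-symmetric-tensor-power} by showing that \( \S^{(n)} M \) is (up to the positive scalar \( (1/n!)^n \)) a retract of \( M^{\ptensor n} \). Recall from the proof of \cref{lem:basis-for-symmetric-tensor-power} the symmetrisation \( \bar{g}\colon M^{\ptensor n}\longrightarrow M^{\odot n} \) and the equaliser \( j\colon M^{\odot n}\hookrightarrow M^{\ptensor n} \) with \( \bar{g}\circ j = (1/n!)^n\,\ident_{M^{\odot n}} \). The crux is the following claim: \emph{\( \bar{g} \) factors through the hereditary submodule \( k^{(n)}\colon \S^{(n)} M\hookrightarrow M^{\odot n} \)}. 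Granting it, write \( \bar{g}=k^{(n)}\circ\bar{g}' \) with \( \bar{g}'\colon M^{\ptensor n}\longrightarrow\S^{(n)} M \) the corestriction, which is a \( \CQ \)-module morphism because \( \S^{(n)} M \) is sum-reflecting; post-composing \( k^{(n)}\bar{g}'j = (1/n!)^n\,\ident_{M^{\odot n}} \) with \( k^{(n)} \) and cancelling the mono \( k^{(n)} \) yields \( \bar{g}'jk^{(n)} = (1/n!)^n\,\ident_{\S^{(n)} M} \). Now if \( (\BObj{d},\ket{d},\bra{d})_{d\in D} \) is a representable basis of \( M^{\ptensor n} \) (which exists since \( M^{\ptensor n}\in\basedCQ \)), then \( \bigl(\BObj{d},\ \bar{g}'\ket{d},\ \bra{d}\,j\,k^{(n)}\bigr)_{d\in D,\ 1\le i\le (n!)^n} \), using \( (n!)^n\in\Nat \) copies, is a representable basis of \( \S^{(n)} M \): summing over \( d \) gives \( \bar{g}'\bigl(\sum_d\ket{d}\bra{d}\bigr)j\,k^{(n)} = \bar{g}'jk^{(n)} = (1/n!)^n\,\ident_{\S^{(n)} M} \), and there are \( (n!)^n \) copies, so the total is \( \ident_{\S^{(n)} M} \); this uses only the \( [0,1] \)-module structure of the hom-objects. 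Hence \( \S^{(n)} M\in\basedCQ \) by \cref{lem:basis-transitive}.

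Everything therefore rests on the claim, and this is where the hypothesis \( M\cong\neg M_0 \) enters. Since \( \S^{(n)} M \) is the covering image of \( \hat{d}^{(n)}\colon\cofreeexp M\longrightarrow M^{\odot n} \), it is downward-closed in \( M^{\odot n} \); since by \cref{lem:tensor-peak} every element of \( M^{\ptensor n} \) is dominated by a separable tensor \( (y_1\ptensor\dots\ptensor y_n)\cdot\psi \), and \( \bar{g} \) is monotone, it suffices to prove that \( \bar{g}(y_1\ptensor\dots\ptensor y_n)\cdot\psi \) is dominated by \( \hat{d}^{(n)}(c) \) for some \( c\in\cofreeexp M \). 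Using \cref{lem:cpm-identity-cover} to cover each \( \ident_{\#y_i} \) by completely positive maps factoring through \( 1 \), bilinearity of \( \ptensor \) and of the \( \CPM \)-action reduces this — after scaling the \( y_i \), legitimate because \( M_1 \) is convex (a consequence of \( M\cong\neg M_0 \)) — to the rank-one case: \( \bar{g}(x_1\ptensor\dots\ptensor x_n)\in\S^{(n)} M \) for \( x_1,\dots,x_n\in M_1 \). For rank-one data the witness \( c \) is a multilinear coefficient: view \( (\zeta_1 x_1+\dots+\zeta_n x_n)^{\cofreeexp} \) as a power series valued in \( \cofreeexp M \) over \( \zeta\in\Delta_n \) and take its \( \zeta_1\cdots\zeta_n \)-coefficient, extracted by finite differences of evaluations, exactly as in the proof of \cref{lem:power-of-power-series}. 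Because this coefficient has total degree \( n \) in \( \zeta \), it is supported entirely in \( \cofreeexp \)-degree \( n \), so no definedness problems arise in the other degrees when one carries \( c \) back through the completely positive actions \( h_{ij} \) and \( \psi \); and \( \hat{d}^{(n)}(c) \) equals a positive multiple of \( \bar{g}(x_1\ptensor\dots\ptensor x_n) \), the scalar being removed via \cref{lem:real-action-invertible}.

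The reduction of the first two paragraphs is routine, using only properties of bases, hereditary submodules, and \cref{lem:cpm-identity-cover}. The main obstacle is the last step: the polarisation must be performed entirely inside \( \cofreeexp M \), which requires checking that the finite-difference combination of the elements \( (\sum_i\zeta_i x_i)^{\cofreeexp} \) is again an element of \( \cofreeexp M \) (this uses convexity and cancellability of the relevant hom-\( \Sigma \)-monoids, both consequences of \( M\cong\neg M_0 \)) and that \( \hat{d}^{(n)} \) indeed sends it to the symmetrised tensor — the usual ``no \( \infty \)'s, so mind the convergence'' bookkeeping that recurs throughout the paper.
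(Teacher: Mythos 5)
Your reduction in the first paragraph is architecturally fine \emph{given} the central claim, but the claim itself --- that \( \bar{g} \colon M^{\ptensor n} \longrightarrow M^{\odot n} \) factors through the hereditary submodule \( \S^{(n)}M \hookrightarrow M^{\odot n} \) --- is where the proof breaks, and I do not believe it is true as stated. What your sketch actually delivers is weaker: for each \( w \in (M^{\ptensor n})_k \), after peeling off a separable majorant via \cref{lem:tensor-peak}, covering each \( \ident_{m_i} \) via \cref{lem:cpm-identity-cover} (whose constants must be scaled into \( \CQ \) and therefore depend on the objects \( m_i \)), and dominating the resulting rank-one data by \( d^{(n)}(z^{\cofreeexp}) \) for a convex combination \( z \in M_1 \), you obtain \( \mu\,\bar{g}(w) \le \hat{d}^{(n)}(c) \) for some \emph{element-dependent} \( \mu \in (0,1) \). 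A hereditary submodule is downward-closed and closed under the \( [0,1] \)-action, but not under division by \( \mu \), so membership of \( \mu\,\bar{g}(w) \) in \( \S^{(n)}M \) does not give membership of \( \bar{g}(w) \); and \cref{lem:real-action-invertible} is a cancellation law for \emph{equations} (\( rx = ry \Rightarrow x = y \)), not a licence to divide an inequality or a submodule-membership by \( r \). Since \( \mu \) degrades with the objects \( m_i \) produced by \cref{lem:tensor-peak}, you cannot even extract a uniform retraction constant to repair the count of basis copies. Separately, the polarisation step is not available: extracting the \( \zeta_1\cdots\zeta_n \)-coefficient of \( (\sum_i \zeta_i x_i)^{\cofreeexp} \) ``by finite differences of evaluations'' requires subtraction, which \( \Sigma \)-monoids do not have; the proof of \cref{lem:power-of-power-series} never takes differences --- it exhibits the whole object as a \emph{defined} \( \Sigma \)-monoid sum and takes a partial sum, which for an element of \( \cofreeexp M \) (rather than a formal series of morphisms) would presuppose the very structure theory of \( \cofreeexp M \) you are trying to establish.

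The paper's proof avoids the retraction idea entirely. It fixes a pseudo-representable basis \( (\BObj{b},\ket{b},\bra{b})_{b\in\Base{M}} \) of \( M \) and indexes a basis of \( \S^{(n)}M \) by tuples \( \chi = (b_1,\dots,b_n) \in \Base{M}^n \), with \( \ket{\chi} = \sum_{\sigma}\frac{1}{n!}\sigma\circ(\ket{b_1}\otimes\dots\otimes\ket{b_n}) \), \( \bra{\chi} = \bra{b_1}\otimes\dots\otimes\bra{b_n} \), and basis object \( \S^{(\chi)}M \) the hereditary submodule of \( \bigotimes_i\BObj{b_i} \) cut out by the condition ``\( \ket{\chi}\cdot x \) is dominated by some \( d^{(n)}(y) \)''. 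The identity \( \sum_\chi\ket{\chi}\bra{\chi} = \ident \) then falls out of \( \sum_b\ket{b}\bra{b} = \ident_M \) and the equaliser property of \( M^{\odot n} \), with no retraction onto \( M^{\ptensor n} \) needed. Crucially, each \( \S^{(\chi)}M \) only has to be \emph{pseudo-representable}, i.e.\ contain \( r\,\ident \) for \emph{some} \( r > 0 \); this is exactly the scalar-multiple statement that your covering-and-domination argument can prove (via \cref{lem:cpm-identity-cover}, convexity of \( M_1 \), and the witness \( z^{\cofreeexp} \)), so the uncontrolled constants are harmless there. If you want to salvage your write-up, redirect the second and third paragraphs at proving \( r\,\ident_{\#\chi}\in(\S^{(\chi)}M)_{\#\chi} \) rather than at the factorisation of \( \bar{g} \).
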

\begin{proof}
    Assume a pseudo-representable basis \( (\BObj{b}, \ket{b}, \bra{b})_{b \in \Base{M}} \) for \( M \).
    Let \( J \defe \Base{M}^n \).
    Given \( \chi = (b_1,\dots,b_n) \in J \), we define a hereditary submodule \( \S^{(\chi)}M \) of \( \bigotimes_{i=1}^{n} \BObj{b_i} \).
    Let \( \#\chi = \prod_{i} \#\BObj{b_i} \).
    Note that \( \bigotimes_{i=1}^{n} \BObj{b_i} \hookrightarrow \CPM({-}, \#\chi) \).
    Then
    \begin{align*}
        (\S^{(\chi)}M)_k
        &\defe
        \{\, x \in (\bigotimes_{i=1}^{n} \BObj{b_i})_k \mid
        \exists y \in (!M)_k. (\sum_{\sigma \in \mathfrak{S}_n} \frac{1}{n!} \sigma \circ (\ket{b_1} \otimes \dots \otimes \ket{b_n}) \cdot x) \le d^{(n)}(y) \,\}.
    \end{align*}
    Then \( \ket{\chi} \) is defined as
    \begin{equation*}
        \sum_{\sigma \in \mathfrak{S}_\ell} \frac{1}{n!} \sigma \circ (\ket{b_1} \otimes \dots \otimes \ket{b_n})
    \end{equation*}
    and the coordinate function \( \bra{\chi} \) is
    \begin{equation*}
        \bra{b_1} \otimes \dots \otimes \bra{b_n}.
    \end{equation*}
    Then \( \sum_{\chi \in J} \ket{\chi} \circ \bra{\chi} \) holds.

    So it suffices to show that each \( \S^{(\chi)}M \) is has a (pseudo-representable or representable) basis.
    We prove that \( \S^{(\chi)}M \) itself is pseudo-representable.
    It has an obvious upper bound.
    We would like to show that \( r\,\ident_{\#\chi} \in \S^{(\chi)} M \) for some \( r > 0 \).
    We use \cref{lem:cpm-identity-cover}: there exists \( a_1, \dots, a_w \in \CQ(1, \#\chi) \) and \( h_1,\dots,h_w \in \CQ(\#\chi,1) \) such that \( r\,\ident \le \sum_i a_i \cdot h_i \) for some \( r > 0 \).
    We can assume without loss of generality that \( \sum_i h_i \in \CQ(\#\chi,1) \) by multiplicating \( r' > 0 \) for both \( h_i \) and \( r\,\ident \) if necessarily.
    Since each element in \( \CPM(1,\#\chi) = \CPM(1, \otimes_i \#\BObj{b_i}) \) can be covered by a finite sum of simple tensors of \( \CPM(1,\#\BObj{b_i}) \), we can assume without loss of generality that \( a_i \) is a simple tensor \( a_i = a_{i,1} \otimes \dots \otimes a_{i,n} \), \( a_{i,j} \in \CQ(1, \#\BObj{b_j}) \).

    \begin{claim*}\label{lem:simple-tensor}
        Let \( z_{i,j} \in \CQ(1,\#\BObj{b_j}) \) for each \( i=1,\dots,\ell \) and \( j = 1,\dots,n \).
        Assume that \( \sum_{i,j} r\,\ket{b_j} \cdot z_{i,j} \) is defined in \( M_1 \) for some \( r > 0 \).
        Then \( (\S^{(\chi)} M)_1 \models \IsDef{(\sum_i r'\,(z_{i,1} \otimes \dots \otimes z_{i,n}))} \) for some \( r' > 0 \).
    \end{claim*}
    \begin{proof}
Let
        \begin{equation*}
            z \quad\defe\quad \sum_{i} (r\,\ket{b_1} \cdot z_{i,1} + \dots + r\,\ket{b_n} \cdot z_{i,\ell}),
        \end{equation*}
        which is defined in \( M_1 \) by the assumption.
        This element \( z \in M_1 \) can be identified with a morphism 
        \begin{equation*}
            z \colon \yoneda(1) \longrightarrow M
        \end{equation*}
        by the Yoneda lemma.
        Since \( \yoneda(1) \), as the unit of the tensor product, has the canonical comonoid structure, there exists a canonical comonoid map
        \begin{equation*}
            z^! \colon \yoneda(1) \longrightarrow {!}M
        \end{equation*}
        by the cofreeness of \( !M \).
        Then
        \begin{align*}
            d^{(n)}_{!M} \circ z^{!}
            &=
            z^{\ptensor n} \circ d^{(n)}_{\yoneda(1)}
            =
            z \ptensor \dots \ptensor z.
        \end{align*} 
        Since
        \begin{equation*}
            \sum_i r^n (\ket{b_1} \cdot z_{i,1}) \ptensor \dots \ptensor (\ket{b_n} \cdot z_{i,n})
            \quad\le\quad
            z \ptensor \dots \ptensor z
            \quad=\quad
            d^{(\ell)}(z^!)
        \end{equation*}
        holds in \( M^{\ptensor n} \), we have
        \begin{align*}
            &
            \sum_{\sigma \in \mathfrak{S}_n} \frac{1}{n!} \sigma \circ (\ket{b_1} \ptensor \dots \ptensor \ket{b_n}) \cdot (\sum_i r^n z_{i,1} \ptensor \dots \ptensor z_{i,n}))
            \\
            &\le \sum_{\sigma \in \mathfrak{S}_n} \frac{1}{n!} \sigma \circ (z \ptensor \dots \ptensor z)
            \\
            &= z \ptensor \dots \ptensor z
            \\
            &= d^{(n)}(z^!).
        \end{align*}
        So \( (\sum_i r^n z_{i,1} \otimes \dots \otimes z_{i,n})) \in (\S^{(\chi)}M)_1 \).
    \end{proof}

    By the above claim, \( \sum_i r' a_i = \sum_i r' (a_{i,1} \otimes \dots \otimes a_{i,\ell}) \) is in \( (\S^{(\chi)}M)_1 \) for some \( r' > 0 \).
    Since \( (\sum_i h_i) \in \CQ(\#\chi, 1) \), wa have \( (\sum_i r' a_i) \cdot (\sum_{i'} h_{i'}) \) is in \( (\S^{(\chi)}M_{\#\chi}) \).
    So, by the downward closedness of \( (\S^{(\chi)}M_{\#\chi}) \), we have \( (\sum_i r' a_i \cdot h_{i}) \in (\S^{(\chi)}M_{\#\chi}) \).
    Since \( rr' \ident \le (\sum_i r' a_i \cdot h_{i}) \), we have \( rr' \ident \in  (\S^{(\chi)}M_{\#\chi}) \).
\end{proof}

Now we analyse the comultiplication \( \delta \colon \cofreeexp M \longrightarrow \cofreeexp M \ptensor \cofreeexp M \).
As we have seen, the cofree comonoid \( \cofreeexp M \) induces a family of morphisms \( \delta^{(n)} \colon \cofreeexp M \longrightarrow (\cofreeexp M)^{\ptensor n} \) for each \( n \), where \( \delta^{(0)} \) is the counit.
Compositing \( \delta^{(n)} \) with \( \Der^{\otimes n} \colon (\cofreeexp M)^{\ptensor n} \longrightarrow M^{\ptensor n} \) and \( M^{\ptensor n} \longrightarrow \FComp{M}^{\ptensor n} \), we have \( \tilde{d}^{(n)} \colon \cofreeexp M \longrightarrow \FComp{M}^{\ptensor n} \), as well as \( d^{(n)} \), \( \hat{d}^{(n)} \) and \( \bar{d}^{(n)} \), for each \( n > 0 \).
\begin{equation*}
    \xymatrix{
        & & \cofreeexp M \ar[dll]_{\bar{d}^{(n)}} \ar[dl]^{\hat{d}^{(n)}} \ar[dr]_{d^{(n)}} \ar[drrr]^{\tilde{d}}\\
        \S^{(n)} M \ar@{^{(}->}[r] & M^{\odot n} \ar@{^{(}->}[rr] & & M^{\ptensor n} \ar@{^{(}->}[rr] & & \FComp{M}^{\ptensor n}
    }
\end{equation*} 
Then
\begin{equation*}
    \xymatrix{
        \cofreeexp M \ar[d]_\delta \ar[rr]^{\tilde{d}^{(n+m)}} & & \FComp{M}^{\ptensor (n+m)} \ar[d]^{\cong} \\
        \cofreeexp M \otimes \cofreeexp M \ar[rr]^{\tilde{d}^{(n)} \otimes \tilde{d}^{(m)}} & & \FComp{M}^{\ptensor n} \otimes \FComp{M}^{\ptensor m}
    }
\end{equation*}
commutes.
Let \( \tilde{d} \colon \cofreeexp M \longrightarrow \prod_{n} \FComp{M}^{\ptensor n} \) be the map defined by \( \tilde{d} = \sum_n \Inj^{(n)} \tilde{d}^{(n)} \).

Let \( \SymTensor{\FComp{M}}{n} \) be the equaliser of the \( n! \) permutations on \( \FComp{M}^{\otimes n} \), which exists since \( \pshCQ \) is complete.
By the finite completeness of \( \FComp{M} \), the equaliser \( \SymTensor{\FComp{M}}{n} \hookrightarrow \FComp{M}^{\otimes n} \) splits.
Let
\begin{equation*}
    \EqSplit{n}_0 \defe \sum_{\sigma \in \mathfrak{S}_n} \frac{1}{n!} \sigma
    \qquad\colon\quad \FComp{M}^{\ptensor n} \longrightarrow \FComp{M}^{\ptensor n}.
\end{equation*}
Since \( \sigma \circ \EqSplit{n} = \EqSplit{n} \) for each permutation \( \sigma \in \mathfrak{S}_n \), it factors though \( \SymTensor{\FComp{M}}{n} \hookrightarrow \FComp{M}^{\ptensor n} \):
\begin{equation*}
    \EqSplit{n}_0
    \quad=\quad
    \FComp{M}^{\ptensor n}
    \stackrel{\EqSplit{n}}{\longrightarrow}
    \SymTensor{\FComp{M}}{n}
    \hookrightarrow
    \FComp{M}^{\ptensor n}.
\end{equation*}
Then
\( \SymTensor{\FComp{M}}{n} \hookrightarrow \FComp{M}^{\otimes n} \stackrel{\EqSplit{n}}{\longrightarrow} \SymTensor{\FComp{M}}{n} \) is the identity.

By the cocommutativity, the image of \( \tilde{d}^{(n)} \) is symmetric.
So it factors as \( \cofreeexp M \stackrel{\check{d}^{(n)}}{\longrightarrow} \SymTensor{\FComp{M}}{(n)} \hookrightarrow \FComp{M}^{\ptensor n} \).
The current situation is
\begin{equation*}
    \xymatrix{
        C \ar[d]_\delta \ar[rr]^{\check{d}^{(n+m)}} & & \SymTensor{\FComp{M}}{(n+m)}  \ar@{^{(}-_>}[r] & \FComp{M}^{\ptensor (n+m)} \ar[d]^{\cong} \\
        C \otimes C \ar[rr]^{\check{d}^{(n)} \ptensor \check{d}^{(m)}} & & \SymTensor{\FComp{M}}{n} \otimes \SymTensor{\FComp{M}}{m} \ar@{^{(}-_>}[r] & \FComp{M}^{\ptensor n} \ptensor \FComp{M}^{\ptensor m}
    }.
\end{equation*}
Here \( \SymTensor{\FComp{M}}{n} \otimes \SymTensor{\FComp{M}}{m} \hookrightarrow \FComp{M}^{\ptensor n} \ptensor \FComp{M}^{\ptensor m} \) is monic since \( \SymTensor{\FComp{M}}{n} \) and \( \FComp{M}^{\ptensor m} \) are strongly flat.
The strong flatness of \( \FComp{M}^{\ptensor m} \) follows from the facts that \( \FComp{M}^{\ptensor m} \cong \FComp{M^{\ptensor m}} \) (\cref{lem:tensor-preserves-finite-completeness}) and that \( M^{\ptensor m} \in \basedCQ \) implies the strong flatness of \( \FComp{M^{\ptensor m}} \) (\cref{lem:based-completion-strongly-flat}).
The strong flatness of \( \SymTensor{\FComp{M}}{n} \) comes from a \( \{ \FComp{M}^{\ptensor n} \} \)-basis (given by the splitting of the equaliser).
We have a map \( \EqSplit{n} \ptensor \EqSplit{m} \) from \( \FComp{M}^{\ptensor (n+m)} \) to \( \SymTensor{\FComp{M}}{n} \ptensor \SymTensor{\FComp{M}}{m} \), which is the identity on \( \SymTensor{\FComp{M}}{(n+m)} \).
It determines the unique morphism \( \bar{\delta}^{(n,m)} \colon \SymTensor{\FComp{M}}{(n+m)} \longrightarrow \SymTensor{\FComp{M}}{n} \ptensor \SymTensor{\FComp{M}}{m} \) since \( \SymTensor{\FComp{M}}{n} \ptensor \SymTensor{\FComp{M}}{m} \hookrightarrow \FComp{M}^{\ptensor n} \ptensor \FComp{M}^{\ptensor m} \) is monic.
\begin{equation*}
    \xymatrix{
        \cofreeexp M \ar[d]_\delta \ar[rr]^{\check{d}^{(n+m)}} & & \SymTensor{\FComp{M}}{(n+m)}  \ar@{^{(}-_>}[r] \ar[d]^{\check{\delta}^{(n,m)}} & \FComp{M}^{\ptensor (n+m)} \ar[d]^{\cong} \\
        \cofreeexp M \ptensor \cofreeexp M \ar[rr]^{\check{d}^{(n)} \ptensor \check{d}^{(m)}} & & \SymTensor{\FComp{M}}{n} \ptensor \SymTensor{\FComp{M}}{m} \ar@{^{(}-_>}[r] & \FComp{M}^{\ptensor n} \ptensor \FComp{M}^{\ptensor m}
    }
\end{equation*}
The outer square and the right square are known to commute.
Again, because \( \SymTensor{\FComp{M}}{n} \otimes \SymTensor{\FComp{M}}{m} \hookrightarrow \FComp{M}^{\ptensor n} \ptensor \FComp{M}^{\ptensor m} \) is monic, the left square also commutes.

Since \( \EqSplit{n} \otimes \EqSplit{m} \) acts as the identity on the image of \( \SymTensor{\FComp{M}}{(n+m)} \hookrightarrow \FComp{M}^{\ptensor (n+m)} \), \( \check{\delta}^{(n,m)} \) is an injection on each component and hence monic.

Now we have
\begin{equation*}
    \xymatrix{
        \cofreeexp M \ar[d]_\delta \ar[rr]^{\bar{d}^{(n+m)}} & & \S^{(n+m)} M  \ar@{^{(}-_>}[r] & \SymTensor{\FComp{M}}{(n+m)} \ar@{^{(}-_>}[d]^{\bar{\delta}^{(n,m)}} \\
        \cofreeexp M \ptensor \cofreeexp M \ar[rr]^{\bar{d}^{(n)} \ptensor \bar{d}^{(m)}} & & \S^{(n)}M \ptensor \S^{(m)}M \ar@{^{(}-_>}[r] & \SymTensor{\FComp{M}}{n} \ptensor \SymTensor{\FComp{M}}{m}
    }
\end{equation*}
Note that \( \S^{(n)}M \ptensor \S^{(m)}M \hookrightarrow \SymTensor{\FComp{M}}{n} \ptensor \SymTensor{\FComp{M}}{m} \) is a hereditary submodule since \( \S^{(n)}M \hookrightarrow \SymTensor{M}{n} \) is hereditary by definition, \( \SymTensor{M}{n} \hookrightarrow \SymTensor{\FComp{M}}{n} \) is hereditary as it is a finite completion\tk{todo: prove.  One can prove this fact by using a weak splitting \( M^{\ptensor n} \longrightarrow \SymTensor{M}{n} \).} and four modules are strongly flat (\cref{lem:based-strongly-flat,lem:based-completion-strongly-flat}).
Since the rectangle commutes, \( \S^{(n)}M \ptensor \S^{(m)}M \hookrightarrow \SymTensor{\FComp{M}}{n} \ptensor \SymTensor{\FComp{M}}{m} \) is a hereditary submodule and \( \bar{d}^{(n+m)} \) is covering, there is a diagonal fill-in:\tk{todo: prove this claim}
\begin{equation*}
    \xymatrix{
        \cofreeexp M \ar[d]_\delta \ar[rr]^{\bar{d}^{(n+m)}} & & \S^{(n+m)}M  \ar@{^{(}-_>}[r] \ar[d]^{\delta^{(n,m)}} & \SymTensor{\FComp{M}}{(n+m)} \ar@{^{(}-_>}[d]^{\bar{\delta}^{(n,m)}} \\
        \cofreeexp M \ptensor \cofreeexp M \ar[rr]^{\bar{d}^{(n)} \otimes \bar{d}^{(m)}} & & \S^{(n)} M \ptensor \S^{(m)}M \ar@{^{(}-_>}[r] & \SymTensor{\FComp{M}}{n} \ptensor \SymTensor{\FComp{M}}{m}
    }
\end{equation*}
By this way, we have
\begin{equation*}
    \delta^{(n,m)}
    \quad\colon\quad
    \S^{(n+m)} M \longrightarrow \S^{(n)} M \ptensor \S^{(m)} M.
\end{equation*}
By construction, three rectangles in
\begin{equation*}
    \xymatrix{
        \cofreeexp M \ar[d]_\delta \ar[rr]^{\bar{d}^{(n+m)}} & & \S^{(n+m)}M  \ar@{^{(}-_>}[r] \ar[d]^{\delta^{(n,m)}} & M^{\ptensor (n+m)} \ar[d]^{\cong} \\
        \cofreeexp M \otimes \cofreeexp M \ar[rr]^{\bar{d}^{(n)} \otimes \bar{d}^{(m)}} & & \S^{(n)} M \ptensor \S^{(m)}M \ar@{^{(}-_>}[r] & M^{\ptensor n} \ptensor M^{\ptensor m}
    }
\end{equation*}
commute.

Let \( \delta \defe \sum_{n,m} (\Inj^{(n)} \ptensor \Inj^{(m)}) \delta^{(n,m)} \Proj^{(n)} \colon \S M \longrightarrow \S M \ptensor \S M \).
We prove that this sum is indeed defined.
Note that \( \S M \ptensor \S M \hookrightarrow (\prod_n \S^{(n)} M) \ptensor (\prod_m \S^{(m)} M) \hookrightarrow \prod_{n,m} (\S^{(n)}M \ptensor \S^{(m)}M) \) is a hereditary submodule, which we write as \( \iota \).
Then \( \delta' \defe \sum_{n,m} \iota \circ (\Inj^{(n)} \ptensor \Inj^{(m)}) \delta^{(n,m)} \Proj^{(n)} \) is well-defined.
It suffices to prove that the image of \( \delta' \) is in the image of \( \iota \).
Let \( x \in (\S M)_k \).
Then \( y \in (\cofreeexp M)_k \) and \( x \le \bar{d}(y) \) for some \( y \).
Consider \( \delta(y) \in (\cofreeexp M \otimes \cofreeexp M)_k \).
By \cref{lem:tensor-peak}, \( \delta(y) \le (y_1 \otimes y_2) \cdot \varphi \) for some \( y_1 \in (\cofreeexp M)_{k_1} \), \( y_2 \in (\cofreeexp M)_{k_2} \) and \( \varphi \in \CQ(k, k_1 \otimes k_2) \).
Then
\begin{align*}
    \delta'(x)
    &\le
    \delta'(\bar{d}(y))
    \\
    &\le
    \iota (\bar{d} \ptensor \bar{d})(\delta(y))
    \\
    &\le
    \iota (\bar{d} \ptensor \bar{d})(y_1 \ptensor y_2) \cdot \varphi
    \\
    &=
    \iota (\bar{d}(y_1) \ptensor \bar{d}(y_2)) \cdot \varphi.
\end{align*}
The right-most-term belongs to \( (\S M \ptensor \S M)_k \) and \( \iota \colon \S M \ptensor \S M \hookrightarrow (\prod_{n,m} \SymTensor{M}{n} \ptensor \SymTensor{M}{m} \) is hereditary, \( \delta'(x) \) belongs to \( \S M \ptensor \S M \) as expected.

Then it is not difficult to see that \( (\S M, \delta, \Proj^{(0)}) \) is a cocommutative comonoid.
It is a comonoid over \( M \) via \( \Proj^{(1)} \).
By the universality of \( \cofreeexp M \), there exists a comonoid morphism \( \alpha \colon \S M \longrightarrow \cofreeexp M \) such that \( \Der \circ \alpha = \Proj^{(1)} \).
Since any comonoid-over-\( M \) morphism preserves \( d \), and \( d_{\S M} = \ident_{\S M} \) by definition,
\begin{equation*}
    \S M \stackrel{\alpha}{\longrightarrow} !M \stackrel{d}{\longrightarrow} \S M
\end{equation*}
is the identity.
Since
\begin{equation*}
    !M \stackrel{d}{\longrightarrow} \S M \stackrel{d}{\longrightarrow} !M
\end{equation*}
preserves \( \Der \), by the universality of \( !M \), the composite is also the identity.
Hence canonically \( \S M \cong !M \), and the canonical morphism preserves the comonoid-over-\( M \) structure.

\begin{proof}[Proof of \cref{thm:exponential-based-setting-formal-power-series}]
    It is easy to see the claims by using the characterisation \( \cofreeexp M \cong \S M \hookrightarrow \prod_n M^{\ptensor n} \).
\end{proof}

 \section{Classical Structures}
Let \( \classicalCQ \hookrightarrow \basedCQ \) be the full subcategory of \( \basedCQ \) consisting of \( M \in \basedCQ \) such that the canonical morphism \( M \longrightarrow \neg\neg M \) is an isomorphism.
This section studies its structures.

\subsection{Classical Model as the Eilenberg-Moore Category of the Continuation Monad}
Another characterisation of \( \classicalCQ \) is as the Eilenberg-Moore category of the continuation monad \( \neg\neg \) on \( \basedCQ \).
This characterisation is useful because it involes the canonical adjunction between \( \basedCQ \) and \( \classicalCQ \).
This definition makes sense because \( M \in \basedCQ \) implies \( \neg\neg M \in \basedCQ \) (\cref{lem:based-additive-multiplicative}).

The continuation monad \( \neg\neg \) on \( \basedCQ \) is idempotent.
This fact significantly eases the analysis of \( \classicalCQ \).
To prove the idempotency of \( \neg\neg \) on \( \basedCQ \), we first analyse the action of \( \neg \) to pseudo-representable modules.

The negation on pseudo-representable modules can be characterised by using the \emph{orthogonality relation} as in many models of linear logic.
We write \( \mathcal{F} \subseteq \CPM({-}, \ell) \) to mean that \( \mathcal{F} = (\mathcal{F}_n)_n \) is a family of subsets \( \mathcal{F}_n \subseteq \CPM(n,\ell) \); note that \( \mathcal{F} \) does not need to be closed under the \( \CQ \)-action.
Families \( \mathcal{F} \subseteq \CPM({-}, \ell) \) and \( \mathcal{F}' \subseteq \CPM({-}, \neg \ell) \) are \emph{orthogonal}, written \( \mathcal{F} \mathrel{\bot} \mathcal{F}' \), if
\begin{align*}
    & \forall n,m.\ \forall x \in \mathcal{F}_n.\ \forall f \in \mathcal{F}'_m.\ \forall \varphi \in \CQ(1,n \otimes m).
    \quad \mathbf{ev} \circ (f \otimes x) \circ \varphi \le 1.
\end{align*}
For \( \mathcal{F} \subseteq \CPM({-}, \ell) \), let \( \mathcal{F}^{\bot} \subseteq \CPM({-}, \neg\ell) \) be
the family given by
\begin{align*}
    &
    \mathcal{F}^{\bot}_n \defe \{\, f \in \CPM(n,\neg\ell) \mid
    \forall m. \forall x \in \mathcal{F}_m. \forall \varphi \in \CQ(1, n \otimes m).\ (\mathbf{ev} \circ (f \otimes x) \circ \varphi) \le 1 \,\}.
\end{align*}
\begin{lemma}\label{lem:negation-of-pseudo-representable}
    Let \( \mathcal{F} \subseteq \CPM({-},\ell) \) and \( \mathcal{F}' \subseteq \CPM({-}, \neg\ell) \).
    \begin{enumerate}
        \item \( \mathcal{F} \mathrel{\bot} \mathcal{F}^{\bot} \).
        \item \( \mathcal{F}' \subseteq \mathcal{F}^{\bot}
        \Longleftrightarrow
        \mathcal{F} \mathrel{\bot} \mathcal{F}'
        \Longleftrightarrow
        \mathcal{F} \subseteq (\mathcal{F}')^{\bot} \).
        \item \( \mathcal{F} \subseteq \mathcal{F}^{\bot\bot} \) and \( \mathcal{F}^{\bot\bot\bot} = \mathcal{F}^{\bot} \).
\item \( \LQT^{\bot} = \neg \LQT \)\/ for a pseudo-representable \( \CQ \)-module \( \LQT \).
\qed
    \end{enumerate}
\end{lemma}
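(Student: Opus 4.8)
The plan is to handle (1)--(3) as the standard ``polarity'' facts for the Galois connection induced by the orthogonality relation \(\mathrel{\bot}\), and (4) as a direct identification of two concrete presentations of the same \(\CQ\)-module, using \cref{lem:linear:lqt-function-and-tensor}. Two preliminary observations carry most of the weight. First, the pairing \(\mathbf{ev}\circ({-}\otimes{-})\) is \emph{symmetric}: since \(\CPM\) is compact closed with the strict self-duality chosen so that \(n\cong n^{**}\) is the identity (hence \(\neg\neg\ell=\ell\) as objects), evaluating \(\varphi\otimes x\) and \(x\otimes\varphi\) against a state differ only by a symmetry that can be absorbed into the state \(\varphi\); consequently the condition \(\mathcal{F}\mathrel{\bot}\mathcal{F}'\) is unbiased in the pair \((\mathcal{F},\mathcal{F}')\). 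Second, \(({-})^{\bot}\) is order-reversing: if \(\mathcal{F}_1\subseteq\mathcal{F}_2\) componentwise then \(\mathcal{F}_2^{\bot}\subseteq\mathcal{F}_1^{\bot}\), straight from the definition.

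With these in hand, (1) is immediate: unfolding the definition of \(\mathcal{F}^{\bot}\), every \(f\in\mathcal{F}^{\bot}_n\) satisfies the defining inequality against every \(x\in\mathcal{F}_m\), which is precisely \(\mathcal{F}\mathrel{\bot}\mathcal{F}^{\bot}\). For (2): \(\mathcal{F}'\subseteq\mathcal{F}^{\bot}\) says every member of \(\mathcal{F}'\) is orthogonal (in the defining sense) to every member of \(\mathcal{F}\), which is verbatim \(\mathcal{F}\mathrel{\bot}\mathcal{F}'\); and \(\mathcal{F}\subseteq(\mathcal{F}')^{\bot}\) says every member of \(\mathcal{F}\) is orthogonal to every member of \(\mathcal{F}'\), which by the symmetry observation is again \(\mathcal{F}\mathrel{\bot}\mathcal{F}'\) (here one uses \(\neg\neg\ell=\ell\) so that \((\mathcal{F}')^{\bot}\subseteq\CPM({-},\ell)\) and the inclusion typechecks). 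For (3), apply (2) with \(\mathcal{F}':=\mathcal{F}^{\bot}\) and the trivial inclusion \(\mathcal{F}^{\bot}\subseteq\mathcal{F}^{\bot}\) to get \(\mathcal{F}\subseteq\mathcal{F}^{\bot\bot}\); applying this with \(\mathcal{F}^{\bot}\) in place of \(\mathcal{F}\) gives \(\mathcal{F}^{\bot}\subseteq\mathcal{F}^{\bot\bot\bot}\), while applying the order-reversing \(({-})^{\bot}\) to \(\mathcal{F}\subseteq\mathcal{F}^{\bot\bot}\) gives \(\mathcal{F}^{\bot\bot\bot}\subseteq\mathcal{F}^{\bot}\); the two inclusions give the equality.

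For (4), the first step is to record that \(\yoneda(1)\) is pseudo-representable: it is the submodule \(\CQ({-},1)\hookrightarrow\CPM({-},1)\), which is hereditary (downward-closedness and sum-reflection both follow from \(\opnorm{\psi}\le\opnorm{\varphi}\) whenever \(\psi\le\varphi\)), bounded by \(1\), and has pseudo-universal element \(\ident_1\in\CQ(1,1)=[0,1]\), with \(\#\yoneda(1)=1\). Hence \cref{lem:linear:lqt-function-and-tensor} applies and, since \(\#\LQT\multimap\#\yoneda(1)=\ell\multimap 1=\neg\ell\) in \(\CPM\), identifies \(\neg\LQT=\LQT\multimap\yoneda(1)=\LQT\rightarrowtriangle\yoneda(1)\) with the hereditary submodule of \(\CPM({-},\neg\ell)\) whose \(n\)-th component is
\[
  \{\, \varphi\in\CPM(n,\neg\ell) \mid \forall m.\ \forall x\in\LQT_m.\ \mathbf{ev}\circ(\varphi\otimes x)\in\CQ(n\otimes m,1) \,\}.
\]
It then remains to observe that \(\mathbf{ev}\circ(\varphi\otimes x)\in\CQ(n\otimes m,1)\) --- i.e.\ it is trace-non-increasing, i.e.\ has operator norm \(\le 1\) --- holds if and only if \(\mathbf{ev}\circ(\varphi\otimes x)\circ\psi\le 1\) in \(\CPM(1,1)=\Real_{\ge 0}\) for every state \(\psi\in\CQ(1,n\otimes m)\), because the supremum defining the operator norm may equivalently be taken over positive matrices of trace \(\le 1\). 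This is exactly the condition defining \(\LQT^{\bot}_n\), so \(\LQT^{\bot}=\neg\LQT\).

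I expect the main obstacle to be purely bookkeeping rather than conceptual: in (4) one must carefully match the membership condition ``\(\in\CQ(n\otimes m,1)\)'' of \cref{lem:linear:lqt-function-and-tensor} with the pointwise-over-states inequality in the definition of \(({-})^{\bot}\), and keep the compact-closed identifications (\(\neg\ell=\ell\multimap 1\), \(\neg\neg\ell=\ell\), and the matching of the two evaluations via the symmetry) straight; likewise, the symmetry of the pairing used in (2) is the one place where the strictness conventions for \(\CPM\) need to be invoked explicitly. Everything else is formal unfolding of definitions.
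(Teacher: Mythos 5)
Your proposal is correct and follows essentially the same route as the paper: parts (1)--(3) are the standard Galois-connection formalities (with the second equivalence in (2) resting on the strict self-duality \( \neg\neg\ell = \ell \) in \( \CPM \)), and part (4) is exactly the paper's one-line argument that \( \LQT \rightarrowtriangle \yoneda(1) = \LQT^{\bot} \), which you usefully expand by checking that \( \yoneda(1) \) is pseudo-representable and that membership in \( \CQ(n \otimes m, 1) \) coincides with the pointwise-over-states bound in the definition of \( ({-})^{\bot} \).
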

\begin{proof}
    (1)~Trivial.
    (2)~The first equivalence is trivial.  The second equivalence follows from the fact that the canonical morphism \( \ell \longrightarrow \neg\neg\ell \) in \( \CPM \) is the identity.
    (3)~\( \mathcal{F} \subseteq \mathcal{F}^{\bot\bot} \) follows from (1) and (2).
    In particular, \( \mathcal{F}^{\bot} \subseteq \mathcal{F}^{\bot\bot\bot} \).
    By the anti-monotonicity of \( ({-})^{\bot} \) (\ie~\( \mathcal{F}_1 \subseteq \mathcal{F}_2\) implies \( \mathcal{F}_1^\bot \supseteq \mathcal{F}_2^\bot \)) and \( \mathcal{F} \subseteq \mathcal{F}^{\bot\bot} \), we also have \( \mathcal{F}^{\bot\bot\bot} \subseteq \mathcal{F}^{\bot} \).
(4)~Since \( \LQT \rightarrowtriangle \yoneda(1) \) equals \( \LQT^{\bot} \).
\end{proof}

\begin{lemma}\label{lem:based-idempotency}
  The monad \( \neg\neg \) is idempotent on \( \basedCQ \), \ie~the multiplication \( \mu_{M} \colon \neg\neg\neg\neg M \longrightarrow \neg\neg M \) of the monad \( \neg\neg \) is an isomorphism for each \( M \in \basedCQ \).
\end{lemma}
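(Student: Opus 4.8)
The plan is to reduce the claim to the pseudo-representable case, where the negation is computed by the orthogonality operator of \cref{lem:negation-of-pseudo-representable}, and then to glue the pseudo-representable pieces of a basis together using the $\SMon$-enrichment of $\neg\neg$.

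First I would treat a pseudo-representable $\CQ$-module $\LQT$ with underlying object $\ell$. Since $\neg\LQT = (\LQT \multimap I) = (\LQT \rightarrowtriangle \yoneda(1))$, which is pseudo-representable by \cref{lem:linear:lqt-function-and-tensor} (note that $\yoneda(1) \hookrightarrow \CPM({-},1)$ is itself pseudo-representable), an easy induction shows that every iterated negation $\neg^{k}\LQT$ is pseudo-representable; and since $\CPM$ is compact closed with $\ell^{*} = \ell$, \cref{lem:negation-of-pseudo-representable}(4) identifies $\neg^{k}\LQT$ with the subfamily $\LQT^{\bot^{k}} \hookrightarrow \CPM({-},\ell)$. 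Under this identification the unit $\eta$ of $\neg\neg$ becomes the canonical inclusion $\mathcal{F} \subseteq \mathcal{F}^{\bot\bot}$ of \cref{lem:negation-of-pseudo-representable}(3); establishing this dictionary (in particular, that the abstract unit really is this inclusion) is the point where one uses that the canonical map $\ell \longrightarrow \neg\neg\ell$ in $\CPM$ is the identity. Applying \cref{lem:negation-of-pseudo-representable}(3) to $\mathcal{F} = \LQT^{\bot}$ gives $\LQT^{\bot\bot\bot\bot} = \LQT^{\bot\bot}$, so $\neg^{4}\LQT$ and $\neg^{2}\LQT$ have the same underlying subfamily of $\CPM({-},\ell)$; hence $\eta_{\neg\neg\LQT} \colon \neg^{2}\LQT \longrightarrow \neg^{4}\LQT$ is an equality, and therefore $\mu_{\LQT}$, which satisfies $\mu_{\LQT} \circ \eta_{\neg\neg\LQT} = \ident$ and is thus its inverse, is an isomorphism.

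For the general case, fix $M \in \basedCQ$ with a pseudo-representable basis $(\LQT_{b}, \ket{b}, \bra{b})_{b\in B}$, so $\ident_{M} = \sum_{b} \ket{b}\circ\bra{b}$. The functor $\neg$ is contravariantly $\SMon$-enriched, hence so is each $\neg^{k}$, and applying $\neg^{2}$ and $\neg^{4}$ to this identity and resolving the resulting Kleene inequalities (the left-hand sides being defined) yields
\begin{equation*}
  \ident_{\neg^{2} M} = \sum_{b} \neg^{2}(\ket{b})\circ\neg^{2}(\bra{b})
  \qquad\text{and}\qquad
  \ident_{\neg^{4} M} = \sum_{b} \neg^{4}(\ket{b})\circ\neg^{4}(\bra{b}).
\end{equation*}
Since $\mu_{M} \circ \eta_{\neg\neg M} = \ident_{\neg^{2} M}$ holds for any monad, $\mu_{M}$ is a split epimorphism, and it remains to prove $\eta_{\neg\neg M}\circ\mu_{M} = \ident_{\neg^{4} M}$. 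Using naturality of $\mu$ and of $\eta$, together with the pseudo-representable case (so that $\eta_{\neg\neg\LQT_{b}}\circ\mu_{\LQT_{b}} = \ident_{\neg^{4}\LQT_{b}}$), one computes for each $b$
\begin{equation*}
  \eta_{\neg\neg M}\circ\mu_{M}\circ\neg^{4}(\ket{b})
  = \eta_{\neg\neg M}\circ\neg^{2}(\ket{b})\circ\mu_{\LQT_{b}}
  = \neg^{4}(\ket{b})\circ\eta_{\neg\neg\LQT_{b}}\circ\mu_{\LQT_{b}}
  = \neg^{4}(\ket{b}),
\end{equation*}
and postcomposing with $\neg^{4}(\bra{b})$, summing over $b$, and using that composition is bilinear gives $\eta_{\neg\neg M}\circ\mu_{M}\circ\ident_{\neg^{4} M} = \ident_{\neg^{4} M}$, i.e.\ $\eta_{\neg\neg M}\circ\mu_{M} = \ident_{\neg^{4} M}$. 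Thus $\mu_{M}$ is an isomorphism for every $M \in \basedCQ$.

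The main obstacle will be the bookkeeping in the pseudo-representable step that makes the abstract monad $\neg\neg$ (its unit, and the realisation of the iterated negations) match the concrete orthogonality picture of \cref{lem:negation-of-pseudo-representable} --- precisely, that the monad unit is the subfamily inclusion $\mathcal{F} \subseteq \mathcal{F}^{\bot\bot}$, so that the equality $\LQT^{\bot\bot\bot\bot} = \LQT^{\bot\bot}$ can be read off as ``$\mu_{\LQT}$ is an isomorphism''. Once that is settled, \cref{lem:negation-of-pseudo-representable}(3) does the real work and the gluing step is routine $\Sigma$-monoid algebra.
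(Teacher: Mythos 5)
Your proof is correct and follows essentially the same strategy as the paper's: establish the atomic case from \cref{lem:negation-of-pseudo-representable} (the paper does this for the representable modules \( \yoneda(n) \), you for general pseudo-representable ones) and then glue along a basis using the \( \SMon \)-enrichment of \( \neg \) and of composition. The only difference is in the final step, where the paper invokes the general fact that for an idempotent-monad candidate it suffices that \( \mu_M \) be a monomorphism and checks this via \cref{lem:diagonal-monotone}, whereas you exhibit \( \eta_{\neg\neg M} \) directly as a two-sided inverse by evaluating \( \eta_{\neg\neg M}\circ\mu_M \) on the basis components \( \neg^4(\ket{b}) \); both routes are sound.
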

\begin{proof}
    By \cref{lem:negation-of-pseudo-representable}, \( \mu_{\yoneda(n)} \colon \neg\neg\neg\neg \yoneda(n) \longrightarrow \neg\neg \yoneda(n) \) is an isomorphism.

    Let us consider the general case.
    By the general theory of idempotent monad, it suffices to show that \( \mu \) is a monomorphism.

    Assume that \( M \in \basedCQ \).
    Let \( (n_i, e_i, \varphi_i)_i \) be a basis.
    By the naturality of \( \mu \),
    \begin{equation*}
        \xymatrix{
            \neg\neg\neg\neg M \ar[d]^\mu \ar[r]^{\neg\neg\neg\neg\varphi_i} & \neg\neg\neg\neg \yoneda(n_i) \ar[d]^{\mu}
            \\
            \neg\neg M \ar[r]^{\neg\neg \varphi_i} & \neg\neg \yoneda(n_i)
        }
    \end{equation*}
    and
    \begin{equation*}
        \xymatrix{
            \neg\neg\neg\neg M \ar[d]^\mu & \neg\neg\neg\neg \yoneda(n_i) \ar[d]^{\mu} \ar[l]^{\neg\neg\neg\neg e_i}
            \\
            \neg\neg M & \neg\neg \yoneda(n_i) \ar[l]^{\neg\neg e_i}
        }
    \end{equation*}
    commute for every \( i \).
    By the \( \SMon \)-enrichment,
    \begin{align*}
        \ident_{\neg\neg\neg\neg M}
        &=
        \neg\neg\neg\neg \ident_M
        \\
        &=
        \sum_i \neg\neg\neg\neg e_i \circ \neg\neg\neg\neg \varphi_i
    \end{align*}
    and
    \begin{align*}
        \ident_{\neg\neg M}
        &=
        \neg\neg \ident_M
        \\
        &=
        \sum_i \neg\neg e_i \circ \neg\neg \varphi_i.
    \end{align*}
    So the claim follows from \cref{lem:diagonal-monotone}.
\end{proof}

By the general theory of idempotent monad, the Eilenberg-Moore category \( \classicalCQ \) is the full subcategory of \( \basedCQ \) consisting of \( \CQ \)-modules \( M \) such that the canonical morphism \( M \longrightarrow \neg\neg M \) is an isomorphism.
In particular, \( \neg\neg M \) is isomorphic to \( M \) for every \( M \in \classicalCQ \).

\subsection{Multiplicatives and Additives}
\( \classicalCQ \) has additives and multiplicatives, some of which slightly differ from those in \( \pshCQ \).

\begin{lemma}\label{lem:additive-on-classical}
  Let \( M, N \in \classicalCQ \).
  Their product and coproduct in \( \classicalCQ \) are \( M \times N \) and \( \neg\neg(M \amalg N) \).
\end{lemma}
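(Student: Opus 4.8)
The plan is to exploit that \(\classicalCQ\) is a \emph{reflective} subcategory of \(\basedCQ\). By \cref{lem:based-idempotency} and the discussion following it, \(\classicalCQ\) is the Eilenberg--Moore category of the idempotent monad \(\neg\neg\) on \(\basedCQ\); hence the inclusion \(\classicalCQ \hookrightarrow \basedCQ\) is full and faithful and has a left adjoint (reflector) which sends \(L \in \basedCQ\) to \(\neg\neg L\) — well-defined because \cref{lem:based-additive-multiplicative} guarantees \(\neg\neg L = (L \multimap I) \multimap I \in \basedCQ\), and \(\eta_{\neg\neg L}\) is invertible by idempotency. I would first record that \(\basedCQ\) already has binary products and coproducts, computed as in \(\pshCQ\): the objects \(M \times N\) and \(M \amalg N\) of \(\pshCQ\) lie in \(\basedCQ\) by \cref{lem:based-additive-multiplicative}, and since \(\basedCQ\) is full in \(\pshCQ\) they retain their universal properties inside \(\basedCQ\). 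Write \(\eta\) for the unit of \(\neg\neg\); the defining reflection bijection is that precomposition with \(\eta_X\) gives \(\basedCQ(\neg\neg X, L) \cong \basedCQ(X, L)\) for every \(X \in \basedCQ\) and \(L \in \classicalCQ\).

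For the product, I would invoke the standard fact that a reflective subcategory is closed under all limits that exist in the ambient category. Concretely, for \(M, N \in \classicalCQ\) I would check directly that the \(\basedCQ\)-product \(M \times N\) already lies in \(\classicalCQ\), i.e.\ that \(\eta_{M \times N}\) is invertible: using that \(\eta_M, \eta_N\) are isomorphisms, let \(q \colon \neg\neg(M \times N) \longrightarrow M \times N\) be the morphism induced by the cone \((\eta_M^{-1} \circ \neg\neg\pi_M,\ \eta_N^{-1} \circ \neg\neg\pi_N)\). Naturality of \(\eta\) gives \(q \circ \eta_{M \times N} = \ident\), and then applying the reflection bijection with \(L = \neg\neg(M \times N)\) to the identity \((\eta_{M\times N} \circ q) \circ \eta_{M\times N} = \eta_{M\times N}\) yields \(\eta_{M\times N} \circ q = \ident\). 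Hence \(M \times N \in \classicalCQ\), and then for \(L \in \classicalCQ\) fullness gives \(\classicalCQ(L, M \times N) = \basedCQ(L, M \times N) \cong \basedCQ(L, M) \times \basedCQ(L, N) = \classicalCQ(L, M) \times \classicalCQ(L, N)\), naturally in \(L\), with the two ordinary projections of \(M \times N\); so \(M \times N\) is the product in \(\classicalCQ\).

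For the coproduct, I would use the dual standard fact that a reflective subcategory has all colimits the ambient has, obtained by applying the reflector to the ambient colimit. The \(\basedCQ\)-coproduct of \(M, N\) is \(M \amalg N\), so the coproduct in \(\classicalCQ\) is the reflection \(\neg\neg(M \amalg N)\), which lies in \(\classicalCQ\) by idempotency of \(\neg\neg\). The universal property is the chain, natural in \(L \in \classicalCQ\),
\[
\classicalCQ(\neg\neg(M \amalg N), L) \ \cong\ \basedCQ(M \amalg N, L) \ \cong\ \basedCQ(M, L) \times \basedCQ(N, L) \ =\ \classicalCQ(M, L) \times \classicalCQ(N, L),
\]
where the first isomorphism is the reflection bijection (precomposition with \(\eta_{M \amalg N}\)), the second is the universal property of \(M \amalg N\) in \(\basedCQ\), and the last equality is fullness. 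Tracing the identity through this chain exhibits the coprojections as the composites \(M \hookrightarrow M \amalg N \xrightarrow{\eta} \neg\neg(M \amalg N)\) and likewise for \(N\).

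The only non-formal inputs are \cref{lem:based-additive-multiplicative} (closure of \(\basedCQ\) under \(\times\), \(\amalg\) and \(\neg\neg\)) and \cref{lem:based-idempotency} (idempotency, which supplies the reflective structure and the reflection bijection); everything else is the abstract behaviour of reflective subcategories. The one step requiring a little care is the verification that \(M \times N \in \classicalCQ\) when \(M, N \in \classicalCQ\) — this is what legitimises ``products in \(\classicalCQ\) are products in \(\basedCQ\)'' — and I expect this to be the main, though still routine, obstacle; I would discharge it by the split-mono-plus-reflection argument above rather than by any computation with bases.
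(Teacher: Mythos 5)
Your proof is correct and follows essentially the same route as the paper: both treat \( \classicalCQ \) as a reflective subcategory of \( \basedCQ \) via the idempotent monad \( \neg\neg \), obtain the coproduct by applying the reflector to \( M \amalg N \), and identify the product with \( M \times N \) itself. The only cosmetic difference is in the product case, where the paper concludes \( \neg\neg(M \times N) \cong M \times N \) from the fact that the inclusion, being a right adjoint, preserves products, whereas you verify directly that \( \eta_{M \times N} \) is invertible by a split-mono argument; these are equivalent standard facts about reflective subcategories.
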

\begin{proof}
  An easy calculation shows that \( \neg\neg(M \times N) \) and \( \neg\neg(M \amalg N) \) are the product and coproduct in \( \classicalCQ \), respectively.
  Since \( \classicalCQ \hookrightarrow \basedCQ \) as a right adjoint preserves products, \( \neg\neg(M \times N) \cong M \times N \). 
\end{proof}

The tensor product is defined via the bilinear maps.
For \( M,N \in \classicalCQ \), consider the \( \SMon \)-enriched functor \( \Bilin(M,N; {-}) \colon \classicalCQ \longrightarrow \SMon \), the restriction of the previously considered functor to \( \classicalCQ \).
If its representing object exists for each \( M,N \in \classicalCQ \), then \( \classicalCQ \) has an induced symmetric monoidal structure \( \otimes \) such that \( M \otimes N \) is the representing object of \( \Bilin(M,N;{-}) \).
\begin{lemma}\label{lem:monoidal-structure-on-classical}
  Let \( M, N \in \classicalCQ \).
  Then \( \neg\neg(M \ptensor N) \) is the representing object in \( \classicalCQ \) of the bilinear maps.
  The right adjoint of \( \neg\neg(({-}) \ptensor M) \) is \( M \multimap ({-}) \).
\end{lemma}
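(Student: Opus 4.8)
The plan is to reduce both assertions to structure already available in \( \pshCQ \) and \( \basedCQ \), using three ingredients: (i)~\( \pshCQ \) is symmetric monoidal closed, with \( \pshCQ(M \ptensor N, L) \cong \Bilin(M,N;L) \) the defining property of the Day tensor and \( \ptensor \dashv \multimap \) its closed structure; (ii)~\( \basedCQ \) is a full subcategory of \( \pshCQ \) closed under \( \ptensor \) and \( \multimap \) by \cref{lem:based-additive-multiplicative}; and (iii)~by \cref{lem:based-idempotency} the monad \( \neg\neg \) on \( \basedCQ \) is idempotent, so that \( \classicalCQ \hookrightarrow \basedCQ \) is reflective with reflector \( \neg\neg \), whence \( \classicalCQ(\neg\neg X, L) \cong \basedCQ(X, L) \) naturally for \( X \in \basedCQ \) and \( L \in \classicalCQ \). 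All isomorphisms used below are the canonical ones and so are \( \SMon \)-enriched, that is, isomorphisms of \( \Sigma \)-monoids natural in the stated variables.

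For the first assertion, fix \( L \in \classicalCQ \). Since \( M \ptensor N \in \basedCQ \) by~(ii) and \( \basedCQ \) is full in \( \pshCQ \), I would chain
\[
    \classicalCQ\bigl(\neg\neg(M \ptensor N),\, L\bigr)
    \;\cong\; \basedCQ(M \ptensor N,\, L)
    \;=\; \pshCQ(M \ptensor N,\, L)
    \;\cong\; \Bilin(M, N;\, L),
\]
using the reflection at the first step and~(i) at the last; also \( \neg\neg(M \ptensor N) \in \classicalCQ \) by idempotency. By the enriched Yoneda lemma this exhibits \( \neg\neg(M \ptensor N) \) as the representing object in \( \classicalCQ \) of the \( \SMon \)-enriched functor \( \Bilin(M,N;{-}) \colon \classicalCQ \longrightarrow \SMon \).

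For the second assertion I would first check that \( M \multimap ({-}) \) restricts to an endofunctor of \( \classicalCQ \): for \( B \in \classicalCQ \) we have \( B \cong \neg\neg B \), hence \( M \multimap B \cong M \multimap \neg\neg B \cong \neg(M \ptensor \neg B) \) by the closed structure of \( \pshCQ \), and any single negation \( \neg Z \) with \( Z \in \basedCQ \) lies in \( \classicalCQ \) — indeed \( \neg(\eta_Z) \colon \neg\neg(\neg Z) \longrightarrow \neg Z \) equips \( \neg Z \) with a \( \neg\neg \)-algebra structure, the associativity axiom being automatic precisely because \( \neg\neg \) is idempotent. Then, for \( A, B \in \classicalCQ \), the chain
\[
    \classicalCQ\bigl(\neg\neg(A \ptensor M),\, B\bigr)
    \;\cong\; \basedCQ(A \ptensor M,\, B)
    \;=\; \pshCQ(A \ptensor M,\, B)
    \;\cong\; \pshCQ(A,\, M \multimap B)
    \;=\; \classicalCQ(A,\, M \multimap B)
\]
— reflection, fullness, the adjunction \( \ptensor \dashv \multimap \) in \( \pshCQ \), and fullness again using \( M \multimap B \in \classicalCQ \) — is natural in \( A \) and \( B \), so \( M \multimap ({-}) \) is right adjoint to \( \neg\neg\bigl(({-}) \ptensor M\bigr) \) on \( \classicalCQ \). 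Conceptually this is the reflection theorem for symmetric monoidal closed categories applied to \( \classicalCQ \subseteq \basedCQ \); the only step that is not pure formalism is the closure of \( \classicalCQ \) under \( \multimap \), and that is the main obstacle — it rests on the idempotency of \( \neg\neg \) (\cref{lem:based-idempotency}) and, underneath it, on \cref{lem:negation-of-pseudo-representable} bootstrapped through a pseudo-representable basis.
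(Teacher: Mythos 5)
Your proposal is correct and follows essentially the same route as the paper: both parts reduce to the universal properties in \( \pshCQ \) via the reflection \( \classicalCQ(\neg\neg X, L) \cong \pshCQ(X,L) \), and the only substantive point — that \( M \multimap B \) lands in \( \classicalCQ \) — is handled exactly as in the paper, by rewriting it as \( \neg(M \ptensor \neg B) \) and invoking closure of \( \basedCQ \) under \( \ptensor \), \( \neg \) and the fact that single negations of based modules are \( \neg\neg \)-fixed (the paper delegates this last step to \cref{lem:single-negation-and-biorthogonal}, you justify it directly via the idempotent-monad algebra structure on \( \neg Z \)).
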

\begin{proof}
  The former comes from \( \classicalCQ(\neg\neg(M \ptensor N), L) \cong \pshCQ(M \ptensor N, L) \cong \Bilin(M,N; L) \).
To prove the latter, we first check that \( M \multimap N \in \classicalCQ \) for \( M,N \in \classicalCQ \).
  Since \( M \cong \neg\neg M \), we have \( (N \multimap M) \cong \neg (N \ptensor \neg M) \).
  By \cref{lem:based-additive-multiplicative}, \( N \ptensor \neg M \in \basedCQ \).
  Then \( \neg (N \ptensor \neg M) \in \classicalCQ \) because \( L \in \basedCQ \) implies \( \neg L \in \classicalCQ \) for every \( L \) (see \cref{lem:single-negation-and-biorthogonal} below).
Then we have \( \classicalCQ(\neg\neg(L \ptensor M), N) \cong \pshCQ(L \ptensor M, N) \cong \pshCQ(L, M \multimap N) \cong \classicalCQ(L, M \multimap N) \).
\end{proof}

\begin{lemma}\label{lem:single-negation-and-biorthogonal}
  For \( M \in \basedCQ \), we have \( \neg M \in \classicalCQ \).
\end{lemma}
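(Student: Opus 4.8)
The plan is to show that the canonical morphism $\eta_{\neg M}\colon\neg M\longrightarrow\neg\neg\neg M$ (the unit of the monad $\neg\neg$ at the object $\neg M$) is an isomorphism. By \cref{lem:based-idempotency} the monad $\neg\neg$ is idempotent on $\basedCQ$, so $\classicalCQ$ is exactly the full subcategory of those $N\in\basedCQ$ whose unit $\eta_N\colon N\to\neg\neg N$ is invertible; hence this suffices. First I would record that $\neg M=(M\multimap\yoneda(1))$ actually lies in $\basedCQ$: this is \cref{lem:based-additive-multiplicative} applied with the second argument $\yoneda(1)$, which carries a (trivial) pseudo-representable basis. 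Consequently $\neg$ restricts to a contravariant self-adjoint functor $\neg\dashv\neg$ on $\basedCQ$, with unit the family $\eta$, and all of its triangle identities and the monad structure of $\neg\neg$ are available within $\basedCQ$.

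Next I would invoke the triangle identity of this self-adjunction, $\neg\eta_N\circ\eta_{\neg N}=\ident_{\neg N}$, which in particular exhibits $\eta_{\neg M}$ as a split monomorphism with retraction $\neg\eta_M\colon\neg\neg\neg M\longrightarrow\neg M$. Since $\classicalCQ$ is the Eilenberg--Moore category of the idempotent monad $\neg\neg$, it is a full reflective subcategory of $\basedCQ$ and is therefore closed under retracts; and $\neg\neg\neg M=\neg\neg(\neg M)$ is a free $\neg\neg$-algebra, hence lies in $\classicalCQ$. Therefore its retract $\neg M$ lies in $\classicalCQ$. That is the whole argument; if one prefers not to cite closure under retracts one can make the inverse explicit, as in the next paragraph.

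Concretely, the monad laws give $\mu_M\circ\neg\neg\eta_M=\ident_{\neg\neg M}=\mu_M\circ\eta_{\neg\neg M}$, and by \cref{lem:based-idempotency} the multiplication $\mu_M=\neg\eta_{\neg M}$ is an isomorphism, whence $\neg\neg\eta_M=\eta_{\neg\neg M}$. Feeding this into naturality of $\eta$ at the morphism $\neg\eta_M\colon\neg\neg\neg M\to\neg M$, and then using the triangle identity at $\neg\neg M$, yields $\eta_{\neg M}\circ\neg\eta_M=\neg\neg(\neg\eta_M)\circ\eta_{\neg\neg\neg M}=\neg\eta_{\neg\neg M}\circ\eta_{\neg\neg\neg M}=\ident_{\neg\neg\neg M}$. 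Together with the triangle identity $\neg\eta_M\circ\eta_{\neg M}=\ident_{\neg M}$ this shows $\eta_{\neg M}$ is an isomorphism with inverse $\neg\eta_M$, so $\neg M\in\classicalCQ$.

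The only genuine content is the appeal to idempotency of $\neg\neg$ (\cref{lem:based-idempotency}): without it one obtains merely that $\neg M$ is a retract of $\neg\neg\neg M$, as happens for double-dualization monads in general, and it is precisely idempotency that collapses triple negation to single negation here; at the level of arbitrary $M\in\basedCQ$ this is the analogue of the identity $\mathcal{F}^{\bot\bot\bot}=\mathcal{F}^{\bot}$ from \cref{lem:negation-of-pseudo-representable}. A minor bookkeeping point to get right is that $\neg$ really does restrict to a self-adjunction on $\basedCQ$, i.e.\ that $\basedCQ$ is closed under $\multimap$, which is why \cref{lem:based-additive-multiplicative} is cited at the start.
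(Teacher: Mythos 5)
Your proof is correct. It establishes the same underlying fact as the paper --- that \( \neg M \), being a value of the right adjoint of the self-adjunction \( \neg \dashv \neg \), is a \( \neg\neg \)-algebra --- but by a different route: the paper's proof is a one-line appeal to the comparison functor \( K \colon \basedCQ^{\op} \longrightarrow \classicalCQ \) of that adjunction (so that \( \neg \) factors as \( \basedCQ^{\op} \stackrel{K}{\longrightarrow} \classicalCQ \hookrightarrow \basedCQ \)), whereas you verify directly that the unit \( \eta_{\neg M} \) is invertible with inverse \( \neg\eta_M \). Your computation checks out: the triangle identity gives \( \neg\eta_M \circ \eta_{\neg M} = \ident_{\neg M} \) unconditionally, and the other composite reduces, via naturality and the identity \( \neg\neg\eta_M = \eta_{\neg\neg M} \) extracted from \cref{lem:based-idempotency}, to the triangle identity at \( \neg\neg M \). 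You are also right that idempotency is indispensable in either version (the paper uses it to identify the Eilenberg--Moore category with the fixed-point subcategory; you use it to upgrade the split mono to an isomorphism), and your explicit remark that \( \basedCQ \) is closed under \( \multimap \) (\cref{lem:based-additive-multiplicative}), so that \( \neg \) genuinely restricts to a self-adjunction on \( \basedCQ \), fills in a step the paper leaves implicit. The trade-off is brevity versus self-containedness: the paper's argument needs no diagram chase but presupposes familiarity with comparison functors, while yours makes visible exactly where idempotency enters.
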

\begin{proof}
  Recall that \( \neg\neg \) is the monad induced from the self-adjunction \( \neg : \basedCQ \leftrightarrows \basedCQ^{\op} : \neg \).
  Since \( \classicalCQ \) is the Eilenberg-Moore category, we have a comparison \( K \colon \basedCQ^{\op} \longrightarrow \classicalCQ \) and the right adjoint \( \neg \colon \basedCQ^{\op} \longrightarrow \basedCQ \) factors as \( \basedCQ^{\op} \stackrel{K}{\longrightarrow} \classicalCQ \hookrightarrow \basedCQ \).
\end{proof}

\begin{lemma}\label{lem:double-negation-strong-monoidal}
  \( \neg\neg: \basedCQ \longrightarrow \classicalCQ \) is strong monoidal.   
\end{lemma}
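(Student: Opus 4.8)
The plan is to recognise this lemma as an instance of Day's reflection theorem and to verify its hypotheses in our situation. Recall (the remark preceding \cref{lem:based-additive-multiplicative}) that $\basedCQ$ is a symmetric monoidal closed category with tensor $\ptensor$, internal hom $\multimap$ and unit $I = \yoneda(1)$, all inherited from $\pshCQ$ — the unit lies in $\basedCQ$ because $\yoneda(1)$ is pseudo-representable — and (\cref{lem:based-idempotency}) that $\neg\neg$ is an idempotent monad on $\basedCQ$ whose category of algebras is $\classicalCQ$, with induced tensor $M \otimes N = \neg\neg(M \ptensor N)$ and internal hom $M \multimap N$ (\cref{lem:additive-on-classical,lem:monoidal-structure-on-classical}). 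To show the reflector $\neg\neg \colon \basedCQ \longrightarrow \classicalCQ$ is strong monoidal I would produce natural isomorphisms $I \cong \neg\neg I$ on units and $\neg\neg M \otimes \neg\neg N \cong \neg\neg(M \ptensor N)$ on tensors, and check their coherence. Both reduce to one structural fact: that $\classicalCQ$ is an \emph{exponential ideal} of $\basedCQ$, i.e.\ $A \multimap B \in \classicalCQ$ whenever $A \in \basedCQ$ and $B \in \classicalCQ$.

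First I would prove the exponential-ideal property, extending the argument already used inside the proof of \cref{lem:monoidal-structure-on-classical}. Since $B \cong \neg\neg B$, the closed structure of $\pshCQ$ gives $A \multimap B \cong A \multimap \neg(\neg B) \cong \neg(A \ptensor \neg B)$; and $\neg B = B \multimap I$, hence also $A \ptensor \neg B$, lie in $\basedCQ$ by \cref{lem:based-additive-multiplicative}, so $\neg(A \ptensor \neg B) \in \classicalCQ$ by \cref{lem:single-negation-and-biorthogonal}. Specialising to $B = I$ — which is in $\classicalCQ$ because $\neg I = I \multimap I \cong I$, so $\neg\neg I \cong I$ — disposes of the unit isomorphism. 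This is the step I expect to be the main obstacle: it is short, but it is where the specific structure of $\neg$ on $\basedCQ$ really enters, resting on the closure of $\basedCQ$ under $\ptensor$ and $\multimap$ and on the fact that single negations of $\basedCQ$-objects are already biorthogonally closed.

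For the tensor isomorphism I would argue by representability: for any $L \in \classicalCQ$, repeatedly applying the tensor--hom adjunction (together with the symmetry of $\ptensor$) and the reflection adjunction between $\basedCQ$ and $\classicalCQ$,
\begin{align*}
  \pshCQ(\neg\neg(\neg\neg M \ptensor \neg\neg N),\, L)
  &\cong \pshCQ(\neg\neg N,\, \neg\neg M \multimap L) \\
  &\cong \pshCQ(N,\, \neg\neg M \multimap L)
  \cong \pshCQ(\neg\neg M,\, N \multimap L) \\
  &\cong \pshCQ(M,\, N \multimap L) \\
  &\cong \pshCQ(M \ptensor N,\, L)
  \cong \pshCQ(\neg\neg(M \ptensor N),\, L),
\end{align*}
where the uses of the reflection adjunction are legitimate precisely because, by the exponential-ideal property, every object appearing as the target of such a step — in particular $\neg\neg M \multimap L$ and $N \multimap L$ — lies in $\classicalCQ$. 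As this composite is natural in $L \in \classicalCQ$, the Yoneda lemma yields $\neg\neg M \otimes \neg\neg N = \neg\neg(\neg\neg M \ptensor \neg\neg N) \cong \neg\neg(M \ptensor N)$, and unwinding the isomorphisms identifies this with the canonical comparison map induced by $\eta_M \ptensor \eta_N$. Coherence of the unit and tensor isomorphisms with the associators, unitors and symmetry is then forced by uniqueness of the mediating morphisms into the relevant representing objects, exactly as in the standard proof of Day's reflection theorem; since $\basedCQ$, $\pshCQ$ and the monad $\neg\neg$ are symmetric monoidal throughout, the resulting $\neg\neg$ is in fact strong \emph{symmetric} monoidal.

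If one prefers to avoid invoking Day's theorem, the same conclusion can be reached directly by showing that $\neg\neg$ inverts $\eta_M \ptensor \eta_N$: one transports the corresponding isomorphism for representable modules — where $\neg\neg$ is explicitly computable via the orthogonality relation of \cref{lem:negation-of-pseudo-representable} — along a basis of $M \ptensor N$, using \cref{lem:diagonal-monotone}, mirroring the proof of \cref{lem:based-idempotency}. This route is longer but self-contained.
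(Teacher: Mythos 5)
Your proposal is correct and follows essentially the same route as the paper, whose one-line proof ("by the defining universal properties of the tensor products") is precisely the representability comparison you carry out in detail; your exponential-ideal step is the same computation the paper performs inside the proof of \cref{lem:monoidal-structure-on-classical}, and the Day-reflection packaging just makes the coherence part explicit.
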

\begin{proof}
  By the defining universal properties of the tensor products in these categories.
\end{proof}

\begin{theorem}
  There exists a fully faithful strong-monoidal (\(\SMon\)-enriched) functor \( \CQ \hookrightarrow \classicalCQ \).
\end{theorem}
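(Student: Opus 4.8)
The plan is to show that the Yoneda embedding $\yoneda \colon \CQ \longrightarrow \pshCQ$ corestricts to a functor $\CQ \longrightarrow \classicalCQ$ enjoying the three required properties. Recall that $\yoneda$ is already fully faithful (by the enriched Yoneda Lemma, $\pshCQ(\yoneda(n),\yoneda(m)) \cong \CQ(n,m)$ as $\Sigma$-monoids), $\SMon$-enriched, and strong monoidal ($\yoneda(n \otimes m) \cong \yoneda(n) \ptensor \yoneda(m)$ and $\yoneda(1) = \yoneda(1)$ is the tensor unit of $\pshCQ$). So the real content of the theorem is that every representable $\yoneda(n)$ lies in the full subcategory $\classicalCQ$; once that is known, transporting the structure of $\yoneda$ along the inclusion $\classicalCQ \hookrightarrow \pshCQ$ is essentially bookkeeping, the only care being that the tensor of $\classicalCQ$ is $\neg\neg(({-}) \ptensor ({-}))$ rather than $\ptensor$.

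First I would record that $\yoneda(n)$ is pseudo-representable: $\yoneda(n) = \CQ({-},n)$ is a hereditary submodule of $\CPM({-},n)$ (downward-closedness and sum-reflection of $\CQ({-},n) \subseteq \CPM({-},n)$ follow from monotonicity of the operator norm with respect to the canonical order), it is bounded by $1$, and $\ident_n \in \CQ(n,n) = \yoneda(n)_n$ is a pseudo-universal element; in particular $\yoneda(n)$ is representable, hence trivially $\yoneda(n) \in \basedCQ$. To upgrade this to $\yoneda(n) \in \classicalCQ$ I would use that $\yoneda(n)$ is \emph{weakly dualisable} and self-dual (established in the analysis of flat modules), so that the two zig-zag composites witnessing the duality are scalar multiples of identities by positive reals. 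Running the standard argument that a dualisable object $X$ satisfies $X \multimap Y \cong X^{*} \ptensor Y$, and discharging the scalars by cancellability of $r \cdot ({-})$ for $r > 0$ (\cref{lem:real-action-invertible}), yields $\neg\yoneda(n) = \yoneda(n) \multimap \yoneda(1) \cong \yoneda(n)^{*} \ptensor \yoneda(1) \cong \yoneda(n)$, and therefore $\neg\neg\yoneda(n) \cong \yoneda(n)$ in $\basedCQ$. Since the monad $\neg\neg$ on $\basedCQ$ is idempotent (\cref{lem:based-idempotency}), any object abstractly isomorphic to one in the essential image of $\neg\neg$ has an invertible unit map; applying this to $\yoneda(n) \cong \neg\neg\yoneda(n)$ shows that the canonical morphism $\yoneda(n) \longrightarrow \neg\neg\yoneda(n)$ is an isomorphism, i.e.\ $\yoneda(n) \in \classicalCQ$.

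Granting this, the corestriction $\hat{\yoneda} \colon \CQ \longrightarrow \classicalCQ$ of $\yoneda$ is well defined on objects, and is the same assignment as $\yoneda$ on morphisms. It is fully faithful because $\classicalCQ$ is a full subcategory of $\pshCQ$, so $\classicalCQ(\yoneda(n),\yoneda(m)) = \pshCQ(\yoneda(n),\yoneda(m)) \cong \CQ(n,m)$; this bijection is an isomorphism of $\Sigma$-monoids since $\yoneda$ is $\SMon$-enriched and the hom-$\Sigma$-monoids of $\classicalCQ$ are inherited from $\pshCQ$, which also gives $\SMon$-enrichment of $\hat\yoneda$. For strong monoidality, $\yoneda(1) \in \classicalCQ$ is the monoidal unit of $\classicalCQ$ (for $M \in \classicalCQ$, $M \otimes \yoneda(1) = \neg\neg(M \ptensor \yoneda(1)) \cong \neg\neg M \cong M$), and $\hat\yoneda(n) \otimes \hat\yoneda(m) = \neg\neg(\yoneda(n) \ptensor \yoneda(m)) \cong \neg\neg\yoneda(n \otimes m) \cong \yoneda(n \otimes m) = \hat\yoneda(n \otimes m)$, using strong monoidality of $\yoneda$ into $\pshCQ$ and $\yoneda(n \otimes m) \in \classicalCQ$; the associativity, unit and symmetry coherence diagrams follow from those of $\pshCQ$ together with the strong monoidality of $\neg\neg \colon \basedCQ \longrightarrow \classicalCQ$ (\cref{lem:double-negation-strong-monoidal}). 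The hard part is the reflexivity statement $\yoneda(n) \cong \neg\neg\yoneda(n)$ of the second paragraph: the delicate points are that the merely scalar-broken weak duality of $\yoneda(n)$ is nonetheless enough to identify $\neg\yoneda(n)$ up to isomorphism, and that an abstract isomorphism has to be promoted to invertibility of the \emph{canonical} unit map via idempotency of $\neg\neg$.
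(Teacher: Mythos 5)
Your overall architecture is the right one (and essentially the paper's: the paper simply writes the functor as \( \CQ \stackrel{\yoneda}{\hookrightarrow} \basedCQ \stackrel{\neg\neg}{\longrightarrow} \classicalCQ \), which agrees with your corestriction once \( \neg\neg\yoneda(n) \cong \yoneda(n) \) is known). But the step you yourself flag as "the hard part" is where the argument breaks. Weak dualisability of \( \yoneda(n) \) does \emph{not} give \( \yoneda(n) \multimap \yoneda(1) \cong \yoneda(n)^{*} \ptensor \yoneda(1) \). The standard dualisability argument produces two morphisms whose composites are \( r \cdot \ident \) with \( 0 < r < 1 \), and such a pair does not exhibit an isomorphism: \cref{lem:real-action-invertible} only lets you \emph{cancel} a scalar from an equation \( r x = r y \); it does not let you \emph{divide} a morphism by \( r \), because the \( \Real_{\ge 0} \)-action by \( 1/r > 1 \) is only partial (already on \( \yoneda(1) = \CQ({-},1) \), multiplication by \( 1/2 \) is injective but not surjective). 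Worse, the conclusion \( \neg\yoneda(n) \cong \yoneda(n) \) is actually false: by \cref{lem:linear:lqt-function-and-tensor}, \( (\neg\yoneda(2))_1 \) is the set of effects \( \{ A \in \Mat_2(\Complex) \mid 0 \le A \le I \} \) (it contains \( \ident_2 \), of trace \( 2 \)), whereas \( \yoneda(2)_1 = \CQ(1,2) \) is the set of positive matrices of trace \( \le 1 \); by \cref{thm:cpm-representation} an isomorphism would be a linear bijection between these two convex bodies, which cannot exist (one is centrally symmetric about \( I/2 \), the other is not). So the chain \( \neg\neg\yoneda(n) \cong \neg\yoneda(n) \cong \yoneda(n) \) collapses at its first link.

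What is true, and what the paper relies on, is the weaker statement \( \neg\neg\yoneda(n) \cong \yoneda(n) \), obtained not from duality but from the orthogonality calculus of \cref{lem:negation-of-pseudo-representable}: \( \neg\LQT = \LQT^{\bot} \) for pseudo-representable \( \LQT \), so it suffices to check \( \yoneda(n)^{\bot\bot} = \yoneda(n) \) as submodules of \( \CPM({-},n) \). This is a direct computation: the trace functional \( \trace_n \) belongs to \( \yoneda(n)^{\bot} \), and orthogonality against it alone already forces an element of \( \CPM(m,n) \) to be trace-non-increasing, i.e.\ to lie in \( \CQ(m,n) = \yoneda(n)_m \); the reverse inclusion is \cref{lem:negation-of-pseudo-representable}(3). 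With that substitution your second paragraph closes (the promotion of the abstract isomorphism to invertibility of the canonical unit via idempotency of \( \neg\neg \), \cref{lem:based-idempotency}, is fine, though here the biorthogonality computation gives the canonical map directly), and your third paragraph — full faithfulness from fullness of the subcategory, \( \SMon \)-enrichment, and strong monoidality via \cref{lem:double-negation-strong-monoidal} — is correct and matches the paper.
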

\begin{proof}
  The Yoneda embedding \( \yoneda \colon \CQ \longrightarrow \pshCQ \) is fully faithful.
  It is also strong monoidal since the monoidal structure of \( \pshCQ \) is the Day tensor.
  As \( \yoneda(n) \in \basedCQ \), its codomain can be restricted to \( \basedCQ \).
  Then \( \CQ \stackrel{\yoneda}{\hookrightarrow} \basedCQ \stackrel{\neg\neg}{\longrightarrow} \classicalCQ \) is strong monoidal.
  It is fully faithful since \( \classicalCQ \) is a reflective subcategory of \( \basedCQ \) and \( \yoneda(n) \in \classicalCQ \).
\end{proof}

\subsection{Exponential}\label{sec:exponential}
This subsection proves that \( \neg\neg\cofreeexp ({-}) \) is a linear exponential comonad on \( \classicalCQ \), giving a monoidal adjunction
\begin{equation*}
  \xymatrix{
    \Comon(\basedCQ) \ar[r] & \basedCQ \ar[r]^-{\neg\neg} & \classicalCQ \ar@/^18pt/[ll]_-{\cofreeexp}
  },
\end{equation*}
where \( \Comon(\basedCQ) \) is the category of comonoids in \( \basedCQ \) and \( \Comon(\basedCQ) \longrightarrow \basedCQ \) is the forgetful functor.

\begin{theorem}\label{thm:classical-exponential}
  \( \neg\neg\cofreeexp \) is a linear exponential comonad in \( \classicalCQ \).
\end{theorem}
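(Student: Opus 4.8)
The plan is to verify that $\neg\neg\cofreeexp$ carries the structure of a linear exponential comonad on $\classicalCQ$ by transporting the corresponding structure on $\basedCQ$ along the monoidal reflection $\neg\neg \colon \basedCQ \longrightarrow \classicalCQ$. The conceptual skeleton is the monoidal adjunction displayed just before the statement:
\begin{equation*}
  \Comon(\basedCQ) \longrightarrow \basedCQ \xrightarrow{\ \neg\neg\ } \classicalCQ,
\end{equation*}
where the left leg is the forgetful functor and $\cofreeexp \colon \classicalCQ \longrightarrow \Comon(\basedCQ)$ is the composite of the reflection $\basedCQ \hookleftarrow \classicalCQ$ with the cofree-comonoid functor (which exists by \cref{thm:free-exponential}, noting that $\cofreeexp M \in \basedCQ$ for $M \in \basedCQ$ with $M \cong \neg N$ by \cref{thm:exponential-based-setting-formal-power-series}, and more generally for $M \in \classicalCQ$). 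First I would recall the standard fact (see Melli\`es) that a linear exponential comonad on a symmetric monoidal closed category $\mathcal{D}$ is precisely the comonad induced by a monoidal adjunction between $\mathcal{D}$ and a category of commutative comonoids over $\mathcal{D}$, i.e.\ one obtained from the cofree-comonoid construction when it exists. So it suffices to exhibit the monoidal adjunction above.

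The key steps, in order, would be: (1) Observe that $\Comon(\basedCQ) \longrightarrow \basedCQ$ is comonadic with right adjoint $\cofreeexp$, since $\basedCQ$ is locally presentable and symmetric monoidal closed (\cref{prop:local-presentability}, \cref{thm:presheaf-model} restricted to $\basedCQ$ via \cref{lem:based-additive-multiplicative}); in particular the forgetful functor from commutative comonoids is strong monoidal for the cartesian-like monoidal structure on $\Comon(\basedCQ)$, as is standard. (2) Recall from \cref{lem:double-negation-strong-monoidal} that $\neg\neg \colon \basedCQ \longrightarrow \classicalCQ$ is strong monoidal, and that it is left adjoint to the inclusion $\classicalCQ \hookrightarrow \basedCQ$ (\cref{lem:based-idempotency} and the Eilenberg–Moore description); a strong monoidal left adjoint whose right adjoint is (lax) monoidal yields a monoidal adjunction. (3) Compose the two monoidal adjunctions: $\Comon(\basedCQ) \rightleftarrows \basedCQ \rightleftarrows \classicalCQ$. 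The composite right-to-left functor is $\cofreeexp \circ (\classicalCQ \hookrightarrow \basedCQ)$, which by mild abuse is again written $\cofreeexp$; the induced comonad on $\classicalCQ$ is $\neg\neg \circ \cofreeexp$ applied to objects of $\classicalCQ$, i.e.\ $\neg\neg\cofreeexp$. (4) Conclude by the aforementioned characterisation that $\neg\neg\cofreeexp$ is a linear exponential comonad, with the comonoid structure on each $\neg\neg\cofreeexp M$ obtained by applying the strong monoidal $\neg\neg$ to the comonoid $\cofreeexp M$ in $\basedCQ$, and the Seely isomorphisms $\neg\neg\cofreeexp(M \times N) \cong \neg\neg\cofreeexp M \otimes \neg\neg\cofreeexp N$ and $\neg\neg\cofreeexp \top \cong I$ inherited from the corresponding facts in $\basedCQ$ (using that $\neg\neg$ preserves $\otimes$ and the unit, and that $\cofreeexp$ sends finite products to tensor products as part of the comonadic-comonoid adjunction).

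The main obstacle, I expect, is step (1) together with the interaction of the two adjunctions: one must check that the cofree-comonoid functor $\cofreeexp$ really does restrict/corestrict appropriately—specifically that $\cofreeexp M \in \basedCQ$ for every $M \in \classicalCQ$ (not merely for $M$ of the form $\neg N$), and that the counit/comultiplication and the Seely maps constructed in $\basedCQ$ are preserved by $\neg\neg$ in a way compatible with the monoidal structure of $\classicalCQ$ (recall the monoidal product on $\classicalCQ$ is $\neg\neg(-\ptensor-)$, not $\ptensor$ itself). Here \cref{thm:exponential-based-setting-formal-power-series} is the essential input: it gives that $\cofreeexp M$ has a pseudo-representable basis whenever $M$ does and $M \cong \neg N$, and since every $M \in \classicalCQ$ satisfies $M \cong \neg\neg M = \neg(\neg M)$ with $\neg M \in \basedCQ$, this hypothesis is met; moreover \cref{lem:based-additive-multiplicative} and \cref{lem:single-negation-and-biorthogonal} ensure closure of $\classicalCQ$ under $\multimap$ and of the relevant modules under $\neg$. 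A secondary subtlety is verifying that the comparison $\Comon(\basedCQ) \to \basedCQ$ and the reflection $\basedCQ \to \classicalCQ$ compose to a \emph{monoidal} adjunction and not merely an ordinary one; this follows formally once both legs are monoidal adjunctions, but writing out the coherence for the composite oplax/lax structure maps is the one genuinely calculational part of the argument.
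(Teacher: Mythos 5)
Your proposal is correct and follows essentially the same route as the paper: restrict the cofree-comonoid adjunction to \( \basedCQ \) (using \cref{thm:exponential-based-setting-formal-power-series} together with \( M \cong \neg(\neg M) \) for \( M \in \classicalCQ \)), compose with the reflection \( \neg\neg \colon \basedCQ \to \classicalCQ \), and conclude from strong monoidality of the composite left adjoint (\cref{lem:double-negation-strong-monoidal}). The only cosmetic difference is that your step (1) asserts comonadicity of \( \Comon(\basedCQ) \to \basedCQ \) outright, whereas the paper avoids needing cofree comonoids on all of \( \basedCQ \) by merely restricting the \( \pshCQ \)-level adjunction — a point your "main obstacle" paragraph already identifies as the real content.
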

\begin{proof}
  We have the following categories and functors:
  \begin{equation*}
    \xymatrix{
      \Comon(\pshCQ) \ar@<0.8ex>[r] \ar@{}[r]|-{\bot}& \pshCQ \ar@<0.8ex>[r]^-{\neg\neg} \ar@<0.8ex>[l]^-{\cofreeexp} \ar@{}[r]|-{\bot} & \pshCQ{}^{\neg\neg} \ar@<0.8ex>[l]
      \\
      \Comon(\basedCQ) \ar@{^{(}-_>}[u] \ar[r] & \basedCQ \ar@{^{(}-_>}[u] \ar@<0.8ex>[r]^-{\neg\neg} \ar@{}[r]|-{\bot}& \classicalCQ \ar@{^{(}-_>}[u] \ar@/^18pt/@<0.8ex>[ll]_-{\cofreeexp} \ar@{^{(}-_>}@<0.8ex>[l]
    }.
  \end{equation*}
  In this diagram, the functors in the lower part is the restriction of the upper counterparts.
  \cref{thm:exponential-based-setting-formal-power-series} ensures that \( \classicalCQ \hookrightarrow \basedCQ \hookrightarrow \pshCQ \stackrel{\cofreeexp}{\longrightarrow} \Comon(\pshCQ) \) factors through \( \Comon(\basedCQ) \hookrightarrow \Comon(\pshCQ) \).
As the restriction of an adjunction, \( {!} \colon \classicalCQ \to \Comon(\basedCQ) \) is the right adjoint of \( \Comon(\basedCQ) \to \basedCQ \stackrel{\neg\neg}{\to} \classicalCQ \).
  Its left adjoint is strong monoidal as the composite of strong monoidal functors (\cf~\cref{lem:double-negation-strong-monoidal} for the second one).
\end{proof}

\subsection{Proof of \cref{thm:classical-structure}}
\begin{claim*}[of \cref{thm:classical-structure}]
  \( \classicalCQ \) is a model of classical linear logic with the tensor product \( (M \otimes N) \defe \neg \neg (M \ptensor N) \)and the linear exponential comonad \( {!}M \defe \neg\neg\cofreeexp M \).
  Its product is \( M \times N \) and coproduct is \( (M + N) \defe \neg\neg(M \amalg N) \).
  The linear function space is \( M \multimap N \). 
\end{claim*}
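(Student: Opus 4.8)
The plan is to assemble this statement from the structural results already established for \(\pshCQ\), for \(\basedCQ\), and for the reflection \(\neg\neg \colon \basedCQ \longrightarrow \classicalCQ\), i.e.\ the theorem is essentially a transport-of-structure argument. First I would recall that \(\basedCQ\) is a symmetric monoidal closed category with products and coproducts, whose structure is inherited from the Lafont model \(\pshCQ\) (\cref{thm:presheaf-model}); concretely \(\basedCQ\) is closed under \(\ptensor\), \(\multimap\), \(\times\) and \(\amalg\) by \cref{lem:based-additive-multiplicative}. Since \(\neg\neg\) is an idempotent monad on \(\basedCQ\) (\cref{lem:based-idempotency}), \(\classicalCQ\) is a reflective subcategory of \(\basedCQ\) with reflector \(\neg\neg\), and this reflector is strong monoidal (\cref{lem:double-negation-strong-monoidal}). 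Transporting the monoidal structure along a strong-monoidal reflection then equips \(\classicalCQ\) with a symmetric monoidal structure whose tensor is \(M \otimes N \defe \neg\neg(M \ptensor N)\) and whose unit is \(I = \yoneda(1)\) (which lies in \(\classicalCQ\)); moreover \(M \multimap N \in \classicalCQ\) whenever \(M,N \in \classicalCQ\), and the chain of isomorphisms \(\classicalCQ(\neg\neg(L \ptensor M), N) \cong \pshCQ(L \ptensor M, N) \cong \pshCQ(L, M \multimap N) \cong \classicalCQ(L, M \multimap N)\) of \cref{lem:monoidal-structure-on-classical} shows \(M \multimap ({-})\) is right adjoint to \(({-}) \otimes M\), so \(\classicalCQ\) is symmetric monoidal closed.

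Next I would upgrade this to \(\ast\)-autonomy. By definition an object of \(\classicalCQ\) is exactly one for which the canonical map \(M \longrightarrow \neg\neg M\) is an isomorphism, so \(\neg = ({-}) \multimap I\) is an involution on \(\classicalCQ\); together with the symmetric monoidal closed structure this makes \(\classicalCQ\) a \(\ast\)-autonomous category and hence a model of multiplicative linear logic, with the De Morgan dual of \(\otimes\) given by \(\neg(\neg({-}) \ptensor \neg({-}))\). For the additives, products are computed exactly as in \(\basedCQ\), since the right adjoint \(\classicalCQ \hookrightarrow \basedCQ\) preserves limits and \(M \times N\) is already in \(\classicalCQ\) (\cref{lem:additive-on-classical}), while coproducts are \(\neg\neg(M \amalg N)\), either because the reflector preserves colimits or, equivalently, by De Morgan duality from the products. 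This yields a model of \(\mathrm{MALL}\).

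Finally, for the linear exponential modality I would invoke \cref{thm:classical-exponential}, which exhibits \({!} = \neg\neg\cofreeexp\) as a linear exponential comonad on \(\classicalCQ\) arising from the monoidal adjunction \(\Comon(\basedCQ) \rightleftarrows \classicalCQ\). The essential input there is \cref{thm:exponential-based-setting-formal-power-series}: every \(M \in \classicalCQ\) is of the form \(\neg N\) (take \(N \defe \neg M\)), so \(\cofreeexp M\) has a basis and is realised as a submodule \(\cofreeexp M \hookrightarrow \prod_n M^{\ptensor n}\) of formal power series, carrying an explicit cocommutative comonoid structure; thus the cofree-comonoid functor restricts to \(\classicalCQ \longrightarrow \Comon(\basedCQ)\) and composes with the strong-monoidal reflector \(\neg\neg \colon \basedCQ \longrightarrow \classicalCQ\) to produce \({!}\). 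The Seely isomorphisms and the monoidal-comonad structure then follow formally from the fact that this is the comonad of a monoidal adjunction whose left adjoint is strong monoidal, so the Lafont structure of \(\pshCQ\) descends to \(\classicalCQ\).

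The part I expect to be the main obstacle is the exponential. Everything multiplicative and additive is a routine application of strong-monoidal reflection (right adjoints preserve limits, the reflector preserves colimits and tensor). By contrast, the fact that \(\cofreeexp M\) remains inside \(\basedCQ\)—so that \(\neg\neg\cofreeexp M\) even makes sense and still has a basis—is genuinely delicate and rests entirely on \cref{thm:exponential-based-setting-formal-power-series}, whose proof requires the weak convexity of \(M_1\) coming from \(M \cong \neg N\) together with the formal-power-series description of the cofree comonoid. Once that theorem is in place, verifying the linear-exponential-comonad axioms in \(\classicalCQ\) is standard monoidal-adjunction bookkeeping, which I would state rather than grind through.
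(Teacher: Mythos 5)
Your proposal is correct and follows essentially the same route as the paper: the paper's proof of \cref{thm:classical-structure} is literally an assembly of \cref{lem:additive-on-classical}, \cref{lem:monoidal-structure-on-classical} and \cref{thm:classical-exponential}, which are exactly the ingredients you invoke (reflective subcategory under the idempotent monad \( \neg\neg \), strong monoidality of the reflector, adjunction \( \classicalCQ(\neg\neg(L \ptensor M), N) \cong \classicalCQ(L, M \multimap N) \), and the monoidal adjunction producing \( {!} = \neg\neg\cofreeexp \) via \cref{thm:exponential-based-setting-formal-power-series}). You also correctly locate the genuinely delicate step in the exponential, namely that \( \cofreeexp M \) stays in \( \basedCQ \) for \( M \cong \neg N \).
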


A consequence of \cref{lem:additive-on-classical,lem:monoidal-structure-on-classical,thm:classical-exponential}.

 \section{Supplementary Materials for Section~5}
\tk{\cref{sec:recursive}}

\subsection{Splitting Object of an Idempotent in $\classicalCQ$}
Assume \( N \in \classicalCQ \) and \( \iota' \colon N \longrightarrow N \) is an idempotent.  As \( \pshCQ \) is locally presentable, it is complete and thus has a splitting object \( M' \in \pshCQ \) of the idempotent \( \iota \).  Let \( e' \colon M' \longrightarrow N \) and \( p' \colon N \longrightarrow M' \) be the retraction such that \( e' \circ p' = \iota' \).  Then \( \neg\neg M' \stackrel{\neg\neg e'}{\longrightarrow} \neg\neg N \cong N \) and \( N \stackrel{p'}{\longrightarrow} M' \longrightarrow \neg\neg M' \) also gives a splitting object.  So we can assume without loss of generality that \( M' = \neg M'_0 \) for some \( M'_0 \).  Let \( (\LQT_b, \ket{b}, \bra{b})_b \) be a basis of \( N \).  Then \( (\LQT_i, p' \circ \ket{b}, \bra{b} \circ e')_b \) is a basis of \( M' \).  Hence \( M' \in \classicalCQ \).  This basis for \( M' \) is orthogonal if that for \( N \) is.

\subsection{Proof of \cref{lem:classical-exponential-cpo-enriched}}
\begin{claim*}[of \cref{lem:classical-exponential-cpo-enriched}]
    \( {!} \colon \classicalCQ \longrightarrow \classicalCQ \) is an \( \omega \)CPO-enriched functor.
\end{claim*}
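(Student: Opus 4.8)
The statement to prove is that for every $\omega$-chain $g_0 \le g_1 \le \cdots$ in $\classicalCQ(M,N)$ the images $!g_i$ form an $\omega$-chain in $\classicalCQ(!M,!N)$ and $!\bigl(\bigvee_i g_i\bigr) = \bigvee_i !g_i$. By \cref{lem:omega-cpo-enrichment} these hom-sets are cancellable $\omega$-complete $\Sigma$-monoids whose infinite sums are the least upper bounds of their finite partial sums, so the claim is entirely about the interplay of $!$ with sums; the catch is that $!$ is \emph{not} $\SMon$-enriched, so nothing is automatic. The plan is to pass to the matrix representation of $!$ supplied by \cref{thm:exponential-based-setting-formal-power-series}~(4) (read, as the excerpt prescribes, for $!A = \neg\neg\cofreeexp A$), where $!$ becomes an explicit, entrywise $\omega$-continuous operation, and then to lift the entrywise statement back to morphisms.

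First I would fix orthogonal pseudo-representable bases of $M$ and $N$. (If one of them does not visibly carry such a basis — the objects actually used to interpret types always do, by \cref{thm:basis-of-types} — one first writes $M$ and $N$ as retracts in $\classicalCQ$ of pseudo-representable modules equipped with orthogonal bases, using the idempotent-splitting machinery of \cref{sec:recursive}, and transports the conclusion along the retraction by functoriality of $!$ together with the fact that a retract of a Scott-continuous map between $\omega$CPOs is again Scott-continuous.) With these bases, \cref{thm:exponential-based-setting-formal-power-series}~(4) presents the matrix of $!g$ indexed by sorted sequences over the two bases: $(!g)_{a_1\cdots a_k,\; a'_1\cdots a'_{k'}}$ equals $g_{a_1,a'_1}\otimes\cdots\otimes g_{a_k,a'_k}$ when $k=k'$ and $0$ otherwise, where $g_{a,a'} = \bra{a'}g\ket{a} \in \widehat{\CQ}(\BObj{a},\BObj{a'})$. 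Now the entry map $g \mapsto g_{a,a'}$ is a $\Sigma$-monoid homomorphism — post- and pre-composition with $\ket{a}$ and $\bra{a'}$ preserve sums — hence monotone and $\omega$-continuous (a $\Sigma$-monoid homomorphism between cancellable $\omega$-complete $\Sigma$-monoids carries a chain $x_n = x_0 + d_1 + \cdots + d_n$ to one whose least upper bound is the image of $x_0 + \sum_i d_i$), while $\otimes$ on morphisms is monotone and $\omega$-continuous, being $\SMon$-enriched and hence $\omega$CPO-enriched by \cref{lem:omega-cpo-enrichment}. Composing, each fixed entry of the matrix of $!g$ is a monotone, $\omega$-continuous function of $g$.

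Next I would lift this from entries to morphisms. The matrix-representation map $\phi \mapsto \bigl(\bra{a'}\phi\ket{a}\bigr)$ from $\classicalCQ(!M,!N)$ into the product of the $\widehat{\CQ}(\BObj{a},\BObj{a'})$ is an injective $\Sigma$-monoid homomorphism, so it suffices to check that (i) for $i \le j$ the entrywise inequality $(!g_i)_{\vec a,\vec a'} \le (!g_j)_{\vec a,\vec a'}$ — which we already have — is realised by a genuine morphism, i.e.\ the entrywise difference $(!g_j)_{\vec a,\vec a'} - (!g_i)_{\vec a,\vec a'}$, obtained by expanding $\bigotimes_\ell (g_{i;a_\ell,a'_\ell} + \delta_\ell)$ via $\SMon$-bilinearity, assembles into a morphism $!M \to !N$; and (ii) the pointwise-defined matrix $\bigl(\bigvee_i (!g_i)_{\vec a,\vec a'}\bigr)_{\vec a,\vec a'}$ converges to a morphism of $\classicalCQ(!M,!N)$, which by injectivity must then coincide with $!\bigl(\bigvee_i g_i\bigr)$. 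Both points reduce to the convergence of a particular $\CPM$-coefficient sum of morphisms in $\classicalCQ$.

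The hard part will be exactly this convergence. The intermediate objects $\cofreeexp M$ and the tensor powers $M^{\ptensor n}$ lie in $\basedCQ$ but not in $\classicalCQ$, so their hom-sets need not be $\omega$CPOs and one cannot form the relevant differences or suprema there; the convergence must be established at the level of $\classicalCQ(!M,!N) \cong \pshCQ\bigl(!M \ptensor \neg\,!N,\, I\bigr)$, using the general criterion for convergence of sums of morphisms in $\classicalCQ$ — the same device that makes the interpretation of $\QMeas$ well-defined — which turns it into a uniform boundedness estimate and ultimately rests on the ``total probability $\le 1$'' invariant of $\CQ$ (here, that $g_i \le \bigvee_j g_j$ forces the partial sums $!g_i$ to be dominated, after pairing with any test morphism into $I$, by a quantity bounded by $1$). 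Once this criterion is verified for the chain $(!g_i)_i$ — and once the reduction to orthogonally-based objects is handled — the matrix bookkeeping of the previous paragraphs is routine, and the equality $!\bigl(\bigvee_i g_i\bigr) = \bigvee_i !g_i$ follows.
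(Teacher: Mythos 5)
Your proposal assembles the right ingredients (matrix representation of \( ! \), entrywise analysis, monotonicity from \(\SMon\)-enrichment), but the step you yourself flag as ``the hard part'' --- the convergence needed to conclude \( !(\bigvee_i g_i) = \bigvee_i\, !g_i \) --- is exactly the content of the lemma, and you do not supply it. Appealing to ``the general criterion for convergence of sums of morphisms in \( \classicalCQ \)'' and a ``uniform boundedness estimate \dots bounded by \( 1 \)'' is both vague and wrong in detail: the relevant test pairings are dominated by the pairing against \( !(\bigvee_i g_i) \), a finite real with no reason to be \( \le 1 \), and in any case no norm argument is carried out. A second, quieter gap is in the lifting step: the product-form matrix of \cref{thm:exponential-based-setting-formal-power-series}(4) is explicitly \emph{non-canonical}, and the paper warns that entrywise comparison of non-canonical matrices controls the order of morphisms in only one direction; so ``entrywise continuity plus injectivity of the matrix representation'' does not by itself yield the equality of morphisms you want, since injectivity holds for canonical matrices.

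The paper's proof closes this gap with a purely algebraic device that your sketch is missing. By \cref{lem:omega-cpo-enrichment} the hom-\(\Sigma\)-monoid \( \classicalCQ(M,N) \) is cancellable and \( \omega \)-complete, so writing \( g_{i+1} = g_i + d_i \) one gets \( \bigvee_i g_i = \sum_{i\in\Nat} d_i \) as an honest \(\Sigma\)-monoid sum. Functoriality then produces the morphism \( !(\sum_i d_i) \) outright, and bilinearity of \( \ptensor \) expands each non-canonical entry \( \neg\neg(\bra{b_1}(\sum_i d_i)\ket{a_1} \ptensor \dots \ptensor \bra{b_k}(\sum_i d_i)\ket{a_k}) \) into a sum indexed by functions \( \varpi\colon\{1,\dots,k\}\to\Nat \), whose definedness is inherited (via \( \Kle \)) from the left-hand side. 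Partitioning that index set by \( \max_i \varpi(i) \) exhibits \( !(\bigvee_i g_i) \) as \( \sum_n h_n \) with \( \sum_{i<n} h_i = {!}g_n \), so every sum you would need to ``verify converges'' is a sub-sum or regrouping of an already-defined one, and the supremum identity follows. No convergence criterion, norm bound, or total-probability argument is required; the argument rests only on cancellability, \( \omega \)-completeness, and the \(\Sigma\)-monoid axioms. If you replace your third and fourth paragraphs with this sum-decomposition-and-regrouping step, the proof goes through.
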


Assume an \( \omega \)-chain \( f_0 \le f_1 \le \dots \) in \( \classicalCQ(M,N) \).
We can assume without loss of generality that \( f_0 = 0 \).
Let \( f \defe \bigvee_i f_i \).

By definition, there exists \( g_i \) such that \( f_i + g_i = f_{i+1} \) (\( i \in \Nat \)).
Then \( f_i = g_0 + \dots + g_{i-1} \).
For every finite \( I \subseteq_{\mathrm{fin}} \Nat \), the sum \( \sum_{i \in I} g_i \) converges since \( \sum_{i \in I} g_i \le f_{\max(I)+1} \) (where \( \max(\emptyset) = 0 \)).
By \( \omega \)-completeness of \( \classicalCQ(M,N) \) (\cref{lem:omega-cpo-enrichment}), the infinite sum \( \sum_{i \in \Nat} g_i \) is defined.
Furthermore, by \cref{lem:omega-cpo-enrichment}, \( \sum_{i \in \Nat} g_i = \bigvee_{I \subseteq_{\mathrm{fin}} \Nat} \sum_{i \in I} g_i \).
Since \( \sum_{i \in I} g_i \le f_{\max(I)+1} \) for every \( I \subseteq_{\mathrm{fin}} \Nat \), we have \( \bigvee_{I \subseteq_{\mathrm{fin}} \Nat} \sum_{i \in I} g_i \le \bigvee_i f_i \).
Since \( \sum_{i \in I_n} g_i = f_n \) where \( I_n = \{ 0,1,\dots,n-1 \} \), we have \( \bigvee_{I \subseteq_{\mathrm{fin}} \Nat} \sum_{i \in I} g_i \ge \bigvee_i f_i \).
Therefore \( \sum_{i \in \Nat} g_i = \bigvee_i f_i \).

Recall that, given \( h \in \classicalCQ(M,N) \), the morphism \( {!}h \in \classicalCQ(M,N) \) has a (non-canonical) matrix representation \( ({!}h_{\neg\neg\vec{a},\neg\neg\vec{b}})_{\neg\neg\vec{a} \in \Base{!M}, \neg\neg\vec{b} \in \Base{!N}} \) where
\begin{align*}
    !h_{\neg\neg\vec{a},\neg\neg\vec{b}}
    \quad\defe\quad
    \begin{cases}
        \neg\neg(\bra{b_1}h\ket{a_1} \ptensor \dots \ptensor \bra{b_k}h\ket{a_k}) & \mbox{if \( |\vec{a}| = |\vec{b}| = k \)} \\
        0 & \mbox{if \( |\vec{a}| \neq |\vec{b}| \)}
    \end{cases}
\end{align*}
where \( |\vec{a}| \) is the length of the sequence \( \vec{a} \).
In this proof, we always associate this matrix for a morphism of the form \( !h \).
Note that, given matrices \( h^{(i)} = (h^{(i)}_{\neg\neg\vec{a},\neg\neg\vec{b}})_{\neg\neg\vec{a},\neg\neg\vec{b}} \) representing morphisms in \( \classicalCQ(!M,!N) \), if \( h^{(1)}_{\neg\neg\vec{a},\neg\neg\vec{b}} \le h^{(2)}_{\neg\neg\vec{a},\neg\neg\vec{b}} \) for every \( \vec{a} \) and \( \vec{b} \), then \( h^{(1)} \) represents a morphism that is greater than the morphism represented by \( h^{(2)} \).\footnote{The converse does not hold unless the matrices \( h^{(i)} \), \( i = 1,2 \), are canonical.}

We first prove that \( h \le h' \) in \( \classicalCQ(M,N) \) implies \( {!}h \le {!}h' \).
This is trivial since
\begin{equation*}
    \neg\neg (\bra{b_1}h\ket{a_1} \ptensor \dots \ptensor \bra{b_k}h\ket{a_k}) \le \neg\neg (\bra{b_1}h'\ket{a_1} \ptensor \dots \ptensor \bra{b_k}h'\ket{a_k})
\end{equation*}
by the \( \SMon \)-enrichment of the composition, \( \ptensor \) and \( \neg \).
(Note that \( \SMon \)-enrichment implies monotonicity.)
So \( ! \) is monotone.

We then consider the sequence \( !f_0 \le !f_1 \le \dots \).
Since \( f_n = \sum_{i < n} g_i \), the above chosen matrix \( !f_n \) has entries
\begin{align*}
    (!f_n)_{\neg\neg(a_1\dots a_k),\neg\neg(b_1\dots b_k)}
    &=
    \textstyle
    \neg\neg(\bra{b_1}(\sum_{i<n} g_{i})\ket{a_1} \ptensor \dots \ptensor \bra{b_k}(\sum_{i<n}g_i)\ket{a_k})
    \\
    &\Kle
    \sum_{\varpi \colon \{1,\dots,k\} \to \{0,\dots,n-1\}} \neg\neg(\bra{b_1}g_{\varpi(1)}\ket{a_1} \ptensor \dots \ptensor \bra{b_k}g_{\varpi(k)}\ket{a_k}).
\end{align*}
Let \( h_n \) be a morphism defined by the matrix
\begin{equation*}
    (h_n)_{\neg\neg(a_1\dots a_k),\neg\neg(b_1\dots b_k)}
    \quad=\qquad
    \sum_{\mathclap{\substack{\varpi \colon \{1,\dots,k\} \to \{0,\dots,n\} \\ \exists i. \varpi(i)=n}}} \neg\neg(\bra{b_1}g_{\varpi(1)}\ket{a_1} \ptensor \dots \ptensor \bra{b_k}g_{\varpi(k)}\ket{a_k}).
\end{equation*}
We obtain \( h_n + {!}f_n = {!}f_{n+1} \) by comparing the entries of matrices.
Furthermore, since \( f = \sum_{i \in \Nat} g_i \), we have
\begin{align*}
    (!f)_{\neg\neg(a_1\dots a_k), \neg\neg(b_1\dots b_k)}
    &=
    \textstyle
    \neg\neg(\bra{b_1}(\sum_{i\in \Nat} g_{i})\ket{a_1} \ptensor \dots \ptensor \bra{b_k}(\sum_{i\in\Nat}g_i)\ket{a_k})
    \\
    &\Kle
    \sum_{\varpi \colon \{1,\dots,k\} \to \Nat} \neg\neg(\bra{b_1}g_{\varpi(1)}\ket{a_1} \ptensor \dots \ptensor \bra{b_k}g_{\varpi(k)}\ket{a_k}).
\end{align*}
Because \( (\{ 1,\dots,k \} \to \Nat) = \biguplus_{n \in \Nat} \{ \varpi \colon \{ 1,\dots,k \} \to \{0,\dots,n\} \mid \exists i. \varpi(i) = n \} \), we have
\begin{equation*}
    {!}f
    \quad=\quad
    \sum_{n \in \Nat} h_n.
\end{equation*}
Since \( {!}f_n = \sum_{i < n} h_i \), it is not difficult to see that \( {!}f = \bigvee_i {!}f_i \).

\subsection{Proof of \cref{thm:basis-of-types}}
\begin{claim*}[of \cref{thm:basis-of-types}]
    Let \( A = A(X_1,\dots,X_k) \) be a type with free variables.
    Assume \( \CQ \)-modules \( M_1,\dots,M_k \in \classicalCQ \) with orthogonal pseudo-representable bases \( \mathcal{B}_1,\dots,\mathcal{B}_k \), respectively.
    Then \( \Base{A}[\vec{\mathcal{B}}] \) is an orthogonal pseudo-representable basis for \( \sem{A}(M_1,\dots,M_k) \).
\end{claim*}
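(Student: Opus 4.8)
The plan is to argue by induction, proving for each type $A$ (with free variables among $X_1,\dots,X_k$) that the generated family $\Base{A}[\vec{\mathcal B}]$, equipped with the coefficient modules and bra/ket morphisms prescribed in the definition of the canonical basis, is an orthogonal pseudo-representable basis of $\sem{A}(\vec M)$. For the non-recursive type constructors this is a structural induction on $A$; the recursive type $\mu X.A$ is handled by the $\omega$-colimit in $\classicalCQep$, using the fact established in \cref{sec:recursive} that the colimit of an $\omega$-chain of embedding--projection pairs carries the basis obtained as the disjoint union of the bases along the chain. Concretely, $\sem{\mu X.A}(\vec M)$ is the colimit of $0 \to \sem{A}(0,\vec M) \to \sem{A}(\sem{A}(0,\vec M),\vec M) \to \cdots$, and one checks that the disjoint union of the canonical bases of the finite approximants coincides with $\Base{\mu X.A}[\vec{\mathcal B}]$ generated by the rule for $\TFold$ --- both are the set of finite well-founded $\TFold$-trees over the bases of the unfoldings --- so that sub-induction is on the height of such a tree rather than on the type.

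First I would dispatch the base cases: for $X_i$ the claim is exactly the hypothesis on $\mathcal B_i$; for $\TyUnit$ and $\TyQubit$ the basis is a singleton with coefficient module $\yoneda(1)$, respectively $\yoneda(2)$, which are representable --- hence pseudo-representable --- and a singleton basis is vacuously orthogonal. For the connectives $A \multimap B$, $A \otimes B$, $A + B$ and $!A$, I would first apply \cref{lem:based-additive-multiplicative} (for $\multimap$, $\ptensor$, $\amalg$) and \cref{thm:exponential-based-setting-formal-power-series} (for $\cofreeexp$) to the inductively-given orthogonal pseudo-representable bases of $\sem{A}(\vec M)$ and $\sem{B}(\vec M)$, obtaining orthogonal pseudo-representable bases of $\sem{A}(\vec M) \multimap \sem{B}(\vec M)$, $\sem{A}(\vec M) \ptensor \sem{B}(\vec M)$, $\sem{A}(\vec M) \amalg \sem{B}(\vec M)$ and $\cofreeexp \sem{A}(\vec M)$ with precisely the coefficient modules and morphisms listed in the canonical-basis rules; here \cref{thm:exponential-based-setting-formal-power-series} applies because, by the induction hypothesis, $\sem{A}(\vec M) \in \classicalCQ$ is of the form $\neg\neg N$. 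The $\multimap$ case is then complete, since $\multimap$ is already the linear function space in $\classicalCQ$. For $\otimes$, $+$ and $!$ one must transport the basis along the reflection $\eta$ into $\neg\neg({-})$: since $\neg\neg$ is $\SMon$-enriched, applying it to the identity decomposition $\ident = \sum_b \ket{b}\bra{b}$ yields the corresponding decomposition of the identity of $\neg\neg({-})$, and \cref{lem:negation-of-pseudo-representable} shows that the transported coefficient modules are still pseudo-representable, while orthogonality survives because $\braket{b}{b'} = 0$ forces $\neg\neg(\braket{b}{b'})$ to restrict to $0$ along $\eta$ and hence to vanish by the universal property of the reflection.

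The step I expect to be the main obstacle is exactly this $\neg\neg$-bookkeeping: one must pin down that the coefficient modules named in the canonical basis of $A\otimes B$, $A+B$ and $!A$ --- the modules $\BObj{a} \ptensor \BObj{b}$, $\BObj{a}$, and the symmetric-power modules built from the $\BObj{a_i}$ --- are $\neg\neg$-closed (equivalently, lie in $\classicalCQ$), so that the transported morphisms $\neg\neg\ket{b}$ and $\neg\neg\bra{b}$ genuinely have the domains and codomains the statement asserts; otherwise the claimed basis has to be read modulo the reflection. This rests on reflexivity of $\yoneda(n)$ and of $\CPM({-},n)$ (via \cref{lem:cpm-dualisable}), on closure of $\classicalCQ$ under $\multimap$, and on closure of the pseudo-representable modules under $\ptensor$ and $\multimap$ (\cref{lem:linear:lqt-function-and-tensor}); once ``all coefficient modules are pseudo-representable and $\neg\neg$-closed'' is added to the inductive invariant, the connective cases go through uniformly. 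The remaining points --- that each individual $\ket{b}$, $\bra{b}$ is a morphism of $\pshCQ$, that the displayed infinite sums converge (by the convergence criterion for sums of morphisms valued in negated modules, cf.\ \cref{lem:omega-cpo-enrichment}), and that the $\mu$-case colimit basis matches the generated one --- are routine verifications over the explicit descriptions already in hand.
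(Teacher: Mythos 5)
Your treatment of the base cases and of the connectives $\multimap$, $\otimes$, $+$, $!$ matches the paper's: those cases are dispatched by \cref{lem:based-additive-multiplicative}, \cref{thm:exponential-based-setting-formal-power-series} and the $\neg\neg$-transport, exactly as you describe, and the paper likewise treats them as following directly from the induction hypothesis. The gap is in the $\mu X.B$ case, which you defer to ``routine verification'' but which is the entire content of the paper's proof. The difficulty is that a plain structural induction on $A$ gives you no handle on how the canonical basis of $\sem{B}(F^i(0),\vec M)$ relates to that of $\sem{B}(F^{i+1}(0),\vec M)$ along the embedding--projection pair $F(e_i,p_i)$. The paper resolves this by strengthening the induction hypothesis with a second, simultaneous claim: whenever $(e_i,p_i)\colon M_i\to M_i'$ are embedding--projection pairs equipped with injections $\kappa_i\colon\mathcal B_i\to\mathcal B_i'$ satisfying $\BObj{b}=\BObj{\kappa_i(b)}$, $\bra{b}p_i=\bra{\kappa_i(b)}$ and $e_i\ket{b}=\ket{\kappa_i(b)}$, then $\sem{A}(\overrightarrow{(e,p)})$ carries an induced injection $\kappa$ on canonical bases with the same compatibilities. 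This functoriality statement is what produces the chain of basis injections $\kappa_\ell\colon G^\ell(\emptyset)\to G^{\ell+1}(\emptyset)$, shows that $\ket{b}_A\defe\mu_\ell^e\ket{b}$ and $\bra{b}_A\defe\bra{b}\mu_\ell^p$ are well defined on the colimit of the basis sets, and yields the maps $\rho_\ell$ into the syntactic basis $\Base{\mu X.A}[\vec{\mathcal B}]$. Without adding it to the inductive invariant, your identification ``both are the set of finite well-founded $\TFold$-trees'' cannot be carried out as a proof: the unfolding $A[\mu X.A/X]$ is not a structurally smaller type, so you cannot invoke the induction hypothesis for it, and the matching of morphisms (not just index sets) is precisely what needs the strengthened claim. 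Note also that surjectivity of the comparison $\rho$ onto the syntactic basis is not combinatorially obvious; the paper proves it by an orthogonality argument (a missing element $a_0$ would force $\braket{a_0}{a_0}=0$, contradicting the standing nondegeneracy assumption).

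A smaller imprecision: you describe the colimit basis as ``the disjoint union of the bases along the chain.'' The lemma in \cref{sec:recursive} gives the colimit a basis that is the disjoint union of bases of the \emph{complements} $M_i'$ (with $M_{i+1}\cong M_i\oplus M_{i+1}'$), whereas the proof of \cref{thm:basis-of-types} instead takes the \emph{directed union} of the basis sets $G^\ell(\emptyset)$ along the injections $\kappa_\ell$. A literal disjoint union over all stages would count each basis element infinitely often and $\sum_b\ket{b}_A\bra{b}_A$ would not be the identity; the identifications furnished by the $\kappa_\ell$ are essential.
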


The orthogonality (\ie~\( \braket{b}{b'} = 0 \) for \( b,b' \in \Base{A}[\vec{\mathcal{B}}] \) with \( b \neq b' \)) can be easily proved by induction on the structure of \( b \in \Base{A}[\vec{\mathcal{B}}] \).
Note that induction on \( A \) does not work for the case of recursive types.

It is also easy to see that, under the assumption that \( \braket{b}{b} \neq 0 \) for \( b \in \mathcal{B}_i \), we have \( \braket{a}{a} \neq 0 \) for every \( a \in \Base{A}[\vec{\mathcal{B}}] \).
This proposition is again proved by induction on \( a \).
Hereafter we assume that an orthogonal basis in the sequel satisfies this condition.
Under this condition, \( b = b' \) if and only if \( \braket{b}{b'} \neq 0 \).

We prove the claim by induction on the structure of type \( A \).
We strengthen the claim as follows.
\begin{lemma}
    Let \( A = A(X_1,\dots,X_n) \) be a type with free variables.
    \begin{enumerate}
        \item
            Assume \( \CQ \)-modules \( M_1,\dots,M_k \in \classicalCQ \) with orthogonal pseudo-representable bases \( \mathcal{B}_1,\dots,\mathcal{B}_k \), respectively.
            Then \( \Base{A}[\vec{\mathcal{B}}] \) is an orthogonal pseudo-representable basis for \( \sem{A}(M_1,\dots,M_k) \).
        \item
            Assume \( \CQ \)-modules \( M_1,\dots,M_k,M'_1,\dots,M'_k \in \classicalCQ \) with orthogonal pseudo-representable bases \( \mathcal{B}_1,\dots,\mathcal{B}_k,\mathcal{B}'_1,\dots,\mathcal{B}'_k \), respectively.
            Let \( (e_i, p_i) \colon M_i \longrightarrow M'_i \) be an embedding-projection pair for each \( i \).
            Assume an injection \( \kappa_i \colon \mathcal{B}_i \longrightarrow \mathcal{B}'_i \) for each \( i \) such that, for each \( b \in \mathcal{B}_i \),
            \begin{align*}
                \BObj{b} &= \BObj{\kappa_i(b)} \\
                \bra{b}p_i &= \bra{\kappa_i(b)} \\
                e_i\ket{b} &= \ket{\kappa_i(b)}.
            \end{align*}
            Then there exists an injection \( \kappa \colon \Base{A}[\vec{\mathcal{B}}] \longrightarrow \Base{A}[\vec{\mathcal{B}}']\) such that
            \begin{align*}
                \BObj{b} &= \BObj{\kappa(b)} \\
                \bra{b}p &= \bra{\kappa(b)} \\
                e\ket{b} &= \ket{\kappa(b)}.
            \end{align*}
            for every \( b \in \Base{A}[\vec{\mathcal{B}}] \).
            Here \( (e,p) \defe \sem{A}((e_1,p_1),\dots,(e_k,p_k)) \).
    \end{enumerate}
\end{lemma}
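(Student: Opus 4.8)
The plan is to prove parts (1) and (2) simultaneously by induction on the structure of the type \( A \). Following the hint in the running text, I would first dispose of orthogonality (\( \braket{b}{b'} = 0 \) for \( b \neq b' \)) and the auxiliary fact \( \braket{b}{b} \neq 0 \) by a separate, self-contained induction on the derivation of \( b \in \Base{A}[\vec{\mathcal{B}}] \) — this induction follows the shape of the basis elements rather than of \( A \), which is exactly why it survives the recursive-type case. That leaves the genuine content of (1): that \( \Base{A}[\vec{\mathcal{B}}] \) satisfies \( \ident = \sum_b \ket{b}\bra{b} \) with every \( \BObj{b} \) pseudo-representable; and of (2): the existence of the compatible injection \( \kappa \). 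Throughout I would freely use that \( \classicalCQ \) is closed under \( \times, +, \otimes, \multimap, ! \) (\cref{thm:classical-structure}), so that \( \sem{A}(\vec{M}) \) stays in \( \classicalCQ \) and the hypotheses of \cref{lem:based-additive-multiplicative} and of \cref{thm:exponential-based-setting-formal-power-series} (which asks for a ``negated'' module, i.e.\ one in \( \classicalCQ \)) are met. I would also record once the small lemma that applying \( \neg\neg \) to an orthogonal pseudo-representable basis yields an orthogonal pseudo-representable basis of the \( \neg\neg \)-image, using that \( \neg\neg \) is \( \SMon \)-enriched and that \( \neg\neg\LQT \) is again pseudo-representable for pseudo-representable \( \LQT \) (\cref{lem:linear:lqt-function-and-tensor,lem:single-negation-and-biorthogonal}); this is needed because in \( \classicalCQ \) one has \( A \otimes B = \neg\neg(A \ptensor B) \), \( A+B = \neg\neg(A\amalg B) \) and \( {!}A = \neg\neg\cofreeexp A \).

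The base cases are immediate: for \( X_i \) the basis is \( \mathcal{B}_i \) by fiat (with \( \kappa = \kappa_i \)); for \( \TyUnit \) and \( \TyQubit \) the interpretations are \( \yoneda(1) \) and \( \yoneda(2) \), each carrying the singleton basis \( (\yoneda(n),\ident,\ident) \), which is orthogonal, pseudo-representable and strictly functorial since \( \sem{\TyQubit}(\overrightarrow{(e,p)}) = (\ident,\ident) \). The propositional and exponential cases \( A\multimap B \), \( A\otimes B \), \( A+B \), \( A\times B \), \( {!}A \) follow by feeding the inductive hypotheses for the immediate subtypes into \cref{lem:based-additive-multiplicative} and \cref{thm:exponential-based-setting-formal-power-series} (the \( \neg\neg \)-wrapped cases additionally through the small lemma just mentioned), after checking that the basis produced there is literally the one specified by the canonical-basis rules. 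For (2) the compatible \( \kappa \) is assembled from the \( \kappa \)'s of the subtypes using the functoriality of the same constructions — e.g.\ the explicit structure maps of \cref{lem:based-additive-multiplicative} and the matrix description of \( \cofreeexp g \) in \cref{thm:exponential-based-setting-formal-power-series}(4) — and the equalities \( \BObj{b}=\BObj{\kappa(b)} \), \( \bra{b}p=\bra{\kappa(b)} \), \( e\ket{b}=\ket{\kappa(b)} \) then come out of routine diagram chases with those maps.

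The real work is the recursive case \( A = \mu X. B \). Here \( B \) is a proper subtype, so the inductive hypotheses for \( B \) — with \( X \) interpreted by \emph{any} object of \( \classicalCQ \) carrying an orthogonal pseudo-representable basis — are available. Recall \( \sem{\mu X.B}(\vec{M}) \) is the \( \omega \)-colimit in \( \classicalCQep \) of \( 0 \to \sem{B}(0,\vec{M}) \to \sem{B}(\sem{B}(0,\vec{M}),\vec{M}) \to \cdots \). Writing \( N^{(0)} = 0 \), \( N^{(n+1)} = \sem{B}(N^{(n)},\vec{M}) \), an induction on \( n \) using part (1) for \( B \) shows each \( N^{(n)}\in\classicalCQ \) has the orthogonal pseudo-representable basis \( \mathcal{S}^{(n)} \), where \( \mathcal{S}^{(0)}=\emptyset \) and \( \mathcal{S}^{(n+1)} = \Base{B}[\mathcal{S}^{(n)},\vec{\mathcal{B}}] \); and the connecting map \( N^{(n)}\to N^{(n+1)} \) is \( \sem{B} \) applied to the embedding \( N^{(n-1)}\to N^{(n)} \), so part (2) for \( B \) supplies compatible injections \( \mathcal{S}^{(n)}\hookrightarrow\mathcal{S}^{(n+1)} \). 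The lemma that \( \classicalCQep \) has all \( \omega \)-colimits (\cref{sec:recursive}) then gives \( \sem{\mu X.B}(\vec{M}) \) a pseudo-representable basis indexed by the directed union \( \bigcup_n \mathcal{S}^{(n)} \), which one checks is orthogonal from the biproduct-like relations \( p\circ e'=0 \), \( p'\circ e=0 \) in that construction. It remains to identify this union with the inductively generated \( \Base{\mu X.B}[\vec{\mathcal{B}}] \): a substitution lemma \( \Base{B[\mu X.B/X]}[\vec{\mathcal{B}}] = \Base{B}[\Base{\mu X.B}[\vec{\mathcal{B}}],\vec{\mathcal{B}}] \) (by induction on basis elements), together with the observation that \( \mathcal{S}\mapsto\Base{B}[\mathcal{S},\vec{\mathcal{B}}] \) is monotone and finitary, so its least fixed point — which is precisely the set generated by the \( \TFold \)-rule, \( \TFold \) being injective — equals \( \bigcup_n\mathcal{S}^{(n)} \); and the data \( \BObj{\TFold(a)}=\BObj{a} \), \( \ket{\TFold(a)}=\mathit{fold}\circ\ket{a} \), \( \bra{\TFold(a)}=\bra{a}\circ\mathit{fold}^{-1} \) match because \( \mathit{fold} \) is exactly the colimit coprojection composed with the subtype's structure maps. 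Part (2) for \( \mu X.B \) is then the map induced on colimits, with \( \kappa \) assembled level-by-level from the \( \kappa \)'s for \( B \).

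I expect the main obstacle to be the bookkeeping of this last paragraph: matching the basis the \( \omega \)-colimit construction produces (a disjoint union of ``complement'' bases, with embeddings everywhere suppressed) against the inductively defined \( \Base{\mu X.B}[\vec{\mathcal{B}}] \) with its \( \TFold \)-tags, and verifying that orthogonality and pseudo-representability of the coefficient modules genuinely propagate through the colimit — all while maintaining the \emph{strict} (equality, not isomorphism) form of part (2), which forces one to track every structure map explicitly rather than only up to iso. The substitution lemmas for \( \sem{-} \) and for \( \Base{-} \) are routine but must be established before the \( \mu \)-case can even be stated.
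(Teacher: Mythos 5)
Your proposal is correct and follows essentially the same route as the paper's proof: a simultaneous induction on the structure of \( A \) of the strengthened two-part statement, with the non-recursive connectives discharged by \cref{lem:based-additive-multiplicative} and \cref{thm:exponential-based-setting-formal-power-series}, and the case \( A = \mu X.B \) handled by iterating \( \sem{B}({-},\vec{M}) \) from \( 0 \), invoking part (2) of the induction hypothesis to obtain compatible injections of bases along the \( \omega \)-chain, passing to the colimit in \( \classicalCQep \), and identifying the resulting basis with the canonical one. The only real divergence is in that last identification step: the paper constructs the colimit basis directly via \( \ket{b}_A = \mu_\ell^e\ket{b} \), \( \bra{b}_A = \bra{b}\mu_\ell^p \) (rather than through the complement decomposition of the \( \omega \)-colimit lemma, which spares some of the bookkeeping you flag) and then proves bijectivity of an explicit comparison map \( \rho \colon \mathord{\mathrm{colim}}_\ell G^\ell(\emptyset) \to \Base{\mu X.B}[\vec{\mathcal{B}}] \) using orthogonality for injectivity and \( \braket{a}{a} \neq 0 \) for surjectivity, whereas you obtain the same identification set-theoretically from the least-fixed-point characterisation of the inductively defined canonical basis under a finitary monotone operator; both arguments go through.
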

\begin{proof}
    We prove the claim by induction on \( A \).
    Both claims trivially follow from the induction hypothesis, except for the case \( A = \mu X. B \) where \( B = B(X,\vec{Y}) \).
    We prove this case.

    We prove the first claim.
    Let \( \vec{M} \) be a sequence of objects in \( \classicalCQ \) and \( \vec{\mathcal{B}} \) be the corresponding sequence of orthogonal pseudo-representable bases.
    By the induction hypothesis, \( \sem{B}(0, \vec{M}) \) has an orthogonal pseudo-representable basis \( \Base{B}(\emptyset, \vec{\mathcal{B}}) \).
    We have the trivial embedding-projection pair \( (e_0,p_0) \defe (0,0) \colon 0 \longrightarrow \sem{B}(0, \vec{M}) \) with the trivial function \( \kappa_0 \colon \emptyset \longrightarrow \Base{B}(\emptyset,\vec{\mathcal{B}}) \).
    Let \( F \defe \sem{B}({-}, \vec{M}) \) and \( G \defe \Base{B}({-}, \vec{B}) \).
    So we have \( (e_0,p_0) \colon 0 \longrightarrow F(0) \) and \( \kappa_0 \colon \emptyset \longrightarrow G(\emptyset) \).
    The triple \( (e_0,p_0,\kappa_0) \) satisfies the condition on the second claim.
    Assume an embedding-projection pair \( (e_i,p_i) \colon F^i(0) \longrightarrow F^{i+1}(0) \) and \( \kappa_i \colon G^i(\emptyset) \longrightarrow G^{i+1}(\emptyset) \) that satisfy the condition on the second claim. 
    Let \( (e_{i+1},p_{i+1}) \defe F(e_i,p_i) \colon F^{i+1}(0) \longrightarrow F^{i+2}(0) \).
    By the second claim, we have \( \kappa_{i+1} \colon G^{i+1}(\emptyset) \longrightarrow G^{i+2}(\emptyset) \) that satisfies the condition on the second claim.
    Hence we have a family \( (e_i,p_i,\kappa_i)_{i \in \Nat} \) of triples that satisfy the condition on the second claim.
    By definition, \( \sem{A}(\vec{M}) \) is the colimit of the \( \omega \)-chain in \( \classicalCQep \)
    \begin{equation*}
        \xymatrix{
            0 \ar@<3pt>[r]^{e_0} \ar@<3pt>[dd]^{\mu_0^e} & F(0) \ar@<3pt>[r]^{e_1} \ar@<3pt>[l]^{p_0} \ar@<3pt>[ldd]^{\mu_1^e} & F(F(0)) \ar@<3pt>[r]^{e_2} \ar@<3pt>[lldd]^{\mu_2^e} \ar@<3pt>[l]^{p_1} & \cdots \ar@<3pt>[l]^{p_2}
            \\
            \\
            \sem{A}(\vec{M})
}.
    \end{equation*}
    We also have a diagram in \( \Set \)
    \begin{equation*}
        \xymatrix{
            \emptyset \ar[r]^{\kappa_0} & G(\emptyset) \ar[r]^{\kappa_1} & G(G(\emptyset)) \ar[r]^{\kappa_2} & \cdots 
        }.
    \end{equation*}
    Let \( (\mathcal{A}, (\eta_i \colon G^i(\emptyset) \longrightarrow \mathcal{A})_i) \) be the colimiting cocone.

    For each \( b \in G^i(\emptyset) \), we have 
    \begin{equation*}
        \BObj{b},
        \qquad
        \ket{b} \colon \BObj{b} \longrightarrow F^i(0)
        \quad\mbox{and}\quad
        \bra{b} \colon F^i(0) \longrightarrow \BObj{b}.
    \end{equation*}
    The above diagram gives
    \begin{align*}
        \ket{b}_A &\defe \mu_i^e \circ \ket{b} \quad\colon\quad \BObj{b} \longrightarrow F^i(0) \longrightarrow \sem{A}(\vec{M}) \\
        \bra{b}_A &\defe \bra{a} \circ \mu_i^p \quad\colon\quad \sem{A}(\vec{M}) \longrightarrow F^i(0) \longrightarrow \BObj{b}
    \end{align*}
    where \( \mu_i^p \) is the projection corresponding to \( \mu_i^e \).
    We have
    \begin{equation*}
        \ket{\kappa_i(b)}_A
        \quad=\quad
        \mu_{i+1}^e \circ \ket{\kappa_i(b)}
        \quad=\quad
        \mu_{i+1}^e \circ e_i \circ \ket{b}
        \quad=\quad
        \mu_i^e \circ \ket{b}
        \quad=\quad
        \ket{b}_A
    \end{equation*}
    and \( \bra{\kappa_i(b)}_A = \bra{b}_A \) and \( \BObj{\kappa_i(b)}_A = \BObj{b}_A \) by a similar argument.
    Hence \( \ket{-}_A \) and \( \bra{-}_A \) are well-defined on the colimit \( \mathcal{A} \).

    We prove that \( (\BObj{b}_A, \ket{b}_A, \bra{b}_A)_{b \in \mathcal{A}} \) is an orthogonal basis for \( \sem{A}(\vec{M}) \).
    To see that \( \sum_{b \in \mathcal{A}} \ket{b}_A \bra{b}_A \) converges, by the \( \omega \)-completeness of \( \classicalCQ(\sem{A}(\vec{M}), \sem{A}(\vec{M})) \), it suffices to show that \( \sum_{b \in I} \ket{b}_A \bra{b}_A \) converges for every finite subset \( I \subseteq_{\mathrm{fin}} \mathcal{A} \).
    Given a finite subset \( I \subseteq_{\mathrm{fin}} \mathcal{A} \), there exist \( \ell \) and a finite subset \( J \subseteq G^{\ell}(\emptyset) \) such that \( I = \{ \eta_i(b) \mid b \in J \} \) because of the finiteness of \( I \).
    Then \( \sum_{b \in J} \ket{b}\bra{b} \) converges as it is a partial sum of \( \sum_{b \in G^\ell(\emptyset)} \ket{b}\bra{b} = \ident_{F^\ell(0)} \).
    So
    \begin{equation*}
        \textstyle
        \mu^e_\ell \circ (\sum_{b \in J} \ket{b} \bra{b}) \circ \mu_i^p
        \quad\Kle\quad
        \sum_{b \in J} \mu^e_\ell \ket{b} \bra{b} \mu_i^p
        \quad=\quad
        \sum_{b \in I} \ket{b}_A \bra{b}_A
    \end{equation*}
    converges.
    Since
    \begin{equation*}
        \mu_i^e \mu_i^p
        \quad=\quad
        \mu_i^e (\sum_{b \in G^i(\emptyset)} \ket{b}\bra{b}) \mu_i^p
        \quad\Kle\quad
        \sum_{b \in G^i(\emptyset)} \mu_i^e \ket{b} \bra{b} \mu_i^p
        \quad\le\quad
        \sum_{b \in \mathcal{A}} \ket{b}_A \bra{b}_A 
    \end{equation*}
    holds for every \( i \) (here in the list inequation, we implicitly use the fact that \( \eta_i \) is injective; this fact follows from injectivity of \( \kappa_j \) for every \( j \)), we have 
    \begin{equation*}
        \ident_{\sem{A}(\vec{M})}
        \quad=\quad
        \bigvee_i (\mu_i^e \mu_i^p)
        \quad\le\quad
        \sum_{b \in \mathcal{A}} \ket{b}_A \bra{b}_A
    \end{equation*}
    where the first equation is a well-known characterisation of a colimit of an \( \omega \)-chain of embedding-projection pairs.
    To prove the converse, it suffices to shows that \( \sum_{b \in I} \ket{b}_A \bra{b}_A \le \ident_{\sem{A}(\vec{M})} \) because of the coincidence of the infinite sum and the least upper bound of finite partial sums (\cref{lem:omega-cpo-enrichment}).
    By the finiteness of \( I \), there exists \( \ell \) such that \( I \subseteq \eta_\ell(G^\ell(\emptyset)) \).
    So \( \sum_{b \in I} \ket{b}_A \bra{b}_A \le \mu^e_\ell \mu^p_\ell \le \ident_{\sem{A}(\vec{M})} \).
    
    We define \( \rho_\ell \colon G^{\ell}(\emptyset) \longrightarrow \Base{A}[\vec{\mathcal{B}}] \) by induction on \( n \) such that
    \begin{equation*}
        \BObj{b} = \BObj{(\rho_\ell(b))}
        \qquad
        \mu^e_\ell \ket{b} = \ket{\rho_\ell(b)}
        \quad\mbox{and}\quad
        \bra{b} \mu^p_\ell = \bra{\rho_\ell(b)}
    \end{equation*}
    for every \( b \in G^\ell(\emptyset) \).
    For \( \ell = 0 \), there exists a unique function \( \rho_0 \colon \emptyset \longrightarrow \Base{A}[\vec{\mathcal{B}}] \).
    For \( \ell > 0 \), by the second claim of the induction hypothesis, there exists \( \rho' \colon \Base{B}[G^\ell(\emptyset), \vec{\mathcal{B}}] \longrightarrow \Base{B}[(\Base{A}[\vec{\mathcal{B}}]), \vec{M}] \) such that
    \begin{equation*}
        \BObj{b} = \BObj{(\rho_\ell(b))}
        \qquad
        F^e(\mu^e_\ell, \mu^p_\ell) \ket{b} = \ket{\rho'(b)}
        \quad\mbox{and}\quad
        \bra{b} F^p(\mu^e_\ell, \mu^p_\ell) = \bra{\rho'(b)}
    \end{equation*}
    for every \( b \in \Base{B}[G^\ell(\emptyset), \vec{\mathcal{B}}] \).
    the size of \( a \in \Base{A}[\vec{\mathcal{B}}] \).
    By definition, \( a \in \Base{A}[\vec{\mathcal{B}}] \) implies \( a = \TFold(b) \) for some \( b \in \Base{B}[(\Base{A}[\vec{\mathcal{B}}]), \vec{\mathcal{B}}] \).
Since \( \sem{B}({-}, \vec{M}) \) preserves \( \omega \)-colimits in \( \classicalCQep \),
    \begin{equation*}
        \xymatrix{
            0 \ar@<3pt>[r]^{e_0} \ar@<3pt>[dd]^{0} & F(0) \ar@<3pt>[r]^{e_1} \ar@<3pt>[l]^{p_0} \ar@<3pt>[ldd]^{F(\mu_0)} & F(F(0)) \ar@<3pt>[r]^{e_2} \ar@<3pt>[lldd]^{F(\mu_1)} \ar@<3pt>[l]^{p_1} & \cdots \ar@<3pt>[l]^{p_2}
            \\
            \\
            F(\sem{A}(\vec{M}))
        }
    \end{equation*}
    is also a colimiting cocone and \( \mathit{fold} \) is the canonical isomorphism by definition.
    This means that \( (\mathit{fold}, \mathit{fold}^{-1}) \circ F(\mu^e_\ell, \mu^e_\ell) = (\mu^e_{\ell+1}, \mu^p_{\ell+1}) \) (where the composition is taken in \( \classicalCQep \)).
    Hence \( \rho_{\ell+1}(b) \defe \TFold(\rho'(b)) \) satisfies the requirement.
    
    For \( b \in G^{\ell}(\emptyset) \), we have \( \kappa_\ell(b) \in G^{\ell+1}(\emptyset) \) and hence \( \rho_\ell(b), \rho_{\ell+1}(\kappa_\ell(b)) \in \Base{A}[\vec{\mathcal{B}}] \).
    We show that \( \rho_\ell(b) = \rho_{\ell+1}(\kappa_\ell(b)) \).
    It suffices to show that \( \braket{\rho_\ell(b)}{\rho_{\ell+1}(\kappa_\ell(b))} \neq 0 \).
    \begin{align*}
        \braket{\rho_\ell(b)}{\rho_{\ell+1}(\kappa_\ell(b))}
        &=
        \bra{b}\mu^p_\ell \circ \mu^e_{\ell+1} e_\ell \ket{b}
        \\
        &=
        \braket{b}{b}
        \\
        &\neq
        0.
    \end{align*}
    This means that \( (\Base{A}[\vec{\mathcal{B}}], (\rho_\ell)_{\ell}) \) is a cocone of the diagram of basis.
    So there exists a canonical map \( \rho \colon \mathcal{A} \longrightarrow \Base{A}[\vec{\mathcal{B}}] \).

    This canonical map is injective.
    For \( b,b' \in \mathcal{A} \) with \( b \neq b' \), there exists \( \ell \) such that \( b = \rho_\ell(b_0) \) and \( b' = \rho_\ell(b_0') \) for some \( b_0,b_0' \in G^\ell(\emptyset) \) with \( b_0 \neq b_0' \).
    So \( \braket{b}{b'} = \braket{\rho_\ell(b_0)}{\rho_\ell(b_0')} = \bra{b_0}\mu^e_\ell \mu^p_\ell \ket{b_0'} = \braket{b_0}{b_0'} = 0 \) by the orthogonality of the basis \( G^\ell(\emptyset) \).

    We prove that the canonical map is surjective.
    Suppose that it is not the case.
    Let \( a_0 \in \Base{A}[\vec{\mathcal{B}}] \) be an element not in the image of \( \rho \).
    Note that \( \bra{\rho_\ell(b)} = \mu^e_\ell \bra{b} = \bra{\eta_\ell(b)}_A \) for every \( \ell \) and \( b \in G^\ell(\emptyset) \), and similarly \( \ket{\rho_\ell(b)} = \ket{\eta_\ell(b)}_A \).
    Hence \( \bra{\rho(a)} = \bra{a}_A \) and \( \ket{\rho(a)} = \ket{a}_A \) for every \( a \in \mathcal{A} \).
    We know \( \sum_{a \in \mathcal{A}} \ket{a}_A \bra{a}_A = \ident \) but
    \begin{equation*}
        \braket{a_0}{a_0}
        \quad=\quad
        \bra{a_0} (\sum_{a \in \mathcal{A}} \ket{a}_A \bra{a}_A) \ket{a_0}
        \quad=\quad
        \bra{a_0} (\sum_{a' \in \rho(\mathcal{A})} \ket{a'} \bra{a'}) \ket{a_0}
        \quad=\quad
        0
    \end{equation*}
    since \( a_0 \notin \rho(\mathcal{A}) \) and the orthogonality of \( \Base{A}[\vec{\mathcal{B}}] \).
    This contradicts the assumption \( \braket{a_0}{a_0} \neq 0 \).

    The second claim can be easily proved by appearing to the fact that \( \mathcal{A} \) coincides with \( \Base{A}[\vec{\mathcal{B}}] \).
    Let \( G' \defe \Base{B}[{-}, \vec{\mathcal{B}}'] \) and define \( \kappa'_i \colon (G')^i(\emptyset) \longrightarrow (G')^{i+1}(\emptyset) \) similarly to \( (\kappa_i)_i \).
    Then we have
    \begin{equation*}
        \xymatrix{
            \emptyset \ar[d] \ar[r]^{\kappa_0} & G(\emptyset) \ar[d] \ar[r]^{\kappa_1} & G(G(\emptyset)) \ar[d] \ar[r]^{\kappa_2} & \cdots 
            \\
            \emptyset \ar[r]^{\kappa'_0} & G'(\emptyset) \ar[r]^{\kappa'_1} & G'(G'(\emptyset)) \ar[r]^{\kappa'_2} & \cdots 
            }
    \end{equation*}
    where the vertical morphisms are the iterative application of \( \sem{A}({-}, (e_1,p_1),\dots,(e_k,p_k)) \) to \( \ident_0 \in \classicalCQep(0,0) \).
    Each square in the above diagram commutes.
    This fact can be proved by using \( \braket{b}{b'} \neq 0 \Leftrightarrow b = b' \).
    So the above diagram gives the canonical map \( \mathcal{A} = \mathord{\mathrm{colim}}_{\ell} G^{\ell}(\emptyset) \longrightarrow \mathord{\mathrm{colim}}_{\ell} (G')^{\ell}(\emptyset) = \mathcal{A}' \), which is a function with the desired property.
\end{proof}

 \section{Supplementary Materials for Section~8}
\tk{\cref{sec:norm}}

\subsection{$\fdCQ$ and Pseudo-Representable Module}
We prove the following proposition, which is used in \cref{sec:family-norms}.
\begin{proposition}\label{prop:finite-dimesnional-idempotent-splitting}
    Every \( M \in \fdCQ \) is a splitting object of an idempotent \( f \colon \LQT \longrightarrow \LQT \) of a pseudo-representable module \( \LQT \).
\end{proposition}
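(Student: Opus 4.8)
The plan is to realise $M$ as the splitting object of an idempotent on a ``block‑diagonal direct sum'' of the modules occurring in a basis of $M$. First I would unfold the definition of $\fdCQ$: $M$ carries a finite basis $(\LQT_b,\ket b,\bra b)_{b\in B}$ with each $\LQT_b$ pseudo‑representable, say $\LQT_b\hookrightarrow\CPM({-},\ell_b)$ hereditarily, bounded by $B_b$ and with pseudo‑universal element $\varepsilon_b\ident_{\ell_b}\in(\LQT_b)_{\ell_b}$; recall $\ident_M=\sum_{b\in B}\ket b\bra b$ and that the family $(\bra b)_b$ is jointly monic (if $(\bra b)_n(y)=(\bra b)_n(y')$ for all $b$ then $y=\sum_b\ket b\bra b(y)=\sum_b\ket b\bra b(y')=y'$). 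Put $k:=|B|$ and $\ell:=\sum_{b\in B}\ell_b$ (the dimension of $\bigoplus_b\Complex^{\ell_b}$), let $\iota_b^{\mathrm{bd}}\colon\CPM({-},\ell_b)\hookrightarrow\CPM({-},\ell)$ be post‑composition with the block‑diagonal embedding $\Mat_{\ell_b}(\Complex)\hookrightarrow\Mat_\ell(\Complex)$, and $\pi_b^{\mathrm{bd}}\colon\CPM({-},\ell)\to\CPM({-},\ell_b)$ post‑composition with extraction of the $b$‑th diagonal block. Define $e:=\sum_{b\in B}\iota_b^{\mathrm{bd}}\circ\bra b\colon M\to\CPM({-},\ell)$, so $e_n(y)$ is the block‑diagonal matrix with $b$‑th block $(\bra b)_n(y)$; this sum is defined because its summands have disjoint block supports, and $e$ is a monomorphism by joint monicity of the $\bra b$ (via \cref{lem:monomorphism-characterisation}).

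Next I would set $\LQT$ to be the down‑closure of $\mathrm{im}(e)$ inside $\CPM({-},\ell)$, i.e. $\LQT_n:=\{z\in\CPM(n,\ell)\mid z\le e_n(y)\text{ for some }y\in M_n\}$. This is a hereditary submodule: downward‑closed by construction, closed under the $\CQ$‑action by naturality of $e$ (from $z\le e_n(y)$ we get $z\circ\varphi\le e_m(y\cdot\varphi)$), and sum‑reflecting since its sums are just those of $\CPM({-},\ell)$ restricted to $\LQT$. It is bounded, since $\opnorm z\le\opnorm{e_n(y)}\le\sum_b\opnorm{(\bra b)_n(y)}\le\sum_b B_b$, using that each $\LQT_b$ is bounded. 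For the retraction, let $v_b:=\pi_b^{\mathrm{bd}}\restriction_{\LQT}\colon\LQT\to\LQT_b$ (it lands in $\LQT_b$, as $z\le e_n(y)$ forces $v_b(z)\le(\bra b)_n(y)\in(\LQT_b)_n$ and $\LQT_b$ is downward‑closed), and $p:=\sum_{b\in B}\ket b\circ v_b\colon\LQT\to M$. This $p$ is defined because, for $z\in\LQT_n$ with $z\le e_n(y)$, one has $\sum_b(\ket b)_n(v_b(z))\le\sum_b(\ket b)_n((\bra b)_n(y))=y$, and a finite subsum of a defined sum is defined. A direct computation gives $p\circ e=\ident_M$ (since $\pi_b^{\mathrm{bd}}(e_n(y))=(\bra b)_n(y)$, hence $(p\circ e)_n(y)=\sum_b(\ket b)_n((\bra b)_n(y))=y$), so $f:=e\circ p\colon\LQT\to\LQT$ is idempotent and $M$ is its splitting object.

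It remains to produce a pseudo‑universal element $r\ident_\ell\in\LQT_\ell$, and this is the delicate step. The plan is to find $y_0\in M_\ell$ with $e_\ell(y_0)\ge r\ident_\ell$: take $y_0:=\sum_{b\in B}\tfrac1k(\ket b)_\ell(\varepsilon_b\ident_{\ell_b}\cdot Q_b)$, where $Q_b\in\CQ(\ell,\ell_b)$ extracts the $b$‑th diagonal block, which is a well‑defined element of $M_\ell$ by convexity of $M_\ell$ (valid since $M\cong\neg(\neg M)$, cf. the convexity fact used in \cref{lem:weak-convexity-on-tensors}). If the basis is orthogonal ($\bra c\ket b=0$ for $b\neq c$), then $e_\ell(y_0)=\bigoplus_c\tfrac{\varepsilon_c}{k}Q_c$, and $r\ident_\ell\le e_\ell(y_0)$ reduces to the matrix inequality $r\rho\le\bigoplus_c\tfrac{\varepsilon_c}{k}\rho_{cc}$ for all positive $\rho\in\Mat_\ell(\Complex)$, which holds with $r:=\min_c\varepsilon_c/k^2$ because every block‑positive $\rho$ satisfies $\rho\le k\cdot\mathrm{bd}(\rho)$ ($\mathrm{bd}$ the block‑diagonal part) — the latter being obtained by averaging the conjugates $D_j\rho D_j^{*}$ by the diagonal phase unitaries $D_j=\bigoplus_c\zeta^{jc}\ident_{\ell_c}$ ($\zeta=e^{2\pi i/k}$) and retaining the $j=0$ term. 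The main obstacle is precisely this orthogonality assumption: for a general finite basis one only gets $e_\ell(y_0)\ge\bigoplus_c\tfrac{\varepsilon_c}{k}\bigl((\bra c\ket c)_{\ell_c}(\ident_{\ell_c})\circ Q_c\bigr)$, so the argument needs each ``diagonal entry'' $(\bra c\ket c)_{\ell_c}(\ident_{\ell_c})$ to be bounded below by a positive multiple of $\ident_{\ell_c}$. I would handle this by first discarding redundant basis vectors (those with $\bra c\ket c=0$, which can be absorbed into the remaining ones, reducing $k$), and then orthogonalising, or by arguing directly that the surviving diagonal morphisms are non‑degenerate — this reduction is where I expect the real work to lie.
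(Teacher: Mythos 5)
Your block-diagonal architecture is exactly the paper's: the same completely positive maps \( e_i \colon \Mat_{\ell_i}(\Complex) \to \Mat_{\ell}(\Complex) \) and \( p_i \colon \Mat_{\ell}(\Complex) \to \Mat_{\ell_i}(\Complex) \), the same \( \bra{\ast} = \sum_i e_i \circ \bra{b_i} \) and \( \ket{\ast} = \sum_i \ket{b_i} \circ p_i \), and the same verification that \( \ket{\ast}\bra{\ast} = \sum_i \ket{b_i}\bra{b_i} = \ident_M \). The gap is exactly where you flag it, and it is caused by your choice of \( \LQT \). Taking \( \LQT \) to be the down-closure of \( \mathrm{im}(\bra{\ast}) \) forces you to dominate \( r\,\ident_\ell \) by a \emph{block-diagonal} element \( \bra{\ast}_\ell(y_0) \), and your argument for that only goes through when the basis is orthogonal with \( \braket{c}{c} = \ident_{\BObj{c}} \). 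The definition of \( \fdCQ \) supplies only a finite pseudo-representable basis; no orthogonalisation procedure is available (Gram--Schmidt has no analogue here, and \cref{thm:basis-of-types} produces orthogonal bases only for interpretations of types, not for arbitrary \( M \in \fdCQ \)). So the "reduction where the real work lies" is not a routine cleanup but an unproved claim on which the whole proposition rests.

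The paper avoids the problem by choosing a strictly larger \( \LQT \): it sets
\( \LQT_n \defe \{\, x \in \CPM(n,\ell) \mid \forall i.\ p_i \circ x \in (\BObj{b_i})_n,\ \IsDef{(\sum_i \ket{b_i} \cdot (p_i \circ x))} \,\} \),
which constrains only the diagonal blocks of \( x \) and the summability of their images in \( M \), and says nothing about the off-diagonal blocks. The splitting data are unchanged (for \( x \) in the image of \( \bra{\ast} \) the two conditions hold automatically, and \( \ket{\ast} \) extends to all of \( \LQT \) by the summability clause), but now the pseudo-universal element is easy: \( p_i \circ \ident_\ell = p_i \in \CQ(\ell,\ell_i) \), so \( r_i\,p_i = (r_i\,\ident_{\ell_i}) \circ p_i \in (\BObj{b_i})_\ell \) by closure under the \( \CQ \)-action, downward-closedness handles the common scalar \( r'/k \), and the definedness of \( \sum_i \ket{b_i} \cdot ((r'/k)\,p_i) \) follows from the convexity of \( M_\ell \) (which you already invoke, via \( M \cong \neg\neg M \)). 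No orthogonality, no phase-averaging estimate \( \rho \le k\cdot\mathrm{bd}(\rho) \), and no normalisation of \( \braket{c}{c} \) is needed. I recommend replacing your \( \LQT \) with this one; the rest of your write-up (monicity of \( e \), heredity, boundedness via \( \sum_i B_i \), and the retraction identity) then carries over essentially verbatim.
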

\begin{proof}
    Assume \( M \in \fdCQ \).
    Then \( M \) has a pseudo-representable basis \( (\BObj{b}, \ket{b}, \bra{b})_{b \in \Base{M}} \) indexed by a finite set \( \Base{M} \).
    Let \( b_1,\dots,b_k \) be an enumeration of elements in \( \Base{M} \), \( \ell_i \defe \#\BObj{b_i} \) and \( \ell \defe \sum_{i=1}^k \ell_i \).
    We define linear maps \( p_i \colon \Mat_{\ell} \longrightarrow \Mat_{\ell_i} \) and \( e_i \colon \Mat_{\ell_i}(\Complex) \longrightarrow \Mat_{\ell}(\Complex) \) by
    \begin{equation*}
        \left(
            \begin{matrix}
                X_{1,1} & \cdots & X_{1,i} & \cdots & X_{1,k} \\
                \vdots & \ddots & \vdots & & \vdots \\
                X_{i,1} & \cdots & X_{i,i} & \cdots & X_{i,k} \\
                \vdots &  & \vdots & \ddots & \vdots \\
                X_{k,1} & \cdots & X_{k,i} & \cdots & X_{k,k}
            \end{matrix}
        \right)
        \quad\stackrel{p_i}{\mapsto}\quad
        X_{i,i}
        \quad\stackrel{e_i}{\mapsto}\quad
        \left(
            \begin{matrix}
                0 & \cdots & 0 & \cdots & 0 \\
                \vdots & \ddots & \vdots & & \vdots \\
                0 & \cdots & X_{i,i} & \cdots & 0 \\
                \vdots &  & \vdots & \ddots & \vdots \\
                0 & \cdots & 0 & \cdots & 0
            \end{matrix}
        \right)
    \end{equation*}
    where \( X_{i,j} \) is an \( (\ell_i \times \ell_j) \)-matrix.
    It is not difficult to see that \( p_i \) and \( e_i \) are completely positive: \( p_i \in \CPM(\ell,\ell_i) \) and \( e_i \in \CPM(\ell_i,\ell) \).
    Let \( \LQT \hookrightarrow \CPM({-}, \ell) \) be the hereditary submodule given by
    \begin{equation*}
        \LQT_n
        \quad\defe\quad
        \{\, x \in \CPM(n,\ell) \mid \forall i. p_i \circ x \in \BObj{b_i},\: M \models \IsDef{(\sum_i \ket{b_i} \cdot (p_i \circ x))} \,\}.
    \end{equation*}
    We have
    \begin{align*}
        \bra{\ast} \defe \sum_i e_i \circ \bra{b_i} &\colon M \longrightarrow \LQT \\
        \ket{\ast} \defe \sum_i \ket{b_i} \circ p_i &\colon \LQT \longrightarrow M.
    \end{align*}
    Then
    \begin{align*}
        \ket{\ast}\bra{\ast}
        &\quad=\quad
        (\sum_i \ket{b_i} \circ p_i) \circ (\sum_{j} e_j \circ \bra{b_j})
        \\ &\quad\Kle\quad
        \sum_{i,j} \ket{b_i} \circ p_i \circ e_j \circ \bra{b_j}
        \\ & \quad=\quad
        \sum_i \ket{b_i} \bra{b_i}
        \\ & \quad=\quad
        \ident_M.
    \end{align*}
    So \( M \) is a splitting object of the idempotent \( \braket{\ast}{\ast} \) on \( \LQT \).

    We prove that \( \LQT \) is pseudo-representable.

    We give an upper bound of the norm in \( \LQT \).
    Assume \( B_i \) be an upper bound of the norm in \( \BObj{b_i} \).
    We prove that \( B \defe \sum_i B_i \) is an upper bound of the norm in \( \LQT \).
    Assume \( x \in \LQT_n \).
    Then \( p_i \circ x \in \BObj{b_i} \) by definition.
    So \( \opnorm{\sum_i p_i \circ x} \le \sum_i \opnorm{p_i \circ x} \le B \).
    Since \( \sum_i p_i \in \CPM(\ell,\ell) \) preserves the main diagonal and the trace, we have \( \opnorm{x} = \opnorm{\sum_i p_i \circ x} \le B \).

    We prove that \( r \ident_\ell \in \LQT_\ell \) for some \( r > 0 \).
    Let \( r_i > 0 \) be a real number such that \( r_i \ident_{\ell_i} \in \BObj{b_i} \) for some \( r_i > 0 \).
    Let \( r' = \min_i r_i \).
    Since \( p_j \circ (r' e_i) = r'\ident_{\ell_i} \) if \( i = j \) and otherwise \( 0 \), we have \( r' e_i \in \LQT_{\ell_i} \) for each \( i \).
    Hence \( r' e_i p_i \in \LQT_{\ell} \) since \( p_i \in \CQ(\ell,\ell_i) \).
    By the convexity of \( M \), which follows from \( M \in \classicalCQ \), we have \( (\sum_i (1/k) r' e_i p_i) \in \LQT_{\ell} \).
    Hence \( \sum_{j} \ket{b_j} \cdot (p_j \circ (\sum_i (1/k) r' e_i p_i)) \) is defined.
    So
    \begin{equation*}
        \sum_{j} \ket{b_j} \cdot (p_j \circ (\sum_i (1/k) r' e_i p_i))
        \quad\Kle\quad
        \sum_{i,j} \ket{b_j} \cdot (r'/k) p_j e_i p_i
        \quad=\quad
        \sum_i \ket{b_i} \cdot p_i (r'/k)
    \end{equation*} 
    is defined.
    Hence \( (r'/k) \ident_{\ell} \in \LQT_\ell \).
\end{proof}

\subsection{Proof of \cref{lem:linear:property-of-induced-norm}}
\begin{claim*}[of \cref{lem:linear:property-of-induced-norm}]
    Let \( \LQT \) be a pseudo-representable \( \CQ \)-module such that \( \LQT \cong \neg\neg\LQT \) and
    \begin{equation*}
        \qnorm{x}{\LQT}^{(n)}
        \quad\defe\quad
        \inf \left\{\, r \in \Real_{\ge 0} ~\middle|~ x \in r \cdot \LQT_n \,\right\}
    \end{equation*}
    where \( r \cdot X \defe \{ r x \mid x \in X \} \) for \( X \subseteq \CPM(m,\ell) \).
    Then the family \( (\qnorm{-}{\LQT}^{(n)})_n \) satisfies the following conditions.
    \begin{enumerate}
        \item \( \qnorm{{-}}{\LQT}^{(n)} \) is a norm on \( \CPM(n,\#\LQT) \) for each \( n \).
        \item There exists \( B \) such that \( \opnorm{x} \le B \qnorm{x}{\LQT}^{(n)} \) for every \( n \) and \( x \in \CPM(n,\#\LQT) \).
        \item \( \CQ \)-action is norm-non-increasing: \( \forall x \in \CPM(n,\#\LQT). \forall \varphi \in \CQ(m,n).\: \qnorm{x \circ \varphi}{\LQT}^{(m)} \le \qnorm{x}{\LQT}^{(n)} \).
        \item
For every \( x \in \CPM(n,\ell) \),
\begin{align*}
            \qnorm{x}{\LQT}^{(n)} &= \sup \left\{\, \varphi \circ (\ident_m \otimes x) \circ \psi ~~\middle|~~
            \begin{aligned}
                m,k \in \Nat,\:
                \varphi \in \CPM(m \otimes \ell, 1),\:
                \psi \in \CPM(1, m \otimes k) \\
                \forall y \in \LQT_k. \varphi \circ (\ident_m \otimes y) \circ \psi \le 1
            \end{aligned}
            \right\}.
            \qed
\end{align*}
\end{enumerate}
\end{claim*}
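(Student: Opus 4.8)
The plan is to verify the four items essentially by unwinding the definition of $\qnorm{-}{\LQT}^{(n)}$ and the three defining properties of a pseudo-representable module (bounded, hereditary submodule, pseudo-universal element), together with the hypothesis $\LQT \cong \neg\neg\LQT$. First I would record the basic elementary fact that, since $\LQT$ is a hereditary (hence downward-closed and sum-reflecting) submodule of $\CPM({-},\ell)$ and carries the action of $[0,1]$ and of positive reals via $r\cdot({-})$, the scaled sets $r\cdot\LQT_n$ form an increasing chain in $r$ (using downward-closedness: $r \le r'$ implies $r\cdot\LQT_n \subseteq r'\cdot\LQT_n$), and that $x \in r\cdot\LQT_n$ for every $r$ strictly above the infimum whenever the infimum is attained; combined with closedness properties this gives that $\qnorm{x}{\LQT}^{(n)} < \infty$ for all $x$ because the pseudo-universal element $r_0\,\ident_\ell \in \LQT_\ell$ together with boundedness lets one scale an arbitrary $x \in \CPM(n,\ell)$ into $\LQT_n$ (write $x = x\circ\ident_n$ and use that $\LQT$ absorbs $\CPM$-composition after scaling, as in the proof sketch of \cref{thm:cpm-representation}).

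For item (1), I would check the three norm axioms plus the monotonicity axiom (4) from \cref{sec:defining-cone-norm}. Subadditivity $\qnorm{x+y}{} \le \qnorm{x}{} + \qnorm{y}{}$ follows because if $x \in r\cdot\LQT_n$ and $y \in s\cdot\LQT_n$ then $x+y \in (r+s)\cdot\LQT_n$ using that $\LQT_n$ is closed under the $[0,1]$-weighted sum (convexity-type closure inherited from being a hereditary submodule of the $\Sigma$-monoid $\CPM(n,\ell)$: $x/r, y/s \in \LQT_n$ gives $(r/(r+s))(x/r) + (s/(r+s))(y/s) \in \LQT_n$, then scale back). Positive homogeneity $\qnorm{\lambda x}{} = \lambda\qnorm{x}{}$ is immediate from $\lambda x \in r\cdot\LQT_n \iff x \in (r/\lambda)\cdot\LQT_n$ for $\lambda>0$, and the case $\lambda=0$ is trivial since $0 \in \LQT_n$. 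Definiteness $\qnorm{x}{}=0 \Rightarrow x=0$ follows from item (2): if the norm is $0$ then $\opnorm{x} \le B\cdot 0 = 0$, and $\opnorm{-}$ is a genuine norm on the matrix space so $x=0$. The monotonicity axiom $\qnorm{x}{} \le \qnorm{x+y}{}$ is exactly downward-closedness of $\LQT_n$: if $x+y \in r\cdot\LQT_n$ then $x \le x+y$ in $\CPM(n,\ell)$, and $r\cdot\LQT_n$ is still downward-closed (a scalar multiple of a downward-closed set), so $x \in r\cdot\LQT_n$.

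Item (2) is the quantitative reformulation of "Bounded": let $B$ be the bound with $\opnorm{z} \le B$ for all $z \in \LQT_n$, all $n$. If $x \in r\cdot\LQT_n$ then $x/r \in \LQT_n$ so $\opnorm{x/r} \le B$, i.e. $\opnorm{x} \le Br$; taking the infimum over such $r$ gives $\opnorm{x} \le B\qnorm{x}{\LQT}^{(n)}$ (with the edge case $\qnorm{x}{}=0$ handled by continuity/the chain property so that $x \in r\cdot\LQT_n$ for all $r>0$, forcing $\opnorm{x}=0$). Item (3) is "$\CQ$-action is norm-non-increasing": if $x \in r\cdot\LQT_n$, i.e. $x/r \in \LQT_n$, then since $\LQT$ is a submodule of $\CPM({-},\ell)$ closed under precomposition with $\varphi \in \CQ(m,n)$ (this is exactly that $\LQT$ is a $\CQ$-module and the action is composition), we get $(x/r)\circ\varphi \in \LQT_m$, hence $x\circ\varphi \in r\cdot\LQT_m$, whence $\qnorm{x\circ\varphi}{}^{(m)} \le r$; take the infimum.

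The main obstacle is item (4), the formula expressing $\qnorm{x}{\LQT}^{(n)}$ as a supremum over "test" pairs $(\varphi,\psi)$ — this is precisely where the hypothesis $\LQT \cong \neg\neg\LQT$ does real work and the other three items do not. The approach here is to identify, via \cref{lem:negation-of-pseudo-representable}(4) and the concrete description of $\rightarrowtriangle$ from \cref{lem:linear:lqt-function-and-tensor}, the set of $y \in \CPM(k,\neg\ell)$ orthogonal to $\LQT$ — i.e. $(\neg\LQT)_k = \LQT_k^{\bot}$ — and then to observe that the norm $\qnorm{-}{\neg\neg\LQT}^{(n)}$ unfolds (using $\neg\neg\LQT \hookrightarrow \CPM({-},\ell)$ and the definition $(\neg\neg\LQT)_n = \LQT^{\bot\bot}{}_n = \{x \mid \forall y \in \LQT^{\bot}.\ \text{ev-pairing} \le 1\}$) into exactly the right-hand side of the displayed equation: membership $x \in r\cdot(\neg\neg\LQT)_n$ says that for every test $y$ with the pairing bounded by $1$ on all of $\LQT$ the pairing with $x$ is bounded by $r$, and the $\sup$ over normalized tests recovers the least such $r$. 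Since $\LQT \cong \neg\neg\LQT$ we may replace $\qnorm{-}{\neg\neg\LQT}^{(n)}$ by $\qnorm{-}{\LQT}^{(n)}$. The delicate points to handle carefully are: matching the bipartite shape "$\ident_m \otimes ({-})$" in the statement with the tensor-hom adjunction $\CPM(m\otimes\ell,1) \cong \CPM(m,\ell\multimap 1) \cong \CPM(m,\neg\ell)$ and with the $m$-indexed family structure of $\Sigma$-monoid-enriched presheaves (the extra parameter $m$ is what makes the orthogonality "tight" in the enriched setting, analogous to the role of the parameter $n$ in \cref{def:pseudo-representable}); checking that the $\sup$ is actually attained or at least approached, so that the two infima/suprema agree; and verifying both inequalities ($\le$ from taking tests arising from elements of $\neg\LQT$ times a witness, $\ge$ from the bipolar theorem $\LQT^{\bot\bot\bot} = \LQT^{\bot}$ in \cref{lem:negation-of-pseudo-representable}(3) applied at the level of these scaled sets). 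I would structure the proof of (4) as two lemmas — one computing $\neg\LQT$ explicitly, one doing the bipolar unfolding — and cite \cref{lem:negation-of-pseudo-representable} and \cref{lem:linear:lqt-function-and-tensor} for the heavy lifting.
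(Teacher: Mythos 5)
Your overall route coincides with the paper's: items (1)--(3) are obtained by unwinding the definition of \( \qnorm{-}{\LQT}^{(n)} \) against the axioms of a pseudo-representable module (finiteness from the pseudo-universal element plus closure under the \( \CQ \)-action, definiteness from the bound \( B \), item (3) from closure under precomposition), and item (4) is obtained by rewriting \( \qnorm{-}{\LQT}^{(n)} \) as \( \qnorm{-}{\neg\neg\LQT}^{(n)} \) and expanding the double negation via \cref{lem:negation-of-pseudo-representable} and \cref{lem:linear:lqt-function-and-tensor}. This is exactly the paper's argument.

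There is, however, one genuine flaw. Your proof of the triangle inequality rests on convexity of \( \LQT_n \), which you claim is ``inherited from being a hereditary submodule''. It is not. Hereditary means downward-closed and sum-reflecting; neither property forces \( px+(1-p)y \) to lie in \( \LQT_n \) when \( x,y \) do --- sum-reflection only applies when the value of the sum is already known to lie in \( \LQT_n \), and \( px+(1-p)y \) need not be below any single element of \( \LQT_n \). Convexity is precisely one of the places where the hypothesis \( \LQT \cong \neg\neg\LQT \) is needed (the components of a doubly negated module are cut out by the linear conditions ``pairing with every test \( \le 1 \)'', hence are convex), and the paper's proof of item (1) invokes \( \neg\neg\LQT \cong \LQT \) for exactly this purpose. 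So your remark that the bi-orthogonality hypothesis does real work only in item (4) and that ``the other three items do not'' use it is wrong: without it, subadditivity of \( \qnorm{-}{\LQT}^{(n)} \) can fail. Once convexity is sourced correctly, the rest of your argument goes through.
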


(1)
Let \( r \in \Real_{>0} \) such that \( r\,\ident_\ell \in \LQT_n \).
Then \( x \in \frac{\opnorm{x}}{r} \LQT_n \) for every \( x \in \LQT_n \), so \( \qnorm{x}{\LQT}^{(n)} \le \frac{\opnorm{x}}{r} < \infty \).
The pseudo-representable \( \CQ \)-module \( \LQT \) has an upper bound of the norm, \ie~\( B \in \Real_{>0} \) such that \( \opnorm{x} \le B \) for every \( n \) and \( x \in \LQT_n \).
So \( \opnorm{y}/B \le \qnorm{y}{\LQT}^{(n)} \) for every \( y \in \CPM(n,\ell) \) and thus \( \qnorm{x}{\LQT}^{(n)} = 0 \) implies \( \opnorm{x}=0 \) and \( x = 0 \).
Trivially \( \qnorm{t\,x}{\LQT}^{(n)} = t\,\qnorm{x}{\LQT}^{(n)} \) for every \( t \in \Real_{\ge 0} \).
Since \( \neg\neg\LQT \cong \LQT \), it is downward-closed (\ie~\( x \in \LQT_n \) and \( y \le x \) implies \( y \in \LQT_n \)) and convex (\ie~\( x,y \in \LQT_n \) implies \( px + (1-p)y \in \LQT_n \) for every \( p \in [0,1] \)).
We have that \( x \le y \) implies \( \qnorm{x}{\LQT}^{(n)} \le \qnorm{y}{\LQT}^{(n)} \) from the former and the triangular inequality from the latter.

(2)
Already show in the above argument.

(3)
Easy.

(4)
Since \( \LQT = \neg\neg\LQT \) by the assumtion, we have \( \qnorm{x}{\LQT} = \qnorm{x}{\neg\neg\LQT} \).
The claim follows from the direct computation by expanding \( \neg\neg\LQT \) using \cref{lem:linear:lqt-function-and-tensor,lem:negation-of-pseudo-representable}.

\subsection{Proof of \cref{thm:norm-category-equivalent}}
\begin{claim*}[of \cref{thm:norm-category-equivalent}]
    \( \CQ^{\P} \) is equivalent to \( \fdCQ \).
\end{claim*}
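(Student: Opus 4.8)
The plan is to build a functor \( G \colon \CQ^{\P} \longrightarrow \fdCQ \) and prove it is fully faithful and essentially surjective. On objects, \( G \) sends \( (\ell,\varphi,(\annorm{-}^{(n)})_n) \) to the splitting object in \( \classicalCQ \) of the idempotent on the pseudo-representable module \( \LQT \) defined by \( \LQT_n \defe \{\, x \in \CPM(n,\ell) \mid \annorm{x}^{(n)} \le 1 \,\} \), the idempotent \( \LQT \longrightarrow \LQT \) being the \( \CQ \)-module morphism represented, via \cref{thm:cpm-representation}, by the completely positive map \( \varphi \) (which sends each \( \LQT_n \) into itself precisely because \( \varphi \) is norm-non-increasing). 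By the observation following \cref{lem:linear:property-of-induced-norm}, \( \LQT \in \classicalCQ \) and is pseudo-representable, so by \cref{lem:lqt-basis} it has a \emph{finite} representable basis; hence \( \LQT \in \fdCQ \), and the splitting object inherits a finite basis from \( \LQT \) (by the explicit construction of splitting objects in \( \classicalCQ \)), so \( G \) lands in \( \fdCQ \). On morphisms, a \( \psi \in \CQ^{\P}((\ell,\varphi,\dots),(\ell',\varphi',\dots)) \) satisfies \( \annorm{\psi \circ x}'^{(n)} \le \annorm{x}^{(n)} \), which by homogeneity says exactly \( \psi \circ \LQT_n \subseteq \LQT'_n \); so \cref{thm:cpm-representation} turns \( \psi \) into a \( \CQ \)-module morphism \( \tilde\psi \colon \LQT \longrightarrow \LQT' \), and \( \psi = \varphi'\circ\psi\circ\varphi \) becomes \( \tilde\psi = f'\circ\tilde\psi\circ f \) for the two idempotents, so \( \tilde\psi \) descends to a morphism between the splitting objects. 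Functoriality follows because the correspondence of \cref{thm:cpm-representation} carries composition (and identities) of completely positive maps to composition (and identities) of \( \CQ \)-module morphisms, the \( \CQ \)-action on pseudo-representable modules being composition in \( \CPM \).

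\textbf{Full faithfulness.} Fix \( X=(\ell,\varphi,(\annorm{-}^{(n)})_n) \), \( X'=(\ell',\varphi',(\annorm{-}'^{(n)})_n) \) with associated \( \LQT,\LQT' \), idempotents \( f,f' \), and \( M \defe G(X) \), \( M'\defe G(X') \). First I would record that each \( \LQT'_n \) is a \emph{closed} subset of \( \CPM(n,\ell') \): by \cref{lem:negation-of-pseudo-representable} and \cref{lem:linear:property-of-induced-norm}(4), \( \LQT'_n = (\neg\neg\LQT')_n \) is cut out by a family of inequalities \( \varphi \circ (\ident_m \otimes x)\circ\psi \le 1 \), each a closed condition; hence \( \{\, x \mid \annorm{x}'^{(n)} \le 1 \,\} = \LQT'_n \). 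Therefore the norm-non-increasing condition on a prospective morphism is exactly \( \psi\circ\LQT_n\subseteq\LQT'_n \), and \cref{thm:cpm-representation} together with the standard description of morphisms between splitting objects of idempotents yields natural bijections
\begin{equation*}
  \CQ^{\P}(X,X') \;\cong\; \{\, h \in \pshCQ(\LQT,\LQT') \mid h = f'\circ h\circ f \,\} \;\cong\; \classicalCQ(M,M') \;=\; \fdCQ(G(X),G(X')),
\end{equation*}
whose composite is precisely the action of \( G \) on hom-sets.

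\textbf{Essential surjectivity.} Given \( M \in \fdCQ \), by \cref{prop:finite-dimesnional-idempotent-splitting} \( M \) is the splitting object of an idempotent \( f \colon \LQT \longrightarrow \LQT \) on a pseudo-representable \( \LQT \) with \( \#\LQT = \ell \). Applying \( \neg\neg \) to the retraction exhibiting \( M \) as a retract of \( \LQT \) and using \( \neg\neg M \cong M \) (as \( M \in \classicalCQ \)) shows \( M \) is equally the splitting object of the idempotent \( \neg\neg f \) on \( \neg\neg\LQT \); and \( \neg\neg\LQT \) is pseudo-representable (by \cref{lem:linear:lqt-function-and-tensor}, noting \( \neg\LQT = \LQT \rightarrowtriangle \yoneda(1) \) and that \( \yoneda(1)\hookrightarrow\CPM({-},1) \) is pseudo-representable, being hereditary, bounded by \( 1 \), with pseudo-universal element \( \ident_1 \)) and satisfies \( \neg\neg\LQT\cong\neg\neg(\neg\neg\LQT) \) by \cref{lem:based-idempotency}. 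Hence by \cref{lem:linear:property-of-induced-norm} the family \( (\qnorm{-}{\neg\neg\LQT}^{(n)})_n \) obeys the axioms of \( \CQ^{\P} \); letting \( \varphi \) be the completely positive map representing \( \neg\neg f \), the triple \( X \defe (\ell,\varphi,(\qnorm{-}{\neg\neg\LQT}^{(n)})_n) \) is an object of \( \CQ^{\P} \) (idempotency and norm-non-increasingness of \( \varphi \) follow from those of \( \neg\neg f \) via \cref{thm:cpm-representation}). Since \( \{\, x \mid \qnorm{x}{\neg\neg\LQT}^{(n)} \le 1 \,\} = (\neg\neg\LQT)_n \) by the closedness argument above, \( G(X) \) is the splitting object of \( \neg\neg f \) on \( \neg\neg\LQT \), i.e.\ \( M \). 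Thus \( G \) is an equivalence.

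\textbf{Main obstacle.} The technically delicate part is not any single deep step but the bookkeeping around \cref{thm:cpm-representation}: one must verify that the identification of \( \CQ \)-module morphisms of pseudo-representable modules with completely positive maps is compatible with composition and identities, that it transports the two idempotent-commuting conditions \( \psi=\varphi'\psi\varphi \) and \( h=f'hf \) into one another, and that ``\( \psi \) norm-non-increasing'' \( \Leftrightarrow \) ``\( \psi \) sends unit balls to unit balls'' \( \Leftrightarrow \) ``\( \psi\circ\LQT_n\subseteq\LQT'_n \)'' — the last equivalence depending on the closedness of \( \LQT'_n \), which is exactly where the bi-orthogonality description of \cref{lem:linear:property-of-induced-norm}(4) is needed. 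Each ingredient is routine, but they must all be aligned for the two hom-set descriptions to match on the nose.
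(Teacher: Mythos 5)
Your proposal is correct and follows essentially the same route as the paper: the heart of both arguments is the mutually inverse correspondence between a norm family satisfying the conditions of \cref{lem:linear:property-of-induced-norm} and the pseudo-representable module of unit balls, combined with \cref{thm:cpm-representation} to identify morphisms with norm-non-increasing completely positive maps, and \cref{prop:finite-dimesnional-idempotent-splitting} to handle general objects of \( \fdCQ \). The only difference is presentational — the paper factors both categories through their Karoubi envelopes (reducing to the identity-idempotent subcategories and invoking that Karoubi envelopes preserve equivalences), whereas you build the functor on the full categories and check full faithfulness and essential surjectivity directly; your explicit closedness argument for \( \LQT'_n \) via condition (4) makes precise a point the paper leaves as ``clearly inverses.''
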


Let \( \odcCQ \) be the full subcategory of \( \classicalCQ \) consisting of pseudo-representable modules \( \LQT \) such that \( \neg\neg \LQT \cong \LQT \).
By \cref{prop:finite-dimesnional-idempotent-splitting}, the category \( \fdCQ \) is equivalent to the Karoubi envelope of \( \odcCQ \).
Let \( \CQ^{\P\P} \) be the full subcategory of \( \CQ^{\P} \) consisting of objects \( (\ell, \varphi, (\annorm{-}^{(n)})_n)) \) such that \( \varphi \in \CPM(\ell,\ell) \) is the identity.
Obviously \( \CQ^{\P} \) is the Karoubi envelope of \( \CQ^{\P\P} \).
Since the Karoubi envelope preserves equivalence of categories, it suffices to prove the equivalence of \( \odcCQ \) and \( \CQ^{\P\P} \).

For simplicity, we omit the (useless) idempotent component \( \varphi \) of an object in \( \CQ^{\P\P} \).
\begin{lemma}
    \( \odcCQ \) is equivalent to \( \CQ^{\P\P} \).
\end{lemma}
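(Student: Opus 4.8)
The plan is to construct an explicit equivalence between $\odcCQ$ and $\CQ^{\P\P}$ by exploiting the correspondence, already half-established in \cref{lem:linear:property-of-induced-norm} and the discussion following it, between pseudo-representable modules $\LQT$ with $\neg\neg\LQT\cong\LQT$ and families of norms on the hom-sets $\CPM(n,\ell)$. On objects, I would send a pseudo-representable $\LQT\in\odcCQ$ with underlying object $\ell=\#\LQT$ to $(\ell,(\qnorm{-}{\LQT}^{(n)})_n)$, the family of norms defined by $\qnorm{x}{\LQT}^{(n)}\defe\inf\{\,r\in\Real_{\ge0}\mid x\in r\cdot\LQT_n\,\}$. \cref{lem:linear:property-of-induced-norm} already verifies that this family satisfies exactly the four conditions required of an object of $\CQ^{\P}$ (and the idempotent is trivial, so it lands in $\CQ^{\P\P}$). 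Conversely, a family $(\annorm{-}^{(n)})_n$ of norms satisfying those conditions is sent to the hereditary submodule $\LQT_n\defe\{\,x\in\CPM(n,\ell)\mid\annorm{x}^{(n)}\le1\,\}$ of $\CPM({-},\ell)$; the conditions of \cref{lem:linear:property-of-induced-norm} are precisely what is needed to check that $\LQT$ is pseudo-representable (boundedness from condition (2), the pseudo-universal element from the fact that all norms are equivalent in finite dimension together with norm-non-increasingness of $\ident_\ell$) and that $\LQT\cong\neg\neg\LQT$ (condition (4), which is stated to be the rewriting of $\qnorm{-}{\LQT}=\qnorm{-}{\neg\neg\LQT}$).

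Next I would check that these two assignments are mutually inverse up to isomorphism. For one direction, starting from $\LQT$, forming the norm family, and then forming the submodule of norm-$\le1$ elements returns $\{\,x\mid\qnorm{x}{\LQT}^{(n)}\le1\,\}$, which equals $\LQT_n$ because $\LQT_n$ is downward-closed and convex (consequences of $\LQT\cong\neg\neg\LQT$, as noted in the proof of \cref{lem:linear:property-of-induced-norm}(1)) and closed under the topological limits that define the infimum. For the other direction, starting from a norm family $(\annorm{-}^{(n)})_n$, the induced module has induced norm $\qnorm{x}{\LQT}^{(n)}=\inf\{r\mid\annorm{x/r}^{(n)}\le1\}=\annorm{x}^{(n)}$ by homogeneity of the norm. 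On morphisms, \cref{thm:cpm-representation} identifies $\widehat{\CQ}(\LQT,\LQT')$ with the set of completely positive maps $\psi\in\CPM(\#\LQT,\#\LQT')$ such that $\psi\circ x\in\LQT'_n$ for all $n$ and $x\in\LQT_n$, and the latter condition translates verbatim into $\annorm{\psi\circ x}'^{(n)}\le1$ whenever $\annorm{x}^{(n)}\le1$, i.e.\ into norm-non-increasingness $\annorm{\psi\circ x}'^{(n)}\le\annorm{x}^{(n)}$ — which is exactly the condition defining morphisms in $\CQ^{\P\P}$. So the functor is fully faithful, and essential surjectivity in both directions follows from the object-level bijection. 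Composition and identities are preserved on the nose since morphisms on both sides are literally composed as completely positive maps.

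Finally, to deduce the theorem itself I would invoke the two reductions stated just before the lemma: \cref{prop:finite-dimesnional-idempotent-splitting} exhibits $\fdCQ$ as (equivalent to) the Karoubi envelope of $\odcCQ$, and $\CQ^{\P}$ is manifestly the Karoubi envelope of $\CQ^{\P\P}$ (an object of $\CQ^{\P}$ is by definition a norm family together with a norm-non-increasing idempotent $\varphi\colon P_\ell\to P_\ell$, which is precisely a choice of idempotent on the corresponding object of $\CQ^{\P\P}$, and morphisms match those of the idempotent-completion). Since forming the Karoubi envelope is a $2$-functor and hence preserves equivalences of categories, the equivalence $\odcCQ\simeq\CQ^{\P\P}$ lifts to $\fdCQ\simeq\CQ^{\P}$.

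I expect the main obstacle to be the verification that the induced-norm family genuinely satisfies condition (4) of \cref{lem:linear:property-of-induced-norm} in the converse direction, i.e.\ that the submodule $\LQT$ built from an abstract norm family is actually fixed by $\neg\neg$ and not merely pseudo-representable; this is where one must unfold the concrete descriptions of $\rightarrowtriangle$ and $\otensor$ from \cref{lem:linear:lqt-function-and-tensor} together with \cref{lem:negation-of-pseudo-representable}, and confirm that condition (4) is not just necessary but sufficient for $\LQT=\neg\neg\LQT$. Everything else — the object bijection, full faithfulness via \cref{thm:cpm-representation}, and the Karoubi-envelope bookkeeping — should be routine.
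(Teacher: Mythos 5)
Your proposal is correct and follows essentially the same route as the paper: the same object correspondence via $\qnorm{-}{\LQT}^{(n)}$ and the unit-ball submodule, the same identification of morphisms on both sides with norm-non-increasing completely positive maps via \cref{thm:cpm-representation}, and the same Karoubi-envelope reduction for the ambient theorem. The one point you flag as a potential obstacle --- that condition (4) is sufficient (not merely necessary) for $\LQT=\neg\neg\LQT$ --- is exactly the step the paper asserts without further argument, so your added care there (closedness of the unit ball under the limits defining the infimum) is a refinement rather than a divergence.
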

\begin{proof}
    We first give the bijective correspondence between objects.
    The norm of \( \LQT \in \odcCQ \) has already given.
    Assume a family \( (\annorm{-}^{(n)})_n \) of cone norms on \( \CPM({-}, \ell) \) that satisfies the conditions in \cref{lem:linear:property-of-induced-norm}.
    It defines a hereditary \( \CQ \)-submodule \( \LQT \hookrightarrow \CPM({-}, \ell) \) by
    \begin{equation*}
        \LQT_n
        \quad\defe\quad
        \{\, x \in \CPM(n,\ell) \mid \annorm{x}^{(n)} \le 1 \,\}.
    \end{equation*}
    This is obviously pseudo-representable: the bound \( B \) comes from Condition (2), \( \frac{1}{\annorm{\ident_\ell}^{(\ell)}}\ident_{\ell} \in \LQT_\ell \), and Condition (3) ensures that \( (\LQT_n)_n \) is closed under the \( \CQ \)-action in \( \CPM({-}, \ell) \).
    Condition (4) is equivalent to \( \neg\neg \LQT = \LQT \).

    The mappings on objects are clearly the inverses of each other.

    The action of functors on morphisms can be defined as the identity when we identify \( \pshCQ(\LQT, \LQT') \) with the subset of \( \CPM(\ell, \ell') \) by \cref{thm:cpm-representation}.
    The condition on the subset of \( \CPM(\ell,\ell') \) in \cref{thm:cpm-representation} is clearly equivalent to the norm-non-increasing condition for the corresponding norms.    
\end{proof}

\subsection{Proof of \cref{thm:selingers-q-prime}}
\begin{claim*}[of \cref{thm:selingers-q-prime}]
    \( \CQ' \) is isomorphic to a full subcategory of \( \CQ^{\P} \).
    The embedding is strong monoidal.
\end{claim*}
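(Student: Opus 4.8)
The plan is to construct a functor \( E \colon \CQ' \longrightarrow \fdCQ \), show it is fully faithful — so that, passing to its replete essential image, \( \CQ' \) is isomorphic to a full subcategory of \( \fdCQ \) — and then check that \( E \) is strong monoidal. The first structural simplification I would use is that \( \CQ' \) is the Karoubi envelope of its full subcategory \( \CQ'_0 \) of objects with trivial idempotent (exactly as \( \CQ^{\P} \) is the Karoubi envelope of \( \CQ^{\P\P} \)), while \( \fdCQ \) is the Karoubi envelope of \( \odcCQ \) (this is how \cref{thm:norm-category-equivalent} is obtained, via \cref{prop:finite-dimesnional-idempotent-splitting}). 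Since the Karoubi envelope preserves full embeddings, equivalences, and monoidal structure, it suffices to produce a fully faithful strong-monoidal functor \( \CQ'_0 \hookrightarrow \odcCQ \) and then apply the Karoubi envelope; on general objects \( E(\ell,\varphi,\opnorm{-}) \) is then the splitting object in \( \classicalCQ \) of the \( \CQ \)-module idempotent induced by \( \varphi \) on \( \LQT_{\opnorm{-}} \), which is a legitimate endomorphism because \( \varphi \) is norm-non-increasing.

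On \( \CQ'_0 \) an object is a cone norm \( \opnorm{-} \) on \( \CPM(1,\ell) \), and I would send it to \( \LQT_{\opnorm{-}} \) as in the text. First I would verify that \( \LQT_{\opnorm{-}} \) is well defined: the minimum pseudo-representable \( \CQ \)-module with underlying object \( \ell \) whose \( 1 \)-component contains the \( \opnorm{-} \)-unit ball exists, and equals, in each degree \( n \), the down-closure of the convex closure of \( \{\, x \circ \psi \mid \psi \in \CQ(n,1),\ x \in \CPM(1,\ell),\ \opnorm{x} \le 1 \,\} \); boundedness uses that \( \opnorm{-} \) dominates a multiple of the trace norm, the pseudo-universal element is the \( \opnorm{-} \)-normalisation of \( \ident_\ell \), and \( \neg\neg\LQT_{\opnorm{-}} \cong \LQT_{\opnorm{-}} \) follows by comparing generators after unfolding \( \neg \) via \cref{lem:linear:lqt-function-and-tensor} and \cref{lem:negation-of-pseudo-representable}. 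On morphisms I would use \cref{thm:cpm-representation}: a morphism \( \LQT_{\opnorm{-}} \longrightarrow \LQT_{\opnorm{-}'} \) in \( \classicalCQ \) is a completely positive \( \psi \in \CPM(\ell,\ell') \) with \( \psi \circ x \in (\LQT_{\opnorm{-}'})_n \) for all \( n \) and \( x \in (\LQT_{\opnorm{-}})_n \), and I claim this is equivalent to \( \psi \) being norm-non-increasing. For the forward direction one takes \( n = 1 \) and uses that the \( 1 \)-component of \( \LQT_{\opnorm{-}'} \) is exactly the \( \opnorm{-}' \)-unit ball (norm down-monotonicity and convexity already make that ball closed under every operation used to build \( \LQT_{\opnorm{-}'} \)); for the converse one decomposes \( x \in (\LQT_{\opnorm{-}})_n \) into generators \( x_0 \circ \psi_0 \) with \( \opnorm{x_0} \le 1 \) and pushes \( \psi \) through. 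This gives the bijection \( \CQ'_0((\ell,\opnorm{-}),(\ell',\opnorm{-}')) \cong \odcCQ(\LQT_{\opnorm{-}},\LQT_{\opnorm{-}'}) \); injectivity on objects follows because a norm is the Minkowski functional of its unit ball.

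For strong monoidality, the \( \CQ' \)-unit is \( (1,\ident,|{-}|) \) (read off \eqref{eq:non-entangled-tensor-norm} by taking the second factor to be this object), and \( \LQT_{|{-}|} = \yoneda(1) = I \) since \( \yoneda(1)_1 = \CQ(1,1) = [0,1] \) and \( \yoneda(1) \) is already pseudo-representable and minimal. For the tensor I would prove \( \LQT_{\opnorm{-}''} \cong \neg\neg(\LQT_{\opnorm{-}} \otensor \LQT_{\opnorm{-}'}) \) (up to idempotents), where \( \opnorm{-}'' \) is Selinger's non-entangling tensor norm of \eqref{eq:non-entangled-tensor-norm}, which exhibits \( E(X \otimes Y) \cong E(X) \otimes E(Y) \). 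The decisive observation is that although \( \otensor \) is an entangled tensor, every generator \( (x \otimes x') \circ \phi \) of \( \LQT_{\opnorm{-}} \otensor \LQT_{\opnorm{-}'} \) in degree \( 1 \) collapses to a convex/down-combination of non-entangled tensors: writing \( x \) as a combination of \( x_0 \circ \psi_0 \) and \( x' \) as a combination of \( x'_0 \circ \psi'_0 \) with \( \psi_0,\psi'_0 \) landing in \( 1 \), the composite \( (\psi_0 \otimes \psi'_0) \circ \phi \in \CQ(1,1) = [0,1] \) absorbs all entanglement of \( \phi \), leaving terms \( r_i\,(x_{0,i} \otimes x'_{0,i}) \) with \( r_i \in [0,1] \), \( \opnorm{x_{0,i}} \le 1 \), \( \opnorm{x'_{0,i}}' \le 1 \). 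Hence the \( 1 \)-component of \( \LQT_{\opnorm{-}} \otensor \LQT_{\opnorm{-}'} \) is precisely the non-entangled data underlying \( \opnorm{-}'' \); applying \( \neg\neg \) adds exactly convexity and closure, so by the norm–module correspondence behind \cref{thm:norm-category-equivalent} the result is \( \LQT_{\opnorm{-}''} \), and the coherence isomorphisms are the evident ones.

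The hard part will be this last identification of tensors: forcing the entangled \( \classicalCQ \)-tensor to agree with Selinger's entanglement-free tensor \eqref{eq:non-entangled-tensor-norm}. It works only because objects coming from \( \CQ' \) are ``physically closed'' — their components are assembled from classically-correlated data — so the entanglement that \( \otensor \) would introduce is forced to collapse; for a non-physically-closed module such as \( \sem{\TyQubit} \) this collapse fails, which is exactly the obstruction that makes \( \CQ' \) too weak to interpret the quantum \( \lambda \)-calculus. A secondary care point is to pin down precisely which boundedness/closedness hypotheses on \( \CQ' \)-norms are needed for \( \LQT_{\opnorm{-}} \in \fdCQ \) and to reconcile them with Selinger's original axioms.
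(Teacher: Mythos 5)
Your architecture matches the paper's almost exactly: reduce by Karoubi envelopes to a fully faithful strong monoidal functor from the identity-idempotent part of \( \CQ' \) into \( \odcCQ \), send a norm to \( \LQT_{\annorm{-}} \), identify morphisms with norm-non-increasing completely positive maps via \cref{thm:cpm-representation}, and prove monoidality by showing that entanglement collapses on physically-closed modules. So the route is right. But there is one genuine gap, and it sits exactly at the point you flag as "the hard part."

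You define \( \LQT_{\annorm{-}} \) as the \emph{minimal} pseudo-representable module generated by the unit ball (degree-wise convex down-closure of \( \{ x \circ \psi \mid \opnorm{x} \le 1, \psi \in \CQ(n,1) \} \)), whereas the paper's proof defines it as the biorthogonal \( \{ y \in \CPM(1,\ell) \mid \annorm{y} \le 1 \}^{\bot\bot} \), for which \( \neg\neg\LQT \cong \LQT \) is automatic from \( \mathcal{F}^{\bot\bot\bot} = \mathcal{F}^{\bot} \) (\cref{lem:negation-of-pseudo-representable}). With your definition, the claims that (i) \( \neg\neg\LQT_{\annorm{-}} \cong \LQT_{\annorm{-}} \) and (ii) applying \( \neg\neg \) to the non-entangled generators of \( \LQT_{\annorm{-}_1} \otensor \LQT_{\annorm{-}_2} \) "adds exactly convexity and closure" are precisely the content that needs proof, and "comparing generators" does not deliver it. The issue is that \( \neg\neg \) is a bipolar with respect to tests \( \varphi \in \CPM(m \otimes \ell, 1) \) carrying an \emph{auxiliary system} \( m \), so it is not the classical bipolar of a convex set, and a priori it could add entangled elements in degrees \( n > 1 \) or enlarge the degree-\(1\) component beyond the unit ball. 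The paper closes this in two steps you omit: a claim that any test bounded on the degree-\(1\) unit ball of the tensor norm is automatically bounded on \( \ident_k \otimes y_1 \otimes y_2 \) for \( y_1, y_2 \) of arbitrary degree (proved by freezing one tensor factor at a time), and an appeal to the Hahn--Banach separation theorem for continuous normed cones (\cite[Theorem~2.14]{Selinger2004a}) to show that any \( x \) with \( \annorm{x} > 1 \) is excluded by some test, so the biorthogonal does not overshoot. Without a separation argument of this kind, the identification \( \neg\neg(\LQT_{\annorm{-}_1} \otensor \LQT_{\annorm{-}_2}) = \LQT_{\annorm{-}''} \) — and even the well-definedness of your object map into \( \odcCQ \) — is unsubstantiated. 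Adding the Hahn--Banach step (and the auxiliary-system claim) would complete the proof.
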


Since \( \CQ^{\P} \) is equivalent to \( \fdCQ \), it suffices to prove the result for \( \fdCQ \) instead of \( \CQ^{\P} \).
Recall that \( \CQ' \) is the Karoubi envelope of the full subcategory \( \CQ'' \) consisting of object whose idempotent is the identity.
The Karoubi envelope preserves a full subcategory.
Furthermore the Karoubi envelope of a monoidal category has a canonical monoidal structure, and the Karoubi envelope maps a strong monoidal functor to a strong monoidal functor.
Hence it suffices to prove the result for \( \CQ'' \) and \( \odcCQ \).
(Recall that \( \odcCQ \) is the full subcategory of \( \classicalCQ \) consisting of pseudo-representable modules \( \LQT \) such that \( \neg\neg \LQT \cong \LQT \).)

Given a norm \( \annorm{-} \) on \( \CPM(1,\ell) \), let \( \LQT_{\annorm{-}} \) be \( \{ y \in \CPM(1,\ell) \mid \annorm{y} \le 1 \}^{\bot\bot} \).
\begin{lemma}
    \( \LQT_{\annorm{-}} \) is a pseudo-representable module for every norm \( \annorm{-} \) on \( \CPM(1,\ell) \).
\end{lemma}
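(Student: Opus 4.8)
The plan is to verify directly the three clauses of \cref{def:pseudo-representable} for $\LQT_{\annorm{-}} = \mathcal{F}^{\bot\bot}$, where $\mathcal{F}$ is the family over $\CPM({-},\ell)$ concentrated in degree $1$ with $\mathcal{F}_1 = \{\, y \in \CPM(1,\ell) \mid \annorm{y} \le 1 \,\}$ and $\mathcal{F}_n = \emptyset$ for $n \neq 1$; note $\neg\ell = \ell$ as $\CPM$ is strictly compact closed, so $\mathcal{F}^{\bot\bot}$ is again a family over $\CPM({-},\ell)$. First I would isolate the purely structural facts about polars, independent of the norm: for \emph{any} family $\mathcal{G}$ over $\CPM({-},\neg\ell)$, the polar $\mathcal{G}^{\bot}$ is a hereditary $\CQ$-submodule of $\CPM({-},\ell)$. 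Closure under the $\CQ$-action follows by reassociating $(f \circ \psi) \otimes x = (f \otimes x) \circ (\psi \otimes \ident)$ with $\psi \otimes \ident \in \CQ$; downward-closure follows from monotonicity of $\otimes$ and composition (they are $\SMon$-enriched) together with positivity, so $y' \le y$ and $\mathbf{ev}\circ(f\otimes y)\circ\varphi \le 1$ force $\mathbf{ev}\circ(f\otimes y')\circ\varphi \le 1$ in $\Real_{\ge 0}$; sum-reflection is automatic for the inherited $\Sigma$-monoid structure. Since $\mathcal{F}^{\bot\bot}=(\mathcal{F}^{\bot})^{\bot}$ is of this form, the first clause is settled.

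Next I would bring in finite-dimensionality. Since $\CPM(1,\ell)$ sits inside the finite-dimensional real space of self-adjoint matrices and both $\annorm{-}$ and the operator norm $\opnorm{-}$ are norms there, they are equivalent, say $c\,\opnorm{y} \le \annorm{y} \le C\,\opnorm{y}$ for all $y \in \CPM(1,\ell)$, with $0 < c \le C$ (the same style of argument used for normed cones in the excerpt). The \textbf{Bounded} clause follows from the lower estimate: $\mathcal{F}_1 \subseteq \{\, y \mid \opnorm{y} \le 1/c \,\} = \big((1/c)\yoneda(\ell)\big)_1$, hence $\mathcal{F} \subseteq (1/c)\yoneda(\ell)$. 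The module $(1/c)\yoneda(\ell)$ is pseudo-representable and isomorphic to $\yoneda(\ell) \in \classicalCQ$, so by \cref{lem:negation-of-pseudo-representable} (together with \cref{lem:linear:lqt-function-and-tensor} to see $\neg\yoneda(\ell)$ is pseudo-representable) its canonical map into its biorthogonal is an inclusion which is an isomorphism, i.e.\ $(1/c)\yoneda(\ell)$ equals its own biorthogonal. Therefore $\mathcal{F}^{\bot\bot} \subseteq (1/c)\yoneda(\ell)$, and every element of $\mathcal{F}^{\bot\bot}$ has operator norm $\le 1/c =: B$.

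For the \textbf{Pseudo-universal element} clause I would instead exploit the upper estimate to bound the polar $\mathcal{F}^{\bot}$ uniformly. Since $\{\, y \in \CPM(1,\ell) \mid \opnorm{y} \le 1/C \,\} \subseteq \mathcal{F}_1$, testing $f \in \mathcal{F}^{\bot}_m$ against such $y$ and against subnormalized states $\psi$, and unwinding $\mathbf{ev}\circ(f\otimes y)\circ\psi$ as a trace pairing of the form $\trace\big(f(\psi(1))\,y(1)\big)$, shows that $f(\rho)$ has largest eigenvalue $\le C$ for every density matrix $\rho$ on $\Mat_\ell(\Complex)$; hence $\opnorm{f} = \sup_\rho \trace f(\rho) \le \ell\,C$, using $\trace \le \ell\cdot(\text{largest eigenvalue})$ in dimension $\ell$. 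Consequently, for $f \in \mathcal{F}^{\bot}_m$ and $\varphi \in \CQ(1,\ell\otimes m)$, submultiplicativity of $\opnorm{-}$ under composition and $\otimes$ gives $\mathbf{ev}\circ(f\otimes\ident_\ell)\circ\varphi \le \opnorm{\mathbf{ev}}\cdot\ell\,C\cdot 1$, so taking $r := 1/(\opnorm{\mathbf{ev}}\,\ell\,C)$ yields $r\,\ident_\ell \in (\mathcal{F}^{\bot\bot})_\ell$. Together with the first two clauses, this shows $\LQT_{\annorm{-}}$ is pseudo-representable with $\#\LQT_{\annorm{-}} = \ell$.

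The step I expect to be the main obstacle is the uniform operator-norm bound on $\mathcal{F}^{\bot}$ (and thereby on $\mathbf{ev}\circ(f\otimes\ident_\ell)\circ\varphi$): it requires writing the orthogonality pairing out concretely in terms of completely positive maps, traces and the evaluation map, and invoking the finite-dimensional comparison between trace and largest eigenvalue; everything else is routine bookkeeping about polars, the inherited $\Sigma$-monoid structure, and the closure properties of pseudo-representable modules already established in the excerpt. A secondary point demanding care is the bookkeeping that $\mathcal{F}$ is concentrated in degree $1$ whereas the modules $\yoneda(\ell)$ and $\neg\yoneda(\ell)$ it is compared against are full families, and that $\neg\ell = \ell$ is used throughout.
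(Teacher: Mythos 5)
Your proposal is correct and follows essentially the same strategy as the paper's proof: equivalence of norms on the finite-dimensional space \( \CPM(1,\ell) \), boundedness of \( \LQT_{\annorm{-}} \) from the lower estimate by squeezing \( \mathcal{F}^{\bot\bot} \) into the biorthogonal of a scaled copy of \( \CQ({-},\ell) \), and the pseudo-universal element from the upper estimate by first bounding the operator norm of the polar \( \mathcal{F}^{\bot} \) uniformly and then applying the \( \opnorm{\mathbf{ev}} \)-submultiplicativity estimate. The one place you genuinely diverge is the uniform bound on \( \mathcal{F}^{\bot} \): the paper passes to the curried map \( \Lambda^{-1}(\varphi) \in \CPM(n \otimes \ell, 1) \), bounds its norm by \( B_1 \), and then invokes the compactness argument of \cref{lem:appx:lqt-function1} (via \cref{lem:cpm-trace-norm-ball-compact}) to recover a bound on \( \opnorm{\varphi} \) itself, whereas you extract an explicit constant \( \ell\,C \) by testing against rank-one states and using \( \trace \le \ell \cdot \lambda_{\max} \). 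Both are sound; yours gives a concrete constant at the cost of writing out the trace pairing (where the only care needed is the transpose hidden in the compact-closed structure of \( \CPM \)), while the paper's reuses an already-established compactness lemma. Your explicit verification that polars are hereditary submodules, which the paper leaves implicit, is a welcome addition rather than a deviation.
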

\begin{proof}
    Since \( \annorm{-} \) is a norm on \( \CPM(1,\ell) \) and all norms on \( \CPM(1,\ell) \) is equivalent,\footnote{This is because a norm on \( \CPM(1,\ell) \) can be extended to a norm on the finite-dimensional \( \Real \)-vector space \( \SelfAdjoint(\Mat_\ell(\Complex)) \) and all norms on a finite-dimensional vector space are equivalent.}
    we have a real number \( B_1,B_2 > 0 \) such that \( \annorm{x} \le B_1 \tracenorm{x} \) and \( \tracenorm{x} \le B_2 \annorm{x} \) for every \( x \in \CPM(1,\ell) \).

    We show that \( \LQT_{\annorm{-}} \) is bounded.
    Let \( \mathcal{F} \defe B_2 \CQ({-}, \ell) \) be the family consisting of \( x \in \CPM(n,\ell) \) with \( \opnorm{x} \le B_2 \).
    Then \( \{ y \in \CPM(1,\ell) \mid \annorm{y} \le 1 \} \subseteq \mathcal{F} \).
    So \( \mathcal{L} = \{ y \in \CPM(1,\ell) \mid \annorm{y} \le 1 \}^{\bot\bot} \subseteq \mathcal{F}^{\bot\bot} \).
    Since \( \mathcal{F} \) is pseudo-representable, so is \( \mathcal{F}^{\bot\bot} = \neg\neg\mathcal{F} \) (\cref{lem:negation-of-pseudo-representable}).
    Hence \( \mathcal{L} \) is bounded.

    Let \( \mathcal{F}_1 \defe \{ y \in \CPM(1,\ell) \mid \annorm{y} \le 1 \}^{\bot} \).
    \begin{claim*}
        There exists a real number such that \( \varphi \in \mathcal{F}_1 \) implies \( \opnorm{\varphi} \le C_1 \)
        for every \( n \) and \( \varphi \in \CPM(n,\neg\ell) \).
    \end{claim*}
    \begin{proof}
        Assume that \( \varphi \in \mathcal{F}_1 \).
        Then \( \opnorm{\mathbf{ev} \circ (\varphi \otimes x)} \le 1 \) for every \( x \in \CPM(1,\ell) \) with \( \annorm{x} \le 1 \).
        So \( \opnorm{\mathbf{ev} \circ (\varphi \otimes (1/B_1)\psi)} \le 1 \) for every \( \psi \in \CQ(1,\ell) \) since \( \annorm{\psi} \le B_1 \tracenorm{\psi} = B_1 \opnorm{\psi} \).
        Since \( \mathbf{ev} \circ (\varphi \otimes x) = \Lambda^{-1}(\varphi) \circ x \), where \( \Lambda^{-1} \colon \CPM(n, \neg \ell) \cong \CPM(n \otimes \ell, 1) \), we have \( \opnorm{\Lambda^{-1}(\varphi) \circ (1/B_1)\psi} \le 1 \) for every \( \psi \in \CQ(1,\ell) \).
        This means that \( \opnorm{\Lambda^{-1}(\varphi)} \le B_1 \).
        By the same argument as the proof of \cref{lem:appx:lqt-function1}, we have some \( C' > 0 \) such that \( C' \opnorm{\varphi} \le \opnorm{\Lambda^{-1}(\varphi)} \) for every \( \varphi \in \CPM(n,\neg\ell) \).
        Hence \( \opnorm{\varphi} \le (1/C')\opnorm{\Lambda^{-1}(\varphi)} \le B_1/C' \).
        \qedhere\textbf{[Claim]}
    \end{proof}

    Since the operator norm of \( \mathcal{F}_1 \defe \{ y \in \CPM(1,\ell) \mid \annorm{y} \le 1 \}^{\bot} \) is bounded, \( \{ y \in \CPM(1,\ell) \mid \annorm{y} \le 1 \}^{\bot\bot} \) contains all completely positive maps with sufficiently small operator norm.
    In particular, \( r \ident_\ell \in \{ y \in \CPM(1,\ell) \mid \annorm{y} \le 1 \}^{\bot\bot} \) for some \( r > 0 \).
\end{proof}

\begin{corollary}
    \( \LQT_{\annorm{-}} \in \odcCQ \).
\end{corollary}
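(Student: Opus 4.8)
The plan is to derive this as a formal consequence of the orthogonality calculus in \cref{lem:negation-of-pseudo-representable}, using only that \( \LQT_{\annorm{-}} \) is pseudo-representable --- which is exactly the preceding lemma. Write \( \mathcal{F} \defe \{\, y \in \CPM(1,\ell) \mid \annorm{y} \le 1 \,\} \), so that \( \LQT_{\annorm{-}} = \mathcal{F}^{\bot\bot} \) by definition. A pseudo-representable \( \CQ \)-module has a representable basis (\cref{lem:lqt-basis}), so \( \LQT_{\annorm{-}} \in \basedCQ \); it therefore suffices to show that the canonical morphism \( \LQT_{\annorm{-}} \longrightarrow \neg\neg\LQT_{\annorm{-}} \) is an isomorphism, whence \( \LQT_{\annorm{-}} \in \classicalCQ \), and, being pseudo-representable, \( \LQT_{\annorm{-}} \in \odcCQ \).

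First I would identify \( \neg\LQT_{\annorm{-}} \) with \( \LQT_{\annorm{-}}^{\bot} \): since \( \LQT_{\annorm{-}} \) is pseudo-representable, \cref{lem:negation-of-pseudo-representable}(4) gives \( \neg\LQT_{\annorm{-}} = \LQT_{\annorm{-}} \rightarrowtriangle \yoneda(1) = \LQT_{\annorm{-}}^{\bot} \). By \cref{lem:linear:lqt-function-and-tensor}, \( \LQT_{\annorm{-}} \rightarrowtriangle \yoneda(1) \) is again pseudo-representable (\( \yoneda(1) \hookrightarrow \CPM({-},1) \) being pseudo-representable, bounded by \( 1 \) with pseudo-universal element \( \ident_1 \)), so \( \neg\LQT_{\annorm{-}} \) is pseudo-representable and a second application of \cref{lem:negation-of-pseudo-representable}(4) yields
\[
  \neg\neg\LQT_{\annorm{-}} \;=\; (\neg\LQT_{\annorm{-}})^{\bot} \;=\; (\LQT_{\annorm{-}}^{\bot})^{\bot} \;=\; \LQT_{\annorm{-}}^{\bot\bot}.
\]
Here all of these modules live over the same object, since the canonical \( \ell \cong \ell^{**} \) in \( \CPM \) is the identity, i.e.\ \( \neg\ell = \ell \), so the identifications are along the canonical morphisms. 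Next I would collapse \( \LQT_{\annorm{-}}^{\bot\bot} \) back to \( \LQT_{\annorm{-}} \) by idempotency of triple orthogonality: \cref{lem:negation-of-pseudo-representable}(3) gives \( \mathcal{F}^{\bot\bot\bot} = \mathcal{F}^{\bot} \), hence
\[
  \LQT_{\annorm{-}}^{\bot\bot} \;=\; (\mathcal{F}^{\bot\bot})^{\bot\bot} \;=\; \mathcal{F}^{\bot\bot\bot\bot} \;=\; \mathcal{F}^{\bot\bot} \;=\; \LQT_{\annorm{-}}.
\]
Combining the two displays, the canonical morphism \( \LQT_{\annorm{-}} \longrightarrow \neg\neg\LQT_{\annorm{-}} \) is the identity, in particular an isomorphism, so \( \LQT_{\annorm{-}} \) is a pseudo-representable module with \( \neg\neg\LQT_{\annorm{-}} \cong \LQT_{\annorm{-}} \), i.e.\ \( \LQT_{\annorm{-}} \in \odcCQ \).

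I do not anticipate a real obstacle here: the substantive work --- boundedness of \( \LQT_{\annorm{-}} \) and existence of a pseudo-universal element \( r\,\ident_\ell \in (\LQT_{\annorm{-}})_\ell \), which relied on the equivalence of all norms on the finite-dimensional space \( \SelfAdjoint(\Mat_\ell(\Complex)) \) and an operator-norm estimate of the kind used in \cref{lem:appx:lqt-function1} --- is already carried out in the lemma immediately preceding the corollary. The only step requiring some care is making sure the pseudo-representability hypothesis of \cref{lem:negation-of-pseudo-representable}(4) is genuinely met at its second application, i.e.\ that \( \neg\LQT_{\annorm{-}} \) is an honest pseudo-representable \( \CQ \)-module and not merely a family of subsets of \( \CPM({-},\ell) \); this is precisely what \cref{lem:linear:lqt-function-and-tensor} supplies.
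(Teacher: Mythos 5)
Your proposal is correct and follows the same route as the paper: the paper's own (two-line) proof likewise invokes the preceding pseudo-representability lemma and then concludes \( \neg\neg\LQT_{\annorm{-}} \cong \LQT_{\annorm{-}} \) from \cref{lem:negation-of-pseudo-representable}, exactly the combination of parts (3) and (4) you spell out. Your additional check that \( \neg\LQT_{\annorm{-}} \) is itself pseudo-representable before the second application of part (4) is a detail the paper leaves implicit, and it is the right thing to verify.
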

\begin{proof}
    We have shown that \( \LQT_{\annorm{-}} \) is pseudo-representable.
    By \cref{lem:negation-of-pseudo-representable}, \( \neg\neg\LQT_{\annorm{-}} \cong \LQT_{\annorm{-}} \). 
\end{proof}

\( \annorm{-} \mapsto \LQT_{\annorm{-}} \) gives the action on objects of the functor \( \CQ'' \longrightarrow \odcCQ \).
The action on morphisms can be defined as the identity when we identify \( \pshCQ(\LQT, \LQT') \) with the subset of \( \CPM(\ell, \ell') \) by \cref{thm:cpm-representation}.\tk{Here it would be better to give some details, which can be found in my hand-written note.}
It is easy to see the fullness and faithfulness of the functor.

The next lemma shows that the functor is strong monoidal.
\begin{lemma}\label{lem:entanglement-free-tensor}
    Let \( \annorm{-}_i \) be a cone norm on \( \CPM(1,\ell_i) \) for \( i = 1,2 \) and \( \annorm{-} \) be the tensor norm on \( \CPM(1,\ell_1 \otimes \ell_2) \).
    Then \( (\LQT_{\annorm{-}_1} \otimes \LQT_{\annorm{-}_2}) = \LQT_{\annorm{-}} \).
\end{lemma}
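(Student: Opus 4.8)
The statement asserts that the tensor product in $\fdCQ$ (equivalently $\classicalCQ$, restricted appropriately) of two \emph{physically-closed} modules $\LQT_{\annorm{-}_1}$ and $\LQT_{\annorm{-}_2}$ is again physically-closed, and is exactly the module determined by the entanglement-free tensor norm $\annorm{-}$ of \cref{eq:non-entangled-tensor-norm}. My approach is to compute $(\LQT_{\annorm{-}_1} \otimes \LQT_{\annorm{-}_2})$ using the concrete description of the tensor product in $\classicalCQ$, namely $(\LQT_{\annorm{-}_1} \otimes \LQT_{\annorm{-}_2}) = \neg\neg(\LQT_{\annorm{-}_1} \ptensor \LQT_{\annorm{-}_2})$, and then unfold the right-hand side using \cref{lem:linear:lqt-function-and-tensor} for $\ptensor$ and \cref{lem:negation-of-pseudo-representable} for $\neg$. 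Since $\LQT_{\annorm{-}} = \{ y \in \CPM(1,\ell_1\otimes\ell_2) \mid \annorm{y}\le 1 \}^{\bot\bot}$ by definition, and the bi-orthogonal closure corresponds exactly to applying $\neg\neg$, it suffices to show the $1$-components match up to $({-})^{\bot\bot}$: that is, $\{ x\otimes x' \mid x\in (\LQT_{\annorm{-}_1})_{m}, x'\in (\LQT_{\annorm{-}_2})_{m'} \}$ (composed with superoperators, as in $\otensor$) generates the same bi-orthogonal closure at level $1$ as $\{ y \mid \annorm{y}\le 1 \}$.

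First I would establish the inclusion $\LQT_{\annorm{-}} \subseteq (\LQT_{\annorm{-}_1} \ptensor \LQT_{\annorm{-}_2})^{\bot\bot}$ at the level of the defining generators: for $y \in \CPM(1,\ell_1\otimes\ell_2)$ with $\annorm{y}\le 1$, by \cref{eq:non-entangled-tensor-norm} we have (approximately, up to $\varepsilon$) a decomposition $y \le \sum_i x_i \otimes y_i$ with $\sum_i \annorm{x_i}_1 \annorm{y_i}_2 \le 1$, so after rescaling each $x_i, y_i$ into the unit balls of $\LQT_{\annorm{-}_1}, \LQT_{\annorm{-}_2}$ and using convexity (valid since these modules are $\neg\neg$-closed, hence their components are convex), $y$ lies below an element of $(\LQT_{\annorm{-}_1}\otensor\LQT_{\annorm{-}_2})_1$; downward-closedness of the latter (it is hereditary) gives $y$ itself in there, and then in the bi-orthogonal closure. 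Conversely, for the reverse inclusion I would show every simple tensor $x \otimes x'$ with $x,x'$ in the respective unit balls satisfies $\annorm{x\otimes x'} \le \annorm{x}_1 \annorm{x'}_2 \le 1$, directly from the definition of $\annorm{-}$ in \cref{eq:non-entangled-tensor-norm} (taking the one-term decomposition $z = x \otimes x'$); closing under the $\CQ$-action, sums and bi-orthogonality then yields $(\LQT_{\annorm{-}_1}\otensor\LQT_{\annorm{-}_2})^{\bot\bot} \subseteq \LQT_{\annorm{-}}^{\bot\bot} = \LQT_{\annorm{-}}$, where the last equality is \cref{lem:negation-of-pseudo-representable}(3),(4).

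The main obstacle I anticipate is the careful handling of the $\neg\neg$ versus $({-})^{\bot\bot}$ bookkeeping together with the $\varepsilon$-slack in the infimum defining the entanglement-free norm, and in particular verifying that the $\ptensor$-tensor of the two pseudo-representable modules really has the expected $1$-component before taking $\neg\neg$: by \cref{lem:linear:lqt-function-and-tensor} the module $\LQT_{\annorm{-}_1}\ptensor\LQT_{\annorm{-}_2} = \LQT_{\annorm{-}_1}\otensor\LQT_{\annorm{-}_2}$ consists of elements $(x\otimes x')\circ\varphi$ with $\varphi \in \CQ(n, m\otimes m')$, and I must check that at $n=1$ the reachable elements are precisely the (hereditary, $\CQ$-action-closed) hull of simple tensors from the two unit balls — which is exactly the entanglement-free region. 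Once this identification is in place, the double-negation step is formal. A secondary technical point is confirming that $\annorm{-}$ as defined by \cref{eq:non-entangled-tensor-norm} is genuinely a cone norm on $\CPM(1,\ell_1\otimes\ell_2)$ (needed for $\LQT_{\annorm{-}}$ to be defined), but this is part of Selinger's original development and can be cited.
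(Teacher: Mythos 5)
Your forward inclusion (\( \LQT_{\annorm{-}} \subseteq \neg\neg(\LQT_{\annorm{-}_1}\ptensor\LQT_{\annorm{-}_2}) \)) follows essentially the paper's route: decompose a unit-ball element using \cref{eq:non-entangled-tensor-norm}, rescale the factors into the unit balls, and invoke convexity and downward-closedness of the double-negated module. The \( \varepsilon \)-slack you flag is resolved in the paper by first treating \( \annorm{x}<1 \) strictly (where a decomposition with \( \sum_i p_i < 1 \) exists) and then passing to \( \annorm{x}\le 1 \) via topological closedness of \( \neg\neg(\LQT_{\annorm{-}_1}\ptensor\LQT_{\annorm{-}_2})_1 \) and continuity of norms on a finite-dimensional space; you should make that step explicit, but it is minor.

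The genuine gap is in the reverse inclusion. You verify \( \annorm{x\otimes x'}\le\annorm{x}_1\annorm{x'}_2\le 1 \) only for simple tensors of elements of the \emph{unit balls at component \(1\)}, and then hope to "close under the \( \CQ \)-action, sums and bi-orthogonality." But by \cref{lem:linear:lqt-function-and-tensor}, \( (\LQT_{\annorm{-}_1}\otensor\LQT_{\annorm{-}_2})_n \) consists of elements \( (y_1\otimes y_2)\circ\varphi \) with \( y_i\in(\LQT_{\annorm{-}_i})_{m_i} \) for \emph{arbitrary} \( m_i \), where \( \LQT_{\annorm{-}_i}=\{y\in\CPM(1,\ell_i)\mid\annorm{y}_i\le 1\}^{\bot\bot} \) has nontrivial higher components. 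There is no general reason that \( A^{\bot\bot}\otensor B^{\bot\bot}\subseteq(A\otensor B)^{\bot\bot} \): applying the bi-orthogonal closure to each factor \emph{before} tensoring is exactly the step that can fail, and it is the crux of the lemma. The paper handles it with a dedicated claim: given a test functional \( f\in\CPM(k\otimes\ell_1\otimes\ell_2,1) \) that is bounded by \(1\) on the component-\(1\) unit ball of \( \annorm{-} \), one fixes the second argument \( y_2' \) in the unit ball, observes that the slice \( g[y_2'] \defe f\circ(\ident_k\otimes\ident_{\ell_1}\otimes y_2') \) is a legitimate test functional against which \( \LQT_{\annorm{-}_1} \), being a bi-orthogonal, is bounded, so \( f \) remains bounded on \( y_1\otimes y_2' \) for arbitrary \( y_1\in(\LQT_{\annorm{-}_1})_{m_1} \); a second, symmetric slicing in the other variable then upgrades \( y_2' \) to arbitrary \( y_2\in(\LQT_{\annorm{-}_2})_{m_2} \). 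Your submultiplicativity observation is only the base case of this argument; without the two-step slice (or an equivalent), the reverse inclusion does not go through.
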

\begin{proof}
    We first prove the following claim.
    \begin{claim*}
Let \( k \in \Nat \) and \( f \in \CPM(k \otimes \ell_1 \otimes \ell_2, 1) \) be a completely positive map such that \( \opnorm{f \circ (\ident_k \otimes x)} \le 1 \) for every \( x \in \CPM(1, \ell_1 \otimes \ell_2) \) with \( \annorm{x} \le 1 \).
        Then, for every \( m_1,m_2 \in \Nat \), \( y_1 \in \CPM(m_1, \ell_1) \) and \( y_2 \in \CPM(m_2, \ell_2) \) with \( \annorm{y_1}_1 \le 1 \) and \( \annorm{y_2}_2 \le 1 \), we have
        \( \opnorm{f \circ (\ident_k \otimes y_1 \otimes y_2)} \le 1 \).
    \end{claim*}
    \begin{proof}
        Let \( y'_1 \in \CPM(1, \ell_1) \) and \( y'_2 \in \CPM(1, \ell_2) \) be arbitrary elements such that \( \annorm{y'_1}_1 \le 1 \) and \( \annorm{y'_2}_2 \le 1 \).
        Then
        \begin{equation*}
            \annorm{y'_1 \otimes y'_2} \le \annorm{y'_1}_1 \annorm{y'_2}_2 \le 1.
        \end{equation*}
        Hence
        \begin{equation*}
            \opnorm{f \circ (\ident_k \otimes y'_1 \otimes y'_2)} \le 1
        \end{equation*}
        by the assumption on \( f \).
        Let \( g[y'_2] \in \CPM(k \otimes \ell_1, 1) \) be given by
        \begin{equation*}
            g[y'_2]
            \quad\defe\quad
            f \circ (\ident_k \otimes \ident_{\ell_1} \otimes y'_2).
        \end{equation*}
        Then \( g[y'_2] \circ (\ident_k \otimes y'_1) = f \circ (\ident_k \otimes y'_1 \otimes y'_2) \) and thus
        \begin{equation*}
            \opnorm{g[y'_2] \circ (\ident_k \otimes y'_1)} \le 1.
        \end{equation*}
        Since \( y'_1 \in \CPM(1, \ell_1) \) is an arbitrary element such that \( \annorm{y'_1}_1 \le 1 \), by definition of \( \LQT_{\annorm{-}_1} \) and \( y_1 \in (\LQT_{\annorm{-}_1})_{m_1} \), we have
        \begin{equation*}
            \opnorm{f \circ (\ident_k \otimes y_1 \otimes y'_2)}
            =
            \opnorm{g[y'_2] \circ (\ident_k \otimes y_1)} \le 1.
        \end{equation*}
        Since \( y'_2 \) is an arbitrary element satisfying \( \annorm{y'_2}_2 \le 1 \), by the same argument applied to \( g' \defe f \circ (\ident_{k} \otimes y_1 \otimes \ident_{\ell_2}) \in \CPM((k \otimes m_1) \otimes \ell_2, 1) \), we have
        \begin{equation*}
            \opnorm{f \circ (\ident_k \otimes y_1 \otimes y_2)}
            =
            \opnorm{g' \circ (\ident_{k\otimes m_1} \otimes y_1)} \le 1.
        \end{equation*}
        \qedhere\textbf{[Claim]}
    \end{proof}
    
    We first prove that they coincide on \( 1 \)-st component, \ie,
    \begin{equation*}
        \qnorm{-}{\neg\neg(\LQT_{\annorm{-}_1} \ptensor \LQT_{\annorm{-}_2})}^{(1)}
        \quad=\quad
        \annorm{-}.
    \end{equation*}
    Let \( x \in \CPM(1, \ell_1 \otimes \ell_2) \).
    
    Assume that \( \annorm{x} < 1 \).
    Then \( x \le \sum_{i=1}^N p_i y_{1,i} \otimes y_{2,i} \) and \( \sum_i p_i \annorm{y_{1,i}}_1 \annorm{y_{2,i}} < 1 \).
    We can assume without loss of generality that \( \annorm{y_{1,i}}_1 = \annorm{y_{2,i}} = 1 \).
    Then \( y_{1,i} \otimes y_{2,i} \in (\LQT_{\annorm{-}_1} \otimes \LQT_{\annorm{-}_2})_1 \), which implies \( y_{1,i} \otimes y_{2,i} \in \neg\neg(\LQT_{\annorm{-}_1} \otimes \LQT_{\annorm{-}_2}) \).
    Since \( \sum_i p_i < 1 \), by the convexity and downward-closedness of \( \neg\neg(\LQT_{\annorm{-}_1} \otimes \LQT_{\annorm{-}_2}) \), we have \( \sum_{i=1}^N p_i y_{1,i} \otimes y_{2,i} \in \neg\neg(\LQT_{\annorm{-}_1} \otimes \LQT_{\annorm{-}_2}) \) and \( x \in \neg\neg(\LQT_{\annorm{-}_1} \otimes \LQT_{\annorm{-}_2})_1 \).
    
    Since \( \neg\neg(\LQT_{\annorm{-}_1} \otimes \LQT_{\annorm{-}_2}) \) is a topologically closed set and any norm on a finite dimensional vector space is continuous, the above result can be extended to \( \annorm{x} \le 1 \).
    
    Assume that \( \annorm{x} > 1 \).
    By the Hahn-Banach theorem for \emph{continuous normed cone}~\cite[Theorem~2.14]{Selinger2004a}, there exists a \( \Real_{\ge 0} \)-linear function \( f \colon \CPM(1, \ell_1 \otimes \ell_2) \longrightarrow \Real_{\ge 0} \) such that \( f(x') \le 1 \) for \( \annorm{x'} \le 1 \) and \( f(x) > 1 \).
    Note that any \( \Real_{\ge 0} \)-linear function \( f \colon \CPM(1, \ell_1 \otimes \ell_2) \longrightarrow \Real_{\ge 0} \) can be extended to a completely positive map \( f \in \CPM(\ell_1 \otimes \ell_2, 1) \).
    For this \( f \), \( \opnorm{f \circ (y_1 \otimes y_2)} \le 1 \) for every \( y_1 \in \LQT_{\annorm{-}_1} \) and \( y_2 \in \LQT_{\annorm{-}_2} \) (by the above claim), but \( \opnorm{f \circ x} > 1 \).
    This means that \( x \notin \neg\neg(\LQT_{\annorm{-}_1} \otimes \LQT_{\annorm{-}_2})_1 \).

Let \( \mathcal{F} \subseteq \CPM({-}, \ell_1 \otimes \ell_2) \) be the family given by
    \begin{equation*}
        \mathcal{F}_1 \defe \{\, x \in \CPM(1, \ell_1 \otimes \ell_2) \mid \annorm{x} \le 1 \,\}
    \end{equation*}
    and \( \mathcal{F}_n \defe \emptyset \) for \( n \neq 0 \).
    For \( x \in \mathcal{F}_1 \), we have \( 1 \ge \annorm{x} \ge \qnorm{x}{\neg\neg(\LQT_{\annorm{-}_1} \otimes \LQT_{\annorm{-}_2})}^{(1)} \), and hence \( x \in \neg\neg(\LQT_{\annorm{-}_1} \otimes \LQT_{\annorm{-}_2})_1 \).
    Since \( \mathcal{F}_n = \emptyset \) for \( n \neq 0 \), we have \( \mathcal{F} \subseteq \neg\neg(\LQT_{\annorm{-}_1} \otimes \LQT_{\annorm{-}_2}) \).
    By \cref{lem:negation-of-pseudo-representable},
    \begin{equation*}
        \LQT_{\annorm{-}}
        \quad=\quad
        \mathcal{F}^{\bot\bot}
        \quad\subseteq\quad
        (\neg\neg(\LQT_{\annorm{-}_1} \otimes \LQT_{\annorm{-}_2}))^{\bot\bot}.
    \end{equation*}
    Again by \cref{lem:negation-of-pseudo-representable},
    \begin{align*}
        (\neg\neg(\LQT_{\annorm{-}_1} \otimes \LQT_{\annorm{-}_2}))^{\bot\bot}
        &\quad=\quad
        \neg\neg\neg\neg(\LQT_{\annorm{-}_1} \otimes \LQT_{\annorm{-}_2})
        \\
        &\quad=\quad
        \neg\neg(\LQT_{\annorm{-}_1} \otimes \LQT_{\annorm{-}_2}).
    \end{align*}
    Hence
    \begin{equation*}
        \LQT_{\annorm{-}}
        \quad\subseteq\quad
        \neg\neg(\LQT_{\annorm{-}_1} \otimes \LQT_{\annorm{-}_2}).
    \end{equation*}
    
    Assume that \( x \in (\LQT_{\annorm{-}_1} \otimes \LQT_{\annorm{-}_2})_n \).
    By \cref{lem:linear:lqt-function-and-tensor},
    \begin{equation*}
        x \quad=\quad (y_1 \otimes y_2) \cdot h
    \end{equation*}
    for some \( y_1 \in (\LQT_{\annorm{-}_1})_{m_1} \), \( y_2 \in (\LQT_{\annorm{-}_2})_{m_2} \) and \( h \in \CQ(n, m_1 \otimes m_2) \).
    Let \( f \in \CPM(k \otimes \ell_1 \otimes \ell_2, 1) \) be a completely positive map such that \( \annorm{x'} \le 1 \Longrightarrow \opnorm{f \circ (\ident_k \otimes x')} \le 1 \) for every \( x' \in \CPM(1, \ell_1 \otimes \ell_2) \).
    Then \( \opnorm{f \circ (\ident_k \otimes y_1 \otimes y_2)} \le 1 \) and
    \begin{equation*}
        \opnorm{f \circ (\ident_k \otimes x)}
        =
        \opnorm{f \circ (\ident_k \otimes y_1 \otimes y_2) \circ (\ident_k \otimes h)}
        \le 1        
    \end{equation*}
    since \( \ident_k \otimes h \in \CQ(k \otimes n, k \otimes m_1 \otimes m_2) \).
    Since \( f \in \CPM(k \otimes \ell_1 \otimes \ell_2, 1) \) is an arbitrary completely positive map such that \( \annorm{x'} \le 1 \Longrightarrow \opnorm{f \circ (\ident_k \otimes x')} \le 1 \) for every \( x' \in \CPM(1, \ell_1 \otimes \ell_2) \), by definition of \( \LQT_{\annorm{-}} \), we have \( x \in \LQT_{\annorm{-}} \).
    Therefore we obtain
    \begin{equation*}
        (\LQT_{\annorm{-}_1} \otimes \LQT_{\annorm{-}_2})
        \quad\subseteq\quad
        \LQT_{\annorm{-}},
    \end{equation*}
    which implies
    \begin{equation*}
        \neg\neg(\LQT_{\annorm{-}_1} \otimes \LQT_{\annorm{-}_2})
        \quad\subseteq\quad
        \neg\neg\LQT_{\annorm{-}}
        \quad=\quad
        \LQT_{\annorm{-}}.
    \end{equation*}    
\end{proof}
 \fi

\end{document}